\title{Linear stability of the slowly-rotating \KdS{} family}
\author{Allen Juntao Fang\thanks{Sorbonne
    Universit\'{e}, CNRS, Laboratoire Jacques-Louis Lions (LJLL),
    F-75005 Paris, France, fanga@ljll.math.upmc.fr}} 
\date{}
\begin{document}
\maketitle

\begin{abstract}
  In this paper, we prove that the slowly-rotating \KdS{} family of
  black holes are linearly stable as a family of solutions to the
  Einstein vacuum equations with $\Lambda>0$ in harmonic (wave)
  gauge. This article is part of a series that provides a novel proof
  of the full nonlinear stability of the slowly-rotating \KdS{}
  family. This paper and its follow-up offer a self-contained
  alternative approach to nonlinear stability of the \KdS{} family
  from the original work of Hintz, Vasy \cite{hintz_global_2018} by
  interpreting quasinormal modes as $H^k$ eigenvalues of an operator
  on a Hilbert space, and using integrated local energy decay
  estimates to prove the existence of a spectral gap. In particular,
  we avoid the construction of a meromorphic continuation of the
  resolvent. We also do not compactify the spacetime, thus avoiding
  the use of $b$-calculus and instead only use standard
  pseudo-differential arguments in a neighborhood of the trapped set;
  and avoid constraint damping altogether. The methods in the current
  paper offer an explicit example of how to use the vectorfield method
  to achieve resolvent estimates on a trapping background. 
\end{abstract}


\tableofcontents

\section{Introduction}

The aim of this paper is to offer the linear theory needed for a novel
proof of the global nonlinear stability of the slowly-rotating \KdS{}
family of black hole solutions to Einstein's vacuum equations with a
positive cosmological constant. In the process of doing so, we also
hope to contribute a deeper understanding of quasinormal modes of wave
equations using a vectorfield approach.

\subsection{The black hole stability problem }
The Einstein vacuum equations (EVE), which govern Einstein's theory of
relativity under the assumption of vacuum, are given by
\begin{equation}
  \label{linear:eq:intro:EVE}
  \Ric(g) - \Lambda g = 0,
\end{equation}
where $g$ is a Lorentzian metric with signature $(-1,+1,+1,+1)$ on a
manifold $\mathcal{M}$ (which for the sake of this paper, we will
assume is $3+1$ dimensional), $\Ric$ denotes its Ricci tensor, and
$\Lambda$ is the cosmological constant.

Expanding the Ricci tensor $\Ric(g)$ as a differential operator acting
on the metric tensor $g$, equation \eqref{linear:eq:intro:EVE} is a fully
nonlinear second-order partial differential equation (PDE) for the
components of the metric tensor $g$, which is invariant under
diffeomorphism. A well-posedness theory for EVE was shown by
Choquet-Bruhat \cite{choquet-bruhat_theoreme_1952} and extended in
Choquet-Bruhat-Geroch \cite{choquet-bruhat_global_1969} by setting EVE
in harmonic gauge (also known as wave gauge in the literature),
showing the existence of a maximal globally hyperbolic development for
a sufficiently smooth initial data triplet
$(\Sigma, \InducedMetric, k)$.  We discuss the initial value problem,
along with wave coordinates, in detail in section \ref{linear:sec:EVE}. For a
detailed classical treatment, we refer to the reader to Chapter 10 in
\cite{wald_general_1984}, and for a more modern treatment, to
\cite{sbierski_existence_2016}.

Among the most interesting solutions to EVE are
the families of black hole solutions.  An explicit two-parameter
family of black hole solutions to EVE with $\Lambda>0$ is given by the
\KdS{} family of spacetimes $(\mathcal{M}, g_{M, a})$.
The fixed four-dimensional manifold is taken to be isomorphic to
$\Real^+_{\tStar}\times(0,\infty)_r\times\Sphere^2$, and the black
hole parameters $(M, a)$ denote the mass parameter and the angular
momentum of the black hole respectively. We will use $b=(M, a)$
to denote the black hole parameters. See Section \ref{linear:sec:KdS} for a
more detailed presentation of the \KdS{} family. 

The \KdS{} family black holes, like their asymptotically flat cousin
the Kerr family of black hole solutions to Einsteins equations with
vanishing cosmological constant, represent a family of rotating,
uncharged black hole solutions. In the particular case where $a=0$,
the black hole is a static, non-rotating, uncharged black hole, the
\KdS{} family reduces to the \SdS{} sub-family of black hole solutions
(mirroring the $a=0$ Schwarzschild sub-family of the Kerr family of
solutions in the asymptotically flat case).

The classical black hole stability problem is concerned with whether
the Kerr family of black hole solutions is stable as a
family\footnote{For black hole solutions, the question is generally
  stated as whether a \textit{family} of black hole solutions is
  stable rather than a particular black hole solution. This is in line
  with the expectation that nontrivial perturbations of a black hole
  should alter the mass and angular momentum of the black hole.} in
the sense that initial data triplets sufficiently close to the initial
data triplet of a given Kerr solution have a maximal development with
a domain of outer communication which globally approaches a nearby
Kerr solution. An equivalent question of stability can be formulated
for the \KdS{} family, which retains some of the crucial geometric
difficulties as Kerr (superradiance and trapping in particular), while
featuring substantially easier analysis overall due to exponential
decay at the linear level.

This paper and its sequel in \cite{fang_nonlinear_2021}
seek to address exactly the question of stability for the case of the
\KdS{} family.
\begin{theorem} [Nonlinear stability of the slowly-rotating \KdS{}
  family, informal statement]
  \label{linear:thm:intro:NL-stab:informal}
  Suppose $(\InducedMetric,k)$ are smooth initial data on some
  $3$-dimensional hypersurface $\Sigma_0$ satisfying the constraint
  conditions, which are close to the initial data
  $(\InducedMetric_{b^0}, k_{b^0})$ of a slowly-rotating \KdS{}
  spacetime $g_{b^0}$ in some initial data norm. Then there exists a
  solution $g$ to EVE with a positive cosmological constant,
  \eqref{linear:eq:intro:EVE}, such that $g$ attains the initial data at
  $\Sigma_0$ and there exist black hole parameters
  $b=(M_\infty, a_\infty)$ such that
  \begin{equation*}
    g - g_{b_\infty} = O(e^{-\SpectralGap \tStar}),\quad \tStar\to \infty
  \end{equation*}
  for some real $\SpectralGap>0$ constant that is independent of the
  initial data. 
\end{theorem}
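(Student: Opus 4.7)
The plan is to reduce EVE with positive cosmological constant to a quasilinear wave system by imposing a harmonic (wave) gauge centered at a fixed \KdS{} reference metric $g_{b^0}$. Writing $g = g_b + h$ with $g_b$ a \KdS{} metric and $h$ a perturbation, the gauge-fixed EVE become a quasilinear system of the form $P_b h = Q(h,\partial h)$, where $P_b$ is the linearized gauge-fixed Einstein operator around $g_b$ and $Q$ collects all nonlinear errors. The gauge 1-form propagation follows from the second Bianchi identity once the constraints are imposed at $\Sigma_0$, so it suffices to solve the reduced system and recover a genuine EVE solution.

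The core input is the linear stability theorem proved in the bulk of this paper: for parameters $b$ close to $b^0$, the Cauchy problem for $P_b h = f$ satisfies an integrated local energy decay estimate that upgrades, via the spectral gap characterization of quasinormal modes as $H^k$ eigenvalues, to an exponential decay bound $\|h(\tStar)\|_{H^k} \lesssim e^{-\SpectralGap \tStar}\|h(0)\|_{H^{k+1}}$ modulo a finite-dimensional space of non-decaying mode solutions. These resonant modes at frequency zero are identified with the tangent directions $\partial_b g_b[\dot b]$ along the \KdS{} family together with pure gauge solutions, reflecting the fact that the ``wrong'' final parameters produce secular growth that cannot be absorbed into a decaying remainder.

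Given the linear theory, I would run a nonlinear Picard-type iteration on the exponentially weighted space $e^{-\SpectralGap \tStar}H^k$, with simultaneous parameter modulation. At each step $n$, one solves the linearization around $g_{b^n}$ with source $Q(h^n,\partial h^n)$, projects the solution onto the stable (decaying) subspace, and absorbs the projection onto the resonant modes into an update $b^{n+1} = b^n + \dot b^n$ of the asymptotic parameters. Exponential convergence of the linear flow combined with the bilinear estimate $\|Q(h,\partial h)\|_{H^{k-1}} \lesssim \|h\|_{H^k}^2$ then yields a contraction on a small ball, producing both a limiting perturbation $h^\infty$ decaying at rate $\SpectralGap$ and a limiting parameter $b_\infty = \lim b^n$. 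The solution $g := g_{b_\infty} + h^\infty$ is the sought \KdS{} convergent solution.

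The hardest step will be the coupling of three features in the iteration: preserving the harmonic gauge without resorting to constraint damping (which the abstract emphasizes is avoided, and which forces one to control the gauge 1-form purely through the Bianchi structure), extracting the correct parameter update at each step from a projection onto the finite-dimensional resonant subspace in a way that remains stable under the nonlinear coupling, and closing the estimates despite the derivative loss at the trapped set. The trapping loss is the most delicate, but the exponential weight in $\tStar$ supplied by the spectral gap, together with the standard pseudo-differential high-frequency estimates near the photon sphere established in this paper, allows one to afford a fixed loss of one derivative per linear solve while still closing a contraction in a sufficiently smooth space, bypassing any need for Nash-Moser.
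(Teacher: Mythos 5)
This theorem is not proven in the present paper: it is the informal statement of the nonlinear result, explicitly deferred to the companion work \cite{fang_nonlinear_2021}. The body of this paper establishes only the linear input — the $\LSolHk{k}$-quasinormal spectrum, the spectral gap of Theorem \ref{linear:thm:resolvent-estimate:inf-gen}, and the mode stability of Theorem \ref{linear:thm:mode-stability-v1} — so there is no ``paper's own proof'' to compare against word for word. What the paper does tell you, in the abstract and in the discussion at the end of Section 1.4, is that the sequel closes the nonlinear problem by a \emph{bootstrapping} argument rather than Nash–Moser.

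Against that stated strategy, your sketch has a genuine gap in its closing step. You propose a Picard iteration on $e^{-\SpectralGap\tStar}H^k$ and assert that the exponential weight, together with the high-frequency estimates near trapping, ``allows one to afford a fixed loss of one derivative per linear solve while still closing a contraction in a sufficiently smooth space''. This is not correct: the exponential weight improves decay in $\tStar$ but has no effect whatsoever on the spatial regularity count. If each linear solve loses one derivative at the trapped set, then the $n$-th Picard iterate lives in $H^{k-n}$, and no fixed-regularity contraction can close — that accumulating loss is precisely the reason Hintz–Vasy invoke Nash–Moser. The route this paper points to (and which avoids Nash–Moser) is a continuity/bootstrap argument: one assumes bounds on the genuine solution $h$ on a time interval, uses them to control the quasilinear coefficients of $\LinEinstein_{g_b}$, derives improved bounds from the linear estimates of Sections \ref{linear:sec:energy-estimates}–\ref{linear:sec:ILED}, and closes by continuity, so the derivative-losing Morawetz estimate at trapping is applied only once, to a single solution, never iterated. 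The degenerate $\MorawetzNorm$ norm in Definition \ref{linear:def:Morawetz-norm} is designed exactly for this: it quantifies the loss at $\TrappedSet_b$ in a way compatible with the quasilinear structure, but it does not license a derivative-losing iteration.

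The remainder of your sketch is consistent with the framework laid out here: the harmonic-gauge reduction with gauge propagation via Bianchi (Lemma \ref{linear:lemma:EVE:nonlinear-constraint-prop} and Proposition \ref{linear:prop:EVE:GHC-EVE-equivalence}); the identification of the non-decaying modes with linearized \KdS{} metrics $g_b'(b')$ and infinitesimal diffeomorphisms $\nabla_{g_b}\otimes\omega$ via geometric mode stability (Theorem \ref{linear:thm:mode-stability-v1} and Proposition \ref{linear:prop:KdS-QNM-perturb}); and the finite-dimensional parameter/gauge modulation implemented through the maps $i_{b,\phi}$ and the space $\Theta$ of Corollary \ref{linear:corollary:linear-stab:Theta-b-independent}. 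Replace the Picard iteration with a bootstrap and correct the claim about the exponential weight, and the outline is aligned with the paper's stated program.
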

This result was in fact first proven by Hintz and Vasy in their
seminal work \cite{hintz_global_2018}. Their proof uses a modification
of the harmonic gauge, to treat Einstein's equations as a system of
quasilinear wave equations for the metric
\begin{equation*}
  g^{\alpha\beta}\p_\alpha\p_\beta g_{\mu\nu} = \mathcal{N}_{\mu\nu}(g, \p g).
\end{equation*}
An important first step to analyzing the nonlinear stability of the
slowly-rotating \KdS{} family is to consider the linear stability of
the gauged Einstein's equations linearized around a member of the
slowly-rotating \KdS{} family. In the case of \KdS, the
linearized system is expected to exhibit exponential decay, from which
the full nonlinear stability quickly follows. Hintz and Vasy in
\cite{hintz_global_2018} approach proving exponential decay for the
linearized system by building on the numerous works on the theory of
resonances on black hole spacetimes (see for instance
\cite{bony_decay_2008,dyatlov_asymptotic_2012,dyatlov_asymptotics_2015}),
in particular relying on a series of works by themselves and Dyatlov
on applying the scattering resolvent method to the study of wave-type
equations on asymptotically (Kerr-)de Sitter backgrounds
\cite{hintz_asymptotics_2018, hintz_global_2016, hintz_resonance_2017,
  hintz_semilinear_2015, dyatlov_spectral_2016, vasy_microlocal_2013}.

\subsection{The theory of scattering resonances}

The theory of scattering resonances has proven particularly powerful
in analyzing asymptotic behavior of wave-like equations on
asymptotically (Kerr-)de Sitter spacetimes as they are well-adapted to
spectral methods, although recently, significant progress has also
been made adapting the method to asymptotically flat spacetimes
\cite{hintz_sharp_2022,vasy_resolvent_2020,hafner_linear_2021}. 

The theory of resonances on black hole spacetimes is
founded on finding an appropriate notion of characteristic
frequencies, and using them to analyze the asymptotic behavior of the
system at hand (for an in-depth explanation of the scattering
resonance method, see \cite{dyatlov_mathematical_2019}). For a
stationary linear operator $\LinearOp$, these characteristic
frequencies (resonances) are typically defined as the poles of a
meromorphic continuation of the resolvent
$\widehat{\LinearOp}(\sigma)^{-1}$ into the lower half-space of the
complex plane $\Complex$\footnote{In the notation used in this
  article, consistent with most of the literature, the lower
  half-space of the complex plane is the half-space corresponding to
  exponential decay. However, in some articles in the literature, the
  half-space corresponding to exponential decay is instead the left
  half-space.}. The location of the resonances within the complex
plane then determine the asymptotic behavior of solutions $\psi$ to
$\LinearOp\psi=0$. In particular, if it is possible to show that all the
resonances are located in the lower half space, then a simple contour
deformation argument proves exponential decay (see for example the
analysis of the scalar wave equation on a \SdS{} background in
\cite{bony_decay_2008}). For the Einstein system in harmonic gauge
linearized around a slowly-rotating \KdS{} background, it turns out
that it is not true that the resonances are all located in the
exponentially decaying half-space of $\Complex$.  However, by proving
a high-frequency resolvent estimate, it remains possible to derive the
existence of a high-frequency spectral gap\footnote{This is typically
  the primary difficulty in the method of scattering resonances, and
  has often relied on the use of Melrose's b-calculus and microlocal
  analysis. }. In the case of wave-type equations on \KdS, this relies
on a detailed understanding of the nature of the instability of the
trapped set in \KdS{} \cite{dyatlov_spectral_2016, bony_decay_2008,
  dyatlov_asymptotics_2015, hintz_non-trapping_2014}.  Following a
contour deformation argument, this allows a decomposition of any
solution into a finite sum of linear obstacles to decay which grow at
a bounded exponential rate, and an exponentially decaying remainder.

For a general nonlinear wave equation, this is as far as the linear
analysis could go. Such a result, of course is entirely unsatisfactory
for nonlinear decay. After all, growth at the linear level
heuristically leads to even more growth at the nonlinear level, and
exponential linear growth would destroy any hopes for nonlinear
decay. What turns out to save nonlinear stability in the case of the
Einstein equations is the geometric structure of Einstein's equations
themselves. Indeed, at the level of the linearized Einstein equations
linearized around \SdS, it is known that the non-decaying resonances
are in fact unphysical
\cite{ishibashi_stability_2003,kodama_brane_2000,kodama_master_2003}
(see Section \ref{linear:sec:mode-stability} for an detailed discussion, and
Section 4.1 and Section 7 of \cite{hintz_global_2018} for comparison).
Such a result however, is unknown for Einstein's equations linearized
around even slowly-rotating \KdS{} backgrounds. To overcome this,
Hintz and Vasy introduce constraint damping to the gauged linearized
Einstein equations and use perturbation theory on the
constraint-damped gauged linearized Einstein equations to deduce that
the remaining finite linear obstacles to decay are in fact unphysical,
thus allowing them to conclude exponential decay at the linear level.
Finally, to extend the linear stability results to nonlinear stability, Hintz
and Vasy make use of a Nash-Moser argument
\cite{hintz_global_2016,hintz_global_2018}.

\subsection{The vectorfield method} 

On the other hand, much recent progress in understanding stability of
black hole spacetimes (and asymptotic behavior of linear fields on
black hole backgrounds more generally), has followed from the
development of the vectorfield method. First used to analyze scalar
wave equations, the vectorfield method proved incredibly relevant to
the question of black hole stability as a critical component of the
breakthrough proof of nonlinear stability of Minkowski spacetime as a
solution to EVE with $\Lambda=0$
\cite{christodoulou_global_1993}. Over the years, many alternative
proofs for the stability of Minkowski space using variants of the
vectorfield method have been developed \cite{hintz_stability_2020,
  lindblad_global_2010-1, klainerman_evolution_2003}. The vectorfield
method has also proven useful beyond the study of just Minkowski
space, having led to substantial developments in the study of linear
waves on black hole backgrounds \cite{dafermos_red-shift_2009,
  dafermos_decay_2010, schlue_decay_2013, dafermos_decay_2016,
  marzuola_strichartz_2010, tataru_local_2010,
  ionescu_global_2015}. More directly relevant to the subject of this
paper, the vectorfield method has also led to numerous developments in
the understanding of stability of black hole spacetimes. This includes
linearized stability of the Schwarzschild family
\cite{dafermos_linear_2019} (see also \cite{johnson_linear_2018} and
\cite{hung_linear_2020}), the nonlinear stability of the Schwarzschild
family \cite{klainerman_global_2020, dafermos_non-linear_2021},
linearized stability of the Kerr family
\cite{andersson_stability_2019}, as well as the nonlinear stability of
slowly-rotating Kerr \cite{klainerman_kerr_2021}.

The vectorfield method, as originally conceived, works by exploiting
Killing, conformal Killing, and almost-conformal Killing symmetries to
define vectorfield multipliers and commutators that can be used to
derive various energy estimates. Exploiting symmetries though can only
be as strong as one has symmetries to exploit, and in the case of
Schwarzschild(-de Sitter) and Kerr(-de Sitter) spacetimes, there are
frequently not enough (conformal) Killing vectorfields to exploit
directly. To compensate for this lack of symmetries, extensions of the
classical vectorfield method have constructed new vectorfields that
have coercive deformation tensors on different regions of spacetime,
and have extended the method to consider more general Killing tensors
\cite{andersson_hidden_2015}. The starting point, and typically the
most involved portion of the method is the proof of an
\textit{integrated local energy decay} (ILED) estimate, also commonly
referred to in the literature as a \textit{Morawetz estimate}
\cite{pieter_blue_semilinear_2003}.

Despite the differences in the origins of the scattering resonance
method and the vectorfield method, Morawetz estimates and resolvent
estimates are known to be intrinsically tied. On non-trapping asymptotically
flat space-times, Metcalfe, Sterbenz, and Tataru have shown an
equivalence between resolvent estimates and Morawetz estimates (modulo
certain conditions on the real axis) for symmetric wave-type operators
\cite{metcalfe_local_2017}.  This suggests that despite the initial
differences between the approaches to black-hole stability by the
scattering resonance and the vectorfield communities, there are
actually deep similarities between the two methods. This is further
suggested by recent work done by Warnick and Gajic
\cite{warnick_quasinormal_2015,gajic_quasinormal_2019} using ideas
developed in the new vectorfield method to reinterpret the resonances
of the scattering resonance method as true eigenvalues of an operator
on a Hilbert space.

Finally, we mention that there has recently been work done by
Mavrogiannis using the vectorfield method to prove exponential decay for
scalar quasilinear waves on \SdS{} without reference to spectral
theory \cite{mavrogiannis_morawetz_2021,
  mavrogiannis_quasilinear_2021}.

\subsection{Statement of the main result}

Motivated by these insights, in this paper and its sequel, we attempt
to bridge the gaps between these two methods in a novel proof of the
full nonlinear stability of the slowly-rotating \KdS{} family. Our
result aims to use ideas and results from the vectorfield method in
order to recover the necessary exponential decay at the linear level
to be applied in a full proof of nonlinear stability. The statement
that we wish to prove is then as follows (for the more formal
statement, see Theorem \ref{linear:thm:Main}).
\begin{theorem}[Main Theorem, version 1]
  \label{linear:thm:intro:main}
  Given initial data $(\InducedMetric,k)$ close to the initial data of
  a slowly-rotating \KdS{} black hole $(\InducedMetric_{b}, k_{b})$,
  there exists a solution to the linearized Einstein vacuum equations
  in harmonic gauge that is exponentially decaying to an infinitesimal
  diffeomorphism of some nearby linearized slowly-rotating \KdS{} metric
  $g_{b}'(b')$.
\end{theorem}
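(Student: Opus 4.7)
The plan is to derive exponential decay from an integrated local energy decay (ILED) estimate via a Fourier-Laplace transform, followed by a finite-dimensional mode-stability analysis. To begin, I would formulate the linearized Einstein operator $\LinearOp$ in harmonic gauge as a second-order wave-type operator acting on symmetric $(0,2)$-tensors over a slowly-rotating \KdS{} background $g_b$. Since the linearized constraints propagate along the evolution, it suffices to solve the gauged system $\LinearOp h = 0$ with data close to that of $g_b$ and then to control $h$ up to a finite-dimensional obstruction coming from the quasinormal spectrum.

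The heart of the argument is a Morawetz-type ILED estimate of the schematic form
\begin{equation*}
  \int_0^T \! \int_{\Sigma_{\tStar}} \bigl( |\nabla h|^2 + |h|^2 \bigr) \, d\mathrm{vol}\, d\tStar \lesssim E[h](0) + (\text{derivative loss at trapping}),
\end{equation*}
which I would assemble by patching three ingredients. Near the event and cosmological horizons, red-shift vectorfield multipliers in the spirit of Dafermos-Rodnianski provide coercive control. In the bulk away from the horizons and the trapped set, the stationary Killing multiplier combined with a radial Morawetz current suffices. At the trapped set (a small perturbation of the photon sphere for $|a|$ small), a pseudo-differential commutator constructed microlocally near the trapped frequencies, exploiting the normally hyperbolic structure of the trapping, yields a positive commutator bound with a logarithmic loss of derivatives. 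Superradiance is tamed by the smallness of $a$, which keeps the relevant stationary Killing field timelike in the ergoregion up to a small error that can be absorbed.

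With the ILED in hand, I would Fourier-Laplace transform in $\tStar$ to obtain estimates on the stationary resolvent $\widehat{\LinearOp}(\sigma)^{-1}$ up to a compact perturbation. Following the Hilbert-space viewpoint of Warnick and Gajic, this identifies quasinormal modes as genuine $H^k$ eigenvalues of $\LinearOp$ on an appropriately weighted Hilbert space, thereby avoiding any meromorphic continuation of the resolvent or $b$-calculus. The high-frequency ILED then furnishes a spectral gap, showing that only finitely many eigenvalues of $\LinearOp$ lie in the strip $\{\Im \sigma > -\SpectralGap\}$, and a contour deformation decomposes any solution into a finite sum of bounded-or-growing mode solutions plus an $O(e^{-\SpectralGap \tStar})$ remainder.

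Finally, I would invoke the mode-stability analysis developed in Section \ref{linear:sec:mode-stability} to identify each non-decaying mode as either an infinitesimal change $b'$ of the black hole parameters (contributing $g_b'(b')$) or a pure-gauge obstruction (infinitesimal diffeomorphism). Subtracting such a linearized \KdS{} perturbation from the general solution yields the required exponentially decaying remainder, which is the content of the theorem. I expect the main obstacle to be the ILED estimate at the trapped set for the tensor-valued system: the Ricci-type coupling between metric components in $\LinearOp$ obstructs a direct application of scalar-wave techniques, and the microlocal trapping estimate must be made compatible with both the gauge structure and the tensor algebra on the symmetric $(0,2)$-bundle.
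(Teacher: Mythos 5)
Your outline follows the same high-level route as the paper: a three-region Morawetz/ILED estimate, Fourier-Laplace transform to a stationary resolvent estimate and spectral gap, a semigroup $C^0$-viewpoint identifying quasinormal modes as eigenvalues, a contour deformation producing a finite sum of non-decaying modes plus an exponentially small remainder, and a mode-stability step to absorb the non-decaying part into $g_b'(b')+\nabla_{g_b}\otimes\omega$. However, two of the steps you gesture at are exactly where the proof lives, and as written they leave genuine gaps.

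First, the trapped-set estimate for the tensor system. You flag it as "the main obstacle" but offer no mechanism. The subprincipal operator $\SubPOp$ of $\LinEinstein_{g_b}$ has no sign, so a microlocal positive-commutator estimate that degenerates at $\TrappedSet_b$ cannot absorb it by brute force without destroying the spectral gap. The paper resolves this by constructing an explicit stationary elliptic $\PseudoSubPFixer\in\Op S^0$ microlocally near $\TrappedSet_{b_0}$ (Lemma \ref{linear:lemma:subprincipal-symbol-control}) so that the skew-Hermitian part of the conjugated subprincipal symbol is $O(\varepsilon_{\TrappedSet_{b_0}})$ at the trapped set, with $\varepsilon_{\TrappedSet_{b_0}}$ chosen below the trapping gap; the Morawetz estimate near $r=3M$ is then carried out for $\PseudoSubPFixer\LinEinstein_{g_b}\PseudoSubPFixer^-$ and transferred back via a parametrix argument. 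Without this conjugation (or some substitute with the same effect) your ILED near trapping does not close.

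Second, mode stability for $a\neq 0$. The Kodama-Ishibashi theorem (Theorem \ref{linear:thm:Kodama-Ishibashi}) holds only for $\SdS$; there is no geometric mode-stability statement available for slowly-rotating $\KdS$ that you can simply "invoke." Moreover, applying spectral perturbation theory to $\LinEinstein_{g_b}$ directly does not by itself tell you which of the finitely many non-decaying modes are pure-gauge/linearized-$\KdS$ and which violate the linearized gauge constraint, so the classification needed for your final subtraction is not automatic. The paper's key move (Proposition \ref{linear:prop:KdS-QNM-perturb}) is to identify the geometric non-decaying modes of $\LinEinstein_{g_b}$ with the quasinormal modes of the constraint-propagation operator $\ConstraintPropagationOp_{g_b}=\VectorWaveOp[g_b]-\Lambda$, which is itself a strongly hyperbolic operator with perturbatively stable spectrum, and then to combine this count with lower-semicontinuity of rank for $\Constraint_{g_b}$ to conclude that every non-decaying mode of $\LinEinstein_{g_b}$ is either geometric or constraint-violating. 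Only the geometric ones can arise from admissible (constraint-satisfying) data, and these are exactly what is absorbed into $g_b'(b')+\nabla_{g_b}\otimes\omega$. This replaces Hintz-Vasy's constraint damping and is a structural piece of the argument that your proposal omits.
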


We highlight the
main differences between our proof and the original proof for linearized
stability of Hintz and
Vasy (see Theorem 10.5 of \cite{hintz_global_2018}) below.
\begin{enumerate}
\item We do not compactify the spacetime, and as a result, avoid any
  use of the $b$-calculus. In fact, we only use classical
  pseudo-differential arguments in a neighborhood of trapping in a
  single part of the derivation of the Morawetz estimate (in Section
  \ref{linear:ILED:near}) to overcome the frequency-dependent nature of the
  trapped set in \KdS, avoiding all other microlocal arguments and
  techniques in the rest of the proof.  This is in line with the idea
  that for slowly-rotating \KdS{} spacetimes, trapping is the only
  frequency-dependent behavior\footnote{For slowly-rotating \KdS{}
    spacetimes, like in slowly-rotating Kerr spacetimes, superradiance
  is both separated from trapping and effectively handled by
  redshift. Passing to the more general case, superradiance is also a
  frequency-dependent obstacle to stability.}. It may in fact be possible to remove
  even this frequency-based argument by following similar methods as
  \cite{andersson_hidden_2015} to produce a purely physical argument,
  but this is not pursued further here.
\item By using the vectorfield method, we are able to adapt Warnick's
  work on anti-de Sitter spaces \cite{warnick_quasinormal_2015} to
  \KdS, resulting in a characterization of quasinormal modes as
  eigenvalues of an operator on a Hilbert space, thus
  avoiding the need to construct a meromorphic continuation of the
  resolvent.
\item We ascertain the ``spectral gap'' explicitly for the gauged
  linearized Einstein equation using a Morawetz estimate, following the
  approach taken by Tataru and Tohaneanu in \cite{tataru_local_2010}
  to prove a similar Morawetz estimate for scalar
  waves on Kerr. Doing so allows our proof to remain self-contained
  while also providing an explicit example of the equivalence between
  Morawetz estimates and resolvent estimates on a trapping background.
\item We obtain the desired mode stability result (a statement of the
  non-physical nature of the non-decaying quasinormal mode solutions
  of the gauged linearized Einstein operator) by perturbing a
  geometric mode stability result for \SdS{} (see Theorem
  \ref{linear:thm:Kodama-Ishibashi}). However, unlike the similar
  perturbation done by Hintz and Vasy in \cite{hintz_global_2018}, we
  do not introduce constraint damping, and instead proceed by a
  precise analysis of the linearized constraint equations and the
  constraint propagation equation induced by the Einstein vacuum
  equations.
\end{enumerate}
These goals at the linear level will allow us in
\cite{fang_nonlinear_2021} to give a novel self-contained proof of the
full nonlinear stability of the slowly-rotating \KdS{} family that
uses a bootstrapping argument, rather than a Nash-Moser argument to
close nonlinear stability.

We provide a brief discussion of the main difficulties involved in
achieving the goals of the paper. It is well known that on black hole
spacetimes, there tend to be two underlying geometric difficulties
standing in the way of a proof of linear stability\footnote{These
  geometric difficulties are inherent to the geometry of black hole
  spacetimes rather than Einstein's equations, and are present even
  for simpler problems, such as decay for scalar waves on black hole
  backgrounds.}. The first is the issue of superradiance and the
loss of a timelike Killing vector at the horizons. In fact on \KdS{}
spacetimes, much like on its Kerr cousin, there is no global timelike
Killing vectorfield on the domain of exterior
communication\footnote{On \SdS, there is a timelike Killing
  vectorfield up to the horizons, where it becomes null. See Section
  \ref{linear:sec:setup:killing} for more details.}. What has arisen as a
powerful solution in the new vectorfield method is defining a new
vectorfield that captures exactly within a neighborhood of the
horizon, the exponential decay corresponding to the \textit{redshift
  effect} \cite{dafermos_red-shift_2009,dafermos_lectures_2008}. The
second, and more problematic difficulty is the existence of trapped
null geodesics on the interior of the black hole exterior
region. Fortunately, on Kerr(-de Sitter), these trapped null geodesics
are unstable, in the sense that locally, energy disperses away from
the trapped set \cite{dyatlov_spectral_2016, hintz_non-trapping_2014,
  marzuola_strichartz_2010, tataru_local_2010, andersson_hidden_2015,
  dafermos_decay_2016}. This is the main idea which we will rely on to
prove the desired Morawetz estimates.

Finally, there is the difficulty of proving a mode stability
statement. While there is a powerful geometric mode stability (GMS)
statement on the ungauged linearized Einstein equations linearized
around a member of the \SdS{} family \cite{kodama_master_2003,
  kodama_brane_2000, ishibashi_stability_2003}, perturbation theory
does not directly yield a geometric mode stability statement for even
nearby \KdS{} black holes. Indeed, rather than working at the level of
the ungauged equations, we will use perturbation theory at the level
of the gauged Einstein equations and the constraint propagation
equation, which are principally wave and more amenable to
perturbative methods.

\subsection{Outline of the paper}

In Section \ref{linear:sec:Setup}, we set up the main geometric aspects of
the problem. In particular, we construct various regular coordinate
systems on the \KdS{} family, and show the instability of the trapped
set. We also identify the vectorfield multipliers
that will be used subsequently to prove Killing and redshift energy
estimates respectively.
In Section \ref{linear:sec:EVE}, we then give a brief overview of
Einstein's vacuum equations and harmonic gauge, defining the gauged
linearized Einstein operator that will be the focus of the rest of the
paper. We also compute the important properties of the subprincipal
symbol at the trapped set and the horizons that will be crucial to the
rest of the proof. 
In Section \ref{linear:sec:QNM}, we define the
solution semigroup associated to a strongly hyperbolic operator, and
use these $C^0$-semigroups to define the quasinormal
spectrum. We also in this section define the Laplace-transformed
operator which will prove as a useful intermediary in studying the
quasinormal spectrum. 

Having established the important definitions, we state the main
theorem in Section \ref{linear:sec:main-theorem}. We also provide a
breakdown of the main intermediary results: exponential decay up to a
compact perturbation, and geometric mode stability.  Section
\ref{linear:sec:energy-estimates} consists of the estimates which are blind
to the presence of the trapped set, namely, the standard Killing
energy estimate, and (enhanced) redshift estimate, which will serve
as the technical basis for proving a Fredholm alternative for the
spectrum.

In Section \ref{linear:sec:freq-analysis}, we set up the necessary tools for
the frequency analysis in Section \ref{linear:sec:ILED}. The Morawetz
estimate in Section \ref{linear:sec:ILED} is both the most difficult and the
most important step to proving exponential decay up to a compact
perturbation. It requires analyzing both the physical and frequency
space behavior of trapping and defining energies that are well-adapted
to capturing the behavior at the trapped set. These estimates are the
key to proving a spectral gap for the spectrum, and form the core of
the paper.  In Section \ref{linear:sec:asymptotic-expansion}, we show how to
use the energy estimates that we have derived in the previous sections
to deduce information about the spectrum, in particular, showing that
there are only a finite number of spectral obstacles to exponential
decay of solutions to the linearized equations.

The mode stability of the gauged linearized Einstein operator is dealt
with in Section \ref{linear:sec:mode-stability}. This is another key
component to the proof. We begin with a review of mode stability in
\SdS, for which strong results are already known, before showing that
the unphysical nature of quasinormal mode solutions is conserved under small
perturbations to nearby linearizations of
Einstein's equations around \KdS.

Finally, in Section \ref{linear:sec:proof-of-main-thm}, we  prove the
main theorem, using the tools developed in all the preceding
sections.

\subsection{Acknowledgments}

The author would like to acknowledge J\'{e}r\'{e}mie Szeftel for
his encouragement and support. This work is supported by the ERC grant
ERC-2016 CoG 725589 EPGR.

\section{Geometric set up}
\label{linear:sec:Setup}

In this section, we define key geometric objects that
we will make use of later.

\subsection{Notational conventions}

Many of the inequalities in this paper feature implicit constants. We
use the following notation. Generically, $\epsilon, \delta$, are
auxiliary small constants. $C$ is used to denote large auxiliary
constants. $C(\epsilon)$ (or equivalently $C(\delta)$) indicates a
large constant depending on monotonically on $\epsilon$ such that
$\lim_{\epsilon\to 0}C(\epsilon) = \infty$. In particular,
inequalities with $C(\epsilon)$ still hold if $C(\epsilon)$ is
replaced with a larger constant. Subscripts in constants are then used
to denote additional dependencies of the constants.

Throughout the paper, Greek indices will be used to indicate the
spacetime indices $\{0,1,2,3\}$, lower-case Latin indices will be used
to represent the spatial indices $\{1,2,3\}$, and upper-case Latin
indices will be used to represent angular indices. We also denote
$\ImagUnit:= \sqrt{-1}$ to avoid confusion with the index $i$.

We will use the standard musical isomorphism notation to denote
$X^\flat$ the canonical one-form associated to a vectorfield $X$, and
$\omega^\sharp$ the canonical vectorfield associated to a one-form
$\omega$. 

We will use $T^*\Manifold$ $S^2T^*\Manifold$ to refer to the cotangent
bundle and the bundle of symmetric two tensors on $\Manifold$
respectively.

We will use $\nabla$ to denote the full spacetime covariant
derivative, and $D = \ImagUnit\p$. 

If $M$ and $N$ are matrices, we write
\begin{equation*}
  M\cdot N = M^TN,
\end{equation*}
where $M^T$ denotes the transpose of $M$. 

\subsection{The \KdS{} family}
\label{linear:sec:KdS}

The \KdS{} family of black holes, which will be presented explicitly
in what follows, is a family of stationary black hole solutions to
the Einstein vacuum equations (EVE) with a positive cosmological
constant $\Lambda>0$. The two-parameter family is parameterized by
\begin{enumerate}
\item the mass of the black hole $M$ and,
\item the angular momentum of the black hole $a = |\AngularMomentum|$,
  where $\frac{\AngularMomentum}{|\AngularMomentum|}$ is the axis of
  symmetry of the black hole. 
\end{enumerate}
We will denote by $B$ the set of black-hole parameters $(M, a)$.  In
this paper, we will not deal with the full \KdS{} family, but instead
are primarily concerned with only a subfamily characterized by two
features: first, that the mass of the black hole is subextremal and
satisfies $1-9\Lambda M^2>0$; and that the black hole is slowly
rotating, $\frac{\abs*{a}}{M},\frac{\abs*{a}}{\Lambda}\ll 1$. The
subextremality of the mass ensures that the event horizon and the
cosmological horizon remain physically separated, and the slow
rotation ensures that the trapped set remains physically separated
from both the cosmological and the black-hole ergoregions.

\subsubsection{The \SdS{} metric}
\label{linear:sec:SdS}

Given a cosmological constant $\Lambda$, and a black hole mass $M>0$
such that
\begin{equation}
  \label{linear:eq:SdS:non-degeneracy-condition}
  1 - 9\Lambda M^2>0,
\end{equation}
we denote by
\begin{equation*}
  b_0 = (M, \mathbf{0})
\end{equation*}
the black hole parameters for a mass-subextremal \SdS{} black
hole. The \SdS{} family represents a family of spherical, non-rotating
black hole solutions to Einstein's equations with positive
cosmological constant. On the domain of outer communication (also
known in the literature as the static region),

\begin{equation}
  \label{linear:eq:static-region-def}
  \StaticRegion := \Real_t\times \widetilde{\Sigma}_{\tStar}^\circ, 
\end{equation}
where
\begin{equation}
  \label{linear:eq:Sigma-tilde:def}
  \widetilde{\Sigma}_{\tStar}:= [r_{b,\EventHorizonFuture}, r_{b,\CosmologicalHorizonFuture}]\times\Sphere^2,
\end{equation}
with $r_{b_0,\EventHorizonFuture}$,
$r_{b_0, \CosmologicalHorizonFuture}$ defined below, the \SdS{} metric
a can be expressed in standard Boyer-Lindquist coordinates by
\begin{equation}
  \label{linear:eq:SdS:metric-def:BL}
  g_{b_0} = -\mu_{b_0}dt^2 + \mu_{b_0}^{-1}dr^2 + r^2\UnitSphereMetric,
\end{equation}
where
\begin{equation*}
  \mu_{b_0}(r) = 1 - \frac{2M}{r}- \frac{\Lambda r^2}{3},
\end{equation*} 
and $\UnitSphereMetric$ denotes the standard metric on
$\UnitSphere^2$. The subextremal mass restriction in
(\ref{linear:eq:SdS:non-degeneracy-condition}) guarantees that $\mu_{b_0}(r)$
has three roots: two positive simple roots, and one negative simple
root, 
\begin{equation*}
  r_{b_0,-}<0<r_{b_0,\EventHorizonFuture}<r_{b_0,\CosmologicalHorizonFuture}<\infty.
\end{equation*}
The $r$-constant hypersurfaces defined by
$\{r = r_{b_0, \EventHorizonFuture}\}, \{r = r_{b_0,
  \CosmologicalHorizonFuture}\}$ are respectively the \textit{(future)
  event horizon} and the \textit{(future) cosmological horizon}, and
bound the domain of outer communications,
$\Real_t^+\times (r_{b_0, \EventHorizonFuture}, r_{b_0,
  \CosmologicalHorizonFuture})\times \UnitSphere^2$, on which the form
of the metric in (\ref{linear:eq:SdS:metric-def:BL}) is valid.

The form of the metric $g_{b_0}$ in Boyer-Lindquist coordinates has
one major short-coming, namely, its apparent singularity when
$\mu_{b_0}=0$, which occurs exactly at the event horizon and the
cosmological horizon. Fortunately, this is only a coordinate
singularity, and can be resolved by a change of coordinates. We
present two candidates in the following section.

\subsubsection{Regular coordinates on \SdS}
\label{linear:sec:SdS:regular}

We now construct some coordinate systems on \SdS{} that are
regular at the horizons, and will be used in calculations throughout
the rest of the paper. To fix this, we introduce the
change of coordinates
\begin{equation*}
  \tStar = t - F_{b_0}(r),
\end{equation*}
where $F_{b_0}$ is a smooth function of $r$. In the coordinates
$(\tStar , r, \omega)$, the \SdS{} metric, $g_{b_0}$, and the inverse
metric, $G_{b_0}$, take the form
\begin{equation}
  \label{linear:eq:SdS:regular}
  \begin{split}
    g_{b_0} &= -\mu_{b_0} \,d\tStar^2
    + 2F_{b_0}'(r)\mu_{b_0} \,d\tStar dr
    +(\mu_{b_0} F_{b_0}'(r)^2 +\mu_{b_0}^{-1})\,dr^2
    + r^2\UnitSphereMetric,\\
    G_{b_0} &= -\left(\mu_{b_0}^{-1} +\mu_{b_0} F_{b_0}'(r)^2\right)\p_{\tStar}^2
    + 2F_{b_0}'(r)\mu_{b_0}\,\p_{\tStar}\p_r
    + \mu_{b_0}\p_r^2 + r^{-2}\UnitSphereInvMetric.
  \end{split}
\end{equation}
We will require two main conditions be satisfied by the new $(\tStar,
r, \omega)$ coordinate system: that the $\tStar$-constant
hypersurfaces are uniformly spacelike even slightly beyond the
horizons, and that in a neighborhood of the trapped set, the new
coordinate system is identical to the Boyer-Lindquist coordinate
system $(t,r,\omega)$. 

\begin{lemma}
  \label{linear:lemma:SdS:Kerr-star-regular-coordinates}
  Fix some interval
  $\Interval_{b_0}\subset (r_{b_0,\EventHorizonFuture},
  r_{b_0,\CosmologicalHorizonFuture})$. Then there exists a choice of
  $F_{b_0}(r)$ such that
  \begin{enumerate}
  \item the $\tStar$-constant hypersurfaces are space-like. That is,
    that
    \begin{equation*}
      -\frac{1}{\mu_{b_0}} + F_{b_0}'(r)^2\mu_{b_0} < 0;
    \end{equation*}
  \item $F_{b_0}(r)$ satisfies that
    \begin{equation*}
    F_{b_0}(r) \ge 0,\qquad  r\in(r_{b_0,\EventHorizonFuture},r_{b_0,\CosmologicalHorizonFuture}),
  \end{equation*}
  with equality for $r\in \Interval_{b_0}$.
  \end{enumerate}
\end{lemma}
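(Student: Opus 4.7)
The plan is to construct $F_{b_0}$ explicitly by prescribing its derivative and then integrating. The key observation is that on the static exterior $\mu_{b_0}>0$, so condition~1 is equivalent (after multiplying through by $\mu_{b_0}^2$) to the simpler bound $|\mu_{b_0}(r)F_{b_0}'(r)|<1$. This reformulation makes clear what to do: arrange the product $\mu_{b_0}F_{b_0}'$ to have magnitude strictly below $1$, while simultaneously giving $F_{b_0}'$ the right sign on each side of $\Interval_{b_0}$ so that integration from a point in $\Interval_{b_0}$ produces a non-negative $F_{b_0}$.

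Concretely, I would fix smooth non-negative cutoff functions $\chi_+, \chi_-\colon (r_{b_0,\EventHorizonFuture}, r_{b_0,\CosmologicalHorizonFuture})\to[0,1]$ with disjoint supports, where $\chi_+$ is supported in $(\max\Interval_{b_0}, r_{b_0,\CosmologicalHorizonFuture})$ and $\chi_-$ is supported in $(r_{b_0,\EventHorizonFuture}, \min\Interval_{b_0})$, both vanishing (to all orders, for smoothness) on $\Interval_{b_0}$. Fixing $\alpha\in(0,1)$, set
\[ F_{b_0}'(r) := \frac{\alpha\bigl(\chi_+(r) - \chi_-(r)\bigr)}{\mu_{b_0}(r)}, \qquad F_{b_0}(r) := \int_{r_0}^r F_{b_0}'(s)\,ds, \]
for any chosen base point $r_0\in\Interval_{b_0}$. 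Smoothness of $F_{b_0}$ is immediate since $\mu_{b_0}$ is smooth and strictly positive on the open exterior. Condition~1 follows from $\mu_{b_0}(r)F_{b_0}'(r) = \alpha(\chi_+(r)-\chi_-(r))$, which takes values in $[-\alpha,\alpha]$, so $|\mu_{b_0}F_{b_0}'|\leq\alpha<1$. For condition~2: on $\Interval_{b_0}$ both $\chi_\pm$ vanish, so $F_{b_0}'\equiv 0$ and hence $F_{b_0}\equiv 0$; for $r>\max\Interval_{b_0}$, $F_{b_0}'(s) = \alpha\chi_+(s)/\mu_{b_0}(s)\geq 0$ throughout the integration range, giving $F_{b_0}(r)\geq 0$; for $r<\min\Interval_{b_0}$, $F_{b_0}'(s)=-\alpha\chi_-(s)/\mu_{b_0}(s)\leq 0$ throughout the integration range, and reversing the limits again yields $F_{b_0}(r)\geq 0$.

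There is no substantial obstacle in this argument: it is essentially an exercise in explicit construction. The only mildly delicate point is that $F_{b_0}'$ must be unbounded near the horizons, since otherwise $\tStar$ would differ from $t$ only by a bounded shift and would inherit the coordinate singularity of Boyer-Lindquist; pairing $F_{b_0}'$ with a factor of $\mu_{b_0}$ in condition~1 is what cancels this singularity and keeps the spacelike condition robust near the horizons. If one wanted the spacelike condition to persist in a neighborhood extending slightly past the horizons (as alluded to in the paragraph preceding the lemma), one would need $|\mu_{b_0}F_{b_0}'|>1$ just outside the static region, which can be arranged by replacing $\alpha$ near each horizon with a coefficient exceeding $1$ in absolute value and matching carefully through the horizons; however, the lemma as stated only requires the condition on the open exterior, and the construction above suffices.
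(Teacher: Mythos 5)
Your construction does satisfy conditions~1 and~2 as they are literally printed, but it misses the essential feature that the paper's choice of $F_{b_0}'$ is designed to achieve, and your own closing remark underestimates what goes wrong. The lemma exists to produce coordinates in which the metric extends smoothly across the horizons; the sentence immediately after the lemma invokes exactly this (``we can extend $F_{b_0}$ in an arbitrary manner smoothly beyond the horizons $\dots$ on which $g_{b_0}$ $\dots$ is a smooth Lorentzian metric''). In $(\tStar,r,\omega)$ coordinates the $dr^2$ coefficient of $g_{b_0}$ is $\mu_{b_0}^{-1}\bigl(1-\mu_{b_0}^2F_{b_0}'^2\bigr)$. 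For this to stay bounded as $\mu_{b_0}\to 0$ you need $\mu_{b_0}F_{b_0}'\to\pm 1$ at the horizons, and at a rate comparable to $\mu_{b_0}$; a constant $\alpha<1$ gives $\mu_{b_0}^{-1}(1-\alpha^2)\to\infty$, so the coordinate singularity is not resolved at all. Merely having $F_{b_0}'$ ``unbounded'' is not the delicate point---the delicate point is the precise rate $\mu_{b_0}F_{b_0}'=1-O(\mu_{b_0})$, which your $\alpha$ cannot supply.

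The paper's choice
\[
  F_{b_0}'(r) = \bigl(\chi_{\Interval_{b_0},+}-\chi_{\Interval_{b_0},-}\bigr)\,\mu_{b_0}^{-1}\sqrt{1-c_{\tStar}^2\mu_{b_0}(r)},\qquad c_{\tStar}=\bigl(1-\sqrt[3]{9\Lambda M^2}\bigr)^{-1/2},
\]
is engineered precisely so that: (a) $|\mu_{b_0}F_{b_0}'|=\sqrt{1-c_{\tStar}^2\mu_{b_0}}<1$ wherever $\mu_{b_0}>0$ (the spacelike condition, since $c_{\tStar}^2=1/\max\mu_{b_0}$); (b) $F_{b_0}'$ vanishes on $\Interval_{b_0}$ by the choice of cutoffs, giving the sign and support structure you need for condition~2 exactly as in your argument; but also (c) $\mu_{b_0}F_{b_0}'\to\pm 1$ at the horizons at the rate $1-\frac{1}{2}c_{\tStar}^2\mu_{b_0}+O(\mu_{b_0}^2)$, which makes $\mu_{b_0}^{-1}(1-\mu_{b_0}^2F_{b_0}'^2)\to c_{\tStar}^2$ finite, and even makes $G_{b_0}(d\tStar,d\tStar)\to -c_{\tStar}^2$ bounded, i.e.\ the slices are \emph{uniformly} spacelike. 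Your proposed fix in the final sentence (make $\alpha>1$ outside the static region and ``match carefully through the horizons'') identifies the right target regime but glosses over the matching, which is where all the content is; the $\sqrt{1-c_{\tStar}^2\mu_{b_0}}$ factor is the matching. So: your proof is formally adequate for the statement as typeset, but it does not prove what the lemma is actually being used for two sentences later, and the correct construction requires a more specific asymptotic near the horizons than any fixed $\alpha<1$ allows.
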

\begin{proof}
  See appendix~\ref{linear:appendix:lemma:SdS:Kerr-star-regular-coordinates}.
\end{proof}

Crucially, we can extend $F_{b_0}$ in an arbitrary manner smoothly
beyond the horizons $\EventHorizonFuture$,
$\CosmologicalHorizonFuture$. This allows us to consider the
extended domain
\begin{align}  
  \StaticRegionWithExtension
  &:=
    \Real_{\tStar}
    \times (r_{b_0, \EventHorizonFuture} - \varepsilon_{\StaticRegionWithExtension}, r_{b_0, \CosmologicalHorizonFuture} + \varepsilon_{\StaticRegionWithExtension})
    \times \Sphere^2\label{linear:eq:extended-region-def},\\
  \Sigma
  &:= (r_{b_0, \EventHorizonFuture} - \varepsilon_{\StaticRegionWithExtension}, r_{b_0, \CosmologicalHorizonFuture} + \varepsilon_{\StaticRegionWithExtension}) \label{linear:eq:Sigma-def},
\end{align}
for some $\varepsilon_{\StaticRegionWithExtension}>0$ small, on which
$g_{b_0}$ as defined by the expression in \eqref{linear:eq:SdS:regular} is a
smooth Lorentzian metric satisfying Einstein's equations. We also
define
\begin{equation}
  \label{linear:eq:extended-horizon-def}
  \EventHorizonFuture_-:=\curlyBrace*{r_{b_0, \EventHorizonFuture} - \varepsilon_{\StaticRegionWithExtension}},\qquad
  \CosmologicalHorizonFuture_+:=\curlyBrace*{r_{b_0, \CosmologicalHorizonFuture} + \varepsilon_{\StaticRegionWithExtension}}.
\end{equation}

It will often be useful to consider the outgoing and ingoing
Eddington-Finkelstein coordinates given by the case where
$F_{b_0}'(r) = \pm \mu_{b_0}(r)^{-1}$ respectively for specific
calculations. In this case, the \SdS{} metric and inverse metric are
\begin{equation}
  \begin{split}
    g_{b_0} &= -\mu_{b_0}dt_0^2 \pm 2dt_0dr + r^2\UnitSphereMetric,\\
    G_{b_0} &= \mp 2 \p_{t_0}\p_r + \mu_{b_0}\p_r^2 + r^{-2}\UnitSphereInvMetric.
  \end{split}
\end{equation}

\begin{figure}[h]
  \centering
  \includegraphics[width=.8\textwidth]{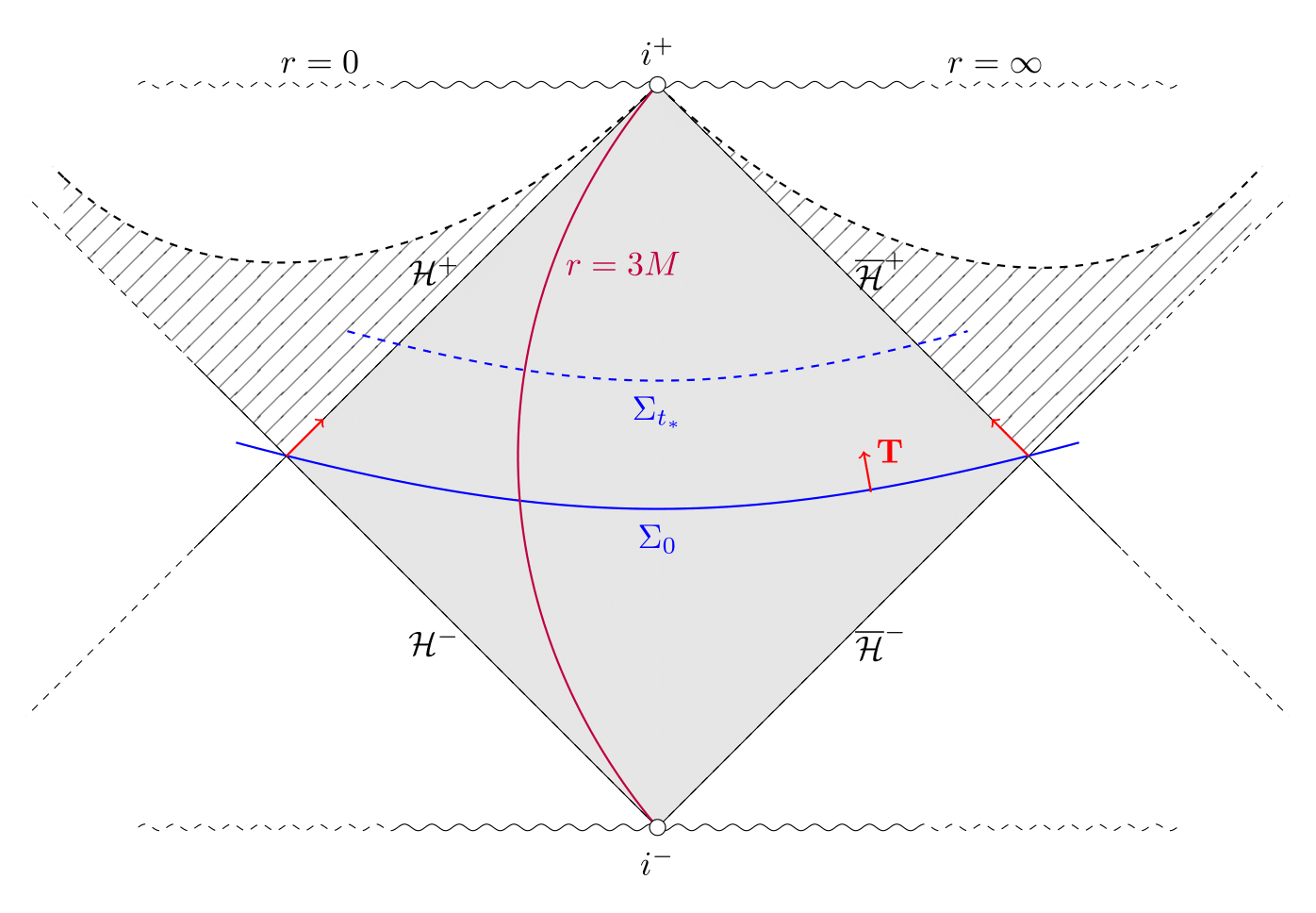}
  \caption{Penrose diagram of \SdS{} space. The solid gray region
    represents the domain of outer communication, in which the metric
    in \eqref{linear:eq:SdS:metric-def:BL} is valid. The gray dashed regions
    beyond the horizons represent the extent to which we can smoothly
    extend the metric in \eqref{linear:eq:SdS:metric-def:BL} by the regular
    coordinates in Lemma
    \ref{linear:lemma:SdS:Kerr-star-regular-coordinates}. Here,
    $\EventHorizon^{\pm}$ and $\CosmologicalHorizon^{\pm}$ denote the
    future/past event and cosmological horizons respectively, and
    $i^\pm$ denotes future/past timelike infinity. We have also
    highlighted two level sets of the $\tStar$ coordinate, the photon
    sphere $r=3M$, and the Killing vectorfield $\KillT$, which is
    future-oriented timelike on the domain of outer communication, and
    null on the horizons.}
  \label{linear:fig:SdS-penrose}
\end{figure}

\subsubsection{The \KdS{} metric}
\label{linear:sec:KdS:metric}

The \SdS{} family represents a stationary, spherically symmetric
family of black-hole solutions to Einstein's equations. On the other
hand, the \KdS{} family represents a stationary,
\textit{axi-symmetric}, family of black-hole solutions to Einstein's
equations, of which the \SdS{} family is a sub-family. In this
section, we detail various useful coordinate systems that we use
subsequently. Throughout the paper, we are mainly interested in \KdS{}
metrics that are slowly-rotating, i.e. that
$a = \abs{\AngularMomentum}\ll \Lambda, M$, and thus close to a \SdS{}
relative.

\begin{definition}
  In the Boyer-Lindquist coordinates
  $(t, r, \theta, \varphi)\in \Real\times (r_{b, \EventHorizonFuture},
  r_{b, \CosmologicalHorizonFuture})\times (0,\pi)\times
  \Sphere^1_\varphi$ (with $r_{b, \EventHorizonFuture}$,
  and $r_{b, \CosmologicalHorizonFuture}$ defined below), the \KdS{}
  metric $g_b:=g(M,\AngularMomentum)$, and inverse metric
  $G_b := G(M, \AngularMomentum)$ take the form:
  \begin{equation}
    \label{linear:eq:KdS-metric:static}
    \begin{split}
      g_b &= \rho_b^2\left(\frac{dr^2}{\Delta_b}
        +  \frac{d\theta^2}{\varkappa_b}\right)
      + \frac{\varkappa_b\sin^2\theta}{(1+\lambda_b)^2\rho_b^2 }\left(a\,dt- (r^2+a^2)\,d\varphi\right)^2
      - \frac{\Delta_b}{(1+\lambda_b)^2\rho_b^2}(dt-a\sin^2\theta\,d\varphi)^2,\\
      G_{b} &=  \frac{1}{\rho_b^2}\left(\Delta_{b}\p_r^2 + \varkappa_b\p_\theta^2\right)
      + \frac{(1+\lambda_b)^2}{\rho_b^2\varkappa_b\sin^2\theta}\left(a\sin^2\theta \p_{t} + \p_\varphi\right)^2
      - \frac{(1+\lambda_b)^2}{\Delta_b\rho_b^2}\left(
        \left(r^2+a^2\right)\p_t + a\p_\varphi
      \right)^2,
    \end{split}  
  \end{equation}
  where
  \begin{gather*}
    \Delta_b := (r^2+a^2)\left(1-\frac{1}{3}\Lambda r^2\right) - 2M r,\qquad
    \rho_b^2 := r^2+a^2\cos^2\theta,\\
    \lambda_b :=\frac{1}{3}\Lambda a^2,\qquad
    \varkappa_b:=1+\lambda_b \cos^2\theta. 
  \end{gather*}
\end{definition}

\begin{remark}
  It is easy to observe that the metric reduces to the \SdS{} metric
  $g_{b_0}$ expressed in Boyer-Lindquist coordinates in
  \eqref{linear:eq:SdS:metric-def:BL} when $b=b_0$ ($a=0$). When $a\neq 0$, the
  spherical coordinates $(\theta, \varphi)$ are chosen so that
  $\frac{\AngularMomentum}{\abs{\AngularMomentum}}\in \Sphere^2$ is
  defined by $\theta=0$, and $\p_{\varphi}$ generates counter-clockwise
  rotation around the axis of rotation.  
\end{remark}

\begin{definition}
  As in the \SdS{} case, we define the \emph{event horizon} and the
  \emph{cosmological horizon} of $g_{b}$, denoted by $\EventHorizonFuture$,
  $\CosmologicalHorizonFuture$ to be the $r$-constant hypersurfaces
  \begin{equation*}
    \EventHorizonFuture:=\{r=r_{b, \EventHorizonFuture}\},\qquad
    \CosmologicalHorizonFuture:=\{r=r_{b, \CosmologicalHorizonFuture}\},
  \end{equation*}
  respectively, where
  $r_{b, \EventHorizonFuture} < r_{b,\CosmologicalHorizonFuture}$ are
  the two largest distinct positive roots of $\Delta_b$.
\end{definition}

A consequence of the implicit function theorem is that these roots
depend smoothly on the black hole parameters $b=(M, a)$.  Since these
two horizons are null, the domain of exterior communications, bounded
by $r_{b, \EventHorizonFuture}, r_{b, \CosmologicalHorizonFuture}$ is
a causal domain that is foliated by compact space-like hypersurfaces.
This point is used in a crucial way throughout what follows.

Much like the case in \SdS, the Boyer-Lindquist form of the \KdS{}
metric $g_b$ in (\ref{linear:eq:KdS-metric:static}) has a singularity at both
the event horizon and the cosmological horizon. As we have already
discussed for the \SdS{} case, this is merely a coordinate
singularity, and it is possible to construct a smooth coordinate
system that extends beyond the horizons. 

We construct such a new, Kerr-star, coordinate system explicitly (see
similar constructions in \cite[Section 5.5][]{dafermos_lectures_2008},
\cite[Section 4][]{tataru_local_2010}, \cite[Section
3.2][]{hintz_global_2018}). First define the new variables
\begin{equation}
  \label{linear:eq:KdS-regular-coordinates:Hintz-Vasy-regular-def}
  \tStar = t - F_b(r),\quad \phiStar = \varphi - \Phi_b(r),
\end{equation}
where $F_b$ and $\Phi_b$ are smooth functions on
$(r_{b_0, \EventHorizonFuture} + \varepsilon_{\StaticRegionWithExtension},
r_{\CosmologicalHorizonFuture} -
\varepsilon_{\StaticRegionWithExtension})$.
We can then compute that the metric takes the form
\begin{equation}
  \label{linear:eq:KdS:regular}
  \begin{split}
    g_b =&{}  \frac{\varkappa_b \sin^2\theta}{(1+\lambda_b)^2\rho_b^2}
    \left(
      a(d\tStar+ F_b'\,dr) - (r^2+a^2)(d\phiStar + \Phi_b'\,dr)
    \right)^2\\
    & - \frac{\Delta_b}{(1+\lambda_b)\rho_b^2}\left(
      d\tStar + F_b'\,dr
      - a \sin^2\theta(d\phiStar +  \Phi_b'\,dr)
      - \frac{(1+\lambda_b)\rho_b^2}{\Delta_b} \,dr
    \right)^2\\
    & + \frac{2}{1+\lambda_b}\left(
      d\tStar + F_b'\,dr
      - a \sin^2\theta(d\phiStar + \Phi_b'\,dr)
      - \frac{(1+\lambda_b)\rho_b^2}{\Delta_b} \,dr
    \right)dr
    +\frac{\rho_b^2}{\varkappa_b}\,d\theta^2.  
  \end{split}  
\end{equation}

We pick the $F_b, \Phi_b$ so that the $(\tStar,r,\theta,\phiStar)$
coordinate system expends smoothly beyond the horizons, is identical
to the Boyer-Lindquist coordinates on a small neighborhood of $r=3M$,
and such that the $\tStar$-constant hypersurfaces are spacelike.
\begin{lemma}
  \label{linear:lemma:KdS:Kerr-star-regular-coordinates}
  Fix an interval $\Interval_b := (r_1, r_2)$ such that
  $r_{b_0,\EventHorizonFuture}+\epsilon_{\StaticRegionWithExtension}<r_1<r_2<r_{b_0,\CosmologicalHorizonFuture}
  - \epsilon_{\StaticRegionWithExtension}$. Then we can pick
  $F_b, \Psi_b$ so that
  \begin{enumerate}
  \item the choice extends the choice of regular coordinates for
    \SdS{} in \eqref{linear:eq:SdS:regular} in the sense that when
    $b=b_0$,
    \begin{equation*}
      F_{b} = F_{b_0},\qquad \Phi_b = 0;
    \end{equation*}
  \item $F_b(r) \ge 0$ for $r\in (r_{b, \EventHorizonFuture}, r_{b,
      \CosmologicalHorizonFuture})$ with equality for $r\in
    \Interval_b$;
  \item the $\tStar$-constant hypersurfaces are space-like, and in
    particular, defining
    \begin{equation}
      \label{linear:eq:GInvdtdt-def}
      \GInvdtdt_b:= - \frac{1}{G_b(d\tStar, d\tStar)},
    \end{equation}
    we have that
    \begin{equation}
      \label{linear:eq:GInvdtdt-prop}
      1\lesssim \GInvdtdt_b \lesssim 1
    \end{equation}
    uniformly on $\StaticRegionWithExtension$;
  \item the metric $g_b$ is smooth on $\StaticRegionWithExtension$. 
  \end{enumerate}
\end{lemma}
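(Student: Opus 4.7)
The plan is to follow the same strategy as in Lemma \ref{linear:lemma:SdS:Kerr-star-regular-coordinates}, now with the additional task of choosing $\Phi_b$ to handle the cross terms introduced by the axisymmetric factor $a \neq 0$. The starting point is to identify the singular behavior that must be imposed on $F_b'$ and $\Phi_b'$ in order to cancel the factors of $\frac{(1+\lambda_b)\rho_b^2}{\Delta_b}$ appearing in \eqref{linear:eq:KdS:regular}. Using the identity $\rho_b^2 = (r^2+a^2) - a^2\sin^2\theta$, a direct inspection shows that near each horizon, choices of the form
\begin{equation*}
  F_b'(r) = \pm\frac{(1+\lambda_b)(r^2+a^2)}{\Delta_b} + \widetilde{F}_b(r),\qquad \Phi_b'(r) = \pm \frac{(1+\lambda_b)a}{\Delta_b} + \widetilde{\Phi}_b(r),
\end{equation*}
with smooth remainders $\widetilde{F}_b, \widetilde{\Phi}_b$, produce the cancellations $aF_b' - (r^2+a^2)\Phi_b' = a\widetilde{F}_b - (r^2+a^2)\widetilde{\Phi}_b$ in the first line of \eqref{linear:eq:KdS:regular} and $F_b' - a\sin^2\theta\,\Phi_b' - \frac{(1+\lambda_b)\rho_b^2}{\Delta_b} = \widetilde{F}_b - a\sin^2\theta\,\widetilde{\Phi}_b$ in the second and third lines. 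Crucially, both cancellations are $\theta$-independent in the singular part, so every term in \eqref{linear:eq:KdS:regular} is smooth across the corresponding horizon.

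I would then glue the two singular models near each horizon together with the identities $F_b \equiv 0$, $\Phi_b \equiv 0$ on $\Interval_b$, via a smooth partition of unity subordinate to the cover of the interval $(r_{b_0,\EventHorizonFuture}-\varepsilon_{\StaticRegionWithExtension}, r_{b_0,\CosmologicalHorizonFuture}+\varepsilon_{\StaticRegionWithExtension})$ by neighborhoods of $\EventHorizonFuture$, $\Interval_b$, and $\CosmologicalHorizonFuture$. The signs of the singular terms in $F_b'$ are chosen to make $F_b'$ positive on the left of $\Interval_b$ and negative on the right; integrating from the boundary of $\Interval_b$ immediately gives $F_b \ge 0$ on $(r_{b,\EventHorizonFuture}, r_{b,\CosmologicalHorizonFuture})$ with equality on $\Interval_b$, which is (2). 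The remainders $\widetilde{F}_b, \widetilde{\Phi}_b$ are selected to depend continuously on $b$ and to specialize to the \SdS{} choice of Lemma \ref{linear:lemma:SdS:Kerr-star-regular-coordinates} when $b = b_0$, with $\widetilde{\Phi}_{b_0} \equiv 0$; this secures (1), while the cancellations above give smoothness of $g_b$ on $\StaticRegionWithExtension$, which is (4).

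It remains to verify (3). Substituting the chosen $F_b'$, $\Phi_b'$ into the inverse metric formula from \eqref{linear:eq:KdS-metric:static}, one computes $-G_b(d\tStar, d\tStar)$ as a smooth function of $r$, $\theta$, $a$ and the regular pieces $\widetilde{F}_b, \widetilde{\Phi}_b$. At $b = b_0$ this reduces to the spacelike quantity controlled in Lemma \ref{linear:lemma:SdS:Kerr-star-regular-coordinates}, and the uniform bound \eqref{linear:eq:GInvdtdt-prop} follows on $\StaticRegionWithExtension$ by compactness of the spatial slice together with smoothness of the dependence on $b$, for $|a|$ sufficiently small.

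I expect the main difficulty to lie in managing the $b$-dependence of the singular pieces uniformly. The functions $\frac{(1+\lambda_b)(r^2+a^2)}{\Delta_b}$ and $\frac{(1+\lambda_b)a}{\Delta_b}$ blow up exactly at the $b$-horizons $r_{b,\EventHorizonFuture}$ and $r_{b,\CosmologicalHorizonFuture}$, whose locations move smoothly with $b$, whereas $\StaticRegionWithExtension$ is anchored to the $b_0$-horizons. To prevent spurious singularities from appearing inside $\StaticRegionWithExtension$, the cutoffs in the partition of unity must be defined in terms of $\Delta_b$ and supported strictly inside the $b$-domain of outer communication, uniformly in $b$; since $r_{b,\EventHorizonFuture}, r_{b,\CosmologicalHorizonFuture}$ lie in the interior of $\StaticRegionWithExtension$ for all $b$ close to $b_0$ by the implicit function theorem, this is achievable, but it is the technical point that forces all auxiliary data in the construction to be chosen with continuous $b$-dependence rather than fixed once and for all.
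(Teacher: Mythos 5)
Your construction follows the same strategy as the paper's: splice $b$-dependent Eddington--Finkelstein singular parts $\pm\frac{(1+\lambda_b)(r^2+a^2)}{\Delta_b}$ and $\pm\frac{(1+\lambda_b)a}{\Delta_b}$ onto smooth remainders using cutoffs anchored to $\Interval_b$, then verify smoothness, nonnegativity, spacelikeness, and the $b = b_0$ specialization. The paper is only slightly more explicit in that it writes $F_b' = F_{b_0}' + (\chi_{\Interval_b,\EventHorizonFuture} - \chi_{\Interval_b,\CosmologicalHorizonFuture})\bigl(\mu_{b_0}^{-1} - \frac{(1+\lambda_b)(r^2+a^2)}{\Delta_b}\bigr)$ and $\Phi_b' = (\chi_{\Interval_b,\EventHorizonFuture} - \chi_{\Interval_b,\CosmologicalHorizonFuture})\frac{(1+\lambda_b)a}{\Delta_b}$, so that condition (1) is automatic from $\Delta_{b_0} = r^2\mu_{b_0}$ and $\lambda_{b_0} = 0$, rather than enforced by a generic ``choose remainders depending continuously on $b$'' assertion.

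There is, however, a sign error in your verification of (2). You write that $F_b'$ should be ``positive on the left of $\Interval_b$ and negative on the right.'' With $F_b(r_1) = F_b(r_2) = 0$, integrating $F_b(r) = -\int_r^{r_1} F_b'(s)\,ds$ for $r < r_1$ and $F_b(r) = \int_{r_2}^r F_b'(s)\,ds$ for $r > r_2$ would then give $F_b \leq 0$, the opposite of what you need. The correct statement, and what the regularity requirement actually forces, is that $F_b' < 0$ to the left of $\Interval_b$ (ingoing EF, $F_b' \sim -\tfrac{(1+\lambda_b)(r^2+a^2)}{\Delta_b}$, extending through the future event horizon) and $F_b' > 0$ to the right (outgoing EF, extending through the future cosmological horizon), which indeed yields $F_b \geq 0$. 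Relatedly, the cancellation $F_b' - a\sin^2\theta\,\Phi_b' - \tfrac{(1+\lambda_b)\rho_b^2}{\Delta_b} = \widetilde{F}_b - a\sin^2\theta\,\widetilde{\Phi}_b$ you quote holds only for the $+$ choice; for the $-$ choice needed at the event horizon one picks up an extra $-2\tfrac{(1+\lambda_b)\rho_b^2}{\Delta_b}$, and regularity has to be read off from the alternative form \eqref{linear:eq:KdS:null-metric} (with $\mp$ in the cross term) rather than directly from \eqref{linear:eq:KdS:regular}. Neither issue is fatal to the strategy, but both need to be corrected before the proof of (2) and (4) is complete. Finally, your closing worry that cutoffs must be ``supported strictly inside the $b$-domain of outer communication'' is misplaced: the cutoffs near the horizons must equal $1$ in a neighborhood of the $b$-horizons precisely so that the singular EF pieces are active there and effect the cancellation; the paper's cutoffs, anchored to the fixed $b_0$-horizons, accomplish this automatically for $|a|$ small.
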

\begin{proof}
  See appendix \ref{linear:appendix:lemma:KdS:Kerr-star-regular-coordinates}.
\end{proof}

It will often be convenient to perform calculations on \KdS{} with the
choice of 
\begin{equation*}
  F_{b}'(r) = \pm 
  \frac{(1+\lambda_b)(r^2+a^2)}{\Delta_b},
  \qquad
  \Phi_b'(r) = \pm \frac{(1+\lambda_b)a}{\Delta_b},
\end{equation*}
which correspond to the outgoing and ingoing Eddington-Finkelstein
coordinates, in which case the \KdS{} metric and inverse metric can be
expressed in the coordinates $(t_0, r, \theta, \varphi_0)$ by
\begin{align}
  g_b ={}& \frac{\varkappa_b\sin^2\theta}{(1+\lambda_b)^2\rho_b^2}(ad{t_0}-(r^2+a^2)d{\varphi_0})^2
           - \frac{\Delta_b}{(1+\lambda_b)^2\rho_b^2}\left(
           d{t_0} - a(\sin^2\theta)d{\varphi_0}\right)^2  \notag \\
    &\mp \frac{2}{(1+\lambda_b)}(d{t_0}- a(\sin^2\theta)d{\varphi_0})\,dr + \frac{\rho_b^2}{\varkappa_b}d\theta^2,\label{linear:eq:KdS:null-metric} \\
  \rho_b^2G_b ={}&
  \Delta_b\left(\p_r \mp
    \frac{1+\lambda_b}{\Delta_b}\left((r^2+a^2)\p_{t_0}+a\p_{\varphi_0}\right)\right)^2
  -\frac{(1+\lambda_b)^2}{\Delta_b}\left((r^2+a^2)\p_{t_0} +
    a\p_{\varphi_0}\right)^2 + \rho_b^2\OCal, \label{linear:eq:KdS:null-inv-metric}\\
  \rho_b^2\OCal ={}&
  \varkappa_b\p_{\theta}^2 + \frac{(1+\lambda_b)^2}{\varkappa_b\sin^2\theta}\left(a\sin^2\theta\p_{t_0} + \p_{\varphi_0}\right)^2 \notag.
\end{align}

In the Eddington-Finkelstein coordinates, the \KdS{} metric can be
written in a more condensed format. 
\begin{definition}
  Define the vectorfields
  \begin{equation}
    \label{linear:eq:KdS:Rhat-That-def}
    \begin{split}
      \HprVF=
      \begin{cases}
        \p_r - \frac{1}{\Delta_b}\HawkingVF, & r > 3M,\\
        \p_r + \frac{1}{\Delta_b}\HawkingVF, & r < 3M,
      \end{cases}
                                             &\qquad
                                               \HawkingVF = (1+\lambda_b)\left((r^2+a^2)\p_{t_0} + a\p_{\varphi_0}\right),\\
      \HprVF' =
      \begin{cases}
        \Delta_b^{-2}\p_r\Delta_b\HawkingVF \mp\frac{1}{\Delta_b}\HawkingVF', & r>3M,\\
        -\Delta_b^{-2}\p_r\Delta_b\HawkingVF \mp\frac{1}{\Delta_b}\HawkingVF', & r<3M,
      \end{cases}
                                             &\qquad
                                               \Delta_b^{-2}\p_r\Delta_b\HawkingVF \mp\frac{1}{\Delta_b}\HawkingVF',
                                               \qquad
                                               \HawkingVF' = 2(1+\lambda_b)r\p_{t_0}.  
    \end{split}
  \end{equation}
  Notice that we can then write,
  \begin{equation}
    \rho_b^2 G_b = -\frac{1}{\Delta_b}\HawkingVF^2 + \Delta_b\HprVF^2 + \rho_b^2\OCal.
  \end{equation}
  We will also define the operator
  \begin{equation}
    \label{linear:eq:CartarOp:def}
    \rho\CartarOp h := \left(\varkappa^2\abs*{\p_\theta h}^2
      + \frac{(1+\lambda)^2}{\varkappa\sin^2\theta}\abs*{\left(a\sin^2\theta \p_{t_0} + \p_{\varphi_0}\right)h}^2\right)^{\frac{1}{2}},
  \end{equation}
  defined so that
  \begin{equation*}
    \rho^2\abs*{\CartarOp h}^2 = \OCal^{\alpha\beta}\p_\alpha h \p_\beta h. 
  \end{equation*}
\end{definition}
\begin{remark}
  Observe that $\HprVF$ and $\HprVF'$ as defined in
  \eqref{linear:eq:KdS:Rhat-That-def} are not smooth vectorfields, since they
  have a singularity at $r=3M$. This is fortunately not an issue for
  us, since everywhere that we use $\HprVF$ and $\HprVF'$, we will be
  restricted to disjoint subset of $\StaticRegionWithExtension$
  localized away from $r=3M$. 
\end{remark}

\subsection{The Killing vectorfields $\KillT$, $\KillPhi$}
\label{linear:sec:setup:killing}

\begin{definition}
  Define the vectorfields $\KillT = \p_{\tStar}$,
  $\KillPhi=\p_{\phiStar}$ using the Kerr-star coordinates in
  \eqref{linear:eq:KdS:regular}. From the fact that the expression of $g_b$
  in the Kerr-star coordinates is independent of $\tStar$ and
  $\phiStar$, we immediately have that $\KillT, \KillPhi$ are Killing
  vectorfields.
\end{definition}

\begin{definition}
  On \KdS{} spacetimes, the \emph{ergoregion} is defined by
  \begin{equation*}
    \Ergoregion := \{(t,x): g_b(\KillT,\KillT)(t,x) > 0\}.
  \end{equation*}
  We define the boundary of the ergoregion, the set of points where
  $\KillT$ is null, as the \emph{ergosphere}.
\end{definition}
\begin{remark}
  \label{linear:remark:SdS:ergosphere}
  Observe that for the \SdS{} sub-family, the ergosphere is exactly
  the event horizon and the cosmological horizon, and $\KillT$ is
  timelike on the whole of the interior of the domain of outer
  communication. 
\end{remark}

The following lemma shows that on slowly-rotating \KdS{} spacetimes,
the ergoregion consists of two components,
$\Ergoregion_{\EventHorizonFuture}$, which contains
$\EventHorizonFuture$, and $\CosmologicalHorizonFuture$, which
contains $\Ergoregion_{\CosmologicalHorizonFuture}$. 
\begin{lemma}
  For any fixed $\delta_{\Horizon}\ll 1$, let
  $\BHParamNbhd_{\delta_{\Horizon}}$ be the set of \KdS{} black
  hole parameters such that
  \begin{equation*}
    \abs*{\sup_{\Ergoregion_{\EventHorizonFuture}}r - r_{b_0,\EventHorizonFuture}}
    + \abs*{\inf_{\Ergoregion_{\CosmologicalHorizonFuture}}r - r_{b_0, \CosmologicalHorizonFuture}}
    \le \delta_{\Horizon},
  \end{equation*}
  so that for $b\in \BHParamNbhd_{\delta_{\Horizon}}$, the two components of the
  ergoregion $\Ergoregion_{\EventHorizonFuture}$ and
  $\Ergoregion_{\CosmologicalHorizonFuture}$ are physically separated
  and lie within a small neighborhood of the event and cosmological
  horizons respectively. Then, for $|a|\ll \Lambda, M$ and $1-9\Lambda
  M^2 > 0$, $b\in \BHParamNbhd_{\delta_{\Horizon}}$. 
\end{lemma}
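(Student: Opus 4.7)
The plan is to reduce the statement to a quantitative implicit function theorem argument applied to $\Delta_b$ near its two relevant roots. The starting point is to compute $g_b(\KillT,\KillT)$ explicitly: from the Boyer-Lindquist expression \eqref{linear:eq:KdS-metric:static}, and using that $\KillT=\p_{\tStar}=\p_t$ under the Kerr-star change of variables \eqref{linear:eq:KdS-regular-coordinates:Hintz-Vasy-regular-def}, one finds
\begin{equation*}
  (1+\lambda_b)^2\rho_b^2\, g_b(\KillT,\KillT) = a^2\varkappa_b\sin^2\theta - \Delta_b.
\end{equation*}
Since $\rho_b^2>0$ on $\StaticRegionWithExtension$, the ergoregion is exactly
\begin{equation*}
  \Ergoregion = \{(t,r,\theta,\phiStar)\,:\, \Delta_b(r) < a^2\varkappa_b(\theta)\sin^2\theta\}.
\end{equation*}

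Next I would study this sub-level set by viewing it as a small perturbation of the $a=0$ case. When $b=b_0$, we have $\Delta_{b_0}(r) = r^2\mu_{b_0}(r)$ and the right-hand side vanishes, so $\Ergoregion_{b_0}$ degenerates to $\{r<r_{b_0,\EventHorizonFuture}\}\cup\{r>r_{b_0,\CosmologicalHorizonFuture}\}$, and in particular the ergosphere is exactly the two horizons, recovering Remark \ref{linear:remark:SdS:ergosphere}. The subextremality hypothesis $1-9\Lambda M^2>0$ guarantees that $r_{b_0,\EventHorizonFuture}$ and $r_{b_0,\CosmologicalHorizonFuture}$ are simple roots of $\Delta_{b_0}$, so
\begin{equation*}
  \p_r\Delta_{b_0}(r_{b_0,\EventHorizonFuture}) > 0, \qquad \p_r\Delta_{b_0}(r_{b_0,\CosmologicalHorizonFuture}) < 0,
\end{equation*}
with moduli bounded below by a constant depending only on $(M,\Lambda)$. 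Combined with smooth dependence of $\Delta_b$ on $b$, the implicit function theorem (applied uniformly in $\theta$) yields, for every $\theta\in[0,\pi]$ and every $a$ with $|a|\ll M,\Lambda$, exactly one smooth root $r_-(a,\theta)$ of $\Delta_b(r) - a^2\varkappa_b(\theta)\sin^2\theta = 0$ in a fixed neighborhood of $r_{b_0,\EventHorizonFuture}$ and exactly one root $r_+(a,\theta)$ in a fixed neighborhood of $r_{b_0,\CosmologicalHorizonFuture}$, with the quantitative bounds
\begin{equation*}
  |r_{\pm}(a,\theta) - r_{b_0,\EventHorizonFuture/\CosmologicalHorizonFuture}| \lesssim a^2.
\end{equation*}
Away from these neighborhoods $\Delta_b(r)$ is bounded below by a positive constant for $a$ small, so there are no further points of the ergoregion. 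Hence $\Ergoregion$ splits into two disjoint components
\begin{equation*}
  \Ergoregion_{\EventHorizonFuture} = \{r\le r_-(a,\theta)\}, \qquad \Ergoregion_{\CosmologicalHorizonFuture} = \{r\ge r_+(a,\theta)\},
\end{equation*}
sitting inside fixed-size neighborhoods of the two horizons, which also shows that they are physically separated.

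Finally, taking $\sup$ of $r$ over $\Ergoregion_{\EventHorizonFuture}$ and $\inf$ of $r$ over $\Ergoregion_{\CosmologicalHorizonFuture}$ gives
\begin{equation*}
  |\sup_{\Ergoregion_{\EventHorizonFuture}}r - r_{b_0,\EventHorizonFuture}| + |\inf_{\Ergoregion_{\CosmologicalHorizonFuture}}r - r_{b_0,\CosmologicalHorizonFuture}| \lesssim a^2 + |r_{b,\EventHorizonFuture}-r_{b_0,\EventHorizonFuture}| + |r_{b,\CosmologicalHorizonFuture}-r_{b_0,\CosmologicalHorizonFuture}|,
\end{equation*}
where the last two terms also control as $O(a^2)$ by the same implicit function argument applied to $\Delta_b(r)=0$. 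Thus for $|a|$ sufficiently small depending on $\delta_{\Horizon}$, $M$, $\Lambda$, we obtain $b\in\BHParamNbhd_{\delta_{\Horizon}}$. The only mild technical point is ensuring the implicit function theorem estimate is uniform in $\theta\in[0,\pi]$, which follows immediately because $\varkappa_b\sin^2\theta$ is bounded and the derivative $\p_r\Delta_{b_0}$ at the relevant root is independent of $\theta$; this is what I expect to be the most pedestrian but only real piece of bookkeeping.
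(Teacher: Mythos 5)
Your proposal is correct, and it proves the lemma by a different, more explicit route than the paper. The paper's proof is essentially a one-liner: it invokes Remark \ref{linear:remark:SdS:ergosphere} to observe that at $a=0$ the ergosphere coincides with the two horizons, then asserts the continuity statement
\begin{equation*}
  \lim_{a\to 0}\sup_{\Ergoregion_{\EventHorizonFuture}}r - r_{b_0,\EventHorizonFuture} = \lim_{a\to 0}\inf_{\Ergoregion_{\CosmologicalHorizonFuture}}r - r_{b_0,\CosmologicalHorizonFuture} = 0
\end{equation*}
as ``directly yielding the conclusion.'' The continuity of $\sup_{\Ergoregion_{\EventHorizonFuture}}r$ and $\inf_{\Ergoregion_{\CosmologicalHorizonFuture}}r$ in $a$ is left unjustified. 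Your version supplies exactly the missing mechanism: you identify the ergoregion with the sub-level set $\{\Delta_b(r)<a^2\varkappa_b\sin^2\theta\}$ via the explicit formula $(1+\lambda_b)^2\rho_b^2\,g_b(\KillT,\KillT) = a^2\varkappa_b\sin^2\theta - \Delta_b$ (which is correct, and $\KillT=\p_{\tStar}=\p_t$ since the Kerr-star change of variables only shifts $t$ and $\varphi$ by functions of $r$), and then apply the implicit function theorem at the two simple roots of $\Delta_{b_0}$, using $1-9\Lambda M^2>0$ to guarantee $\p_r\Delta_{b_0}\neq 0$ there. This upgrades the paper's qualitative $o(1)$ claim to a quantitative $O(a^2)$ bound with explicit constants depending on $(M,\Lambda)$, and in particular makes the two-component structure of $\Ergoregion$ for small $a$ transparent rather than implicit. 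One small stylistic remark: your final display bounds the quantity by $a^2 + |r_{b,\EventHorizonFuture}-r_{b_0,\EventHorizonFuture}| + |r_{b,\CosmologicalHorizonFuture}-r_{b_0,\CosmologicalHorizonFuture}|$, but the extra horizon-shift terms are superfluous since the lemma compares directly to the \SdS{} radii $r_{b_0,\Horizon}$ and your IFT step already produces $|\sup_{\Ergoregion_{\EventHorizonFuture}}r - r_{b_0,\EventHorizonFuture}|\lesssim a^2$ directly; this does not affect correctness.
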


\begin{proof}
  In view of Remark \ref{linear:remark:SdS:ergosphere}, we have that
  \begin{equation*}
    \sup_{\Ergoregion_{\EventHorizonFuture}}r =
    r_{b_0,\EventHorizonFuture},
    \quad
    \inf_{\Ergoregion_{\CosmologicalHorizonFuture}}r = r_{b_0, \CosmologicalHorizonFuture},
  \end{equation*}
  and thus
  \begin{equation*}
    \lim_{a\to 0}\sup_{\Ergoregion_{\EventHorizonFuture}}r - r_{b_0,
      \EventHorizonFuture} = \lim_{a\to 0}
    \inf_{\Ergoregion_{\CosmologicalHorizonFuture}}r - r_{b_0,
      \CosmologicalHorizonFuture} = 0,
  \end{equation*}
  directly yielding the conclusion.
\end{proof}

\subsection{The horizon generators $\HorizonGen_{\EventHorizonFuture},
  \HorizonGen_{\CosmologicalHorizonFuture}$}
\label{linear:sec:setup:horizon-gen}

\begin{definition}
  In \KdS{} spacetimes, the horizons are generated by the following
  Killing \emph{horizon generators}, which are null on their respective horizons,
  and time-like in a neighborhood of their respective horizons:
  \begin{equation*}
    \HorizonGen_{\EventHorizonFuture} := \KillT + \frac{a}{r_{\EventHorizonFuture}^2 +a^2}\KillPhi,
    \qquad \HorizonGen_{\CosmologicalHorizonFuture}  := \KillT + \frac{a}{r_{\CosmologicalHorizonFuture}^2 + a^2}\KillPhi.
  \end{equation*} 
  Let us denote also
  \begin{equation*}
    \HorizonGenPhi_{\EventHorizonFuture} := \frac{a}{r_{\EventHorizonFuture}^2 + a^2},
    \qquad
    \HorizonGenPhi_{\CosmologicalHorizonFuture} := \frac{a}{r_{\CosmologicalHorizonFuture}^2 + a^2}.
  \end{equation*}
  In particular, in the \SdS{} subfamily,
  $\HorizonGen_{\EventHorizonFuture} =
  \HorizonGen_{\CosmologicalHorizonFuture} = \KillT$.   
\end{definition}

\begin{definition}
  Associated to each horizon is a \emph{surface gravity}, $\SurfaceGravity_{\Horizon}$,
  defined by
  \begin{equation}
    \label{linear:eq:surface-grav-equation}
    \nabla_{\HorizonGen_{\Horizon}} \HorizonGen = \SurfaceGravity_{\Horizon} \HorizonGen.
  \end{equation}
  On \SdS, the values for the black hole surface gravity and the
  cosmological surface gravity respectively are
  \begin{equation*}
    \SurfaceGravity_{b_0,\EventHorizonFuture} = \frac{1}{2}\p_r\mu_{b_0}(r_{\EventHorizonFuture}),\qquad
    \SurfaceGravity_{b_0,\CosmologicalHorizonFuture} = -\frac{1}{2}\p_r\mu_{b_0}(r_{\CosmologicalHorizonFuture}).
  \end{equation*}
  Notice that $\SurfaceGravity_{b_0,\EventHorizonFuture},
  \SurfaceGravity_{b_0,\CosmologicalHorizonFuture}>0$, and that this
  positivity persists in the \KdS{} case, where
  \begin{equation*}
    \SurfaceGravity_{\EventHorizonFuture} = \frac{r_{\EventHorizonFuture}\left(1-\frac{2}{3}\Lambda r_{\EventHorizonFuture}^2 - \frac{\Lambda}{3}a^2 \right)- M}{r_{\EventHorizonFuture}^2 + a^2},\qquad
    -\SurfaceGravity_{\CosmologicalHorizonFuture} = \frac{r_{\CosmologicalHorizonFuture}\left(1-\frac{2}{3}\Lambda r_{\CosmologicalHorizonFuture}^2 - \frac{\Lambda}{3}a^2 \right)- M}{r_{\CosmologicalHorizonFuture}^2 + a^2}.
  \end{equation*}
\end{definition}

\subsection{Energy momentum tensor and divergence formulas}
\label{linear:sec:EMT-and-div-thm}
In this section, we define the energy momentum tensor and some basic
divergence properties. Given a complex matrix-valued function $h$ on
$\mathcal{M}$, let us denote its complex conjugate by
$\overline{h}$. Moreover, for any 2-tensor $h_{\mu\nu}$, we denote its
symmetrization
\begin{equation*}
  h_{(\mu\nu)} = \frac{1}{2}\left(
    h_{\mu\nu} + h_{\nu\mu}
  \right).
\end{equation*}
\begin{definition}
  We define the \emph{energy-momentum tensor} to be the
  symmetric 2-tensor:
  \begin{equation*}
    \EMTensor_{\mu\nu}[h]
    = \nabla_{(\mu}\overline{h}\cdot\nabla_{\nu)}h
    - \frac{1}{2}g_{\mu\nu}\nabla_\alpha \overline{h}\cdot  \nabla^\alpha h.
  \end{equation*}
\end{definition}
The energy-momentum tensor satisfies the following divergence
property:
\begin{equation}
  \label{linear:eq:EMTensor:divergence-property}
  \nabla_\mu \tensor[]{\EMTensor}{^\mu_\nu}[h]
  = \Re\left(\nabla_\nu\overline{h} \cdot \ScalarWaveOp[g]h\right),
\end{equation}
where we denote by 
\begin{equation*}
  \ScalarWaveOp[g] = \nabla^\alpha \partial_\alpha
\end{equation*}
the scalar wave operator\footnote{We will use
  $\nabla^\alpha\nabla_\alpha$ to denote the tensorial
  wave operator.}.

This property will be the key to producing the various divergence
equations we use to derive the relevant energy estimates in the
subsequent sections.
\begin{definition}
  Let $X$ be a smooth vectorfield on $\StaticRegionWithExtension$, $m$
  be a smooth one-form on $\StaticRegionWithExtension$, and
  $q$ be a smooth function on $\StaticRegionWithExtension$.  We will
  refer to $X$ as the \emph{(vectorfield) multiplier}, to $m$ as the
  \emph{auxiliary zero-order corrector}, and to $q$ as
  the \emph{Lagrangian corrector}. Then define
  \begin{equation}
    \label{linear:eq:J-K-currents:def}
    \begin{split}
      \JCurrent{X,q,m}_\mu[{h}]
      &= X^\nu\tensor[]{\EMTensor}{_{\mu\nu}}[{h}] 
      + \frac{1}{2}q\nabla_\mu (\abs*{h}^2)
      - \frac{1}{2}\nabla_\mu q\abs*{{h}}^2
      + \frac{1}{2}m_\mu \abs*{h}^2,\\
      \KCurrent{X,q,m}[{h}]
      &= \DeformationTensor{X}{}\cdot \EMTensor[h]
      + q \nabla^\alpha{h}\cdot \nabla_\alpha\overline{{h}}
      + \frac{1}{2}\nabla_{m^\sharp}\abs{h}^2
      + \frac{1}{2}\nabla^\alpha (m_\alpha - \p_\alpha q) \abs*{{h}}^2,      
    \end{split}
  \end{equation}
  where
  \begin{equation*}
    \DeformationTensor{X}{_{\mu\nu}} := \frac{1}{2}\left(\nabla_\mu X_\nu + \nabla_\nu X_\mu\right)
  \end{equation*}
  denotes the \emph{deformation tensor} of $X$. 
\end{definition}

It will also be convenient to define the Laplace-transformed analogues
of $\JCurrent{X,q,m}[h]$, $\KCurrent{X,q,m}[h]$.
\begin{definition}
  \label{linear:def:J-K-Laplace-currents:def}
  Let $X$ be a smooth vectorfield on $\StaticRegionWithExtension$, $m$
  be a smooth one-form on $\StaticRegionWithExtension$, and
  $q$ be a smooth function on $\StaticRegionWithExtension$, and let
  $u = u(x)$. Then we define
  \begin{equation}
    \label{linear:eq:J-K-Laplace-currents:def}
    \begin{split}
      \JLaplaceCurrent{X,q,m}[u] &= e^{-2\Im\sigma\tStar}\JCurrent{X,q,m}\left[e^{-\ImagUnit\sigma\tStar}u\right]\\
      \KLaplaceCurrent{X,q,m}[u] &= e^{-2\Im\sigma\tStar}\KCurrent{X,q,m}\left[e^{-\ImagUnit\sigma\tStar}u\right],
    \end{split}    
  \end{equation}
  which are both $\tStar$-independent.
\end{definition}

Throughout the paper, we will use the following consequence of the
divergence theorem. 
\begin{prop}
  \label{linear:prop:energy-estimates:spacetime-divergence-prop}
  Let $X$ denote a sufficiently regular vectorfield on
  $\StaticRegionWithExtension$, a \KdS{} black hole spacetime, and
  $\DomainOfIntegration$ denote the spacetime region bounded by
  $\Sigma_{t_1}, \Sigma_{t_2}, \EventHorizonFuture_-$, and
  $\CosmologicalHorizonFuture_+$. Moreover, denote
  \begin{equation*}
    \EventHorizonFuture_{t_1,t_2} := \EventHorizonFuture_-\bigcap \{t_1\le \tStar\le t_2\},\qquad
    \CosmologicalHorizonFuture_{t_1,t_2} := \CosmologicalHorizonFuture_+\bigcap \{t_1\le \tStar\le t_2\}.
  \end{equation*}

  Then we have the following divergence
  property: 
  \begin{equation*}    
    -\int_{\DomainOfIntegration} \nabla_g\cdot X ={}\int_{\Sigma_{t_2}}X\cdot n_{\Sigma_{t_2}}  -
    \int_{\Sigma_{t_1}}X\cdot n_{\Sigma_{t_1}} +
    \int_{\EventHorizonFuture_{t_1,t_2}}
    X \cdot n_{\EventHorizonFuture_-}  +
    \int_{\CosmologicalHorizonFuture_{t_1,t_2}} X\cdot n_{\CosmologicalHorizonFuture_+}
    .
  \end{equation*}
  Here, $n_{\Sigma_t}$ is the future-directed unit normal on
  $\Sigma_t$, and $n_{\Horizon}$ denotes the (time-like)
  future-directed unit normal to $\Horizon$.
\end{prop}

It will also be convenient to apply the divergence property to a
spacetime domain with boundaries along $\EventHorizonFuture$ and
$\CosmologicalHorizonFuture$.

\begin{prop}
  \label{linear:prop:energy-estimates:spacetime-divergence-prop:exact-horizon}
  Let $X$ denote a sufficiently regular vectorfield on
  $\StaticRegionWithExtension$, a \KdS{} black hole spacetime, and
  $\DomainOfIntegration$ denote the spacetime region bounded by
  $\widetilde{\Sigma}_{t_1}, \widetilde{\Sigma}_{t_2}, \EventHorizonFuture$, and
  $\CosmologicalHorizonFuture$. Moreover, denote
  \begin{equation*}
    \EventHorizonFuture_{t_1,t_2} := \EventHorizonFuture\bigcap \{t_1\le \tStar\le t_2\},\qquad
    \CosmologicalHorizonFuture_{t_1,t_2} := \CosmologicalHorizonFuture\bigcap \{t_1\le \tStar\le t_2\}.
  \end{equation*}

  Then we have the following divergence
  property: 
  \begin{equation*}    
    -\int_{\DomainOfIntegration} \nabla_g\cdot X ={}\int_{\widetilde{\Sigma}_{t_2}}X\cdot n_{\widetilde{\Sigma}_{t_2}}  -
    \int_{\widetilde{\Sigma}_{t_1}}X\cdot n_{\widetilde{\Sigma}_{t_1}} +
    \int_{\EventHorizonFuture_{t_1,t_2}}
    X \cdot \HorizonGen_{\EventHorizonFuture}  +
    \int_{\CosmologicalHorizonFuture_{t_1,t_2}} X\cdot \HorizonGen_{\CosmologicalHorizonFuture}
    .
  \end{equation*}
  Here, $n_{\widetilde{\Sigma}_t}$ is the future-directed unit normal on
  $\widetilde{\Sigma}_t$, and we recall from Section \ref{linear:sec:setup:horizon-gen}
  that $\HorizonGen_{\Horizon}$ are the Killing null generators of
  $\Horizon$.
\end{prop}

For some of the estimates in this paper, the following analogue of
Propositions \ref{linear:prop:energy-estimates:spacetime-divergence-prop} and
\ref{linear:prop:energy-estimates:spacetime-divergence-prop:exact-horizon}
will be more useful.

\begin{corollary}
  \label{linear:cor:div-them:spacelike}
  Suppose $X$ is a vectorfield on a \KdS{} spacetime $g$. Then, for
  sufficiently regular $X$, the following relations hold:

  \begin{align*}    
    -\int_{\Sigma_{\tStar}}\nabla_g\cdot X\, \sqrt{\GInvdtdt}
    &= \p_{\tStar}\int_{\Sigma_{\tStar}}X \cdot n_{\Sigma_{\tStar}}
    + \int_{\CosmologicalHorizonFuture_+\bigcap \Sigma_{\tStar}} X \cdot n_{\CosmologicalHorizonFuture_+}
    +\int_{\EventHorizonFuture_-\bigcap\Sigma_{\tStar}}X \cdot n_{\EventHorizonFuture_-},\\
    -\int_{\widetilde{\Sigma}_{\tStar}}\nabla_g\cdot X\, \sqrt{\GInvdtdt}
    &= \p_{\tStar}\int_{\widetilde{\Sigma}_{\tStar}}X \cdot n_{\widetilde{\Sigma}_{\tStar}}
    + \int_{\CosmologicalHorizonFuture\bigcap \widetilde{\Sigma}_{\tStar}} X \cdot \HorizonGen_{\CosmologicalHorizonFuture}
    +\int_{\EventHorizonFuture\bigcap\widetilde{\Sigma}_{\tStar}}X \cdot \HorizonGen_{\EventHorizonFuture},          
  \end{align*}
  where we recall $\GInvdtdt$ as defined in \eqref{linear:eq:GInvdtdt-def}. 
\end{corollary}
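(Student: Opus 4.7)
Both identities are obtained by applying the spacetime divergence formulas of Propositions \ref{linear:prop:energy-estimates:spacetime-divergence-prop} and \ref{linear:prop:energy-estimates:spacetime-divergence-prop:exact-horizon} to an infinitesimally thin time slab and then differentiating in $\tStar$. Concretely, I would fix $\tStar$, take $t_1=\tStar$ and $t_2=\tStar+h$ in the relevant proposition, divide the resulting identity by $h$, and pass to the limit $h\to 0^+$.

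The core computation is to identify the limit of each of the four terms. For the spacetime integral on the left-hand side, the key observation is that the Kerr-star coordinate $\tStar$ is uniformly timelike on $\StaticRegionWithExtension$, with $G_b(d\tStar,d\tStar)=-\GInvdtdt_b^{-1}$ by definition \eqref{linear:eq:GInvdtdt-def}. Interpreting this as an ADM-type decomposition of $g_b$ adapted to the $\tStar$-foliation, the lapse function of the foliation is exactly $N=\sqrt{\GInvdtdt_b}$, so the spacetime volume element factors as
\begin{equation*}
  d\mathrm{Vol}_{g_b} = \sqrt{\GInvdtdt_b}\,d\tStar\,dS_{\Sigma_{\tStar}},
\end{equation*}
where $dS_{\Sigma_{\tStar}}$ is the induced Riemannian volume form on the (spacelike, by Lemma \ref{linear:lemma:KdS:Kerr-star-regular-coordinates}) slice. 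Integration of $\nabla_g\cdot X$ over the slab and division by $h$ then produces, in the limit $h\to 0$, precisely the integral $-\int_{\Sigma_{\tStar}}\nabla_g\cdot X\,\sqrt{\GInvdtdt}$ appearing on the left-hand side of the Corollary. The two $\Sigma$-boundary terms combine into the difference quotient
\begin{equation*}
  \frac{1}{h}\left(\int_{\Sigma_{\tStar+h}}X\cdot n_{\Sigma_{\tStar+h}} - \int_{\Sigma_{\tStar}}X\cdot n_{\Sigma_{\tStar}}\right),
\end{equation*}
whose limit is by definition $\p_{\tStar}\int_{\Sigma_{\tStar}}X\cdot n_{\Sigma_{\tStar}}$.

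For the horizon terms, the induced measure on the 3-dimensional piece $\EventHorizonFuture_-\bigcap\{\tStar\le s\le \tStar+h\}$ factors as a product $ds\times dS_{\EventHorizonFuture_-\cap \Sigma_s}$ (up to a bounded Jacobian that, together with the normalization of $n_{\EventHorizonFuture_-}$ used in Proposition \ref{linear:prop:energy-estimates:spacetime-divergence-prop}, is absorbed into the notation $X\cdot n_{\EventHorizonFuture_-}$ on the right-hand side of the Corollary). Dividing by $h$ and letting $h\to 0$ therefore restricts the 3D flux integral to the 2-dimensional slice $\EventHorizonFuture_-\bigcap\Sigma_{\tStar}$, and an identical argument applies to $\CosmologicalHorizonFuture_+$. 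Assembling the four limits yields the first identity. The second identity follows by repeating the argument verbatim, starting instead from Proposition \ref{linear:prop:energy-estimates:spacetime-divergence-prop:exact-horizon}; the only difference is that on the exact horizons $\EventHorizonFuture$ and $\CosmologicalHorizonFuture$ the Killing null generators $\HorizonGen_{\EventHorizonFuture}$, $\HorizonGen_{\CosmologicalHorizonFuture}$ replace the timelike normals used for $\EventHorizonFuture_-$ and $\CosmologicalHorizonFuture_+$.

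The only genuinely delicate step is the factorization of the induced 3D measures on $\EventHorizonFuture_-$ and $\CosmologicalHorizonFuture_+$ (and on $\EventHorizonFuture$, $\CosmologicalHorizonFuture$ in the second identity) into $d\tStar$ times the induced 2D measure on a $\tStar$-slice. Because the horizons and their extensions are cylindrical in the Kerr-star coordinates, being $r$-constant hypersurfaces, this factorization is immediate once one checks that the chosen normalizations of $n_{\Horizon_{\pm}}$ and $\HorizonGen_{\Horizon}$ are consistent on both sides. No other part of the argument requires more than a smooth lapse–foliation bookkeeping.
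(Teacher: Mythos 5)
Your proposal is correct and is essentially the paper's own argument spelled out in detail: the paper's one-line proof invokes Proposition \ref{linear:prop:energy-estimates:spacetime-divergence-prop} together with the co-area formula, and your explicit ADM-lapse computation showing $d\mathrm{Vol}_{g_b}=\sqrt{\GInvdtdt_b}\,d\tStar\,dS_{\Sigma_{\tStar}}$ (with $N=\sqrt{\GInvdtdt_b}$ from $G_b(d\tStar,d\tStar)=-\GInvdtdt_b^{-1}$) is exactly the content of that co-area factorization, while the thin-slab/difference-quotient limit is the standard way of converting the integrated identity on $[\tStar,\tStar+h]$ into the differentiated, fixed-$\tStar$ form. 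The horizon-measure factorization you flag as the ``delicate step'' is likewise subsumed in the same co-area argument since the relevant hypersurfaces are $r$-constant and hence cylindrical in the $\tStar$-foliation.
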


\begin{proof}
  This follows immediately from Proposition
  \ref{linear:prop:energy-estimates:spacetime-divergence-prop} and the
  co-area formula. 
\end{proof}

In proving energy estimates, we will typically apply the
divergence formulas in Proposition
\ref{linear:prop:energy-estimates:spacetime-divergence-prop} and Corollary
\ref{linear:cor:div-them:spacelike} in the form
\begin{equation}
  \label{linear:eq:div-them:J-K-currents}
  \nabla_g\cdot \JCurrent{X,q,m}[{h}] = \Re\squareBrace*{
    (X+q)\overline{{h}} \cdot
      \ScalarWaveOp[g]{h} 
  }
  + \KCurrent{X,q,m}[{h}].
\end{equation}
We will refer to $X$ and $q$ as the \emph{vectorfield (multiplier)},
and the \emph{Lagrangian correction} respectively.

By combining Proposition
\ref{linear:prop:energy-estimates:spacetime-divergence-prop} and Corollary
\ref{linear:cor:div-them:spacelike} with the divergence relation
\eqref{linear:eq:div-them:J-K-currents}, we will have the divergence
relation over a spacetime domain $\DomainOfIntegration =
[t_1,t_2]_{\tStar}\times\Sigma$.
\begin{align}  
  \Re\int_{\DomainOfIntegration} (X + q)\overline{{h}}\cdot
  \ScalarWaveOp[g_b]{h}
  + \int_{\DomainOfIntegration} \KCurrent{X,q,m}[{h}]
  ={}& -\int_{\Sigma_{{t}_2}} \JCurrent{X,q,m}[{h}]\cdot n_{\Sigma_{{t}_2}} 
       + \int_{\Sigma_{{t}_1}} \JCurrent{X,q,m}[{h}]\cdot  n_{\Sigma_{{t}_1}}\notag  \\      
  &- \int_{\EventHorizonFuture_{t_1,t_2}} \JCurrent{X,q,m}[{h}]\cdot  n_{\EventHorizonFuture} 
    - \int_{\CosmologicalHorizonFuture_{t_1,t_2}} \JCurrent{X,q,m}[{h}]\cdot n_{\CosmologicalHorizonFuture}; \label{linear:eq:div-thm:spacetime}
\end{align}
and its equivalent formulation over a space-like slice
\begin{align}
  &\Re\int_{\Sigma_{\tStar}}(X+q)\overline{{h}}\cdot
    \ScalarWaveOp[g_b]{h} \,\sqrt{\GInvdtdt}
    + \int_{\Sigma_{\tStar}}\KCurrent{X,q,m}[{h}]\,\sqrt{\GInvdtdt}\notag \\
  ={}& - \p_{\tStar}\int_{\Sigma_{\tStar}}\JCurrent{X,q,m}[{h}] \cdot n_{\Sigma_{\tStar}}       
       - \int_{\EventHorizonFuture\bigcap\Sigma_{\tStar}}\JCurrent{X,q,m}[{h}]\cdot  n_{\EventHorizonFuture}
       - \int_{\CosmologicalHorizonFuture\bigcap\Sigma_{\tStar}} \JCurrent{X,q,m}[{h}] \cdot n_{\CosmologicalHorizonFuture}. \label{linear:eq:div-thm:spacelike}
\end{align}
Analogous statements as above hold for the case
$\DomainOfIntegration = [t_1,t_2]_{\tStar}\times\widetilde{\Sigma}$.

\subsection{The redshift vectorfields $\RedShiftN$}
\label{linear:sec:set-up:redshift}

In this subsection, we recall the construction of the redshift
vectorfield $\RedShiftN$.

\begin{prop}
  \label{linear:prop:redshift:N-construction}
  Let $b=(M, a)$, $|a|\ll M, \Lambda$ be the black hole parameters for
  a slowly-rotating \KdS{} black hole, and let $\Sigma$ be a
  $\tStar$-constant uniformly spacelike hypersurface. Moreover, fix
  some vectorfield $X$ which is tangent to both $\EventHorizonFuture$
  and $\CosmologicalHorizonFuture$, and some
  $\varepsilon_{\RedShiftN}>0$. Then, there exist a stationary time-like
  vectorfield $\RedShiftN$, positive constants $c_{\RedShiftN}$ and
  $C_{\RedShiftN}$, and parameters
  $r_{\EventHorizonFuture}< r_0<r_1<R_1<R_0<
  r_{\CosmologicalHorizonFuture}$ such that the following conditions
  are fulfilled. 
  \begin{enumerate}
  \item On $r\le r_0$ or $r>R_0$,
    \begin{equation}
      \label{linear:eq:redshift:DeformTen-redshift-control}
      \KCurrent{\RedShiftN, 0, 0}[h] \sqrt{\GInvdtdt}\ge
      \left(\SurfaceGravity_{\Horizon}-\varepsilon_{\RedShiftN}\right)
       \JCurrent{\RedShiftN, 0, 0}[h]\cdot n_{\Sigma}
      + \abs*{X h}^2,
    \end{equation}
    where $\Horizon = \EventHorizonFuture$ for $r\le r_0$, and
    $\Horizon = \CosmologicalHorizonFuture$ for $r>R_0$
  \item For $r_0\le r\le R_0$,
    \begin{equation}
      \label{linear:eq:redshift:DeformTen-redshift-weak-control}
      -\KCurrent{\RedShiftN, 0, 0}[{h}] \le C_{\RedShiftN}
      \JCurrent{\RedShiftN, 0, 0}[{h}]\cdot n_{\Sigma}.
    \end{equation}
  \item  For $r_1\le r\le R_1$,
    \begin{equation}
      \label{linear:eq:redshift:N=T}
      \RedShiftN=\KillT.
    \end{equation}
  \item  For $\Horizon = \EventHorizonFuture,
    \CosmologicalHorizonFuture$, there exists some $c>0$ such that 
    \begin{equation}
      \label{linear:eq:redshift:divergence}  
      \evalAt*{\nabla_{g_b}\cdot\RedShiftN}_{\Horizon} < -c.
    \end{equation}  
  \end{enumerate}  
\end{prop}
\begin{proof}
  See appendix \ref{linear:appendix:prop:redshift:N-construction}.
\end{proof}
\begin{remark}
  Observe that the $\RedShiftN$ constructed in Proposition
  \ref{linear:prop:redshift:N-construction} depends on the vectorfield $X$
  and the constant $\varepsilon_{\RedShiftN}$ chosen. In practice, we
  take $|X|$ sufficiently large and $\varepsilon_{\RedShiftN}$
  sufficiently small so $\RedShiftN$ is fixed throughout the remainder
  of the paper.
\end{remark}

The following technical lemma (see Lemma
3.11 in \cite{warnick_quasinormal_2015} for the anti-de-Sitter equivalent)
constructs the $\RedShiftK_i$ vectorfields which will be used to
define suitable Sobolev spaces in Section \ref{linear:sec:Sobolev-spaces}.

\begin{lemma}
  \label{linear:lemma:enhanced-redshift:Ka-construction}
  There exists a finite collection of vectorfields
  $\curlyBrace*{\RedShiftK_i}_{i=1}^N$
  with the following properties:
  \begin{enumerate}
  \item $\RedShiftK_i$ are stationary, smooth vectorfields on $\StaticRegionWithExtension$.
  \item Near $\EventHorizonFuture$, $\RedShiftK_1$ is future-oriented null with
    $g(\RedShiftK_1,\HorizonGen_{\EventHorizonFuture}) = -1$, and near
    $\CosmologicalHorizonFuture$, $\RedShiftK_1$ is future-oriented null with
    $g(\RedShiftK_1,\HorizonGen_{\CosmologicalHorizonFuture}) = -1$.
  \item $\RedShiftK_i$ are tangent to both $\EventHorizonFuture$ and
    $\CosmologicalHorizonFuture$ for $2\le i\le N$.
  \item If $X$ is any vectorfield supported in
    $\StaticRegionWithExtension$, then there exist smooth functions
    $x^i$, not necessarily unique, such that
    \begin{equation*}
      X =  \sum_i x^i\RedShiftK_i.
    \end{equation*}
  \item We have the following decomposition of the deformation
    tensor of $\RedShiftK_i$,
    \begin{equation}
      \label{linear:eq:enhanced-redshift:Deform-Tens-decomp}
      \DeformationTensor{\RedShiftK_i}=\sum_{j,k}f^{jk}_i \RedShiftK^\flat_j \otimes_s \RedShiftK^\flat_k,
    \end{equation}
    for stationary functions $f^{jk}_i = f^{kj}_i\in
    C^\infty_c(\StaticRegionWithExtension)$, and on
    $\Horizon \in \curlyBrace*{\EventHorizonFuture, \CosmologicalHorizonFuture}$, 
    \begin{align*}
      f^{11}_1 &= \SurfaceGravity_{\Horizon},\\
      f^{11}_i &=0,\quad i\neq 1.
    \end{align*}
  \end{enumerate}
\end{lemma}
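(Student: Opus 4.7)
The plan is to construct the frame $\curlyBrace*{\RedShiftK_i}_{i=1}^N$ explicitly from a handful of Killing and geometric vectorfields, then read off the deformation tensor decomposition via a direct computation on the horizons.

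For $\RedShiftK_1$, I observe that near each horizon $\Horizon \in \curlyBrace*{\EventHorizonFuture,\CosmologicalHorizonFuture}$ the generator $\HorizonGen_\Horizon$ is future-directed null and tangent to $\Horizon$, so the transverse null direction is unique up to positive scaling; I fix the scale by requiring $g(\RedShiftK_1,\HorizonGen_\Horizon)=-1$ throughout a small neighborhood of $\Horizon$. This produces a stationary future-directed null vectorfield near each horizon (essentially a multiple of $\p_r$ in the Eddington--Finkelstein coordinates of Section~\ref{linear:sec:KdS:metric}), which I patch together with a stationary cutoff to obtain a globally defined $\RedShiftK_1$ supported in a neighborhood of the two horizons, satisfying Property~2. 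For the remaining $\RedShiftK_i$, I take $\KillT$, $\KillPhi$, and the three generators of rotations of the unit sphere $\Sphere^2\subset\Real^3$ (all Killing, all with vanishing $r$-component in Kerr-star coordinates and therefore tangent to both horizons, establishing Property~3), together with finitely many compactly supported radial vectorfields localized to the interior of $\StaticRegionWithExtension$. Pointwise, this collection spans $T\StaticRegionWithExtension$: the rotation Killing vectorfields span the $\Sphere^2$-tangent directions, $\KillT$ supplies the $\p_{\tStar}$ direction, and the transverse direction is provided either by $\RedShiftK_1$ near the horizons or by a radial vectorfield in the interior. Property~4 then follows from a standard partition-of-unity argument: on each chart where four of the $\RedShiftK_i$ are linearly independent at the centerpoint, solve uniquely for the coefficients, and glue.

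For Property~5, since $\curlyBrace*{\RedShiftK_i}$ spans $T\Manifold$ pointwise, the musical duals $\curlyBrace*{\RedShiftK_j^\flat}$ span $T^*\Manifold$ pointwise, hence $\curlyBrace*{\RedShiftK_j^\flat \otimes_s \RedShiftK_k^\flat}$ spans $S^2T^*\Manifold$. Partition of unity then produces smooth stationary coefficients $f_i^{jk}=f_i^{kj}$ realizing the decomposition \eqref{linear:eq:enhanced-redshift:Deform-Tens-decomp}; and since the only $\RedShiftK_i$ with non-trivial deformation tensor (namely $\RedShiftK_1$ and the radial vectorfields) are compactly supported, multiplying by a cutoff that is $1$ on the union of their supports shows that the $f_i^{jk}$ can be taken in $C_c^\infty(\StaticRegionWithExtension)$. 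The core of the lemma is then verifying the boundary values $f_1^{11}\big|_\Horizon=\SurfaceGravity_\Horizon$ and $f_i^{11}\big|_\Horizon=0$ for $i\neq 1$. The key observation is that on the null hypersurface $\Horizon$ the generator $\HorizonGen_\Horizon$ is $g$-orthogonal to every tangent vector, so $g(\RedShiftK_i,\HorizonGen_\Horizon)\big|_\Horizon = 0$ for $i\geq 2$ by Property~3, while $g(\RedShiftK_1,\HorizonGen_\Horizon)=-1$ near $\Horizon$ by Property~2. Substituting into \eqref{linear:eq:enhanced-redshift:Deform-Tens-decomp} gives
\begin{equation*}
  \DeformationTensor{\RedShiftK_i}(\HorizonGen_\Horizon,\HorizonGen_\Horizon)\big|_\Horizon = f_i^{11}\big|_\Horizon .
\end{equation*}
On the other hand, $\DeformationTensor{\RedShiftK_i}(\HorizonGen_\Horizon,\HorizonGen_\Horizon) = g(\nabla_{\HorizonGen_\Horizon}\RedShiftK_i,\HorizonGen_\Horizon)$; differentiating the identity $g(\RedShiftK_i,\HorizonGen_\Horizon)=c_i$ (with $c_1=-1$, $c_i=0$ for $i\geq 2$, holding along $\HorizonGen_\Horizon$ since $\HorizonGen_\Horizon$ is tangent to $\Horizon$) and invoking the surface-gravity equation \eqref{linear:eq:surface-grav-equation} yields
\begin{equation*}
  g(\nabla_{\HorizonGen_\Horizon}\RedShiftK_i,\HorizonGen_\Horizon)\big|_\Horizon = -\SurfaceGravity_\Horizon c_i ,
\end{equation*}
giving exactly the claimed values.

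The main obstacle is that the decomposition in Property~5 is not unique, so the prescribed boundary values are not automatically respected by an arbitrary smooth preimage; I handle this by first making the canonical dual-frame choice on a small open neighborhood of each horizon (where a genuine dual frame extracted from $\curlyBrace*{\RedShiftK_i}$ uniquely determines the $f_i^{jk}$ and the computation above directly verifies the boundary values), and then extending the coefficients into the interior using a stationary partition of unity whose cutoffs equal $1$ near both horizons, thereby preserving the boundary values while maintaining the global validity of the decomposition throughout $\StaticRegionWithExtension$.
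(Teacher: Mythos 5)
Your core computation for the boundary values $f_i^{11}\vert_\Horizon$ is correct and is a nicely explicit version of what the paper handles more tersely: writing $\DeformationTensor{\RedShiftK_i}(\HorizonGen,\HorizonGen) = g(\nabla_{\HorizonGen}\RedShiftK_i, \HorizonGen)$, differentiating $g(\RedShiftK_i,\HorizonGen) = c_i$ along $\HorizonGen$, and invoking \eqref{linear:eq:surface-grav-equation} correctly yields $f_1^{11} = \SurfaceGravity_{\Horizon}$ and $f_i^{11}=0$ for $i\neq 1$. The construction of $\RedShiftK_1$ also matches the paper's.

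There is, however, a genuine gap in the choice of the remaining $\RedShiftK_i$. You take $\KillT$, $\KillPhi$, and the three generators of $SO(3)$ acting on $\Sphere^2$, and then assert that ``the only $\RedShiftK_i$ with non-trivial deformation tensor (namely $\RedShiftK_1$ and the radial vectorfields) are compactly supported.'' This is false for $a\neq 0$: the \KdS{} isometry group is only $\Real\times SO(2)$, so the two rotation generators transverse to the axis of $\AngularMomentum$ are stationary but \emph{not} Killing, and their deformation tensors are $O(a)$ and globally supported on $\StaticRegionWithExtension$. Consequently, the argument for being able to take $f_i^{jk}\in C_c^\infty(\StaticRegionWithExtension)$ does not go through for those $i$. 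The paper avoids this entirely by building the $\RedShiftK_i$, $i\ge 2$, as cut-off coordinate vectorfields $\chi\p_{x^\mu}$ on finitely many stationary tubular patches (using compactness of $\overline{\Sigma}$ to extract a finite family), so that every vectorfield in the frame other than $\RedShiftK_1$ is itself compactly supported, and its deformation tensor automatically is too. The fix to your argument is straightforward — multiply the rotation generators by a fixed cutoff equal to $1$ on $\widetilde{\Sigma}$ and vanishing near the artificial boundaries $r=r_{b_0,\EventHorizonFuture}\mp\varepsilon_{\StaticRegionWithExtension}$, $r=r_{b_0,\CosmologicalHorizonFuture}\pm\varepsilon_{\StaticRegionWithExtension}$, which preserves the tangency and stationarity and does not affect the computations on the horizons — but as written the compact-support claim rests on the incorrect assertion that the rotation generators are Killing.
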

\begin{proof}
  See appendix
  \ref{linear:appendix:lemma:enhanced-redshift:Ka-construction}. 
\end{proof}

\subsection{The almost Killing timelike vectorfield $\TFixer$}
\label{linear:sec:setup:TFixer}

There are no globally timelike Killing vectorfields on $a\neq 0$
\KdS{} backgrounds. However, we can define a vectorfield which is
Killing outside of two disconnected components that avoid the horizons
as well as a neighborhood of $r=3M$, which has an $O(a)$ deformation
tensor and is timelike up to the horizons, where it becomes null. 

\begin{lemma}
  \label{linear:lemma:T-Fixer:construction}
  There exists a function $\tilde{\chi}(r) \in
  C^\infty(\StaticRegionWithExtension)$ such that
  \begin{equation}
    \label{linear:eq:TFixer:def}
    \TFixer = \KillT + a\tilde{\chi}(r)\KillPhi, 
  \end{equation}
  satisfies the properties
  \begin{enumerate}
  \item $
    \supp \tilde{\chi}(r) \subset [r_{\EventHorizonFuture} - \varepsilon_{\StaticRegionWithExtension},
    r_0] \bigcup
    [R_0, r_{\CosmologicalHorizonFuture}+ \varepsilon_{\StaticRegionWithExtension}]$;
  \item is timelike on
  $\StaticRegion\backslash\{\EventHorizonFuture\bigcup\CosmologicalHorizonFuture\}$,
  and exactly null on both $\EventHorizonFuture$ and
  $\CosmologicalHorizonFuture$; and
  \item the deformation
  tensor of $\TFixer$ is given by
  \begin{equation}
    \label{linear:eq:TFixer:DeformTensor}
    \DeformationTensor{\TFixer} = a\,d\tilde{\chi}\otimes \KillPhi^\flat,
  \end{equation}
  where we denote by $X^\flat$ the canonical one-form for any
  vectorfield $X$. 
  \end{enumerate}
\end{lemma}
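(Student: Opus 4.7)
The plan is to construct $\tilde\chi$ as a smooth $r$-dependent cutoff that equals $(r_{\Horizon}^2+a^2)^{-1}$ near each horizon $\Horizon$, so that $\TFixer$ coincides there with the Killing horizon generator $\HorizonGen_{\Horizon}$, and that vanishes on $[r_0,R_0]$. Concretely, I would take nonnegative $\chi_-,\chi_+\in C^\infty(\StaticRegionWithExtension;[0,1])$ depending only on $r$, with $\chi_-\equiv 1$ on a neighborhood of $[r_{\EventHorizonFuture}-\varepsilon_{\StaticRegionWithExtension},r_\star^-]$ and supported in $[r_{\EventHorizonFuture}-\varepsilon_{\StaticRegionWithExtension},r_0]$, where $r_\star^-$ is any radius chosen strictly outside $\Ergoregion_{\EventHorizonFuture}$ yet strictly below $r_0$, and $\chi_+$ the mirror construction near $\CosmologicalHorizonFuture$. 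Then define
\begin{equation*}
  \tilde\chi(r) = \frac{\chi_-(r)}{r_{\EventHorizonFuture}^2+a^2} + \frac{\chi_+(r)}{r_{\CosmologicalHorizonFuture}^2+a^2}.
\end{equation*}
The support condition (1) is then immediate.

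For the deformation tensor identity (3), since $\KillT$ and $\KillPhi$ are Killing and $a\tilde\chi(r)$ is a scalar function, symmetrizing $\nabla_\mu\TFixer_\nu$ annihilates the Killing contributions and leaves only the terms where the derivative lands on $\tilde\chi$, yielding
\begin{equation*}
  \DeformationTensor{\TFixer}{_{\mu\nu}} = \tfrac{a}{2}\bigl(\p_\mu\tilde\chi\,\KillPhi_\nu + \p_\nu\tilde\chi\,\KillPhi_\mu\bigr),
\end{equation*}
which, modulo the symmetrization convention, is exactly the stated $a\,d\tilde\chi\otimes\KillPhi^\flat$.

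For timelikeness (2), I would split $\StaticRegion$ into three region types. On $[r_0,R_0]$, $\tilde\chi\equiv 0$ so $\TFixer=\KillT$, which is strictly timelike because in the slow-rotation regime both ergoregions lie in arbitrarily small neighborhoods of the horizons, entirely outside $[r_0,R_0]$. On the plateaus where $\chi_\pm\equiv 1$ (and which, by construction, contain the corresponding ergoregion and its horizon), $a\tilde\chi$ reduces to $\HorizonGenPhi_{\Horizon}$ identically, so $\TFixer=\HorizonGen_{\Horizon}$: null exactly on $\Horizon$ and timelike elsewhere, by the defining property of the horizon Killing generator. On the two transition intervals where $\chi_\pm$ decreases from $1$ to $0$, $\KillT$ is already strictly timelike (with $g_b(\KillT,\KillT)$ bounded away from zero since we sit outside the ergoregion), and $\TFixer$ differs from $\KillT$ by an $O(a)$ term, hence remains timelike for $|a|$ sufficiently small.

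The main obstacle is the matching of these three region types. The plateau $\{\chi_\pm = 1\}$ must cover the corresponding ergoregion so that $\TFixer$ can agree with $\HorizonGen_{\Horizon}$ throughout it, yet $\chi_\pm$ must still decay to $0$ before reaching $r_0$ (resp.\ $R_0$), and $r_0, R_0$ are fixed by the redshift construction of Proposition \ref{linear:prop:redshift:N-construction} independently of $a$. This is where the slow-rotation hypothesis $|a|\ll M,\Lambda$ is essential: by smooth dependence of $\Ergoregion_{\Horizon}$ on $b$, collapsing to $\Horizon$ itself at $b=b_0$, one guarantees for $|a|$ sufficiently small uniform room inside $(r_{\EventHorizonFuture},r_0)$ to fit both the plateau and the transition of $\chi_-$, and analogously on the cosmological side.
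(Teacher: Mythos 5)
Your proposal is correct and takes essentially the same route as the paper: the paper likewise constructs $\tilde\chi$ as a smooth $r$-cutoff equal to $\HorizonGenPhi_{\Horizon}/a = (r_{\Horizon}^2+a^2)^{-1}$ on plateaus containing each ergoregion and vanishing on $[r_-,r_+]\supset[r_0,R_0]$, with the only cosmetic difference being that the paper identifies the plateau value by solving $g_b(\TFixer,\TFixer)=0$ via the quadratic formula (and observing the discriminant vanishes exactly on the horizons), whereas you start from the known fact that $\HorizonGen_{\Horizon}$ is the unique null combination there. Your more explicit treatment of the deformation tensor and the three-region timelikeness check fills in steps the paper leaves as "clear."
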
 

\begin{proof}
  See appendix \ref{linear:appendix:lemma:T-Fixer:construction}.
\end{proof}

\subsection{Sobolev spaces}
\label{linear:sec:Sobolev-spaces}

In this section, we some useful Sobolev spaces that will feature in
what follows. 
\begin{definition}
  Let $h:\StaticRegionWithExtension\to \Complex^D$. Then denoting by
  $\DomainOfIntegration$ a subset of $\StaticRegionWithExtension$, we
  define the regularity spaces
  \begin{equation*}
    L^2(\DomainOfIntegration) := \curlyBrace*{h: \int_{\DomainOfIntegration}\abs*{h}^2 <\infty}, \qquad
    H^k(\DomainOfIntegration) := \curlyBrace*{h: \RedShiftK^\alpha h\in L^2(\DomainOfIntegration), |\alpha|\le k}. 
  \end{equation*}
\end{definition}

Let us also define two $L^2$ inner products on spacelike
slices. 
\begin{definition}
  We define the $\InducedLTwo$ and $\LTwo$ inner products on the spacelike
  slice $\Sigma_{\tStar}$ by
  \begin{equation*}
    \begin{split}
      \bangle{{h}_1, {h}_2}_{\InducedLTwo(\Sigma_{\tStar})}
      &=  \int_{\Sigma_{\tStar}} {h}_1 \cdot \overline{h}_2\, \sqrt{\GInvdtdt_b},\\
      \bangle{{h}_1, {h}_2}_{\LTwo(\Sigma_{\tStar})}
      &= \int_{\Sigma_{\tStar}}{h}_1\cdot\overline{h}_2,
    \end{split}
  \end{equation*}
  where $\GInvdtdt_b$ is as defined in \eqref{linear:eq:GInvdtdt-def}. 
\end{definition}
\begin{remark}
   Observe that due to \eqref{linear:eq:GInvdtdt-prop}, the two norms are
  equivalent to each other. Furthermore, despite the dependence on the
  choice of $g_b$ reflected in the presence of $\GInvdtdt_b$ in the
  definition of $\InducedLTwo(\Sigma)$, the $\InducedLTwo(\Sigma)$
  norm defined for differing slowly-rotating \KdS{} metrics $g_b$ are
  all equivalent to each other.
\end{remark}

We likewise have the following higher-regularity Sobolev norms.
\begin{definition}
  \label{linear:def:Hk-def}
  For any $u:\Sigma\to\Complex^D$, let
  $\upsilon:\StaticRegionWithExtension\to \Complex^D$ be the unique
  lifting satisfying
  \begin{equation}
    \label{linear:eq:stationary-extension-def}
    \upsilon\vert_{\Sigma} = u,\quad \KillT \upsilon = 0.
  \end{equation}  
  Then define the regularity spaces $H^k(\Sigma), \InducedHk{k}(\Sigma)$ by:
  \begin{equation*}
    \begin{split}
      H^k(\Sigma)&:= \curlyBrace*{u:\left.\RedShiftK^\alpha
          \upsilon\right\vert_{\Sigma} \in \LTwo(\Sigma), |\alpha|\le
        k}, \\
      \InducedHk{k}(\Sigma)&:=\curlyBrace*{u:\left.\RedShiftK^\alpha
          \upsilon\right\vert_{\Sigma} \in \InducedLTwo(\Sigma), |\alpha|\le
        k}, 
    \end{split}
  \end{equation*}
  where $\alpha$ is a multi-index and $\RedShiftK_i$ are vectorfields satisfying
  the requirements of Lemma \ref{linear:lemma:enhanced-redshift:Ka-construction}.

  With the same $\RedShiftK_i$, we abuse notation to define the norm $\HkWithT{k}(\Sigma_{\tStar})$ by:
  \begin{equation*}
    \norm*{h}_{\HkWithT{k}(\Sigma_{\tStar})}^2 := \sum_{\abs*{\alpha}\le k} \norm*{\RedShiftK^\alpha h}_{\InducedLTwo(\Sigma_{\tStar})}^2, 
  \end{equation*}
  where $h:\StaticRegionWithExtension\to\Complex^D$. 
\end{definition}
\begin{remark}  
  Observe that different choices of the
  family $\curlyBrace{\RedShiftK_i}_{i=1}^N$ will result in different, though
  equivalent, $H^k(\Sigma), \InducedHk{k}(\Sigma),
  \HkWithT{k}(\Sigma)$ norms. 
\end{remark}
\begin{remark}
  \label{linear:remark:Hk-equivalence}
  At first glance, it may appear that since the construction
  $\{\RedShiftK_i[g_b]\}$ family in Lemma
  \ref{linear:lemma:enhanced-redshift:Ka-construction} depends on the chosen
  \KdS{} metric, the $H^k[g_b](\DomainOfIntegration)$ norms defined in
  Definition \ref{linear:def:Hk-def} are also dependent on the chosen \KdS{}
  metric. However, recall that by its construction in Proposition
  \ref{linear:prop:redshift:N-construction}, $\RedShiftN$ is uniformly
  timelike on $\StaticRegionWithExtension$ and transverse to both the
  event horizon and cosmological horizon for all slowly-rotating
  \KdS{} black hole backgrounds. Combined with Lemma
  \ref{linear:lemma:enhanced-redshift:Ka-construction}, we have that in fact
  for any slowly-rotating \KdS{} metric $g_b$, the
  $H^k[g_b](\DomainOfIntegration)$ norm is equivalent to
  \begin{equation*}
    H^k_{\star}(\DomainOfIntegration):=\curlyBrace*{u: Z^\alpha u\in L^2(\DomainOfIntegration), |\alpha|\le k, Z\in \{\RedShiftN, \curlyBrace*{\RedShiftK_i[b_0]}_{i=2}^{N}\}}.
  \end{equation*}
  A similar equivalence holds for the $H^k(\Sigma)$,
  $\InducedHk{k}(\Sigma)$, and $\HkWithT{k}(\Sigma)$ norms. 
\end{remark}

We also define the following Laplace-transformed Sobolev norms:
\begin{definition}
  \label{linear:def:Hk-sigma}
  Let $u:\Sigma\to\Complex^D$. Then, we define the
  \emph{Laplace-transformed Sobolev norms} by:
  \begin{align*}
    \norm{u}_{\InducedHk{1}_\sigma(\Sigma)}^2 &= \norm{u}_{\InducedHk{1}(\Sigma)}^2 + \norm{\sigma u}_{\InducedLTwo(\Sigma)}^2,\qquad
    \norm{u}_{\InducedHk{k}_\sigma(\Sigma)}^2 = \norm{u}_{\InducedHk{k}(\Sigma)}^2 + \norm{\sigma u}_{\InducedHk{k-1}_{\sigma}(\Sigma)}^2,
  \end{align*}
  so that
  $\InducedHk{k}_\sigma(\Sigma) = \InducedHk{k}(\Sigma) \bigcap
  \sigma^{-1}\InducedHk{k-1}_\sigma(\Sigma)$.
\end{definition}
\begin{remark}
  We define the Laplace-transformed vectorfields
  $\widehat{\RedShiftK}(\sigma)u=
  \left. e^{\ImagUnit\sigma\tStar}\RedShiftK(e^{-\ImagUnit\sigma\tStar}\upsilon)
  \right\vert_{\Sigma_{\tStar}}$, where
  $\upsilon$ is the stationary extension of $u$ defined in
  \eqref{linear:eq:stationary-extension-def}.  We can also characterize
  \begin{equation*}
    \norm{u}_{\InducedHk{k}_\sigma(\Sigma)}^2 = \sum_{|\alpha|\le k}\norm*{\widehat{\RedShiftK}^\alpha(\sigma)u}_{\InducedLTwo(\Sigma)}^2.
  \end{equation*}
\end{remark}

We use the vectorfields defined in Theorem
\ref{linear:lemma:enhanced-redshift:Ka-construction} to define the following
higher-regularity Sobolev spaces.
\begin{definition}
  Suppose $\LinearOp$ is any strongly hyperbolic operator on a \KdS{}
  black hole. Given some
  $(\psi_0,\psi_1)\in H^k_{\local}(\Sigma, \Complex^D)\times
  H^{k-1}_\local(\Sigma, \Complex^D)$, letting
  $\curlyBrace{\RedShiftK_i}_{i=1}^N$ be as constructed in Lemma
  \ref{linear:lemma:enhanced-redshift:Ka-construction}. We then define
  $\LSolHk{k}(\Sigma)$ to be the space consisting of $(\psi_0,\psi_1)$ such
  that
  \begin{equation*}
    \norm{(\psi_0, \psi_1)}^2_{\LSolHk{k}(\Sigma)} :=
    \norm*{\psi_0}^2_{\Hk{k}(\Sigma)} + \norm{\psi_1}^2_{\Hk{k-1}(\Sigma)}.
  \end{equation*}
\end{definition}

Finally, we also define the following weighted Sobolev spaces.
\begin{definition}
  For $\alpha \in \Real$, $h:
  \StaticRegionWithExtension\to\Complex^D$, define the \emph{weighted Sobolev norm} 
  \begin{align*}
    \norm{h}_{H^{k,\alpha}(\StaticRegionWithExtension)}^2:=
    \int_0^\infty e^{2\alpha\tStar} \norm*{h}_{\HkWithT{k}(\Sigma_{\tStar})}^2\,d\tStar.
  \end{align*}
\end{definition}

\subsection{Strongly hyperbolic operators}

\begin{definition}
  \label{linear:def:strongly-hyperbolic-operator}
  Given a linear second-order differential operator $\LinearOp$,
  acting on complex matrix functions
  \begin{equation*}
    h: \StaticRegionWithExtension\to\Complex^D,
  \end{equation*}
  we call $\LinearOp$ a \emph{strongly hyperbolic operator} on a
  background Lorentzian metric $g$ if $\LinearOp$ can be expressed as
  \begin{equation}
    \label{linear:eq:strongly-hyperbolic-operator}
    \LinearOp{h} = \ScalarWaveOp[g]{h} + \SubPOp[{h}] + \PotentialOp{h},
  \end{equation}
  where $\ScalarWaveOp[g]$ denotes the scalar geometric wave operator
  $\nabla^\alpha\p_\alpha$,
  \begin{equation}
    \label{linear:eq:strongly-hyperbolic-operator-s-def}
    \SubPOp = S^\alpha\p_\alpha
  \end{equation}
  is a smooth vectorfield-valued matrix, and $\PotentialOp$ is a
  smooth matrix potential. We will often refer to $\SubPOp$ as the
  \emph{subprincipal operator} of $\LinearOp$, and
  $\PotentialOp$ as the \emph{potential operator} of $\LinearOp$. 

  As we will show in Lemma \ref{linear:lemma:EVE:GHC-quasilinear}, the
  gauged linearized Einstein operator is an example of a strongly
  hyperbolic operator.
\end{definition}

Given some strongly hyperbolic $\LinearOp$ on background metric $g$,
we define the following quantities,
\begin{equation}
  \label{linear:eq:SHorizonControl:def}
  \begin{split}
    \SHorizonControl{\LinearOp}[\EventHorizonFuture]
    &= \sup_{\EventHorizonFuture}\curlyBrace*{
      -g\left(\HorizonGen_{\EventHorizonFuture}, \Re[\overline{\xi}\cdot \SubPOp\xi]\right) : \xi\in \Complex^N,\abs{\xi}=1},\\
    \SHorizonControl{\LinearOp}[\CosmologicalHorizonFuture]
    &= \sup_{\CosmologicalHorizonFuture}\curlyBrace*{
      -g\left(\HorizonGen_{\CosmologicalHorizonFuture}, \Re[\overline{\xi}\cdot \SubPOp\xi]\right) : \xi\in \Complex^N,\abs{\xi}=1},\\
    \SHorizonControl{\LinearOp}^*[\EventHorizonFuture]
    &= \inf_{\EventHorizonFuture}\curlyBrace*{
      -g\left(\HorizonGen_{\EventHorizonFuture}, \Re[\overline{\xi}\cdot \SubPOp\xi]\right) : \xi\in \Complex^N,\abs{\xi}=1},\\
    \SHorizonControl{\LinearOp}^*[\CosmologicalHorizonFuture]
    &= \inf_{\CosmologicalHorizonFuture}\curlyBrace*{
      -g\left(\HorizonGen_{\CosmologicalHorizonFuture}, \Re[\overline{\xi}\cdot \SubPOp\xi]\right) : \xi\in \Complex^N,\abs{\xi}=1},
  \end{split}
\end{equation}
where we use the shorthand notation
\begin{equation*}
  g\left(\HorizonGen, \Re[\overline{\xi}\cdot \SubPOp\xi]\right)
  := g_{\alpha\beta}\HorizonGen^\alpha\Re[\overline{\xi}\cdot S^\beta\xi].
\end{equation*}
The quantities in \eqref{linear:eq:SHorizonControl:def} will play a critical
role in our formulation of quasinormal modes (see Section
\ref{linear:sec:QNM}).

\begin{definition}
  Given a strongly hyperbolic operator $\LinearOp$ of the form in equation
  \eqref{linear:eq:strongly-hyperbolic-operator}, we can
  define its adjoint\footnote{For a matrix
    $M_{ab}$, we use the convention that its adjoint is
    \begin{equation*}
      M^*_{ab} = M_{ba}.
    \end{equation*}
  }  $\LinearOp^\dagger$ as
  \begin{equation}
    \label{linear:eq:L-dagger-def}
    \LinearOp^\dagger {h} = \ScalarWaveOp[g]{h} - \SubPOp^\dagger[{h}] + \PotentialOp^*{h} - (\nabla_{g}\cdot\SubPOp^\dagger){h},
  \end{equation}
  where $\SubPOp^\dagger$ is a vectorfield-valued matrix such that
  \begin{equation*}
    \SubPOp^{\dagger} = (S^\mu)^*\p_\mu,
  \end{equation*}
  with $S^\mu$ as defined in
  \eqref{linear:eq:strongly-hyperbolic-operator-s-def}, and where we 
  understand the adjoint of a matrix to be its transpose.
\end{definition}
It is immediately clear that $\LinearOp^\dagger$ is itself a strongly
hyperbolic operator on $\StaticRegionWithExtension$. Moreover, we have
the following adjoint relation.
\begin{corollary}
  \label{linear:corollary:L-dagger-adjoint}
  Then for ${h}_1, {h}_2\in C^2(\StaticRegionWithExtension,\Complex^D)$:
  \begin{equation*}
    \begin{split}
      \bangle{{h}_1,\LinearOp {h}_2}_{\InducedLTwo(\Sigma_{\tStar})} -
      \bangle{\LinearOp^\dagger {h}_1, {h}_2}_{\InducedLTwo(\Sigma_{\tStar})}
      ={}&
      \p_{\tStar}\int_{\Sigma_{\tStar}} K\cdot n_\Sigma
      + \int_{\EventHorizonFuture\bigcap\Sigma_{\tStar}}K\cdot n_{\EventHorizonFuture} 
      + \int_{\CosmologicalHorizonFuture\bigcap\Sigma_{\tStar}}K\cdot n_{\CosmologicalHorizonFuture} ,
    \end{split}
  \end{equation*}
  where
  \begin{equation*}
    K_\mu = \overline{{h}}_1\cdot  \nabla_\mu {h}_2
    - \nabla_\mu\overline {h}_1\cdot {h}_2
    - \overline{{h}}_1 \cdot S_\mu {h}_2. 
  \end{equation*}
\end{corollary}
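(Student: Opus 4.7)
The plan is to prove this as a direct consequence of the integration-by-parts identity applied pointwise, followed by the divergence theorem on a spacelike slice (Corollary \ref{linear:cor:div-them:spacelike}). Everything is routine once one verifies that $\overline{h}_1 \cdot \LinearOp h_2 - \overline{\LinearOp^\dagger h_1} \cdot h_2 = \nabla^\mu K_\mu$ with $K_\mu$ as in the statement; the formula for $\LinearOp^\dagger$ in \eqref{linear:eq:L-dagger-def} is in fact \emph{designed} so that this identity holds.

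First I would split $\LinearOp = \ScalarWaveOp[g] + \SubPOp + \PotentialOp$ and compute each piece separately. For the principal part, the classical Green-type identity
\begin{equation*}
\overline{h}_1 \cdot \ScalarWaveOp[g] h_2 \;-\; \overline{\ScalarWaveOp[g] h_1} \cdot h_2 \;=\; \nabla^\mu\bigl(\overline{h}_1 \cdot \nabla_\mu h_2 \;-\; \nabla_\mu \overline{h}_1 \cdot h_2 \bigr)
\end{equation*}
is immediate from the definition $\ScalarWaveOp[g] = \nabla^\alpha\p_\alpha$ and the product rule, since each component of $h_1,h_2$ behaves as a scalar with respect to $\nabla$. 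For the potential piece, the definition $\PotentialOp^*_{ab}=\PotentialOp_{ba}$ gives $\overline{h}_1 \cdot \PotentialOp h_2 - \overline{\PotentialOp^* h_1} \cdot h_2 = 0$ pointwise. For the subprincipal piece, one expands
\begin{equation*}
\overline{h}_1 \cdot \SubPOp h_2 = \overline{h}_1 \cdot S^\mu\p_\mu h_2 = \nabla_\mu\bigl(\overline{h}_1 \cdot S^\mu h_2\bigr) - \nabla_\mu \overline{h}_1 \cdot S^\mu h_2 - \overline{h}_1 \cdot (\nabla_\mu S^\mu) h_2,
\end{equation*}
and then uses the transpose convention to rewrite $\nabla_\mu \overline{h}_1 \cdot S^\mu h_2 = \overline{\SubPOp^\dagger h_1} \cdot h_2$ and $\overline{h}_1 \cdot (\nabla_\mu S^\mu) h_2 = \overline{(\nabla_g \cdot \SubPOp^\dagger) h_1} \cdot h_2$. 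Subtracting $\overline{\LinearOp^\dagger h_1}\cdot h_2$ using \eqref{linear:eq:L-dagger-def} then yields
\begin{equation*}
\overline{h}_1 \cdot \LinearOp h_2 - \overline{\LinearOp^\dagger h_1} \cdot h_2 = \nabla^\mu K_\mu,
\end{equation*}
with $K_\mu$ exactly as in the statement (the $-\overline{h}_1\cdot S_\mu h_2$ term arising from the divergence of $\overline{h}_1 \cdot S^\mu h_2$, and the $\nabla_\mu S^\mu$-contribution cancelling against the $-(\nabla_g\cdot \SubPOp^\dagger) h_1$ term in $\LinearOp^\dagger$).

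Second, I would integrate this pointwise identity against $\sqrt{\GInvdtdt_b}$ over $\Sigma_{\tStar}$ and apply Corollary \ref{linear:cor:div-them:spacelike} with the vectorfield $X^\mu = K^\mu$:
\begin{equation*}
-\int_{\Sigma_{\tStar}} \nabla_g\cdot K\, \sqrt{\GInvdtdt_b} = \p_{\tStar}\int_{\Sigma_{\tStar}} K\cdot n_{\Sigma_{\tStar}} + \int_{\CosmologicalHorizonFuture_+\cap \Sigma_{\tStar}} K\cdot n_{\CosmologicalHorizonFuture_+} + \int_{\EventHorizonFuture_-\cap\Sigma_{\tStar}} K\cdot n_{\EventHorizonFuture_-}.
\end{equation*}
Rearranging, and recognising the left-hand side as $\bangle{\LinearOp^\dagger h_1, h_2}_{\InducedLTwo(\Sigma_{\tStar})} - \bangle{h_1, \LinearOp h_2}_{\InducedLTwo(\Sigma_{\tStar})}$ (up to the sign convention on the inner product), gives the claim.

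The only step requiring any care is the subprincipal calculation: one must track carefully how the transpose $(S^\mu)^T$ shuffles across the pairing and how the $(\nabla_g\cdot \SubPOp^\dagger)$ term in the definition of $\LinearOp^\dagger$ is precisely the factor needed to absorb $\nabla_\mu S^\mu$. This is not conceptually difficult but is the one bookkeeping point where a sign error would propagate. Once this identity is verified, the rest is a direct invocation of the divergence theorem already established in Corollary \ref{linear:cor:div-them:spacelike}.
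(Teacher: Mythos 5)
Your argument is correct and matches the paper's approach: the paper's entire proof is ``The proof follows directly from applying Corollary \ref{linear:cor:div-them:spacelike} with the vectorfield $K_\mu$,'' which is precisely your step two, and your step one (the pointwise Green-type identity and the bookkeeping of the $\nabla_\mu S^\mu$ term against $-(\nabla_g\cdot\SubPOp^\dagger)$ in \eqref{linear:eq:L-dagger-def}) is exactly the verification the paper leaves implicit. The only caveat is that the transpose/conjugate shuffle in your subprincipal calculation implicitly uses that the coefficient matrices $S^\mu$ are real, which is consistent with the paper's convention that ``adjoint'' means transpose and holds for the operators considered.
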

\begin{proof}
  The proof follows directly from applying Corollary
  \ref{linear:cor:div-them:spacelike} with the vectorfield $K_\mu$. 
\end{proof}

\section{Einstein's equations}
\label{linear:sec:EVE}

In this section, we introduce the generalized harmonic coordinates and
the hyperbolic initial value problem formulation of Einstein's
equations as a system of evolution equations.  Then we detail some key
properties of the linearized Einstein operator that will be crucial in
the ensuing stability analysis.

\subsection{Harmonic gauge}
\label{linear:sec:GHC-Cauchy-problem}

Recall that Einstein's vacuum equations with a cosmological
constant $\Lambda$ for $g$, a $(-, +, +, +)$ Lorentzian metric  on a
smooth manifold $\mathcal{M}$, are
\begin{equation}
  \label{linear:eq:EVE:Full}
  \Ric(g) - \Lambda g= 0.
\end{equation}
For any globally hyperbolic $(\mathcal{M}, g)$ solution to
(\ref{linear:eq:EVE:Full}), and spacelike hypersurface
$\Sigma_0\subset\mathcal{M}$, the induced Riemannian metric $\InducedMetric$
on $\Sigma_0$ and the second fundamental form $k(X,Y)$ of $\Sigma_0$
satisfy the \emph{constraint equations}
\begin{equation}
  \label{linear:eq:EVE:constraint-eqns}
  \begin{split}
    R(\InducedMetric) + (\Trace_{\InducedMetric}k)^2 - \abs{k}_{\InducedMetric}^2 &= - 2\Lambda,\\
    \nabla_{\InducedMetric} \cdot k - d \Trace_{\InducedMetric} k &= 0,
  \end{split}
\end{equation}
where $R(\InducedMetric)$ is the scalar curvature of $\InducedMetric$,
and $\nabla_{\InducedMetric}\cdot k$ denotes the divergence of $k$ with
respect to the covariant derivative of $\InducedMetric$.  The Cauchy
problem for Einstein's equations then asks, given an initial data set
consisting of the triple $(\Sigma_0, \InducedMetric, k)$, where
$\InducedMetric$ is a Riemannian metric on the smooth $3$-manifold
$\Sigma_0$ and $k$ is a symmetric 2-tensor on $\Sigma_0$ such that
$(\InducedMetric, k)$ satisfy \eqref{linear:eq:EVE:constraint-eqns}, for a
Lorentzian 4-manifold $(\mathcal{M}, g)$ and an embedding
$\Sigma_0\hookrightarrow \mathcal{M}$ such that $\InducedMetric$ is
the induced metric on $\Sigma_0$, and $k$ is the second fundamental
form of $\Sigma_0$ in $\mathcal{M}$. We denote initial data triplets
with $(\InducedMetric , k)$ satisfying the constraint equations
\eqref{linear:eq:EVE:constraint-eqns} to be \textit{admissible} initial data
triplets.

It is well-known that (\ref{linear:eq:EVE:Full}) is a quasilinear
second-order partial differential system of equations for the metric
coefficients $g_{\mu\nu}$. In local coordinates, we can write
\eqref{linear:eq:EVE:Full} as
\begin{equation*}
  \Ric(g)_{\mu\nu}
  = - \frac{1}{2}g^{\alpha\beta}\p_{\alpha}\p_{\beta}g_{\mu\nu}
  + \nabla_{(\mu}\Gamma(g)_{\nu)} + \mathcal{N}(g, \p g),\qquad
  \Gamma(g)^\mu:= g^{\alpha\beta}\ChristoffelTypeTwo[g]{\mu}{\alpha\beta},
\end{equation*}
where the nonlinear term $\mathcal{N}(g, \p g)$ involves at most one
derivative of $g$. As a result of the presence of the
$\nabla_{(\mu}\Gamma(g)_{\nu)}$ term EVE lacks any useful
structure. However, as was first demonstrated by Choquet-Bruhat, this
problem can be overcome by using the general covariance of Einstein's
equations and choosing \textit{wave coordinates}. With this choice
Einstein's equations become a quasilinear hyperbolic system of
equations
\cite{choquet-bruhat_theoreme_1952,choquet-bruhat_global_1969}.

\begin{definition}
  Define the \emph{constraint operator}
  \begin{equation*}
    \Constraint(g, g^0)_\mu =
    g_{\mu\chi}g^{\nu\lambda}\left(\ChristoffelTypeTwo[g]{\chi}{\nu\lambda}
      - \ChristoffelTypeTwo[g^0]{\chi}{\nu\lambda} \right),
  \end{equation*}
  where $g^0$ is a fixed background metric which solves Einstein's
  equations\footnote{In our case, we 
  typically take $g^0$ to be the background metric around which we 
  linearize the equations.}.
  Then we say that a Lorentzian metric $g$ satisfies the
  \emph{harmonic coordinate condition} (with respect to $g^0$) if
  \begin{equation}
    \label{linear:eq:GHC-condition:Christoffel}
    g_{\mu\chi}g^{\nu\lambda}\left(\ChristoffelTypeTwo[g]{\chi}{\nu\lambda} - \ChristoffelTypeTwo[g^0]{\chi}{\nu\lambda} \right) = 0.
  \end{equation}
\end{definition}
\begin{remark}
  Instead of $0$ on the right-hand side, if we instead pick some
  one-form $\bH(g)$ depending on $g$ but not its derivatives, we would
  obtain \emph{generalized harmonic coordinates}.
\end{remark}

Crucial to the utility of harmonic coordinates is that
they are propagated by a hyperbolic operator.
\begin{definition}
  Given a smooth one-form $\psi$, we define the
  \emph{constraint propagation operator},
  \begin{equation}
    \label{linear:eq:constraint-propagation-op:def}
    \ConstraintPropagationOp_g\psi :=2\nabla_g\cdot  \TraceReversal_g\nabla_g\otimes\psi
    = - \VectorWaveOp[g]\psi + \Ric(g)(\psi, \cdot),
  \end{equation}
  where 
  \begin{equation}
    \label{linear:eq:constraint-propagation-aux}
    \begin{split}
      \nabla_g\cdot: S^2 T^*M \to T^*M,&\qquad \nabla_g\cdot U_{\mu\nu} := \nabla^\mu U_{\mu\nu}\\
      \nabla_g\otimes: T^*M \to S^2 T^*M,&\qquad \nabla_g\otimes \omega_\mu := -\frac{1}{2}\left(\nabla_\mu \omega_\nu +\nabla_\nu\omega_\mu\right),\\
      \TraceReversal_g: S^2T^*M\to S^2T^*M,&\qquad \TraceReversal_gU_{\mu\nu} := U_{\mu\nu} - \frac{1}{2}\Trace_gU g_{\mu\nu}.
    \end{split}        
  \end{equation}
  From \eqref{linear:eq:constraint-propagation-op:def},
  $\ConstraintPropagationOp$ is a manifestly hyperbolic operator on
  $\psi$.
\end{definition}

\begin{lemma}
  \label{linear:lemma:EVE:nonlinear-constraint-prop}
  Any solution $g$ to
  \begin{equation}
    \label{linear:eq:EVE:gauged-eq}
    \Ric(g) - \Lambda g + \nabla_g\otimes\Constraint(g,g^0) =0 
  \end{equation}
  must also satisfy
  \begin{align*}
    \ConstraintPropagationOp_g\Constraint(g,g^0)= 0.
  \end{align*}
\end{lemma}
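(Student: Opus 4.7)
The plan is to apply the operator $2\nabla_g\cdot \TraceReversal_g$ (which sends symmetric 2-tensors to 1-forms via the operations defined in \eqref{linear:eq:constraint-propagation-aux}) to the gauged Einstein equation \eqref{linear:eq:EVE:gauged-eq}, and then handle each of the three resulting terms separately. The desired identity $\ConstraintPropagationOp_g \Constraint(g,g^0) = 0$ falls out almost immediately from this computation.

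First I would examine the Ricci term: applying $2\nabla_g \cdot \TraceReversal_g$ to $\Ric(g)$ gives twice the divergence of the Einstein tensor $\Ric(g) - \tfrac{1}{2}R(g)\,g$, which vanishes identically by the twice-contracted second Bianchi identity. Next I would handle the cosmological-constant term: using $\TraceReversal_g(\Lambda g) = \Lambda g - \tfrac{1}{2}(\Trace_g(\Lambda g))\, g = \Lambda g - 2\Lambda g = -\Lambda g$ in four dimensions, the contribution becomes $2\Lambda \nabla_g \cdot g$, which vanishes because $g$ is parallel with respect to its own Levi-Civita connection. Finally, the third term $2\nabla_g\cdot\TraceReversal_g \nabla_g\otimes\Constraint(g,g^0)$ is by the very definition \eqref{linear:eq:constraint-propagation-op:def} equal to $\ConstraintPropagationOp_g \Constraint(g,g^0)$. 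Combining these three observations yields the claim.

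There is no real obstacle here; the lemma is a clean structural consequence of the Bianchi identity together with the way the constraint propagation operator was set up. The only bookkeeping worth pausing over is verifying the trace-reversal sign and factor in dimension four (which is why the operator $\TraceReversal_g$ was built in), and the fact that $\Constraint(g,g^0)$ does not appear differentiated more than twice on the left-hand side — so applying a first-order divergence produces precisely a second-order wave operator on $\Constraint$, matching the manifestly hyperbolic form of $\ConstraintPropagationOp_g$ in \eqref{linear:eq:constraint-propagation-op:def}. This last point is important conceptually, since it is what makes the lemma useful: it recasts preservation of the gauge condition as a well-posed Cauchy problem for $\Constraint(g,g^0)$, which will later allow us to conclude $\Constraint(g,g^0)\equiv 0$ from vanishing Cauchy data.
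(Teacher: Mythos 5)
Your proof is correct and is essentially the paper's own argument with the bookkeeping written out: the paper's proof says precisely "apply the twice-contracted second Bianchi identity to \eqref{linear:eq:EVE:gauged-eq}," and your application of $2\nabla_g\cdot\TraceReversal_g$ to all three terms — killing the Ricci term via the Bianchi identity, the $\Lambda g$ term via $\nabla_g g=0$, and recognizing the remainder as $\ConstraintPropagationOp_g\Constraint(g,g^0)$ by definition \eqref{linear:eq:constraint-propagation-op:def} — is exactly that computation made explicit.
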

\begin{proof}
  The conclusion follows directly by applying the twice-contracted
  second Bianchi identity to \eqref{linear:eq:EVE:gauged-eq}.
\end{proof}

\begin{prop}
  \label{linear:prop:EVE:GHC-EVE-equivalence}
  If $g\in S^2T^*\StaticRegionWithExtension$ satisfies the gauged
  Einstein equation
  \begin{equation}
    \label{linear:eq:EVE:gauged}
    \begin{split}
      \Ric(g) - \Lambda g + \nabla_g\otimes\Constraint(g,g^0) ={}&0
    \end{split}
  \end{equation}
  and moreover, $g$ satisfies the gauge constraint on $\Sigma_0$,
  \begin{equation*}
    \Constraint(g, g^0)\vert_{\Sigma_0} = 0,
  \end{equation*}
  then $g$ is a solution to the ungauged Einstein vacuum equations (\ref{linear:eq:EVE:Full}). 
\end{prop}

\begin{proof}
  This follows directly from Lemma
  \ref{linear:lemma:EVE:nonlinear-constraint-prop} and uniqueness of
  solutions for hyperbolic PDEs.
\end{proof}

\subsection{The linearized Einstein equations}
\label{linear:sec:linearized-EVE}

We now introduce the linearized Einstein equation, for a more in-depth
introduction, we refer the reader to Section 3 of
\cite{graham_einstein_1991}. Directly linearizing \eqref{linear:eq:EVE:Full}
around $g_b$ yields the ungauged linearized Einstein equation
\begin{equation}
  \label{linear:eq:linearized-EVE-ungauged}
  D_{g_b}(\Ric - \Lambda)({h}) = 0. 
\end{equation}
Given admissible initial data $(\Sigma_0, \InducedMetric_0, k_0)$ for $g_b$,
we define the \emph{linearized constraint equation} as the
linearization of \eqref{linear:eq:EVE:constraint-eqns} around
$(\InducedMetric_0, k_0)$ in terms of the linearized metric
$\InducedMetric'$ and the linearized second fundamental form
$k'$. 
An initial data triplet $(\Sigma_0, \InducedMetric', k')$ linearized
around $(\InducedMetric_b, k_b)$ is an \textit{admissible} initial
data triplet for Einstein equations linearized around $g_b$ if
$(\InducedMetric', k')$ satisfy the linearized constraint equations.
Linearizing the gauged Einstein equations in \eqref{linear:eq:EVE:gauged-eq},
we have the linearized gauged Einstein equations
\begin{equation}
  \label{linear:eq:linearized-gauged-EVE}
  D_{g_b}(\Ric - \Lambda)({h}) - \nabla_{g_b}\otimes D_{g_b}\Constraint(g, g^0)(h) = 0. 
\end{equation}

\begin{definition}
  Define the \emph{linearized gauge constraint} \begin{equation*}
    \begin{split}
      \Constraint_{g_b}{h}:={}&D_{g_b}\Constraint(g_b+{h}, g_b)({h})\\
      ={}& -\nabla_{g_b}\cdot\TraceReversal_{g_b}{h}.
    \end{split}    
  \end{equation*}
\end{definition}
We have the following linearized equivalent of Lemma
\ref{linear:lemma:EVE:nonlinear-constraint-prop}.
\begin{lemma}
  \label{linear:lemma:EVE:linearized-constraint-prop}
  Let ${h}$ solve \eqref{linear:eq:linearized-gauged-EVE}. Then ${h}$
  also satisfies
  \begin{equation*}
    \ConstraintPropagationOp_{g_b}(\Constraint_{g_b}{h}) = 0, \qquad
    \ConstraintPropagationOp_{g_b}\psi = \VectorWaveOp[g_b]\psi - \Lambda\psi, 
  \end{equation*}
  where $\VectorWaveOp[g_b] = \nabla^\alpha \nabla_\alpha $ denotes
  the wave operator acting on 1-tensors. 
\end{lemma}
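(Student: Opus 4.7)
The strategy is to linearize the nonlinear constraint propagation identity from Lemma~\ref{linear:lemma:EVE:nonlinear-constraint-prop} at the background $g_b$, exploiting the fact that both the gauged Einstein operator and the gauge constraint vanish identically on $g_b$ itself. Concretely, write
\begin{equation*}
  E(g) := \Ric(g) - \Lambda g + \nabla_g\otimes\Constraint(g,g_b),
\end{equation*}
so that the linearized gauged equation \eqref{linear:eq:linearized-gauged-EVE} is precisely $D_{g_b}E(h) = 0$, and note that $E(g_b) = 0$ and $\Constraint(g_b, g_b) = 0$ since $g_b$ solves EVE.

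The key observation, already used in the proof of Lemma~\ref{linear:lemma:EVE:nonlinear-constraint-prop}, is that applying $2\nabla_g\cdot\TraceReversal_g$ to $E(g)$ yields the \emph{algebraic} identity
\begin{equation*}
  2\nabla_g\cdot\TraceReversal_g E(g) = \ConstraintPropagationOp_g\Constraint(g,g_b),
\end{equation*}
valid for every metric $g$: the contribution of $\Ric(g) - \Lambda g$ on the left drops out by the twice-contracted second Bianchi identity together with $\nabla_g g = 0$, while the remaining term matches the definition \eqref{linear:eq:constraint-propagation-op:def} of $\ConstraintPropagationOp_g$. I would then linearize this identity at $g = g_b$ in the direction $h$. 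On the left-hand side, variations of the operator $2\nabla_g\cdot\TraceReversal_g$ acting on $E(g_b) = 0$ vanish, leaving
\begin{equation*}
  2\nabla_{g_b}\cdot\TraceReversal_{g_b}\,D_{g_b}E(h).
\end{equation*}
On the right-hand side, variations of $\ConstraintPropagationOp_g$ acting on $\Constraint(g_b,g_b) = 0$ vanish, leaving $\ConstraintPropagationOp_{g_b}\bigl(D_{g_b}\Constraint(\cdot,g_b)(h)\bigr) = \ConstraintPropagationOp_{g_b}(\Constraint_{g_b}h)$. Hence
\begin{equation*}
  2\nabla_{g_b}\cdot\TraceReversal_{g_b}\,D_{g_b}E(h) = \ConstraintPropagationOp_{g_b}(\Constraint_{g_b}h),
\end{equation*}
and inserting $D_{g_b}E(h) = 0$ gives the asserted constraint propagation.

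Finally, to identify the operator, I would substitute $\Ric(g_b) = \Lambda g_b$ into the formula $\ConstraintPropagationOp_g\psi = -\VectorWaveOp[g]\psi + \Ric(g)(\psi,\cdot)$ from \eqref{linear:eq:constraint-propagation-op:def}, which collapses $\Ric(g_b)(\psi,\cdot)$ to $\Lambda\psi$ and yields the stated tensorial wave equation (up to overall sign, which is immaterial for the propagation statement $\ConstraintPropagationOp_{g_b}(\Constraint_{g_b}h) = 0$). There is no real obstacle here: the only thing one must take care of is that the variations of the $g$-dependent operators $\nabla_g\cdot$, $\TraceReversal_g$, and $\nabla_g\otimes$ under $g \mapsto g_b + sh$ all act on terms that vanish identically at $s=0$, which is what makes the linearization so clean.
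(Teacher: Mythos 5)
Your proof is correct and follows essentially the same route as the paper: both derivations rest on the twice-contracted second Bianchi identity, with yours simply organizing the computation as a linearization of the nonlinear identity of Lemma~\ref{linear:lemma:EVE:nonlinear-constraint-prop} at $g=g_b$, using $E(g_b)=0$ and $\Constraint(g_b,g_b)=0$ to drop the operator-variation terms. Your remark about the sign discrepancy between the definition \eqref{linear:eq:constraint-propagation-op:def} and the form of $\ConstraintPropagationOp_{g_b}$ displayed in the lemma is well spotted and, as you note, immaterial to the conclusion.
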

\begin{proof}
  The lemma follows directly by applying the twice-contracted
  linearized second Bianchi identity to the gauged linearized Einstein
  equation.
\end{proof}
\begin{remark}
  From Lemma \ref{linear:lemma:EVE:linearized-constraint-prop}, it is clear
  that if $(\evalAt*{\Constraint_{g_b}{h}}_{\Sigma_0}, \evalAt*{\LieDerivative_{\KillT}\Constraint_{g_b}{h}}_{\Sigma_0}) = (0, 0)$, then
  $\Constraint_{g_b}{h} = 0$ uniformly.  
\end{remark}

Finally, we remark that any solution ${h}$ to the ungauged linearized
Einstein's equations \eqref{linear:eq:linearized-EVE-ungauged} can be put
into the linearized gauge $C_{g_b}({h}) = 0$ by finding some
infinitesimal diffeomorphism
$\nabla_{g_b}\otimes \omega$\footnote{Observe that in terms of the Lie
  derivative, we have that
  \begin{equation*}
    \nabla_{g_b}\otimes \omega = \frac{1}{2}\LieDerivative_{\omega^\sharp}g_b.
  \end{equation*}
}
such that
\begin{equation}
  \Constraint_{g_b}({h} + \nabla_{g_b}\otimes \omega) = 0,
\end{equation}
as general covariance implies that
\begin{equation*}
  D_{g_b}(\Ric - \Lambda)(\nabla_{g_b}\otimes \omega) = 0
\end{equation*}
for any one-form $\omega\in C^\infty(\StaticRegionWithExtension,
T^*\StaticRegionWithExtension)$. This is equivalent to finding some
$\omega$ such that
\begin{equation}
  \label{linear:eq:linearized-gauge-fixer}
  \Box_{g_b}^\Upsilon\omega = 2\Constraint_{g_b}({h}),\qquad
  \Box_{g_b}^{\Upsilon} = -2\Constraint_{g_b}\circ \nabla_{g_b}\otimes ,
\end{equation}
which is principally $\VectorWaveOp[g_b]$, and in fact, in our case we
can calculate that
\begin{equation*}
  \Box_{g_b}^\Upsilon = \VectorWaveOp[g_b] - \Lambda. 
\end{equation*}
Solving for $\omega$ satisfying \eqref{linear:eq:linearized-gauge-fixer} with
Cauchy data
$\left.(\omega, \LieDerivative_{\KillT}\omega)\right\vert_{\Sigma_0}=
0$ then ensures that ${h} +\nabla_{g_b}\otimes \omega$ has the same initial
data as ${h}$.

\subsection{Properties of $\LinEinstein_{g_b}$}
\label{linear:sec:LinEinstein-properties}

Using wave coordinates, we can compute the exact quasilinear structure
of Einstein's equations. 
\begin{lemma}
  \label{linear:lemma:EVE:GHC-quasilinear}
  Let $g_b + h$ be a solution to the gauged Einstein vacuum equations
  in \eqref{linear:eq:EVE:gauged} 
  where we choose $g^0=g_b$. Then, $h$ solves
  \begin{equation*}
    \LinEinstein_{g_b}h = \NCal(h,\p h, \p\p h),
  \end{equation*}
  where
  \begin{align}
    \LinEinstein_{g_b}h &:= \frac{1}{2}\TensorWaveOp[g_b]h
                          + \mathcal{R}_g(h),\label{linear:eq:LinEinstein-def}\\
    \mathcal{R}_{g_b}({h})_{\mu\nu}&:= {h}^{\alpha\lambda}(R_{g_b})_{\alpha\mu\nu\lambda}
                                     - \Lambda {h}_{\mu\nu}\label{linear:eq:LinEinstein-R-def}, 
  \end{align}
  where $\TensorWaveOp[g]:=\nabla^\alpha\nabla_\alpha$ denotes the
  wave operator acting on $2$-tensors, $R_{g_b}$ is the Riemann
  curvature tensor of $g_b$, and $\NCal$ is a quasilinear nonlinear
  term in ${h}$.  In particular, $\LinEinstein_{g_b}$ is a strongly
  hyperbolic operator, as defined in Definition
  \ref{linear:def:strongly-hyperbolic-operator}.
\end{lemma}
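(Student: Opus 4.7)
The plan is to work in local coordinates and exploit the classical harmonic-gauge reduction at the nonlinear level before linearizing. I would first record the standard coordinate expansion
\begin{equation*}
  \Ric(g)_{\mu\nu} = -\frac{1}{2} g^{\alpha\beta} \partial_\alpha \partial_\beta g_{\mu\nu} + \nabla_{(\mu} \Gamma(g)_{\nu)} + \mathcal{N}_1(g, \partial g),
\end{equation*}
with $\Gamma(g)^\mu = g^{\alpha\beta}\ChristoffelTypeTwo[g]{\mu}{\alpha\beta}$ and $\mathcal{N}_1$ quadratic in $\partial g$ with coefficients smooth in $g$. Rewriting the constraint operator as $\Constraint(g, g_b)_\mu = g_{\mu\chi}\Gamma(g)^\chi - g_{\mu\chi}g^{\nu\lambda}\ChristoffelTypeTwo[g_b]{\chi}{\nu\lambda}$ and using $\nabla_g g = 0$, a short computation gives $\nabla_g \otimes \Constraint(g, g_b)_{\mu\nu} = -\nabla_{(\mu}\Gamma(g)_{\nu)} + \mathcal{N}_2(g, \partial g)$. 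This is the classical harmonic-gauge cancellation and it converts \eqref{linear:eq:EVE:gauged} into the principally scalar reduced system
\begin{equation*}
  -\frac{1}{2}g^{\alpha\beta}\partial_\alpha \partial_\beta g_{\mu\nu} - \Lambda g_{\mu\nu} + \widetilde{\mathcal{N}}(g, \partial g) = 0.
\end{equation*}

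Having reduced to this form, I would substitute $g = g_b + h$ and collect the terms that are linear in $h$. Since $g_b$ solves the gauged equation trivially (as $\Constraint(g_b, g_b) = 0$), every linear-in-$h$ contribution arises from linearizing the reduced equation at $g_b$. The principal contribution is $-\frac{1}{2} g_b^{\alpha\beta} \partial_\alpha \partial_\beta h_{\mu\nu}$; adding and subtracting the appropriate Christoffel corrections of $g_b$ completes this to $-\frac{1}{2}\TensorWaveOp[g_b] h_{\mu\nu}$ modulo benign first- and zeroth-order smooth contributions in $h$. The surviving zeroth-order piece is pinned down using $\Ric(g_b) = \Lambda g_b$: the Ricci-type terms produced by commuting covariant derivatives in the Lichnerowicz-type expansion collapse to $\Lambda h_{\mu\nu}$, and the genuine Riemann piece contributes $h^{\alpha\lambda}(R_{g_b})_{\alpha\mu\nu\lambda}$. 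These assemble precisely to $-\mathcal{R}_{g_b}(h)$ as in \eqref{linear:eq:LinEinstein-R-def}, and moving the rest to the right-hand side produces the identity $\LinEinstein_{g_b} h = \NCal(h, \partial h, \partial \partial h)$ with $\NCal$ at most quasilinear of order two.

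To verify that $\LinEinstein_{g_b}$ fits the strongly hyperbolic template of Definition~\ref{linear:def:strongly-hyperbolic-operator}, I reinterpret $h_{\mu\nu}$ as a matrix-valued function rather than a two-tensor. The difference $\TensorWaveOp[g_b] - \ScalarWaveOp[g_b]$, acting componentwise, is a first-order operator whose coefficients are smooth functions built from the Christoffel symbols of $g_b$, plus a smooth zeroth-order term quadratic in those Christoffels. The first-order part supplies a smooth subprincipal operator of the required form $S^\alpha \partial_\alpha$, and combining the residual Christoffel-squared contribution with $\mathcal{R}_{g_b}$ yields a smooth potential $\PotentialOp$, matching the template up to an inessential overall positive-scalar normalization of the principal symbol.

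The main technical point is the gauge cancellation in the first step; once that identity is established, the linearization and the verification of strong hyperbolicity are essentially bookkeeping. The only subtlety is keeping careful track of the Ricci and Riemann contributions coming out of the Lichnerowicz-type expansion of $D_{g_b}\Ric$, which is where the background Einstein condition $\Ric(g_b) = \Lambda g_b$ is needed in an essential way to identify the coefficient precisely as $\mathcal{R}_{g_b}(h)$ rather than a more general curvature expression.
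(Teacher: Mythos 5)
Your proposal is correct and follows the same fundamental route as the paper: the harmonic-gauge cancellation $\nabla_g\otimes\Constraint(g,g_b) = -\nabla_{(\mu}\Gamma(g)_{\nu)} + \text{(lower order)}$ reduces the gauged equation to a quasilinear system with scalar principal part, and the linearization at $g_b$ then produces $\frac{1}{2}\TensorWaveOp[g_b]$ plus the curvature term. The paper simply delegates the identification of the zeroth-order curvature coefficient to equation~(2.4) of Graham--Lee \cite{graham_einstein_1991}, whereas you re-derive it from the Lichnerowicz-type expansion of $D_{g_b}\Ric$ plus the background Einstein condition; this is the only point at which you compress the bookkeeping — the precise assembly of the Riemann and $\Lambda$-multiples into $\mathcal{R}_{g_b}(h)$ deserves either a citation or a line or two of explicit index gymnastics to fix signs. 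One thing the paper's proof makes explicit that you pass over quickly: interpreting $h$ as a $\mathbb{C}^D$-valued function requires a choice of global trivialization of $S^2T^*\StaticRegionWithExtension$, which exists here because the exterior region $\Real_{\tStar}\times\Sigma$ is parallelizable (e.g., use a Cartesian frame); different frame choices change $\SubPOp$ and $\PotentialOp$, but not the strong hyperbolicity, and this should be stated rather than left implicit.
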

\begin{proof}
  The conclusion follows from equation (2.4) in
  \cite{graham_einstein_1991} and the harmonic gauge condition in
  \eqref{linear:eq:GHC-condition:Christoffel}.  It should be noted that when
  writing $\LinEinstein_{g_b}h$ as a strongly hyperbolic operator,
  \begin{equation*}
    \LinEinstein_{g_b} = \ScalarWaveOp[g_b] + \SubPOp_{b} + \PotentialOp_b,
  \end{equation*}
  the exact coefficients in $\SubPOp_b$ and $\PotentialOp_b$ will
  depend not only on the coordinate system chosen, but also on the
  particular frame used to split
  $S^2T^*\StaticRegionWithExtension$. To see that
  $\LinEinstein_{g_b}$ can be written as a global system of strongly
  hyperbolic equations, it suffices to take some global frame on
  $\StaticRegionWithExtension$ (for example, the Cartesian frame). 
\end{proof}
\begin{definition}
  We refer to $\LinEinstein = \LinEinstein_g$ defined in
  \eqref{linear:eq:LinEinstein-def} as the \emph{gauged linearized Einstein
    operator}.
\end{definition}

We can also compute the value of
$\SHorizonControl{\LinEinstein_{g_{b_0}}}[\Horizon]$.
\begin{lemma}
  \label{linear:lemma:SubPOp:horizons}
  Let $g_{b_0}$ be a fixed member of the \SdS{} family. Then,
  \begin{equation*}
    \SHorizonControl{\LinEinstein_{g_{b_0}}}[\Horizon] = 4\SurfaceGravity_{b_0,\Horizon},\qquad \Horizon=\EventHorizonFuture,\CosmologicalHorizonFuture,
  \end{equation*}
  where we recall the definition of
  $\SHorizonControl{\LinearOp}[\Horizon]$ from \eqref{linear:eq:SHorizonControl:def}.
\end{lemma}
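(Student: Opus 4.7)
\medskip

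\noindent\textbf{Proof proposal.} The plan is to read off the subprincipal operator $\SubPOp_{b_0}$ of $\LinEinstein_{g_{b_0}}$ directly from the explicit formula in Lemma \ref{linear:lemma:EVE:GHC-quasilinear}, and then compute the quadratic form $\xi \mapsto -g(\HorizonGen,\Re[\bar\xi\cdot \SubPOp_{b_0}\xi])$ on the horizon in a coordinate system that is regular across $\Horizon$. Since $\mathcal{R}_{g_{b_0}}$ in \eqref{linear:eq:LinEinstein-R-def} is an algebraic (zeroth-order) operator, it contributes only to $\PotentialOp_{b_0}$ and is irrelevant here; all of $\SubPOp_{b_0}$ comes from the first-order piece of the tensor wave operator $\TensorWaveOp[g_{b_0}]$ acting on the $2$-tensor $h$ (after normalising so that the principal part is $\ScalarWaveOp[g_{b_0}]$, as is done in Lemma \ref{linear:lemma:EVE:GHC-quasilinear}).

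First, I would expand $\TensorWaveOp[g]h_{\mu\nu}=g^{\alpha\beta}\nabla_\alpha\nabla_\beta h_{\mu\nu}$ by twice applying the formula $\nabla_\beta h_{\mu\nu}=\p_\beta h_{\mu\nu}-\Gamma^\sigma_{\beta\mu}h_{\sigma\nu}-\Gamma^\sigma_{\beta\nu}h_{\mu\sigma}$ and collecting terms with exactly one derivative of $h$. A direct calculation, using the symmetry of $g^{\alpha\beta}$ and of $h_{\mu\nu}$, gives
\begin{equation*}
  \TensorWaveOp[g]h_{\mu\nu} = \ScalarWaveOp[g]h_{\mu\nu}
  - 2 g^{\alpha\beta}\Gamma^\lambda_{\alpha\mu}\p_\beta h_{\lambda\nu}
  - 2 g^{\alpha\beta}\Gamma^\lambda_{\alpha\nu}\p_\beta h_{\mu\lambda} + O(h),
\end{equation*}
so that after the rescaling alluded to above, the subprincipal operator acts on a symmetric $2$-tensor $\xi$ as
\begin{equation*}
  (S^\beta \xi)_{\mu\nu} = -2 g^{\alpha\beta}\bigl(\Gamma^\lambda_{\alpha\mu}\xi_{\lambda\nu} + \Gamma^\lambda_{\alpha\nu}\xi_{\mu\lambda}\bigr),\qquad
  -\HorizonGen_\beta (S^\beta \xi)_{\mu\nu} = 2\HorizonGen^\alpha\bigl(\Gamma^\lambda_{\alpha\mu}\xi_{\lambda\nu} + \Gamma^\lambda_{\alpha\nu}\xi_{\mu\lambda}\bigr).
\end{equation*}

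Next, I would localise to $\Horizon=\EventHorizonFuture$ and work in the ingoing Eddington--Finkelstein chart of Section \ref{linear:sec:SdS:regular}, in which $g_{b_0}=-\mu_{b_0}dt_0^2+2\,dt_0\,dr+r^2\UnitSphereMetric$ is smooth through $\EventHorizonFuture$, and $\HorizonGen_{\EventHorizonFuture}=\KillT=\p_{t_0}$. A short computation shows that at $r=r_{b_0,\EventHorizonFuture}$ (where $\mu_{b_0}=0$) the only nonvanishing Christoffel symbols of the form $\Gamma^\lambda_{t_0\,\mu}$ are
\begin{equation*}
  \Gamma^{t_0}_{t_0 t_0}\big|_{\EventHorizonFuture} = \tfrac{1}{2}\mu_{b_0}'(r_{b_0,\EventHorizonFuture}) = \SurfaceGravity_{b_0,\EventHorizonFuture},\qquad
  \Gamma^{r}_{t_0 r}\big|_{\EventHorizonFuture} = -\tfrac{1}{2}\mu_{b_0}'(r_{b_0,\EventHorizonFuture}) = -\SurfaceGravity_{b_0,\EventHorizonFuture},
\end{equation*}
(the term $\Gamma^r_{t_0 t_0}=\tfrac12\mu_{b_0}\mu_{b_0}'$ drops because $\mu_{b_0}=0$ on $\EventHorizonFuture$). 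Plugging these into the formula above, the Hermitian matrix $A:=-\HorizonGen_\beta S^\beta$ is diagonal in the coordinate basis of symmetric $2$-tensors with entries
\begin{equation*}
  A\xi_{t_0 t_0}=4\SurfaceGravity\,\xi_{t_0 t_0},\quad
  A\xi_{rr}=-4\SurfaceGravity\,\xi_{rr},\quad
  A\xi_{t_0 A}=2\SurfaceGravity\,\xi_{t_0 A},\quad
  A\xi_{rA}=-2\SurfaceGravity\,\xi_{rA},\quad
  A\xi_{t_0 r}=A\xi_{AB}=0,
\end{equation*}
where $A,B$ denote angular indices and $\SurfaceGravity=\SurfaceGravity_{b_0,\EventHorizonFuture}$. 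The supremum in \eqref{linear:eq:SHorizonControl:def} is therefore the largest eigenvalue of $A$, namely $4\SurfaceGravity_{b_0,\EventHorizonFuture}$, attained on the $\xi_{t_0 t_0}$ component. The identical argument, carried out in outgoing Eddington--Finkelstein coordinates at $\CosmologicalHorizonFuture$ (where now $\SurfaceGravity_{b_0,\CosmologicalHorizonFuture}=-\tfrac12\mu_{b_0}'(r_{b_0,\CosmologicalHorizonFuture})>0$), produces $4\SurfaceGravity_{b_0,\CosmologicalHorizonFuture}$ after a bookkeeping of the reversed sign of the $dt_0\,dr$ cross-term.

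The main obstacle I anticipate is bookkeeping: the factor of $\tfrac{1}{2}$ in $\LinEinstein_{g_b}=\tfrac{1}{2}\TensorWaveOp[g_b]+\mathcal{R}_{g_b}$, the factor of $2$ from the symmetrisation of $h$ in the tensor wave expansion, and the sign/scaling convention implicit in \textquotedblleft rewriting $\LinEinstein_{g_{b_0}}$ as a strongly hyperbolic operator\textquotedblright{} must all be tracked correctly to land on the constant $4$ rather than $2$ or $8$. Everything else is a direct coordinate calculation at a single point of $\Horizon$.
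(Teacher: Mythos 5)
Your proposal is correct and follows essentially the same route as the paper's proof in Appendix~\ref{linear:appendix:lemma:SubPOp:horizons}: pass to Eddington--Finkelstein coordinates regular at each horizon, read off the subprincipal term of $\TensorWaveOp$ (equivalently $2\LinEinstein_{g_{b_0}}$ once the principal part is normalised to $\ScalarWaveOp[g_{b_0}]$) from the Christoffel symbols, contract with the horizon generator, and observe that the resulting Hermitian matrix is diagonal in the coordinate basis with spectrum $\{\pm4\SurfaceGravity_{b_0,\Horizon},\pm2\SurfaceGravity_{b_0,\Horizon},0,0\}$. The paper's presentation is more laborious — it computes the subprincipal operator component-by-component (first for one-forms, then for symmetric two-tensors) and introduces a change-of-basis formalism via $\mathcal{J}_{\pm}^{(1)},\mathcal{J}_{\pm}^{(2)}$ that it does not actually use in the final evaluation at the horizon — whereas you work directly from the compact invariant formula $\SubPOp[h]_{\mu\nu}=-2g^{\alpha\beta}(\Gamma^\lambda_{\alpha\mu}\p_\beta h_{\lambda\nu}+\Gamma^\lambda_{\alpha\nu}\p_\beta h_{\mu\lambda})$ and the observation that at the horizon only $\Gamma^{t_0}_{t_0t_0}$ and $\Gamma^r_{t_0r}$ survive among the $\Gamma^\lambda_{t_0\mu}$'s, which cuts directly to the diagonal matrix. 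Both land on the same eigenvalues; your sup is attained on $\xi_{t_0t_0}$ rather than the paper's $\xi_{rr}$, which is just the opposite sign convention in $g_{t_0r}$ swapping the roles of the $\pm4\SurfaceGravity$ eigenspaces, and has no effect on $\SHorizonControl$.
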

\begin{proof}
  See appendix \ref{linear:appendix:lemma:SubPOp:horizons}.
\end{proof}

\subsection{Initial data}

In this section, we will construct the mapping $i_{b, \phi}$ between
admissible initial data triplets $(\Sigma_0, \InducedMetric_0, k_0)$
for the Cauchy problem for the ungauged Einstein equations
\eqref{linear:eq:EVE:Full} and the admissible initial data $(h_0, h_1)$ for
the Cauchy problem for the gauged Einstein equations. Recall
from Section \ref{linear:sec:GHC-Cauchy-problem} that an important property
of this mapping is that a metric perturbation $h$ such that
$\gamma_0(h)=(h_0, h_1)$ satisfies the gauge constraint
$\Constraint(g_b+h, g^0)\vert_{\Sigma_0} = 0$. To construct
$i_{b, \phi}$ we need to specify a choice of $g^0$. In the remainder
of this paper, it will be convenient to choose $g^0 = g_b$.

Consider the \KdS{} initial data triplet
$(\Sigma_0, \InducedMetric_b, k_b)$ that launches $g_b$, so that in
particular, $\InducedMetric_b$ and $k_b$ denote the induced metric and
second fundamental form on $\Sigma_0$ by $g_b$.  We will construct
$i_{b, \phi}$ mapping $(\Sigma_0, \InducedMetric_b, k_b)$ into Cauchy
data for \eqref{linear:eq:EVE:gauged} launching the \KdS{} solution
$\phi^*g_b$. That is, we will have that
\begin{equation*}
  i_{b, \phi}(\phi^*\InducedMetric_b, \phi^*k_b) = (0,0).
\end{equation*}
The linearization of this mapping will also produce the correctly
gauged initial data for the gauged linearized Einstein equation
linearized around $g_b$.

\begin{prop}
  \label{linear:prop:initial-data:ib-construction}
  Fix some one-form $\vartheta$, and denote by $\phi$ the
  diffeomorphism generated by $\vartheta^\sharp$. Then 
  there exist neighborhoods of symmetric two-tensors on $\Sigma_0$
  \begin{equation*}
    H\subset C^1(\Sigma_0; S^2T^*\Sigma_0),\qquad
    K\subset C^0(\Sigma_0; S^2T^*\Sigma_0)
  \end{equation*}
  of $\InducedMetric_{b_0}$ and $k_{b_0}$, respectively, so that $\InducedMetric_b\in H$,
  $k_b\in K$ for all $b\in \BHParamNbhd$, where $\BHParamNbhd$ is a
  sufficiently small neighborhood of black-hole parameters of $b_0$;
  and moreover, for each $b\in \BHParamNbhd$, there exists a map
  \begin{equation*}
    \begin{split}
      i_{b, \phi}:& \left(H\bigcap C^m(\Sigma_0; S^2T^*\Sigma_0)\right)
      \times \left(K\bigcap C^{m-1}(\Sigma_0;S^2T^*\Sigma_0)\right)\\
      &\to C^m(\Sigma_0; S^2T^*_{\Sigma_0}\StaticRegionWithExtension)\times C^{m-1}(\Sigma_0;S^2T^*_{\Sigma_0}\StaticRegionWithExtension),
    \end{split}
  \end{equation*}
  that is smooth for $m\ge 1$ depending smoothly on $b$, such that 
  \begin{enumerate}
  \item if $h$ is some symmetric two-tensor such that $\gamma_0(h) =
    i_{b, \phi}(\InducedMetric_0, k_0)$, and $g = \phi^*(g_b+h)$, then
    \begin{equation*}
      (\InducedMetric,  k) = (\InducedMetric_0, k_0),
    \end{equation*}
    where $(\InducedMetric, k)$ are the induced metric and the second
    fundamental form respectively of $\phi(\Sigma_0)$ induced by
    $g$. Moreover, $g_b+h$ satisfies the gauge constraint
    \begin{equation*}
      \Constraint(g_b+h,g_b)\vert_{\Sigma_0} = 0;
    \end{equation*}
  \item if $(\InducedMetric_b, k_b)$ is the admissible initial data launching the \KdS{}
    metric $g_b$, then
    \begin{equation*}
      i_{b, \phi}(\phi^*\InducedMetric_b, \phi^*k_b)  = (0, 0);
    \end{equation*}
  \item $(g_0, g_1) = i_{b, \phi}(\InducedMetric, k)$ satisfies the
    condition
    \begin{align*}
      \norm{(g_0, g_1)}_{\LSolHk{k}(\Sigma_0)}
      \lesssim{}&  \sum_{0\le |I|\le k}\norm*{\p_{x}^I(\InducedMetric - \InducedMetric_b)}_{L^2(\Sigma_0)} + \sum_{0\le |I|\le k+1}\norm*{\p_x^I\vartheta}_{L^2(\Sigma_0)}\\
      &+ \sum_{0\le |I|\le k-1}\norm*{\p_x^I (k - k_b)}_{L^2(\Sigma_0)},
    \end{align*}
    where $I$ is a multi-index. 
  \end{enumerate}
\end{prop}
\begin{proof}
  See appendix \ref{linear:appendix:prop:initial-data:ib-construction}. 
\end{proof}

The linearization of $i_{b, \phi}$ constructed above yields the correctly
gauged Cauchy data for the linearized gauged Einstein equation, just
as $i_{b, \phi}$ itself yields the correctly gauged Cauchy data for the
nonlinear gauged Einstein equation.
\begin{corollary}
  \label{linear:corollary:initial-data:ib-linearized}
  Fix $b\in \BHParamNbhd$ and a one-form $\vartheta$ generating the
  diffeomorphism $\phi$. Suppose
  $(\Sigma_0, \InducedMetric_b, k_b)$ is the smooth admissible initial
  data triplet launching $g_b$. Then, let $(\InducedMetric', k')$ be
  smooth solutions of the linearized constraint equations linearized
  around $(\InducedMetric_b, k_b)$, and let
  \begin{equation*}
    D_{(\InducedMetric_b, k_b)}i_{b, \phi}(\InducedMetric', k') = ({h}_0,
  {h}_1).
  \end{equation*}
  Finally, let ${h}\in S^2T^*\StaticRegionWithExtension$
  be a metric perturbation inducing $({h}_0, {h}_1)$ on $\Sigma_0$ so that
  $\gamma_0({h}) = ({h}_0, {h}_1)$, where
  \begin{equation}
    \label{linear:eq:gamma-0:def}
    \gamma_0(h):= (\evalAt*{h}_{\Sigma_0}, \evalAt*{\LieDerivative_{\KillT} h}_{\Sigma_0}).
  \end{equation}

  Then ${h}$ induces the linearized metric $\InducedMetric'$ and
  second fundamental form $k'$ on $\Sigma_0$, and satisfies the
  linearized gauge constraint on $\Sigma_0$,
  \begin{equation*}
    D_{g_b}\Constraint(g_b+{h}, g_b)({h})\vert_{\Sigma_0} = 0.
  \end{equation*}
  Moreover, if
  $({h}_0, {h}_1) = D_{(\InducedMetric_b, k_b)}i_{b,\phi}(\InducedMetric', k')$, then
  \begin{equation*}
    \norm*{({h}_0, {h}_1)}_{\LSolHk{k}(\Sigma_0)}
    \lesssim{} \sum_{1\le|I|\le k}\norm*{\p_x^I \InducedMetric'}_{L^2(\Sigma_0)}
    + \sum_{1\le|I|\le k+1}\norm*{\p_x^I\vartheta}_{L^2(\Sigma_0)}
    + \sum_{1\le |J| \le k-1}\norm*{\p_x^J k'}_{L^2(\Sigma_0)}.
  \end{equation*}
\end{corollary}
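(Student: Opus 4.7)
The plan is to obtain Corollary \ref{linear:corollary:initial-data:ib-linearized} as a direct linearization of Proposition \ref{linear:prop:initial-data:ib-construction} evaluated at the base point $(\InducedMetric_b, k_b)$. Since the diffeomorphism $\phi$ is fixed by $\vartheta$ (independent of the initial data argument of $i_{b,\phi}$), and since $i_{b, \phi}$ is smooth jointly in its arguments with values in a Banach space of Cauchy data, we may differentiate its defining properties term by term. Concretely, consider a smooth one-parameter family of admissible initial data $(\InducedMetric_s, k_s) = (\InducedMetric_b, k_b) + s(\InducedMetric', k') + O(s^2)$ with $(\InducedMetric_s, k_s)$ satisfying the nonlinear constraint equations for each $s$; this is possible precisely because $(\InducedMetric', k')$ solves the linearized constraint equations by hypothesis. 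For each $s$, Proposition \ref{linear:prop:initial-data:ib-construction} produces a metric perturbation $h_s$ with $\gamma_0(h_s) = i_{b,\phi}(\InducedMetric_s, k_s)$ such that $\phi^*(g_b + h_s)$ has induced data $(\InducedMetric_s, k_s)$ on $\Sigma_0$ and such that $g_b + h_s$ satisfies $\Constraint(g_b + h_s, g_b)\vert_{\Sigma_0} = 0$.

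Next, I would differentiate each of these two facts in $s$ at $s = 0$ and note that at $s = 0$, $(\InducedMetric_s, k_s) = (\InducedMetric_b, k_b)$ and $h_0 \equiv 0$ by the second bullet of Proposition \ref{linear:prop:initial-data:ib-construction}, so $\partial_s h_s\vert_{s=0}$ has Cauchy data exactly $D_{(\InducedMetric_b, k_b)}i_{b,\phi}(\InducedMetric', k')$. The induced metric/second fundamental form construction is a smooth operation on the pair $(g, \Sigma_0)$ valued in tensor fields over $\Sigma_0$, so differentiating the identity $(\InducedMetric, k)[\phi^*(g_b + h_s)] = (\InducedMetric_s, k_s)$ in $s$ at $s=0$ yields that $\partial_s h_s$ (and hence any $h$ with the same $\gamma_0$) induces $(\InducedMetric', k')$ on $\Sigma_0$; here I use that the induced metric and second fundamental form on $\Sigma_0$ depend only on the one-jet of the metric perturbation along $\Sigma_0$, which is exactly the information carried by $\gamma_0(h)$. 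Similarly, differentiating the nonlinear gauge identity $\Constraint(g_b + h_s, g_b)\vert_{\Sigma_0} = 0$ at $s=0$ and recalling the definition $\Constraint_{g_b} h = D_{g_b}\Constraint(g_b+\cdot, g_b)(h)$ gives $\Constraint_{g_b}(\partial_s h_s)\vert_{\Sigma_0} = 0$, which is the linearized gauge constraint; since $\Constraint_{g_b}h$ involves only one-jets of $h$ along $\Sigma_0$, the conclusion holds for any $h$ with the prescribed $\gamma_0$.

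For the norm bound on $({h}_0, {h}_1)$, I would differentiate the estimate stated in Proposition \ref{linear:prop:initial-data:ib-construction} at $(\InducedMetric_b, k_b)$. Because $i_{b,\phi}$ is smooth, the bound on $D_{(\InducedMetric_b, k_b)}i_{b,\phi}$ between the corresponding Sobolev spaces follows from the uniform bound on $i_{b,\phi}$ near the base point, where differences $\InducedMetric - \InducedMetric_b$ and $k - k_b$ in the nonlinear estimate are replaced by the linear perturbations $\InducedMetric'$ and $k'$, with the $\vartheta$ contribution appearing exactly as in the nonlinear estimate because $\phi$ (and hence $\vartheta$) enters $i_{b,\phi}$ linearly in how it pulls back tensors evaluated at the base point. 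Summing over derivatives up to the respective orders controlled by $\LSolHk{k}(\Sigma_0)$ yields the stated inequality.

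The only real subtlety, and what I would flag as the main point to verify, is that the family $(\InducedMetric_s, k_s)$ can indeed be chosen tangent to $(\InducedMetric', k')$ while remaining admissible at every order $s$; this is guaranteed by standard solvability of the nonlinear constraint equations about any given admissible data (for instance via the conformal method), with the linearized solution $(\InducedMetric', k')$ as its first-order term. The rest of the argument is a straightforward application of the chain rule and smoothness provided by Proposition \ref{linear:prop:initial-data:ib-construction}.
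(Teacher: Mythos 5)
Your proof is essentially the paper's: the paper disposes of this corollary in a single sentence, ``the statement follows directly by linearizing the construction of $i_{b,\phi}$,'' and your argument fleshes out exactly that. Your observation that the conclusions about induced data and the gauge constraint depend only on the one-jet of $h$ along $\Sigma_0$ (i.e.\ on $\gamma_0(h)$) is the right reason the conclusion transfers from the particular variation $\partial_s h_s\vert_{s=0}$ to any $h$ with the prescribed Cauchy data.

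One remark: the detour through a one-parameter family of \emph{admissible} data $(\InducedMetric_s,k_s)$, which you flag as the main subtlety, is unnecessary. Proposition~\ref{linear:prop:initial-data:ib-construction} defines $i_{b,\phi}$ as a smooth map on an open neighborhood $H\times K$ in $C^m\times C^{m-1}$, with the nonlinear gauge constraint $\Constraint(g_b+h,g_b)\vert_{\Sigma_0}=0$ holding for \emph{all} data in that neighborhood, not just constraint-satisfying data. So you can differentiate along any smooth curve $(\InducedMetric_b,k_b)+s(\InducedMetric',k')$ through the base point without invoking solvability of the nonlinear constraint equations or the conformal method; the linearized-constraint hypothesis on $(\InducedMetric',k')$ is not needed for any of the corollary's conclusions. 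Equivalently (and closest to what the paper means), one can simply differentiate the explicit formulas in the construction in appendix~\ref{linear:appendix:prop:initial-data:ib-construction}. Dropping the admissibility requirement removes the one genuine gap your argument would otherwise have, namely the (unproved) claim that the linearized constraint solution extends to a curve of exact constraint solutions. Everything else in your write-up is sound; the phrase that $\vartheta$ ``enters linearly'' is a slight overstatement since $\phi=\exp(\vartheta^\sharp)$ is nonlinear in $\vartheta$, but since $\phi$ is fixed this has no bearing on the linearization in $(\InducedMetric',k')$ and the $\vartheta$ term in the estimate simply carries over from the nonlinear estimate.
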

\begin{proof}
  The statement follows directly by linearizing the construction of
  $i_{b,\phi}$ in appendix \ref{linear:appendix:prop:initial-data:ib-construction}.
\end{proof}

\section{Quasinormal Spectrum}
\label{linear:sec:QNM}

In this section, we define the quasinormal spectrum and establish the
basic theory necessary to use the quasinormal spectrum to analyze the
behavior of solutions to initial value problems. The definition of the
$\LSolHk{k}$-quasinormal modes and their relation to the
Laplace-transformed operator follows closely to the original work done
in \cite{warnick_quasinormal_2015} in the anti-de Sitter
case. However, we provide a more detailed analysis of the
$\LSolHk{k}$-quasinormal spectrum and its relation to the initial
value problem following an analysis analogous to that in Section 5 of
\cite{hintz_global_2018}. We remark that Sections
\ref{linear:sec:sol-op-semigroup}, \ref{linear:sec:Hk-QNM-Spectrum}, and
\ref{linear:sec:Laplace-transformed-op} make no assumptions on the particular
choice of strongly hyperbolic operator. Sections
\ref{linear:sec:Hk-QNM-orthogonality}, \ref{linear:sec:QNM-and-IVP}, and
\ref{linear:sec:QNM-perturbation-theory} introduce certain assumptions (see
Assumption \ref{linear:ass:QNM}) on the operators which in particular, will
be shown in Section \ref{linear:sec:asymptotic-expansion} to be satisfied for
the linearized gauged Einstein operator.

\subsection{Solution operator semigroup}
\label{linear:sec:sol-op-semigroup}

We begin with a definition of the solution operator
semigroup, which maps initial data to the evolution of a solution to
$\LinearOp\psi=0$ with the given initial data. In this section, we
work on a fixed slowly-rotating \KdS{} background $g_b$ and drop the
$b$ subscript, denoting $g=g_b$, $\GInvdtdt = \GInvdtdt_b$. We also
work with a general strongly hyperbolic operator $\LinearOp$, although
our results will clearly also apply to $\LinEinstein_g$. 

\begin{lemma}
  Any strongly hyperbolic operator on a fixed slowly-rotating \KdS{}
  black hole $g=g_b$
  \begin{equation}
    \label{linear:eq:QNM:L-form}
    \LinearOp = \ScalarWaveOp[g] + \SubPOp + \PotentialOp,\qquad
    \SubPOp = S^\mu\p_\mu,
  \end{equation}
  where $S^\mu$ and $\PotentialOp$ are matrices of smooth functions, and
  $\ScalarWaveOp[g]$ denotes the Laplace-Beltrami operator of $g$, can
  be rewritten as
  \begin{equation}
    \label{linear:eq:Sol-Op:L-def}
    \LinearOp = P_2+ \frac{1}{\ImagUnit}P_1D_{\tStar} + \frac{1}{\GInvdtdt} D_{\tStar}^2,
  \end{equation}
  where $D_{\tStar} = \ImagUnit \p_{\tStar}$, $\GInvdtdt$ is defined in
  \eqref{linear:eq:GInvdtdt-def}, and $P_i$ are $i^{\text{th}}$-order differential
  operators such that
  \begin{equation}
    \label{linear:eq:P-i-def}
    \begin{split}
      P_1 &= g^{i\tStar}\p_i + S^{\tStar},\\
      P_2 &= g^{ij}\p_i\p_j + S^i\p_i + \PotentialOp.
    \end{split}  
  \end{equation}
  
  Moreover, let $(h_0, h_1)\in \LSolHk{1}(\Sigma_0)$.  Then we
  can rewrite the Cauchy problem
  \begin{equation}
    \label{linear:eq:Sol-Op:Cauchy}
    \begin{split}
      \LinearOp h &= 0,\\
      \gamma_0(h) &= (h_0, h_1),
    \end{split}
  \end{equation}
  as a first-order system
  \begin{equation}
    \label{linear:eq:Sol-Op:first-order}
    \begin{split}
      \KillT
      \begin{pmatrix}
        h\\ h'
      \end{pmatrix}
      = 
      \begin{pmatrix}
        0 &  1\\
        \GInvdtdt P_2& \GInvdtdt  P_1
      \end{pmatrix}
      \begin{pmatrix}
        h\\
        h'
      \end{pmatrix},\qquad
      \left.\begin{pmatrix}
          h\\
          h'
        \end{pmatrix}\right\vert_{\tStar=0} =
      \begin{pmatrix}
        h_0\\
        h_1
      \end{pmatrix}.
    \end{split}  
  \end{equation}
\end{lemma}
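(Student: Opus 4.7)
The plan is to verify both statements by direct computation in the regular coordinate chart $(\tStar, x^i)$ on $\StaticRegionWithExtension$.

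First I would expand $\LinearOp = \ScalarWaveOp[g] + \SubPOp + \PotentialOp$ by sorting contributions according to the number of $\p_{\tStar}$-derivatives they carry. The principal symbol $g^{\alpha\beta}\p_\alpha\p_\beta$ of the scalar wave operator splits into a pure $g^{\tStar\tStar}\p_{\tStar}^2$ term, mixed $g^{i\tStar}\p_i\p_{\tStar}$ terms (up to the symmetrization factor), and purely spatial $g^{ij}\p_i\p_j$ terms; the Christoffel-symbol first-order contributions of $\ScalarWaveOp[g]$ are absorbed, together with the components of $\SubPOp$, either into the spatial part (inside $P_2$) or into the $\tStar$-direction part (inside $P_1$). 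Using the definition $\GInvdtdt = -1/g(d\tStar,d\tStar) = -1/g^{\tStar\tStar}$ from \eqref{linear:eq:GInvdtdt-def} together with the identity $D_{\tStar}^2 = -\p_{\tStar}^2$, the pure $\p_{\tStar}^2$ contribution becomes precisely $\tfrac{1}{\GInvdtdt}D_{\tStar}^2$. Collecting the mixed terms against $\tfrac{1}{\ImagUnit}P_1 D_{\tStar}$ and the remaining purely spatial terms against $P_2$ yields the decomposition in \eqref{linear:eq:Sol-Op:L-def}.

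For the second statement I would introduce the auxiliary variable $h' := \p_{\tStar}h$. By \eqref{linear:eq:GInvdtdt-prop}, the coefficient $1/\GInvdtdt$ of $D_{\tStar}^2$ is uniformly bounded away from zero, so the equation $\LinearOp h = 0$ can be solved algebraically for $D_{\tStar}^2 h$, giving
\begin{equation*}
  \p_{\tStar}^2 h = \GInvdtdt\, P_2 h + \GInvdtdt\, P_1 \p_{\tStar} h
\end{equation*}
after converting back to $\p_{\tStar}$ via $D_{\tStar} = \ImagUnit\p_{\tStar}$. Replacing $\p_{\tStar}^2 h$ by $\p_{\tStar} h'$ and pairing with the trivial relation $\p_{\tStar} h = h'$ gives the claimed first-order matrix system in \eqref{linear:eq:Sol-Op:first-order}, and the initial data translate directly through the definition of $\gamma_0$ in \eqref{linear:eq:gamma-0:def} since $\LieDerivative_{\KillT}h|_{\Sigma_0} = \p_{\tStar}h|_{\Sigma_0} = h'|_{\tStar=0}$.

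The entire argument is mechanical bookkeeping and there is no substantive obstacle. The only points requiring care are the correct distribution of the first-order Christoffel contributions of $\ScalarWaveOp[g]$ between $P_1$ and $P_2$, and the invertibility of the $D_{\tStar}^2$ coefficient, which is immediate from the uniformly spacelike character of the $\tStar$-constant hypersurfaces established in Lemma \ref{linear:lemma:KdS:Kerr-star-regular-coordinates}.
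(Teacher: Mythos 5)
Your proposal is correct and matches the paper's own treatment, which simply asserts that ``it is a simple computation to verify'' the decomposition. Your expansion of that computation is accurate: splitting $g^{\alpha\beta}\p_\alpha\p_\beta$ by the number of $\p_{\tStar}$ factors, using $\GInvdtdt = -1/g^{\tStar\tStar}$ and $D_{\tStar}^2 = -\p_{\tStar}^2$ to identify the leading term, and then inverting the (nonvanishing, by \eqref{linear:eq:GInvdtdt-prop}) $\p_{\tStar}^2$ coefficient to pass to the first-order system via $h'=\p_{\tStar}h$. One thing worth making explicit if you were to write this out in full: the formulas $P_1 = g^{i\tStar}\p_i + S^{\tStar}$ and $P_2 = g^{ij}\p_i\p_j + S^i\p_i + \PotentialOp$ as printed in \eqref{linear:eq:P-i-def} are schematic --- they omit the factor of $2$ from the symmetric cross terms $g^{i\tStar}\p_i\p_{\tStar} + g^{\tStar i}\p_{\tStar}\p_i$ and the first-order Christoffel contributions $-\Gamma^\gamma\p_\gamma$ coming from $\ScalarWaveOp[g]$, which also must be distributed between $P_1$ (the $\gamma=\tStar$ component) and $P_2$ (the $\gamma=i$ components). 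You flag this as the ``point requiring care,'' which is the right instinct; the paper's one-line proof does not address it either, and for the purposes of the lemma (that $P_i$ is an $i$-th order \emph{stationary} differential operator so that the Laplace transform and semigroup machinery apply) the exact coefficients do not matter. Your treatment of the first-order system and the initial data via $\gamma_0$ is correct.
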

\begin{proof}
  It is a simple computation to verify that \eqref{linear:eq:Sol-Op:L-def}
  and \eqref{linear:eq:Sol-Op:first-order} holds for $P_2$ and $P_1$ as
  defined in \eqref{linear:eq:P-i-def}.
\end{proof}

\begin{definition}  
  Define the \emph{solution operator}
  associated to $\LinearOp$
  \begin{equation*}
    \begin{split}
      \SolOp(\tStar): \LSolHk{1}(\Sigma)&\to \LSolHk{1}(\Sigma),\\
      (h_0, h_1)&\mapsto \evalAt*{(h, \p_{\tStar}h)}_{\Sigma_{\tStar}},
    \end{split}
  \end{equation*}
  be the solution operator of the Cauchy problem in
  \eqref{linear:eq:Sol-Op:Cauchy} mapping the initial data to the solution at time $\tStar$. 
\end{definition}
Recall from Lemma \ref{linear:lemma:EVE:GHC-quasilinear} that $\LinEinstein_{g_b}$
is a strongly hyperbolic linear operator. In subsequent sections, we
use subscripts to denote the specific infinitesimal generator of
interest. As noted by Warnick in \cite{warnick_quasinormal_2015} on
anti-de-Sitter spacetimes, $\SolOp$ defines a $C^0$-semigroup (see
\cite{engel_one-parameter_2000}).

\begin{prop}
  \label{linear:prop:QNM:C0-semigroup}
  Let $\SolOp(\tStar)$ be the solution operator for
  $\LinearOp \psi= 0$ a strongly hyperbolic operator on a
  slowly-rotating \KdS{} background. Then the one-parameter family of
  operators $\SolOp(\tStar)$ defines a $C^0$-semigroup on
  $\LSolHk{k}(\Sigma)$.
\end{prop}
\begin{proof}
  See appendix \ref{linear:appendix:prop:QNM:C0-semigroup}.
\end{proof}
Associated to the $C^0$-semigroup is the closed \emph{infinitesimal
  generator} of the semigroup.
\begin{definition}
  \label{linear:def:inf-gen}
  Define
  \begin{equation*}
    D^k({\InfGen}) := \curlyBrace*{\psi\in
      \LSolHk{k}(\Sigma):\ImagUnit\lim_{\tStar\to 0+}\frac{\SolutionOp(\tStar)\psi -\psi}{\tStar} \in \LSolHk{k}(\Sigma)}
  \end{equation*}
  the domain of
  \begin{equation*}
    {\InfGen}\psi := \ImagUnit\lim_{\tStar\to 0+}\frac{\SolutionOp(\tStar)\psi -\psi}{\tStar}.
  \end{equation*}
  The unbounded operator $(D^k(\InfGen), \InfGen), $ is the
  \emph{infinitesimal generator} of the semigroup
  $\SolutionOp(\tStar)$ on $H^k(\Sigma)$. For $\LinearOp$ as in
  \eqref{linear:eq:Sol-Op:L-def}, we have that
  \begin{equation*}
    \InfGen = \ImagUnit \begin{pmatrix}
      0 &   1\\
      \GInvdtdt P_2&  \GInvdtdt  P_1
    \end{pmatrix}.
  \end{equation*}
\end{definition}

Observe that $(D^k(\InfGen), \InfGen)$ for different values of $k$
differ only in their domain. Moreover,
$D^k(\InfGen)\subset D^{k-1}(\InfGen)$, and that
$(D^k(\InfGen), \InfGen)$ and $(D^{k-1}(\InfGen), \InfGen)$ agree on
$D^k(\InfGen)\bigcap D^{k-1}(\InfGen)$.  We will need the following
classical properties of infinitesimal generators (See Corollary II.1.5
in \cite{engel_one-parameter_2000}):

\begin{prop}
  \label{linear:prop:basic-C0-semigroup-properties}
  The operator $(D^k(\InfGen), \InfGen)$ satisfies the following
  properties. 
  \begin{enumerate}
  \item The domain $D^k(\InfGen)$ is dense in
    $\LSolHk{k}(\Sigma)$.
  \item $(D^k(\InfGen), \InfGen)$ is
    a closed operator.
  \item There exists some $\GronwallExp$ such that the resolvent
    $(\InfGen-\sigma)^{-1}$ exists and is a bounded linear
    transformation of $\LSolHk{k}(\Sigma)$ onto $D^k(\InfGen)$ for
    $\Im(\sigma)>\GronwallExp$. In particular, 
    \begin{equation*}
      \norm*{\SolOp(\tStar)\mathbf{h}}_{\LSolHk{k}(\Sigma)} \lesssim e^{\GronwallExp\tStar}\norm*{\mathbf{h}}_{\LSolHk{k}(\Sigma)}
    \end{equation*}
  \end{enumerate}
\end{prop}
\begin{proof}
  These are well-known properties of infinitesimal
  generators.
\end{proof}

We also have the following relationship between the resolvent
$(\InfGen-\sigma)^{-1}$ and the solution operator.

\begin{lemma}
  \label{linear:lemma:semigroup-resolvent-Cauchy-sol-relation}
  Fix $({h}_0, {h}_1)\in \LSolHk{k}(\Sigma)$, and $\LinearOp$ a
  strongly hyperbolic linear operator on a slowly-rotating \KdS{}
  background, and consider a solution ${h}$ to the Cauchy problem
  \begin{equation*}
    \begin{split}
      \LinearOp {h} &= 0\\
      \gamma_0({h}) &= ({h}_0, {h}_1).  
    \end{split}    
  \end{equation*}
  Moreover, let $(D^k(\InfGen), \InfGen)$ be the infinitesimal
  generator for the $C^0$-solution semigroup $\SolOp(\tStar)$. Then
  for $\Im \sigma > \GronwallExp$ the constant in Proposition
  \ref{linear:prop:basic-C0-semigroup-properties},
  \begin{equation*}
    (\InfGen-\sigma)^{-1}\mathbf{h}_0 = \ImagUnit \int_{\Real^+} e^{\ImagUnit\sigma\sStar}\SolOp(\sStar)\mathbf{h}_0\,d\sStar,\qquad
    \mathbf{h}_0 = \begin{pmatrix}
        {h}_0\\
        {h}_1
      \end{pmatrix}.
  \end{equation*}
\end{lemma}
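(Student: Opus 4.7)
This lemma is the standard identification of the resolvent of the infinitesimal generator of a $C^0$-semigroup with the Laplace transform of that semigroup (see, e.g., Theorem II.1.10 in \cite{engel_one-parameter_2000}). The plan is to adapt this classical argument to our setting, using the growth bound on $\SolOp$ furnished by Proposition \ref{linear:prop:basic-C0-semigroup-properties}.

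The first step is to observe that the integral on the right-hand side converges absolutely in $\LSolHk{k}(\Sigma)$. Indeed, by Proposition \ref{linear:prop:basic-C0-semigroup-properties},
\begin{equation*}
  \norm{e^{\ImagUnit\sigma\sStar}\SolOp(\sStar)\mathbf{h}_0}_{\LSolHk{k}(\Sigma)}
  \lesssim e^{-(\Im\sigma - \GronwallExp)\sStar}\norm{\mathbf{h}_0}_{\LSolHk{k}(\Sigma)},
\end{equation*}
which is integrable on $\Real^+$ precisely because $\Im\sigma>\GronwallExp$. Denote the resulting bounded operator
\begin{equation*}
  R(\sigma)\mathbf{h}_0 := \ImagUnit \int_0^\infty e^{\ImagUnit\sigma\sStar}\SolOp(\sStar)\mathbf{h}_0\,d\sStar \in \LSolHk{k}(\Sigma).
\end{equation*}

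The core step is to show that $R(\sigma)\mathbf{h}_0 \in D^k(\InfGen)$ and that $(\InfGen - \sigma)R(\sigma)\mathbf{h}_0 = \mathbf{h}_0$. For $t>0$, I would apply the semigroup property $\SolOp(\sStar+t) = \SolOp(t)\SolOp(\sStar)$ and perform the change of variables $\sStar \mapsto \sStar+t$ in one of the integrals to compute
\begin{equation*}
  \frac{\SolOp(t) - \mathrm{Id}}{t}R(\sigma)\mathbf{h}_0
  = \frac{e^{-\ImagUnit\sigma t}-1}{t}R(\sigma)\mathbf{h}_0 \;-\; \frac{\ImagUnit e^{-\ImagUnit\sigma t}}{t}\int_0^t e^{\ImagUnit\sigma\sStar}\SolOp(\sStar)\mathbf{h}_0\,d\sStar.
\end{equation*}
Letting $t\to 0^+$, the first term converges to $-\ImagUnit\sigma R(\sigma)\mathbf{h}_0$, while strong continuity of the $C^0$-semigroup (Proposition \ref{linear:prop:QNM:C0-semigroup}) forces the second term to converge to $-\ImagUnit\mathbf{h}_0$ in $\LSolHk{k}(\Sigma)$. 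Multiplying by $\ImagUnit$ as in Definition \ref{linear:def:inf-gen} yields $\InfGen R(\sigma)\mathbf{h}_0 = \sigma R(\sigma)\mathbf{h}_0 + \mathbf{h}_0$, i.e.\ $(\InfGen - \sigma)R(\sigma)\mathbf{h}_0 = \mathbf{h}_0$.

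Finally, since Proposition \ref{linear:prop:basic-C0-semigroup-properties} guarantees that $(\InfGen - \sigma)^{-1}$ exists as a bounded operator on $\LSolHk{k}(\Sigma)$ for $\Im\sigma>\GronwallExp$, applying it to both sides of the identity above gives $R(\sigma)\mathbf{h}_0 = (\InfGen-\sigma)^{-1}\mathbf{h}_0$, which is the claim. The only non-routine step is the interchange of limits hidden in the last convergence; but this is controlled by the uniform bound $\norm{e^{\ImagUnit\sigma\sStar}\SolOp(\sStar)\mathbf{h}_0} \lesssim e^{-(\Im\sigma-\GronwallExp)\sStar}\norm{\mathbf{h}_0}$, so dominated convergence in $\LSolHk{k}(\Sigma)$ applies without difficulty. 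No obstacle specific to the \KdS{} geometry arises here—everything is a direct consequence of the abstract $C^0$-semigroup framework already established.
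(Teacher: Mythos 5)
Your proof is correct. Where the paper invokes Lemma \ref{linear:lemma:semigroup:resolvent-sol-op-relation} (the finite-time identity from Engel--Nagel) and sends $\tStar\to\infty$ --- a step that implicitly uses closedness of $\InfGen-\sigma$ to pass the operator through the limit --- you instead compute the difference quotient of the candidate resolvent $R(\sigma)\mathbf{h}_0$ directly via the semigroup property and a change of variables on the improper integral, obtaining $(\InfGen-\sigma)R(\sigma)\mathbf{h}_0=\mathbf{h}_0$ without ever needing to interchange $\InfGen-\sigma$ with a $\tStar\to\infty$ limit. This is the standard Hille--Yosida/Laplace-transform argument (Engel--Nagel, Theorem II.1.10) and is self-contained, so it buys you independence from the auxiliary finite-time lemma at the cost of reproducing a calculation the paper delegates to the reference; mathematically the two routes are two packagings of the same underlying computation, and both are valid.
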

\begin{proof}
  This is straightforward from applying Lemma
  \ref{linear:lemma:semigroup:resolvent-sol-op-relation} and taking
  $\tStar\to\infty$. 
\end{proof}

\subsection{$\LSolHk{k}$-quasinormal spectrum and modes}
\label{linear:sec:Hk-QNM-Spectrum}
Having established a $C^0$-semigroup, a closed, densely-defined
infinitesimal generator $(D^k(\InfGen), \InfGen)$, we can now analyze
the asymptotic behavior of solutions to $(D_{\tStar}-\InfGen)\psi = 0$
via the quasinormal spectrum.
\begin{definition}
  \label{linear:def:QNM-spectrum}
  Let $\LinearOp$ be a strongly
  hyperbolic operator on a slowly-rotating \KdS{} background, and let
  $(D^k(\InfGen), \InfGen)$ be the infinitesimal generator of the
  associated semigroup on $\LSolHk{k}(\Sigma)$. Then
  $\sigma\in\Complex$ belongs to the $\LSolHk{k}$-\emph{quasinormal
    spectrum} of $\LinearOp$, denoted by $\QNFk{k}(\LinearOp)$, if
  \begin{enumerate}
  \item $\Im\sigma > \frac{1}{2}\max_{\Horizon=\EventHorizonFuture,\CosmologicalHorizonFuture}\left(\SHorizonControl{\LinearOp}[\Horizon] -
      \left(2k + \frac{1}{2}\right)\SurfaceGravity_{\Horizon} \right)$, 
  \item $\sigma$ is in the spectrum of $(D^k(\InfGen), \InfGen)$. 
  \end{enumerate}
  If $\sigma$ is an eigenvalue of $(D^k(\InfGen), \InfGen)$, it is
  called an $\LSolHk{k}$-\emph{quasinormal frequency} and its
  corresponding eigenfunctions, $\LSolHk{k}$-\emph{quasinormal mode
    solutions}. 
\end{definition}
\begin{remark}
  An advantage of this method of Definition \ref{linear:def:QNM-spectrum} is
  that we do not have to construct a meromorphic extension of the
  resolvent. In this methodology, the faster the decay of a
  quasinormal mode, the higher regularity we require in order to study
  it.
\end{remark}
\begin{remark}
  \label{linear:remark:QNM-spectrum:suboptimal-regularity}
  The restriction of the $\LSolHk{k}$-quasinormal spectrum of
  $\LinearOp$ to the half-space
  \begin{equation}
    \label{linear:eq:QNM-spectrum:suboptimal-regularity}
    \Im\sigma > \frac{1}{2}\max_{\Horizon=\EventHorizonFuture,\CosmologicalHorizonFuture}\left(\SHorizonControl{\LinearOp}[\Horizon] -
      \left(2k + \frac{1}{2}\right)\SurfaceGravity_{\Horizon} \right).
  \end{equation}
  In fact, using the methods in this paper, for any fixed
  $\varepsilon>0$, the
  $\LSolHk{k}$-quasinormal spectrum for a strongly hyperbolic operator
  on a sufficiently slowly-rotating \KdS{} background can be shown to be well-defined on
  the half-space
  \begin{equation}
    \label{linear:eq:QNM-spectrum:optimal-regularity}
    \Im\sigma >\frac{1}{2}
  \max_{\Horizon=\EventHorizonFuture,\CosmologicalHorizonFuture}\left(\SHorizonControl{\LinearOp}[\Horizon]
    - \left(2k+1\right)\SurfaceGravity_{\Horizon} + \varepsilon_{\Omega}\right).
  \end{equation}  
  The loss of a $\varepsilon_{\Omega}>0$ when compared to the regularity
  levels in  \cite{warnick_quasinormal_2015} is due to the fact that
  \KdS{} is not a globally stationary spacetime, and is instead only
  locally stationary (for a more in-depth discussion, see Section 5 of \cite{warnick_quasinormal_2015}).

  While the restriction to the half-space in
  \eqref{linear:eq:QNM-spectrum:suboptimal-regularity} is not optimal, it is
  nevertheless consistent with the application of the linear theory
  developed in the current paper to the context of proving nonlinear
  stability of \KdS{} in \cite{fang_nonlinear_2021}.  In particular,
  with the restriction in
  \eqref{linear:eq:QNM-spectrum:suboptimal-regularity}, the threshold
  regularity level $k_0$ in \eqref{linear:eq:threshold-reg-def} for the
  gauged linearized Einstein operator $\LinEinstein_{g_{b}}$ is
  $\frac{5}{2}+O(a)$.
\end{remark}

\subsection{Laplace-transformed operator}
\label{linear:sec:Laplace-transformed-op}

In this section, we define the Laplace-transformed operator and see
how it relates to the infinitesimal generator
$(D^k(\InfGen), \InfGen)$. We derive resolvent estimates for the
Laplace-transformed operator using the vectorfield method in Sections
\ref{linear:sec:energy-estimates} and \ref{linear:sec:ILED}. 

\begin{definition}
  Given a linear operator $L$, we construct the
  \emph{Laplace-transformed operator} of $L$ by:
  \begin{equation*}
    \widehat{\LinearOp}(\sigma)u = \left.e^{\ImagUnit\sigma\tStar}L(e^{-\ImagUnit\sigma\tStar}u)\right\vert_{\Sigma_{\tStar}}.
  \end{equation*}
  Thus, the Laplace transform of a strongly hyperbolic linear operator
  $\LinearOp = P_2 + P_1 D_{\tStar} + \GInvdtdt^{-1} D_{\tStar}^2$ is
  \begin{equation*}
    \widehat{\LinearOp}(\sigma) = P_2 + \sigma P_1 + \sigma^2 \GInvdtdt^{-1}.
  \end{equation*}
\end{definition}

We then define a family of domains for
$\widehat{\LinearOp}(\sigma)$, $D^k(\widehat{\LinearOp}(\sigma))$
for $k\in\mathbb{N}$ to be the closure of $C^\infty_0(\Sigma,
\Complex^D)$ with respect to the graph norm
$\norm{u}_{\InducedHk{k-1}(\Sigma)}^2 +
\norm*{\widehat{\LinearOp}(\sigma)u}_{\InducedHk{k-1}(\Sigma)}^2$, and
$D^k_\sigma(\widehat{\LinearOp}(\sigma))$ for $k\in\mathbb{N}$ to be
the closure of $C^\infty_0(\Sigma, \Complex^D)$ with respect to the
graph norm $\norm{u}_{\InducedHk{k-1}_\sigma(\Sigma)}^2 +
\norm*{\widehat{\LinearOp}(\sigma)u}_{\InducedHk{k-1}_\sigma(\Sigma)}^2$.

\begin{definition}  
  Given $u\in \mathcal{D}'(\Sigma, \Complex^D)$, define the
  \emph{adjoint Laplace-transformed operator}
  $\widehat{\LinearOp}^\dagger(\sigma)$ by
  \begin{equation*}
    \widehat{\LinearOp}^\dagger(\sigma) u
    = \left.e^{\ImagUnit\overline{\sigma}\tStar}\LinearOp^\dagger(e^{-\ImagUnit\overline{\sigma}\tStar}u)\right\vert_{\Sigma_{\tStar}},
  \end{equation*}
  where $\LinearOp^\dagger$ is as defined in \eqref{linear:eq:L-dagger-def}.
\end{definition}

Define the domain of $\widehat{\LinearOp}^\dagger(\sigma)$,
$D^k(\widehat{\LinearOp}^\dagger(\sigma))$, to be the closure of
$\mathcal{D}'(\Sigma; \Complex^D)$ with respect to the graph norm
$\norm{u}_{\InducedHk{k-1}(\Sigma)}^2 +
\norm{\widehat{\LinearOp}^\dagger(\sigma)u}_{\InducedHk{k-1}(\Sigma)}^2$. With this
domain, $\widehat{\LinearOp}^\dagger(\sigma)$ is a closed, densely
defined operator. 

\begin{lemma}
  The operator $\widehat{\LinearOp}^\dagger(\sigma)$ defined above is the
  adjoint of $\widehat{\LinearOp}(\sigma)$ with respect to the inner product
  \begin{equation*}
    \bangle{u_1, u_2}_{\LTwo(\Sigma)} = \int_{\Sigma} u_1\cdot\overline{u_2}.
  \end{equation*}
\end{lemma}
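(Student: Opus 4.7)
The plan is to reduce the claimed adjoint identity to the pointwise divergence identity recorded in Corollary~\ref{linear:corollary:L-dagger-adjoint}, applied to Laplace-modulated stationary lifts of compactly supported test functions. Since $C^\infty_0(\Sigma;\Complex^D)$ is a common dense core for the graph-norm domains $D^k(\widehat{\LinearOp}(\sigma))$ and $D^k(\widehat{\LinearOp}^\dagger(\sigma))$, it suffices to verify
\[
\bangle{\widehat{\LinearOp}(\sigma)u_1, u_2}_{\LTwo(\Sigma)} = \bangle{u_1, \widehat{\LinearOp}^\dagger(\sigma)u_2}_{\LTwo(\Sigma)}
\]
for $u_1, u_2 \in C^\infty_0(\Sigma;\Complex^D)$, and then conclude by density.

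First I would form the stationary extensions $\upsilon_1, \upsilon_2$ of $u_1, u_2$ to $\StaticRegionWithExtension$ in the sense of \eqref{linear:eq:stationary-extension-def}, and introduce the modulated lifts
\[
h_1 := e^{-\ImagUnit\overline{\sigma}\tStar}\upsilon_2, \qquad h_2 := e^{-\ImagUnit\sigma\tStar}\upsilon_1.
\]
The pairing is chosen so that $\overline{h_1} = e^{\ImagUnit\sigma\tStar}\overline{\upsilon_2}$ cancels the $e^{-\ImagUnit\sigma\tStar}$ factor of $h_2$, and the definitions of the Laplace-transformed operators then yield $\LinearOp h_2 = e^{-\ImagUnit\sigma\tStar}\widehat{\LinearOp}(\sigma)u_1$ and $\LinearOp^\dagger h_1 = e^{-\ImagUnit\overline{\sigma}\tStar}\widehat{\LinearOp}^\dagger(\sigma)u_2$. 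Applying Corollary~\ref{linear:corollary:L-dagger-adjoint} with this choice of $h_1, h_2$ and computing each integrand, the oscillating factors cancel pointwise and one obtains
\[
\bangle{u_2, \widehat{\LinearOp}(\sigma)u_1}_{\InducedLTwo(\Sigma)} - \bangle{\widehat{\LinearOp}^\dagger(\sigma)u_2, u_1}_{\InducedLTwo(\Sigma)} = \p_{\tStar}\int_{\Sigma_{\tStar}} K\cdot n_\Sigma + (\text{horizon terms}),
\]
with $K_\mu$ the one-form built from $h_1, h_2$ in that Corollary.

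To conclude I would verify that each term on the right-hand side vanishes. The horizon contributions at $\EventHorizonFuture_-$ and $\CosmologicalHorizonFuture_+$ vanish because $u_1, u_2$ have compact support in the open set $\Sigma$, hence $h_i(\tStar,\cdot)$ have compact spatial support uniformly in $\tStar$; and the $\p_{\tStar}$ term vanishes because the same exponential pairing that cancelled the $\tStar$-dependence in the inner products also renders each bilinear factor in $K_\mu$ stationary (for instance $\overline{h_1}\cdot h_2 = \overline{\upsilon_2}\cdot\upsilon_1$ and $\overline{h_1}\cdot \nabla_\mu h_2 = \overline{\upsilon_2}\cdot(\nabla_\mu \upsilon_1 - \ImagUnit\sigma(\nabla_\mu\tStar)\upsilon_1)$ are both independent of $\tStar$). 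Taking complex conjugates gives the adjoint identity in the $\InducedLTwo(\Sigma)$ pairing, which by \eqref{linear:eq:GInvdtdt-prop} is identified with the $\LTwo(\Sigma)$ pairing in the statement up to the stationary positive weight $\sqrt{\GInvdtdt}$. The main subtlety is the asymmetric placement of $\overline{\sigma}$ in the definition of $\widehat{\LinearOp}^\dagger(\sigma)$: this is precisely what is needed so that $\overline{e^{-\ImagUnit\overline{\sigma}\tStar}} = e^{\ImagUnit\sigma\tStar}$ balances the $e^{-\ImagUnit\sigma\tStar}$ appearing in $h_2$; any other assignment leaves an uncancelled factor $e^{2\Im\sigma\tStar}$ that destroys stationarity of $K$ and prevents the divergence term from vanishing.
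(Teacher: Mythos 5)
Your proof is correct and follows the same route as the paper: apply Corollary~\ref{linear:corollary:L-dagger-adjoint} to Laplace-modulated stationary lifts, observe the oscillating factors cancel, and check that the $\p_{\tStar}$-boundary term and the horizon terms vanish (by stationarity of $K$ and by the compact support of $u_1,u_2$ in the open interval $\Sigma$, respectively). Note that your choice $h_2 = e^{-\ImagUnit\sigma\tStar}\upsilon_1$ is the one that actually makes $\overline{h_1}\cdot h_2$ stationary; the paper's stated $h_2 = e^{\ImagUnit\overline{\sigma}\tStar}\upsilon_2$ leaves an uncancelled factor $e^{2\ImagUnit\Re\sigma\tStar}$ and appears to be a typographical slip, and your write-up also makes explicit the stationarity and compact-support reasoning that the paper compresses into the remark that $\Sigma$ is a compact interval.
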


\begin{proof}
  This follows easily from Corollary \ref{linear:corollary:L-dagger-adjoint}
  applied to
  ${h}_1 = e^{-\ImagUnit\overline{\sigma}\tStar}\upsilon_1$,
  ${h}_2 = e^{\ImagUnit\overline{\sigma}\tStar}\upsilon_2$, where
  $\upsilon_1, \upsilon_2$ are the unique stationary lifts of
  $u_1, u_2 \in C^\infty_0(\Sigma; \Complex^D)$ respectively, and that
  $\Sigma$ is a compact interval.
\end{proof}

The main motivation for considering the Laplace-transformed operator
$\widehat{\LinearOp}(\sigma)$ comes from the following lemma.
\begin{lemma}
  \label{linear:lemma:laplace:inverse-A}
  Let $\LinearOp$ be a strongly hyperbolic operator such that
  \begin{equation*}
    \LinearOp = \ScalarWaveOp[g] + \SubPOp + \PotentialOp
    =P_2+ P_1D_{\tStar} + \frac{1}{\GInvdtdt} D_{\tStar}^2,
  \end{equation*}
  with $P_i$ as defined in \eqref{linear:eq:P-i-def}, and let $\InfGen$ be
  the infinitesimal generator of $\LinearOp$.  Then the resolvent
  $(\InfGen-\sigma)^{-1}$ is a bounded linear operator from
  $\LSolHk{k}(\Sigma)\to D^k(\InfGen)$ if and only if
  $\widehat{\LinearOp}(\sigma)^{-1}:\InducedHk{k-1}(\Sigma)\to
  D^k(\widehat{\LinearOp}(\sigma))$ exists as a bounded operator with
  $D^k(\widehat{\LinearOp}(\sigma))\subset \InducedHk{k}(\Sigma)$. In
  particular,
  \begin{equation}
    \label{linear:eq:laplace:inverse-A}
    (\InfGen - \sigma)^{-1}= 
    \begin{pmatrix}
      -1&0\\
      \sigma& 1
    \end{pmatrix}
    \begin{pmatrix}
      \GInvdtdt^{-1}\widehat{\LinearOp}(\sigma)^{-1}&0\\
      0& 1
    \end{pmatrix}
    \begin{pmatrix}
      \GInvdtdt P_1+ \sigma&1\\
      1& 0
    \end{pmatrix}.
  \end{equation}
\end{lemma}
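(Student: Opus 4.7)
The proof is essentially algebraic: the system $(\InfGen-\sigma)(u_0,u_1)^\top = (h_0,h_1)^\top$ can be reduced to a single scalar equation for $u_0$ whose principal part is (up to a factor of $\GInvdtdt$) precisely $\widehat{\LinearOp}(\sigma)u_0$, with $u_1$ determined algebraically from $u_0$ and $h_0$. The displayed factorization \eqref{linear:eq:laplace:inverse-A} is exactly this reduction, read from right to left: the rightmost factor assembles the right-hand side of the reduced scalar equation from $(h_0,h_1)$; the middle factor inverts $\widehat{\LinearOp}(\sigma)$ (modulated by $\GInvdtdt^{-1}$) to recover $u_0$; the leftmost factor reconstructs the pair $(u_0,u_1)$ via the relation coming from the first component of $\InfGen-\sigma$.

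For the forward direction, assume $\widehat{\LinearOp}(\sigma)^{-1}\colon \InducedHk{k-1}(\Sigma)\to D^k(\widehat{\LinearOp}(\sigma))\subset \InducedHk{k}(\Sigma)$ is bounded. I would define a candidate inverse by the right-hand side of \eqref{linear:eq:laplace:inverse-A} and verify by direct multiplication — using the matrix form of $\InfGen$ from Definition \ref{linear:def:inf-gen} together with $\widehat{\LinearOp}(\sigma)=P_2+\sigma P_1+\sigma^2/\GInvdtdt$ — that it is a two-sided inverse of $\InfGen-\sigma$. Mapping properties then follow by tracking regularity through the three factors: the rightmost factor sends $\LSolHk{k}(\Sigma)=\InducedHk{k}\times\InducedHk{k-1}$ into $\InducedHk{k-1}\times\InducedHk{k}$ because $P_1$ is first-order; applying $\GInvdtdt^{-1}\widehat{\LinearOp}(\sigma)^{-1}$ to the first slot produces an element of $\InducedHk{k}(\Sigma)$ by hypothesis; and the leftmost factor yields a pair in $\InducedHk{k}\times\InducedHk{k-1}$. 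Finally, $(u_0,u_1)$ lies in $D^k(\InfGen)$ since $\InfGen(u_0,u_1)^\top$ is read off from the defining equation $(\InfGen-\sigma)(u_0,u_1)^\top=(h_0,h_1)^\top$ and hence lives in $\LSolHk{k}(\Sigma)$.

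For the reverse direction, suppose $(\InfGen-\sigma)^{-1}\colon\LSolHk{k}(\Sigma)\to D^k(\InfGen)$ is bounded. Given $f\in\InducedHk{k-1}(\Sigma)$, I would apply $(\InfGen-\sigma)^{-1}$ to $(0,f)^\top$ and extract the first component. By the reduction in the first paragraph, that component equals $-\GInvdtdt^{-1}\widehat{\LinearOp}(\sigma)^{-1}f$, showing both that $\widehat{\LinearOp}(\sigma)^{-1}$ exists on $\InducedHk{k-1}$ and that its range sits inside $\InducedHk{k}$ (because $D^k(\InfGen)\subset \LSolHk{k}=\InducedHk{k}\times\InducedHk{k-1}$). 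Boundedness of $\widehat{\LinearOp}(\sigma)^{-1}$ as a map into the graph-norm space $D^k(\widehat{\LinearOp}(\sigma))$ is then immediate from the boundedness of $(\InfGen-\sigma)^{-1}$.

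The only delicate point is the bookkeeping of $\ImagUnit$-factors and signs arising from the convention $D_{\tStar}=\ImagUnit\partial_{\tStar}$ together with the factor of $\ImagUnit$ in the definition of $\InfGen$, and the matching of the two graph-norm domains $D^k(\InfGen)$ and $D^k(\widehat{\LinearOp}(\sigma))$. I do not expect any genuine analytic obstacle: the lemma is an identity of abstract operators once one verifies \eqref{linear:eq:laplace:inverse-A} by direct matrix multiplication, and the regularity statement is a straightforward consequence of counting derivatives in each factor.
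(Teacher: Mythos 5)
Your proposal is correct and follows essentially the same approach as the paper's proof: both rest on the explicit matrix factorization of $\InfGen-\sigma$ (equivalently, of its inverse) into two invertible triangular factors sandwiching the block containing $\GInvdtdt\widehat{\LinearOp}(\sigma)$, verified by direct multiplication, with mapping properties read off factor by factor using the boundedness of $P_1\colon \InducedHk{1}(\Sigma)\to\InducedLTwo(\Sigma)$. The paper writes the factorization of $\InfGen-\sigma$ and inverts the outer factors, while you start from the inverse factorization and verify it directly; these are the same argument.
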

\begin{proof}
  We can directly calculate that
  \begin{equation*}
    (\InfGen -\sigma) = 
    \begin{pmatrix}
      0& 1 \\
      1&- \left(\GInvdtdt P_1+ \sigma\right)
    \end{pmatrix}
    \begin{pmatrix}
      \GInvdtdt \widehat{\LinearOp}(\sigma) & 0\\
      0 & 1
    \end{pmatrix}
    \begin{pmatrix}
      -1&0\\
       \sigma& 1
    \end{pmatrix}.
  \end{equation*}
  The first and third matrices are both invertible, with
  \begin{equation*}
    \begin{pmatrix}
      -1&0\\
      \sigma& 1
    \end{pmatrix} =
    \begin{pmatrix}
      -1&0\\
      \sigma& 1
    \end{pmatrix}^{-1},\qquad
   \begin{pmatrix}
      \GInvdtdt P_1+ \sigma&1\\
      1& 0
    \end{pmatrix} = \begin{pmatrix}
      0& 1 \\
      1&- \left(\GInvdtdt P_1+ \sigma\right)
    \end{pmatrix}^{-1}.
  \end{equation*}
  Checking the domains of definition of all the operators, we see that
  if
  $\widehat{\LinearOp}(\sigma)^{-1}:\InducedHk{k-1}(\Sigma)\to
  D^k(\widehat{\LinearOp}(\sigma))$ is well-defined,
  \begin{equation*}
    (\InfGen - \sigma)^{-1}\circ (\InfGen-\sigma)=\Identity_{D^k(\InfGen)},
    \qquad (\InfGen-\sigma)\circ(\InfGen-\sigma)^{-1}=\Identity_{\LSolHk{k}(\Sigma)},
  \end{equation*}
  and thus, (\ref{linear:eq:laplace:inverse-A}) holds.

  To show that $(\InfGen - \sigma)^{-1}$ is bounded,
  we first recall that $P_1$ is a bounded map from $\InducedHk{1}(\Sigma)$ to
  $\LTwo(\Sigma)$. Now observe that the right-hand side of
  (\ref{linear:eq:laplace:inverse-A}) is bounded if $\widehat{\LinearOp}(\sigma)^{-1}$ is
  a bounded operator from $\InducedHk{k-1}(\Sigma)$ to $\InducedHk{k}(\Sigma)$.
\end{proof}

Lemma \ref{linear:lemma:laplace:inverse-A} allows us to analyze the family of
operators $\widehat{\LinearOp}(\sigma)$ in place of
$(\InfGen - \sigma)$. We have a similar corollary using
$\InducedHk{k}_\sigma(\Sigma)$ norms. 
\begin{corollary}
  Fix a (not necessarily bounded) subset $\Omega\subset \Complex$.
  The family of resolvents $(\InfGen-\sigma)^{-1}$ exists and is a
  bounded linear transformation of $\LSolHk{k}(\Sigma)$ onto
  $D^k(\InfGen)$ for all $\sigma\in\Omega$ if and only if
  $\widehat{\LinearOp}(\sigma)^{-1}:\InducedHk{k-1}_\sigma(\Sigma) \to
  D^k_\sigma(\widehat{\LinearOp}(\sigma))$ exists as a bounded
  operator with
  $D^k_\sigma(\widehat{\LinearOp}(\sigma))\subset
  \InducedHk{k}_\sigma(\Sigma)$ for all $\sigma\in \Omega$.
\end{corollary}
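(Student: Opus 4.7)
The plan is to adapt the proof of Lemma~\ref{linear:lemma:laplace:inverse-A} by carrying the explicit factorization
\[
(\InfGen - \sigma) = \begin{pmatrix} 0 & 1 \\ 1 & -(\GInvdtdt P_1 + \sigma) \end{pmatrix} \begin{pmatrix} \GInvdtdt \widehat{\LinearOp}(\sigma) & 0 \\ 0 & 1 \end{pmatrix} \begin{pmatrix} -1 & 0 \\ \sigma & 1 \end{pmatrix}
\]
through the $\sigma$-weighted Sobolev spaces $\InducedHk{k}_\sigma(\Sigma)$ rather than the plain $\InducedHk{k}(\Sigma)$ scale. The outer matrices are algebraically invertible for every $\sigma$, so the only content is to verify that they remain bounded isomorphisms once the target spaces are switched to their $\sigma$-weighted counterparts.

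First I would check that the rightmost matrix gives a bounded isomorphism from $\LSolHk{k}(\Sigma) = \InducedHk{k}(\Sigma) \times \InducedHk{k-1}(\Sigma)$ onto $\InducedHk{k}(\Sigma) \times \InducedHk{k-1}_\sigma(\Sigma)$. For $(h_0, h_1)$ in the domain, the image $(-h_0, \sigma h_0 + h_1)$ has second component lying in $\InducedHk{k-1}_\sigma(\Sigma)$ by the recursive identity $\norm{\sigma h_0}_{\InducedHk{k-1}_\sigma}^2 \le \norm{h_0}_{\InducedHk{k}_\sigma}^2$ coming directly from Definition~\ref{linear:def:Hk-sigma}. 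Analogously, the leftmost matrix becomes a bounded isomorphism in the other direction; here the nontrivial entry is the first-order operator $\GInvdtdt P_1 + \sigma$, for which the $P_1$ part absorbs a derivative in the usual way and the bare $\sigma$ factor is absorbed by the $\sigma$-weighting built into $\InducedHk{k-1}_\sigma$.

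With these outer mapping properties in hand, bounded invertibility of the middle factor at a given $\sigma \in \Omega$ becomes equivalent to boundedness of $\GInvdtdt^{-1}\widehat{\LinearOp}(\sigma)^{-1} : \InducedHk{k-1}_\sigma(\Sigma) \to D^k_\sigma(\widehat{\LinearOp}(\sigma)) \subset \InducedHk{k}_\sigma(\Sigma)$; since $\GInvdtdt$ is smooth and uniformly bounded above and below by~\eqref{linear:eq:GInvdtdt-prop}, this is precisely the Laplace-side hypothesis. The converse implication is obtained by running the factorization in reverse, using the same weighted-norm identities. No new analytic input beyond Lemma~\ref{linear:lemma:laplace:inverse-A} is required—the corollary is essentially a restatement of that lemma in the natural $\sigma$-weighted norms, with the main (minor) obstacle being the careful bookkeeping of the recursive $\InducedHk{k}_\sigma$ definitions along each factor.
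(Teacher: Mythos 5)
Your proposal is correct and follows the same route as the paper, which simply invokes the factorization formula in Lemma~\ref{linear:lemma:laplace:inverse-A}; you have just carried the bookkeeping of the $\sigma$-weighted norms through each factor explicitly. One minor caution: when you bound the second component of the rightmost matrix via $\norm{\sigma h_0}_{\InducedHk{k-1}_\sigma}^2 \le \norm{h_0}_{\InducedHk{k}_\sigma}^2$, you have tacitly promoted $h_0$ from $\InducedHk{k}$ to $\InducedHk{k}_\sigma$ — harmless at fixed $\sigma$ where the two are equivalent, which is the reading the corollary (and the paper's proof) is operating under, but the constant in that equivalence grows with $|\sigma|$, so the identification would not carry uniform bounds across an unbounded $\Omega$.
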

\begin{proof}
  The proof follows from the expression for $(\InfGen
  -\sigma)^{-1}$ in equation \eqref{linear:eq:laplace:inverse-A}.
\end{proof}

\subsection{$\LSolHk{k}$-quasinormal mode solutions and orthogonality}
\label{linear:sec:Hk-QNM-orthogonality}

In this subsection, we define the $\LSolHk{k}$-quasinormal mode
solutions (also known as resonant states), and establish an
orthogonality condition. We then show how to relate information about
the $\LSolHk{k}$-quasinormal spectrum back to solutions of initial
value problems. 

\begin{assumption}  
  \label{linear:ass:QNM}
  Throughout this section and the remainder of Section \ref{linear:sec:QNM}, we
  assume that the linear operator $\LinearOp$ and the
  infinitesimal generator $\InfGen$ of the solution semigroup of
  $\LinearOp$ 
  satisfy the following properties:
  \begin{enumerate}
  \item $\LinearOp$ is a strongly hyperbolic linear operator on a
    slowly-rotating \KdS{} background, $g_b$.
  \item There exists some $\SpectralGap>0$ and $k_0\ge 0$ such that the
    resolvent
    \begin{equation*}
      (\InfGen-\sigma)^{-1}:\LSolHk{k}(\Sigma)\to D^k(\InfGen)
    \end{equation*} is a meromorphic operator for
    $\Im\sigma>-\SpectralGap$ and $k>k_0$.
  \item Moreover, for the same $\SpectralGap>0$, there exists some $C>0$
    such that the resolvent $(\InfGen-\sigma)^{-1}$ is a uniformly
    bounded operator for $\Im\sigma>-\SpectralGap$ and $\abs*{\sigma}>C$.
  \end{enumerate}
\end{assumption}

\begin{remark}
  Since the $\LSolHk{k}$-quasinormal spectrum coincides with the poles
  of the resolvent $(\InfGen-\sigma)^{-1}$ in the half-plane
  \begin{equation*}
    \Im\sigma > \frac{1}{2}\max_{\Horizon = \EventHorizonFuture, \CosmologicalHorizonFuture} \left(
      \SHorizonControl{\LinearOp}[\Horizon] - \left(2k+\frac{1}{2}\right)\SurfaceGravity_{\Horizon}
    \right),
  \end{equation*}
  and since the poles of a meromorphic function are discrete, the
  second assumption in Assumption \ref{linear:ass:QNM} ensures that the
  $\LSolHk{k}$-quasinormal spectrum of $\LinearOp$ is discrete.
\end{remark}
\begin{remark}
  As will be shown later in Section \ref{linear:sec:intermediary-results},
  the linearized gauged Einstein operator $\LinEinstein_{g_b}$ does
  indeed satisfy the above assumptions. See Theorems
  \ref{linear:thm:meromorphic:main-A} and
  \ref{linear:thm:resolvent-estimate:inf-gen}.
\end{remark}

We define the $\LSolHk{k}$-\emph{resonant states}, or
$\LSolHk{k}$-\emph{quasinormal mode solutions}, of $\LinearOp$. Consider some
${h} = ({h}_0, {h}_1)$ a $\sigma_0$-frequency
$\LSolHk{k}$-quasinormal mode of $\LinearOp$ for some
$\sigma_0\in\QNFk{k}(\LinearOp)$. Using Lemma
\ref{linear:lemma:laplace:inverse-A}, we see that $({h}_0, {h}_1)$ must
satisfy
\begin{equation*}
  \widehat{\LinearOp}(\sigma_0){h}_0 = 0,\qquad -\ImagUnit\sigma_0 {h}_0 = {h}_1.
\end{equation*}

\begin{definition}
  Given some $\sigma_0\in \QNFk{k}(\LinearOp)$, we define the space of
  $\LSolHk{k}$-\emph{quasinormal mode solutions} of $\LinearOp$ with
  frequency $\sigma_0$ to be the set
  \begin{equation*}
    \QNMk{k}(\LinearOp, \sigma_0) = \curlyBrace*{\mathbf{v} = \sum_{k=0}^n e^{-\ImagUnit\sigma_0\tStar}\tStar^k \mathbf{u}_k(x): n\in \Natural_0, (D_{\tStar} - \InfGen)\mathbf{v} = 0, \mathbf{u}_k\in C^\infty(\Sigma)}.
  \end{equation*}
  Furthermore for a subset
  $\Xi\subset\curlyBrace*{\sigma\in\Complex: \Im\sigma >
    \max_{\Horizon=\EventHorizonFuture,\CosmologicalHorizonFuture}\left(\frac{1}{2}\SHorizonControl{\LinearOp}[\Horizon] +
      (\frac{1}{2}-k)\SurfaceGravity_{\Horizon}\right)}$, which contains only finitely many
  resonances of $\LinearOp$, we can define the set of
  $\LSolHk{k}$-quasinormal mode solutions with frequencies in $\Xi$ by 
  \begin{equation*}
    \QNMk{k}(\LinearOp, \Xi) := \bigoplus_{\sigma\in\Xi\bigcap\QNFk{k}(\LinearOp)}\QNMk{k}(\LinearOp, \sigma).
  \end{equation*}  
\end{definition}
It will be useful to have a frequency-space characterization of the
space of $\LSolHk{k}$-quasinormal mode solutions of $\LinearOp$.

\begin{prop}
  \label{linear:prop:QNM:freq-characterization}
  Fix $\sigma_0\in \QNFk{k}(\LinearOp)$. Then an equivalent characterization of
  $\QNMk{k}(\LinearOp, \sigma_0)$ is the set
  \begin{equation}
    \label{linear:eq:QNM:freq-characterization}
    \QNMk{k}(\LinearOp, \sigma_0) = \curlyBrace*{
      \Residue_{\sigma=\sigma_0}\left(e^{-\ImagUnit\sigma\tStar}(\InfGen
      - \sigma)^{-1}p(\sigma)\right): p(\sigma)\in P\left(\sigma, C^\infty(\Sigma)\right)},
  \end{equation}
  where we denote by $P(\sigma, \mathcal{D}'(\Sigma))$ the set of all
  polynomials in $\sigma$ with coefficient in $\mathcal{D}'(\Sigma)$. 
\end{prop}
\begin{proof}
  See appendix \ref{linear:appendix:prop:QNM:freq-characterization}. 
\end{proof}

We can likewise characterize the dual $\LSolHk{k}$-quasinormal mode solutions
of $\LinearOp$ using the $\bangle*{\cdot,\cdot}_{L^2(\Sigma)}$ inner
product,
\begin{equation*}
  \begin{split}
    \QNMk{k*}(\LinearOp, \sigma)
    = \curlyBrace*{
      \mathbf{v} = \sum_{k=0}^{n}e^{-\ImagUnit\overline{\sigma}\tStar}\tStar^k\mathbf{u}_k(x): n\in \Natural_0, (D_{\tStar} - \InfGen^*)\mathbf{v} = 0, \mathbf{u}_k\in\mathcal{D}'(\Sigma)
    }.  
  \end{split}
\end{equation*}
Following the same reasoning as above, we also have the
frequency-space characterization of the dual $\LSolHk{k}$-quasinormal
modes
\begin{equation*}
  \begin{split}
    \QNMk{k*}(\LinearOp, \sigma) 
    ={}&\curlyBrace*{
      \Residue_{\zeta =
        \overline{\sigma}}\left(e^{-\ImagUnit\zeta\tStar}(\InfGen^*-\zeta)^{-1}p(\zeta)\right):
      p(\zeta)\in P(\zeta,
        \mathcal{D}'(\Sigma))
    },  
  \end{split}
\end{equation*}
where $\InfGen^*$ is the infinitesimal generator of the dual solution
semigroup of $\LinearOp$. 

We can then  establish an orthogonality
condition to a finite set of $\LSolHk{k}$-quasinormal frequencies (see
the similar Proposition 5.7, Corollary 5.8 in \cite{hintz_global_2018}
formulated at the level of the Laplace-transformed operator).
\begin{prop}
  \label{linear:prop:lin-theory:lambda-map-def}
  Let $\Xi=\curlyBrace{\sigma_j}_{j=1}^{N_\Xi}\subset
  \QNFk{k}(\LinearOp)$ be a finite set of $\LSolHk{k}$-quasinormal
  frequencies, and fix $\beta > \max\curlyBrace{-\Im \sigma_j:
    \sigma_j\in\Xi}$. 
  Then define the continuous linear map\footnote{The time regularity
    in the definition of $\lambda$ can be improved beyond $H^{-1}$ to
    incorporate more negative order Sobolev spaces. But since these do
  not make an appearance in this paper, we do not optimize this
  further.}
  \begin{equation*}
    \begin{split}
      \lambda: H^{-1, \beta}(\Real_+, \LSolHk{k-1}(\Sigma))&\to\mathcal{L}(\QNMk{k*}(\LinearOp, \Xi), \overline{\Complex}),\\
      F&\mapsto \bangle{F, \cdot}_{L^2(\StaticRegionWithExtension)}, 
    \end{split}
  \end{equation*}
  mapping $F$ to a $\Complex$-anti-linear function on
  $\mathcal{L}(\QNMk{k*}(\LinearOp, \Xi))$.

  Then $\lambda(F) = 0$ if
  and only if $(\InfGen-\sigma)^{-1}\widehat{F}(\sigma)$ is
  holomorphic in a neighborhood of $\Xi$.
\end{prop}

\begin{proof}
  See appendix \ref{linear:appendix:prop:lin-theory:lambda-map-def}.
\end{proof}

\subsection{Quasinormal spectrum and the initial value problem}
\label{linear:sec:QNM-and-IVP}

We are now ready to use the $\LSolHk{k}$-quasinormal spectrum to
analyze the Cauchy problem given by
\begin{equation}
  \label{linear:eq:lin-theory:Cauchy-problem}
  \begin{cases}
    \LinearOp {h} &= f,  \\
    \InitData({h}) &= ({h}_0,{h}_1).
  \end{cases}
\end{equation}
We will require that the forcing term and initial data of the Cauchy
problem in \eqref{linear:eq:lin-theory:Cauchy-problem} have certain decay and
regularity properties.
\begin{definition}
  Let $k,\alpha\in\Real$. Then we define the space of data with
  regularity $k$,and  decay $\alpha$ to be 
  \begin{equation*}
    D^{k,\alpha}(\StaticRegionWithExtension):= H^{k-1,\alpha}(\StaticRegionWithExtension) \oplus \LSolHk{k}(\Sigma_0).
  \end{equation*}  
  We then define the norm
  \begin{equation*}
    \norm{(f, {h}_0, {h}_1)}_{D^{k,\alpha}(\StaticRegionWithExtension)}
    := \norm{f}_{H^{k-1,\alpha}(\StaticRegionWithExtension)}
    + \norm{({h}_0, {h}_1)}_{\LSolHk{k}(\Sigma_0)}. 
  \end{equation*}
\end{definition}

We have the first preliminary expression of ${h}(\tStar)$ in terms of
the resolvent $(\InfGen-\sigma)^{-1}$. 
\begin{lemma}
  \label{linear:lemma:contour-formulation:gronwall}
  Let $\LinearOp$ be a strongly hyperbolic linear operator on a
  slowly-rotating \KdS{} background $g$, and let
  $(f, {h}_0, {h}_1)\in D^{k,\alpha}(\StaticRegionWithExtension)$
  for some $k, \alpha>0$
  . Then if ${h}$ is a solution
  to the Cauchy problem \eqref{linear:eq:lin-theory:Cauchy-problem}, 
  there exists some $\GronwallExp>0$ such that 
  
  \begin{equation}
    \label{linear:eq:Inverse-Laplace-ImSigma=M}
    \mathbf{h}(\tStar)
     = \int_{\Im\sigma=\GronwallExp}e^{-\ImagUnit\sigma\tStar}(\InfGen - \sigma)^{-1}\widehat{F}(\sigma)\,d\sigma, 
   \end{equation}
   where
   \begin{equation}
     \label{linear:eq:F-from-f-ID-def}
     \mathbf{h} =
     \begin{pmatrix}
       {h}\\ \KillT{h}
     \end{pmatrix},\qquad 
     F(\tStar) = \frac{1}{\ImagUnit}
     \begin{pmatrix}
         \delta_0(\tStar){h}_0\\
       - \GInvdtdt f(\tStar) + \delta_0(\tStar){h}_1
     \end{pmatrix}, 
   \end{equation}
   where $\delta_0$ is the Dirac delta, and
   \begin{equation*}
     \widehat{F}(\sigma) = \int_0^{\infty} F(\tStar)e^{\ImagUnit\sigma\tStar}\,d\tStar
   \end{equation*}
   denotes the Laplace transform of $F$.  
\end{lemma}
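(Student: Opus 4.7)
The strategy is to Laplace-transform the first-order reformulation of the Cauchy problem in $\tStar$, invert the transformed operator using the resolvent of $\InfGen$, and then invert the Laplace transform. First, I would use \eqref{linear:eq:Sol-Op:first-order} to rewrite $\LinearOp h = f$ as the first-order inhomogeneous evolution equation
\begin{equation*}
  (\KillT + \ImagUnit \InfGen)\mathbf{h} = G,\qquad G:=\begin{pmatrix}0\\ -\GInvdtdt f\end{pmatrix},
\end{equation*}
subject to $\mathbf{h}(0)=(h_0,h_1)^T$. Extending $\mathbf{h}$ by zero to $\tStar<0$ and interpreting $\KillT$ distributionally, the initial data produce a boundary contribution at $\tStar=0$, giving the distributional equation
\begin{equation*}
  (\KillT + \ImagUnit \InfGen)\mathbf{h}
  = G + \delta_0(\tStar)\,(h_0,h_1)^T
  = \ImagUnit F,
\end{equation*}
with $F$ as in \eqref{linear:eq:F-from-f-ID-def}.

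Next I would use Proposition~\ref{linear:prop:basic-C0-semigroup-properties} to fix a constant $\GronwallExp_0>0$ for which $\|\SolOp(\tStar)\|_{\LSolHk{k}\to\LSolHk{k}}\lesssim e^{\GronwallExp_0\tStar}$ and for which $(\InfGen-\sigma)^{-1}$ is a bounded map $\LSolHk{k}(\Sigma)\to D^k(\InfGen)$ for all $\Im\sigma>\GronwallExp_0$. By Duhamel's principle and this semigroup bound, the solution $\mathbf{h}(\tStar)$ of the Cauchy problem exists and satisfies $\|\mathbf{h}(\tStar)\|_{\LSolHk{k}}\lesssim e^{\GronwallExp_0\tStar}\bigl(\|(h_0,h_1)\|_{\LSolHk{k}}+\|f\|_{H^{k-1,\alpha}}\bigr)$, so both $\widehat{\mathbf{h}}(\sigma)$ and $\widehat{F}(\sigma)$ are absolutely convergent on every line $\Im\sigma = \GronwallExp$ with $\GronwallExp>\GronwallExp_0$. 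Taking the Laplace transform of the equation and using $\KillT \mapsto -\ImagUnit\sigma$ gives
\begin{equation*}
  \ImagUnit\,(\InfGen-\sigma)\,\widehat{\mathbf{h}}(\sigma) = \ImagUnit\,\widehat{F}(\sigma),
\end{equation*}
and inverting $(\InfGen-\sigma)$ on $\Im\sigma=\GronwallExp$ yields $\widehat{\mathbf{h}}(\sigma)=(\InfGen-\sigma)^{-1}\widehat{F}(\sigma)$.

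Finally, I would apply the inverse Laplace transform along the contour $\Im\sigma=\GronwallExp$ to recover \eqref{linear:eq:Inverse-Laplace-ImSigma=M}. The main technical point is justifying convergence of the contour integral in $\Re\sigma$: since $\widehat{F}(\sigma)$ is the Laplace transform of a compactly-supported-in-time distribution (from the $\delta_0$ terms) plus an exponentially decaying function, it is holomorphic for $\Im\sigma>\GronwallExp_0$ and has controlled large-$|\sigma|$ behavior, while the resolvent gains regularity via its realization in Lemma~\ref{linear:lemma:laplace:inverse-A} through the Laplace-transformed operator. The cleanest way to close the argument is to observe that \eqref{linear:eq:Inverse-Laplace-ImSigma=M} coincides with the Duhamel formula $\mathbf{h}(\tStar)=\SolOp(\tStar)(h_0,h_1)^T+\int_0^\tStar \SolOp(\tStar-s)G(s)\,ds$ once one substitutes in the inverse Laplace representation of $\SolOp$ coming from Lemma~\ref{linear:lemma:semigroup-resolvent-Cauchy-sol-relation}; this identification verifies the formula without having to treat the distributional delta data as an independent object. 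The principal obstacle is thus bookkeeping the interchange of integrations and the distributional meaning of $\widehat{F}(\sigma)$, both of which are handled by the uniform Gronwall bound from Proposition~\ref{linear:prop:basic-C0-semigroup-properties}.
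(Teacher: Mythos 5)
Your proof is correct, and it reaches the key identity $\widehat{\mathbf{h}}(\sigma) = (\InfGen - \sigma)^{-1}\widehat{F}(\sigma)$ by a slightly different route than the paper. The paper establishes this by first writing down the Duhamel formula for $\mathbf{h}$ with the forcing $\ImagUnit F$ (so the initial data enter as a $\delta_0$-in-time term inside the convolution), then Laplace-transforming the convolution and invoking Lemma~\ref{linear:lemma:semigroup-resolvent-Cauchy-sol-relation} to replace the transformed propagator $\ImagUnit\int_0^\infty e^{\ImagUnit\sigma s}\SolOp(s)\,ds$ by $(\InfGen-\sigma)^{-1}$. You instead reformulate the Cauchy problem as the distributional first-order equation $(\KillT + \ImagUnit\InfGen)\mathbf{h} = \ImagUnit F$ on all of $\Real_\tStar$ (where the $\delta_0 \mathbf{h}_0$ term arises from the distributional derivative of the zero-extension across $\tStar = 0$), then Laplace-transform the equation directly, using $\widehat{\KillT\mathbf{h}} = -\ImagUnit\sigma\widehat{\mathbf{h}}$ and inverting $(\InfGen-\sigma)$ on the contour $\Im\sigma = \GronwallExp$. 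The algebra checks out and is exactly equivalent to the paper's Duhamel-plus-convolution argument; your approach arguably makes the initial data's role as a distributional source more transparent and avoids a Fubini interchange, while the paper's approach stays entirely at the level of the semigroup and never needs to interpret $\KillT$ distributionally. You also correctly identify that the Duhamel route (which you list as the ``cleanest way to close'') is what the paper does, and you invoke the same $\GronwallExp$-dependent Gronwall bound from Proposition~\ref{linear:prop:basic-C0-semigroup-properties} for the Laplace transform to converge. One small caveat worth noting, though it does not invalidate the argument: the contour integral in \eqref{linear:eq:Inverse-Laplace-ImSigma=M} is not absolutely convergent at the stated regularity (the integrand decays only like $|\sigma|^{-1}$ since $\widehat{F}$ has a constant part from the $\delta_0$ terms and $(\InfGen - \sigma)^{-1}$ gains one power), so the inversion formula holds in a principal-value or weak sense; the paper glosses over this too, and it is harmless for the subsequent contour-deformation arguments which use higher regularity of the data.
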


\begin{proof}  
  From Duhamel's principle we have that
  \begin{equation}
    \label{linear:eq:Inverse-Laplace-ImSigma=M:Duhamel-app}
    \mathbf{h}(\tStar, \cdot) = \int_{0}^{\tStar} \SolOp(\tStar-\sStar) \begin{pmatrix}
      \delta_0(\sStar){h}_0(\cdot)\\
      -  \GInvdtdt f(\sStar, \cdot) + \delta_0(\sStar){h}_1(\cdot)
    \end{pmatrix}\,d\sStar    
  \end{equation}
  is a solution to the Cauchy problem in
  \eqref{linear:eq:lin-theory:Cauchy-problem}. Using Lemma
  \ref{linear:lemma:semigroup:resolvent-sol-op-relation} and letting the
  limits of integration tend to $+\infty$, we have that for
  $\Im \sigma>\GronwallExp$,
  \begin{equation*}
     (\InfGen-\sigma)^{-1}\widehat{F}(\sigma) = \ImagUnit\int_{\Real^+}e^{\ImagUnit\sigma\sStar}\SolOp(\sStar)\widehat{F}(\sigma)\,d\sStar. 
  \end{equation*}
  As a result, Laplace-transforming both sides of
  (\ref{linear:eq:Inverse-Laplace-ImSigma=M:Duhamel-app}), we have that for
  $\Im\sigma> \GronwallExp$,
  \begin{equation*}
    \widehat{\mathbf{h}}(\sigma) = (\InfGen - \sigma)^{-1}\widehat{F}(\sigma).
  \end{equation*}
  As a result at the cost of slightly increasing $\GronwallExp$, we  
  can take the inverse Laplace transform over the contour
  $\Im\sigma=\GronwallExp$, and
  \begin{equation}
    \mathbf{h}(\tStar) = \int_{\Im\sigma=\GronwallExp}e^{-\ImagUnit\sigma\tStar}(\InfGen - \sigma)^{-1}\widehat{F}(\sigma)\,d\sigma,
  \end{equation}
  as desired. 
\end{proof}

Next, we show that under certain conditions on the resolvent
$(\InfGen-\sigma)^{-1}$, solutions to the Cauchy problem decay
exponentially up to a finite number of non-decaying linear obstacles.

\begin{prop}
  \label{linear:prop:asymptotic-exp:general}
  Let $\LinearOp$ be some strongly hyperbolic linear operator
  satisfying the assumptions in Assumption \ref{linear:ass:QNM}. Also fix $k_0>0$ such that
  \begin{equation*}
    \max_{\Horizon = \EventHorizonFuture, \CosmologicalHorizonFuture}\frac{1}{2}\SHorizonControl{\LinearOp}[\Horizon] + \left(\frac{1}{4} - k_0 \right)\SurfaceGravity_{\Horizon} < - \SpectralGap, 
  \end{equation*}
  where $\SpectralGap$ is specified as in Assumption \ref{linear:ass:QNM}.
  
  Furthermore for $k\ge k_0$, let
  $\Xi= \curlyBrace*{\sigma_j}_{j=1}^{N_{\LinearOp}}$ denote the set
  of all $\LSolHk{k}$-quasinormal frequencies of $\LinearOp$ with
  $\Im\sigma>-\SpectralGap$. Let
  $(f, h_0, h_1)\in D^{k+1,\SpectralGap}(\StaticRegionWithExtension)$,
  and let $h$ be the solution to the Cauchy problem given by
  \begin{equation*}
    \begin{split}
      \LinearOp h &= f \\
      \gamma_0(h) &= (h_0, h_1).
    \end{split}
  \end{equation*}
  For $\mathbf{v} =
  \begin{pmatrix}
    h\\ \KillT h
  \end{pmatrix}$, we have that
  \begin{equation}
    \label{linear:eq:asymptotic-exp:general:pre-mode-stability}
    \mathbf{v} = \tilde{\mathbf{v}} + \sum_{j=1}^{N_{\LinearOp}}\sum_{\ell=1}^{d_j}e^{-\ImagUnit\sigma_j\tStar}\tStar^{\ell} \mathbf{u}_{j\ell}(x), 
  \end{equation}
  where $d_j$ is the multiplicity of $\sigma_j$, 
  $\displaystyle \sum_{j=1}^{N_{\LinearOp}}\sum_{\ell=1}^{d_j}e^{-\ImagUnit\sigma_j\tStar}\tStar^\ell
  \mathbf{u}_{j\ell}(x) \in \QNMk{k}(\LinearOp, \Xi)$,  and where
  $\tilde{\mathbf{v}}$ satisfies the decay bound
  \begin{equation*}
    \begin{split}
      \norm{\tilde{\mathbf{v}}}_{\LSolHk{k}(\Sigma_{\tStar})} \lesssim e^{-\SpectralGap\tStar}
        \norm{(f, h_0, h_1)}_{D^{k+1, \SpectralGap}(\StaticRegionWithExtension)},
    \end{split}
  \end{equation*}
  for $k>k_0$, and where there is a continuous mapping 
  \begin{equation*}
     (f, h_0, h_1)\mapsto\sum_{j=1}^{N_{\LinearOp}}\sum_{\ell=1}^{d_j}e^{-\ImagUnit\sigma_j\tStar}\tStar^\ell \mathbf{u}_{j\ell}(x).
  \end{equation*}  
\end{prop}
\begin{remark}
  Observe that as written, the individual summands
  $e^{-\ImagUnit\sigma_j\tStar}\tStar^\ell\mathbf{u}_{j\ell}$ in
  \eqref{linear:eq:asymptotic-exp:general:pre-mode-stability} are not
  necessarily $\LSolHk{k}$-quasinormal modes. However, we can rewrite
  \begin{equation*}
    \sum_{j=1}^{N_{\LinearOp}}\sum_{\ell=1}^{d_j}e^{-\ImagUnit\sigma_j\tStar}\tStar^\ell\mathbf{u}_{j\ell}(x)
    = \sum_{j=1}^{N_{\LinearOp}}\sum_{\ell=1}^{d_j}a_{j\ell}\left(\sum_{k=0}^{n_{j\ell}}e^{-\ImagUnit\sigma_j\tStar}\tStar^k\tilde{\mathbf{u}}_{j\ell k}\right),
  \end{equation*}
  where $a\in \Real$, and
  $\sum_{k=0}^{n_{j\ell}}e^{-\ImagUnit\sigma_j\tStar}\tStar^k\tilde{\mathbf{u}}_{j\ell
    k}$
    are indeed $\LSolHk{k}$-quasinormal mode solutions. 
\end{remark}
\begin{proof}
  Recall from Lemma \ref{linear:lemma:contour-formulation:gronwall} that
  denoting by $\InfGen$ the infinitesimal generator of the solution
  semigroup associated to $\LinearOp$, we
  can write
  \begin{equation*}
    \mathbf{h}(\tStar):= \begin{pmatrix}
      {h}(\tStar)\\
      \KillT{h}(\tStar)
    \end{pmatrix}
    = \int_{\Im\sigma=\GronwallExp}e^{-\ImagUnit\sigma\tStar}(\InfGen - \sigma)^{-1}\widehat{F}(\sigma)\,d\sigma,   
  \end{equation*}
  where
  \begin{equation*}
    F(\tStar, x) =\frac{1}{\ImagUnit}
     \begin{pmatrix}
         \delta_0(\tStar){h}_0\\
       - \GInvdtdt f(\tStar) + \delta_0(\tStar){h}_1
     \end{pmatrix}.
  \end{equation*}
  Recall from Lemma \ref{linear:lemma:contour-formulation:gronwall} that
  $(\InfGen - \sigma)^{-1}$ is a holomorphic family of operators for
  $\Im\sigma>\GronwallExp$. We have that
  $(\InfGen-\sigma)^{-1}\widehat{F}(\sigma)$ is meromorphic with only
  finitely many poles on the half-plane $\Im\sigma> -\SpectralGap$, we
  now deform the contour of the inverse Fourier transform in
  \eqref{linear:eq:Inverse-Laplace-ImSigma=M} from $\Im\sigma = \GronwallExp$
  to $\Im\sigma = -\SpectralGap$. To this end, consider the contours
  \begin{align*}
    \gamma_{\pm C} = \{\pm C + \ImagUnit s: -\SpectralGap\le s\le \GronwallExp\}.
  \end{align*}
  Using the construction of the Laplace-transformed norms
  \begin{equation}
    \label{linear:eq:asymptotic-expansion:aux1}
    \norm*{(\InfGen - \sigma)^{-1}\widehat{F}(\sigma)}_{\LSolHk{k}_\sigma(\Sigma)}
    \lesssim \frac{1}{(1+|\sigma|)} \norm*{(\InfGen - \sigma)^{-1}\widehat{F}(\sigma)}_{\LSolHk{k+1}_\sigma(\Sigma)}.
  \end{equation}
  By assumption $(\InfGen-\sigma)^{-1}$ is a uniformly bounded
  operator for $\Im\sigma>-\SpectralGap$, $|\sigma|>C$. Thus we know
  that for $C$ and $k$ sufficiently large,
  \begin{equation}
    \label{linear:eq:asymptotic-expansion:aux2}
    \norm*{(\InfGen - \sigma)^{-1}\widehat{F}(\sigma)}_{\LSolHk{k}_{\sigma}(\Sigma)}
    \lesssim \norm*{\widehat{F}(\sigma)}_{\LSolHk{k}_{\sigma}(\Sigma)}
    \lesssim \norm*{(f, h_0, h_1)}_{D^{k+1, -\Im\sigma}(\StaticRegionWithExtension)}.
  \end{equation}
  As a result, 
  \begin{equation*}
    \norm*{\int_{\gamma_{\pm C}}e^{-\ImagUnit\sigma\tStar}(\InfGen - \sigma)^{-1}\widehat{F}(\sigma)}_{\LSolHk{k}_{\sigma}(\Sigma)}\,d\sigma
    \lesssim \frac{e^{\GronwallExp\tStar}}{1+C}\norm*{(f, h_0, h_1)}_{D^{k+1, \SpectralGap}(\StaticRegionWithExtension)},
  \end{equation*}
  and for $k$ sufficient large, it is clear that for all $\tStar>0$,
  \begin{equation*}
    \lim_{C\to\infty}\int_{\gamma_{\pm C}}e^{-\ImagUnit\sigma\tStar}(\InfGen - \sigma)^{-1}\widehat{F}(\sigma)\,d\sigma = 0.
  \end{equation*}
  Then using Cauchy's integral formula we can perturb the contour of
  integration to obtain that
  \begin{align}
    \label{linear:asymptotic-expansion:proof:main}
    \mathbf{h}(\tStar, \cdot)
    ={} \int_{\Im\sigma = -\SpectralGap} e^{-\ImagUnit\sigma\tStar}(\InfGen-\sigma)^{-1}\widehat{F}(\sigma)\,d\sigma
    + \sum_{1\le j\le N_{\LinearOp}} \Residue_{\zeta = \sigma_j}\left(
      e^{-\ImagUnit\zeta\tStar}(\InfGen-\zeta)^{-1}\widehat{F}(\zeta)
      \right).
  \end{align}
  From 
  \eqref{linear:eq:asymptotic-expansion:aux2}, we see that
  \begin{equation*}
    \norm*{\int_{\Im\sigma = -\SpectralGap} e^{-\ImagUnit\sigma\tStar}(\InfGen-\sigma)^{-1}\widehat{F}(\sigma)\,d\sigma}_{\LSolHk{k}_{\sigma}(\Sigma)}
      \lesssim e^{-\SpectralGap \tStar}\norm*{(f, h_0, h_1)}_{D^{k+1, \SpectralGap}(\StaticRegionWithExtension)},
  \end{equation*}
  while we can recall from Proposition
  \ref{linear:prop:QNM:freq-characterization} that
  \begin{equation*}
    \sum_{1\le j\le N_{\LinearOp}} \Residue_{\zeta = \sigma_j}\left(
      e^{-\ImagUnit\zeta\tStar}(\InfGen-\zeta)^{-1}\widehat{F}(\zeta)
    \right)\in \QNFk{k}(\LinearOp, \Xi)
  \end{equation*}
  where $\Xi$ is some open subset containing all the non-decaying
  $\LSolHk{k}$-quasinormal modes of $\LinearOp$. Thus, $h$ has exactly
  the desired form in \eqref{linear:eq:asymptotic-exp:general:pre-mode-stability}.
\end{proof}

Next we show that we can produce exponentially decaying solutions to
the Cauchy problem provided we can modify the forcing term and the
initial data within some finite-dimensional space of modifications
(compare with Corollary 5.8 in \cite{hintz_global_2018}).
\begin{corollary}
  \label{linear:coro:lambda-IVP}
  Let $\LinearOp$ be some strongly hyperbolic linear operator
  satisfying Assumption \ref{linear:ass:QNM}, and fix $k_0>0$ such that
  \begin{equation*}
    \max_{\Horizon = \EventHorizonFuture, \EventHorizonFuture}\frac{1}{2}\SHorizonControl{\LinearOp}[\Horizon] + \left(\frac{1}{2} - k_0 \right)\SurfaceGravity_{\Horizon} < - \SpectralGap,
  \end{equation*}
  with $\SpectralGap$ as in Assumption \ref{linear:ass:QNM}.  
  
  Then, for $k> k_0$, let
  $\Xi= \curlyBrace*{\sigma_j}_{j=1}^{N_{\LinearOp}}$ denote the set
  of all $\LSolHk{k}$-quasinormal frequencies of $\LinearOp$ with
  $\Im\sigma>-\SpectralGap$, and let
  $\ZCal\subset D^{k,\SpectralGap}(\StaticRegionWithExtension)$ be a
  finite-dimensional linear subspace. We define the map
  \begin{equation}
    \label{linear:eq:lambda-IVP-def}
    \begin{split}
      \lambda_{IVP}: D^{k,\SpectralGap}(\StaticRegionWithExtension)
      &\to \mathcal{L}(\QNMk{k*}(\LinearOp, \Xi), \overline{\Complex})\\
      \lambda_{IVP}(f, {h}_0, {h}_1)&:= \lambda\left(
        \frac{1}{\ImagUnit}
        \begin{pmatrix}
          \delta_0 h_0\\
          -\GInvdtdt f+\delta_0{h}_1
        \end{pmatrix}\right),
    \end{split}
  \end{equation}
  where $\lambda$ is as constructed in Proposition
  \ref{linear:prop:lin-theory:lambda-map-def},  $\lambda_{\ZCal}$ denotes
  its restriction to $\ZCal$, and $\delta_0 = \delta_0(\tStar)$
  denotes the Dirac delta,
  \begin{equation*}
    \lambda_{\ZCal}: \ZCal \to
    \mathcal{L}(\QNMk{k*}(\LinearOp, \Xi), \overline{\Complex}),
    \qquad \lambda_{\ZCal}:=\lambda_{IVP}\vert_{\ZCal}.
  \end{equation*}

  Then, if $\lambda_{\ZCal}$ is surjective, for any choice of
  $(f, {h}_0, {h}_1)\in D^{k,
    \SpectralGap}(\StaticRegionWithExtension)$, there exists an
  element $z = (\tilde{f}, \tilde{{h}}_0, \tilde{{h}}_1) \in \ZCal$
  such that the initial value problem
  \begin{equation*}
    \begin{split}
      \LinearOp {h} &= f + \tilde{f},\\
      \gamma_0({h}) &= ({h}_0 + \tilde{{h}}_0, {h}_1 + \tilde{{h}}_1)
    \end{split}
  \end{equation*}
  has an exponentially decaying solution ${h}$ that satisfies the
  estimate
  \begin{equation}
    \label{linear:eq:lambda-IVP-solution-inequality}
    \norm{h}_{\HkWithT{k}(\Sigma_{\tStar})}
    \lesssim e^{-\SpectralGap \tStar} \norm{(f, {h}_0, {h}_1)}_{D^{k+1,\SpectralGap}(\StaticRegionWithExtension)}. 
  \end{equation}
  If moreover,
  $\lambda_{\ZCal}$ is bijective, then $z$ is unique and the map $(f,
  {h}_0, {h}_1) \to z$ is continuous.
\end{corollary}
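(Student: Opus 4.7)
The plan is to combine the asymptotic expansion in Proposition \ref{linear:prop:asymptotic-exp:general} with the orthogonality criterion of Proposition \ref{linear:prop:lin-theory:lambda-map-def} to view $\lambda_{IVP}$ as the obstruction to exponential decay of a solution launched by the data $(f,h_0,h_1)$. Indeed, by Lemma \ref{linear:lemma:contour-formulation:gronwall} the solution $\mathbf{h}$ of the Cauchy problem can be written as a contour integral of $e^{-\ImagUnit\sigma\tStar}(\InfGen-\sigma)^{-1}\widehat{F}(\sigma)$ along $\Im\sigma=\GronwallExp$, where $F$ is built from $(f,h_0,h_1)$ exactly as in the definition \eqref{linear:eq:lambda-IVP-def} of $\lambda_{IVP}$. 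Proposition \ref{linear:prop:lin-theory:lambda-map-def} then says that $\lambda_{IVP}(f,h_0,h_1)=0$ if and only if $(\InfGen-\sigma)^{-1}\widehat{F}(\sigma)$ extends holomorphically across $\Xi$, which (by the contour deformation already performed in the proof of Proposition \ref{linear:prop:asymptotic-exp:general}) is exactly what allows us to slide the contour down to $\Im\sigma=-\SpectralGap$ without picking up residues, producing an exponentially decaying solution with the desired estimate \eqref{linear:eq:lambda-IVP-solution-inequality}.

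With this reformulation, the surjective case is immediate. Given arbitrary $(f,h_0,h_1)\in D^{k,\SpectralGap}(\StaticRegionWithExtension)$, compute the functional $\lambda_{IVP}(f,h_0,h_1)\in \mathcal{L}(\QNMk{k*}(\LinearOp,\Xi),\overline{\Complex})$. Since $\lambda_{\ZCal}$ is surjective onto this (finite-dimensional) target, we may choose $z=(\tilde f,\tilde h_0,\tilde h_1)\in \ZCal$ with $\lambda_{\ZCal}(z)=-\lambda_{IVP}(f,h_0,h_1)$. By linearity of $\lambda_{IVP}$, the modified data $(f+\tilde f,h_0+\tilde h_0,h_1+\tilde h_1)$ lies in $\ker\lambda_{IVP}$, so the associated solution $h$ falls into the exponentially decaying summand $\tilde{\mathbf v}$ of the decomposition \eqref{linear:eq:asymptotic-exp:general:pre-mode-stability}, and the quantitative bound \eqref{linear:eq:lambda-IVP-solution-inequality} is inherited directly from Proposition \ref{linear:prop:asymptotic-exp:general}.

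For the bijective case, uniqueness of $z$ is automatic: if two choices $z,z'\in\ZCal$ both produce exponentially decaying solutions, then $z-z'\in \ker\lambda_{\ZCal}=\{0\}$. Continuity of the map $(f,h_0,h_1)\mapsto z=-\lambda_{\ZCal}^{-1}\lambda_{IVP}(f,h_0,h_1)$ follows from the continuity of $\lambda_{IVP}$ as a map from $D^{k,\SpectralGap}(\StaticRegionWithExtension)$ into the finite-dimensional space $\mathcal{L}(\QNMk{k*}(\LinearOp,\Xi),\overline{\Complex})$ (which is in turn a consequence of Proposition \ref{linear:prop:lin-theory:lambda-map-def}) together with the fact that $\lambda_{\ZCal}^{-1}$ is a bounded linear map between the two finite-dimensional normed spaces $\mathcal{L}(\QNMk{k*}(\LinearOp,\Xi),\overline{\Complex})$ and $\ZCal$. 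There is no substantive obstacle here; the content of the corollary lies entirely in Propositions \ref{linear:prop:asymptotic-exp:general} and \ref{linear:prop:lin-theory:lambda-map-def}, and the corollary itself is a clean bookkeeping step of Fredholm type, packaging those results into a form directly usable when one later uses finite-dimensional modifications of the initial data (via infinitesimal diffeomorphisms and modifications of the \KdS{} parameters) to absorb the finitely many non-decaying quasinormal obstacles.
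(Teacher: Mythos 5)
Your proposal is correct and follows essentially the same route as the paper: you perform the contour deformation from Proposition \ref{linear:prop:asymptotic-exp:general}, invoke Proposition \ref{linear:prop:lin-theory:lambda-map-def} to characterize the vanishing of the residues, use surjectivity of $\lambda_{\ZCal}$ to cancel $\lambda_{IVP}(f,h_0,h_1)$, and obtain uniqueness and continuity from bijectivity together with finite-dimensionality. The only (cosmetic) difference is that you make the holomorphy characterization explicit before deforming the contour, whereas the paper writes down the residue decomposition \eqref{linear:eq:phi-contour-deformation:freq-form} and then cancels the residue sum directly.
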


\begin{proof}
  As in Proposition \ref{linear:prop:asymptotic-exp:general},
  we perturb the contour of integration in
  \eqref{linear:eq:Inverse-Laplace-ImSigma=M} to write that for $\mathbf{h} =
    \begin{pmatrix}
      {h}\\ \KillT{h}
    \end{pmatrix}$, where ${h}$ is a solution to
    \eqref{linear:eq:lin-theory:Cauchy-problem},
  \begin{equation}
    \label{linear:eq:phi-contour-deformation:freq-form}
    \mathbf{h}(\tStar, \cdot)
    = \int_{\Im\sigma = -\SpectralGap} e^{-\ImagUnit\sigma\tStar}(\InfGen-\sigma)^{-1}\widehat{F}(\sigma)\,d\sigma
    + \sum_{1\le j\le N_{\LinearOp}} \Residue_{\zeta = \sigma_j}\left(
      e^{-\ImagUnit\zeta\tStar}(\InfGen-\zeta)^{-1}\widehat{F}(\zeta)
    \right),
  \end{equation}
  where $F(\tStar, x)$ is as defined in \eqref{linear:eq:F-from-f-ID-def}. 

  Now we consider finding some $z\in\ZCal$ for which
  \begin{equation*}
    \lambda_{\ZCal}(z) =
    - \lambda_{IVP}(f, {h}_0, {h}_1)\in \mathcal{L}(\QNMk{k*}(\LinearOp,
    \Xi), \overline{\Complex}). 
  \end{equation*}
  This is clearly possible if $\lambda_{\ZCal}$ is surjective. If in
  addition, $\lambda_{\ZCal}$ is bijective, then we can consider the
  mapping $(f, {h}_0, {h}_1)\mapsto z$ defined by
  \begin{equation*}
    \lambda_{\ZCal}^{-1}\circ \lambda_{IVP}(f, {h}_0, {h}_1) = z
  \end{equation*}
  which is clearly a linear and continuous, and therefore bounded,
  map. Moreover, if
  \begin{equation*}
    z = \lambda_{\ZCal}^{-1}\circ \lambda_{IVP}(f,
    {h}_0, {h}_1) = (\tilde{f}, \tilde{{h}}_0, \tilde{{h}}_1),
  \end{equation*}
  then by construction
  \begin{equation*}
    \tilde{F} =  \frac{1}{\ImagUnit}\begin{pmatrix}
      \delta_0({h}_0+\tilde{{h}}_0)\\
      -\GInvdtdt (f + \tilde{f}) + \delta_0({h}_1 + \tilde{{h}}_1)
    \end{pmatrix},
  \end{equation*}
  satisfies 
  \begin{equation*}
    \sum_{1\le j\le N_{\LinearOp}} \Residue_{\zeta = \sigma_j}\left(
      e^{-\ImagUnit\zeta\tStar}(\InfGen-\zeta)^{-1}\widehat{\tilde{F}}(\zeta)
    \right) = 0.
  \end{equation*}
  As a result, using \eqref{linear:eq:phi-contour-deformation:freq-form}, we
  have that
  \begin{equation*}
    \mathbf{h}(\tStar, \cdot)
    = \int_{\Im\sigma = -\SpectralGap} e^{-\ImagUnit\sigma\tStar}(\InfGen-\sigma)^{-1}\widehat{\tilde{F}}(\sigma)\,d\sigma,
  \end{equation*}
  and the bound in \eqref{linear:eq:lambda-IVP-solution-inequality}
  immediately follows from Plancherel.
\end{proof}

\subsection{Perturbation theory of the quasinormal spectrum}
\label{linear:sec:QNM-perturbation-theory}

In this section, we explore the perturbation theory for the
$\LSolHk{k}$-quasinormal frequencies and modes. To this end, let us
consider a family of stationary strongly hyperbolic linear operators
$\{\LinearOp_w\}_{w\in W}$ on a family of slowly-rotating \KdS{}
backgrounds $b(w)$, where $W\subset \Real^{N_W}$ is a finite
dimensional open neighborhood of some fixed $w_0\in \Real^{N_W}$.

Let $\InfGen_w$ be the infinitesimal generator associated
to the operator $\LinearOp_w$. In what follows, we will apply the
results of this section to both the gauged linearized Einstein
operator $\LinEinstein_{g_b}$, and the constraint propagation operator
$\ConstraintPropagationOp_{g_b}$, perturbing results obtained on a
fixed \SdS{} background to a nearby \KdS{} background (see Section
\ref{linear:sec:KdS-QNM-perturb}). 

The following proposition is a collection of the main basic
perturbation theory we will use, and is analogous to Proposition 5.11
in \cite{hintz_global_2018}. We have restated the result below in
terms of the infinitesimal generators $\InfGen_w$. 
\begin{prop}
  \label{linear:prop:QNM-perturb:gen}
  Let $k>0$ for some fixed $k_0$ such that
  \begin{equation}
    \label{linear:eq:QNM-perturb:gen:threshold-reg-def}
    \sup_{w\in W; \Horizon = \EventHorizonFuture, \CosmologicalHorizonFuture}\left(\frac{1}{2}\SHorizonControl{\LinearOp_{b(w)}}[\Horizon] -\left( k_0  - \frac{1}{4}\right) \SurfaceGravity_{b(w), \Horizon}\right) < 0,
  \end{equation}
  and let
  \begin{equation*}
    \Omega\subset \curlyBrace*{\sigma\in\Complex: \Im\sigma >\max_{\Horizon=\EventHorizonFuture, \CosmologicalHorizonFuture}
      \frac{1}{2}\SHorizonControl{\LinearOp_{w_0}}[\Horizon] -
      \left(k-\frac{1}{4}\right)\SurfaceGravity_{b_0,\Horizon}
    } 
  \end{equation*}
  be a non-empty pre-compact set such that
  \begin{equation*}
    \QNFk{k}(\LinearOp_{w_0}) \cap \p\Omega = \emptyset.
  \end{equation*}
  Then the following hold.
  \begin{enumerate}[label=(\roman{enumi})]
  \item \label{linear:prop:QNM-perturb:gen:item1} The set $I:=\curlyBrace{(w,\sigma)\in W\times \Omega :
      (\InfGen_w - \sigma)^{-1}\text{ is bounded}}$ is open. 
  \item \label{linear:prop:QNM-perturb:gen:item2}Let
    $\mathcal{L}_{weak}(X_1, X_2)$ denote the space of bounded linear
    operators mapping $X_1\to X_2$ equipped with the weak operator
    topology, and $\mathcal{L}_{op}(X_1, X_2)$ the space of bounded
    linear operators mapping $X_1 \to X_2$ equipped with the norm
    topology.  Then the map
    \begin{equation*}
      I\ni (w,\sigma)\mapsto (\InfGen_w - \sigma)^{-1} \in
      \mathcal{L}_{weak}(\LSolHk{k}(\Sigma), D^{k}(\InfGen))
    \end{equation*}
    is
    continuous for all $k>k_0$ and also as a map into
    $\mathcal{L}_{op}(\LSolHk{k+\epsilon}(\Sigma), D^{k-\epsilon}(\InfGen))$.
  \item \label{linear:prop:QNM-perturb:gen:item3} The set $\QNFk{k}(\LinearOp_w)\bigcap \Omega$ depends
    continuously on $w$ in the Hausdorff distance sense, and the total
    rank
    \begin{equation*}
      d:= \sum_{\sigma\in \QNFk{k}(\LinearOp_w)\bigcap\Omega}\text{rank}_{\zeta=\sigma} (\InfGen_w-\zeta)^{-1}
    \end{equation*}
    is constant for $w$ near $w_0$. 
  \item \label{linear:prop:QNM-perturb:gen:item4} The total space of quasinormal modes
    $\QNMk{k}(\LinearOp_w, \Omega)\subset C^\infty(\StaticRegionWithExtension)$
    depends continuously on $w$ in the sense that there exists a
    continuous map
    \begin{equation*}
      W\times \Complex^d\to C^\infty(\StaticRegionWithExtension)
    \end{equation*}
    such that $\QNFk{k}(\LinearOp_w, \Omega)$ is the image of
    $\{w\}\times \Complex^d$. 
  \item \label{linear:prop:QNM-perturb:gen:item5} Likewise, the total space
    of dual states
    $\QNMk{k*}(\LinearOp_w, \Omega)\subset
    H_{loc}^{1-k_0}(\StaticRegionWithExtension)$ depends continuously
    on $w$.
  \end{enumerate}
\end{prop}
\begin{proof}
  See appendix \ref{linear:appendix:prop:QNM-perturb:gen}.
\end{proof}

\section{Main Theorem} \label{linear:sec:main-theorem}

\subsection{Statement of the main theorem}
We are now ready to state the main theorem.
\begin{theorem}[Main Theorem, version 2]
  \label{linear:thm:Main}
  Fix $k>k_0$, where $k_0$ is as defined in
  \eqref{linear:eq:threshold-reg-def}. Then, let
  $(\underline{g}', k')\in \Hk{k+1}(\Sigma_0; S^2T^*\Sigma_0)\oplus
  \Hk{k}(\Sigma_0; S^2T^*\Sigma_0)$ be solutions of the linearized constraint
  equations linearized around the initial data
  $(\InducedMetric_{b}, k_{b})$ of a slowly rotating \KdS{} background
  $(\StaticRegionWithExtension, g_b)$. Then there exists a solution
  ${h}$ to the initial value problem
  \begin{equation*}
    \begin{cases}
      \LinEinstein_{g_b} {h} =0 &\text{in }\StaticRegionWithExtension,\\
      \gamma_0({h}) = D_{(\InducedMetric_b, k_b)}i_{b,\Identity}(\InducedMetric', k')&\text{on }\Sigma_0,
    \end{cases}
  \end{equation*}
  with $i_{b, \Identity}$ defined in Proposition
  \ref{linear:prop:initial-data:ib-construction}.  Moreover, there exists some
  $\SpectralGap>0$,  some finite-dimensional family of 1-forms
  $\Theta\in C^\infty(\StaticRegionWithExtension)$ parametrized by
  $\vartheta: \Real^{N_\Theta}\to \Theta$, and some
  $b'\in T_bB, \omega\in \Real^{N_{\Theta}}$ such that
  \begin{equation*}
    {h} = g_b'(b') +  \nabla_{g_b}\otimes \vartheta(\omega) + \tilde{{h}}, \qquad
    g_b'(b'):= \frac{\p g_b}{\p b}b',
  \end{equation*}
  where
  $\tilde{{h}}$ satisfies the bounds
  \begin{align*}
    \sup_{\tStar}e^{\SpectralGap\tStar}\norm{\tilde{h}}_{\HkWithT{k}(\Sigma_{\tStar})}
    \lesssim  \norm*{\InducedMetric'}_{\Hk{k+1}(\Sigma_0)} + \norm*{k'}_{\Hk{k}(\Sigma_0)},
  \end{align*}
  and $b'$ and $\omega$ are small in the sense that
  \begin{equation*}
    \abs*{b'} + \abs*{\omega}
    \lesssim   \norm*{\InducedMetric'}_{\Hk{k+1}(\Sigma_0)} + \norm*{k'}_{\Hk{k}(\Sigma_0)}. 
  \end{equation*}
\end{theorem}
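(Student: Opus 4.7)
The plan is to combine Proposition \ref{linear:prop:asymptotic-exp:general} with a mode stability statement for $\LinEinstein_{g_b}$ from Section \ref{linear:sec:mode-stability}. I would first use Corollary \ref{linear:corollary:initial-data:ib-linearized} to convert the linearized initial data $(\InducedMetric', k')$ into correctly gauged Cauchy data $(h_0, h_1) = D_{(\InducedMetric_b, k_b)}i_{b, \Identity}(\InducedMetric', k')$, satisfying the quantitative bound
\begin{equation*}
  \norm{(h_0, h_1)}_{\LSolHk{k+1}(\Sigma_0)} \lesssim \norm{\InducedMetric'}_{\Hk{k+1}(\Sigma_0)} + \norm{k'}_{\Hk{k}(\Sigma_0)}.
\end{equation*}
Hyperbolic well-posedness for the strongly hyperbolic operator $\LinEinstein_{g_b}$ (Lemma \ref{linear:lemma:EVE:GHC-quasilinear}) then produces a unique Cauchy solution $h$, and Lemma \ref{linear:lemma:EVE:linearized-constraint-prop} propagates the initial linearized gauge condition $\Constraint_{g_b} h\vert_{\Sigma_0} = 0$ throughout $\StaticRegionWithExtension$.

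Next, Theorems \ref{linear:thm:meromorphic:main-A} and \ref{linear:thm:resolvent-estimate:inf-gen}, which repackage the ILED estimates of Section \ref{linear:sec:ILED} as spectral information, verify Assumption \ref{linear:ass:QNM} for $\LinEinstein_{g_b}$, furnishing a spectral gap $\SpectralGap > 0$ together with uniform high-frequency bounds on $(\InfGen - \sigma)^{-1}$. Applying Proposition \ref{linear:prop:asymptotic-exp:general} to $h$ with $f = 0$ then yields
\begin{equation*}
  h = \tilde{h} + \sum_{\sigma_j \in \Xi} \sum_{\ell} e^{-\ImagUnit \sigma_j \tStar} \tStar^\ell u_{j\ell}(x),
\end{equation*}
where $\Xi = \QNFk{k}(\LinEinstein_{g_b}) \cap \{\Im \sigma > -\SpectralGap\}$ is finite, the modal sum lies in $\QNMk{k}(\LinEinstein_{g_b}, \Xi)$, the remainder $\tilde{h}$ obeys the exponential decay estimate demanded by the theorem, and the map from initial data to the modal sum is continuous.

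The decisive step, and the main obstacle of the proof, is to identify every element of $\QNMk{k}(\LinEinstein_{g_b}, \Xi)$ as $g_b'(b') + \nabla_{g_b} \otimes \vartheta$ for some $b' \in T_b B$ and some $\vartheta$ in a fixed finite-dimensional space $\Theta$ of 1-forms. This is the mode stability theorem of Section \ref{linear:sec:mode-stability}, which begins with the Kodama--Ishibashi geometric mode stability of \SdS{} (Theorem \ref{linear:thm:Kodama-Ishibashi}) and propagates this classification to slowly rotating \KdS{} via the perturbation theory of Proposition \ref{linear:prop:QNM-perturb:gen}. Rather than introducing constraint damping as in Hintz--Vasy, one must track the constraint propagation equation $\ConstraintPropagationOp_{g_b}(\Constraint_{g_b} h) = 0$ jointly with $\LinEinstein_{g_b}$ under the $b$-perturbation to rule out the emergence of new non-decaying modes as $a$ is turned on; this joint perturbative analysis is the delicate core of the argument.

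Once this classification is in hand, $\QNMk{k}(\LinEinstein_{g_b}, \Xi)$ is linearly parametrised by $(b', \omega) \in T_b B \oplus \Real^{N_\Theta}$ via $(b', \omega) \mapsto g_b'(b') + \nabla_{g_b} \otimes \vartheta(\omega)$; composing a continuous right inverse of this parametrisation with the continuous data-to-modes map from Proposition \ref{linear:prop:asymptotic-exp:general} and the bound from Corollary \ref{linear:corollary:initial-data:ib-linearized} delivers $|b'| + |\omega| \lesssim \norm{\InducedMetric'}_{\Hk{k+1}(\Sigma_0)} + \norm{k'}_{\Hk{k}(\Sigma_0)}$, concluding the proof.
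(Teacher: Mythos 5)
Your route through the proof is essentially the paper's: obtain correctly gauged Cauchy data via Corollary \ref{linear:corollary:initial-data:ib-linearized}, propagate the gauge constraint by Lemma \ref{linear:lemma:EVE:linearized-constraint-prop}, verify Assumption \ref{linear:ass:QNM} via Theorems \ref{linear:thm:meromorphic:main-A} and \ref{linear:thm:resolvent-estimate:inf-gen}, extract an asymptotic expansion, and then appeal to mode stability. The paper packages the last steps inside Proposition \ref{linear:prop:linear-stab:KdS-robust} (resp.\ Corollary \ref{linear:corollary:linear-stab:Theta-b-independent}), which builds the decomposition $h = g_b'(b') + \nabla_{g_b}\otimes\vartheta + \tilde h$ directly by choosing a finite-dimensional initial-data modification $z \in \ZCal$ making $\lambda_{\ZCal}$ bijective; you instead propose to apply Proposition \ref{linear:prop:asymptotic-exp:general} to the unmodified solution and then classify the resulting non-decaying modal sum. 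Both routes draw on the same ingredients, and yours is closer to the ``first'' proof given for the \SdS{} case in Theorem \ref{linear:thm:lin-stability:SdS}.

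However, your ``decisive step'' is misstated in a way that would not go through as written. You claim one must ``identify every element of $\QNMk{k}(\LinEinstein_{g_b}, \Xi)$ as $g_b'(b') + \nabla_{g_b}\otimes\vartheta$,'' i.e.\ that every non-decaying quasinormal mode of $\LinEinstein_{g_b}$ is geometric. This is false, and it is not what Section \ref{linear:sec:mode-stability} proves. Theorem \ref{linear:thm:mode-stability-v1} and Proposition \ref{linear:prop:KdS-QNM-perturb} establish only a \emph{dichotomy}: every non-decaying mode is either a geometric mode (satisfying $\Constraint_{g_b}h = 0$ and of the form $g_b'(b') + \nabla_{g_b}\otimes\omega$) or a constraint mode (for which $\Constraint_{g_b}h \neq 0$). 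The constraint modes decidedly do \emph{not} have the form $g_b'(b') + \nabla_{g_b}\otimes\vartheta$, so classifying them as such is not available. What is needed—and what your plan hints at but does not articulate—is a separate argument ruling out the constraint modes in \emph{your particular} expansion: because the initial data arise from $i_{b,\Identity}$, the gauge constraint $\Constraint_{g_b}h$ vanishes identically; writing $0 = \Constraint_{g_b}\tilde h + \Constraint_{g_b}(\text{modal sum})$ and noting that the first term decays like $e^{-\SpectralGap\tStar}$ while the second is a finite linear combination of $e^{-\ImagUnit\sigma_j\tStar}\tStar^\ell$ with $\Im\sigma_j > -\SpectralGap$, linear independence of these time-profiles forces $\Constraint_{g_b}(\text{modal sum}) = 0$, so each frequency component of the modal sum is a geometric mode. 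Only then does the parametrisation by $(b',\omega)$ apply. In the paper this step is absorbed into the conclusion ``$\upsilon = 0$'' in Proposition \ref{linear:prop:linear-stab:KdS-robust}; in your unpacked version it must be supplied explicitly, and without it the plan would stall at the classification of the modal sum.
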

The main theorem in the case of $g_{b}=g_{b_0}$ a \SdS{} metric will
be proven in Section \ref{linear:sec:lin-stability:SdS}, and the general
slowly-rotating case in Section \ref{linear:sec:proof-of-main-thm}.

\begin{corollary}
  \label{linear:coro:Main:eventual-constraint}
  Fix $k$ as in Theorem \ref{linear:thm:Main}, and
  $(f, h_0, h_1)\in
  D^{k,\SpectralGap}(\StaticRegionWithExtension)$, where
  $\SpectralGap$ is as in Theorem \ref{linear:thm:Main}. Moreover, assume
  that there exists some $\TStar>0$ such that for all $\tStar>\TStar$,
  \begin{equation*}
    f(\tStar, \cdot) = 0 ,\qquad \left.\Constraint_{g_b}(h)\right\vert_{\Sigma_{\tStar}}=0,
  \end{equation*}
  and let $h$ be the solution to the Cauchy problem
  \begin{equation}
    \label{linear:eq:Main:eqns}
    \begin{split}
      \LinEinstein_{g_b}h &= f,\\
      \gamma_0(h) &= (h_0, h_1). 
    \end{split}
  \end{equation}
  Then, there exists some
  $\SpectralGap>0$,  some finite-dimensional family of 1-forms
  $\Theta\in C^\infty(\StaticRegionWithExtension)$ parametrized by
  $\vartheta: \Real^{N_\Theta}\to \Theta$, and some
  $b'\in T_bB, \omega\in \Real^{N_{\Theta}}$ such that
  \begin{equation*}
    {h} = g_b'(b') +  \nabla_{g_b}\otimes \vartheta(\omega) + \tilde{{h}}, \qquad
    g_b'(b'):= \frac{\p g_b}{\p b}b',
  \end{equation*}
  where
  $\tilde{{h}}$ satisfies the bounds
  \begin{align*}
    \sup_{\tStar}e^{\SpectralGap\tStar}\norm{\tilde{h}}_{\HkWithT{k}(\Sigma_{\tStar})}
    \lesssim  \norm*{\InducedMetric'}_{\Hk{k+1}(\Sigma_0)} + \norm*{k'}_{\Hk{k}(\Sigma_0)},
  \end{align*}
  and $b'$ and $\omega$ are small in the sense that
  \begin{equation}
    \label{linear:eq:Main:param-continuity}
    \abs*{b'} + \abs*{\omega}
    \lesssim   \norm*{\InducedMetric'}_{\Hk{k+1}(\Sigma_0)} + \norm*{k'}_{\Hk{k}(\Sigma_0)}. 
  \end{equation}
\end{corollary}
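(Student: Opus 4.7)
The plan is to reduce the statement to Theorem \ref{linear:thm:Main} via a restart of the Cauchy problem at the slice $\Sigma_{\TStar}$. By hypothesis, $\Constraint_{g_b}(h)|_{\Sigma_{\tStar}} = 0$ for every $\tStar > \TStar$, so the linearized gauge constraint vanishes identically on $\{\tStar > \TStar\}$. By the linearized analogue of Proposition \ref{linear:prop:EVE:GHC-EVE-equivalence} (see Lemma \ref{linear:lemma:EVE:linearized-constraint-prop}), $h$ then solves the ungauged linearized Einstein equations on that region, and in particular the linearized metric and second fundamental form $(\InducedMetric', k')$ induced by $h$ on $\Sigma_{\TStar}$ satisfy the linearized constraint equations.

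Let $h^{(1)}$ be the solution furnished by Theorem \ref{linear:thm:Main} applied at $\Sigma_{\TStar}$ with Cauchy data $D_{(\InducedMetric_b, k_b)} i_{b, \Identity}(\InducedMetric', k')$, so that
\[
  h^{(1)} = g_b'(b') + \nabla_{g_b}\otimes\vartheta(\omega) + \tilde h^{(1)},
\]
with $\tilde h^{(1)}$ decaying as $e^{-\SpectralGap\tStar}$. Both $h|_{\{\tStar>\TStar\}}$ and $h^{(1)}$ solve $\LinEinstein_{g_b}(\cdot)=0$, induce the same $(\InducedMetric', k')$ on $\Sigma_{\TStar}$, and satisfy the linearized gauge constraint there; their difference is therefore pure gauge, i.e.\ $h - h^{(1)} = \nabla_{g_b}\otimes\varpi$ for a $1$-form $\varpi$ determined (modulo Killing freedom) by Cauchy data on $\Sigma_{\TStar}$. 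Since $\LinEinstein_{g_b}(\nabla_{g_b}\otimes\varpi) = -\tfrac{1}{2}\nabla_{g_b}\otimes\Box_{g_b}^\Upsilon\varpi$, the $1$-form $\varpi$ must satisfy $\Box_{g_b}^\Upsilon\varpi = 0$ (cf.\ \eqref{linear:eq:linearized-gauge-fixer}). Because $\Box_{g_b}^\Upsilon$ is strongly hyperbolic, Proposition \ref{linear:prop:asymptotic-exp:general}, combined with the geometric mode stability analysis of $\Box_{g_b}^\Upsilon$ in Section \ref{linear:sec:mode-stability}, decomposes $\varpi$ into a finite sum of non-decaying quasinormal modes, which is absorbed into $\vartheta(\omega)$ after redefining $\omega$, plus an exponentially decaying remainder absorbed into $\tilde h$.

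The continuity bound \eqref{linear:eq:Main:param-continuity} (whose right-hand side should be read as $\norm{(f, h_0, h_1)}_{D^{k+1,\SpectralGap}(\StaticRegionWithExtension)}$, the appearance of $(\InducedMetric', k')$ being a vestige of the statement of Theorem \ref{linear:thm:Main}) follows by composing the estimate of Theorem \ref{linear:thm:Main} for $h^{(1)}$ with standard Killing and redshift energy estimates from Section \ref{linear:sec:energy-estimates} on the bounded time interval $[0,\TStar]$; these control $(\InducedMetric',k')$ on $\Sigma_{\TStar}$ and the restart data of $\varpi$ by the $D^{k+1,\SpectralGap}$-norm of $(f,h_0,h_1)$, and Theorem \ref{linear:thm:Main} together with Proposition \ref{linear:prop:asymptotic-exp:general} propagates the bounds to $\abs{b'}+\abs{\omega}$ and to the $\HkWithT{k}(\Sigma_{\tStar})$-norm of the remainder.

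The main obstacle is confirming that the $\vartheta$-family from Theorem \ref{linear:thm:Main} is rich enough to absorb \emph{every} non-decaying mode of the auxiliary $1$-form $\varpi$. This is precisely guaranteed by the construction of $\vartheta$ in Section \ref{linear:sec:mode-stability}, where it is identified as a basis for the space of non-decaying quasinormal modes of the constraint-propagation / gauge-fixer operator $\Box_{g_b}^\Upsilon$ on $1$-forms; without such a compatibility, one would obtain only a decomposition into $g_b'(b')$, $\vartheta(\omega)$, a further pure-gauge non-decaying piece, and an exponentially decaying tail.
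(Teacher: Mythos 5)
Your approach is a legitimate alternative route, but it is considerably more roundabout than what the machinery of the paper invites, and a few of its internal steps are left at a level of sketch that would need substantive work to close. Let me first describe the direct argument and then point to the gaps in yours.

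The paper's framework suggests the following shorter proof. Apply Corollary \ref{linear:coro:asymptotic-expansion} (or equivalently Proposition \ref{linear:prop:linear-stab:KdS-robust} together with Corollary \ref{linear:coro:lambda-IVP}) to decompose $h = \tilde{h} + h_{\mathrm{QNM}}$ where $\tilde h$ decays like $e^{-\SpectralGap\tStar}$ and $h_{\mathrm{QNM}}\in \QNMk{k}(\LinEinstein_{g_b},\UpperHalfSpace)$ is a finite sum of terms $e^{-\ImagUnit\sigma_j\tStar}\tStar^{\ell}u_{j\ell}(x)$ with $\Im\sigma_j\ge 0$. Then $\Constraint_{g_b}(h_{\mathrm{QNM}}) = \Constraint_{g_b}(h)-\Constraint_{g_b}(\tilde h)$; the first summand vanishes for $\tStar>\TStar$ by hypothesis and the second decays exponentially, so $\Constraint_{g_b}(h_{\mathrm{QNM}})$ decays. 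But $\Constraint_{g_b}(h_{\mathrm{QNM}})$ is again a finite linear combination of $e^{-\ImagUnit\sigma_j\tStar}\tStar^{\ell}v_{j\ell}(x)$ with $\Im\sigma_j\ge 0$, and a nonzero such combination cannot decay exponentially; hence $\Constraint_{g_b}(h_{\mathrm{QNM}})=0$ identically, i.e.\ $h_{\mathrm{QNM}}\in\QNMGeok{k}(\LinEinstein_{g_b},\UpperHalfSpace)$. Proposition \ref{linear:prop:KdS-QNM-perturb} and the construction of $\Theta$ in Section \ref{linear:sec:mode-stability} then give $h_{\mathrm{QNM}} = g_b'(b') + \nabla_{g_b}\otimes\vartheta(\omega)$, and the continuity estimate follows from Corollary \ref{linear:coro:lambda-IVP} as in the proof of Theorem \ref{linear:thm:Main}. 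This argument uses nothing beyond what is already built and does not restart the Cauchy problem.

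Your proposal has three points that would need to be tightened before it is complete. First, the claim that $h|_{\{\tStar>\TStar\}}$ and $h^{(1)}$ differ by a pure gauge $\nabla_{g_b}\otimes\varpi$ invokes linearized geometric uniqueness — that two solutions of the ungauged linearized Einstein equation inducing the same $(\InducedMetric',k')$ on a Cauchy slice differ by an infinitesimal diffeomorphism. This is not stated or proved in the paper (Proposition \ref{linear:prop:EVE:GHC-EVE-equivalence} and Lemma \ref{linear:lemma:EVE:linearized-constraint-prop} do not provide it), so you need to either supply it or make $\varpi$ explicit by solving the gauge-fixing system \eqref{linear:eq:linearized-gauge-fixer} with the correct Cauchy data. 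Second, from $\LinEinstein_{g_b}(\nabla_{g_b}\otimes\varpi)=0$ one obtains only $\nabla_{g_b}\otimes\Box^\Upsilon_{g_b}\varpi = 0$, i.e.\ that $\Box^\Upsilon_{g_b}\varpi$ is Killing, not that it vanishes; this is fixable (the Killing one-forms span a finite-dimensional stationary space) but must be addressed. Third, your final paragraph correctly identifies the most delicate step — ensuring that the non-decaying modes of $\varpi$ under $\nabla_{g_b}\otimes$ land inside $g_b'(T_bB) + \nabla_{g_b}\otimes\Theta$ — but treats it as a compatibility check rather than proving it. The required statement is that $\nabla_{g_b}\otimes$ maps non-decaying QNMs of $\Box^\Upsilon_{g_b}$ bijectively (modulo Killing) onto $\QNMGeok{k}(\LinEinstein_{g_b},\UpperHalfSpace)$, which is essentially the content of Proposition \ref{linear:prop:KdS-QNM-perturb}; if you invoke that, you should invoke it explicitly rather than gesturing at "the construction of $\vartheta$ in Section \ref{linear:sec:mode-stability}". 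None of these gaps is conceptual — the route can be made to work — but the argument as written is incomplete in ways that the direct asymptotic-expansion argument avoids entirely.
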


\subsection{Main intermediary results}
\label{linear:sec:intermediary-results}

In this section, we describe the main intermediary steps taken in
proving Theorem \ref{linear:thm:Main}. 
\begin{enumerate}
\item A Fredholm alternative result on the
  $\LSolHk{k}$-quasinormal modes for the gauged linearized Einstein
  operator $\LinEinstein_{g_b}$, which gives us the discreteness of the
  $\LSolHk{k}$-quasinormal spectrum, and the meromorphy of the
  resolvent $(\InfGen-\sigma)^{-1}$. See Theorem
  \ref{linear:thm:meromorphic:main-A}. 
\item A high-frequency estimate for the resolvent
  $(\InfGen-\sigma)^{-1}$ on a region of the complex plane, proving
  the existence of a spectral gap underneath the real axis, and
  $\LSolHk{k}$-quasinormal-mode-free regions of $\Complex$. See
  Theorem \ref{linear:thm:resolvent-estimate:inf-gen}. 
\item A mode stability statement for the ungauged linearized Einstein
  operator, which allows us to characterize the non-decaying modes of
  the gauged linearized Einstein operator as unphysical. See Theorem
  \ref{linear:thm:mode-stability-v1}. 
\end{enumerate}
\begin{remark}
  We remark that the first two steps are the latter two assumptions in
  Assumption \ref{linear:ass:QNM}. 
\end{remark}

Recall that we denote by $\LinEinstein_{g_b}$ the gauged
linearized Einstein operator around $g_b$,
\begin{equation*}
  \LinEinstein_{g_b}  = D_{g_b}(\Ric-\Lambda) - \nabla_{g_b}\otimes \Constraint_{g_b},
\end{equation*}
which by Lemma \ref{linear:lemma:EVE:GHC-quasilinear} is a strongly
hyperbolic operator. To $\LinEinstein_{g_b}$, we can associate the
solution operator $\SolOp_{g_b}(\tStar)$, and the infinitesimal generator
of the solution semigroup $\InfGen_{g_b}$. The first result we will need is a
Fredholm alternative for $\InfGen_{g_b}$. 

\begin{theorem}
  \label{linear:thm:meromorphic:main-A}
  For $g=g_b$ a sufficiently
  slowly-rotating \KdS{} background, let
  $\LinEinstein_{g_b} = \LinEinstein$ be the gauged linearized
  Einstein operator linearized around $g_b$. Let
  $(D^k(\InfGen_{g_b}), \InfGen_{g_b}) = (D^k(\InfGen), \InfGen)$ be
  the infinitesimal generator of the solution semigroup of
  $\LinEinstein_{g_b}$ on $\LSolHk{k}(\Sigma)$. Then, for
  \begin{equation*}
    2\Im(\sigma)> \max_{\Horizon=\EventHorizonFuture,\CosmologicalHorizonFuture}
    \left(\SHorizonControl{\LinEinstein}[\Horizon]
      - \left( 2k + \frac{1}{2} \right)\SurfaceGravity_{\Horizon}
      \right),
  \end{equation*}
  one of the following must be true:
  \begin{enumerate}
  \item $\sigma$ is in the resolvent set of
    $(D^k(\InfGen), \InfGen)$, 
  \item $\sigma$ is an eigenvalue of $(D^k(\InfGen),
    \InfGen)$ with finite multiplicity. 
  \end{enumerate}
  In particular, the latter possibility is true only for isolated
  values of $\sigma$. This implies that the resolvent is meromorphic
  on the specified half plane, and that the residues at the poles are
  finite rank operators.
\end{theorem}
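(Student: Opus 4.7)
The plan is to pass from the infinitesimal generator $\InfGen$ to the Laplace-transformed operator $\widehat{\LinEinstein}(\sigma)$ via Lemma \ref{linear:lemma:laplace:inverse-A}, prove a Fredholm-type a priori estimate for $\widehat{\LinEinstein}(\sigma)$ on $\InducedHk{k}_{\sigma}(\Sigma)$, and conclude via the analytic Fredholm theorem.

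First I would establish an a priori estimate of the form
\begin{equation*}
  \norm{u}_{\InducedHk{k}_{\sigma}(\Sigma)}
  \lesssim
  \norm{\widehat{\LinEinstein}(\sigma)u}_{\InducedHk{k-1}_{\sigma}(\Sigma)}
  + \norm{u}_{\InducedLTwo(\Sigma)},
\end{equation*}
valid in the half-plane specified by the theorem. To do this I apply the divergence identity \eqref{linear:eq:div-thm:spacelike} to $h = e^{-\ImagUnit \sigma \tStar} u$ with the multiplier triple $(X,q,m)$ taken to be a combination of the almost-Killing vectorfield $\TFixer$ (for the bulk term) and the redshift vectorfield $\RedShiftN$ (to control the horizon fluxes). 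At order $k$, I commute with $\RedShiftK^{\alpha}$, $|\alpha|\le k$, from Lemma \ref{linear:lemma:enhanced-redshift:Ka-construction}; the key computation is that the resulting commutator with $\ScalarWaveOp[g_b]$ produces at each horizon a factor of $(2k+\tfrac12)\SurfaceGravity_{\Horizon}$ weighting $|\RedShiftK_1^{k}u|^{2}$, while the contribution from $\SubPOp$ contributes at worst $\SHorizonControl{\LinEinstein}[\Horizon]$. The hypothesis
\begin{equation*}
  2\Im\sigma > \max_{\Horizon}\Bigl(\SHorizonControl{\LinEinstein}[\Horizon] - (2k+\tfrac12)\SurfaceGravity_{\Horizon}\Bigr)
\end{equation*}
is precisely what is needed to make this combined horizon boundary term positive after absorbing the $\sigma$-dependent contribution from the $D_{\tStar}$ in $\LinEinstein$, so the horizon fluxes become coercive. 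Combining with the bulk control from the almost-Killing multiplier (up to the $\InducedLTwo$ error arising from the non-Killing part $\DeformationTensor{\TFixer} = O(a)$) yields the claimed estimate.

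Next I repeat exactly the same argument for the formal adjoint $\widehat{\LinEinstein}^{\dagger}(\sigma)$, which by Corollary \ref{linear:corollary:L-dagger-adjoint} and the remark after \eqref{linear:eq:L-dagger-def} is itself a strongly hyperbolic operator of the same type, giving
\begin{equation*}
  \norm{v}_{\InducedHk{k}_{\sigma}(\Sigma)}
  \lesssim
  \norm{\widehat{\LinEinstein}^{\dagger}(\sigma)v}_{\InducedHk{k-1}_{\sigma}(\Sigma)}
  + \norm{v}_{\InducedLTwo(\Sigma)}.
\end{equation*}
Since $\Sigma$ is pre-compact and the estimate provides full $\InducedHk{k}_{\sigma}$ regularity control modulo a $\InducedLTwo$ error, the compactness of the embedding $\InducedHk{k}_{\sigma}(\Sigma)\hookrightarrow \InducedLTwo(\Sigma)$ implies that $\widehat{\LinEinstein}(\sigma):D^{k}_{\sigma}(\widehat{\LinEinstein}(\sigma))\to \InducedHk{k-1}_{\sigma}(\Sigma)$ has closed range and finite-dimensional kernel, and the parallel statement for the adjoint gives finite-dimensional cokernel. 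Hence $\widehat{\LinEinstein}(\sigma)$ is Fredholm of index zero for every $\sigma$ in the stated half-plane.

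Finally, for the meromorphy, the map $\sigma\mapsto\widehat{\LinEinstein}(\sigma) = P_{2} + \sigma P_{1} + \sigma^{2}\GInvdtdt^{-1}$ is a polynomial, hence holomorphic, family of Fredholm operators of index zero. By Proposition \ref{linear:prop:basic-C0-semigroup-properties}, $(\InfGen-\sigma)^{-1}$ exists for $\Im\sigma > \GronwallExp$, so via Lemma \ref{linear:lemma:laplace:inverse-A} the operator $\widehat{\LinEinstein}(\sigma)^{-1}$ exists in this half-plane, meaning the holomorphic Fredholm family is invertible at at least one point. The analytic Fredholm theorem then furnishes a meromorphic inverse $\widehat{\LinEinstein}(\sigma)^{-1}$ in the entire region with only discrete poles of finite rank. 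Pulling this back through Lemma \ref{linear:lemma:laplace:inverse-A} gives the stated dichotomy for $(\InfGen-\sigma)^{-1}$ and the finite-rank meromorphic structure.

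The main obstacle is the a priori estimate: isolating the precise sign condition at the horizons requires a careful commutator computation with $\RedShiftK_{1}$, whose deformation tensor degenerates exactly in the null direction needed to extract a factor of $\SurfaceGravity_{\Horizon}$ per commutation (cf.\ item 5 of Lemma \ref{linear:lemma:enhanced-redshift:Ka-construction}). The precise threshold $(2k+\tfrac12)\SurfaceGravity_{\Horizon}$ emerges only after matching the contribution of $\sigma$ from $\SolOp$'s $D_{\tStar}$ factor with the $\SubPOp$ contribution $\SHorizonControl{\LinEinstein}[\Horizon]$ through $k$ successive commutations, with the $\tfrac12$ coming from the asymmetry of the redshift deformation tensor.
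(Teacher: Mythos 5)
Your proposal is correct in spirit and lands in the same place, but it takes a noticeably different route through the Fredholm theory than the paper does, so it is worth flagging the contrast.

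The paper (Section \ref{linear:sec:meromorphic-continuation}) does not prove a Fredholm-type a priori estimate with a lower-order error term. Instead it introduces a $\gamma$-shift: Theorems \ref{linear:thm:meromorphic:injective} and \ref{linear:thm:meromorphic:surjective} establish, for $\gamma$ large, the \emph{clean} injectivity estimates
\begin{equation*}
  \norm{u}_{\InducedHk{1}(\Sigma)} \lesssim \norm*{(\widehat{\LinEinstein}(\sigma) - \gamma)u}_{\InducedLTwo(\Sigma)},\qquad
  \norm{v}_{\InducedHk{1}(\Sigma)} \lesssim \norm*{(\widehat{\LinEinstein}^\dagger(\sigma) - \gamma)v}_{\InducedLTwo(\Sigma)},
\end{equation*}
with the large $\gamma$ used precisely to absorb the lower-order terms that would otherwise remain. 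This gives genuine invertibility of the shifted operator, from which the paper constructs $A(\sigma) = -\lambda(\widehat{\LinEinstein}(\sigma)-\lambda)^{-1}$, shows it is compact by Rellich--Kondrachov, and applies the standard analytic Fredholm theorem (Theorem \ref{linear:thm:AnalyticFredholm}) to $1 - A(\sigma)$, for which the index-zero property is automatic. You instead prove a Peetre-type estimate $\norm{u}_{\InducedHk{k}_\sigma}\lesssim \norm{\widehat{\LinEinstein}(\sigma)u}_{\InducedHk{k-1}_\sigma} + \norm{u}_{\InducedLTwo}$, deduce semi-Fredholm character for the operator and its adjoint, and invoke a version of the analytic Fredholm theorem for holomorphic Fredholm families. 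Both routes work, but the paper's $\gamma$-shift buys you a cleaner reference operator, avoids needing Peetre's lemma, and makes the index-zero claim immediate rather than relying on local constancy of the index plus an invertibility point. Note that in your write-up the claim "Fredholm of index zero" precedes your identification of an invertibility point via Proposition \ref{linear:prop:basic-C0-semigroup-properties}; the logic should be that the index is locally constant and vanishes at the invertibility point, so this ordering should be reversed (or you should explicitly argue continuity of the index).

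Two smaller imprecisions in your narrative of the threshold: first, Theorem \ref{linear:thm:redshift-commutation:main} shows each commutation with the $\RedShiftK_i$ improves $\SHorizonControl$ by $2\SurfaceGravity_{\Horizon}$ rather than $\SurfaceGravity_{\Horizon}$ (your final $(2k+\tfrac12)\SurfaceGravity_{\Horizon}$ is consistent with this, but the intermediate phrase "a factor of $\SurfaceGravity_\Horizon$ per commutation" is not). Second, the residual $\tfrac12$ in the paper does not come from an asymmetry of the redshift deformation tensor; it is budgeted explicitly from the choices $\varepsilon_{\RedShiftN}, \epsilon < \tfrac14 \max_{\Horizon}\SurfaceGravity_{\Horizon}$ in the redshift estimate (see the proof of Theorem \ref{linear:thm:meromorphic:injective}), and the higher-order range $(2k+\tfrac12)$ is then obtained by iterating the $\gamma$-shifted invertibility result together with a density argument in Theorem \ref{linear:thm:meromorphic:extension:main}, not by a single commuted estimate at level $k$.
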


\begin{remark}
  When compared to the approach of proving the stability of \KdS{} in
  \cite{hintz_global_2018}, and the preceding works in
  \cite{hintz_global_2016, vasy_microlocal_2013, hintz_resonance_2017,
    hintz_semilinear_2015},
  this theorem is equivalent to the idea of proving a
  \textit{meromorphic continuation}. A slight difference is that in
  the mentioned works, a meromorphic continuation is typically
  directly proven for the entire lower half-plane, while we will only
  need a meromorphic continuation that extends a finite amount below
  the real axis.

  Also, note that like Warnick
  \cite{warnick_quasinormal_2015,gajic_quasinormal_2019}, we construct
  the quasinormal modes as true eigenvalues of an operator on a
  Hilbert space, rather than as the poles of a cutoff resolvent. 
\end{remark}

\begin{proof}
  See Section \ref{linear:sec:meromorphic-continuation}. 
\end{proof}

To prove asymptotic stability, we will need to locate the
aforementioned eigenvalues. To this end, we first show that there are
only a finite number of non-decaying $\LSolHk{k}$-quasinormal modes,
and then we show that those finite non-decaying quasinormal modes are
in some sense unphysical.

To show that there are only a finite number of non-decaying
$\LSolHk{k}$-quasinormal modes, we prove the existence of a
high-frequency spectral gap.
\begin{theorem}
  \label{linear:thm:resolvent-estimate:inf-gen}
  Let $g_b$ be a slowly-rotating \KdS{} metric, and let
  $\LinEinstein$, $\InfGen$ denote the gauged linearized Einstein
  operator, and the infinitesimal generator of the solution semigroup
  associated to $\LinEinstein$ respectively.  Then there exists some
  $\SpectralGap>0$ and $C>0$ such that the resolvent
  $(\InfGen -\sigma)^{-1}$ exists and is a uniformly bounded linear
  transformation of $\LSolHk{k}(\Sigma)$ onto $D^k(\InfGen)$,
  satisfying the bound
  \begin{equation}
    \label{linear:eq:resolvent-estimate:inf-gen:main}
    \norm{(\InfGen-\sigma)^{-1}\mathbf{h}_0}_{D^k(\InfGen)} \lesssim \norm{\mathbf{h}_0}_{\LSolHk{k}(\Sigma)},\qquad
    \mathbf{h}_0:=
    \begin{pmatrix}
      h_0\\h_1
    \end{pmatrix}
  \end{equation}
  for all $\Im\sigma\ge -\SpectralGap$, $\abs*{\sigma}\ge C$,
  $k> k_0$, where
  \begin{equation}
    \label{linear:eq:threshold-reg-def}
    \SurfaceGravity_{\Horizon} \left(2k_0 + \frac{1}{2}\right)
    > 2\SHorizonControl{\LinEinstein}[\Horizon] , \quad
    \Horizon = \EventHorizonFuture, \CosmologicalHorizonFuture. 
  \end{equation}
\end{theorem}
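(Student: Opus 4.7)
\medskip

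\noindent\textbf{Proof proposal for Theorem \ref{linear:thm:resolvent-estimate:inf-gen}.} The plan is to reduce everything to a high-frequency estimate on the Laplace-transformed operator $\widehat{\LinEinstein}(\sigma)$ and then repackage it into a bound on $(\InfGen-\sigma)^{-1}$ via the explicit formula of Lemma~\ref{linear:lemma:laplace:inverse-A}. Concretely, I will prove that there exist $\SpectralGap>0$ and $C>0$ such that for all $u\in D^k_\sigma(\widehat{\LinEinstein}(\sigma))$ and all $\sigma\in\Complex$ with $\Im\sigma\ge -\SpectralGap$ and $\abs{\sigma}\ge C$,
\begin{equation}\label{eq:prop:key-ap}
\norm{u}_{\InducedHk{k}_\sigma(\Sigma)}^2 \,\lesssim\, \norm{\widehat{\LinEinstein}(\sigma)u}_{\InducedHk{k-1}_\sigma(\Sigma)}^2,
\end{equation}
together with the analogous estimate for the adjoint $\widehat{\LinEinstein}^\dagger(\sigma)$. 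Together, these give invertibility by a Fredholm/density argument, and then \eqref{linear:eq:laplace:inverse-A} translates \eqref{eq:prop:key-ap} into \eqref{linear:eq:resolvent-estimate:inf-gen:main}. The $|\sigma|\ge C$ restriction is exactly what makes \eqref{eq:prop:key-ap} the high-frequency tail of the general resolvent bound, since $P_1,P_2$ in \eqref{linear:eq:P-i-def} are bounded on the relevant spaces.

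To prove \eqref{eq:prop:key-ap} I will combine, in the physical-space formulation, three energy identities of the form \eqref{linear:eq:div-thm:spacetime} applied to $h(\tStar,x)=e^{-\ImagUnit\sigma\tStar}u(x)$, using the currents introduced in Definition~\ref{linear:def:J-K-Laplace-currents:def}. The multiplier is built as $X=\varepsilon_1\TFixer + \varepsilon_2\RedShiftN + X_{\mathrm{Mor}}$, together with suitably chosen Lagrangian corrector $q$ and one-form $m$, where: (i) the $\TFixer$-part yields a Killing-type bulk positivity on the interior of $\StaticRegionWithExtension$ away from the horizons and away from $r=3M$, up to an $O(a)$ error coming from the deformation tensor \eqref{linear:eq:TFixer:DeformTensor}; (ii) the $\RedShiftN$-part supplies the redshift coercivity of Proposition~\ref{linear:prop:redshift:N-construction} near $\EventHorizonFuture,\CosmologicalHorizonFuture$, which compensates the superradiance loss of $\TFixer$ at the horizons (and, crucially, by Lemma~\ref{linear:lemma:SubPOp:horizons} and the restriction \eqref{linear:eq:threshold-reg-def}, at regularity $k>k_0$ the redshift positivity dominates the boundary contribution of the subprincipal operator $\SubPOp$ through the horizon generators $\HorizonGen_{\Horizon}$); (iii) $X_{\mathrm{Mor}}$ is the ILED/Morawetz multiplier from Section~\ref{linear:sec:ILED}, which produces a coercive spacetime integral of the nondegenerate derivative norm away from the trapped set $r=3M$. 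The three contributions are combined with small weights $\varepsilon_i$ so that bulk errors are absorbed into the Morawetz positivity.

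The main obstacle is trapping: the bulk produced by $X_{\mathrm{Mor}}$ degenerates at the trapped set, and since the trapped set in \KdS{} is frequency-dependent (it is cut out by the conserved quantities attached to $\p_{t_0},\p_{\varphi_0}$ and the Carter-type operator $\CartarOp$ of \eqref{linear:eq:CartarOp:def}), no purely physical-space multiplier can by itself cover it. I will handle this exactly as announced in item~1 of the paper's list of main differences: in a fixed neighborhood of $\{r=3M\}$ I will use a standard pseudo-differential commutator argument, constructing a symbol supported near the trapped set in phase space and symmetric/anti-symmetric components whose Poisson brackets with the principal symbol of $\ScalarWaveOp[g_b]$ are nonnegative modulo lower-order terms (this is the content of Section~\ref{linear:ILED:near}). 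Combined with the trapping instability, this gives a Morawetz estimate with a loss of one derivative localized at the trapped set; since we are working at regularity $k>k_0\ge 5/2+O(a)$, this loss is absorbed. After adding all three identities and using that $\widehat{\LinEinstein}(\sigma)u$ controls the right-hand side of \eqref{linear:eq:div-thm:J-K-currents}, one obtains a control of the form $\norm{u}_{\InducedHk{1}_\sigma(\Sigma)}^2\lesssim \norm{\widehat{\LinEinstein}(\sigma)u}_{\InducedLTwo(\Sigma)}^2+\text{small}\cdot\norm{u}_{\InducedHk{1}_\sigma(\Sigma)}^2$, where ``small'' comes from $|\Im\sigma|\le \SpectralGap$ with $\SpectralGap$ chosen smaller than the Morawetz and redshift constants; absorbing this error gives \eqref{eq:prop:key-ap} at $k=1$.

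To promote the estimate to general $k>k_0$, I will commute $\widehat{\LinEinstein}(\sigma)$ with powers of the vectorfields $\RedShiftK_i$ from Lemma~\ref{linear:lemma:enhanced-redshift:Ka-construction}. By the deformation tensor decomposition \eqref{linear:eq:enhanced-redshift:Deform-Tens-decomp} and the fact that $f^{11}_1=\SurfaceGravity_{\Horizon}$, the extra horizon contributions each commutator produces are exactly of the size that can be absorbed by the redshift estimate provided \eqref{linear:eq:threshold-reg-def} holds; all remaining commutators are lower order and frequency-subcritical, and are absorbed into the Morawetz/Killing bulk. Finally, the same scheme applied to $\widehat{\LinEinstein}^\dagger(\sigma)$ (using Corollary~\ref{linear:corollary:L-dagger-adjoint} to produce the dual divergence identity, and the $\SHorizonControl{\LinEinstein}^*$ quantities of \eqref{linear:eq:SHorizonControl:def}) yields the dual estimate, whence by the standard Fredholm/duality argument on a compact spatial slice $\Sigma$ the operator $\widehat{\LinEinstein}(\sigma):\InducedHk{k}_\sigma(\Sigma)\to\InducedHk{k-1}_\sigma(\Sigma)$ is boundedly invertible with norm controlled by \eqref{eq:prop:key-ap}. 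Inserting this through \eqref{linear:eq:laplace:inverse-A} and using that $P_1$ is bounded $\InducedHk{k}_\sigma\to \InducedHk{k-1}_\sigma$ with norm uniform in $\sigma$ delivers \eqref{linear:eq:resolvent-estimate:inf-gen:main}.
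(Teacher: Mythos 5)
Your proposal follows the same overall route as the paper: reduce to a high-frequency resolvent estimate for the Laplace-transformed operator $\widehat{\LinEinstein}(\sigma)$ via Lemma~\ref{linear:lemma:laplace:inverse-A}, establish that estimate by combining redshift coercivity, a nontrapping Morawetz multiplier, and a microlocal argument near the frequency-dependent trapped set, and lift to regularity $k>k_0$ by commuting with the $\RedShiftK_i$ while tracking the $2\SurfaceGravity_{\Horizon}$ gain per commutation against $\SHorizonControl{\LinEinstein}[\Horizon]$. The paper's actual proof of Theorem~\ref{linear:thm:resolvent-estimate:inf-gen} is in fact just this reduction (Lemma~\ref{linear:lemma:laplace:inverse-A} plus the already-established Theorem~\ref{linear:thm:resolvent-estimate:main}); what you supply in addition is a sketch of Theorem~\ref{linear:thm:resolvent-estimate:main} itself.

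Two points of execution deviate from the paper and one is slightly imprecise. First, you propose a single global multiplier $X=\varepsilon_1\TFixer+\varepsilon_2\RedShiftN+X_{\mathrm{Mor}}$, whereas the paper proves the resolvent estimate separately in the redshift, nontrapping, near-trapping and trapping regions and then glues with physical and frequency cutoffs (Section~\ref{linear:ILED:full}); both strategies can work, but the gluing approach is modular and allows the microlocal argument to be confined to a compact $r$-neighborhood of $3M$. Second, the paper explicitly avoids the energy-Morawetz mechanism: it absorbs the $\tStar$-boundary fluxes directly into the bulk using the smallness of $|\Im\sigma|$ for $h=e^{-\ImagUnit\sigma\tStar}u$, rather than by adding a Killing-type energy estimate; your scheme can be repaired the same way (and you in fact invoke this mechanism later in your sketch), but be aware that the $\TFixer$ term is a (piecewise) Killing multiplier on the regions where $\tilde{\chi}$ is constant, so it contributes \emph{no} bulk term there; ``Killing-type bulk positivity'' is not the right description. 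Finally, your route to invertibility via a direct adjoint estimate and Fredholm duality is a valid alternative, but in the paper the zero-index Fredholm alternative of Theorem~\ref{linear:thm:meromorphic:main-A} has already been established, so the injectivity estimate from the Morawetz argument alone yields invertibility without a separate adjoint computation.
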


\begin{remark}
  Observe that in particular, using the computation of
  $2\SHorizonControl{\LinEinstein}[\Horizon]$ in Lemma
  \ref{linear:lemma:SubPOp:horizons}, for a sufficiently
  slowly-rotating \KdS{} metric $g_b$, the choice $k_0>2$ satisfies
  the condition in \eqref{linear:eq:threshold-reg-def}. In the proof
  of nonlinear stability of the slowly-rotating \KdS{} family
  presented in \cite{fang_nonlinear_2021}, we will effectively take
  $k_0=3$ to avoid the use of fractional functional spaces.
\end{remark}

\begin{remark}
  Theorem \ref{linear:thm:resolvent-estimate:inf-gen} plays the role of 
  Theorem 4.3 in \cite{hintz_global_2018}. Both statements show the
  existence of a high-frequency spectral gap. The main difference
  comes from the method of proof. We provide a self-contained proof
  that circumvents the use of $b$-pseudo-differential calculus, a
  compactified spacetime, and frequency-based arguments outside of
  a neighborhood of the trapped set.
\end{remark}

\begin{proof}
  See Section \ref{linear:sec:Resolvent-Estimates}. 
\end{proof}
We will refer to $k_0$ as the threshold regularity level. The reliance
of the spectral gap on sufficiently large $k_0$ reflects that we need
the regularity level to be sufficiently large that the
$\LSolHk{k}$-quasinormal modes are decaying (see Section
\ref{linear:sec:meromorphic-continuation} for a more precise discussion). 

Due to Theorem \ref{linear:thm:resolvent-estimate:inf-gen}, we know that any
non-decaying quasinormal mode must lie in a compact region of the
complex plane. Moreover, since the eigenvalues of
$(D^k(\InfGen), \InfGen)$ are isolated, there can only be a finite
number of them in any compact region of $\Complex$. Thus, Theorems
\ref{linear:thm:meromorphic:main-A} and \ref{linear:thm:resolvent-estimate:inf-gen}
will together show that the gauged linearized Einstein equations decay
exponentially up to a compact perturbation, and will be proven in
Section \ref{linear:sec:asymptotic-expansion}. 

The results on the $\LSolHk{k}$-quasinormal spectrum in Theorems
\ref{linear:thm:meromorphic:main-A} and \ref{linear:thm:resolvent-estimate:inf-gen}
allow us to prove an asymptotic expansion for the gauged linearized
Einstein system. 
\begin{corollary}
  \label{linear:coro:asymptotic-expansion}
  Fix $b\in \BHParamNbhd$ a set of black-hole parameters for a
  slowly-rotating \KdS{} black-hole, and let $k>k_0$, where $k_0$ is
  the threshold regularity level defined in \eqref{linear:eq:threshold-reg-def},
  Let $(h_0, h_1)\in \LSolHk{k}$, and $f\in
  H^{k-1}_0(\StaticRegionWithExtension)$. 
  Then if ${h}$ is
  a solution of the initial value problem
  \begin{equation*}
    \begin{cases}
      \LinEinstein_{g_b}{h} &=  f\\
      \gamma_0({h}) &= (h_0, h_1),
    \end{cases}
  \end{equation*} 
  then, there exists $\SpectralGap, \GronwallExp>0$ such that ${h}$ has an asymptotic expansion
  \begin{equation}
    \label{linear:eq:asymptotic-exp:pre-mode-stability}
    {h} = \tilde{h} + \sum_{j=1}^{N_{\LinEinstein_{g_b}}}\sum_{\ell=1}^{d_j}e^{-\ImagUnit\sigma_j\tStar}\tStar^\ell {u}_{j\ell}(x), 
  \end{equation}
  where $\tilde{h}$ satisfies the decay bounds
  \begin{equation*}
    \begin{split}
      \norm{\tilde{h}}_{\HkWithT{k}(\Sigma_{\tStar})} &\lesssim e^{-\SpectralGap\tStar}
        \norm{(f, h_0, h_1)}_{D^{k+1, \SpectralGap}(\StaticRegionWithExtension)}.
    \end{split}
  \end{equation*} 
\end{corollary}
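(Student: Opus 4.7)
The plan is to apply Proposition \ref{linear:prop:asymptotic-exp:general} to the gauged linearized Einstein operator $\LinEinstein_{g_b}$. The first task is to verify that $\LinEinstein_{g_b}$ satisfies Assumption \ref{linear:ass:QNM}. Strong hyperbolicity is supplied directly by Lemma \ref{linear:lemma:EVE:GHC-quasilinear}. Meromorphy of the resolvent $(\InfGen_{g_b} - \sigma)^{-1}$ on a half-plane $\Im\sigma > -\SpectralGap$ is obtained by combining Theorem \ref{linear:thm:meromorphic:main-A}, which establishes the Fredholm alternative and hence that the poles of the resolvent are isolated with finite-rank residues, with Theorem \ref{linear:thm:resolvent-estimate:inf-gen}, which confines these poles to a bounded subset of $\Complex$. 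The uniform high-frequency resolvent bound required by the third clause of Assumption \ref{linear:ass:QNM} is itself the content of Theorem \ref{linear:thm:resolvent-estimate:inf-gen}.

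Next I would reconcile the regularity thresholds. The hypothesis on $k_0$ in Proposition \ref{linear:prop:asymptotic-exp:general} reads
$$\max_{\Horizon = \EventHorizonFuture, \CosmologicalHorizonFuture} \frac{1}{2}\SHorizonControl{\LinEinstein_{g_b}}[\Horizon] + \left(\frac{1}{4} - k_0\right)\SurfaceGravity_{\Horizon} < -\SpectralGap,$$
which is strictly weaker than the condition \eqref{linear:eq:threshold-reg-def} imposed by Theorem \ref{linear:thm:resolvent-estimate:inf-gen}. Since the value of $\SpectralGap$ produced by Theorem \ref{linear:thm:resolvent-estimate:inf-gen} may be shrunk freely, one chooses it small enough so that the Proposition's hypothesis is satisfied for the same $k_0$ used in \eqref{linear:eq:threshold-reg-def}, while simultaneously ensuring that $-\SpectralGap$ lies strictly below the imaginary part of every non-decaying quasinormal frequency. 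Combining the discreteness of the spectrum (from meromorphy) with the high-frequency confinement then guarantees that there are only finitely many $\LSolHk{k}$-quasinormal frequencies $\sigma_j$ with $\Im\sigma_j > -\SpectralGap$, namely the elements of $\Xi$.

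With these verifications in place, the conclusion follows by a direct application of Proposition \ref{linear:prop:asymptotic-exp:general} to the data $(f, h_0, h_1)$, which lies in $D^{k+1, \SpectralGap}(\StaticRegionWithExtension)$ in view of the compact support in time of $f$ and the regularity hypotheses on the initial data. Extracting the top component of the resulting expansion of $\mathbf{v} = (h, \KillT h)^T$ gives the desired decomposition of $h$, with the remainder $\tilde h$ controlled by the exponentially weighted norm of the data. The main obstacle in this corollary is not technical but organizational: the real analytic difficulty is absorbed entirely into the proofs of Theorems \ref{linear:thm:meromorphic:main-A} and \ref{linear:thm:resolvent-estimate:inf-gen}, and one need only check that their statements align exactly with the abstract hypotheses of Proposition \ref{linear:prop:asymptotic-exp:general}.
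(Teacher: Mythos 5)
Your proposal is correct and follows the same route as the paper's own (very terse) proof: verify the three clauses of Assumption~\ref{linear:ass:QNM} for $\LinEinstein_{g_b}$ using Lemma~\ref{linear:lemma:EVE:GHC-quasilinear} and Theorems~\ref{linear:thm:meromorphic:main-A} and~\ref{linear:thm:resolvent-estimate:inf-gen}, then invoke Proposition~\ref{linear:prop:asymptotic-exp:general}. The additional bookkeeping you supply on reconciling the threshold regularity in~\eqref{linear:eq:threshold-reg-def} with the $k_0$-hypothesis of the proposition, by shrinking $\SpectralGap$, is exactly the check the paper leaves implicit in the phrase ``follows directly.''
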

\begin{proof}
  Theorems
  \ref{linear:thm:meromorphic:main-A} and \ref{linear:thm:resolvent-estimate:inf-gen}
  confirm that $\LinEinstein_{g_b}$ is a strongly hyperbolic linear
  operator satisfying the assumptions in Proposition
  \ref{linear:prop:asymptotic-exp:general}. The result then follows
  directly. 
\end{proof}

In the asymptotic expansion for $h$ in Corollary
\ref{linear:coro:asymptotic-expansion}, there are a finite number of
non-decaying $\LSolHk{k}$-quasinormal mode solutions.  Thus, to show
the desired exponential decay, it remains to show that these
non-decaying $\LSolHk{k}$-quasinormal mode solutions are unphysical.

\begin{theorem}[Mode stability of $\LinEinstein_{g_b}$, version 1]
  \label{linear:thm:mode-stability-v1}
  If ${h}$ if a non-decaying $\LSolHk{k}$-quasinormal mode solution
  of $\LinEinstein_{g_b}$, then either
  \begin{enumerate}
  \item there exists some linearized \KdS{} metric $b'$, and some
    one-form $\omega$ such that
    \begin{equation}
      {h} = g_{b}'(b') + \nabla_{g_b}\otimes \omega,
    \end{equation}
    or;
  \item ${h}$ does not satisfy the linearized gauge constraint
    conditions. That is, that
    \begin{equation*}
      \Constraint_{g_b}{h} \neq 0.
    \end{equation*}
  \end{enumerate}
\end{theorem}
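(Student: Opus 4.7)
\medskip

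\noindent\textbf{Proof proposal.} The plan is to dispose of option (2) immediately and then to establish option (1) by perturbing the known geometric mode stability statement on \SdS{} (Theorem~\ref{linear:thm:Kodama-Ishibashi}) to slowly-rotating \KdS{}, using the perturbation theory of Proposition~\ref{linear:prop:QNM-perturb:gen} applied to both $\LinEinstein_{g_b}$ and the principally hyperbolic constraint propagation operator $\ConstraintPropagationOp_{g_b}$. Concretely, suppose $h$ is a non-decaying $\LSolHk{k}$-quasinormal mode solution with $\Constraint_{g_b}h\neq 0$; then option (2) holds and we are done. So assume $\Constraint_{g_b}h=0$. Then by the relation between the gauged and ungauged equations in Section~\ref{linear:sec:linearized-EVE}, $h$ satisfies the ungauged equation $D_{g_b}(\Ric-\Lambda)(h)=0$.

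At $b=b_0$ (\SdS), the geometric mode stability statement in Theorem~\ref{linear:thm:Kodama-Ishibashi} identifies every non-decaying solution of $D_{g_{b_0}}(\Ric-\Lambda)(h_0)=0$ as a sum $h_0 = g_{b_0}'(b_0') + \LieDerivative_{X_0}g_{b_0}$ for some $b_0'\in T_{b_0}B$ and vectorfield $X_0$. The first step of the proof is therefore to translate this into a statement about the space $V_{b_0}$ of non-decaying $\LSolHk{k}$-quasinormal modes of $\LinEinstein_{g_{b_0}}$ satisfying $\Constraint_{g_{b_0}}=0$: using the gauge-fixing equation \eqref{linear:eq:linearized-gauge-fixer}, which is $\Box^\Upsilon_{g_{b_0}}\omega = 2\Constraint_{g_{b_0}}(h)$ with $\Box^\Upsilon_{g_{b_0}} = \VectorWaveOp[g_{b_0}]-\Lambda$ a tensorial wave operator, one shows that $V_{b_0}$ is spanned by (a) linearized \KdS{} modes $g_{b_0}'(b_0')$ corrected by one-forms $\omega_{b_0'}$ to satisfy the linearized constraint, and (b) pure gauge modes $\nabla_{g_{b_0}}\otimes\omega$ with $\omega$ a non-decaying quasinormal mode of $\Box^\Upsilon_{g_{b_0}}$.

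The second step is to build an explicit subspace $V_b^{\mathrm{expl}}\subseteq V_b$ for $b\in\BHParamNbhd$ by the same two constructions at $b$: for each $b'\in T_bB$, solve $\Box^\Upsilon_{g_b}\omega_{b'}=2\Constraint_{g_b}(g_b'(b'))$ to produce a gauge-corrected linearized \KdS{} mode, and add to it the space $\nabla_{g_b}\otimes\ker(\ConstraintPropagationOp_{g_b})$ on non-decaying frequencies. That these solutions exist and are themselves non-decaying quasinormal modes is exactly Proposition~\ref{linear:prop:QNM-perturb:gen}(\ref{linear:prop:QNM-perturb:gen:item4}) applied to $\ConstraintPropagationOp_{g_b}$, which has the same principal symbol as $\Box^\Upsilon_{g_b}$ and to which all hypotheses apply since $\ConstraintPropagationOp_{g_b}$ is strongly hyperbolic on slowly-rotating \KdS. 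The third step is the dimension count: by Proposition~\ref{linear:prop:QNM-perturb:gen}(\ref{linear:prop:QNM-perturb:gen:item3}) applied to both $\LinEinstein_{g_b}$ and $\ConstraintPropagationOp_{g_b}$ on a fixed precompact open set $\Omega\subset\{\Im\sigma>-\SpectralGap\}$ avoiding the boundary, the total ranks are constant in $b$; combined with the continuity of the gauge-correction assignment $b'\mapsto \omega_{b'}$, one shows $\dim V_b^{\mathrm{expl}}$ is constant and matches $\dim V_{b_0}$. Upper semicontinuity of $\dim V_b$ from the continuity of $\Constraint_{g_b}$ on the continuously-varying total space of non-decaying modes (Proposition~\ref{linear:prop:QNM-perturb:gen}(\ref{linear:prop:QNM-perturb:gen:item4})) then forces $V_b=V_b^{\mathrm{expl}}$ for $b$ sufficiently close to $b_0$, which is the claim of option (1).

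The main obstacle will be Step 3, and in particular the matching of algebraic multiplicities. Generalized quasinormal modes with polynomial $\tStar$-dependence can appear when eigenvalues collide under perturbation, and it is not a priori obvious that the constraint-preserving modes remain in a Jordan block whose total dimension matches that of the explicit candidates. The resolution must use the fact that the constraint propagation equation $\ConstraintPropagationOp_{g_b}\Constraint_{g_b}(h)=0$ from Lemma~\ref{linear:lemma:EVE:linearized-constraint-prop} is itself a (tensorial) wave equation whose non-decaying mode structure varies continuously in $b$; this lets us view the restriction $\Constraint_{g_b}:\QNMk{k}(\LinEinstein_{g_b},\Omega)\to \QNMk{k}(\ConstraintPropagationOp_{g_b},\Omega)$ as a continuous family of finite-rank maps and hence apply lower semicontinuity to its kernel as well, closing the gap. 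A secondary technical point, easily handled, is to verify that the forcing term $\Constraint_{g_b}(g_b'(b'))$ lies in the image of $\Box^\Upsilon_{g_b}$ on non-decaying modes so that $\omega_{b'}$ does not introduce additional decaying contributions that would spoil the explicit construction.
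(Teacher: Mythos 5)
Your overall strategy—reduce to the constraint-satisfying case, identify $\QNMGeok{k}(\LinEinstein_{g_b})$ with the quasinormal spectrum of the (strongly hyperbolic) constraint propagation operator $\ConstraintPropagationOp_{g_b}$, and then count dimensions using the perturbation theory of Proposition~\ref{linear:prop:QNM-perturb:gen} to transfer the Kodama--Ishibashi statement from $b_0$ to nearby $b$—is exactly the route the paper takes in Proposition~\ref{linear:prop:KdS-QNM-perturb}. You have also correctly flagged the Jordan-block issue, which the paper handles automatically because $\rank_{\zeta=\sigma}(\InfGen_w-\zeta)^{-1}$ in Proposition~\ref{linear:prop:QNM-perturb:gen}(\ref{linear:prop:QNM-perturb:gen:item3}) already counts with algebraic multiplicity.

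There is, however, a genuine gap in Step~3 at the zero quasinormal frequency. You assert that $\dim V_b^{\mathrm{expl}}$ is constant in $b$ by combining constancy of the total ranks of $\LinEinstein_{g_b}$ and $\ConstraintPropagationOp_{g_b}$ with continuity of the gauge-correction assignment $b'\mapsto\omega_{b'}$. But the map $\omega\mapsto\nabla_{g_b}\otimes\omega$ from $\QNMk{k}(\ConstraintPropagationOp_{g_b},\mathbb{B}_\epsilon)$ to $\QNMGeok{k}(\LinEinstein_{g_b},\mathbb{B}_\epsilon)$ is not injective: its kernel is precisely the space of Killing one-forms of $g_b$, which has dimension $4$ at $b_0$ (time translation plus three rotations) and drops to $2$ for $a\neq0$ (time translation and $\KillPhi$ only). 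Simultaneously, the contribution of the linearized \KdS{} metrics $g_b'(T_bB)$ modulo pure gauge jumps in the opposite direction, from $d_{b_0}=4$ to $d_b=2$: at $b_0$ an infinitesimal change of rotation axis is a nontrivial metric perturbation of \SdS, whereas for $a\neq0$ it is an infinitesimal rotation and hence lies in $\Range\,\nabla_{g_b}\otimes$. The constancy of $\dim V_b^{\mathrm{expl}}$ therefore rests on the exact cancellation
\begin{equation*}
  d_b - \dim\Ker\,\nabla_{g_b}\otimes\vert_{\QNMk{k}(\ConstraintPropagationOp_{g_b},0)} = 0,
\end{equation*}
which is verified by the explicit Killing-vector and parameter-space count in the paper. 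Without this step your construction of $V_b^{\mathrm{expl}}$ is an overlapping (not direct) sum whose dimension is not \emph{a priori} stable under perturbation of $b$, and the upper-semicontinuity squeeze at the end cannot close. Everything else in your proposal—the bijection away from zero frequency, the semicontinuity argument for the constraint modes via $\Constraint_{g_b}$, and the final identification $V_b=V_b^{\mathrm{expl}}$—then goes through once this zero-frequency accounting is supplied.
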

As a result if ${h}$ is a non-decaying $\LSolHk{k}$-quasinormal mode
solution, then ${h}$ is not a physical mode solution. These notions
are expanded upon in Section \ref{linear:sec:mode-stability}, and a
more precise statement of mode stability of $\LinEinstein_{g_b}$ is given
in Proposition \ref{linear:prop:KdS-QNM-perturb}.

\begin{remark}
  At the level of mode stability, the main difference between the
  present work and \cite{hintz_global_2018} is that Hintz and
  Vasy introduce constraint damping to treat the non-decaying
  resonances in \KdS. Doing this allows them to consider arbitrary
  initial data, which may or may not be admissible. We instead do not
  introduce constraint damping, and work only with admissible initial
  data that satisfies exactly the linearized gauge constraint, and
  thus generate true solutions to the linearized Einstein equations in
  harmonic gauge.
\end{remark}

\subsection{Structure of the remainder of the paper}

We outline the remaining parts of the paper. Section
\ref{linear:sec:energy-estimates} sets up the necessary estimates to prove
Theorem \ref{linear:thm:resolvent-estimate:inf-gen}, all of  which are
insensitive to the issue of trapping. 

Section \ref{linear:sec:ILED} set up the estimates necessary to prove Theorem
\ref{linear:thm:resolvent-estimate:inf-gen}. These estimates are organized so
that the majority of them avoid the trapped set. The estimate taking
place in a neighborhood of the trapped set is reserved for Section
\ref{linear:sec:ILED-trapping:KdS}, which is the most technically difficult
part of the paper. In particular, the estimate in Section
\ref{linear:sec:ILED-trapping:KdS} involves 

Section \ref{linear:sec:asymptotic-expansion} then contains the proofs for
the main intermediate results in Theorem \ref{linear:thm:meromorphic:main-A}
and Theorem \ref{linear:thm:resolvent-estimate:inf-gen}, which are direct
applications of the estimates of Section \ref{linear:sec:energy-estimates}
and Section \ref{linear:sec:ILED}.

Section \ref{linear:sec:mode-stability} is of a wholly different flavor than
the previous sections, dealing with the mode stability of the
linearized gauged Einstein operator $\LinEinstein_{g_b}$, and is a
straightforward application of a geometric mode stability statement
originally proven by Kodama and Ishibashi
\cite{ishibashi_stability_2003,kodama_master_2003}, and stated in
Theorem \ref{linear:thm:Kodama-Ishibashi} in the form presented in
\cite{hintz_global_2018}, and the perturbation theory in Proposition
\ref{linear:prop:QNM-perturb:gen}. 

\subsection{Choice of constants}

In what follows, the proof involves the choice of several
constants. We review these constants and their relationship to each
other.

\begin{enumerate}
\item $C_{\RedShiftN}(\epsilon)$ and $C_{\TFixer}(\delta)$ are
  (large )implicit constants that are introduced in Section
  \ref{linear:sec:energy-estimates}, chosen so that
  \begin{equation*}
    C_{\RedShiftN}(\epsilon) \gg \max\left(1, \frac{1}{\epsilon}\right), \qquad
    C_{\TFixer}(\delta) \gg \max\left(1, \frac{1}{\delta}\right). 
  \end{equation*}
\item $\gamma$ is the size of a lower-order correction term added to
  the $\TFixer$-energy estimate and the redshift estimate in Section
  \ref{linear:sec:energy-estimates}. When applied in Section
  \ref{linear:sec:meromorphic-continuation}, $\delta_0$, $\epsilon_0$ will be
  some fixed
  \begin{equation*}
    \delta_0, \epsilon_0 \ll 1,
  \end{equation*}
  and $\gamma$ will be chosen such that
  \begin{equation*}
    \gamma \gg C_{\RedShiftN}(\epsilon_0)C_{\TFixer}(\delta_0).
  \end{equation*}
\item $\delta_{\Horizon}$ is a smallness constant that measures the
  amount that we extend $\Sigma$ beyond the horizons. In Section
  \ref{linear:ILED:full}, we choose $\delta_{\Horizon}$ such that
  \begin{equation*}
    \delta_{\Horizon}^2\ll \max\left(
      \frac{r_0}{r_{\EventHorizonFuture}}, \frac{r_{\CosmologicalHorizonFuture}}{R_0}
    \right).
  \end{equation*}
\item $\delta_r$ and $\delta_\zeta$ are smallness constants measuring
  the size of the localization around the trapped set we take in
  Section \ref{linear:sec:ILED}.
\item $\varepsilon_{\TrappedSet}$ is a smallness constant that
  measures the smallness of the skew-adjoint component of the
  subprincipal symbol at the trapped set, and we will take
  \begin{equation*}
    \varepsilon_{\TrappedSet} \ll 1
  \end{equation*}
  in Section \ref{linear:ILED:near}.
\item $\GronwallExp$ will denote the maximal exponential growth rate
  of solutions to the linearized Einstein equations. 
\item $\COuter$ and $\CInnerNT$ are large constants that produces 
  large bulk terms in the Morawetz estimates in Sections
  \ref{linear:sec:ILED:nontrapping} and \ref{linear:sec:ILED:nontrapping-freq} which
  will be chosen so that
  \begin{equation*}
    \COuter \gg \frac{1}{\delta_r^2}, \qquad \CInnerNT\gg \frac{1}{\delta_\zeta}. 
  \end{equation*}
\item We will use $C_0$ to denote the large lower-order error that is
  incurred in all of our high-frequency estimates. It is the largest
  constant in the proofs for high-frequency resolvent estimates, and
  is chosen so that
  \begin{equation*}
    \GronwallExp \ll \COuter, \CInnerNT \ll C_0.
  \end{equation*}
\item $\mathring{c}$, $\check{c}$, and $\breve{c}$ are constants
  introduced in Section \ref{linear:ILED:full} to help glue the
  Morawetz estimates on the various regions of phase space
  together. For a more precise description of their relationship with
  the other constants here, see Section \ref{linear:ILED:full}.
\item $b=(M,a)$ and $b_0 = (M,a)$ denote the \KdS{} and \SdS{} black
  hole parameters we will consider. We treat $\Lambda$ as a fixed
  constant, and $M$ satisfying the mass-subextremal condition
  in~\eqref{linear:eq:SdS:non-degeneracy-condition}, and $a$ such that
  \begin{equation*}
    0\le a \ll \min\left(
      \varepsilon_{\RedShiftN},
      \delta_r,
      \delta_\zeta,
      \delta_{\Horizon},
      \frac{\epsilon_0}{C_{\RedShiftN}(\epsilon_0)C_{\TFixer}(\delta_0)},
      \frac{r_0}{r_{\EventHorizonFuture}} - 1,
      \frac{r_{\CosmologicalHorizonFuture}}{R_0} - 1
    \right).
  \end{equation*}
\end{enumerate}

\section{Energy estimates}
\label{linear:sec:energy-estimates}

In this section, we prove a variety of energy estimates that will be
used throughout the paper. In particular, we will prove a Killing
energy estimate, a redshift energy estimate, and an enhanced redshift
energy estimate. The Morawetz estimate, or
its resolvent estimate equivalent, is the subject of Section
\ref{linear:sec:ILED}, due to the different approach and difficulties
involved in the proof. All the estimates in this section are proven
relying purely on the physical space vectorfield method, using
vectorfields as multipliers and commutators.

We prove the estimates in this section for any strongly hyperbolic
linear operator on a slowly-rotating \KdS{} background of the form
\begin{equation}
  \label{linear:eq:energy-estimates:strongly-hyperbolic-operator}
  \LinearOp{h} = \ScalarWaveOp[g_b] {h} + \SubPOp[{h}] + \PotentialOp{h},
\end{equation}
where $\SubPOp$ is a matrix-valued vectorfield, and $\PotentialOp$ is
a smooth matrix valued potential.  Recall from Lemma
\ref{linear:lemma:EVE:GHC-quasilinear} that the gauged linearized Einstein
operator is itself a strongly hyperbolic linear operator. As we are
only working with a general strongly hyperbolic linear operator, we do
not need any additional structural assumptions from Einstein's
equations.  In particular, the estimates proven in this section are
blind to the presence of the trapped set. This should be contrasted
with the derivations of the Morawetz estimates in Section
\ref{linear:sec:ILED}, where trapping plays a crucial role and we require a
precise structure in Einstein's equations at the trapped set to close
the argument.

\subsection{$\TFixer$-energy estimates}
\label{linear:sec:T-fixer-energy-estimates}

The Killing estimate will be derived via using the almost Killing
vectorfield $\TFixer$ of Lemma \ref{linear:lemma:T-Fixer:construction} as a
multiplier. We define the $\TFixer$-energy by
\begin{equation}
  \label{linear:eq:Energy-Kill:def}
  \EnergyKill(\tStar)[{h}] = \int_{\widetilde{\Sigma}_{\tStar}} \JCurrent{\TFixer, 0, 0} [{h}]\cdot n_{\Sigma},
\end{equation}
where $\JCurrent{X,q,m}[h]$ is as defined in
\eqref{linear:eq:J-K-currents:def}, with $(X,q,m) = (\TFixer, 0, 0)$, and we
recall the definition of $\widetilde{\Sigma}$ in \eqref{linear:eq:Sigma-tilde:def}.

\begin{theorem}
  \label{linear:thm:Killing-estimate}
  We have the following estimates. 
  \begin{enumerate}[label=(\roman{enumi})]
  \item \label{linear:item:Killing-estimate:one}There exists a $C>0$ such
    that
    \begin{equation}
      \label{linear:eq:Killing-estimate:one}
      \EnergyKill(\tStar)[{h}]
      \lesssim  \norm{\nabla {h}}_{L^2(\widetilde{\Sigma}_{\tStar})}^2          
    \end{equation}
    for all ${h}\in C^\infty(\StaticRegionWithExtension,\Complex^N)$. 
  \item \label{linear:item:Killing-estimate:two}For any vectorfield $X$
    tangent to both $\CosmologicalHorizonFuture$ and
    $\EventHorizonFuture$, there exists $C_X$ depending on $X$ such
    that for all ${h}$,
    \begin{equation}
      \label{linear:eq:Killing-estimate:two}
      \norm{X{h}}^2_{\LTwo(\widetilde{\Sigma}_{\tStar})} \le C_X \EnergyKill(\tStar)[{h}].
    \end{equation}
  \item \label{linear:item:Killing-estimate:three}For any $\epsilon>0$, there
    exists a constant $C(\epsilon)$ such that
    \begin{equation}
      \label{linear:eq:Killing-estimate:three}
      \p_{\tStar} \EnergyKill(\tStar)[{h}] \le{} \epsilon\left(
        \norm{{h}}^2_{\InducedHk{1}(\widetilde{\Sigma}_{\tStar})} + \norm{\LinearOp{h}}^2_{\InducedLTwo(\widetilde{\Sigma}_{\tStar})}
      \right) + C(\epsilon)\norm{\KillT{h}}_{\InducedLTwo(\widetilde{\Sigma}_{\tStar})}^2 
      + a C(\epsilon) \norm{h}^2_{\InducedHk{1}(\widetilde{\Sigma}_{\tStar})}.
    \end{equation}
    
  \item \label{linear:item: Killing-estimate:four} If ${h}$ is additionally
    assumed to vanish on the horizons, there exists a constant
    $C(\epsilon)$ independent of $\gamma$ such that for any
    $\epsilon>0$, the following estimate holds
    \begin{equation}
      \label{linear:eq:Killing-estimate:four}
      -\p_{\tStar} \EnergyKill(\tStar)[{h}] \le{} \epsilon\left(
        \norm{{h}}^2_{\InducedHk{1}(\widetilde{\Sigma}_{\tStar})} + \norm{\LinearOp{h}}^2_{\InducedLTwo(\widetilde{\Sigma}_{\tStar})}
      \right) + C(\epsilon)\norm{\KillT{h}}_{\InducedLTwo(\widetilde{\Sigma}_{\tStar})}^2
      + a C(\epsilon) \norm{h}^2_{\InducedHk{1}(\widetilde{\Sigma}_{\tStar})}.
    \end{equation}
  \end{enumerate}
\end{theorem}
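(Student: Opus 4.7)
The overall strategy is to apply the spacelike divergence formula from Corollary \ref{linear:cor:div-them:spacelike} to the current $\JCurrent{\TFixer, 0, 0}[h]$ on $\widetilde{\Sigma}_{\tStar}$, and exploit the three structural properties of $\TFixer$ provided by Lemma \ref{linear:lemma:T-Fixer:construction}: (a) $\TFixer$ is future-directed timelike in the interior and null along $\HorizonGen_{\Horizon}$ on the horizons, (b) $\TFixer = \KillT + a\tilde{\chi}(r)\KillPhi$ differs from the Killing $\KillT$ only by an $O(a)$ term, and (c) the deformation tensor $\DeformationTensor{\TFixer} = a\, d\tilde{\chi}\otimes \KillPhi^\flat$ is itself $O(a)$. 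The first two items of the theorem concern the algebraic structure of $\JCurrent{\TFixer,0,0}[h]\cdot n_{\widetilde{\Sigma}}$ as a quadratic form in $\nabla h$, while the last two concern the flux identity across $\widetilde{\Sigma}_{t_1}$ and $\widetilde{\Sigma}_{t_2}$.

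For \ref{linear:item:Killing-estimate:one}, the dominant-energy-condition-type bound $0\le \EMTensor[h](\TFixer, n_{\widetilde{\Sigma}}) \lesssim |\nabla h|^2$ holds pointwise, since both $\TFixer$ and $n_{\widetilde{\Sigma}}$ are smooth bounded causal vectorfields. Integration over $\widetilde{\Sigma}_{\tStar}$ then yields \eqref{linear:eq:Killing-estimate:one}. For \ref{linear:item:Killing-estimate:two}, away from the horizons $\TFixer$ is uniformly timelike, so $\EMTensor[h](\TFixer,n_{\widetilde{\Sigma}})$ already controls $|\nabla h|^2$ and hence any $|Xh|^2$. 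In neighborhoods of $\EventHorizonFuture$ and $\CosmologicalHorizonFuture$ I will work in a null frame in which $\TFixer$ is parallel to the null generator $\HorizonGen_{\Horizon}$ and the remaining directions are angular; a standard null-frame computation then shows that $\EMTensor[h](\TFixer, n_{\widetilde{\Sigma}})$ dominates $|\HorizonGen_{\Horizon} h|^2$ plus the squared horizon-tangential angular derivatives. Since $X$ is assumed tangent to both horizons, near each horizon $X$ decomposes into these controlled directions, giving \eqref{linear:eq:Killing-estimate:two}.

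For \ref{linear:item:Killing-estimate:three} I will differentiate $\EnergyKill$ using Corollary \ref{linear:cor:div-them:spacelike} on $\widetilde{\Sigma}$ together with the identity \eqref{linear:eq:div-them:J-K-currents}, obtaining
\begin{equation*}
  \p_{\tStar}\EnergyKill(\tStar)[h]
  = -\int_{\widetilde{\Sigma}_{\tStar}}\!\!\Re\bigl(\TFixer\overline{h}\cdot\ScalarWaveOp[g_b]h\bigr)\sqrt{\GInvdtdt}
   -\int_{\widetilde{\Sigma}_{\tStar}}\!\!\KCurrent{\TFixer,0,0}[h]\sqrt{\GInvdtdt}
   -\sum_{\Horizon}\int_{\Horizon\cap\widetilde{\Sigma}_{\tStar}}\!\!\JCurrent{\TFixer,0,0}[h]\cdot\HorizonGen_{\Horizon}.
\end{equation*}
Substituting $\ScalarWaveOp[g_b]h = \LinearOp h - \SubPOp h - \PotentialOp h$ and splitting $\TFixer = \KillT + a\tilde{\chi}\KillPhi$, the first integrand is estimated by a Cauchy–Schwarz with an $\epsilon$-absorption: the $\KillT\bar{h}\cdot\LinearOp h$ piece produces $\epsilon(\norm{\LinearOp h}_{\InducedLTwo}^2 + \norm{\KillT h}_{\InducedLTwo}^2)$, the lower-order $\SubPOp, \PotentialOp$ contributions produce $\epsilon\norm{h}_{\InducedHk{1}}^2 + C(\epsilon)\norm{\KillT h}_{\InducedLTwo}^2$, and the $a\tilde{\chi}\KillPhi$ correction contributes $aC(\epsilon)\norm{h}_{\InducedHk{1}}^2$. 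The bulk term $\KCurrent{\TFixer,0,0}[h] = \DeformationTensor{\TFixer}\cdot \EMTensor[h]$ is $O(a)|\nabla h|^2$ by property (c). Finally, on each horizon $\TFixer = \HorizonGen_{\Horizon}$, so $\JCurrent{\TFixer,0,0}[h]\cdot\HorizonGen_{\Horizon} = \EMTensor[h](\HorizonGen_{\Horizon},\HorizonGen_{\Horizon}) = |\HorizonGen_{\Horizon} h|^2\ge 0$ (the redshift sign), and this boundary term appears with a minus sign on the right-hand side and can simply be discarded for the upper bound, giving \eqref{linear:eq:Killing-estimate:three}.

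For \ref{linear:item: Killing-estimate:four}, the identity for $-\p_{\tStar}\EnergyKill$ is the same up to signs except that the horizon fluxes now enter with a favorable sign; under the vanishing assumption $h|_{\EventHorizonFuture}=h|_{\CosmologicalHorizonFuture}=0$ we have $\HorizonGen_{\Horizon}h=0$ on the horizons, so these boundary terms vanish and the same estimate is obtained in the reverse direction, with constant $C(\epsilon)$ independent of the parameter $\gamma$ because no lower-order correction has been inserted into $\JCurrent{\TFixer,0,0}$. The main technical obstacle is the bookkeeping needed to make sure that all error terms arising from (a) the difference $\LinearOp - \ScalarWaveOp[g_b]$, (b) the $O(a)$ twist $a\tilde{\chi}\KillPhi$ in $\TFixer$, and (c) the nontrivial $\DeformationTensor{\TFixer}$ can simultaneously be absorbed into the tolerated right-hand side; these all fit because each nonKilling/nonwave contribution is either multiplied by $a$ or is a strictly lower-order derivative of $h$ that can be estimated by $\norm{\KillT h}_{\InducedLTwo}$ plus an $\epsilon$-small piece of $\norm{h}_{\InducedHk{1}}$.
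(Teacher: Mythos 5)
Your proposal is correct and follows essentially the same route as the paper: apply the spacelike divergence identity of Corollary \ref{linear:cor:div-them:spacelike} to $\JCurrent{\TFixer,0,0}[h]$, drop the horizon fluxes using that $\TFixer=\HorizonGen_{\Horizon}$ is future-directed null there, and estimate the bulk via $\ScalarWaveOp[g_b]h = \LinearOp h - \SubPOp h - \PotentialOp h$, Cauchy--Schwarz, and the $O(a)$ bound on $\DeformationTensor{\TFixer}$ from Lemma \ref{linear:lemma:T-Fixer:construction}. The only slip is cosmetic: Cauchy--Schwarz on $\langle\KillT h, \LinearOp h\rangle$ gives $\epsilon\norm{\LinearOp h}^2 + C(\epsilon)\norm{\KillT h}^2$, not $\epsilon$ times both, but this does not affect the conclusion.
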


\begin{proof}
  To prove \ref{linear:item:Killing-estimate:one} and
  \ref{linear:item:Killing-estimate:two}, it is sufficient to observe that
  $\TFixer$ by definition is time-like in $\StaticRegion$ as defined
  in \eqref{linear:eq:static-region-def}, and null
  along both the event horizon and the cosmological horizon.
  To establish \ref{linear:item:Killing-estimate:three}, we apply the
  divergence theorem of Corollary \ref{linear:cor:div-them:spacelike} on
  $\Sigma_{\tStar}$ with the vectorfield
  $\JCurrent{\TFixer, 0, 0}[{h}]$,
  \begin{equation} \label{linear:eq:Killing-estimate:Div-Theorem-app-aux-1}  
    - \int_{\widetilde{\Sigma}_{\tStar}}\nabla\cdot\JCurrent{\TFixer, 0, 0}[{h}]\, \sqrt{\GInvdtdt}
    = \p_{\tStar}\EnergyKill(\tStar)[h]
    + \int_{\CosmologicalHorizonFuture\bigcap\widetilde{\Sigma}_{\tStar}}
    \JCurrent{\TFixer, 0, 0}[{h}] \cdot \HorizonGen_{\CosmologicalHorizonFuture}
    +\int_{\EventHorizonFuture\bigcap\widetilde{\Sigma}_{\tStar}}
    \JCurrent{\TFixer, 0, 0}[{h}] \cdot \HorizonGen_{\EventHorizonFuture},
  \end{equation}
  and handle the terms individually. The first term on the right hand
  side is left alone as it gives rise to the derivative of the
  energy.
  Consider surface integrals at the horizons. Since $\TFixer$ is
  null, future-directed,  and tangent to both $\EventHorizonFuture$ and
  $\CosmologicalHorizonFuture$, the surface integrals at the
  horizons are positive. That is,
  \begin{equation*}
    \int_{\EventHorizonFuture\bigcap \widetilde{\Sigma}_{\tStar}}\JCurrent{\TFixer, 0, 0}[{h}]\cdot \HorizonGen_{\EventHorizonFuture} \ge 0,\qquad
    \int_{\CosmologicalHorizonFuture\bigcap\widetilde{\Sigma}_{\tStar}} \JCurrent{\TFixer, 0, 0}[{h}]\cdot \HorizonGen_{\CosmologicalHorizonFuture} \ge 0.
  \end{equation*}
  We now deal with the divergence term on the right-hand side of
  (\ref{linear:eq:Killing-estimate:Div-Theorem-app-aux-1}). Using
  \eqref{linear:eq:J-K-currents:def} and the
  divergence property of the energy-momentum tensor in
  \eqref{linear:eq:div-them:J-K-currents}, we have that
  \begin{align*}
    \nabla\cdot\JCurrent{\TFixer, 0, 0}[{h}]
    &=  \Re\left[
      \TFixer\overline{{h}}\cdot
      \ScalarWaveOp[g]{h}      
      \right]
      + \KCurrent{\TFixer, 0, 0}[{h}]\\
    &= \Re\left[
      \TFixer\overline{{h}}\cdot\left(
      \LinearOp{h} -\SubPOp[{h}] -\PotentialOp[h]
      \right)
      \right]
       + \KCurrent{\TFixer, 0, 0}[{h}].
  \end{align*}
  We first consider
  \begin{equation}
    \label{linear:eq:Killing:L-times-Vfield:aux1}
    \Re\left[
      \TFixer\overline{h} \cdot \left(
      \LinearOp{h} -\SubPOp[{h}] -\PotentialOp[h]
      \right)
    \right].
  \end{equation}
  These terms can each be directly controlled by Cauchy-Schwarz, using
  the fact that $\TFixer = a\,\tilde{\chi}(r)\KillPhi$,
  \begin{equation*}
    \begin{split}
      &\int_{\widetilde{\Sigma}_{\tStar}}
      \Re\left[
        \TFixer\overline{{h}}\cdot
        \left(
          \LinearOp{h} -\SubPOp[{h}] -\PotentialOp[h]
        \right)
      \right]
      \,\sqrt{\GInvdtdt}\\
      \le{}& \epsilon\left(
        \norm{{h}}^2_{\InducedHk{1}(\widetilde{\Sigma}_{\tStar})} + \norm{\LinearOp{h}}^2_{\InducedLTwo(\widetilde{\Sigma}_{\tStar})}
      \right) + C(\epsilon)\norm{\KillT{h}}_{\InducedLTwo(\widetilde{\Sigma}_{\tStar})}^2
      + a C(\epsilon) \norm{h}^2_{\InducedHk{1}(\widetilde{\Sigma}_{\tStar})}.
    \end{split}
  \end{equation*}
  We now consider the deformation tensor term. Using 
  equation (\ref{linear:eq:TFixer:DeformTensor}), we see that
  \begin{equation}
    \label{linear:eq:Killing-estimate:aux1}
    \int_{\widetilde{\Sigma}_{\tStar}}\KCurrent{\TFixer, 0, 0}[h]\,\sqrt{\GInvdtdt}
    \lesssim{} 
     a\norm{{h}}^2_{\InducedHk{1}(\widetilde{\Sigma}_{\tStar})}
    + a \norm{\KillT h}^2_{\InducedLTwo(\widetilde{\Sigma}_{\tStar})}.
  \end{equation}
  We conclude the proof of \ref{linear:item:Killing-estimate:three} by taking
  $C(\epsilon)$ sufficiently large. 

  The final statement, \ref{linear:item: Killing-estimate:four}, is
  proven in the same way as
  \ref{linear:item:Killing-estimate:three}. Since we have the additional
  assumption that ${h}$ vanishes on $\EventHorizonFuture$ as well
  as $\CosmologicalHorizonFuture$, we can neglect the surface
  integrals at the horizons, and only need to estimate the
  divergence term. This can be done using Cauchy-Schwarz in the same
  manner as \eqref{linear:eq:Killing-estimate:aux1}. 
\end{proof}
The presence of $O(a)$ terms on the right hand
side of our estimates arise from the issue of
superradiance. Since we are working in the slowly
rotating case, these superradiant terms will be handled by an
appropriate redshift argument, which is the subject of Theorem
\ref{linear:thm:redshift-energy-estimate}. 

We also prove the following corollary of Theorem
\ref{linear:thm:Killing-estimate} which will be useful in Section
\ref{linear:sec:asymptotic-expansion} for proving Theorem
\ref{linear:thm:meromorphic:main-A}.

\begin{corollary}
  \label{linear:coro:Killing-estimate-with-gamma}
  Define the energy
  \begin{equation}
    \label{linear:eq:EnergyKill-gamma:def}
    \EnergyKill_\gamma(\tStar)[{h}]
    = \int_{\widetilde{\Sigma}_{\tStar}} \JCurrent{\TFixer, 0, -\gamma \TFixer^\flat} [{h}]\cdot n_{\widetilde{\Sigma}_{\tStar}},
  \end{equation}
  where we recall the definition of $\JCurrent{X,q,m}[h]$ from
  \eqref{linear:eq:J-K-currents:def}. 
  
  We have the following estimates.
  \begin{enumerate}[label=(\roman{enumi})]
  \item \label{linear:item:Killing-estimate-with-gamma:one} For all
    ${h}\in C^\infty(\StaticRegionWithExtension,\Complex^D)$,
    \begin{equation}
      \label{linear:coro:Killing-estimate-with-gamma:one}
      \EnergyKill_\gamma(\tStar)[{h}]
      \lesssim  \norm{h}_{H^1(\widetilde{\Sigma}_{\tStar})}^2
      + \norm{\KillT h}_{L^2(\widetilde{\Sigma}_{\tStar})}^2
      + \gamma\norm*{h}_{L^2(\widetilde{\Sigma}_{\tStar})}^2.
    \end{equation}
    
  \item \label{linear:item:Killing-estimate-with-gamma:three}For any
    $\epsilon>0$, there exists a constant $C(\epsilon)$ independent of
    $\gamma$, and a constant $C$ independent of both $\epsilon$ and
    $\gamma$ such that
    \begin{align}      
      \p_{\tStar} \EnergyKill_{\gamma}(\tStar)[{h}]
      \le{}& \epsilon\left(
             \norm{{h}}^2_{\InducedHk{1}(\widetilde{\Sigma}_{\tStar})} + \norm{(\LinearOp - \gamma){h}}^2_{\InducedLTwo(\widetilde{\Sigma}_{\tStar})}
             \right)\notag \\
           &+ C(\epsilon)\norm{\KillT{h}}_{\InducedLTwo(\widetilde{\Sigma}_{\tStar})}^2 
             + a C(\epsilon) \norm{h}_{\InducedHk{1}(\widetilde{\Sigma}_{\tStar})}^2
              + aC \gamma\norm*{h}^2_{\InducedLTwo(\Sigma_{\tStar})}. \label{linear:eq:Killing-estimate-with-gamma:three}
    \end{align}
        
  \item \label{linear:item:Killing-estimate-with-gamma:four} If ${h}$ is additionally assumed to vanish on the horizon,
    there exists a constant $C(\epsilon)$ independent of $\gamma$, and
    a constant $C$ independent of both $\gamma$ and $\epsilon$ such that for
    any $\epsilon>0$, the following estimate holds
    \begin{align*}      
      -\p_{\tStar} \EnergyKill_\gamma(\tStar)[{h}]
      \le{}& \epsilon\left(
             \norm{{h}}^2_{\InducedHk{1}(\widetilde{\Sigma}_{\tStar})} + \norm{(\LinearOp - \gamma){h}}^2_{\InducedLTwo(\widetilde{\Sigma}_{\tStar})}
              \right)\notag \\
           & + C(\epsilon)\norm{\KillT{h}}_{\InducedLTwo(\widetilde{\Sigma}_{\tStar})}^2
             + a C(\epsilon) \norm{h}_{\InducedHk{1}(\widetilde{\Sigma}_{\tStar})}^2
              + a C\gamma\norm*{h}^2_{\InducedLTwo(\Sigma_{\tStar})}. \label{linear:eq:Killing-estimate-with-gamma:four}  
    \end{align*} 
  \end{enumerate}
\end{corollary}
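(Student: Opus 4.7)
The plan is to derive Corollary \ref{linear:coro:Killing-estimate-with-gamma} as a direct extension of Theorem \ref{linear:thm:Killing-estimate}, tracking precisely the modifications introduced by the zero-order corrector $m = -\gamma \TFixer^\flat$. The structural observation that makes everything work is a cancellation in the bulk: the contribution of this corrector to $\KCurrent$ exactly cancels the $\gamma h$ piece that appears when one rewrites $\ScalarWaveOp h = (\LinearOp - \gamma)h + \gamma h - \SubPOp h - \PotentialOp h$. Combined with $\TFixer\cdot\HorizonGen_{\Horizon} = 0$ on each horizon, this ensures that the $\gamma$-dependence in the bulk amounts precisely to replacing $\LinearOp$ by $\LinearOp - \gamma$ on the right-hand side, with the remaining estimates handled exactly as in Theorem~\ref{linear:thm:Killing-estimate}.

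For part \ref{linear:item:Killing-estimate-with-gamma:one}, I would simply expand
\[\JCurrent{\TFixer, 0, -\gamma\TFixer^\flat}_\mu[h] = \JCurrent{\TFixer, 0, 0}_\mu[h] - \tfrac{\gamma}{2}\TFixer_\mu|h|^2,\]
integrate against $n_{\widetilde{\Sigma}_{\tStar}}$, and bound the extra piece pointwise by $C\gamma|h|^2$ using that $\TFixer$ is uniformly bounded. The remainder is then controlled by \eqref{linear:eq:Killing-estimate:one} from Theorem~\ref{linear:thm:Killing-estimate}, yielding \eqref{linear:coro:Killing-estimate-with-gamma:one}.

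For parts \ref{linear:item:Killing-estimate-with-gamma:three} and \ref{linear:item:Killing-estimate-with-gamma:four}, I would apply the divergence identity \eqref{linear:eq:div-them:J-K-currents} with $X = \TFixer$, $q = 0$, $m = -\gamma\TFixer^\flat$. Direct use of \eqref{linear:eq:J-K-currents:def} gives
\[\KCurrent{\TFixer, 0, -\gamma\TFixer^\flat}[h] = \DeformationTensor{\TFixer}\cdot \EMTensor[h] - \tfrac{\gamma}{2}\nabla_{\TFixer}|h|^2 - \tfrac{\gamma}{2}(\nabla\cdot\TFixer)|h|^2.\]
Here $\nabla\cdot\TFixer = 0$ because $\KillT$ and $\KillPhi$ are Killing and $\sqrt{-g}$ is independent of $(\tStar, \phiStar)$ in the Kerr-star chart, so the $a\tilde\chi(r)\KillPhi$ piece contributes $a\,\p_{\phiStar}(\sqrt{-g}\,\tilde\chi(r))/\sqrt{-g} = 0$. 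Substituting $\ScalarWaveOp h = (\LinearOp - \gamma)h + \gamma h - \SubPOp h - \PotentialOp h$ into $\Re[\TFixer\overline h\cdot \ScalarWaveOp h]$ produces the term $\gamma\Re[\TFixer\overline h\cdot h] = \tfrac{\gamma}{2}\nabla_{\TFixer}|h|^2$, which cancels the $-\tfrac{\gamma}{2}\nabla_{\TFixer}|h|^2$ in $\KCurrent$, leaving
\[\nabla\cdot\JCurrent{\TFixer, 0, -\gamma\TFixer^\flat}[h] = \Re[\TFixer\overline h\cdot (\LinearOp - \gamma)h] - \Re[\TFixer\overline h\cdot (\SubPOp h + \PotentialOp h)] + \DeformationTensor{\TFixer}\cdot \EMTensor[h].\]

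Integrating over $\widetilde{\Sigma}_{\tStar}$ via Corollary \ref{linear:cor:div-them:spacelike}, the horizon fluxes from the $\JCurrent{\TFixer, 0, 0}$ piece remain non-negative as in the proof of Theorem \ref{linear:thm:Killing-estimate}, while the corrector contributes $-\tfrac{\gamma}{2}\int_{\Horizon\cap\widetilde{\Sigma}_{\tStar}}(\TFixer\cdot \HorizonGen_{\Horizon})|h|^2$. By Lemma \ref{linear:lemma:T-Fixer:construction}, $\TFixer$ is future-directed, tangent to, and exactly null on each horizon, so it must be a positive multiple of the (unique) null generator $\HorizonGen_{\Horizon}$ there; consequently $\TFixer\cdot \HorizonGen_{\Horizon} = 0$ on $\Horizon$ and the corrector's horizon flux vanishes. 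The remaining bulk terms are then bounded by Cauchy–Schwarz exactly as in Theorem \ref{linear:thm:Killing-estimate}, after splitting $\TFixer = \KillT + a\tilde\chi(r)\KillPhi$, with any minor $a\gamma$-cross terms arising from a less careful tracking of the cancellation absorbed into the slack $aC\gamma\|h\|_{\InducedLTwo}^2$ term. This proves \eqref{linear:eq:Killing-estimate-with-gamma:three}; part \ref{linear:item:Killing-estimate-with-gamma:four} then follows immediately because the vanishing of $h$ on $\EventHorizonFuture\cup \CosmologicalHorizonFuture$ kills all horizon flux, permitting the inequality to be reversed. There is no substantive obstacle: the argument is a mechanical extension of Theorem \ref{linear:thm:Killing-estimate}, with the only genuine work being the verification of the bulk cancellation and the vanishing of the corrector's horizon contribution, both of which are immediate consequences of the construction of $\TFixer$ in Lemma \ref{linear:lemma:T-Fixer:construction}.
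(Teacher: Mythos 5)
Your proposal is correct and follows essentially the same route as the paper: apply the divergence identity with multipliers $(\TFixer,0,-\gamma\TFixer^\flat)$, observe that the corrector's horizon flux vanishes because $\TFixer$ is null and aligned with the horizon generator there, and rewrite $\ScalarWaveOp h$ in terms of $(\LinearOp-\gamma)h$ so the $\gamma h$ term absorbs the corrector's gradient term. Your additional observation that $\nabla\cdot\TFixer=0$ (which follows from $\DeformationTensor{\TFixer}=a\,d\tilde\chi\otimes\KillPhi^\flat$ having vanishing trace, since $d\tilde\chi(\KillPhi)=0$) makes the bulk cancellation exact and is slightly sharper than the paper's treatment, which merely absorbs the residual $\gamma$-dependence into the $aC\gamma\|h\|_{\InducedLTwo}^2$ slack without exploiting the exact vanishing of the trace; the paper's statement is therefore somewhat generous, and your version shows the $a\gamma$ term on the right-hand side is not actually needed.
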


\begin{proof}
  The first conclusion, follows directly from
  \eqref{linear:eq:Killing-estimate:one} and the definition of
  $\JCurrent{\TFixer, 0, \gamma\TFixer}[h]$.

  To prove \eqref{linear:eq:Killing-estimate-with-gamma:three}, observe that
  for $\Horizon=\EventHorizonFuture,\CosmologicalHorizonFuture$, 
  \begin{equation*}
    \int_{\Horizon\bigcap \widetilde{\Sigma}_{\tStar}} \JCurrent{\TFixer, 0, -\gamma\TFixer^\flat}[h]\cdot \HorizonGen_{\Horizon}
    = \int_{\Horizon\bigcap \widetilde{\Sigma}_{\tStar}} \JCurrent{\TFixer, 0, 0}[h]\cdot \HorizonGen_{\Horizon}
    .
  \end{equation*}
  Moreover, for $\gamma>0$, the flux terms vanish, since for
  $\Horizon = \EventHorizonFuture, \CosmologicalHorizonFuture$,
  \begin{equation*}
    \int_{\Horizon\bigcap\widetilde{\Sigma}_{\tStar}} \gamma |h|^2g\left(\HorizonGen_{\Horizon}, \HorizonGen_{\Horizon}\right) = 0.
  \end{equation*}
  As a result, following the proof of
  \eqref{linear:eq:Killing-estimate:three}, we see that we again have
  \begin{equation*}
    \p_{\tStar}\EnergyKill(\tStar)[h] \le \abs*{\int_{\widetilde{\Sigma}_{\tStar}}\nabla\cdot  \JCurrent{\TFixer, 0, -\gamma\TFixer^\flat}[h]\sqrt{\GInvdtdt}}.
  \end{equation*}
  Using the divergence property \eqref{linear:eq:div-them:J-K-currents}, we
  have that
  \begin{equation*}
    \nabla\cdot  \JCurrent{\TFixer, 0, -\gamma\TFixer^\flat}[h] = \Re\left[\TFixer\overline{h}\cdot
      \left((\LinearOp -\gamma)h - \SubPOp[h] - \PotentialOp[h]\right)\right]
    + \KCurrent{\TFixer, 0, -\gamma\TFixer^\flat}[h]
    +  \Re\left[\TFixer \overline{h}\cdot \gamma h\right],
  \end{equation*}
  where we now have using the construction of $\TFixer$ in Lemma
  \ref{linear:lemma:T-Fixer:construction} that for some $C>0$ independent of
  $\gamma$, 
  \begin{equation*}
    \abs*{\int_{\widetilde{\Sigma}_{\tStar}}\KCurrent{\TFixer, 0, -\gamma\TFixer^\flat}[h]\,\sqrt{\GInvdtdt}
      +  \Re\bangle*{\TFixer h, \gamma h}_{\InducedLTwo(\widetilde{\Sigma}_{\tStar})}} \le C\left(
    a\norm*{h}_{\InducedHk{1}(\widetilde{\Sigma}_{\tStar})}^2
    + \norm*{\KillT h}_{\InducedLTwo(\widetilde{\Sigma}_{\tStar})}^2
    +  a\gamma\norm*{h}_{\InducedLTwo(\widetilde{\Sigma}_{\tStar})}^2\right).
  \end{equation*}
  Repeating the proof of \eqref{linear:eq:Killing-estimate:three} as in
  Theorem \ref{linear:thm:Killing-estimate} yields the conclusions of the
  corollary. The proof of \eqref{linear:eq:Killing-estimate:four} follow
  similarly.
\end{proof}

\subsection{Redshift estimates}
\label{linear:sec:redshift-multiplier}

The redshift estimates will allow us to extend energy estimates to the
horizons. This will be useful in proving both the meromorphic
continuation and the resolvent estimates. We will use the vectorfield
$\RedShiftN$ as a multiplier to derive the redshift estimates. 
We define the redshift energy on $\Sigma_{\tStar}$,
\begin{equation}
  \label{linear:eq:redshift-energy:def}
  \RedShiftEnergy(\tStar)[{h}] = \int_{\Sigma_{\tStar}}\JCurrent{\RedShiftN, 0, 0}[{h}]\cdot n_{\Sigma_{\tStar}},
\end{equation}
and the redshift region
\begin{equation}
  \label{linear:eq:redshift-reg:def}
  \RedShiftReg:=[r_{\EventHorizonFuture}-\varepsilon_{\StaticRegionWithExtension}, r_0)\bigcup(R_0,
  r_{\CosmologicalHorizonFuture}+\varepsilon_{\StaticRegionWithExtension}].
\end{equation}

\begin{theorem}
  \label{linear:thm:redshift-energy-estimate}
  Let $\varepsilon_{\RedShiftN}>0$ be fixed. 
  \begin{enumerate}[label=(\roman{enumi})]
  \item \label{linear:item:RedShiftEstimate:one}We have that
    \begin{equation}
      \label{linear:eq:RedShiftEstimate:one}
      \RedShiftEnergy(\tStar)[{h}]
      \lesssim \norm{\nabla {h}}^2_{L^2(\Sigma_{\tStar})} + \norm*{h}_{L^2(\Sigma_{\tStar})}^2
      \lesssim \RedShiftEnergy(\tStar)[{h}].
    \end{equation}
  \item \label{linear:item:RedShiftEstimate:two}For any $\epsilon>0$, there exists some
    constant $C_{\RedShiftN}(\epsilon)>0$ such that
    \begin{align}
      \p_{\tStar} \RedShiftEnergy(\tStar)[{h}]
      \le{}& \max_{\Horizon=\EventHorizonFuture, \CosmologicalHorizonFuture}\left(
             \SHorizonControl{\LinearOp}[\Horizon]-\SurfaceGravity_{\Horizon} +\varepsilon_{\RedShiftN} + \epsilon \right)\RedShiftEnergy(\tStar)[{h}] \notag\\
           &+ C_{\RedShiftN}(\epsilon)\left(\norm{\LinearOp{ h}}_{\InducedLTwo(\Sigma_{\tStar})}^2
             + \EnergyKill(\tStar)[{h}]
             + \norm{h}_{\InducedLTwo(\Sigma_{\tStar})}^2
             \right) . \label{linear:eq:redshift:main-est}
    \end{align}
  \item \label{linear:item:RedShiftEstimate:three}If in addition, ${h}$ is
    assumed to vanish on $\EventHorizonFuture_-$ and
    $\CosmologicalHorizonFuture_-$ as defined in
    \eqref{linear:eq:extended-horizon-def}, then for any $\epsilon>0$, there
    exists $C(\epsilon)$ such that
    \begin{align}
      -\p_{\tStar} \RedShiftEnergy(\tStar)[{h}]
      \le{}& \max_{\Horizon=\EventHorizonFuture, \CosmologicalHorizonFuture}\left(
             -\SHorizonControl{\LinearOp}^*[\Horizon] + \SurfaceGravity_{\Horizon} + \varepsilon_{\RedShiftN} +\epsilon\right)\RedShiftEnergy(\tStar)[{h}]\notag \\
           &+C_{\RedShiftN}\left(\norm{\LinearOp{h}}_{\InducedLTwo(\Sigma_{\tStar})}^2
             + \EnergyKill(\tStar)[\chi_\bullet h]
             + \norm{h}_{\InducedLTwo(\Sigma_{\tStar})}^2
             \right), \label{linear:eq:redshift:dual-est}
    \end{align}
    for some $\chi_\bullet(r)$ such that 
    \begin{equation*}
      \chi_\bullet(r)=
      \begin{cases}
        0&r< r_{\EventHorizonFuture}\left(1+ \frac{r_0}{2r_{\EventHorizonFuture}}\right),\\
        1& r>r_0. 
      \end{cases}
    \end{equation*}
  \item \label{linear:item:RedShiftEstimate:four}If instead, ${h}$ is
    supported on $\RedShiftReg$ as defined in
    \eqref{linear:eq:redshift-reg:def}, then for any $\epsilon>0$, there
    exists $C(\epsilon)$ such that
    \begin{equation}
      \label{linear:eq:redshift:morawetz-est-aux}
      \p_{\tStar}\RedShiftEnergy(\tStar)[{h}] \le
      \max_{\Horizon=\EventHorizonFuture, \CosmologicalHorizonFuture}\left(
        \SHorizonControl{\LinearOp}[\Horizon] - \SurfaceGravity_{\Horizon} + \varepsilon_{\RedShiftN} + \epsilon\right)
      \RedShiftEnergy(\tStar)[{h}]
      + C(\epsilon)\norm{\LinearOp{h}}_{\InducedLTwo(\Sigma_{\tStar})}^2
      + C(\epsilon)\norm*{h}_{\InducedLTwo(\Sigma_{\tStar})}^2
      .
    \end{equation}
  \end{enumerate}
\end{theorem}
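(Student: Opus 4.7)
The strategy is to apply the spacelike divergence identity from Corollary~\ref{linear:cor:div-them:spacelike} to the current $\JCurrent{\RedShiftN,0,0}[h]$, which together with \eqref{linear:eq:div-them:J-K-currents} yields
\[
  \p_{\tStar}\RedShiftEnergy(\tStar)[h]
  = -\int_{\Sigma_{\tStar}} \KCurrent{\RedShiftN,0,0}[h]\,\sqrt{\GInvdtdt}
  - \Re\int_{\Sigma_{\tStar}} \RedShiftN\overline{h}\cdot\ScalarWaveOp[g_b]h\,\sqrt{\GInvdtdt}
  - \int_{\partial\Sigma_{\tStar}} \JCurrent{\RedShiftN,0,0}[h]\cdot n_{\partial\Sigma},
\]
where the boundary is $\EventHorizonFuture_-\cup\CosmologicalHorizonFuture_+$. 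Substituting $\ScalarWaveOp[g_b]h = \LinearOp h - \SubPOp h - \PotentialOp h$ from \eqref{linear:eq:energy-estimates:strongly-hyperbolic-operator} decomposes the second integral into a source term involving $\LinearOp h$ (absorbed via Cauchy--Schwarz into $\|\LinearOp h\|^2_{\InducedLTwo}$ plus a small multiple of $\RedShiftEnergy$), a potential term (an order-zero contribution absorbed into $\|h\|^2_{\InducedLTwo}$), and the critical subprincipal contribution $\Re(\RedShiftN\overline{h}\cdot\SubPOp h)$ which provides the $\SHorizonControl{\LinearOp}[\Horizon]$ factor.

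Part (i) follows immediately from the uniform timelike character of $\RedShiftN$ on $\Sigma$ (Proposition~\ref{linear:prop:redshift:N-construction}), so $\JCurrent{\RedShiftN,0,0}[h]\cdot n_{\Sigma}$ is pointwise comparable to $|\nabla h|^2$ modulo a lower-order $|h|^2$ correction. For parts (ii)--(iv) I would partition $\Sigma_{\tStar}$ into the redshift region $\RedShiftReg = \{r\in[r_{\EventHorizonFuture}-\varepsilon_{\StaticRegionWithExtension},r_0)\cup(R_0,r_{\CosmologicalHorizonFuture}+\varepsilon_{\StaticRegionWithExtension}]\}$ and its complement. On the complement, $\RedShiftN = \KillT$ by \eqref{linear:eq:redshift:N=T}, so $\KCurrent[h]=0$ identically and the bulk integral reduces to an expression controlled entirely by $\EnergyKill(\tStar)[h]$ via Theorem~\ref{linear:thm:Killing-estimate} (possibly with an $\epsilon$-loss and the standard $\|\KillT h\|^2_{\InducedLTwo}$ term). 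On the redshift region, \eqref{linear:eq:redshift:DeformTen-redshift-control} produces the coercive lower bound $-\KCurrent[h]\sqrt{\GInvdtdt} \leq -(\SurfaceGravity_{\Horizon}-\varepsilon_{\RedShiftN})\JCurrent[h]\cdot n_{\Sigma}$, which upon integration yields the $-\SurfaceGravity_{\Horizon}\RedShiftEnergy$ term in \eqref{linear:eq:redshift:main-est}.

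The subprincipal contribution is the technical core of the proof. Near $\Horizon$, I would write $\SubPOp h = S^\alpha \p_\alpha h$ and split using the definition \eqref{linear:eq:SHorizonControl:def}: the component of $\SubPOp$ that pairs with $\RedShiftN\overline{h}$ in the direction transverse to $\Horizon$ is absorbed into $\epsilon \RedShiftEnergy$ by Cauchy--Schwarz (using the coercive $|Xh|^2$ piece from \eqref{linear:eq:redshift:DeformTen-redshift-control} with $X$ chosen sufficiently large and transverse), while the component along $\HorizonGen$ produces precisely the $\SHorizonControl{\LinearOp}[\Horizon]\RedShiftEnergy$ factor via the bound $-g(\HorizonGen, \Re[\bar\xi\cdot S^\bullet\xi])\leq \SHorizonControl{\LinearOp}[\Horizon]|\xi|^2$, up to an $O(a)$ error from the transition between $\RedShiftN$ and $\HorizonGen$ and an $O(\varepsilon_{\RedShiftN})$ error from the redshift normalization. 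The boundary integrals at $\EventHorizonFuture_-$, $\CosmologicalHorizonFuture_+$ are automatically signed: since these hypersurfaces are spacelike and $\RedShiftN$ is timelike future-directed, the flux $\JCurrent\cdot n$ is non-negative, and when $\p_{\tStar}\RedShiftEnergy$ is estimated from above these contribute with a favorable sign and can be dropped.

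Part (iii) follows by running the same argument backward: since $h$ vanishes on $\EventHorizonFuture_-\cup\CosmologicalHorizonFuture_+$, the boundary integrals vanish, and the roles of sup and inf in \eqref{linear:eq:SHorizonControl:def} swap, producing $-\SHorizonControl{\LinearOp}^*[\Horizon] + \SurfaceGravity_{\Horizon}$; the auxiliary cutoff $\chi_\bullet$ enters because the estimate for $-\p_{\tStar}\RedShiftEnergy$ still requires control of $h$ on the middle region where $\RedShiftN=\KillT$, now at positions where $h$ need not vanish. Part (iv) is the cleanest: support of $h$ in $\RedShiftReg$ means the non-redshift region contributes nothing, so the $\EnergyKill$ term in \eqref{linear:eq:redshift:main-est} is absent. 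The main obstacle I anticipate is cleanly isolating the $\SHorizonControl{\LinearOp}[\Horizon]$ constant at the horizon with the correct sign and absorbing all remainder terms (transverse derivatives of $h$, lower-order couplings, and the $O(a)$ terms from the deviation of $\RedShiftN$ from $\HorizonGen$) into either the coercive $|Xh|^2$ term in \eqref{linear:eq:redshift:DeformTen-redshift-control} or into the $C_{\RedShiftN}(\epsilon)$ junk on the right-hand side, without contaminating the sharp $\SHorizonControl - \SurfaceGravity$ coefficient.
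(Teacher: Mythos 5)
Your overall plan — apply the spacelike divergence identity to $\JCurrent{\RedShiftN,0,0}[h]$, substitute $\ScalarWaveOp[g]h = \LinearOp h - \SubPOp h - \PotentialOp h$, use the deformation-tensor coercivity of Proposition~\ref{linear:prop:redshift:N-construction} on $\RedShiftReg$ and the Killing energy elsewhere, exploit sign-definiteness of the boundary fluxes — matches the paper's proof. However, there is a genuine error in the technical core: you have reversed the roles of the transverse and tangential components of $\SubPOp$ at the horizons.

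The correct mechanism is this. One decomposes $\SubPOp = \frac{1}{\sqrt{\GInvdtdt}}\,s\,n_\Sigma + \SubPOp'$ where $n_\Sigma$ is the future-directed unit normal to $\Sigma$ (hence \emph{transverse} to the horizon) and $\SubPOp'$ is \emph{tangent} to $\EventHorizonFuture$ and $\CosmologicalHorizonFuture$. Since $\HorizonGen_\Horizon$ is null and generates the horizon, $g(\HorizonGen_\Horizon,\cdot)$ annihilates every vector tangent to $\Horizon$; in particular $g(\HorizonGen_\Horizon,\SubPOp')=0$, while $g(\HorizonGen_\Horizon, n_\Sigma/\sqrt{\GInvdtdt})=-1$. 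Consequently $\SHorizonControl{\LinearOp}[\Horizon]=\sup\{-g(\HorizonGen_\Horizon,\Re[\bar\xi\cdot\SubPOp\xi])\}=\sup\{\Re[\xi^*\cdot s\,\xi]\}$, so the constant $\SHorizonControl$ is governed by the \emph{transverse} (normal) component $s$, not the component ``along $\HorizonGen$'' as you write — the latter contributes zero to $g(\HorizonGen,\cdot)$. Dually, the coercive $|Xh|^2$ term in \eqref{linear:eq:redshift:DeformTen-redshift-control} comes from Proposition~\ref{linear:prop:redshift:N-construction}, which explicitly requires $X$ to be \emph{tangent} to both horizons. It is the \emph{tangential} piece $\SubPOp'$ that one absorbs into $|Xh|^2$, by choosing $X=\SubPOp'$ when constructing $\RedShiftN$. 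Your proposal to take ``$X$ chosen sufficiently large and transverse'' is not allowed by that proposition, and no $|Xh|^2$ with $X$ tangent to $\Horizon$ can dominate the transverse derivatives of $h$ — those are exactly what survives in $\JCurrent\cdot n$ and what the $\SHorizonControl$ coefficient must track.

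Two smaller points. First, on $\Sigma\setminus\RedShiftReg$ the identity $\KCurrent{\RedShiftN,0,0}[h]=0$ holds only on $[r_1,R_1]$ where $\RedShiftN=\KillT$, not on the transition annuli $(r_0,r_1)\cup(R_1,R_0)$; the paper instead controls the whole complement of $\RedShiftReg$ by $\EnergyKill(\tStar)[h]$ via \eqref{linear:eq:Killing-estimate:two}, and you reach the same conclusion but for a slightly incorrect reason. Second, the discrepancy $\RedShiftN-\HorizonGen_\Horizon$ is not $O(a)$: $\RedShiftN$ is timelike at $\Horizon$ while $\HorizonGen_\Horizon$ is null, even for $a=0$, so the ``error from the transition between $\RedShiftN$ and $\HorizonGen$'' cannot be treated as a small perturbation; it is exactly the redshift structure built into \eqref{linear:eq:redshift:DeformTen-redshift-control}.
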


\begin{proof}
  Using the fact that
  \begin{equation*}
    g\left(\frac{n_{\Sigma}}{\sqrt{\GInvdtdt}}, \HorizonGen\right) = -1,
  \end{equation*}
  we decompose $\SubPOp[h]$ into
  \begin{equation}
    \label{linear:eq:redshift:S-decomp-transverse-tangential}
    \SubPOp = \frac{1}{\sqrt{\GInvdtdt}}s n_{\Sigma} + \SubPOp',
  \end{equation}
  where $\SubPOp'$ is tangent to the horizons, and $n_{\Sigma}$
  denotes the unit normal to $\Sigma$.  We now fix $\RedShiftN$ to be
  as constructed in Proposition \ref{linear:prop:redshift:N-construction}
  with $X=\SubPOp'$ and $\varepsilon_{\RedShiftN}$.
  
  To prove \ref{linear:item:RedShiftEstimate:one}, a similar argument as in
  the case of the $\TFixer$-energy shows that the redshift energy
  $\RedShiftEnergy(\tStar)[{h}]$ is positive. Then, since
  $\RedShiftN$ is everywhere timelike,
  $\RedShiftEnergy(\tStar)[{h}]$ controls all derivatives,
  including those transverse to the horizons, and as a result, also
  the $L^2$ norm.

  To prove \ref{linear:item:RedShiftEstimate:two}, we apply the divergence
  theorem in Corollary \ref{linear:cor:div-them:spacelike} with $X=\JCurrent{\RedShiftN, 0, 0}[{h}]$,
  \begin{align}    
      &\p_{\tStar}\int_{\Sigma_{\tStar}}\JCurrent{\RedShiftN, 0, 0}[h]\cdot n_{\Sigma_{\tStar}}
    + \int_{\EventHorizonFuture_-\bigcap \Sigma_{\tStar}}\JCurrent{\RedShiftN, 0, 0}[{h}]\cdot n_{\EventHorizonFuture_-}
    +     \int_{\CosmologicalHorizonFuture_+\bigcap \Sigma_{\tStar}}\JCurrent{\RedShiftN, 0, 0}[{h}]\cdot n_{\CosmologicalHorizonFuture_+}\notag \\
    ={}& - \int_{\Sigma_{\tStar}}\nabla\cdot \JCurrent{\RedShiftN, 0, 0}[{h}]\sqrt{\GInvdtdt}.     \label{linear:eq:RedShift:div-thm-app}    
  \end{align}
  By construction, $\RedShiftN$ is timelike and future-directed everywhere. As a
  result,
  \begin{equation*}
    \int_{\EventHorizonFuture_-}\JCurrent{\RedShiftN, 0, 0}[{h}]\cdot n_{\EventHorizonFuture_-} \ge 0,
    \quad
    \int_{\CosmologicalHorizonFuture_+}\JCurrent{\RedShiftN, 0, 0}[{h}]\cdot n_{\CosmologicalHorizonFuture_+}\ge 0.
  \end{equation*}
  It thus remains to estimate the divergence term on the right-hand
  side of \eqref{linear:eq:RedShift:div-thm-app}. Using
  (\ref{linear:eq:EMTensor:divergence-property}), 
  \begin{equation}
    \label{linear:eq:redshift:expanded-div-first-pass}
    \nabla\cdot \JCurrent{\RedShiftN, 0, 0}[{h}] =  \Re\left[
      \RedShiftN\overline{{h}}\cdot \ScalarWaveOp[g]{h}
    \right]
    +\KCurrent{\RedShiftN, 0, 0}[h].
  \end{equation}
  We consider the terms in (\ref{linear:eq:redshift:expanded-div-first-pass})
  individually.
  First, using the definition of $\LinearOp$,
  \begin{equation*}
    \begin{split}
      \Re\left[
        \RedShiftN\overline{{h}}\cdot \ScalarWaveOp[g] {h}\right]
      &= \Re\left[\RedShiftN\overline{{h}}\cdot \left(\LinearOp {h}-\SubPOp[{h}]- \PotentialOp {h}\right) \right].  
    \end{split}     
  \end{equation*}
  Then, using Cauchy-Schwartz and \eqref{linear:eq:RedShiftEstimate:one},
  \begin{equation}
    \label{linear:eq:redshift:main-est:aux1}
    \bangle*{\RedShiftN h, \LinearOp h - \PotentialOp h }_{\InducedLTwo(\Sigma)}
    \le{} \epsilon\RedShiftEnergy(\tStar)[{h}] + C(\epsilon)\left(
      \norm*{\LinearOp{h}}_{\InducedLTwo(\Sigma)}^2
      + \norm{h}_{\InducedLTwo(\Sigma)}^2
    \right).       
  \end{equation}
  
  We now control the principal bulk term,
  $\KCurrent{\RedShiftN,0,0}[{h}]-\Re\left[\RedShiftN\overline{{h}}\cdot
    \SubPOp[{h}]\right]$.  Observe that
  \begin{equation*}
    \SHorizonControl{\LinearOp}^*[\Horizon]|\xi|^2 \le \left.\Re[\xi^*\cdot s\xi]\right\vert_{\Horizon} \le \SHorizonControl{\LinearOp}[\Horizon]|\xi|^2,
  \end{equation*}
  where $s$ is as defined in
  \eqref{linear:eq:redshift:S-decomp-transverse-tangential}.  Then, from
  \eqref{linear:eq:redshift:DeformTen-redshift-control}, we have that
  \begin{equation*}
    \Re \bangle*{\RedShiftN h, \SubPOp[h]}_{\InducedLTwo(\RedShiftReg)}
    - \int_{\RedShiftReg} \KCurrent{\RedShiftN, 0, 0}[h]\,\sqrt{\GInvdtdt}
    \le \max_{\Horizon=\EventHorizonFuture, \CosmologicalHorizonFuture}\left(
    \SHorizonControl{\LinearOp}[\Horizon] - \SurfaceGravity_{\Horizon} + \varepsilon_{\RedShiftN}
    \right)\int_{\RedShiftReg}\JCurrent{\RedShiftN,0, 0}[h]\cdot n_{\RedShiftReg},
  \end{equation*}
  where we recall the definition of $\RedShiftReg$ in
  \eqref{linear:eq:redshift-reg:def}, and we have used the fact that the
  redshift bulk $\KCurrent{\RedShiftN, 0, 0}[h]$ controls a
  sufficiently large amount of the derivatives tangential to the
  horizons.  It remains then to control
  $\KCurrent{\RedShiftN,0,0}[{h}]-\Re\left[\RedShiftN\overline{{h}}\cdot\SubPOp[{h}]\right]$
  on $\Sigma\backslash\RedShiftReg$. But on
  $\Sigma\backslash\RedShiftReg$, we recall from
  \eqref{linear:eq:Killing-estimate:two} that $\EnergyKill(\tStar)[h]$
  controls the full $L^2$ norm of $\nabla h$. Thus, we have that for
  some $C>0$,
  \begin{equation}
    \label{linear:eq:redshift-main-est:aux2}
    \Re\bangle*{\RedShiftN h, \SubPOp[h]}_{\InducedLTwo(\Sigma)}
    - \int_{\Sigma} \KCurrent{\RedShiftN, 0, 0}[h]\,\sqrt{\GInvdtdt}
    \le \max_{\Horizon=\EventHorizonFuture, \CosmologicalHorizonFuture}\left(
    \SHorizonControl{\LinearOp}[\Horizon] - \SurfaceGravity_{\Horizon} + \varepsilon_{\RedShiftN}
    \right)\RedShiftEnergy(\tStar)[{h}]
    + C\EnergyKill(\tStar)[{h}].
  \end{equation}
  Combining the estimates in \eqref{linear:eq:redshift:main-est:aux1} and
  \eqref{linear:eq:redshift-main-est:aux2} and taking $C(\epsilon)$
  sufficiently large yields the conclusion. 

  To prove \ref{linear:item:RedShiftEstimate:three}, we repeat the same
  estimates but using $-\RedShiftN$ as a multiplier.  The only change
  that we need to make to the argument is the analysis on the boundary
  flux terms, which are no longer positive. However, this is easily
  handled since we assumed that $h$ vanishes exactly at
  $\EventHorizonFuture_-$ and $\CosmologicalHorizonFuture_+$.
  
  Finally, we prove \ref{linear:item:RedShiftEstimate:four}. This can be done
  repeating the proof above, but on $\RedShiftReg$ instead of
  $\Sigma$. We then observe that the only place we needed to use
  $\EnergyKill(\tStar)[{h}]$ to control derivatives was 
  on  $\Sigma\backslash\RedShiftReg$. On $\RedShiftReg$, we can use
  \eqref{linear:eq:redshift:DeformTen-redshift-control} to achieve
  arbitrary control of tangential derivatives.
\end{proof}

An important first corollary of the redshift energy estimates is an
application of Gronwall's Lemma: 
\begin{corollary}
  \label{linear:corollary:naive-energy-estimate}
  Fix $\TStar>0$ and let $\psi\in L^2(\Real^+, \Hk{1}(\Sigma))$
  with $\KillT\psi\in L^2(\Real^+, \LTwo(\Sigma))$ be a weak
  solution to the Cauchy problem
  \begin{equation}
    \label{linear:eq:naive-energy:Cauchy-prob}
    \begin{split}
      \LinearOp \psi &= f,\\
      \gamma_0(\psi)&=(\psi_0,\psi_1)
    \end{split}
  \end{equation}
  where $\LinearOp$ is the gauged linearized Einstein operator
  above,
  $\psi_0\in \Hk{1}(\Sigma), \psi_1\in \LTwo(\Sigma)$. Then
  $\psi\in C^0(\Real^+,\Hk{1}(\Sigma))$ with
  $\KillT\psi\in C^0(\Real^+, \LTwo(\Sigma))$, and there exists
  some constant $\GronwallExp$, depending on the black-hole
  background $g$, such that
  \begin{equation}
    \label{linear:eq:naive-energy-estimate}    
    \sup_{\tStar\le \TStar}e^{-\GronwallExp\tStar}\norm{\psi}_{\Hk{1}(\Sigma_{\tStar})}
    \lesssim \left(\norm{\psi_0}_{\Hk{1}(\Sigma)}
      + \norm{\psi_1}_{\LTwo(\Sigma)}
      + \int_0^\TStar e^{-\GronwallExp\tStar}\norm{f}_{\InducedLTwo(\Sigma_{\tStar})}\,d\tStar\right).            
  \end{equation}
  In particular, there exists some $\GronwallExp$ such that for
  $\Im\sigma > \GronwallExp$, 
  \begin{equation}
    \label{linear:eq:naive-energy-estimate:freq}
    \norm{u}_{\InducedHk{1}_\sigma(\Sigma)}  \lesssim \norm{\widehat{\LinearOp}(\sigma)u}_{\InducedLTwo(\Sigma)}.
  \end{equation}  
\end{corollary}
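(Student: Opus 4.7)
The plan is to apply Gronwall's inequality directly to the redshift energy $\RedShiftEnergy$. By the equivalence \eqref{linear:eq:RedShiftEstimate:one},
\[
\RedShiftEnergy(\tStar)[h] \simeq \norm{\nabla h}^2_{\InducedLTwo(\Sigma_\tStar)} + \norm{h}^2_{\InducedLTwo(\Sigma_\tStar)},
\]
where $\nabla$ denotes the full spacetime covariant derivative; since $\RedShiftN$ is uniformly timelike and transverse to both horizons, $\RedShiftEnergy$ in fact controls $\norm{h}^2_{\InducedHk{1}(\Sigma_\tStar)} + \norm{\KillT h}^2_{\InducedLTwo(\Sigma_\tStar)}$ uniformly in $b \in \BHParamNbhd$.

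Applying \eqref{linear:eq:redshift:main-est} to $\psi$ with a fixed small $\epsilon > 0$, the error terms $\EnergyKill(\psi)$ and $\norm{\psi}^2_{\InducedLTwo}$ on the right-hand side are dominated by $C\RedShiftEnergy(\psi)$ via Theorem \ref{linear:thm:Killing-estimate}\ref{linear:item:Killing-estimate:one} and the equivalence above. This yields
\[
\p_\tStar \RedShiftEnergy(\tStar)[\psi] \le 2\GronwallExp\,\RedShiftEnergy(\tStar)[\psi] + C\norm{f(\tStar)}^2_{\InducedLTwo(\Sigma_\tStar)}
\]
for some $\GronwallExp = \GronwallExp(b) > 0$. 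Gronwall's lemma, followed by taking square roots and invoking the equivalence, yields \eqref{linear:eq:naive-energy-estimate}. The continuity $\psi \in C^0(\Real^+; \InducedHk{1}(\Sigma))$ and $\KillT\psi \in C^0(\Real^+; \InducedLTwo(\Sigma))$ then follows by a standard density argument: smooth initial data and forcing produce smooth solutions to the first-order system \eqref{linear:eq:Sol-Op:first-order} by well-posedness, and the energy estimate extends by density.

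For the frequency-space bound, given $u \in D^1_\sigma(\widehat{\LinearOp}(\sigma))$, let $\upsilon$ be the stationary lift of $u$ and set $h(\tStar, x) := e^{-\ImagUnit\sigma\tStar}\upsilon(x)$. Then $h$ solves $\LinearOp h = e^{-\ImagUnit\sigma\tStar}\tilde{\upsilon}$ with $\gamma_0(h) = (u, -\ImagUnit\sigma u)$, where $\tilde{\upsilon}$ is the stationary lift of $\widehat{\LinearOp}(\sigma)u$. The spacetime gradient of $h$ has $\KillT$-component $-\ImagUnit\sigma h$, so
\[
\RedShiftEnergy(\tStar)[h] \simeq e^{2\Im\sigma\tStar}\bigl(\norm{u}^2_{\InducedHk{1}(\Sigma)} + |\sigma|^2\norm{u}^2_{\InducedLTwo(\Sigma)}\bigr) = e^{2\Im\sigma\tStar}\norm{u}^2_{\InducedHk{1}_\sigma(\Sigma)}.
\]
Applying \eqref{linear:eq:naive-energy-estimate} on $[0, T]$ and using this equivalence gives, for $\Im\sigma > \GronwallExp$, an inequality of the form
\[
e^{(\Im\sigma - \GronwallExp)T}\norm{u}_{\InducedHk{1}_\sigma(\Sigma)} \lesssim \norm{u}_{\InducedHk{1}_\sigma(\Sigma)} + \frac{e^{(\Im\sigma - \GronwallExp)T} - 1}{\Im\sigma - \GronwallExp}\norm{\widehat{\LinearOp}(\sigma)u}_{\InducedLTwo(\Sigma)}.
\]
Dividing by the exponential factor and letting $T \to \infty$ absorbs the initial-data contribution and yields \eqref{linear:eq:naive-energy-estimate:freq}. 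The main subtlety is ensuring that the equivalence $\RedShiftEnergy \simeq \norm{\cdot}^2_{\InducedHk{1}} + \norm{\KillT\cdot}^2_{\InducedLTwo}$ holds uniformly in $b$ and $\sigma$ (so that one genuinely recovers the $\InducedHk{1}_\sigma$-norm on the left of \eqref{linear:eq:naive-energy-estimate:freq} rather than only $\InducedHk{1}$); this uniformity follows from the uniform timelike transversality of $\RedShiftN$ to both horizons across the slowly-rotating regime, guaranteed by Proposition \ref{linear:prop:redshift:N-construction}.
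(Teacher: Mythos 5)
Your proof is correct. For the time-domain estimate \eqref{linear:eq:naive-energy-estimate} you take essentially the paper's route: the paper works with the integrated redshift inequality
\begin{equation*}
  \RedShiftEnergy(\TStar)[h]\le \RedShiftEnergy(0)[h] + C\norm{\LinearOp h}^2_{L^2(\DomainOfIntegration)} + \GronwallExp\int_0^{\TStar}\RedShiftEnergy(\tStar)[h]\,d\tStar
\end{equation*}
before applying Gronwall, while you apply Gronwall directly to the differential form $\p_{\tStar}\RedShiftEnergy \lesssim \RedShiftEnergy + \norm{f}^2$; the content is identical.

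For the frequency-domain bound \eqref{linear:eq:naive-energy-estimate:freq} you take a genuinely different route. The paper differentiates the integrated inequality to recover the pointwise differential inequality and then substitutes $h=e^{-\ImagUnit\sigma\tStar}u$ there: since $\p_{\tStar}\RedShiftEnergy(\tStar)[e^{-\ImagUnit\sigma\tStar}u] = 2\Im\sigma\,\RedShiftEnergy(\tStar)[e^{-\ImagUnit\sigma\tStar}u]$, every term carries the common factor $e^{2\Im\sigma\tStar}$, which can be divided out immediately to yield $(\Im\sigma-\GronwallExp)\norm{u}^2_{\InducedHk{1}_\sigma}\lesssim\norm{\widehat{\LinearOp}(\sigma)u}^2_{\InducedLTwo}$ with no limiting argument. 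You instead feed the mode ansatz into the already-integrated estimate on $[0,T]$ and let $T\to\infty$; the factor $e^{(\Im\sigma-\GronwallExp)T}$ on the left then absorbs the initial-data contribution. Both arguments are valid. The paper's is cleaner and avoids the limit; yours makes the mechanism (the solution's exponential growth exceeding the Gronwall rate) more explicit, and has the minor advantage of being a direct corollary of the stated time-domain bound rather than requiring a return to the differential inequality. Your remark on the $b$-uniformity of the equivalence $\RedShiftEnergy \simeq \norm{\cdot}^2_{\InducedHk{1}_\sigma}$ is tangential to this corollary, which is stated for a fixed metric, but it is correct and is of genuine importance in the later perturbative arguments (Proposition \ref{linear:prop:QNM-perturb:gen}).
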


\begin{proof}
  To prove the first statement, suppose $\psi_0,\psi_1$ are in fact
  smooth, and induce a smooth solution of
  (\ref{linear:eq:naive-energy:Cauchy-prob}). Now, let $\DomainOfIntegration$
  be the spacetime subregion of $\StaticRegionWithExtension$ bounded by
  $\Sigma_{\TStar}, \Sigma_0$.

  We now repeat the proof of (\ref{linear:eq:redshift:main-est}) but applying
  the divergence theorem with $\RedShiftN$ as a multiplier over
  $\DomainOfIntegration$ instead of over a single spacelike
  hypersurface. Again observing that
  $\RedShiftN$ is timelike and future-directed
  everywhere, we have that
  \begin{equation*}
    \int_{\EventHorizonFuture_-}\JCurrent{\RedShiftN, 0, 0}[h]\cdot n_{\EventHorizonFuture_-} \ge 0,
    \quad
    \int_{\CosmologicalHorizonFuture_+}\JCurrent{\RedShiftN, 0, 0}[h]\cdot n_{\CosmologicalHorizonFuture_+}\ge 0.
  \end{equation*}
  Then observe that
  \begin{equation*}
    \int_{\DomainOfIntegration}\abs*{\nabla\cdot\JCurrent{\RedShiftN, 0, 0}[h]}
    \le  \norm{\RedShiftN h}_{L^2(\DomainOfIntegration)}^2
    + \norm{\ScalarWaveOp[g]h}_{L^2(\DomainOfIntegration)}^2
    + \int_{\DomainOfIntegration} \abs*{\KCurrent{\RedShiftN, 0, 0}[h]}.
  \end{equation*}
  Then recalling \eqref{linear:eq:RedShiftEstimate:one} and
  \eqref{linear:eq:redshift:main-est}, we have that there exists some
  $C, \GronwallExp>0$ such that
  \begin{equation}
    \label{linear:eq:naive-energy:aux1}
    \RedShiftEnergy(\TStar)[h]
    \le{} \RedShiftEnergy(0)[h]
    + C \norm{\LinearOp {h}}_{L^2(\DomainOfIntegration)}^2 
    + \GronwallExp \int_0^\TStar \RedShiftEnergy(\tStar)[h]\,d\tStar.    
  \end{equation}
  Applying Gronwall's inequality then immediately yields
  \eqref{linear:eq:naive-energy-estimate}.  
  
  To prove the Laplace-transformed statement in
  \eqref{linear:eq:naive-energy-estimate:freq}, we take a $\p_{\tStar}$
  derivative of equation \eqref{linear:eq:naive-energy:aux1},
  \begin{equation}
    \label{linear:eq:naieve-energy:aux-spacelike}
    \p_{\tStar}\RedShiftEnergy(\tStar)[h] - \GronwallExp \RedShiftEnergy(\tStar)[h]
    \lesssim \norm{\LinearOp {h}}_{\InducedLTwo(\DomainOfIntegration)}^2.      
  \end{equation}
  Considering \eqref{linear:eq:naieve-energy:aux-spacelike} in the specific
  case where $h(\tStar, x) = e^{-\ImagUnit\sigma\tStar} u(x)$, we have that
  \begin{equation*}
    \left(\Im \sigma - \GronwallExp\right)\RedShiftEnergy(\tStar)[e^{-\ImagUnit\sigma\tStar}u]
    \lesssim
    C \norm{\LinearOp e^{-\ImagUnit\sigma\tStar}u}_{\InducedLTwo(\Sigma_{\tStar})}. 
  \end{equation*}
  For $\Im\sigma>\GronwallExp$, we see that the left-hand side is
  positive. Multiplying both sides by $e^{2\Im\sigma\tStar}$ to remove any
  $\tStar$-dependency, we then have that
  \begin{equation*}
    \left(\Im \sigma - \GronwallExp\right)\left(
      \norm*{u}_{H^1(\Sigma_{\tStar})} + \norm*{\sigma u}_{L^2(\Sigma_{\tStar})}
    \right)
    \lesssim
    \norm{\widehat{\LinearOp}(\sigma)u}_{\InducedLTwo(\Sigma_{\tStar})}
  \end{equation*}
  as desired.   
\end{proof}

\begin{remark}
  We remark that the second corollary above is exactly the statement
  that the resolvent of the operator $\LinearOp$ is bounded in
  some upper-half space. Thus, from the redshift energy alone we are
  able to deduce a large resonance-free region. 
\end{remark}

\begin{corollary}
  \label{linear:coro:redshift-energy-estimate-with-gamma}
  Fix some $\varepsilon_{\RedShiftN}>0$, and define
  \begin{equation}
    \label{linear:eq:RedShiftEnergy-with-gamma:def}
    \RedShiftEnergy_\gamma(\tStar)[h] = \int_{\Sigma_{\tStar}}\JCurrent{\RedShiftN,0, -\gamma\RedShiftN^\flat}[h]\cdot n_{\Sigma_{\tStar}},
  \end{equation}
  where we recall the definition of $\JCurrent{X,q,m}[h]$ in
  \eqref{linear:eq:div-them:J-K-currents}.  
  Then we have the following estimates. 
  \begin{enumerate}[label=(\roman{enumi})]
  \item \label{linear:item:RedShiftEstimate-with-gamma:one}We have that
    \begin{equation}
      \label{linear:eq:RedShiftEstimate-with-gamma:one}
      \RedShiftEnergy_\gamma(\tStar)[{h}]
      \lesssim \norm{h}^2_{H^1(\Sigma_{\tStar})}
      + \norm{\KillT h}_{L^2(\Sigma_{\tStar})}^2
      +  \gamma\norm*{h}^2_{L^2(\Sigma_{\tStar})}
      \lesssim \RedShiftEnergy_\gamma(\tStar)[{h}].
    \end{equation}
  \item \label{linear:item:RedShiftEstimate-with-gamma:two}Then for any
    $\epsilon>0$, there exists some
    constant $C_{\RedShiftN}(\epsilon)>0$ such that
    \begin{align}
      \p_{\tStar} \RedShiftEnergy_\gamma(\tStar)[{h}]
      \le{}& \max_{\Horizon=\EventHorizonFuture, \CosmologicalHorizonFuture}\left(
             \SHorizonControl{\LinearOp}[\Horizon]
             -\SurfaceGravity_{\Horizon}
             +\varepsilon_{\RedShiftN}
             +\epsilon
             \right)\RedShiftEnergy_\gamma(\tStar)[{h}] \notag\\
           &+ C_{\RedShiftN}(\epsilon)\left(\norm{(\LinearOp - \gamma){ h}}_{\InducedLTwo(\Sigma_{\tStar})}^2
             + \EnergyKill_\gamma(\tStar)[{h}]
             + \norm{h}_{\InducedLTwo(\Sigma_{\tStar})}^2
             \right) . \label{linear:eq:redshift-with-gamma:main-est}
    \end{align}
  \item \label{linear:item:RedShiftEstimate-with-gamma:three}If in addition,
    ${h}$ is assumed to vanish on $\EventHorizonFuture_-$ and
    $\CosmologicalHorizonFuture_+$, then for any $\epsilon>0$, there
    exists some constant $C_{\RedShiftN}(\epsilon)>0$ such that
    \begin{align}
      -\p_{\tStar} \RedShiftEnergy_\gamma(\tStar)[{h}]
      \le{}& \max_{\Horizon=\EventHorizonFuture, \CosmologicalHorizonFuture}\left(-\SHorizonControl{\LinearOp}^*[\Horizon] + \SurfaceGravity_{\Horizon} + \varepsilon_{\RedShiftN} + \epsilon\right)\RedShiftEnergy_\gamma(\tStar)[{h}]\notag \\
           &+C_{\RedShiftN}(\epsilon)\left(\norm{(\LinearOp - \gamma){h}}_{\InducedLTwo(\Sigma_{\tStar})}^2
             + \EnergyKill_\gamma(\tStar)[h]
             + \norm{h}_{\InducedLTwo(\Sigma_{\tStar})}^2
             \right). \label{linear:eq:redshift-with-gamma:dual-est}
    \end{align}
  \end{enumerate}
\end{corollary}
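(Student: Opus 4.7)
My plan is to adapt the proof of Theorem~\ref{linear:thm:redshift-energy-estimate} to account for the zero-order corrector $m = -\gamma\RedShiftN^\flat$, mirroring how Corollary~\ref{linear:coro:Killing-estimate-with-gamma} extended Theorem~\ref{linear:thm:Killing-estimate}. The norm equivalence in item~\ref{linear:item:RedShiftEstimate-with-gamma:one} is immediate from the definition of $\JCurrent{\RedShiftN, 0, -\gamma\RedShiftN^\flat}[h]$: the corrector contributes
\[
-\tfrac{\gamma}{2}(\RedShiftN^\flat \cdot n_{\Sigma_{\tStar}})|h|^2
\]
to the boundary flux, which is pointwise nonnegative (since $\RedShiftN$ and $n_{\Sigma_{\tStar}}$ are both future-directed timelike) and comparable to $\gamma|h|^2$; combined with \eqref{linear:eq:RedShiftEstimate:one} this gives the claimed equivalence.

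For items~\ref{linear:item:RedShiftEstimate-with-gamma:two} and~\ref{linear:item:RedShiftEstimate-with-gamma:three}, I will apply the divergence theorem of Corollary~\ref{linear:cor:div-them:spacelike} with $X = \JCurrent{\RedShiftN, 0, -\gamma\RedShiftN^\flat}[h]$. The fluxes at $\EventHorizonFuture_-$ and $\CosmologicalHorizonFuture_+$ retain the favorable sign used in the proof of Theorem~\ref{linear:thm:redshift-energy-estimate}: both the $\EMTensor$-contribution and the corrector contribution $-\tfrac{\gamma}{2}(\RedShiftN^\flat \cdot n_\Horizon)|h|^2$ are nonnegative, so they can be dropped for item~\ref{linear:item:RedShiftEstimate-with-gamma:two}; for item~\ref{linear:item:RedShiftEstimate-with-gamma:three}, they vanish outright under the stated boundary conditions, and the estimate is then rerun with $-\RedShiftN$ as the multiplier.

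The crucial algebraic step is the cancellation built into the choice $m = -\gamma\RedShiftN^\flat$. Writing $\LinearOp h = (\LinearOp - \gamma)h + \gamma h$ and using $\gamma\Re[\RedShiftN\overline{h}\cdot h] = \tfrac{\gamma}{2}\nabla_{\RedShiftN}|h|^2$, together with the identity
\[
\KCurrent{\RedShiftN, 0, -\gamma\RedShiftN^\flat}[h] = \KCurrent{\RedShiftN, 0, 0}[h] - \tfrac{\gamma}{2}\nabla_{\RedShiftN}|h|^2 - \tfrac{\gamma}{2}(\nabla\cdot\RedShiftN)|h|^2,
\]
the divergence relation \eqref{linear:eq:div-them:J-K-currents} collapses to
\begin{align*}
\nabla_g\cdot \JCurrent{\RedShiftN, 0, -\gamma\RedShiftN^\flat}[h] ={}& \Re[\RedShiftN\overline{h}\cdot(\LinearOp - \gamma)h] - \Re[\RedShiftN\overline{h}\cdot\SubPOp[h]] - \Re[\RedShiftN\overline{h}\cdot\PotentialOp h] \\
&{}+ \KCurrent{\RedShiftN, 0, 0}[h] - \tfrac{\gamma}{2}(\nabla\cdot\RedShiftN)|h|^2,
\end{align*}
with the top-order $\gamma$-contribution canceled. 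The first four terms on the right-hand side are handled verbatim as in Theorem~\ref{linear:thm:redshift-energy-estimate}: Cauchy--Schwarz absorbs the $(\LinearOp - \gamma)h$ and $\PotentialOp h$ terms into $C_{\RedShiftN}(\epsilon)(\norm{(\LinearOp - \gamma)h}_{\InducedLTwo}^2 + \norm{h}_{\InducedLTwo}^2)$, while the redshift structure \eqref{linear:eq:redshift:DeformTen-redshift-control} near the horizons (with $\EnergyKill_\gamma$ closing the estimate on the complement via \eqref{linear:eq:redshift:DeformTen-redshift-weak-control}) controls $\Re[\RedShiftN\overline{h}\cdot\SubPOp[h]] - \KCurrent{\RedShiftN, 0, 0}[h]$ by $\max_\Horizon(\SHorizonControl{\LinearOp}[\Horizon] - \SurfaceGravity_{\Horizon} + \varepsilon_{\RedShiftN})\RedShiftEnergy(\tStar)[h]$.

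The only genuinely new error is $\tfrac{\gamma}{2}(\nabla\cdot\RedShiftN)|h|^2$; since $\RedShiftN$ is a fixed stationary vectorfield, $\nabla\cdot\RedShiftN$ is bounded and integration yields at most $C\gamma\norm{h}_{\InducedLTwo(\Sigma_{\tStar})}^2$, which I will absorb into the leading $\max_\Horizon(\cdot)\RedShiftEnergy_\gamma(\tStar)[h]$ term on the right-hand side via the lower bound $\RedShiftEnergy_\gamma \gtrsim \gamma\norm{h}_{\InducedLTwo}^2$ from item~\ref{linear:item:RedShiftEstimate-with-gamma:one}, at the minor cost of slightly enlarging $\varepsilon_{\RedShiftN}$ in the coefficient. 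I do not foresee any essential obstacle: the proof is essentially bookkeeping once the cancellation identity for $m = -\gamma\RedShiftN^\flat$ is observed, and the only subtlety is the boundary-flux sign for item~\ref{linear:item:RedShiftEstimate-with-gamma:three}, which is handled exactly as in Theorem~\ref{linear:thm:redshift-energy-estimate}\ref{linear:item:RedShiftEstimate:three}.
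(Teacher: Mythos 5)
Your proposal correctly identifies the cancellation identity
\[
  \KCurrent{\RedShiftN,0,-\gamma\RedShiftN^\flat}[h] + \Re[\RedShiftN\overline{h}\cdot\gamma h]
  = \KCurrent{\RedShiftN,0,0}[h] - \tfrac{\gamma}{2}(\nabla\cdot\RedShiftN)|h|^2,
\]
the boundary-flux signs, and the overall structure of items (i)--(iii). However, there is a genuine gap in the way you propose to handle the new error term $-\tfrac{\gamma}{2}(\nabla\cdot\RedShiftN)|h|^2$.

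You propose to bound it by $C\gamma\norm{h}^2_{\InducedLTwo(\Sigma)}$ and absorb it into the leading $\max_{\Horizon}(\cdot)\RedShiftEnergy_\gamma$ term ``at the minor cost of slightly enlarging $\varepsilon_{\RedShiftN}$.'' This does not work: from item (i) the bound $\gamma\norm{h}^2_{L^2} \lesssim \RedShiftEnergy_\gamma$ holds with a fixed implicit constant $c_0$, and $|\nabla\cdot\RedShiftN|$ is bounded by some fixed $M$; so the absorption enlarges the coefficient of $\RedShiftEnergy_\gamma$ by $M/(2c_0)$, a constant of order one, not by an adjustably small quantity. This would destroy the smallness of $\varepsilon_{\RedShiftN}$ that the later applications (e.g.\ in Section~\ref{linear:sec:meromorphic-continuation}) require. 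The paper avoids this by using property~\eqref{linear:eq:redshift:divergence} of Proposition~\ref{linear:prop:redshift:N-construction}: $\nabla\cdot\RedShiftN < 0$ near (and beyond) both horizons, so there the term $-\tfrac{\gamma}{2}(\nabla\cdot\RedShiftN)|h|^2$ has the \emph{favorable} sign for the required upper bound. In the transition zones $(r_0,r_1)$ and $(R_1,R_0)$, where $\nabla\cdot\RedShiftN$ carries no sign, the term is absorbed into the already-large $C_{\RedShiftN}(\epsilon)\EnergyKill_\gamma(\tStar)[h]$ error since those zones lie compactly inside $\widetilde{\Sigma}$ away from the horizons, where $\EnergyKill_\gamma$ controls $\gamma\norm{h}^2_{L^2}$. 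In the middle region $\RedShiftN=\KillT$ is Killing so the term vanishes. Without invoking the sign of $\nabla\cdot\RedShiftN$ near the horizons, the extension region beyond the horizons cannot be reached by $\EnergyKill_\gamma$ at all, and the absorption into $\RedShiftEnergy_\gamma$ costs a non-small amount.
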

\begin{proof}
  We again fix $\RedShiftN$ as constructed in Proposition
  \ref{linear:prop:redshift:N-construction} with $\varepsilon_{\RedShiftN}$,
  and $X = \SubPOp'$.  
  The first conclusion in \eqref{linear:eq:RedShiftEstimate-with-gamma:one}
  follows directly from \eqref{linear:eq:RedShiftEnergy-with-gamma:def} and
  \eqref{linear:eq:RedShiftEstimate:one}. 

  To prove \eqref{linear:eq:redshift-with-gamma:main-est}, observe that for
  $\Horizon = \EventHorizonFuture_-, \CosmologicalHorizonFuture_+$, we
  have that
  \begin{equation*}
    \int_{\Horizon}\JCurrent{\RedShiftN, 0, -\gamma\RedShiftN^\flat}[h]\cdot n_{\Horizon}
    = \int_{\Horizon}\JCurrent{\RedShiftN, 0, 0}[h]\cdot n_{\Horizon}
    - \int_{\Horizon}\gamma|h|^2g(\RedShiftN, n_{\Horizon}). 
  \end{equation*}
  Moreover, for $\gamma>0$, we have that for
  $\Horizon = \EventHorizonFuture_-, \CosmologicalHorizonFuture_+$,
  \begin{equation*}
     \int_{\Horizon}\gamma|h|^2g(\RedShiftN, n_{\Horizon})< 0.
   \end{equation*}
   As a result, following the proof of \eqref{linear:eq:redshift:main-est},
   we have that
   \begin{equation*}
     \p_{\tStar}\RedShiftEnergy_\gamma(\tStar)[h]\le \int_{\Sigma_{\tStar}}\nabla \cdot \JCurrent{\RedShiftN, 0, -\gamma\RedShiftN^\flat}[h]\,\sqrt{\GInvdtdt}.
   \end{equation*}
   Using the divergence property \eqref{linear:eq:div-them:J-K-currents}, we
   have that
   \begin{equation*}
     \nabla\cdot \JCurrent{\RedShiftN, 0, -\gamma\RedShiftN^\flat}[h] = \Re\left[
       \RedShiftN \overline{h}\cdot \left((\LinearOp - \gamma)h - \SubPOp[h] - \PotentialOp[h]\right)
     \right]
     + \KCurrent{\RedShiftN, 0, -\gamma\RedShiftN^\flat}[h]
     + \Re\left[\RedShiftN\overline{h}\cdot \gamma h \right],
   \end{equation*}
   where we recall that
   \begin{equation*}
     \KCurrent{\RedShiftN, 0, -\gamma\RedShiftN^\flat}[h]
     + \Re\left[\RedShiftN\overline{h}\cdot \gamma h \right]
     = \KCurrent{\RedShiftN, 0, 0}[h] - \frac{1}{2}\gamma\nabla\cdot \RedShiftN |h|^2. 
   \end{equation*}
   From Proposition \ref{linear:prop:redshift:N-construction}, we have that
   $\nabla\cdot\RedShiftN<0$ in a small neighborhood of the
   horizons. Thus, we have that there exists some $C>0$ such that
   \begin{equation*}
     \int_{\Sigma_{\tStar}}\KCurrent{\RedShiftN, 0, -\gamma\RedShiftN^\flat}[h]\,\sqrt{\GInvdtdt}
       + \Re\bangle*{\RedShiftN h,\gamma h}_{\InducedLTwo(\Sigma_{\tStar})} 
     > \int_{\Sigma_{\tStar}}\KCurrent{\RedShiftN, 0, 0}[h]\,\sqrt{\GInvdtdt}
     - C \EnergyKill_{\gamma}(\tStar)[h]
   \end{equation*}
   Then, repeating the proof of
   \eqref{linear:eq:redshift:main-est} yields
   \eqref{linear:eq:redshift-with-gamma:main-est} directly. 
\end{proof}

\subsection{$\RedShiftN$ as a commutator}
\label{linear:sec:redshift-commutator}

We use the redshift vectorfield $\RedShiftN$ not only as a multiplier
but also as a commutator in order to derive higher-order energy
estimates. This will improve the domain on which we define quasinormal
modes.  

\begin{theorem}
  \label{linear:thm:redshift-commutation:main}
  Consider some ${h}\in C^\infty_0(\StaticRegionWithExtension,
  \Complex^D)$. Define $f$ as
  \begin{equation}
    \label{linear:eq:redshift-commutation:orig-eq}
    \LinearOp {h} =f.
  \end{equation}
  Then,
  \begin{enumerate}
  \item There exists a finite set of vectorfields
    $\curlyBrace*{\RedShiftK_i}_{i=1}^N$, which span the set of
    smooth vectorfields over $\StaticRegionWithExtension$, such that
    for $\mathbf{h} = ({h}, \RedShiftK_1{h}, \cdots, \RedShiftK_N{h})$,
    $\mathbf{h}$ satisfies
    \begin{equation}
      \label{linear:eq:enhanced-redshift:new-system}
      \bL\mathbf{h} = \blittleF 
    \end{equation}
    where $\bL$ is a strongly hyperbolic operator constructed from $L$,
    which acts on vectors in $\Complex^{D'}$, $D' = N(D+1)$, and $f'$
    is defined by
    \begin{equation}
      \label{linear:eq:enhanced-redshift:new-forcing}
      \blittleF_0 = f,\quad \blittleF_i=\RedShiftK_if.
    \end{equation}
    We also have
    \begin{equation}
      \label{linear:eq:enhanced-redshift:improvement}
      \SHorizonControl{\bL}[\mathcal{H}] = \SHorizonControl{\LinearOp}[\mathcal{H}] - 2\SurfaceGravity_{\mathcal{H}},
    \end{equation}
    where $\mathcal{H}$ is either $\EventHorizonFuture$ or
    $\CosmologicalHorizonFuture$. 
  \item Conversely, suppose that $\mathbf{h}\in
    C^\infty_0(\StaticRegionWithExtension; \Complex^{D'})$
    solves
    \eqref{linear:eq:enhanced-redshift:new-system} with $\blittleF$ having the form
    \eqref{linear:eq:enhanced-redshift:new-forcing} for some $f$. Then,
    defining ${h}:= \mathbf{h}_0$, and $\delta\mathbf{h}= (\mathbf{h}_i - 
    \RedShiftK_i{h})_{i=1,\cdots N}$, we have that $\delta\mathbf{h}$ satisfies
    \begin{equation*}
      \bL'\delta\mathbf{h} = 0
    \end{equation*}
    for a strongly hyperbolic operator $\bL'$ acting on vectors of
    dimension $D'N$ such that
    $\SHorizonControl{\bL'}= \SHorizonControl{\bL}$. If the initial
    conditions imply that
    $\left.\delta\mathbf{h}\right\vert_{\Sigma_0}=0$, then actually we
    have that ${h}$ solves $\LinearOp{h} = f$.
  \end{enumerate}
\end{theorem}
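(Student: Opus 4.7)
The plan is to prove both parts by commuting $\LinearOp$ with the vectorfields $\{\RedShiftK_i\}_{i=1}^N$ from Lemma \ref{linear:lemma:enhanced-redshift:Ka-construction} and exploiting the explicit deformation-tensor decomposition \eqref{linear:eq:enhanced-redshift:Deform-Tens-decomp} at the horizons.

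For the forward direction, I start from $\LinearOp(\RedShiftK_i h) = \RedShiftK_i f + [\LinearOp, \RedShiftK_i] h$ and examine the commutator. The principal part is $[\ScalarWaveOp[g], \RedShiftK_i]h = 2\DeformationTensor{\RedShiftK_i}^{\alpha\beta}\nabla_\alpha\nabla_\beta h + \text{(first-order in } h\text{)}$, and substituting the decomposition from Lemma \ref{linear:lemma:enhanced-redshift:Ka-construction} item 5 gives
\begin{equation*}
  2\DeformationTensor{\RedShiftK_i}^{\alpha\beta}\nabla_\alpha\nabla_\beta h = 2\sum_{j,k}f_i^{jk}\RedShiftK_j(\RedShiftK_k h) + \text{(first-order)}.
\end{equation*}
Writing $\mathbf{h}_0 = h$ and $\mathbf{h}_k = \RedShiftK_k h$ for $k \geq 1$, the right-hand side becomes $2\sum_{j,k}f_i^{jk}\RedShiftK_j \mathbf{h}_k$: a first-order differential operator in the new unknowns, with vectorfield coefficients. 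The remaining first-order-in-$h$ pieces from the full commutator $[\LinearOp, \RedShiftK_i]h$ (including those arising from $[\SubPOp, \RedShiftK_i]h$ and $[\PotentialOp, \RedShiftK_i]h$) are rewritten as linear combinations of the $\mathbf{h}_k$'s using property 4 of Lemma \ref{linear:lemma:enhanced-redshift:Ka-construction}, closing the system into the strongly hyperbolic form $\bL\mathbf{h} = \blittleF$ with $\blittleF$ as in \eqref{linear:eq:enhanced-redshift:new-forcing}.

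The key subprincipal identity \eqref{linear:eq:enhanced-redshift:improvement} at the horizons is the main delicate point. The new subprincipal operator $\bS$ of $\bL$ acquires the block contribution $-2\sum_j f_i^{jk}\RedShiftK_j$ in the $(i,k)$-entry coupling $\mathbf{h}_k$ to the $i$-th equation. When computing $-g(\HorizonGen, \Re[\bar\xi \cdot \bS \xi])$ along a horizon, only $j=1$ survives because $g(\HorizonGen, \RedShiftK_j) = -\delta_{j,1}$ by construction. The horizon values $f_1^{11} = \SurfaceGravity_{\mathcal{H}}$ and $f_i^{11} = 0$ for $i \neq 1$ from item 5 of Lemma \ref{linear:lemma:enhanced-redshift:Ka-construction} then yield a diagonal contribution of precisely $-2\SurfaceGravity_{\mathcal{H}}$ in the $(1,1)$-block, while the off-diagonal entries $f_i^{1k}$ for $i, k \geq 2$ can be prevented from increasing the supremum by choosing $\RedShiftK_i$, $i \geq 2$, tangent and appropriately adapted along $\mathcal{H}$ (this is built into the construction in the appendix). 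Combined with the zeroth equation being simply $\LinearOp \mathbf{h}_0 = f$, taking the supremum over unit $\xi$ produces $\SHorizonControl{\bL}[\mathcal{H}] = \SHorizonControl{\LinearOp}[\mathcal{H}] - 2\SurfaceGravity_{\mathcal{H}}$.

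For the converse, suppose $\mathbf{h}$ solves $\bL\mathbf{h} = \blittleF$ with $\blittleF$ of the form \eqref{linear:eq:enhanced-redshift:new-forcing}. The zeroth block of $\bL$ decouples from the higher components and reads $\LinearOp \mathbf{h}_0 = f$, so setting $h := \mathbf{h}_0$ already gives $\LinearOp h = f$. Defining $\delta\mathbf{h}_i := \mathbf{h}_i - \RedShiftK_i h$ for $i \geq 1$ and substituting into the higher-component equations, the terms involving $\RedShiftK_i h$ cancel by the forward construction (which guarantees that $(\RedShiftK_i h)_{i \geq 1}$ is a particular solution), leaving a homogeneous linear system $\bL'\delta\mathbf{h} = 0$ whose principal and subprincipal structure at the horizons coincides with that of $\bL$ restricted to the higher blocks, hence $\SHorizonControl{\bL'}[\mathcal{H}] = \SHorizonControl{\bL}[\mathcal{H}]$. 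Assuming consistent initial data $\delta\mathbf{h}|_{\Sigma_0} = 0$ (with the matching condition on $\KillT\delta\mathbf{h}|_{\Sigma_0}$), uniqueness for the Cauchy problem for strongly hyperbolic operators forces $\delta\mathbf{h} \equiv 0$ throughout $\StaticRegionWithExtension$, recovering $\mathbf{h}_i = \RedShiftK_i h$ as desired.
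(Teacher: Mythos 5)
Your converse argument contains an assumption that the zeroth block of $\bL$ ``decouples from the higher components and reads $\LinearOp\mathbf{h}_0 = f$,'' and your forward argument relies on the same structure. This is inconsistent with the target identity \eqref{linear:eq:enhanced-redshift:improvement}. If the zeroth equation of the system really were $\LinearOp\mathbf{h}_0 = f$, then the zeroth block of the new subprincipal operator $\bS$ would simply be $\SubPOp$, and restricting the supremum in the definition of $\SHorizonControl{\bL}[\Horizon]$ to unit vectors $\xi$ supported in the zeroth block gives $\SHorizonControl{\bL}[\Horizon]\geq\SHorizonControl{\LinearOp}[\Horizon]$, which contradicts the claimed improvement $\SHorizonControl{\bL}[\Horizon]=\SHorizonControl{\LinearOp}[\Horizon]-2\SurfaceGravity_{\Horizon}$. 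The whole point of the enhanced redshift is to lower the transverse subprincipal contribution in \emph{every} block, including the zeroth.

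The missing idea is that one must exploit the pointwise identity $\RedShiftK_1\mathbf{h}_0 = \mathbf{h}_1$ inside the system to rewrite the zeroth equation as
\begin{equation*}
  \ScalarWaveOp[g]\mathbf{h}_0 + \bigl(\SubPOp - 2\widetilde{\SurfaceGravity}\RedShiftK_1\bigr)\mathbf{h}_0 + \PotentialOp\mathbf{h}_0 + 2\widetilde{\SurfaceGravity}\,\mathbf{h}_1 = f,
\end{equation*}
where $\widetilde{\SurfaceGravity}(r)$ equals $\SurfaceGravity_{\Horizon}$ near each horizon and vanishes away from them. This shifts the problematic $2\widetilde{\SurfaceGravity}\RedShiftK_1$ from the subprincipal part (which controls $\SHorizonControl$) into the \emph{potential} coupling to $\mathbf{h}_1$ (which does not). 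The analogous add-and-subtract of $2\widetilde{\SurfaceGravity}\RedShiftK_1\RedShiftK_i h$ is required in the $i$-th equations, together with commutator bookkeeping through the correctors $\alpha_i^j$ to guarantee that the residual second-order coefficients $\SAux_i^{1j}$ vanish on both horizons — this is not subsumed by ``choosing $\RedShiftK_i$ appropriately tangent,'' since Lemma \ref{linear:lemma:enhanced-redshift:Ka-construction} fixes only $f_i^{11}$ and leaves the other $f_i^{jk}$ unconstrained. Finally, once the zeroth equation is coupled in this way, your converse no longer hands you $\LinearOp\mathbf{h}_0=f$ for free; instead one obtains $\LinearOp\mathbf{h}_0 - 2\widetilde{\SurfaceGravity}\,\delta\mathbf{h}_1 = f$, and the proof must first commute this identity with $\RedShiftK_i$ and subtract the $i$-th equations of the system to derive the closed, strongly hyperbolic system $\bL'\delta\mathbf{h}=0$, verify that $\SHorizonControl{\bL'}=\SHorizonControl{\bL}$ by the same horizon computation, and only then invoke uniqueness with vanishing initial data to conclude $\delta\mathbf{h}\equiv 0$ and hence $\LinearOp h = f$.
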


It is a simple calculation to verify the following commutation lemma.
\begin{lemma}
  Let $K$ be a smooth vectorfield on
  $\StaticRegionWithExtension$. Then, for sufficiently smooth ${h}$,
  \begin{equation*}
    [\ScalarWaveOp[g],K]{h}
    = 2\nabla_\mu\left(\DeformationTensor{K}^{\mu\nu}\nabla_\nu{h} \right)
    - \nabla_\nu\left(\Trace_g \DeformationTensor[]{K}\right)\nabla^\nu {h}. 
  \end{equation*}
\end{lemma}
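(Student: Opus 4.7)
The plan is to verify this identity by direct computation, expanding both sides and using the Ricci identity to handle the commutators of covariant derivatives. Since $h$ is a scalar-valued function (or its components are, with the connection acting trivially across indices), I will use the symmetry of the Hessian $\nabla_\mu\nabla_\nu h$ and the fact that $[\nabla_\mu,\nabla_\nu]h = 0$ for scalars.

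First I would expand the left-hand side by Leibniz:
\begin{equation*}
[\ScalarWaveOp[g], K]h = g^{\alpha\beta}\nabla_\alpha\nabla_\beta(K^\mu\nabla_\mu h) - K^\mu g^{\alpha\beta}\nabla_\mu\nabla_\alpha\nabla_\beta h,
\end{equation*}
which simplifies to
\begin{equation*}
(\nabla^\alpha\nabla_\alpha K^\mu)\nabla_\mu h + 2(\nabla^\alpha K^\mu)\nabla_\alpha\nabla_\mu h + K^\mu g^{\alpha\beta}[\nabla_\alpha\nabla_\beta, \nabla_\mu]h.
\end{equation*}
Since $\nabla_\alpha\nabla_\mu h$ is symmetric in $(\alpha,\mu)$ for scalar $h$, the cross term may be rewritten as $2\DeformationTensor{K}^{\alpha\mu}\nabla_\alpha\nabla_\mu h$, matching the principal part on the right-hand side.

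Second, I would expand the right-hand side, using $\Trace_g \DeformationTensor{K} = \nabla_\alpha K^\alpha$, to obtain
\begin{equation*}
2(\nabla_\mu\DeformationTensor{K}^{\mu\nu})\nabla_\nu h + 2\DeformationTensor{K}^{\mu\nu}\nabla_\mu\nabla_\nu h - (\nabla^\nu\nabla_\alpha K^\alpha)\nabla_\nu h.
\end{equation*}
Writing $2\nabla_\mu\DeformationTensor{K}^{\mu\nu} = \nabla^\mu\nabla_\mu K^\nu + \nabla_\mu\nabla^\nu K^\mu$ absorbs the wave-of-$K$ term from the left-hand side, and reduces the identity to verifying the purely curvature-theoretic equality
\begin{equation*}
K^\mu\, g^{\alpha\beta}[\nabla_\alpha\nabla_\beta, \nabla_\mu]h = \bigl([\nabla_\mu, \nabla^\nu]K^\mu\bigr)\nabla_\nu h.
\end{equation*}

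Finally, both sides are evaluated by Ricci calculus. On the left, $[\nabla_\beta,\nabla_\mu]h = 0$ on scalars, so only $g^{\alpha\beta}[\nabla_\alpha,\nabla_\mu]\nabla_\beta h$ survives; applying the covector Ricci identity to $\nabla_\beta h$ and contracting with $g^{\alpha\beta}$ yields $R_\mu{}^\sigma \nabla_\sigma h$ after use of the standard Riemann symmetries. On the right, the vector Ricci identity gives $[\nabla_\mu,\nabla^\nu]K^\mu = R^\nu{}_\sigma K^\sigma$. Equality then follows from the symmetry $R_{\mu\sigma}=R_{\sigma\mu}$ of the Ricci tensor. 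The only real subtlety is keeping sign conventions for the Riemann tensor and the Ricci identity consistent throughout; otherwise this is a pure book-keeping exercise, as the introductory remark ``It is a simple calculation'' suggests.
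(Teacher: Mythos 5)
Your proof is correct, and since the paper supplies no proof for this lemma — it merely remarks that ``It is a simple calculation to verify the following commutation lemma'' — your direct computation via the Ricci identities is precisely the calculation the paper is alluding to. All the key steps check out: the Leibniz expansion of the left-hand side, the observation that only the symmetric part of $\nabla^\alpha K^\mu$ survives contraction against the symmetric Hessian of the scalar $h$, the rewriting $2\nabla_\mu\DeformationTensor{K}^{\mu\nu} = \nabla^\mu\nabla_\mu K^\nu + \nabla_\mu\nabla^\nu K^\mu$, and the final reduction to the Ricci-tensor identity $K^\mu g^{\alpha\beta}[\nabla_\alpha\nabla_\beta,\nabla_\mu]h = ([\nabla_\mu,\nabla^\nu]K^\mu)\nabla_\nu h$, which indeed reduces to $R_{\mu\nu}K^\mu\nabla^\nu h$ on both sides after applying the covector and vector forms of the Ricci identity and using the symmetry of the Ricci tensor. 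The only thing worth flagging is what you already acknowledge: although $h$ is $\Complex^D$-valued in this paper, $\ScalarWaveOp[g]$ acts component-wise (with the trivial connection on the target), so treating $h$ as a scalar and using the symmetry of its Hessian is legitimate.
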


We now have the tools necessary to prove the main theorem of this
section.
\begin{proof}[Proof of Theorem \ref{linear:thm:redshift-commutation:main}]
  Let us first prove the first part of the theorem. The main idea here
  will be to commute the equation
  \begin{equation}
    \label{linear:eq:MainEqn}
    \LinearOp{h} = f
  \end{equation}
  with the set of vectorfields $\RedShiftK_i$ constructed in Lemma
  \ref{linear:lemma:enhanced-redshift:Ka-construction}. Then, by rewriting
  the resulting equation as a system, we will analyze the
  subprincipal operator at $\EventHorizonFuture$ and
  $\CosmologicalHorizonFuture$ to verify that it satisfies 
  (\ref{linear:eq:enhanced-redshift:improvement}).

  \noindent{\textbf{Step 1: Commuting the equation.}}
  Commuting equation (\ref{linear:eq:MainEqn}) with the $\RedShiftK_i$ vectorfields, we have
  \begin{equation}
    \label{linear:eq:enhanced-redshift:eq-commute-K} 
    \begin{split}
      \RedShiftK_i f
      ={}&\ScalarWaveOp[g]\RedShiftK_i{h}
      + \SubPOp [\RedShiftK_i{h}]
      + \PotentialOp\RedShiftK_i{h}
      + [\RedShiftK_i, \SubPOp]{h}
      +[\RedShiftK_i, \PotentialOp]{h} \\
      &+ 2\nabla_\mu \left(\DeformationTensor{\RedShiftK_i}^{\mu\nu}\nabla_\nu {h} \right)
      - \nabla_\nu \left(\Trace_g \DeformationTensor{\RedShiftK_i}\right)\nabla^\nu {h}.
    \end{split}    
  \end{equation}
  Using (\ref{linear:eq:enhanced-redshift:Deform-Tens-decomp}), we can write
  that
  \begin{equation*}
    \begin{split}
      \nabla_\mu \left(\DeformationTensor{\RedShiftK_i}^{\mu\nu}\nabla_\nu {h}\right) &= \nabla_\mu \left(f^{jk}_i(\RedShiftK_j)^\mu \RedShiftK_k{h} \right)\\
      &= \RedShiftK_jf^{jk}_i\RedShiftK_k{h} + f^{jk}_i(\nabla \cdot \RedShiftK_j)\RedShiftK_k {h} + f^{jk}_i\RedShiftK_j\RedShiftK_k{h},
    \end{split}
  \end{equation*}
  where repeated $j, k$ indices are summed over $1, \cdots, N$.
  Applying (\ref{linear:eq:enhanced-redshift:Deform-Tens-decomp}) again,
  \begin{equation*}
    \Trace_g \DeformationTensor[]{\RedShiftK_i} = \nabla\cdot \RedShiftK_i
    = f^{jk}_ig(\RedShiftK_j, \RedShiftK_k). 
  \end{equation*}
  Thus, we can rewrite
  \begin{equation*}
    \RedShiftK_if
    = \ScalarWaveOp[g]\RedShiftK_i{h}
    +\SubPOp[\RedShiftK_i{h}]
    + \PotentialOp \RedShiftK_i{h}
    - {\SAux_i}^j \RedShiftK_j{h}
    - \SAux_i {h}
    + [\RedShiftK_i, \PotentialOp]{h} ,
  \end{equation*}
  where $\tensor[]{\SAux}{_i^j}$ and  $\SAux_i$ are smooth
  vectorfields on $\StaticRegionWithExtension$ given by 
  \begin{equation*}
    \begin{split}
      \tensor[]{\SAux}{_i^j} &= 2f^{jk}_i\RedShiftK_k,\\
      \SAux_i &= \left(\RedShiftK_jf^{jk}_i + f^{jk}_i\nabla\cdot \RedShiftK_j\right)\RedShiftK_k - \nabla(\nabla\cdot \RedShiftK_i) + [\RedShiftK_i,\SubPOp].
    \end{split}
  \end{equation*}
  To verify the improvement that we gain from commuting, i.e. that
  $\SHorizonControl{\bL}[\mathcal{H}]
  =\SHorizonControl{\LinearOp}[\mathcal{H}]-2\SurfaceGravity_{\mathcal{H}}$,
  it is necessary to analyze the first-order terms
  $\tensor[]{\SAux}{_i^j}$ and $\SAux_i$ at the horizons.  Since we
  are working in a system after a commutation, it will turn out that
  despite containing first-order derivatives of ${h}$, $\SAux_i$ will
  turn out to be part of the zero-order potential operator in the new
  system, while $\tensor[]{\SAux}{_i^j}$ will contribute be the main
  subprincipal term of the new system. To this end, let us first
  consider the term $\tensor[]{\SAux}{_i^j}$. Define
  $\widetilde{\SurfaceGravity} := \widetilde{\SurfaceGravity}(r)$ to be a smooth cutoff
  function function such that
  \begin{equation}
    \label{linear:eq:redshift-commute:surface-grav-cutoff-def}
    \widetilde{\SurfaceGravity} =
    \begin{cases}
      \SurfaceGravity_{\EventHorizonFuture} &\text{ near } \EventHorizonFuture, \\
      0 &\text{ away from }\EventHorizonFuture, \CosmologicalHorizonFuture,\\
      \SurfaceGravity_{\CosmologicalHorizonFuture} &\text{ near }\CosmologicalHorizonFuture.
    \end{cases}
  \end{equation}
  Adding and subtracting
  $2\widetilde{\SurfaceGravity}\RedShiftK_1\RedShiftK_i{h}$,
  \begin{equation*}
    \tensor{\SAux}{_i^j}\RedShiftK_j {h} = 2\iota \widetilde{\SurfaceGravity} \RedShiftK_1 \RedShiftK_i{h}
    + 2f^{jk}_i\RedShiftK_j\RedShiftK_k{h} - 2\widetilde{\SurfaceGravity} \RedShiftK_1\RedShiftK_i{h}.     
  \end{equation*}
  We then use $[\RedShiftK_i,\RedShiftK_j] = \RedShiftK_i\RedShiftK_j - \RedShiftK_j\RedShiftK_i$ to write
  \begin{equation*}
    \begin{split}
      f^{jk}_i\RedShiftK_j\RedShiftK_k{h} - \widetilde{\SurfaceGravity}
      \RedShiftK_1\RedShiftK_i{h} ={}& \left(f_i^{jk}
        - \widetilde{\SurfaceGravity}\tensor[]{\delta}{_1^j}\tensor[]{\delta}{_i^k} \right)\RedShiftK_j\RedShiftK_k{h}
      + \tensor[]{\alpha}{_i^j}(\RedShiftK_j\RedShiftK_1{h}
      - \RedShiftK_1\RedShiftK_j{h}
      + [\RedShiftK_1,\RedShiftK_j]{h})\\
      ={}&\left(f_i^{jk} - \widetilde{\SurfaceGravity}
        \tensor[]{\delta}{_1^j}\tensor[]{\delta}{_i^k} +
        \tensor[]{\alpha}{_i^j}\tensor[]{\delta}{_1^k}
        - \tensor[]{\delta}{_1^j}\tensor[]{\alpha}{_i^k}\right)\RedShiftK_j\RedShiftK_k{h}
      + \tensor[]{\alpha}{_i^j}[\RedShiftK_1,\RedShiftK_j]{h}
    \end{split}
  \end{equation*}
  for any $\tensor[]{\alpha}{_i^j}$. We then define
  $\tensor[]{\alpha}{_i^j}$ such that on $\EventHorizonFuture$ and
  $\CosmologicalHorizonFuture$, 
  \begin{equation*}
    \tensor[]{\alpha}{_i^j}= 
    \begin{cases}
      f^{1j}_i - \widetilde{\SurfaceGravity} \tensor[]{\delta}{_i^j} & j\neq 1,\\
      0 & j=1.
    \end{cases}
  \end{equation*}
  Next, we define
  \begin{equation*}
    \tensor[]{\SAux}{_i^{jk}} = f^{jk}_i - \widetilde{\SurfaceGravity} \delta_1^j \delta_i^k + \alpha^j_i\delta_1^k -\delta_1^j \alpha_i^k,
  \end{equation*}
  which has the crucial property that $\tensor[]{\SAux}{_i^{1j}}$ vanishes on both
  $\EventHorizonFuture$ and $\CosmologicalHorizonFuture$, and  $\tensor[]{\VAux}{_i^k}\in
  C^\infty(\StaticRegionWithExtension)$ so that 
  \begin{equation*}
    \SAux_i + 2\tensor[]{\alpha}{_i^j}[\RedShiftK_1, \RedShiftK_j]{h} = \tensor[]{\VAux}{_i^j}\RedShiftK_j{h}. 
  \end{equation*}
  We can now rewrite (\ref{linear:eq:enhanced-redshift:eq-commute-K}) in the
  following manner:
  \begin{equation}
    \label{linear:eq:enhanced-redshift:eq-commute-K-a}
    \RedShiftK_i f = \ScalarWaveOp[g] \RedShiftK_i{h}
    + \left(
      \tensor[]{\delta}{_i^j}(\SubPOp - 2\widetilde{\SurfaceGravity}\RedShiftK_i)
      - 2 \tensor[]{\SAux}{_i^{jk}}\RedShiftK_k
    \right)\RedShiftK_j{h}
    + \left(\PotentialOp\tensor[]{\delta}{_i^j}
      - \tensor[]{\VAux}{_i^j}\right)\RedShiftK_j{h}
    + [\RedShiftK_i,\PotentialOp]{h},
  \end{equation}
  which can be written as a system of equations
  for $\mathbf{h}_i$, where $\mathbf{h}_i = \RedShiftK_i{h}$ for $i>0$, and
  $\mathbf{h}_0={h}$. 
  \begin{equation*}
    K_if = \ScalarWaveOp[g]{}\mathbf{h}_i + \tensor[]{\bS}{_i^j}\mathbf{h}_j + \tensor[]{\bV}{_i^j}\mathbf{h}_j 
  \end{equation*}
  where the Einstein summation notation denotes summation over $j=1, \cdots, N$, and
  \begin{equation*}
    \begin{split}
      \tensor[]{\bS}{_i^j} &= (\SubPOp - 2 \widetilde{\SurfaceGravity} \RedShiftK_1)\tensor[]{\delta}{_i^j}
                             - 2 \tensor[]{\SAux}{_i^{jk}}\RedShiftK_k,\quad j\neq 0\\
      \tensor[]{\bS}{_i^0} &=0,\\
      \tensor[]{\bV}{_i^j} &= \PotentialOp\tensor[]{\delta}{_i^j} - \tensor[]{\VAux}{_i^j},\quad j\neq 0\\
      \tensor[]{\bV}{_i^0} &= [\RedShiftK_i, \PotentialOp].
    \end{split}
  \end{equation*}
  We now have equations for $\mathbf{h}_i, i\neq 0$. It remains to derive
  an equation for $\mathbf{h}_0={h}$.  In particular, we cannot use the
  original equation $\bL\mathbf{h}_0 = \LinearOp \mathbf{h}_0$ since we want
  to derive an equation consistent with
  $\SHorizonControl{\bL}[\mathcal{H}] =\SHorizonControl{\LinearOp
  }[\mathcal{H}] - 2\SurfaceGravity_{\mathcal{H}}$. Instead, let us
  rewrite the main equation (\ref{linear:eq:MainEqn}) using the fact that
  $\mathbf{h}_1 = \RedShiftK_1\mathbf{h}_0$,
  \begin{equation}
    \label{linear:eq:enhanced-redshift:eq-commute-K-0}
    f = \ScalarWaveOp[g]\mathbf{h}_0 + (\SubPOp - 2\widetilde{\SurfaceGravity} \RedShiftK_1)\mathbf{h}_0 + \PotentialOp\mathbf{h}_0 + 2\widetilde{\SurfaceGravity} \mathbf{h}_1 .
  \end{equation}
  We then define
  \begin{equation*}
    \begin{split}
      \tensor[]{\bS}{_0^j}&=(\SubPOp - 2\widetilde{\SurfaceGravity} \RedShiftK_1)\tensor[]{\delta}{_0^j},\\
      \tensor[]{\bV}{_0^j} &= \PotentialOp\tensor[]{\delta}{_0^j}
      + 2\widetilde{\SurfaceGravity}\tensor[]{\delta}{_1^j},\\
      \blittleF_0&=f,\\
      \blittleF_i &= \RedShiftK_if.
    \end{split}
  \end{equation*}
  With these definitions, we see that we can combine the equations
  (\ref{linear:eq:enhanced-redshift:eq-commute-K-a}) and
  (\ref{linear:eq:enhanced-redshift:eq-commute-K-0}) to write
  \begin{equation*}
    \bL\mathbf{h}_i=\blittleF_i,\qquad \bL\mathbf{h}_i:= \Box_g\mathbf{h}_i + \tensor[]{\bS}{_i^j}\mathbf{h}_j + \tensor[]{\bV}{_i^j}\mathbf{h}_j. 
  \end{equation*}
  Then, since we have that $\tensor[]{\SAux}{_a^{1b}}$ vanishes on the horizons, we see
  that
  \begin{equation*}
    \left.g(\HorizonGen_{\EventHorizonFuture}, \tensor[]{\bS}{_i^j})\right\vert_{\EventHorizonFuture}
    = \left(
      g(\HorizonGen_{\EventHorizonFuture}, \SubPOp) + 2\SurfaceGravity_{\EventHorizonFuture}
    \right)\tensor[]{\delta}{_i^j},
    \qquad
    \left.g(\HorizonGen_{\CosmologicalHorizonFuture}, \tensor[]{\bS}{_i^j})\right\vert_{\CosmologicalHorizonFuture}
    = \left(
      g(\HorizonGen_{\CosmologicalHorizonFuture}, \SubPOp) + 2\SurfaceGravity_{\CosmologicalHorizonFuture}
    \right)\tensor[]{\delta}{_i^j}.
  \end{equation*}
  This directly implies that
  \begin{equation*}
    \begin{split}
      \SHorizonControl{\bL}[\EventHorizonFuture] &= \SHorizonControl{L}[\EventHorizonFuture] - 2\SurfaceGravity_{\EventHorizonFuture},\\
      \SHorizonControl{\bL}[\CosmologicalHorizonFuture] &= \SHorizonControl{L}[\CosmologicalHorizonFuture] - 2\SurfaceGravity_{\CosmologicalHorizonFuture}.
    \end{split}
  \end{equation*}

  \noindent{\textbf{Step 2: Verifying the propagation of the constraint.}}
  We now prove the second part of the theorem. The goal in this part
  of the theorem will be to show that the extended system of equations
  propagates the constraints $\RedShiftK_i\mathbf{h}_0 - \mathbf{h}_i$. 
  
  Define
  \begin{equation*}
    \delta\mathbf{h}_i = \RedShiftK_i\mathbf{h}_0 - \mathbf{h}_i.
  \end{equation*}
  Then we can rewrite the equation for $\mathbf{h}_0$ in equation
  (\ref{linear:eq:enhanced-redshift:eq-commute-K-0}), as
  \begin{equation*}
    \LinearOp\mathbf{h}_0 - 2\widetilde{\SurfaceGravity} (\delta\mathbf{h}_1) = \blittleF_0.
  \end{equation*}
  Commuting this equation with $\RedShiftK_i$ and repeating the
  algebra above leading to equation (\ref{linear:eq:enhanced-redshift:eq-commute-K-a})
  we recover
  \begin{equation*}
    \RedShiftK_i\blittleF_0
    = \ScalarWaveOp[g](\RedShiftK_i\mathbf{h}_0)
    + \tensor[]{\bV}{_i^j}\RedShiftK_j\mathbf{h}_0
    + \tensor[]{\bS}{_i^j}\RedShiftK_j\mathbf{h}_0
    + \tensor[]{\bS}{_i^0}\mathbf{h}_0
    + \tensor[]{\bV}{_i^0}\mathbf{h}_0
    - 2\widetilde{\SurfaceGravity}\RedShiftK_i(\delta\mathbf{h}_1). 
  \end{equation*}
  If  we have that 
  \begin{equation*}
    0 =\RedShiftK_i\blittleF_0 - \blittleF_i,\quad 1\le i\le N,
  \end{equation*}
  then, 
  \begin{equation*}
    0 = \ScalarWaveOp[g](\delta\mathbf{h}_i)
    + \tensor[]{\bS}{_i^j}(\delta\mathbf{h}_j)
    + \tensor[]{\bV}{_i^j}(\delta\mathbf{h}_j)
    - 2\RedShiftK_i(\widetilde{\SurfaceGravity} \delta\mathbf{h}_1).
  \end{equation*}
  This can be rewritten as
  \begin{equation*}
    \bL'(\delta\mathbf{h}_i)=0, \qquad
    \bL'(\delta\mathbf{h}_i)
    :=\ScalarWaveOp[g](\delta\mathbf{h}_i)
    + \tensor[]{{\bS'}}{_i^j}(\delta\mathbf{h}_j)
    + \tensor[]{{\bV'}}{_i^j}(\delta\mathbf{h}_j) ,
  \end{equation*}
  where $\bL'$ is a strongly hyperbolic operator with
  \begin{align*}
    \tensor[]{{\bS'}}{_i^j}(\delta\mathbf{h}_j)
    &:=\tensor[]{\bS}{_i^j}(\delta\mathbf{h}_j)
      - 2\widetilde{\SurfaceGravity}\RedShiftK_i(\delta\mathbf{h}_1),\\
    \tensor[]{{\bV'}}{_i^j}(\delta\mathbf{h}_j)
    &:= \tensor[]{\bV}{_i^j}(\delta\mathbf{h}_j)
      - 2\RedShiftK_i\widetilde{\SurfaceGravity}(\delta\mathbf{h}_1).
  \end{align*}
  From the hyperbolic nature of this system, it is clear that having
  initial data $\delta\mathbf{h}\vert_{\Sigma_0}=0$ implies that
  $\delta\mathbf{h}=0$ identically and thus the extended system
  \eqref{linear:eq:enhanced-redshift:new-system} reduces back to the original
  uncommuted equation \eqref{linear:eq:redshift-commutation:orig-eq}.  It
  remains to check that
  $\SHorizonControl{\bL'}[\mathcal{H}] =
  \SHorizonControl{\bL}[\mathcal{H}]$. To this end, we can evaluate
  \begin{align*}
    \left.g(\HorizonGen, \tensor[]{{\bS'}}{_i^j})\right\vert_{\mathcal{H}}
    &= \left.g(\HorizonGen, \tensor[]{\bS}{_i^j})\right\vert_{\mathcal{H}}
      + 2\SurfaceGravity_{\mathcal{H}} \tensor[]{\delta}{_i^1}\tensor[]{\delta}{_1^j},\\
    &= \left(\left.g(\HorizonGen, \SubPOp)\right\vert_{\mathcal{H}}
      + 2\SurfaceGravity_{\mathcal{H}}\right)\tensor[]{\delta}{_i^j}
      + 2\SurfaceGravity_{\mathcal{H}} \tensor[]{\delta}{_i^1}\tensor[]{\delta}{_1^j},
  \end{align*}
  so $\SHorizonControl{\bL'}[\mathcal{H}] =
  \SHorizonControl{\bL}[\mathcal{H}].$
\end{proof}

We thus have the following higher-regularity equivalent of Corollary
\ref{linear:corollary:naive-energy-estimate}:
\begin{corollary}
  \label{linear:corollary:naive-energy-estimate:higher-order}
  Let $\LinearOp$ be a strongly hyperbolic operator on a
  slowly-rotating \KdS{} background, and suppose that
  $\psi\in L^2(\Real_+,\Hk{1}(\Sigma))$ satisfies
  $\KillT\psi\in L^2(\Real_+, \LTwo(\Sigma))$ and is a weak solution
  of the Cauchy problem
  \begin{equation*}
    \begin{split}
      \LinearOp \psi &= f,\\
      \gamma_0(\psi) &=(\psi_0, \psi_1),
    \end{split}
  \end{equation*}
  where $(\psi_0, \psi_1)\in \LSolHk{k}(\Sigma)$, and $f\in
  H^k(\StaticRegionWithExtension)$. 
  \begin{enumerate}
  \item  Then, for any $\TStar>0$ and $k\ge 1$, $\psi$ satisfies the energy estimate
    \begin{equation*}
      \sup_{\tStar \le \TStar}e^{-\GronwallExp\tStar}\norm*{\psi}_{\HkWithT{k}(\Sigma)}
      \le C \norm*{(\psi_0, \psi_1)}_{\LSolHk{k}(\Sigma_0)}
      + \int_{0}^{\TStar}e^{-\GronwallExp\tStar}\norm{f}_{\HkWithT{k-1}(\Sigma_{\tStar})}\,d\tStar
    \end{equation*}
    for some constants $C, \GronwallExp$ depending on $g$, $\LinearOp$, and $k$.
  \item On the Laplace-transformed side, there exists $\GronwallExp$
    and $C$ such that for $\Im\sigma > \GronwallExp$,
    \begin{equation*}
      \norm{u}_{\InducedHk{k}_\sigma(\Sigma)}\lesssim \norm{\widehat{\LinearOp}(\sigma)u}_{\InducedHk{k-1}_\sigma(\Sigma)}.
    \end{equation*}
  \end{enumerate}
\end{corollary}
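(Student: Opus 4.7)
The plan is to reduce the higher-regularity estimate to the first-order energy estimate of Corollary \ref{linear:corollary:naive-energy-estimate} by iterating the redshift-commutation machinery of Theorem \ref{linear:thm:redshift-commutation:main}. Concretely, I would set $\mathbf{h}^{(0)} := \psi$, $\bL^{(0)} := \LinearOp$, $\blittleF^{(0)} := f$, and inductively apply Theorem \ref{linear:thm:redshift-commutation:main} $k-1$ times, producing a sequence of strongly hyperbolic operators $\bL^{(j)}$ acting on enlarged vectors $\mathbf{h}^{(j)}$ built out of $\psi$ and its derivatives along the commutator vectorfields $\curlyBrace{\RedShiftK_i}_{i=1}^N$ of Lemma \ref{linear:lemma:enhanced-redshift:Ka-construction}. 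After $k-1$ iterations the final extended system $\bL^{(k-1)}\mathbf{h}^{(k-1)} = \blittleF^{(k-1)}$ is again strongly hyperbolic on a (much) larger trivial bundle, and by construction the components of $\mathbf{h}^{(k-1)}$ are of the form $\RedShiftK^\alpha \psi$ for multi-indices $|\alpha|\le k-1$, while $\blittleF^{(k-1)}$ is (up to smooth lower-order couplings already absorbed into $\bL^{(k-1)}$) built out of $\RedShiftK^\alpha f$ with $|\alpha|\le k-1$.

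With this reduction in hand, I would simply apply the first-order Gronwall-type estimate of Corollary \ref{linear:corollary:naive-energy-estimate} to the system $\bL^{(k-1)}\mathbf{h}^{(k-1)} = \blittleF^{(k-1)}$. Since $\|\mathbf{h}^{(k-1)}\|_{\Hk{1}(\Sigma_{\tStar})}$ is equivalent to $\|\psi\|_{\HkWithT{k}(\Sigma_{\tStar})}$ by the very definition of the $\HkWithT{k}$ norm in Definition \ref{linear:def:Hk-def}, and similarly $\|\blittleF^{(k-1)}\|_{\InducedLTwo(\Sigma_{\tStar})}$ is controlled by $\|f\|_{\HkWithT{k-1}(\Sigma_{\tStar})}$, the first conclusion of the corollary follows directly with some $\GronwallExp = \GronwallExp(g,\LinearOp,k)$, where the dependence on $k$ arises because each commutation contributes additional lower-order terms to the effective potential and subprincipal operator of $\bL^{(j)}$, but these terms are smooth and bounded and hence harmless for Gronwall. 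The initial data bound is verified by noting that the Cauchy data $\gamma_0(\mathbf{h}^{(k-1)})$ consists of $\gamma_0(\RedShiftK^\alpha \psi)$, which, using the equation $\LinearOp\psi = f$ to trade time derivatives beyond first order for spatial derivatives and $f$, is controlled by $\|(\psi_0,\psi_1)\|_{\LSolHk{k}(\Sigma_0)}$ plus lower-order $\|f\|_{\HkWithT{k-1}}$ contributions already present on the right-hand side.

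For the Laplace-transformed statement, the plan is identical in spirit: I would apply the same iterated commutation construction, noting that all the $\RedShiftK_i$ are stationary (they commute with $\KillT$), so the Laplace transform passes through the commutation cleanly. Time-harmonic solutions $\psi = e^{-\ImagUnit\sigma\tStar}u$ of $\widehat{\LinearOp}(\sigma)u = \widehat{f}$ lift to time-harmonic solutions of the extended system $\widehat{\bL^{(k-1)}}(\sigma)\widehat{\mathbf{h}^{(k-1)}} = \widehat{\blittleF^{(k-1)}}$, and the frequency-domain first-order bound $\|u\|_{\InducedHk{1}_\sigma} \lesssim \|\widehat{\bL^{(k-1)}}(\sigma)u\|_{\InducedLTwo}$ applied to the enlarged system then translates, via the equivalence between $\|\mathbf{h}^{(k-1)}\|_{\InducedHk{1}_\sigma}$ and $\|\psi\|_{\InducedHk{k}_\sigma(\Sigma)}$, to the advertised bound $\|u\|_{\InducedHk{k}_\sigma(\Sigma)} \lesssim \|\widehat{\LinearOp}(\sigma)u\|_{\InducedHk{k-1}_\sigma(\Sigma)}$ valid for $\Im\sigma > \GronwallExp$ with $\GronwallExp$ possibly adjusted upward to absorb the additional lower-order terms introduced by commutation.

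The only step that really needs care, rather than routine bookkeeping, is the iteration itself: the explicit subprincipal improvement $\SHorizonControl{\bL}[\Horizon] = \SHorizonControl{\LinearOp}[\Horizon] - 2\SurfaceGravity_{\Horizon}$ provided by Theorem \ref{linear:thm:redshift-commutation:main} is \emph{not} needed here, since we are only after a crude exponentially-growing estimate with some $\GronwallExp$, rather than a sharp spectral-gap statement. What \emph{is} needed is that at every stage the commuted operator remains strongly hyperbolic with smooth coefficients, so that Corollary \ref{linear:corollary:naive-energy-estimate} and its proof apply verbatim. This is essentially the content of Theorem \ref{linear:thm:redshift-commutation:main}, and its second part (constraint propagation) ensures that working with the extended system is genuinely equivalent to working with the original one under the standard initial-data embedding. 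Modulo this structural observation, the proof is a direct iteration, which is why the corollary is presented here without independent analytic content.
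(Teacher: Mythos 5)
Your proposal is correct and follows exactly the paper's approach: the paper's proof is the one-line statement that one commutes the equation a sufficient number of times with the $\RedShiftK_i$ via Theorem \ref{linear:thm:redshift-commutation:main} and then applies Corollary \ref{linear:corollary:naive-energy-estimate}, which is precisely your iteration. Your additional remarks — that the sharp $\SHorizonControl{\cdot}$ improvement is not needed here, only preservation of strong hyperbolicity, and that the $\RedShiftK_i$ being stationary lets the Laplace transform pass through cleanly — are accurate refinements of the same argument rather than a different route.
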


\begin{proof}
  We prove the result by commuting the equation with the vectorfields
  $\RedShiftK_i$ a sufficient number of times and then applying
  Corollary \ref{linear:corollary:naive-energy-estimate}. 
\end{proof}

We have the following regularity statement.
\begin{corollary}
  \label{linear:coro:redshift-regularity}
  Let $\LinearOp$ be a strongly hyperbolic operator on some
  slowly-rotating \KdS{} background $g$, and let $u \in \InducedHk{1}(\Sigma)$
  be a weak solution of
  \begin{equation}
    \label{linear:eq:KdS-elliptic-regularity:Cauchy-system}
    \widehat{\LinearOp}(\sigma)u = f,
  \end{equation}
  where
  $\Im\sigma > \frac{1}{2} \max_{\Horizon=\EventHorizonFuture,
    \CosmologicalHorizonFuture}\left(\SHorizonControl{\LinearOp}[\Horizon]
    - \SurfaceGravity_{\Horizon}\right)$. Then if
  $f\in C^\infty_0(\Sigma)$, we must also have that
  $u\in C^{\infty}_0(\Sigma)$.
\end{corollary}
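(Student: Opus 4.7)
The strategy is a regularity bootstrap by iterated commutation with the redshift vector fields $\{\RedShiftK_i\}$ of Lemma~\ref{linear:lemma:enhanced-redshift:Ka-construction}, exploiting the key gain $\SHorizonControl{\bL}[\Horizon] = \SHorizonControl{\LinearOp}[\Horizon] - 2\SurfaceGravity_{\Horizon}$ provided by Theorem~\ref{linear:thm:redshift-commutation:main} at each step. Since this improvement compounds linearly in the number of commutations while the hypothesis on $\Im\sigma$ remains fixed, the horizon thresholds drop monotonically below $\Im\sigma$, so the bootstrap can be iterated indefinitely.

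Away from the horizons, the principal symbol $g^{ij}\xi_i\xi_j$ of $\widehat{\LinearOp}(\sigma)$ is uniformly positive definite on $T^*\Sigma$---the hypersurface $\Sigma$ is uniformly spacelike and $\Delta_b>0$ on $(r_{\EventHorizonFuture},r_{\CosmologicalHorizonFuture})$---so $\widehat{\LinearOp}(\sigma)$ is uniformly elliptic with smooth coefficients there. Standard interior elliptic regularity yields $u\in C^{\infty}_{\mathrm{loc}}$ in this open region. It therefore remains to establish smoothness up to, and across, the two horizons where the spatial principal symbol degenerates in the radial direction.

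For the horizon analysis, lift $u$ to $\upsilon = e^{-\ImagUnit\sigma\tStar}u\in H^1_{\mathrm{loc}}(\StaticRegionWithExtension)$, which weakly satisfies $\LinearOp\upsilon = e^{-\ImagUnit\sigma\tStar}f$. Apply Theorem~\ref{linear:thm:redshift-commutation:main} to produce the strongly hyperbolic system $\bL\mathbf{h} = \blittleF$, with $\mathbf{h} = (\upsilon,\RedShiftK_1\upsilon,\ldots,\RedShiftK_N\upsilon)$ and $\blittleF$ as in~\eqref{linear:eq:enhanced-redshift:new-forcing}. The hypothesis can then be rewritten
\[
  \Im\sigma \,>\,\tfrac12\max_{\Horizon}\bigl(\SHorizonControl{\LinearOp}[\Horizon]-\SurfaceGravity_{\Horizon}\bigr)
  \,=\,\tfrac12\max_{\Horizon}\bigl(\SHorizonControl{\bL}[\Horizon]+\SurfaceGravity_{\Horizon}\bigr),
\]
which lies strictly above the $\LSolHk{1}$-quasinormal spectral threshold of $\bL$. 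Hence either the Fredholm theory of Theorem~\ref{linear:thm:meromorphic:main-A} applied to $\bL$ (away from its discrete set of $\LSolHk{1}$-quasinormal frequencies), or, after further commutations, the direct energy bound of Corollary~\ref{linear:corollary:naive-energy-estimate:higher-order}, promotes $\RedShiftK_i u$ from $\InducedLTwo(\Sigma)$ to $\InducedHk{1}_\sigma(\Sigma)$, giving $u\in\InducedHk{2}(\Sigma)$. Iterating the commutation $k$ times produces a new hyperbolic system with horizon control $\SHorizonControl{\LinearOp}[\Horizon]-2k\SurfaceGravity_{\Horizon}$, still strictly below the level required by the hypothesis, and yields $u\in\InducedHk{k}(\Sigma)$ for every $k\ge1$. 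Sobolev embedding then gives smoothness, and the compact support of $f$ combined with finite-speed propagation for the commuted hyperbolic system transfers to compact support of $u$ in $\Sigma$.

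The principal obstacle is the degeneracy of the spatial principal symbol of $\widehat{\LinearOp}(\sigma)$ at the horizons, which rules out direct elliptic regularity there; the redshift commutation is calibrated precisely to compensate, gaining a horizon-transverse derivative at the cost of $2\SurfaceGravity_{\Horizon}$ in the subprincipal symbol. A secondary subtlety is the possibility that the fixed $\sigma$ coincides with a quasinormal frequency of one of the commuted operators, in which case $\widehat{\bL}(\sigma)^{-1}$ fails to exist; this is handled by subtracting the contribution of the finite-dimensional generalized resonant subspace (whose elements are smooth by the same bootstrap applied to eigenfunctions), or, more simply, by commuting sufficiently many times that Corollary~\ref{linear:corollary:naive-energy-estimate:higher-order} applies directly at the given $\sigma$.
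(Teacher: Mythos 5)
Your main idea---a regularity bootstrap by commutation with the $\RedShiftK_i$, using the gain $\SHorizonControl{\bL}=\SHorizonControl{\LinearOp}-2\SurfaceGravity$ at each step---is the same mechanism the paper uses, and the algebraic observation $\SHorizonControl{\LinearOp}[\Horizon]-\SurfaceGravity_{\Horizon}=\SHorizonControl{\bL}[\Horizon]+\SurfaceGravity_{\Horizon}$ is correct. However, the two routes you propose for closing each step both have gaps.

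The Fredholm route via Theorem~\ref{linear:thm:meromorphic:main-A} requires $\sigma$ not to be a quasinormal frequency of $\bL$, and your fix---subtracting the finite-dimensional resonant subspace, ``whose elements are smooth by the same bootstrap applied to eigenfunctions''---is circular. An eigenfunction $v$ of $\widehat{\bL}(\sigma)$ is exactly the kind of weak solution whose smoothness the bootstrap is supposed to produce; applying the bootstrap to $v$ meets the identical obstruction at the level of the twice-commuted operator $\bL'$, and there is no a priori guarantee this regress terminates. Your alternative route is closer to the mark but cites the wrong tool: Corollary~\ref{linear:corollary:naive-energy-estimate:higher-order} only asserts the existence of some $\GronwallExp$ with $\Im\sigma>\GronwallExp$, and nothing in its statement ties $\GronwallExp$ to the horizon threshold or guarantees that it decreases under commutation, so ``commuting sufficiently many times'' does not obviously bring the fixed $\sigma$ above it.

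What closes the argument without any invertibility considerations is Theorem~\ref{linear:thm:redshift-energy-estimate} part~(ii), applied as an \emph{a priori} estimate to the $k$-times-commuted system, followed by the substitution $h=e^{-\ImagUnit\sigma\tStar}u$. The explicit form $\p_{\tStar}\RedShiftEnergy\le\max_{\Horizon}(\SHorizonControl-\SurfaceGravity+\varepsilon_{\RedShiftN}+\epsilon)\RedShiftEnergy+C(\epsilon)(\|\LinearOp h\|^2+\EnergyKill+\|h\|^2)$ becomes, after substitution, $\left(2\Im\sigma-\max_{\Horizon}(\SHorizonControl-(2k+1)\SurfaceGravity+\cdots)\right)\RedShiftEnergy\le C(\epsilon)(\cdots+\EnergyKill+\cdots)$; the left-hand coefficient grows linearly in $k$ while $C(\epsilon)$ is fixed, so for $k$ large the $C(\epsilon)\EnergyKill$ and $C(\epsilon)\|h\|^2$ terms absorb into $\RedShiftEnergy$, and no Fredholm theory or quasinormal-frequency accounting is needed. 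This also makes the interior elliptic regularity step you add superfluous---the redshift estimate is global on $\Sigma$, with the Killing energy handling the interior. Finally, the appeal to finite speed of propagation to get compact support is misplaced: the Laplace-transformed problem is stationary on a fixed spatial slice, so a temporal causality argument does not apply; the corollary's content is the smoothness statement, which follows from $u\in\InducedHk{k}(\Sigma)$ for all $k$ and Sobolev embedding.
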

\begin{proof}
  Define ${h}=e^{-\ImagUnit\sigma\tStar}u$, which by 
  hypothesis satisfies
  \begin{equation*}
    \LinearOp {h} = e^{-\ImagUnit\sigma\tStar}f. 
  \end{equation*}
  Now fix $\epsilon >0$ and recall from Theorem
  \ref{linear:thm:redshift-energy-estimate} part \ref{linear:item:RedShiftEstimate:two}
  that
  \begin{align*}
    2\left(\Im\sigma- \max_{\Horizon=\EventHorizonFuture, \CosmologicalHorizonFuture}(\SHorizonControl{\LinearOp}[\Horizon] - \SurfaceGravity_{\Horizon} + \epsilon)\right)\RedShiftEnergy(\tStar)[{h}]
    \le{}
    C(\epsilon)\left(
    \norm{\LinearOp{h}}_{\InducedLTwo(\Sigma_{\tStar})}^2
    +\norm{h}_{\InducedLTwo(\Sigma_{\tStar})}^2
    + \EnergyKill(\tStar)[{h}]
    \right).
  \end{align*}
  Then there exists some $k_0$ sufficiently large so that 
  \begin{equation*}
    2\left(\Im\sigma- \max_{\Horizon=\EventHorizonFuture, \CosmologicalHorizonFuture}(\SHorizonControl{\LinearOp}[\Horizon]- (2k_0+1)\SurfaceGravity_{\Horizon} + \epsilon)\right)\RedShiftEnergy(\tStar)[{h}]
    \ge C(\epsilon)\EnergyKill(\tStar)[{h}].
  \end{equation*}
  Using Theorem
  \ref{linear:thm:redshift-commutation:main} we recall that by commuting
  \eqref{linear:eq:KdS-elliptic-regularity:Cauchy-system} with the vectorfields
  $\curlyBrace*{\RedShiftK_i}_{i=1}^N$, we can construct
  a new system of equations
  \begin{equation}
    \label{linear:eq:KdS-elliptic-regularity:commuted-system}
    \bL \mathbf{h} = e^{-\ImagUnit\sigma\tStar}\blittleF
  \end{equation}
  such that
  $\SHorizonControl{\bL}[\Horizon] =
  \SHorizonControl{\LinearOp}[\Horizon] -
  2k\SurfaceGravity_{\Horizon}$. Then commuting through $k>k_0$ number
  of times, applying Theorem \ref{linear:thm:redshift-energy-estimate}, and
  multiplying both sides of the equation by
  $e^{2\Im\sigma\tStar}$, we have that
  \begin{equation*}
    \norm{u}_{\InducedHk{k+1}_\sigma(\Sigma)} \lesssim \norm{f}_{\InducedHk{k}_\sigma(\Sigma)}.
  \end{equation*}
  for $k>k_0$. We conclude by allowing $k$ to be arbitrarily large. 
\end{proof}

\section{Tools for frequency analysis}
\label{linear:sec:freq-analysis}

In this section we introduce tools used in the frequency analysis of
Section \ref{linear:sec:ILED}.

\subsection{Pseudodifferential analysis}
\label{linear:sec:pseudodifferential-analysis}

In this section, we introduce the basics of the classical
pseudo-differential analysis we will be using. We first introduce the
necessary pseudo-differential calculus on $\Real^n$ before defining
pseudo-differential operators on manifolds, which is what we will
actually use in the pseudo-differential arguments involved in Section
\ref{linear:sec:ILED}. For an in-depth reference, we refer the reader to
Chapter 1 of \cite{alinhac_pseudo-differential_2007}, Chapter 18 of
\cite{hormander_analysis_2007}, or Chapters 1-4 of
\cite{taylor_pseudodifferential_1991}.

\begin{definition}
  \label{linear:def:psido:basic}
  For $m\in \Real$, we define $\Psi^{m}(\Real^d)$ to be the class of order-$m$
  symbols on $\Real^d$, consisting of $C^\infty$ functions
  $a(\STPoint,\zeta)$ such that
  \begin{equation*}
    \abs*{D_{\STPoint}^\beta D_\zeta^\alpha a(\STPoint, \zeta) }\le C_{\alpha\beta}\bangle*{\zeta}^{m-|\alpha|}
  \end{equation*}
  for all multi-indexes $\alpha$, where $\bangle*{\zeta} =
  (1+|\zeta|^2)^{\frac{1}{2}}$. We also define the symbol class
  \begin{equation*}
    \Psi^{-\infty} := \bigcap_{m}\Psi^{m}.
  \end{equation*}
  To each symbol is its associated \emph{pseudo-differential operator}
  acting on Schwartz functions $\phi$,
  \begin{equation*}
    a(\STPoint, D)\phi(\STPoint) = \Op(a)\phi(\STPoint) := (2\pi)^{-d}\int_{\Real^d}e^{\ImagUnit\STPoint\cdot \zeta}a(\STPoint, \zeta)\widehat{\phi}(\zeta)\,d\zeta,
  \end{equation*}
  where $\widehat{\phi}$ is the Fourier transform of $\phi$. 
\end{definition}

\begin{remark}
  By abuse of notation, we will understand symbols $a$ homogeneous of
  degree $m$ on $\abs*{\zeta}>1$ to also be symbols in $S^m$, since
  they can be corrected to be proper symbols by some cutoff in
  $S^{-\infty}$.
\end{remark}

We review the basic properties of the pseudo-differential symbol
calculus (see for
example Theorem I.3.2.3, Theorem I.4.1, and Corollary I.4.1 in
\cite{alinhac_pseudo-differential_2007}).
\begin{prop}
  \label{linear:prop:PsiDO:calculus}
  For $m, m_1, m_2\in \Real$, let $a\in \SymClass^m(\Real^d)$,
  $a_1\in \SymClass^{m_1}(\Real^d)$, and
  $a_2\in\SymClass^{m_2}(\Real^d)$.
  \begin{enumerate}    
  \item We have that $a(\STPoint,D)^* = a^*(\STPoint,D)$, where
    \begin{equation*}
      a^*(\STPoint,\zeta) \sim \sum_{\alpha}\frac{1}{\alpha!}\p_\zeta^\alpha D_{\STPoint}\overline{a}(\STPoint,\zeta). 
    \end{equation*}
  \item  We have that
    \begin{equation*}
      a_1(\STPoint,D)\circ a_2(\STPoint,D) = b(\STPoint,D),
    \end{equation*}
    where
    \begin{equation*}
      b = a_1\# a_2 \sim \sum_{\alpha}\frac{1}{\alpha!}\p_\zeta^\alpha a_1 D_\STPoint^\alpha a_2.
    \end{equation*}
  \item We have that
    \begin{equation*}
      \squareBrace*{a_1(\STPoint,D), a_2(\STPoint,D)} = b(\STPoint, D),
    \end{equation*}
    where,
    \begin{equation*}
      b = \PoissonB*{a_1,a_2} + \SymClass^{m_1 + m_2 - 2},
    \end{equation*}
    where
    \begin{equation*}
      \PoissonB*{f,g} = \sum \left(\frac{d f}{d\zeta^i}\frac{d g}{d \STPoint^i} - \frac{d g}{d\zeta^i}\frac{d f}{d \STPoint^i}\right)
    \end{equation*}
    denotes the \emph{Poisson bracket} of $f$ and $g$. 
  \end{enumerate}  
\end{prop}

\begin{remark}
  An immediate consequence of Proposition \ref{linear:prop:PsiDO:calculus} is
  that if $a(\STPoint, \zeta)$ is a real homogeneous symbol, then
  \begin{equation*}
    a(\STPoint, D) + \frac{1}{2V}a(\STPoint, D)(V)
  \end{equation*}
  is formally skew-adjoint with respect to the inner product
  \begin{equation*}
    \int_{\Real^d} u\cdot \overline{v}\, V\abs*{dx}.
  \end{equation*}
  Similarly, if $b(\STPoint, \zeta)$ is an imaginary homogeneous
  symbol, then
  \begin{equation*}
    a(\STPoint, D) - \frac{1}{2V}a(\STPoint, D)(V)
  \end{equation*}
  is formally self-adjoint with respect to the same inner product. 
\end{remark}

\begin{definition}
  We call a symbol $a(\STPoint, \zeta)\in \Psi^m(\Real^d)$ and its
  corresponding operator $a(\STPoint, D)$ an \emph{elliptic symbol} and
  an \emph{elliptic operator} respectively if there exists some $c, C$ such
  that for $\bangle*{\zeta}>C$,
  \begin{equation*}
    \abs*{a(\STPoint, \zeta)}\ge c \bangle*{\zeta}^m.
  \end{equation*}
\end{definition}

Elliptic operators are particularly convenient objects to work with as
they are invertible in a pseudo-differential sense. 
\begin{prop}
  \label{linear:prop:parametrix}
  If $a(\STPoint, \zeta)\in \Psi^m$ is elliptic, then it has a
  \emph{parametrix} $b(\STPoint, \zeta)\in \Psi^{-m}$ such that
  \begin{equation*}
    a(\STPoint, D)b(\STPoint, D) - \Identity \in \Op\Psi^{-\infty}, \qquad b(\STPoint, D)a(\STPoint, D) - \Identity \in \Op\Psi^{-\infty}.
  \end{equation*}  
\end{prop}

Finally, we can also define pseudo-differential operators on a
manifold. We begin with a key proposition regarding the behavior of
symbols under coordinate transformations (see Proposition I.7.1 of
\cite{alinhac_pseudo-differential_2007}).
\begin{prop}
  \label{linear:prop:PsiDO:manifold}
  Let $\phi:\Omega\to \Omega'$ be a smooth diffeomorphism between two
  open subsets of $\Real^d$. Moreover, let $a\in \SymClass^m$ be an
  order $m$ symbol such that the operator $a(\STPoint, D)$ has kernel
  with compact support in $\Omega\times\Omega$.

  Then the following hold. 
  \begin{enumerate}
  \item The function $a'(y, \zeta)$ defined by
    \begin{equation*}
      a'(\phi(\STPoint), \zeta) = e^{-\ImagUnit\phi(\STPoint)\cdot
        \zeta} a(\STPoint,D) e^{\ImagUnit \phi(\STPoint)\cdot \zeta},
      \qquad a' = 0 \text{ for } y\not\in \Omega',
    \end{equation*}
    is also a member of $\SymClass^m$.
  \item The kernel of $a'(\STPoint,D)$ has compact support in $\Omega'\times
    \Omega'$,
  \item For $u\in \TemperedDist(\Real^d)$,
    \begin{equation*}
      a,( D)(u\circ \phi) = (a'(\STPoint, D)u)\circ \phi.
    \end{equation*}
  \item If $a$ has the form
    \begin{equation}
      \label{linear:eq:PsiDO:manifold:a-def}
      a= a_m \mod \SymClass^{m-1}(\Real^d),
    \end{equation}
    where $a_m$ is a homogeneous symbol of order $m$, then the same is
    true for $a'$. That is, there is a homogeneous symbol $a_m'$ of
    order $m$ such that
    \begin{equation*}
      a'= a_m' \mod \SymClass^{m-1}(\Real^d),
    \end{equation*}
    and in fact
    \begin{equation}
      \label{linear:eq:PsiDO:manifold:principal-sym}
      a'_m(\phi(\STPoint), \zeta) = a_m(x, \tensor[^t]{{\chi'}}{}(\STPoint)\zeta).
    \end{equation}
  \end{enumerate} 
\end{prop}

With Proposition \ref{linear:prop:PsiDO:manifold} in hand, we can define a
pseudo-differential operator on a manifold (see Definition I.7.1 of
\cite{alinhac_pseudo-differential_2007}).
\begin{definition}  
  An operator $A:C_0^\infty(\Manifold)\to C^\infty(\Manifold)$ is
  a \emph{pseudo-differential operator of order} $m$ if for any
  coordinate system $\kappa:V\to \widetilde{V} \subset \Real^n$, the
  transported operator
  \begin{align*}
    \widetilde{A}:C_0^\infty(\widetilde{V}) &\to  C_0^\infty(\widetilde{V}),\\
    u&\mapsto A(u\circ \kappa)\circ \kappa^{-1}
  \end{align*}
  is a pseudo-differential of operator of order $m$ in
  $\widetilde{V}$. In other words, $A$ is a pseudo-differential
  operator of order $m$ if for all $\phi,\psi\in
  C^\infty_0(\widetilde{V})$, $\phi\widetilde{A}\psi\in Op\SymClass^m$.
\end{definition}

In particular, \eqref{linear:eq:PsiDO:manifold:principal-sym} shows that the
principal symbol of a pseudodifferential operator on a manifold
$\Manifold$ is a member of
$T^*\Manifold$, and is invariant under coordinate
transformations. 
These pseudo-differential operators have well-behaved mapping
properties based on their symbol.
\begin{prop}
  If $a\in \Psi^m(\Real^d)$, then the operator $a(x, D)$ is a
  well-defined mapping from $H^s(\Real^d)$ to $H^{s-m}(\Real^d)$ for
  any $s\in \Real$. 
\end{prop}

We recall below the Coifman-Meyer commutator estimate. 
\begin{prop}[See \cite{coifman_commutateurs_1978}]
  \label{linear:prop:Coifman-Meyer}
  For $f\in C^\infty$, $P\in OP\SymClass^1$,
  \begin{equation*}
    \norm*{[P,f]u}_{L^2} \le C \norm*{f}_{C^1} \norm*{u}_{L^2}.
  \end{equation*}
\end{prop}

Two standard pseudo-differential objects that will feature heavily in
what follows are the principal symbol of the Laplace-Beltrami operator
associated to a \KdS{} metric and the Hamiltonian vectorfield it
generates. 
\begin{definition}
  \label{linear:def:prin-sym-and-hamiltonian}
  For a fixed \KdS{} metric $g_b$, define $\PrinSymb_b$ to be the
  \emph{principal symbol} of the Laplace-Beltrami operator associated to
  $g_b$,
  \begin{equation*}
    \PrinSymb_b := g^{\mu\nu}\zeta_\mu\zeta_\nu. 
  \end{equation*}
  Moreover, for a fixed \KdS{} metric $g_b$, denote by
  \begin{equation*}
    H_{\PrinSymb_b}:= \sum_{\mu}\frac{d \PrinSymb_b}{d x_\mu}\frac{d }{d \zeta_\mu} - \frac{d \PrinSymb_b}{d \zeta_\mu} \frac{d}{d x_\mu},\qquad
    (x;\zeta):= (t,r,\omega; \sigma, \xi, \eta),
  \end{equation*}
  the \emph{Hamiltonian vectorfield} associated to $g_b$. 
\end{definition}

We also have the following inequality which serves as a generalization
of G\"{a}rding's inequality which we will make repeated use of in
Section \ref{linear:sec:ILED}. 
\begin{theorem}[Corollary II.8 \cite{tataru_feffermanphong_2002}]
  \label{linear:thm:sym-ineq-to-op-est}
  Let $\curlyBrace{a_j}_{j=1}^k, b\in C^{1,1}\SymClass^1$ be a finite
  set of real symbols with $|b|\le \sum |a_j|$, where
  $C^{1,1}\SymClass^1$ denotes the class of first-order symbols with
  $C^{1,1}$ coefficients. Then,
  \begin{equation*}
    \norm{B(x, D)u}_{L^2}\lesssim \sum_{j=1}^k\norm{A_j(x, D)u}_{L^2} + \norm{u}_{L^2}.
  \end{equation*}
\end{theorem}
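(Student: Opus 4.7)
The plan is to reduce the operator estimate to a Fefferman-Phong-type inequality applied to a suitable non-negative symbol. First, from the pointwise bound $|b|\le \sum_{j=1}^k |a_j|$ and Cauchy-Schwarz, we obtain
\begin{equation*}
  |b|^2 \le k \sum_{j=1}^k |a_j|^2,
\end{equation*}
so the symbol
\begin{equation*}
  P(x,\zeta) := k\sum_{j=1}^k a_j(x,\zeta)^2 - b(x,\zeta)^2
\end{equation*}
is a non-negative symbol of order $2$ whose coefficients are products of $C^{1,1}$ functions and therefore themselves $C^{1,1}$.

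Next, I would invoke the sharp Fefferman-Phong inequality with $C^{1,1}$ coefficients, which is precisely the main result of \cite{tataru_feffermanphong_2002} and applies to non-negative symbols $P \in C^{1,1}\SymClass^2$, to deduce that
\begin{equation*}
  \Re \bangle*{\Op(P)u, u}_{L^2} \ge -C\norm*{u}_{L^2}^2.
\end{equation*}

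Finally, using the symbolic calculus of Proposition \ref{linear:prop:PsiDO:calculus} together with the Coifman-Meyer estimate in Proposition \ref{linear:prop:Coifman-Meyer} to expand the compositions $A_j(x,D)^* A_j(x,D)$ and $B(x,D)^* B(x,D)$ in terms of their principal symbols modulo $L^2$-bounded errors controlled by the $C^{1,1}$ seminorms of the coefficients, one writes
\begin{equation*}
  \Re \bangle*{\Op(P)u, u}_{L^2}
  = k\sum_{j=1}^k \norm*{A_j(x,D)u}_{L^2}^2 - \norm*{B(x,D)u}_{L^2}^2 + R(u),
  \qquad |R(u)|\le C\norm*{u}_{L^2}^2.
\end{equation*}
Combining this with the Fefferman-Phong lower bound, rearranging, and taking square roots yields the stated inequality.

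The main obstacle is the low $C^{1,1}$ regularity of the coefficients, since the classical Fefferman-Phong inequality is usually formulated for smooth symbols; both the Fefferman-Phong bound itself and the error analysis in the composition formula must be carried out in the $C^{1,1}$ setting. This is precisely the delicate content of the Tataru paper, which we use as a black box rather than reprove here.
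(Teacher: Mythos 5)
The paper does not actually prove this statement; it is cited verbatim as Corollary II.8 of \cite{tataru_feffermanphong_2002} and used as a black box, so there is no in-paper proof to compare against. Your sketch is the natural reduction one would expect Tataru to make: a pointwise symbol bound is squared into a non-negative second-order symbol, the Fefferman--Phong inequality gives operator non-negativity up to an $L^2$-bounded remainder, and symbolic calculus transfers this to the operators $A_j$ and $B$. At that level the plan is sound, and the algebra at the end (square root of $\|Bu\|^2 \le k\sum\|A_ju\|^2 + C\|u\|^2$) does give the stated inequality.

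The place where the sketch is genuinely thin is the appeal to Proposition~\ref{linear:prop:PsiDO:calculus} and Proposition~\ref{linear:prop:Coifman-Meyer} to control the composition error. Proposition~\ref{linear:prop:PsiDO:calculus} is stated for $\SymClass^m$ symbols in the usual $C^\infty$ sense, and its asymptotic expansions do not apply to $C^{1,1}\SymClass^1$ symbols: for example, the claim that $A_j^*A_j - \Op(a_j^2)$ is $L^2$-bounded requires precisely two $x$-derivatives of $a_j$, which is the borderline case $C^{1,1}$ regularity is designed to cover, and this is \emph{not} something you can read off the smooth calculus. Likewise Proposition~\ref{linear:prop:Coifman-Meyer} bounds a commutator of an $OPS^1$ operator with a $C^1$ \emph{function}, not the remainder in a composition of two rough first-order \emph{symbols}, so it cannot be invoked here as stated. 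In other words, the step you write as ``modulo $L^2$-bounded errors controlled by the $C^{1,1}$ seminorms'' is exactly the content of Tataru's paper (a paraproduct/mollification decomposition into a smooth part plus a small rough part), and cannot be outsourced to the paper's own smooth pseudodifferential toolbox. You do acknowledge this at the end, which is honest, but as written the middle paragraph reads as if the paper's Propositions suffice, and they do not. Since the paper itself defers entirely to Tataru, the right stance is to state the reduction and explicitly flag that both the Fefferman--Phong bound \emph{and} the $C^{1,1}$ composition error estimates are imported wholesale from \cite{tataru_feffermanphong_2002}, rather than to gesture at the smooth calculus from Section~\ref{linear:sec:pseudodifferential-analysis}.
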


In our application on slowly-rotating \KdS{}
backgrounds, it will be convenient to perform all the calculations
involving pseudo-differential calculus in this paper using the
Boyer-Lindquist coordinates $(t,r,\omega)$, with
$(\sigma, \xi, \FreqAngular)$
representing the respective frequency variables, with covectors
written as
\begin{equation*}
  \zeta = \sigma\,dt + \xi\,dr + \eta,\quad \eta\in T^*\Sphere^2,
\end{equation*}
where we recall that by its construction, $\tStar=t$ on a small
neighborhood of $r=3M$. 

We define two specific classes of symbols, which will come up in our
subsequent analysis.
\begin{definition}
  Let $\SymClass^n(\StaticRegionWithExtension) =
  \SymClass^n(\tStar,r,\omega;\sigma,\xi,\eta)$ denote the class of
  order-$n$ symbols on $\StaticRegionWithExtension$, and
  $S^n(\Sigma) = S^n(r,\theta;\xi,\eta)$ denote the sub-class of
  stationary, axi-symmetric order-$n$ symbols independent of $\sigma$
  and $\tStar$.  
\end{definition}
Throughout this paper, we will only use pseudo-differential operators
in $\Op S^m(\Sigma)$. That is, we will only work with
pseudo-differential operators which are pseudo-differential in the
spatial variables and strictly differential in $\tStar$. It will then
be convenient to use the following definition of negative Sobolev
spaces.
\begin{definition}
  We define the negative Sobolev spaces
  \begin{equation*}
    \norm*{h}_{H^{-k}(\DomainOfIntegration)} := \norm*{\bangle*{D_x}^{-1}h}_{L^2(\DomainOfIntegration)}. 
  \end{equation*}
\end{definition}

We remark that since we will only need to use
pseudo-differential methods in a neighborhood of the trapped set, all
of our symbols will be compactly supported in a neighborhood of
$r=3M$.

\subsection{Trapping behavior in \KdS}
\label{linear:sec:trapping-KdS}

In this section, we discuss the well-known properties of
\textit{trapped null geodesics} in \KdS{} which remain in a compact
spatial region for all time (see for example Proposition 3.1 of
\cite{dyatlov_asymptotics_2015} and Section 6.4 of
\cite{vasy_microlocal_2013}). These null geodesics represent a
fundamental high-frequency geometric obstacle to decay.  To analyze
the dynamics of the trapped set $\TrappedSet_b$ in frequency space, we
consider null-bicharacteristics rather than null-geodesics, as
null-geodesics are just the physical projection of the integral curves
of null-bicharacteristics.

It is instructive to first consider the trapped null geodesics in
\SdS, where we can write out the trapped set explicitly, and make some
fundamental observations. The trapped null geodesics in \KdS{} are in
a sense perturbations of the trapped null geodesics in \SdS. We will
make this notion more rigorous in what follows.

On \SdS, the trapped set can be located entirely physically.
\begin{lemma}
  \label{linear:lemma:trapping:SdS}
  For $g_{b_0}$ a \SdS{} background, the trapped set is given by
  \begin{equation*}
  \TrappedSet_{b_0} = \curlyBrace*{(t, r, \omega;\sigma, \xi, \eta): r=3M, \xi=0, \PrinSymb_{b_0}=0},
\end{equation*}
  where $\PrinSymb_{b_0}$ is the principal symbol of the scalar wave
  operator $\ScalarWaveOp[g_{b_0}]$. Moreover, the trapped set is
  unstable in the sense that
  \begin{equation*}
  \mu_{b_0} > 0,\quad
    \PrinSymb_{b_0} =0,\quad
    \pm (r-3M) > 0, \quad
    H_{\PrinSymb_{b_0}}r = 0 \implies \pm H_{\PrinSymb_{b_0}}^2r > 0,
\end{equation*}
  where we recall from Definition \ref{linear:def:prin-sym-and-hamiltonian}
  that we denote by $\PrinSymb_{b_0}$ the principal symbol of
  $\ScalarWaveOp[g_{b_0}]$, and by $H_{\PrinSymb_{b_0}}$ the
  Hamiltonian vectorfield of $g_{b_0}$.
\end{lemma}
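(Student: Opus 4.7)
The plan is to reduce everything to a direct computation with the Hamilton vectorfield of the Schwarzschild--de~Sitter principal symbol, using Boyer--Lindquist coordinates on the static region. From \eqref{linear:eq:SdS:metric-def:BL}, the principal symbol of $\ScalarWaveOp[g_{b_0}]$ is
\begin{equation*}
  \PrinSymb_{b_0}(r,\omega;\sigma,\xi,\eta) \;=\; -\mu_{b_0}^{-1}\sigma^2 + \mu_{b_0}\xi^2 + r^{-2}\UnitSphereInvMetric(\eta,\eta),
\end{equation*}
and $H_{\PrinSymb_{b_0}}$ acts on $r$ by $H_{\PrinSymb_{b_0}} r = \partial_\xi \PrinSymb_{b_0} = 2\mu_{b_0}\xi$. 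Since $\mu_{b_0}>0$ on the interior of the domain of outer communication, the condition $H_{\PrinSymb_{b_0}}r=0$ is equivalent to $\xi=0$.

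The key computation is the second derivative. From $H_{\PrinSymb_{b_0}}\xi = -\partial_r\PrinSymb_{b_0}$ and $H_{\PrinSymb_{b_0}}\mu_{b_0} = 2\mu_{b_0}\mu_{b_0}'\xi$, we get
\begin{equation*}
  H_{\PrinSymb_{b_0}}^2 r = 4\mu_{b_0}\mu_{b_0}'\xi^2 + 2\mu_{b_0}\bigl(-\mu_{b_0}^{-2}\mu_{b_0}'\sigma^2 - \mu_{b_0}'\xi^2 + 2r^{-3}\UnitSphereInvMetric(\eta,\eta)\bigr).
\end{equation*}
Restricting to $\xi=0$ and using $\PrinSymb_{b_0}=0$ to replace $r^{-2}\UnitSphereInvMetric(\eta,\eta)$ by $\mu_{b_0}^{-1}\sigma^2$, a brief algebraic simplification gives
\begin{equation*}
  H_{\PrinSymb_{b_0}}^2 r \big|_{\xi=0,\;\PrinSymb_{b_0}=0} \;=\; \frac{2\sigma^2}{r\,\mu_{b_0}}\bigl(2\mu_{b_0}-r\mu_{b_0}'\bigr) \;=\; \frac{4\sigma^2(r-3M)}{r^2\mu_{b_0}},
\end{equation*}
where the last equality comes from the explicit formula $2\mu_{b_0}-r\mu_{b_0}'=2(r-3M)/r$ (the cosmological constant terms cancel). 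Since $\mu_{b_0}>0$ and $\sigma\neq 0$ on the characteristic set with $\xi=0$ and $\eta\neq 0$, the instability statement follows directly from the sign of $r-3M$.

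For the characterization of the trapped set, I would first check that $\TrappedSet_{b_0}:=\{r=3M,\xi=0,\PrinSymb_{b_0}=0\}$ is genuinely flow-invariant: on this set $H_{\PrinSymb_{b_0}}r=0$ and, using the identity $2\mu_{b_0}-r\mu_{b_0}'=2(r-3M)/r$ in the formula for $H_{\PrinSymb_{b_0}}\xi$, also $H_{\PrinSymb_{b_0}}\xi=0$; meanwhile $\PrinSymb_{b_0}$ is a first integral. Hence such bicharacteristics remain with $r\equiv 3M$ and are clearly trapped. For the converse, suppose a null-bicharacteristic stays in a compact subset of the static region for all time; then $r(s)$ is bounded away from $r_{b_0,\EventHorizonFuture}$ and $r_{b_0,\CosmologicalHorizonFuture}$, so $\dot r = 2\mu_{b_0}\xi$ must either vanish identically or have turning points. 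At any turning point one has $\xi=0$, and the instability computation above shows that at such a point $H_{\PrinSymb_{b_0}}^2 r$ is strictly positive (resp.\ negative) when $r>3M$ (resp.\ $r<3M$); this forces $r$ to leave any neighborhood of the turning point monotonically, contradicting trapping, unless $r=3M$.

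The routine pieces are the symbol computations; the one point requiring a little care is the dichotomy in the last step---one must rule out that $r(s)$ accumulates at $3M$ from one side without ever reaching it, which I expect to be the main (minor) obstacle. A clean way is to observe that on a trapped trajectory $r(s)$ is bounded, so by Barbalat/mean-value reasoning $\dot r(s)$ has a sequence of zeros, and at each such zero the sign of $H_{\PrinSymb_{b_0}}^2 r$ at nearby parameter values forces the trajectory into $\{r=3M\}$ in the limit; combined with flow-invariance of $\TrappedSet_{b_0}$, this yields the equality of sets.
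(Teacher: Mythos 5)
Your computation is correct, and it is a genuinely different route from the paper's: the paper works in Eddington--Finkelstein coordinates with the rescaled generator $H_{r^2\PrinSymb_{b_0}}$, while you work in Boyer--Lindquist with the unrescaled $H_{\PrinSymb_{b_0}}$. (Rescaling the symbol by a positive factor $\phi$ multiplies $H^2r$ by $\phi^2>0$ at critical points of $Hr$ on the characteristic set, so the sign statement is unchanged; you are computing the quantity that actually appears in the lemma statement.) Your identity $2\mu_{b_0}-r\mu_{b_0}' = 2(r-3M)/r$ is the same cancellation the paper exploits as $r^4\p_r(r^{-4}\Delta_{b_0}) = -2r(1-3M/r)$. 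The trade-off between the two routes: the Boyer--Lindquist computation is shorter and more transparent in the interior, where $\mu_{b_0}>0$ and the metric is diagonal, which is exactly the regime the instability assertion concerns; but it cannot see what happens at $\Delta_{b_0}\le 0$. The paper's Eddington--Finkelstein choice lets it explicitly rule out null bicharacteristics trapped on or beyond the horizons (an observation it needs again in the \KdS{} analogue, Lemma~\ref{linear:lemma:trapping:KdS}), which your coordinates cannot reach. For the present lemma, which only asserts the sign of $H^2_{\PrinSymb_{b_0}}r$ under the explicit hypothesis $\mu_{b_0}>0$, your restriction is harmless.

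On the converse (that the trapped set is exactly $\{r=3M,\xi=0,\PrinSymb_{b_0}=0\}$): you correctly flag this as the delicate point, and your sketch is fine, but you are being harder on yourself than the paper is. The paper's own argument is of the same character --- it identifies where $H_{\PrinSymb_{b_0}}r$ and $H^2_{\PrinSymb_{b_0}}r$ can simultaneously vanish and treats the instability computation as settling the matter. A completely clean version of the converse, which you gesture at, can be phrased as follows: $r$ is a strict Lyapunov function for the flow away from $\{r=3M\}$ on the characteristic set intersected with $\{\mu_{b_0}>0\}$, in the sense that $\frac{d}{ds}\bigl((r-3M)H_{\PrinSymb_{b_0}}r\bigr) = (H_{\PrinSymb_{b_0}}r)^2 + (r-3M)H^2_{\PrinSymb_{b_0}}r$ is nonnegative and vanishes only on $\TrappedSet_{b_0}$; monotonicity of $(r-3M)H_{\PrinSymb_{b_0}}r$ then forces escape unless the trajectory lies in $\TrappedSet_{b_0}$ for all time. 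This avoids the case analysis on accumulation from one side. But this refinement is not present in the paper either, so there is no gap relative to the reference proof --- just a point both arguments treat lightly.
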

\begin{proof}
  See appendix \ref{linear:appendix:lemma:trapping:SdS}.
\end{proof}

\begin{remark}
  The physical projection of $\TrappedSet_{b_0}$ is
  exactly the photon sphere, $r=3M$. 
\end{remark}

We now move onto the trapped set in the case of \KdS. In this case,
the trapped set exhibits frequency-dependent behavior.
\begin{lemma}
  \label{linear:lemma:trapping:KdS}
  For $g_{b}$ a \KdS{} background, the trapped set
  \begin{equation}
  \TrappedSet_{b} = \curlyBrace*{(t, r, \omega; \sigma, \xi, \FreqAngular): r=\rTrapping_b(\sigma, \FreqAngular), \xi=0, \PrinSymb_{b}=0},
\end{equation}
  where $\PrinSymb_{b}$ is the principal symbol of the scalar wave
  operator $\ScalarWaveOp[g_{b}]$, and $\rTrapping_b(\sigma,
  \FreqAngular)$ is a function satisfying the following properties. 
  \begin{enumerate}
  \item $\rTrapping_b(\sigma,\FreqAngular)$ lies in an $O(a)$ neighborhood
  of $r=3M$ for all $\sigma, \FreqAngular$.
\item $\rTrapping_b(\sigma, \FreqAngular)$ is smooth in $\sigma,
  \FreqAngular$, as well as the black hole parameters $b$.  
\end{enumerate}
  Moreover, the trapped set is unstable in the sense that
  \begin{equation}
  \Delta_b>0,
    \quad \PrinSymb_b=0,
    \quad \pm (r-\rTrapping_b)>0,
    \quad H_{\PrinSymb_b} r=0
    \implies \pm H^2_{\PrinSymb_b} r > 0,
\end{equation}
  where we denote by $H_{\PrinSymb_b}$ the Hamiltonian vectorfield of
  $g_b$. 
\end{lemma}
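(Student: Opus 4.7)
The plan is to perturb the SdS analysis of Lemma \ref{linear:lemma:trapping:SdS} to slowly rotating \KdS{} via the implicit function theorem in the black-hole parameter $a$.

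First, I would read off $H_{\PrinSymb_b} r = 2\Delta_b \xi/\rho_b^2$ and $H_{\PrinSymb_b}\xi = -\partial_r \PrinSymb_b$ directly from the Boyer-Lindquist form of $G_b$. Consequently, on $\{\Delta_b > 0\}$ the condition $H_{\PrinSymb_b} r = 0$ is equivalent to $\xi = 0$, and then
\begin{equation*}
H_{\PrinSymb_b}^2 r\big|_{\xi = 0} = -\frac{2\Delta_b}{\rho_b^2}\,\partial_r \PrinSymb_b\big|_{\xi = 0}.
\end{equation*}
Trapping of a null bicharacteristic thus requires, in addition to $\xi = 0$, the scalar equation $\partial_r\PrinSymb_b|_{\xi=0} = 0$. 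On the null cone this is equivalent to $\partial_r(\rho_b^2\PrinSymb_b)|_{\xi=0} = 0$, and since $\rho_b^2\PrinSymb_b|_{\xi = 0}$ decomposes as a $\theta$-dependent potential plus a radial piece $-(1+\lambda_b)^2((r^2+a^2)\sigma + a\eta_\varphi)^2/\Delta_b(r)$, only the latter contributes, giving the trapping equation
\begin{equation*}
\mathcal{G}(r; \sigma, \eta_\varphi, b) := 4r\sigma\Delta_b(r) - \big((r^2+a^2)\sigma + a\eta_\varphi\big)\Delta_b'(r) = 0.
\end{equation*}

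Second, I would apply the implicit function theorem to $\mathcal{G} = 0$ at $(r, b) = (3M, b_0)$. The SdS calculation in Lemma \ref{linear:lemma:trapping:SdS} already gives $\mathcal{G}(3M; \sigma, 0, b_0) = 0$ for all $\sigma$, while a direct computation using the values of $\Delta_{b_0}, \Delta_{b_0}', \Delta_{b_0}''$ at $r = 3M$ yields the clean identity $\partial_r \mathcal{G}(3M; \sigma, 0, b_0) = 18\sigma M^2$, nonzero for $\sigma \neq 0$ (with subextremality $1 - 9\Lambda M^2 > 0$ used only to place $r = 3M$ in the static region). Since $\mathcal{G}$ is homogeneous of degree one in $(\sigma, \eta_\varphi)$, it suffices to run the IFT on the compact cosphere $\{\sigma^2 + |\eta|^2 = 1\}$ away from $\{\sigma = 0\}$. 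This produces a smooth $\rTrapping_b(\sigma, \eta)$, depending smoothly on $b$ and coinciding with $3M$ at $b = b_0$, so $\rTrapping_b = 3M + O(a)$. The degeneracy at $\sigma = 0$ is harmless: at $\xi = 0$ the null cone with $\sigma = 0$ meets only the zero section of $T^*\StaticRegionWithExtension$ for $|a|$ sufficiently small, which lies outside bicharacteristic dynamics.

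Finally, for the instability statement, at a point with $\Delta_b > 0$, $\PrinSymb_b = 0$, $\xi = 0$, the formula above reduces $H_{\PrinSymb_b}^2 r$ to $-(2\Delta_b/\rho_b^2)\partial_r \PrinSymb_b|_{\xi=0}$. At $r = \rTrapping_b$ this vanishes by construction, while $\partial_r^2 \PrinSymb_b|_{\xi = 0}\big|_{r = \rTrapping_b}$ is strictly negative in the SdS case by Lemma \ref{linear:lemma:trapping:SdS} (the explicit value is proportional to $-\sigma^2$), and continuity in $b$ preserves the sign for $|a| \ll 1$. Hence $\partial_r \PrinSymb_b|_{\xi = 0}$ is strictly decreasing across $\rTrapping_b$, with the signs needed to produce $\pm H_{\PrinSymb_b}^2 r > 0$ when $\pm(r - \rTrapping_b) > 0$. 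The genuinely delicate step is uniformity of the IFT and the sign persistence across the entire cosphere bundle; both are handled by the degree-one homogeneity of $\mathcal{G}$, and the rest is bookkeeping.
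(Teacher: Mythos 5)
Your route is genuinely different from the paper's. The paper works in Eddington--Finkelstein coordinates and proves the result by a direct algebraic computation: it writes $\p_r\mathcal{F} = -N\Delta_b^{-2}f$ with $\mathcal{F} = \Delta_b^{-1}N^2$, $N = (r^2+a^2)\sigma + a\FreqPhi$, shows $N\neq 0$ on the characteristic set over $\{\Delta_b>0\}$, and then verifies the algebraic inequality $-\sigma\p_r\Delta_b\,\p_rf > 0$ \emph{everywhere on} $\{\Delta_b>0\}$ whenever $f=0$. Combined with $\mathcal{F}\to\infty$ at the horizons, this forces every critical point of $\mathcal{F}$ to be a non-degenerate minimum, hence unique, which simultaneously yields the global characterization of $\TrappedSet_b$ and the global sign of $H^2_{\PrinSymb_b}r$. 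You instead work in Boyer--Lindquist (where $H_{\PrinSymb_b}r=0\Leftrightarrow\xi=0$, which is cleaner) and perturb from $b=b_0$ via the implicit function theorem. Your local computations are all correct: $H^2_{\PrinSymb_b}r|_{\xi=0}=-(2\Delta_b/\rho_b^2)\p_r\PrinSymb_b|_{\xi=0}$, your $\mathcal{G}$ is the paper's $-f$, and $\mathcal{G}(3M;\sigma,0,b_0)=0$, $\p_r\mathcal{G}(3M;\sigma,0,b_0)=18\sigma M^2$, and $\p_r^2\PrinSymb_{b_0}|_{\xi=0}(3M)=-2\sigma^2/(M^2(1-9\Lambda M^2)^2)$ all check out.

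However, there is a genuine gap. The IFT produces a smooth branch $\rTrapping_b$ near $(r,a)=(3M,0)$, but the lemma asserts two \emph{global} facts: (i) the trapped set is \emph{exactly} the set described, i.e., there are no trapped null bicharacteristics at any other $r$ with $\Delta_b>0$; and (ii) the implication $\pm(r-\rTrapping_b)>0 \Rightarrow \pm H^2_{\PrinSymb_b}r>0$ holds for \emph{all} $r$ in the exterior, not just in a neighborhood of $\rTrapping_b$. Your final paragraph only establishes the sign of $\p_r\PrinSymb_b|_{\xi=0}$ locally near $\rTrapping_b$ ("strictly decreasing across $\rTrapping_b$"); it says nothing about $r$ far from $3M$ where, in principle, $\p_r\PrinSymb_b|_{\xi=0}$ could vanish again or change sign. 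To close the argument you would need to patch three regimes: near each horizon (where the sign of $\p_r\mathcal{F}$ is determined by the blow-up $\mathcal{F}\to\infty$ and the sign of $\Delta_b'$), a compact intermediate region bounded away from the horizons and from $r=3M$ (where $\p_r\mathcal{F}_b$ is a uniform perturbation of the nonvanishing, signed $\p_r\mathcal{F}_{b_0}$), and the IFT neighborhood of $r=3M$. This patching is the actual content — not "bookkeeping" — and it is precisely what the paper's single sign inequality sidesteps. The student's approach can be made to work, but as written the global uniqueness and global sign statements are unproven.
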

\begin{proof}
  See appendix \ref{linear:appendix:lemma:trapping:KdS}.
\end{proof}

We conclude this section by defining two important cutoff functions
that we will subsequently make use of.
\begin{lemma}
  \label{linear:lemma:freq-cutoff-construction}  
  There exist frequency cutoffs
  \begin{equation*}
  \breve{\chi}_\zeta:= \breve{\chi}_\zeta(r, \theta;\xi,\eta),\qquad
    \mathring{\chi}_\zeta:= \mathring{\chi}_\zeta(r, \theta;\xi,\eta), 
\end{equation*}
  where $(t,r,\theta, \varphi;\sigma,\xi, \FreqTheta, \FreqPhi)$ are
  the Boyer-Lindquist coordinates, 
  defined so that
  \begin{equation}
  \label{linear:eq:freq-cutoff-construction}
    \breve{\chi}_{\zeta} =
    \begin{cases}
  1 & \frac{\abs*{\xi}^2}{|\eta|^2} \ge \delta_\zeta,\\
  0 & \frac{\abs*{\xi}^2}{\abs*{\eta}^2} < \frac{1}{2} \delta_{\zeta},\\
    \end{cases}\qquad
    \mathring{\chi}_{\zeta} =
    \begin{cases}
  1 & \frac{\abs*{\xi}^2}{\abs*{\eta}^2} \le \delta_\zeta,\\
  0 & \frac{\abs*{\xi}^2}{\abs*{\eta}^2} > 2 \delta_\zeta,
    \end{cases}
\end{equation}
  and some constant $C_{\xi}>0$ sufficiently large such that for
  $r\in [\mathring{r}_-, \mathring{R}_+]$,
  \begin{equation*}
  C_\xi(H_{\PrinSymb_b}r)^2 -\PrinSymb_b
\end{equation*}
  is elliptic on the support of $\breve{\chi}_{\zeta}$
  for all $g_b$ sufficiently slowly-rotating \KdS{} black holes.   
\end{lemma}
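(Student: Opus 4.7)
The plan is first to construct the cutoff symbols, and then to verify the ellipticity claim by a direct symbolic computation using the explicit form of $G_b$ in Boyer--Lindquist coordinates. For the construction, I would pick a smooth function $\chi_0:\Real\to[0,1]$ with $\chi_0(s)=1$ for $s\ge \delta_\zeta$ and $\chi_0(s)=0$ for $s\le \tfrac{1}{2}\delta_\zeta$, and analogously $\mathring\chi_0$ for the complementary region, and then set
\[
\breve{\chi}_\zeta(r,\theta;\xi,\eta) \;=\; \chi_0\!\left(\frac{|\xi|^2}{|\eta|^2}\right),\qquad
\mathring{\chi}_\zeta(r,\theta;\xi,\eta) \;=\; \mathring\chi_0\!\left(\frac{|\xi|^2}{|\eta|^2}\right),
\]
interpreted on the spatial cosphere bundle. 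Since the ratio $|\xi|^2/|\eta|^2$ is homogeneous of degree zero and the actual cutoff is always taken on the spherical frequency shell, a harmless smoothing near $\eta=0$ (trivial on the support set in question) places these in $S^0(\Sigma)$; they are independent of $\sigma$, as claimed.

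Next I would compute the two symbols entering the inequality. From the expression \eqref{linear:eq:KdS-metric:static} for $G_b$,
\[
\PrinSymb_b \;=\; \frac{\Delta_b}{\rho_b^2}\xi^2 + \frac{\varkappa_b}{\rho_b^2}\FreqTheta^2 + \frac{(1+\lambda_b)^2}{\rho_b^2\varkappa_b\sin^2\theta}\bigl(a\sin^2\theta\,\sigma + \FreqPhi\bigr)^2 - \frac{(1+\lambda_b)^2}{\Delta_b\rho_b^2}\bigl((r^2+a^2)\sigma+a\FreqPhi\bigr)^2,
\]
and since $H_{\PrinSymb_b}r = \pm\partial_\xi\PrinSymb_b = \pm \tfrac{2\Delta_b}{\rho_b^2}\xi$, one has $(H_{\PrinSymb_b}r)^2 = 4\Delta_b^2\rho_b^{-4}\xi^2$. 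Hence
\[
C_\xi (H_{\PrinSymb_b}r)^2 - \PrinSymb_b
= \left(\tfrac{4C_\xi\Delta_b^2}{\rho_b^4} - \tfrac{\Delta_b}{\rho_b^2}\right)\xi^2
- \tfrac{\varkappa_b}{\rho_b^2}\FreqTheta^2 - \tfrac{(1+\lambda_b)^2}{\rho_b^2\varkappa_b\sin^2\theta}(a\sin^2\theta\sigma+\FreqPhi)^2
+ \tfrac{(1+\lambda_b)^2}{\Delta_b\rho_b^2}\bigl((r^2+a^2)\sigma+a\FreqPhi\bigr)^2.
\]

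The ellipticity verification then reduces to showing that for $C_\xi$ large, the right-hand side is bounded below by $c(\xi^2+|\eta|^2+\sigma^2)$ on the support of $\breve\chi_\zeta$. I would split into cases by the dominant frequency. Where $|\sigma|$ dominates, the last (positive) $\sigma^2$ term gives the bound directly, provided the cross-terms in $\sigma\FreqPhi$ are absorbed into $\sigma^2 + \FreqPhi^2$ by Young's inequality and $C_\xi$ is large. Where $|\xi|^2\gtrsim |\eta|^2$ (which is precisely the support condition $|\xi|^2\ge \tfrac12\delta_\zeta|\eta|^2$), the first $\xi^2$-coefficient is positive for $C_\xi>\rho_b^2/(4\Delta_b)$ (recall $\Delta_b>0$ on $[\mathring r_-,\mathring R_+]$, which sits strictly between the horizons), and the negative angular contribution, bounded above by $C'|\eta|^2\le 2C'\delta_\zeta^{-1}\xi^2$, is absorbed by further enlarging $C_\xi$. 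The case where $|\eta|$ dominates does not arise on $\supp\breve\chi_\zeta$. The hard (but still routine) part is the uniformity: since every coefficient depends smoothly on $b$ by Lemma \ref{linear:lemma:trapping:KdS} and the \SdS{} case $a=0$ satisfies the bounds with strict inequality, a compactness-and-perturbation argument fixes a single $C_\xi$ that works uniformly for all sufficiently slowly-rotating $b$.
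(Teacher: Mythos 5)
Your proposal is correct and follows essentially the same path as the paper: compute the principal symbol explicitly in Boyer--Lindquist coordinates, note $H_{\PrinSymb_b}r = 2\rho_b^{-2}\Delta_b\xi$ so $(H_{\PrinSymb_b}r)^2\gtrsim\xi^2\gtrsim\xi^2+|\eta|^2$ on $\supp\breve\chi_\zeta$, use that the $\sigma^2$ coefficient in $-\rho_b^2\PrinSymb_b$ is uniformly positive for $a$ small and $r\in[\mathring r_-,\mathring R_+]$, and take $C_\xi$ large. The paper packages your case split into the single inequality $\rho_b^2\PrinSymb_b + \tfrac12\mathfrak{g}\sigma^2 \lesssim \xi^2+|\eta|^2$ (with $\mathfrak{g}>0$ the magnitude of the $\sigma^2$ coefficient), which is slightly cleaner than sorting by dominant frequency, but the substance is identical.
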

\begin{proof}
  First, we observe that $H_{\PrinSymb_b}r = 2\rho_b^{-2}\Delta_b\xi$ in Boyer-Lindquist
  coordinates. We write
  \begin{equation*}
  \rho_b^2\PrinSymb_b^2
    = \Delta_b\xi^2 + \varkappa_b\FreqTheta^2
    - (1+\lambda_b)^2
    \left(
  \Delta_b^{-1}\left(
  (r^2+a^2)\sigma+ a\FreqPhi
\right)^2
      - \varkappa_b^{-1}\sin^{-2}\theta\left(
  a\sin^2\theta\sigma +\FreqPhi
\right)^2      
\right). 
\end{equation*}
  Observe that for $a\ll M, \Lambda$ and $r\in
  [\mathring{r}_-, \mathring{R}_+]$, we have that
  \begin{equation*}
  g_b(\KillT,\KillT) = (1+\lambda_b)^2\left(\Delta_b^{-1}(r^2+a^2)^2 - a^2 \varkappa_b^{-1}\sin^{2}\theta\right) > 0. 
\end{equation*}
  As a result, we have that
  \begin{equation*}
  \rho_b^2\PrinSymb_b + \frac{1}{2}g_b(\KillT, \KillT)\sigma^2
    \lesssim \xi^2 + |\FreqAngular|^2. 
\end{equation*}
  By the construction of $\breve{\chi}_{\zeta}$ and
  $\mathring{\chi}_\zeta$ in \eqref{linear:eq:freq-cutoff-construction}, we
  have that $(H_{\PrinSymb_b}r)^2 \gtrsim \xi^2 + |\FreqAngular|^2$
  on the support of $\breve{\chi}_{\zeta}$. As a result, we have that
  for some $C_\xi$ sufficiently large, 
  \begin{equation*}
  C_\xi(H_{\PrinSymb_b}r)^2 - \rho_b^2\PrinSymb_b \gtrsim -\frac{1}{2}g_b(\KillT, \KillT)\sigma^2 + \xi^2 + |\FreqAngular|^2.
\end{equation*}
  Since $g_b(\KillT, \KillT)<0$ and $\rho_b>0$ for $r\in
  [\mathring{r}_-, \mathring{R}_+]$ we then have that
  $C_\xi(H_{\PrinSymb_b}r)^2 - \PrinSymb_b$ is elliptic as
  desired. 
\end{proof}

\subsection{Pseudo-differential modified divergence theorem}
\label{linear:sec:int-by-parts-arg}

In this section, we introduce a pseudo-differential modification of
the main divergence property presented in equation
(\ref{linear:eq:EMTensor:divergence-property}). This modification allows us
to handle the frequency-dependent nature of trapping in the \KdS{}
family, and uses small pseudo-differential perturbations of
vectorfield multipliers and Lagrangian correctors. We emphasize that
this perturbation is only used in Section \ref{linear:sec:ILED-trapping:KdS}
to prove Theorem \ref{linear:thm:ILED-near:main}.

We first prove a convenient lemma connecting the frequency analysis
with the unperturbed divergence property in equation
\eqref{linear:eq:div-them:J-K-currents}.
\begin{lemma}
  \label{linear:lemma:divergence-prop:freq-formulation}
  Let $h$ be a complex-valued matrix function
  $h:\StaticRegionWithExtension\to \Complex^D$, $g$ be a fixed \KdS{}
  metric and $\PrinSymb$ the principal symbol of
  $\ScalarWaveOp[g]$. Then we can rewrite $\KCurrent{X,q,0}[h]$ as
  defined in \eqref{linear:eq:J-K-currents:def} as
  \begin{equation}
  \label{linear:eq:divergence-prop:freq-formulation}
  \KCurrent{X,q, 0}[{h}]
  = (\KCurrentSym{X,q}_{(2)})^{\alpha\beta}\nabla_{(\alpha}{h}\cdot \nabla_{\beta)}\overline{{h}}
  + \KCurrentSym{X,q}_{(0)}\abs*{{h}}^2,
\end{equation}
  where $(\KCurrentSym{X,q}_{(2)})^{\alpha\beta}\zeta_\alpha\zeta_\beta$
  and $\KCurrentSym{X,q}_{(0)}$ are given by
  \begin{equation*}
  (\KCurrentSym{X,q}_{(2)})^{\alpha\beta}\zeta_\alpha\zeta_\beta
    = \frac{1}{2}H_{\PrinSymb}X
    + \left(q-\frac{1}{2}\nabla_g\cdot X\right)\PrinSymb,\qquad
    \KCurrentSym{X,q}_{(0)} = -\frac{1}{2}\nabla^\alpha\p_\alpha q.
\end{equation*}
\end{lemma}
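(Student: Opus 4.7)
The plan is to unpack the definition of $\KCurrent{X,q,0}[h]$ from \eqref{linear:eq:J-K-currents:def} and identify its symbol structure directly. Substituting
\[
  \EMTensor_{\mu\nu}[h] = \nabla_{(\mu}\overline{h}\cdot\nabla_{\nu)}h - \tfrac{1}{2}g_{\mu\nu}\nabla_\alpha\overline{h}\cdot\nabla^\alpha h
\]
into the deformation tensor contraction, and using $\Trace_g\DeformationTensor{X}{} = \nabla_g\cdot X$, gives
\[
  \DeformationTensor{X}{^{\alpha\beta}}\EMTensor_{\alpha\beta}[h] = \DeformationTensor{X}{^{\alpha\beta}}\nabla_{(\alpha}h\cdot\nabla_{\beta)}\overline{h} - \tfrac{1}{2}(\nabla_g\cdot X)\nabla^\alpha h\cdot\nabla_\alpha\overline{h}.
\]
Combining this with the $q\,\nabla^\alpha h\cdot\nabla_\alpha\overline{h}$ term and the Lagrangian correction $-\tfrac{1}{2}(\nabla^\alpha\p_\alpha q)|h|^2$ produces a clean split into a second-order quadratic form in $\nabla h$ with coefficient tensor $T^{\alpha\beta} := \DeformationTensor{X}{^{\alpha\beta}} + (q-\tfrac{1}{2}\nabla_g\cdot X)g^{\alpha\beta}$, plus the zero-order term $\KCurrentSym{X,q}_{(0)}|h|^2$. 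The zero-order identification is immediate from this reading.

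The only nontrivial identification is at the level of the symbol of the second-order coefficient, where I need to show
\[
  T^{\alpha\beta}\zeta_\alpha\zeta_\beta = \tfrac{1}{2}H_{\PrinSymb}X + \left(q-\tfrac{1}{2}\nabla_g\cdot X\right)\PrinSymb.
\]
The second summand is automatic from $g^{\alpha\beta}\zeta_\alpha\zeta_\beta = \PrinSymb$, so everything reduces to the well-known identity $\DeformationTensor{X}{^{\alpha\beta}}\zeta_\alpha\zeta_\beta = \tfrac{1}{2}H_{\PrinSymb}(X^\mu\zeta_\mu)$ relating the symmetrized covariant derivative of a vectorfield to the Hamiltonian action of the metric symbol on the linear-in-$\zeta$ symbol of $X$.

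To verify this identity I compute the Poisson bracket in local coordinates:
\[
  \{\PrinSymb,\,X^\mu\zeta_\mu\} = 2g^{\mu\nu}\zeta_\nu\,\p_\mu X^\alpha\,\zeta_\alpha - X^\mu\,\p_\mu g^{\nu\lambda}\,\zeta_\nu\zeta_\lambda.
\]
Rewriting $\p_\mu g^{\nu\lambda} = -g^{\nu\alpha}g^{\lambda\beta}\p_\mu g_{\alpha\beta}$ and then $\p_\mu g_{\alpha\beta} = g_{\gamma\beta}\ChristoffelTypeTwo{\gamma}{\mu\alpha} + g_{\alpha\gamma}\ChristoffelTypeTwo{\gamma}{\mu\beta}$, the second term reassembles the Christoffel correction needed to promote the coordinate derivatives $\p_\mu X^\alpha$ to covariant derivatives $\nabla_\mu X^\alpha$, yielding $2\nabla^{(\alpha}X^{\beta)}\zeta_\alpha\zeta_\beta = 2\DeformationTensor{X}{^{\alpha\beta}}\zeta_\alpha\zeta_\beta$ as required.

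The hard part here is essentially none: the argument is purely algebraic bookkeeping, with the only mildly delicate step being the Christoffel identity above, which is standard. Once this is in place, matching coefficients on both sides of \eqref{linear:eq:divergence-prop:freq-formulation} completes the proof.
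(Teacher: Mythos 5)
Your proof is correct and takes essentially the same route as the paper: the paper's one-line observation is $2\DeformationTensor{X}\cdot\EMTensor[h] = (\LieDerivative_Xg^{\alpha\beta})\nabla_{(\alpha}h\cdot\nabla_{\beta)}\overline{h} - (\nabla_g\cdot X)\nabla^\alpha h\cdot\nabla_\alpha\overline{h}$ together with the symbol identity $(\LieDerivative_Xg^{\alpha\beta})\zeta_\alpha\zeta_\beta = H_{\PrinSymb}X$, which is precisely your $2\DeformationTensor{X}{^{\alpha\beta}}\zeta_\alpha\zeta_\beta = H_{\PrinSymb}(X^\mu\zeta_\mu)$ once one reads the paper's "$\LieDerivative_Xg^{\alpha\beta}$" as the index-raised Lie derivative $g^{\alpha\mu}g^{\beta\nu}\LieDerivative_Xg_{\mu\nu}=2\DeformationTensor{X}{^{\alpha\beta}}$. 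Your explicit Poisson-bracket-plus-Christoffel verification of that identity fills in what the paper leaves to the reader; note only that the printed formula for $H_{\PrinSymb}$ in Definition~\ref{linear:def:prin-sym-and-hamiltonian} has the $x$ and $\zeta$ partials transposed relative to how $H_{\PrinSymb}$ is actually used in the paper (e.g.\ $H_{\PrinSymb_b}r = 2\rho_b^{-2}\Delta_b\xi$), and your computation is consistent with the latter, standard, convention.
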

\begin{proof}
  Observe that
  \begin{equation*}
  2\DeformationTensor{X}\cdot \EMTensor[h]
    = \LieDerivative_Xg^{\alpha\beta}\nabla_{(\alpha} h\cdot \nabla_{\beta)}\overline{ h}
    - (\nabla_g\cdot X) g^{\alpha\beta}\nabla_{\alpha} h\cdot \nabla_{\beta}\overline{ h}.
\end{equation*}
  By the definition of the Hamiltonian
  vectorfield in Definition \ref{linear:def:prin-sym-and-hamiltonian}, 
  \begin{equation*}
  \LieDerivative_Xg^{\alpha\beta}\zeta_\alpha\zeta_\beta
    = H_{\PrinSymb}X.
\end{equation*}
  Equation \eqref{linear:eq:divergence-prop:freq-formulation} then follows
  from the definition of the principal symbol of
  $\ScalarWaveOp[g]$ in Definition \ref{linear:def:prin-sym-and-hamiltonian}
  and the definition of $\KCurrent{X,q,m}[h]$ in
  \eqref{linear:eq:J-K-currents:def}.   
\end{proof}

Due to the frequency-dependent nature of trapping in the \KdS{}
family, we are not able to use the divergence property in equation
\eqref{linear:eq:div-them:J-K-currents} directly to prove the desired
Morawetz estimates near the trapped set $\TrappedSet_b$. Instead, we use
an integration-by-parts variant of the divergence property that uses a
pseudo-differential perturbation of the vectorfield multipliers. 

First, we observe that  
\begin{equation}
  \label{linear:eq:LinEinsteinConj-def}
  \LinEinsteinConj_{g_b} := \PseudoSubPFixer\LinEinstein_{g_b}\PseudoSubPFixer^- = \ScalarWaveOp[g_b] + \SubPConjOp_b + \PotentialConjOp_b, \qquad
  \SubPConjOp_b := \PseudoSubPFixer \SubPOp_b\PseudoSubPFixer^{-}
  + \PseudoSubPFixer\left[\ScalarWaveOp[g_b], \PseudoSubPFixer^-\right], \qquad
  \PotentialConjOp_b := \PseudoSubPFixer\PotentialOp_b \PseudoSubPFixer^-,
\end{equation}
where $\PseudoSubPFixer$ is as constructed in \eqref{linear:eq:Q-def}, and
$\PseudoSubPFixer^-$ denotes its parametrix.
In what follows, it will also be convenient to split $\SubPConjOp_b$
into its Hermitian and skew-Hermitian components, given by
\begin{equation}
  \label{linear:eq:SubPConjOp:sym-skew-sym-split-def}
  \SubPConjOp_{b,a} := \frac{1}{2} \left(\SubPConjOp_b - \SubPConjOp_b^*\right),\qquad
  \SubPConjOp_{b,s} := \frac{1}{2}\left(\SubPConjOp_b + \SubPConjOp_b^*\right),
\end{equation}
where the adjoint is taken with respect to the
$L^2(\DomainOfIntegration)$ norm. Observe that both
$\SubPConjOp_b, \SubPConjOp_{b,a}$ and $\SubPConjOp_{b,s}$ belong to
$\Op S^{1} + \Op S^0\p_{\tStar}$, and moreover we can write
\begin{equation}
  \begin{split}
  \SubPConjOp_b = \SubPConjOp_{0}\p_{\tStar} + \SubPConjOp_{1},\qquad
  \SubPConjOp_{b,a} = \SubPConjOp_{0,a}\p_{\tStar} + \SubPConjOp_{1,a},\qquad
  \SubPConjOp_{b,s} = \SubPConjOp_{0,s}\p_{\tStar} + \SubPConjOp_{1,s}. 
  \end{split}  
\end{equation}
We will similarly define
\begin{equation*}
  \label{linear:eq:PotentialConjOp:sym-skew-sym-split-def}
  \PotentialConjOp_{b,a} := \frac{1}{2} \left(\PotentialConjOp_b - \PotentialConjOp_b^*\right),\qquad
  \PotentialConjOp_{b,s} := \frac{1}{2}\left(\PotentialConjOp_b + \PotentialConjOp_b^*\right).
\end{equation*}

\begin{prop}
  \label{linear:prop:div-thm:PDO-modification}
  Let us consider
  \begin{equation}
    \label{linear:eq:div-thm:PDO:Morawetz-Lagrangecorr-g_b-def}
    \MorawetzVF_b := \MorawetzVF_{b_0} + \widetilde{\MorawetzVF},\qquad
    \LagrangeCorr_b:= \LagrangeCorr_{b_0} + a\tilde{\LagrangeCorr},
  \end{equation}
  where $\MorawetzVF_{b_0}$ and $\LagrangeCorr_{b_0}$ are a smooth
  vectorfield and a smooth function respectively, and 
  \begin{equation}
    \label{linear:eq:div-thm:PDO:multipler-def}
    \begin{split}
      \widetilde{\MorawetzVF} &= \widetilde{\MorawetzVF}_0\p_t + \widetilde{\MorawetzVF}_1, \\
      \tilde{\LagrangeCorr} &= \tilde{\LagrangeCorr}_0 + \tilde{\LagrangeCorr}_{-1}\p_t, 
    \end{split}
  \end{equation}
  where
  $\widetilde{\MorawetzVF}_i, \tilde{\LagrangeCorr}_i \in \Op
  S^i(\Sigma)$, and $\widetilde{\MorawetzVF}_0$ and
  $\tilde{\LagrangeCorr}_0$ are self-adjoint with respect to the
  $L^2(\Sigma)$ inner product, and $\widetilde{\MorawetzVF}_1$ and
  $\tilde{\LagrangeCorr}_1$ are skew-adjoint with respect to the
  $L^2(\Sigma)$ inner product.  Moreover, let
  \begin{equation*}    
    \DomainOfIntegration:=[0,\TStar]\times \TrappingNbhd,\qquad
    \TrappingNbhd:=[\mathring{r}_-, \mathring{R}_+]\times \Sphere^2.
  \end{equation*}
  and let $h$ be a function such that for all $\tStar$,
  $h(\tStar,\cdot)$ is compactly supported on
  $\TrappingNbhd$.  Then
  \begin{align}  
    &-\Re\bangle*{\LinEinsteinConj_{g_b}{h}, (\MorawetzVF_b+\LagrangeCorr_b){h}}_{L^2(\DomainOfIntegration)}\notag \\
    ={}& \int_{\DomainOfIntegration}\KCurrent{\MorawetzVF_{b_0},\LagrangeCorr_{b_0},0}[{h}]
         - \Re \bangle*{\SubPConjOp_b[h], (\MorawetzVF_{b_0}+\LagrangeCorr_{b_0}){h}}_{L^2(\DomainOfIntegration)}
         - \Re\bangle*{\PotentialConjOp_b{h}, (\MorawetzVF_{b_0}+\LagrangeCorr_{b_0}){h}}_{L^2(\DomainOfIntegration)}
         + a\Re\KCurrentIbP{\widetilde{\MorawetzVF}, \tilde{\LagrangeCorr}}[{h}] 
         \notag\\
    & + \left.\int_{\Sigma_{\tStar}}\JCurrent{\MorawetzVF_{b_0}, \LagrangeCorr_{b_0},0}[{h}]\cdot n_{\Sigma}\right\vert_{\tStar=0}^{\tStar= \TStar}
      + \evalAt*{\Re\bangle*{g_b(\KillT, n_{\TrappingNbhd})\SubPConjOp_{0}h, \MorawetzVF_{b_0}h}_{L^2(\TrappingNbhd)}}_{\tStar=0}^{\tStar=\TStar}
      + a \Re\left.\JCurrentIbP{\widetilde{\MorawetzVF}, \tilde{\LagrangeCorr}}(\tStar)[{h}]\right\vert_{\tStar=0}^{\tStar= \TStar}
      .\label{linear:eq:ILED-near:combined-divergence-theorem}
  \end{align}
  where
  \begin{align}
    2 \KCurrentIbP{\widetilde{\MorawetzVF}, \tilde{\LagrangeCorr}}[{h}]
    ={}& \bangle*{\left[\widetilde{\MorawetzVF}, \ScalarWaveOp[g_b]\right]{h},{h}}_{L^2(\DomainOfIntegration)}
         + \bangle*{\left[\widetilde{\MorawetzVF}, \SubPConjOp_{b,s}\right]h, h}_{L^2(\DomainOfIntegration)}
         + \bangle*{\left(\widetilde{\MorawetzVF} \SubPConjOp_{b,a} + \SubPConjOp_{b,a} \widetilde{\MorawetzVF}\right){h},{h}}_{L^2(\DomainOfIntegration)}\notag\\
       & - \bangle*{\left(\tilde{\LagrangeCorr}\ScalarWaveOp[g_b]+ \ScalarWaveOp[g_b]\tilde{\LagrangeCorr}\right){h}, {h}}_{L^2(\DomainOfIntegration)}
         - 2\bangle*{\SubPConjOp_{b}{h}, \tilde{\LagrangeCorr} {h}}_{L^2(\DomainOfIntegration)} 
         -  2\bangle*{\PotentialConjOp_{b}{h},\left(\widetilde{\MorawetzVF}+\tilde{\LagrangeCorr}\right){h}}_{L^2(\DomainOfIntegration)},\notag \\                
    \JCurrentIbP{\widetilde{\MorawetzVF}, \tilde{\LagrangeCorr}}(\tStar)[{h}]
    ={}& \bangle*{n_{\TrappingNbhd}{h}, \widetilde{\MorawetzVF}{h}}_{\LTwo(\TrappingNbhd_{\tStar})}
    + \bangle*{\SubPConjOp_b{h}, \widetilde{\MorawetzVF}_0 g_b(\KillT, n_{\TrappingNbhd}){h}}_{\LTwo(\TrappingNbhd_{\tStar})}
    + \bangle*{n_{\TrappingNbhd}{h}, \tilde{\LagrangeCorr}{h}}_{\LTwo(\TrappingNbhd_{\tStar})} \notag\\
       & + \bangle*{g_{b}(\KillT, n_{\TrappingNbhd})\SubPConjOp_{0}h, \widetilde{\MorawetzVF} h }_{L^2(\TrappingNbhd_{\tStar})}
         .\label{linear:eq:ILED-near:J-K-def}
  \end{align}
\end{prop}
\begin{proof}
  See appendix \ref{linear:appendix:prop:div-thm:PDO-modification}.
\end{proof}

We can decompose $\KCurrentIbP{\widetilde{\MorawetzVF}, \tilde{\LagrangeCorr}}[{h}]$ into
its principal, subprincipal, and zeroth order components as follows
\begin{equation}
  \label{linear:eq:ILED-near:ILED-int-by-parts-decomposed}
  \KCurrentIbP{\widetilde{\MorawetzVF}, \tilde{\LagrangeCorr}}[{h}]
  = \KCurrentIbP{\widetilde{\MorawetzVF}, \tilde{\LagrangeCorr}}_{(2)}[{h}]
  + \KCurrentIbP{\widetilde{\MorawetzVF}, \tilde{\LagrangeCorr}}_{(1)}[{h}]
  + \KCurrentIbP{\widetilde{\MorawetzVF}, \tilde{\LagrangeCorr}}_{(0)}[{h}],
\end{equation}
where
\begin{equation}
  \label{linear:eq:ILED-near:ILED-int-by-parts-decomposed:decomposition-def}
  \begin{split}
  \KCurrentIbP{\widetilde{\MorawetzVF}, \tilde{\LagrangeCorr}}_{(2)}[{h}]
  ={}& \frac{1}{2}\bangle*{\left(\left[\widetilde{\MorawetzVF}, \ScalarWaveOp[g_b]\right]
   + \left(\widetilde{\MorawetzVF} \SubPConjOp_{b,a} + \SubPConjOp_{b,a}\widetilde{\MorawetzVF} \right)
   - \left(\ScalarWaveOp[g_b]\tilde{\LagrangeCorr} +
   \tilde{\LagrangeCorr}\ScalarWaveOp[g_b] \right)\right){h}, {h}}_{L^2(\DomainOfIntegration)}\\
  \KCurrentIbP{\widetilde{\MorawetzVF}, \tilde{\LagrangeCorr}}_{(1)}[{h}]
  ={}& -\frac{1}{2}\bangle*{\left[\SubPConjOp_{b,s}, \widetilde{\MorawetzVF}\right]{h},{h}}_{L^2(\DomainOfIntegration)}
   - \bangle*{\SubPConjOp_{b} {h}, \tilde{\LagrangeCorr} {h}}_{L^2(\DomainOfIntegration)}       
   - \bangle*{\PotentialConjOp_b {h}, \widetilde{\MorawetzVF} {h}}_{L^2(\DomainOfIntegration)},\\
  \KCurrentIbP{\widetilde{\MorawetzVF}, \tilde{\LagrangeCorr}}_{(0)}[{h}]
  ={}& -\bangle*{\PotentialConjOp_b {h}, \tilde{\LagrangeCorr} {h}}_{L^2(\DomainOfIntegration)}.
  \end{split}  
\end{equation}
We have a similar decomposition of the boundary terms
\begin{equation*}
  \JCurrentIbP{\widetilde{\MorawetzVF}, \tilde{\LagrangeCorr}}(\tStar)[{h}]
  = \JCurrentIbP{\widetilde{\MorawetzVF}, \tilde{\LagrangeCorr}}_{(2)}(\tStar)[{h}]
  + \JCurrentIbP{\widetilde{\MorawetzVF}, \tilde{\LagrangeCorr}}_{(1)}(\tStar)[{h}],
\end{equation*}
where
\begin{equation}
  \label{linear:eq:ILED-near:ILED-int-by-parts-decomposed:J-decomposition-def}
  \begin{split}
  \JCurrentIbP{\widetilde{\MorawetzVF}, \tilde{\LagrangeCorr}}_{(2)}(\tStar)[{h}]
  :={}& \bangle{n_{\TrappingNbhd}{h},\widetilde{\MorawetzVF}{h}}_{L^2(\TrappingNbhd_{\tStar})},\\
  \JCurrentIbP{\widetilde{\MorawetzVF}, \tilde{\LagrangeCorr}}_{(1)}(\tStar)[{h}]
  :={}& 
   \bangle*{\SubPConjOp_b{h}, \widetilde{\MorawetzVF}_{0}g_b(\KillT, n_{\TrappingNbhd}){h}}_{L^2(\TrappingNbhd_{\tStar})}
   + \bangle*{n_{\TrappingNbhd}{h}, \tilde{\LagrangeCorr}{h}}_{L^2(\TrappingNbhd_{\tStar})}
  + \bangle*{g_{b}(\KillT, n_{\TrappingNbhd})\SubPConjOp_{0,a}h, \widetilde{\MorawetzVF} h }_{L^2(\TrappingNbhd_{\tStar})}.
  \end{split}
\end{equation}

We observe that similar to Lemma
\ref{linear:lemma:divergence-prop:freq-formulation}, we have the following
symbolic representation of the principal bulk term
$\KCurrentIbP{\widetilde{\MorawetzVF}, \tilde{\LagrangeCorr}}_{(2)}$.
\begin{lemma}
  Let $h$ be a complex-valued matrix function
  $h:\StaticRegionWithExtension\to \Complex^D$, $g_b$ be a fixed
  slowly-rotating \KdS{}
  metric and $\PrinSymb_b$ be the principal symbol of
  $\ScalarWaveOp[g]$. Furthermore, let $\MorawetzVF_b$,
  $\LagrangeCorr_b$ be as defined in Proposition
  \ref{linear:prop:div-thm:PDO-modification}. Then
  \begin{equation*}
  \KCurrent{\MorawetzVF_{b_0}, \LagrangeCorr_{b_0},0}[h] + a\KCurrentIbP{\widetilde{\MorawetzVF}, \tilde{\LagrangeCorr}}[h]
\end{equation*}
  has principal symbol given by 
  \begin{equation*}
  \frac{1}{2}H_{\PrinSymb}(\MorawetzSym_{b_0} + a \tilde{\MorawetzSym}) - \SubPConjSym_b(\MorawetzSym_{b_0} + a \tilde{\MorawetzSym}) + \PrinSymb_b(\LagrangeCorrSym_{b_0}+a\tilde{\LagrangeCorrSym}),
\end{equation*}
  where
  \begin{gather*}        
  \MorawetzSym_{b_0} = \ImagUnit f_{b_0}\xi,\qquad
    \tilde{\MorawetzSym} = \tilde{\MorawetzSym}_0\sigma + \tilde{\MorawetzSym}_1,\\    
    \LagrangeCorrSym_{b_0}= \LagrangeCorr_{b_0} - \frac{1}{2}\nabla_{g_{b_0}}\cdot \MorawetzVF_{b_0}, \qquad
    \tilde{\LagrangeCorr} = \tilde{\LagrangeCorrSym}_{-1}\sigma + \tilde{\LagrangeCorrSym}_0.     
\end{gather*}
\end{lemma}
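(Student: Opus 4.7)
The plan is to decompose the quantity $\KCurrent{\MorawetzVF_{b_0}, \LagrangeCorr_{b_0},0}[h] + a\KCurrentIbP{\widetilde{\MorawetzVF}, \tilde{\LagrangeCorr}}[h]$ into its classical and pseudo-differential parts and compute the principal symbol of each separately using the tools developed earlier in the paper. The computation is essentially symbolic bookkeeping, and should not require any new analytical input beyond Lemma \ref{linear:lemma:divergence-prop:freq-formulation}, Proposition \ref{linear:prop:PsiDO:calculus}, and the explicit decomposition in \eqref{linear:eq:ILED-near:ILED-int-by-parts-decomposed:decomposition-def}.

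For the classical part $\KCurrent{\MorawetzVF_{b_0}, \LagrangeCorr_{b_0}, 0}[h]$, I would invoke Lemma \ref{linear:lemma:divergence-prop:freq-formulation} directly. Writing $\MorawetzVF_{b_0} = f_{b_0}\p_r$ so that the associated first-order symbol is $\MorawetzSym_{b_0} = \ImagUnit f_{b_0}\xi$, the lemma yields that the principal symbol equals $\frac{1}{2}H_{\PrinSymb_b}\MorawetzSym_{b_0} + \LagrangeCorrSym_{b_0}\PrinSymb_b$, where $\LagrangeCorrSym_{b_0} = \LagrangeCorr_{b_0} - \frac{1}{2}\nabla_{g_{b_0}}\cdot \MorawetzVF_{b_0}$. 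This accounts for the $a = 0$ terms in the claimed formula (apart from the $-\SubPConjSym_b \MorawetzSym_{b_0}$ contribution, addressed below).

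For the pseudo-differential correction $a\KCurrentIbP{\widetilde{\MorawetzVF}, \tilde{\LagrangeCorr}}[h]$, I would use the decomposition in \eqref{linear:eq:ILED-near:ILED-int-by-parts-decomposed} to isolate the order-two component $\KCurrentIbP_{(2)}$, which alone contributes to the principal symbol. Applying the composition and commutator formulas from Proposition \ref{linear:prop:PsiDO:calculus} to each of its three constituent terms: the commutator $[\widetilde{\MorawetzVF}, \ScalarWaveOp[g_b]]$ has principal symbol $H_{\PrinSymb_b}\tilde{\MorawetzSym}$ via the Poisson bracket formula and Definition \ref{linear:def:prin-sym-and-hamiltonian}; the symmetrized product $\widetilde{\MorawetzVF}\SubPConjOp_{b,a} + \SubPConjOp_{b,a}\widetilde{\MorawetzVF}$ contributes $2\SubPConjSym_b\tilde{\MorawetzSym}$ (with the correct sign encoded in the skew-Hermitian normalization \eqref{linear:eq:SubPConjOp:sym-skew-sym-split-def}); and $\ScalarWaveOp[g_b]\tilde{\LagrangeCorr} + \tilde{\LagrangeCorr}\ScalarWaveOp[g_b]$ contributes $2\PrinSymb_b\tilde{\LagrangeCorrSym}$, where $\tilde{\LagrangeCorrSym}$ incorporates the $-\frac{1}{2}\nabla_{g_b}\cdot\widetilde{\MorawetzVF}$ correction arising from non-commutativity of $\widetilde{\MorawetzVF}$ with smooth functions. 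After the overall factor of $\frac{1}{2}$ in \eqref{linear:eq:ILED-near:ILED-int-by-parts-decomposed:decomposition-def}, these combine to give the $a$-linear contribution $\frac{1}{2}H_{\PrinSymb_b}\tilde{\MorawetzSym} - \SubPConjSym_b\tilde{\MorawetzSym} + \PrinSymb_b\tilde{\LagrangeCorrSym}$.

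The $-\SubPConjSym_b\MorawetzSym_{b_0}$ contribution (without the factor of $a$) appears because $\KCurrentIbP$ is defined relative to the conjugated operator $\LinEinsteinConj_{g_b}$ rather than purely $\ScalarWaveOp[g_b]$: the bulk identity in Proposition \ref{linear:prop:div-thm:PDO-modification} naturally absorbs a cross-term $-\Re\bangle*{\SubPConjOp_b h, \MorawetzVF_{b_0} h}$ whose principal symbol is $-\SubPConjSym_b\MorawetzSym_{b_0}$, producing the symmetric expression in $\MorawetzSym_{b_0}$ and $a\tilde{\MorawetzSym}$ claimed in the conclusion. The main obstacle is sign-tracking: reconciling the Hermitian/skew-Hermitian splitting of $\SubPConjOp_b$ from \eqref{linear:eq:SubPConjOp:sym-skew-sym-split-def}, the sign of the principal symbol of $\ScalarWaveOp[g_b]$ relative to the convention $\PrinSymb_b = g^{\mu\nu}\zeta_\mu\zeta_\nu$ of Definition \ref{linear:def:prin-sym-and-hamiltonian}, and the orientation of the Poisson bracket versus the Hamiltonian vectorfield, all while ensuring the imaginary versus real parts of the multiplier symbol line up so that the final expression is indeed a real-valued Morawetz weight on the characteristic set $\PrinSymb_b = 0$.
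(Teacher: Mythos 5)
Your overall plan — treating the classical part via Lemma \ref{linear:lemma:divergence-prop:freq-formulation} and the pseudo-differential correction via the order-two component in \eqref{linear:eq:ILED-near:ILED-int-by-parts-decomposed:decomposition-def} — is exactly the strategy the paper uses; its proof is a one-liner citing precisely those two ingredients, and your handling of the $a$-linear part (Poisson bracket of $[\widetilde{\MorawetzVF},\ScalarWaveOp[g_b]]$, symmetrized product with $\SubPConjOp_{b,a}$, and the anti-commutator with $\tilde{\LagrangeCorr}$) is the right bookkeeping.

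However, your explanation of the $-\SubPConjSym_b\MorawetzSym_{b_0}$ term (without the factor of $a$) does not hold up. You claim that because $\KCurrentIbP$ is defined relative to the conjugated operator $\LinEinsteinConj_{g_b}$, a cross-term $-\Re\bangle*{\SubPConjOp_b h, \MorawetzVF_{b_0} h}$ is ``naturally absorbed'' into the bulk, producing a symmetric expression in $\MorawetzSym_{b_0}$ and $a\tilde{\MorawetzSym}$. That is not what Proposition \ref{linear:prop:div-thm:PDO-modification} says. In the divergence identity \eqref{linear:eq:ILED-near:combined-divergence-theorem}, the quantity $-\Re\bangle*{\SubPConjOp_b[h], (\MorawetzVF_{b_0}+\LagrangeCorr_{b_0})h}_{L^2(\DomainOfIntegration)}$ appears as a \emph{separate addend} alongside $\KCurrent{\MorawetzVF_{b_0},\LagrangeCorr_{b_0},0}[h]$ and $a\Re\KCurrentIbP{\widetilde{\MorawetzVF},\tilde{\LagrangeCorr}}[h]$; it is emphatically not inside $\KCurrentIbP$. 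Indeed, $\KCurrent{\MorawetzVF_{b_0},\LagrangeCorr_{b_0},0}$ by Lemma \ref{linear:lemma:divergence-prop:freq-formulation} carries no subprincipal contribution at all, and the $\SubPConjOp_{b,a}$-term inside $\KCurrentIbP_{(2)}$ is built solely from the pair $(\widetilde{\MorawetzVF},\tilde{\LagrangeCorr})$ and is multiplied by $a$, so it can only produce $a\,\SubPConjSym_{b,a}\tilde{\MorawetzSym}$ at principal order, never an $a$-independent $\SubPConjSym_b\MorawetzSym_{b_0}$. This is exactly the structure exploited downstream: in Lemma \ref{linear:lemma:ILED-near:bulk-positivity} the term $-2\Re\bangle*{\SubPConjOp_b h,\MorawetzVF_b h}$ is explicitly added to $\KCurrent{\MorawetzVF_{b_0},\LagrangeCorr_{b_0},0}+a\KCurrentIbP{\widetilde{\MorawetzVF},\tilde{\LagrangeCorr}}$, and in Corollary \ref{linear:coro:ILED-near:SX-control} the symbol $\KCurrentIbPSym{\MorawetzSym_b,\LagrangeCorrSym_b}$ is \emph{defined} with the $-\SubPConjSym_{b,a}\MorawetzSym_b$ subtracted by hand. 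So the statement you are trying to justify effectively describes the principal symbol of the full bulk in the pseudo-differential divergence theorem (including the cross-term), not of $\KCurrent+a\KCurrentIbP$ alone; your rationalization of how that cross-term ``appears'' contradicts the explicit form of the identity and should be revised to make the accounting honest.
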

\begin{proof}
  The conclusion follows from the form of
  $\KCurrentIbP{\widetilde{\MorawetzVF},
  \tilde{\LagrangeCorr}}_{(2)}[h]$ in
\eqref{linear:eq:ILED-near:ILED-int-by-parts-decomposed:decomposition-def}
  and Lemma \ref{linear:lemma:divergence-prop:freq-formulation}.
\end{proof}

\subsection{Subprincipal symbol of $\LinEinstein_{g_b}$ at trapping}
\label{linear:sec:subprincipal-symbol}

The presence of a nontrivial non-signed subprincipal operator in
$\LinEinstein_{g_b}$ poses a considerable obstacle in proving the
desired high-frequency Morawetz estimate. Fortunately, the
subprincipal operator of $\LinEinstein_{g_b}$ possesses an appropriate
microlocal smallness at $\TrappedSet_b$ that is enough to close the
desired high-frequency Morawetz estimate in Section
\ref{linear:ILED:near}. This smallness was also critical to the proof in
\cite{hintz_global_2018} (See Theorem 4.4 in
\cite{hintz_global_2018}). We will first specify what we mean when we
refer to the subprincipal operator\footnote{\textit{A priori}, only
  the principal operator of a pseudo-differential operator is
  well-defined. }, and then uncover the desired smallness at the
trapped set.

\subsubsection{The invariant subprincipal operator}

Let $\mathcal{E}$ be a tensor bundle over a manifold $X$. The main
case of interest in this paper will be when $\mathcal{E}$ is the
cotangent bundle $T^*X$ or the bundle of symmetric two-tensors
$S^2T^*X$. The main property we are interested in is the norm of the
skew-adjoint component of the subprincipal operator. For convenience,
we remove the dependence of adjoints on a volume density by tensoring
all bundles with the half-density bundle $\Omega^{\frac{1}{2}}$ over
$X$.

It is well-known that if $P\in \SymClass^m(X,\mathcal{E}\otimes \Omega^{\frac{1}{2}})$ is a sum of homogeneous
symbols
\begin{equation*}
  p \sim \sum p_m,\qquad p_j \in S^j_{hom}(T^*X\backslash 0, \Complex^{N\times N})
\end{equation*}
with $p_j$ being a homogeneous symbol of
degree $j$ valued in complex $N\times N$ matrices, that the
subprincipal symbol 
\begin{equation}
  \label{linear:eq:subprincial-op-def:hormander-def}
  \sigma_{sub}(P) := p_{m-1} - \frac{1}{2\ImagUnit}\sum_j \p_{x_j\xi_j}p_m(x,\xi) \in S^{m-1}_{hom}(T^*X\backslash 0, \Complex^{N\times N})
\end{equation}
is well-defined under changes of coordinates (see
\cite{hormander_analysis_2007} Theorem 18.1.33). However, the
subprincipal symbol as defined above does still depend on the choice
of local trivialization of $\mathcal{E}$. We would like a
frame-independent notion of the subprincipal symbol since this would
allow us to choose convenient local frames in explicit computations.
Fortunately, as shown in \cite{hintz_resonance_2017}, there exists a
modification of \eqref{linear:eq:subprincial-op-def:hormander-def} which is
independent both of the choice of local trivialization and of local
coordinates on $\mathcal{M}$. We review the basics of the construction
here as well as the key features of the invariant subprincipal symbol
that we will use. For a more thorough discussion, we refer the reader
to Section 3.3 and Section 4 of \cite{hintz_resonance_2017}. The
results here on the invariant subprincipal symbol are specialized
cases of more general results in the literature. For the subsequent
results, we list both a reference for the general result, and provide
a proof in the appendix for the sake of completion.

\begin{definition}
  \label{linear:def:subprincipal-op-def:main-def}
  Consider $P\in \Op S^m(X, \mathcal{E}\otimes \Omega^{\frac{1}{2}})$
  with scalar principal symbol $p$. Moreover, let
  $\curlyBrace*{e_k(x)}_{k=1}^N$ be a local frame of $\mathcal{E} $,
  and define the operators
  $P_{jk}\in \Op S^m (X, \Omega^{\frac{1}{2}})$ by
  \begin{equation*}
    P_{jk}\left( \sum_k u_k(x)e_k(x)\right)  = \sum_{jk}P_{jk}(u_k)e_j(x), u_k\in C^\infty(X, \Omega^{\frac{1}{2}}).
  \end{equation*}
  Then we define the \emph{invariant subprincipal
    operator} $S_{sub}(P)\in \Diff^1(T^*X\backslash 0,
  \pi^*\mathcal{E})$  by
  \begin{equation}
    \label{linear:eq:subprincipal-op-def:main-def}
    S_{sub}(P)\left(
      \sum_k q_k(x,\zeta)e_k(x)
    \right):= \sum_{jk}(\sigma_{sub}(P_{jk})q_k)e_j - \ImagUnit\sum_{k}H_{p}q_k e_k.
  \end{equation}
  Observe that $S_{sub}(P)$ is homogeneous of degree $m-1$ with
  respect to dilations in the fibers of $T^*X\backslash 0$, and that
  in a local frame, can be understood as a matrix of first-order
  differential operators. 
\end{definition}

The main property of the invariant subprincipal symbol, and indeed its
very nomenclature comes from the fact that it is invariant under both
changes of coordinates and changes of frame.
\begin{lemma}
  Let $P\in \Op S^m(X, \mathcal{E}\otimes \Omega^{\frac{1}{2}})$ with
  scalar principal symbol $p$, and let $\{e_k(x)\}_{k=1}^N$ and
  $\{e'_k\}_{k=1}^N$ be two local frames of $\mathcal{E}$ such that
  \begin{equation*}
    e_j(x) = C(x)e_j'(x), \qquad C\in C^\infty(U,\End(\mathcal{E})).
  \end{equation*}
  Then, we have that
  \begin{equation*}
    S_{sub}^e(P) = S_{sub}^{e'}(P). 
  \end{equation*}
\end{lemma}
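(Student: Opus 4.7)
The strategy is to reduce the invariance of $S_{sub}(P)$ under change of frame to a single pointwise symbolic identity, and then to verify that identity via two applications of the composition rule for subprincipal symbols. Writing $e_j = C^k_j e'_k$ with $C \in C^\infty(U, \End(\mathcal{E}))$, a section $u = \sum_k q_k e_k$ has $e'$-frame components $\mathbf{q}' = C\mathbf{q}$, and the matrix of operators representing $P$ transforms as $\mathbf{P}' = C\mathbf{P} C^{-1}$. Substituting these into the definition \eqref{linear:eq:subprincipal-op-def:main-def} and re-expressing $S_{sub}^e(P)(u)$ in the $e'$ basis via $H_p(C^{-1}\mathbf{q}') = (H_p C^{-1})\mathbf{q}' + C^{-1}(H_p \mathbf{q}')$, the claim $S_{sub}^e(P) = S_{sub}^{e'}(P)$ collapses to
\begin{equation*}
  \sigma_{sub}(C\mathbf{P} C^{-1}) = C\,\sigma_{sub}(\mathbf{P})\,C^{-1} - \ImagUnit\, C\,(H_p C^{-1}),
\end{equation*}
where the extra term precisely accounts for the failure of $H_p$ to commute past a multiplication by $C$.

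To establish this identity, I would invoke the classical composition rule $\sigma_{sub}(AB) = \sigma_{sub}(A)\sigma(B) + \sigma(A)\sigma_{sub}(B) + \tfrac{1}{2\ImagUnit}\PoissonB{\sigma(A), \sigma(B)}$, which remains valid in the matrix-valued setting provided one of the principal symbols is scalar --- exactly the situation here, since $\sigma(P) = p$ is scalar by hypothesis. Applying this rule first to the product $P \cdot C^{-1}$ (noting that $\sigma_{sub}(C) = \sigma_{sub}(C^{-1}) = 0$ because these are order-zero multiplication operators, and using $\p_x C^{-1} = -C^{-1}(\p_x C)C^{-1}$) gives
\begin{equation*}
  \sigma_{sub}(PC^{-1}) = \sigma_{sub}(\mathbf{P})\,C^{-1} - \tfrac{1}{2\ImagUnit}\,C^{-1}(H_p C)\,C^{-1},
\end{equation*}
and a second application to the composition $C \cdot (PC^{-1})$ contributes the additional Poisson bracket $\tfrac{1}{2\ImagUnit}\PoissonB{C, p C^{-1}} = -\tfrac{1}{2\ImagUnit}(H_p C)C^{-1}$. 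Summing the two contributions and simplifying via $-1/\ImagUnit = \ImagUnit$ yields exactly the required identity.

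The main obstacle is purely bookkeeping: the matrices $C$, $C^{-1}$, $\p_{x_j}C$, and $\sigma_{sub}(\mathbf{P})$ do not mutually commute, so the order of multiplication must be scrupulously preserved throughout. The structural feature that keeps the computation tractable is the scalarity of $p$, which allows the scalar factors $\p_{\xi_j}p$ arising in each Poisson bracket to slide freely past matrix-valued quantities; this is also what allows the composition rule itself (classically stated for scalar symbols) to apply verbatim. If $\sigma(P)$ were genuinely matrix-valued, the composition law would pick up extra non-scalar terms and frame-invariance would fail in general, reflecting the well-known fact that the invariant subprincipal operator is intrinsically well-defined only under a scalarity assumption on the principal symbol.
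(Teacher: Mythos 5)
Your proof is correct and follows the same route as the paper's: both reduce frame-invariance to the conjugation formula $\sigma_{sub}(C\mathbf{P}C^{-1}) = C\sigma_{sub}(\mathbf{P})C^{-1} - \ImagUnit\, C\,(H_p C^{-1})$, which the paper asserts by ``direct computation'' and you establish explicitly via two applications of the scalar-principal-symbol composition rule. Your bookkeeping of the Poisson-bracket terms and the $H_p$-Leibniz step $H_p(C\mathbf{q}) = (H_pC)\mathbf{q} + CH_p\mathbf{q}$ is sound, and your closing remark on why scalarity of $p$ is essential matches the paper's parenthetical observation.
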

\begin{proof}
  We can directly compute
  \begin{equation*}
    \sigma_{sub}(C^{-1}PC)
    = (C^{e'})^{-1}\sigma_{sub}^{e'}C^{e'}
    - \ImagUnit (C^{e'})^{-1}H_{p}(C^{e'}),
  \end{equation*}
  where $C^{e'}$ is the matrix of $C$ in the frame $e'$. Then observe
  that
  \begin{equation*}
    (C^{-1}PC)^{e'} = P^e,\qquad (C^{e'})^{-1}H_p(C^{e'}) = (C^{e'})^{-1}H_p C^{e'} - H_p.
  \end{equation*}
  As a result, we have that
  \begin{equation*}
    \sigma_{sub}^e(P) - \ImagUnit H_p = (C^{e'})^{-1}(\sigma^{e'}_{sub}(P)-\ImagUnit H_p)C^{e},
  \end{equation*}
  which is exactly the desired invariance, where we remark that since
  the principal symbol $p$ of $P$ is a scalar\footnote{diagonal
    when interpreted as a $N\times N$ matrix of symbols.} and is
  well-defined independently of the choice of frame. 
\end{proof}

The main application in this paper will be to calculate
$S_{sub}(\nabla\cdot \nabla)$ the invariant subprincipal operator of
the Laplace-Beltrami operator acting on sections of the tensor bundle
at particular regions in phase space when conjugated by an appropriate
a zero-order operator. To this end, we consider the following
lemma (see Proposition 3.11 in \cite{hintz_resonance_2017} for a more
general statement).

\begin{lemma}[Proposition 3.11 in \cite{hintz_resonance_2017}]
  \label{linear:lemma:subprincipal-op-def:sub-p-basic-props}
  Let $P\in \Op S^2(X, \mathcal{E}\otimes \Omega^{\frac{1}{2}})$ be a
  pseudo-differential operator with real scalar principal symbol.
  Suppose that $Q\in \Op S^{0}(X, \mathcal{E}\otimes
  \Omega^{\frac{1}{2}})$ is an operator acting on
  $\mathcal{E}$-valued half-densities with principal symbol
  $q$. Then
  \begin{equation*}
    \sigma^{1}(\squareBrace{P,Q}) = \squareBrace*{S_{sub}(P), Q},
  \end{equation*}
  and if $Q$ is elliptic with parametric $Q^-$, then
  \begin{equation*}
    S_{sub}(QPQ^-) = q S_{sub}(P)q^{-1}. 
  \end{equation*}
  In addition,
  \begin{equation*}
    \sigma^{1}(P - P^*) = \frac{1}{2\ImagUnit}\left(S_{sub}(P) - S_{sub}(P)^*\right). 
  \end{equation*}
  In particular then,
  \begin{equation*}
    \sigma^{1}\left(QPQ^- - (QPQ^-)^*\right) = \frac{1}{2\ImagUnit}\left(q S_{sub}(P)q^{-1} - (q S_{sub}(P)q^{-1})^*\right). 
  \end{equation*}
\end{lemma}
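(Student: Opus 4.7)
The plan is to prove the four claims by unpacking Definition \ref{linear:def:subprincipal-op-def:main-def} in a local frame and applying the composition and adjoint rules of the standard pseudo-differential calculus (Proposition \ref{linear:prop:PsiDO:calculus}). The guiding principle is that, although $\sigma_{sub}(P_{jk})$ and $H_p$ are individually coordinate/frame dependent, their combination $S_{sub}(P) = -\ImagUnit H_p + \sigma_{sub}(P)$ is invariant, and so transforms covariantly under the natural operations (commutator, conjugation, adjoint). The use of half-densities is essential, as it removes the volume form from the adjoint formulas and ensures the identity $\sigma_{sub}(P^*) = \sigma_{sub}(P)^*$ holds with pointwise matrix adjoint.

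For statement (1) I would fix a local frame $\{e_k\}$ of $\mathcal{E}$ and compute $\sigma^1([P,Q])_{jk}$ entrywise. Writing $P_{jk}$ and $Q_{jk}$ as scalar pseudo-differential operators with principal symbols $p\,\delta_{jk}$ and $q_{jk}$ respectively, the composition formula in Proposition \ref{linear:prop:PsiDO:calculus} produces two contributions: the matrix commutator $[\sigma_{sub}(P),q]_{jk}$ coming from the product of subprincipal symbols with the top-order symbol on the other factor, and the Poisson bracket $\{p, q_{jk}\} = -\ImagUnit H_p(q_{jk})$ coming from the first subprincipal correction in the composition expansion. Adding the two yields exactly $[S_{sub}(P), Q]$, with the action of $S_{sub}(P)$ on the symbol $q \in C^\infty(T^*X\backslash 0, \pi^*\End\mathcal{E})$ interpreted as in Definition \ref{linear:def:subprincipal-op-def:main-def}.

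For statement (2) I would leverage (1) rather than fight through composition formulas directly. The key identity, which holds modulo lower order, is $Q^-[QPQ^-, R]Q = [P, Q^-RQ]$, valid for any zero-order $R$. Taking principal symbols at order one and applying (1) to both sides gives $[S_{sub}(QPQ^-), r] = q\,[S_{sub}(P), q^{-1}rq]\,q^{-1}$, and an algebraic rearrangement identifies the right-hand side with $[q S_{sub}(P) q^{-1}, r]$. Varying $r$ shows that $S_{sub}(QPQ^-)$ and $q S_{sub}(P) q^{-1}$ differ by a first-order operator commuting with every zero-order multiplication, hence by a scalar function of $(x,\zeta)$; since both have the same Hamilton part $-\ImagUnit H_p$ (because $QPQ^-$ has the same scalar principal symbol $p$ as $P$), the scalar must vanish. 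For statement (3), the adjoint identity $\sigma_{sub}(P^*) = \sigma_{sub}(P)^*$ (where $*$ on the right denotes pointwise matrix adjoint, and which follows from the Hörmander correction $-\tfrac{1}{2\ImagUnit}\sum \p_{x_j\xi_j}p_2$ combined with the half-density convention) gives $\sigma^1(P-P^*) = \sigma_{sub}(P) - \sigma_{sub}(P)^*$, and the right-hand side is precisely what $\frac{1}{2\ImagUnit}(S_{sub}(P) - S_{sub}(P)^*)$ reduces to once the Hamilton parts cancel pointwise under the fiberwise adjoint convention of the lemma. Finally, (4) follows immediately by applying (3) to $QPQ^-$ in place of $P$ and substituting (2).

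The main obstacle is likely the careful bookkeeping in part (2): both $S_{sub}(P)$ and the conjugate $q S_{sub}(P) q^{-1}$ mix a first-order differential operator on $T^*X$ with matrix multiplication, and the cross terms that arise when computing the subprincipal symbol of the triple product $Q \cdot P \cdot Q^-$ via the composition formula involve Poisson brackets of $q$ and $q^{-1}$ with $p$ that need to be reassembled into the clean conjugation $q(-\ImagUnit H_p + \sigma_{sub}(P))q^{-1}$. Using the operator-theoretic identity $Q^-[QPQ^-,R]Q = [P,Q^-RQ]$ to reduce to (1) avoids this combinatorial computation entirely and, I expect, will be the cleanest route.
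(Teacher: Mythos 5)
Your argument for the commutator identity is essentially the one the paper gives in the appendix: fix a local frame, expand both sides of $[S_{sub}(P), Q]$ acting on a section, and match the matrix-commutator term against $-\ImagUnit H_p(q)$, identifying the result with what the composition calculus produces for $\sigma^1([P,Q])$. That part is sound. Your treatments of the third and fourth claims are also fine: the half-density convention kills the divergence term in the adjoint of $H_p$, so $(-\ImagUnit H_p)^* = -\ImagUnit H_p$ and the derivation parts cancel, leaving the skew-adjoint part of $\sigma_{sub}(P)$, which is exactly the formula $\sigma_{sub}(P^*) = \sigma_{sub}(P)^*$ repackaged; (4) then follows from (3) applied to $QPQ^-$ combined with (2). (The paper's appendix only writes out the computation for the first claim and defers the remainder to the cited reference, so you are actually filling in more than the paper does.)

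Your route to the conjugation formula is genuinely different from a direct composition computation, and the reduction via $Q^-[QPQ^-,R]Q = [P, Q^-RQ]$ modulo smoothing is a nice idea. However, the final step has a real gap. After applying (1) to both sides and rearranging you correctly conclude that $S_{sub}(QPQ^-) - qS_{sub}(P)q^{-1}$ commutes with every zero-order multiplication, hence is multiplication by a scalar function $c(x,\zeta)\cdot\Identity$. You then assert that "since both have the same Hamilton part $-\ImagUnit H_p$, the scalar must vanish." But agreement of the Hamilton (derivation) parts is precisely the information already used to conclude the difference is zero-order; it provides no additional constraint on $c$. Nothing in the argument as written pins $c$ down. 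To close the gap you need an independent determination of the scalar part: for instance, compute the trace of the two sides (reducing to the scalar case, where one can verify directly by the scalar composition formula that $\sigma_{sub}(qpq^{-1}) = \sigma_{sub}(p) - \frac{1}{\ImagUnit}q^{-1}H_p(q)$, matching $qS_{sub}(p)q^{-1}$), or simply run the composition formula on the triple product $Q\cdot P\cdot Q^-$ and track the sub-leading term explicitly. Either way the ``algebraic bypass'' does not completely avoid the composition bookkeeping you hoped to dodge.
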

\begin{proof}
  See appendix \ref{linear:appendix:lemma:subprincipal-op-def:sub-p-basic-props}.
\end{proof}

As the main application of interest, the invariant subprincipal symbol
of the Laplace-Beltrami operator acting on the bundle
$T_k\mathcal{M} := \otimes^k T^*\mathcal{M}$ of covariant tensors of
rank $k$ has a particularly nice form. 

\begin{lemma}[Proposition 4.1 in \cite{hintz_resonance_2017}]
  \label{linear:lemma:subprincipal-op-def:Laplacian-sub-p-gen-form}
  Let $(\mathcal{M}, g)$ be a smooth manifold equipped with a metric
  tensor\footnote{We do not restrict the signature of the metric.} $g$. 
  Let $\Laplace^{(k)} = \Trace \nabla^2 \in \Diff^2(\mathcal{M},
  T_k\mathcal{M})$ be the Laplace-Beltrami operator on $\mathcal{M}$
  acting on the bundle $T_k\mathcal{M}$.
  Then
  \begin{equation}
    \label{linear:eq:subprincipal-op-def:Laplacian-sub-p-gen-form}
    S_{sub}(\Laplace^{(k)}) = -\ImagUnit \nabla^{\pi^*T_k\mathcal{M}}_{H_G},
  \end{equation}
  where $\pi: T^*\mathcal{M}\backslash 0\to \mathcal{M}$ is the bundle
  projection, and $H_{G}$ is the Hamiltonian vectorfield of $G =
  g^{-1}$. 
\end{lemma}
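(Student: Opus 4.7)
The plan is to reduce the global identity \eqref{linear:eq:subprincipal-op-def:Laplacian-sub-p-gen-form} to a pointwise computation at an arbitrary $x_0 \in \mathcal{M}$, performed in a coordinate system and local trivialization tailored to make the Christoffel symbols vanish at $x_0$. Both sides of the claimed identity are invariantly defined: the left-hand side by Definition \ref{linear:def:subprincipal-op-def:main-def} together with the frame-independence established in Lemma \ref{linear:lemma:subprincipal-op-def:sub-p-basic-props}, and the right-hand side because it is built from the Hamiltonian vector field of $G=g^{-1}$ and the Levi-Civita connection pulled back along $\pi$. So verifying the equality at each $x_0$ in \emph{one} convenient frame suffices.

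First I fix $x_0$ and work in geodesic normal coordinates centered there, together with the coordinate frame $e_I := dx^{i_1}\otimes\cdots\otimes dx^{i_k}$ trivializing $T_k\mathcal{M}$. In these coordinates, $g_{\alpha\beta}(x_0)=\delta_{\alpha\beta}$ (up to sign), $\partial_\gamma g_{\alpha\beta}(x_0)=0$, and hence $\Gamma^\rho_{\alpha\beta}(x_0)=0$. I then expand $\Laplace^{(k)}$ in this frame using the Leibniz rule for $\nabla$ on covariant tensors; schematically, for $s=s_I e_I$,
\begin{equation*}
(\Laplace^{(k)} s)_I = g^{\alpha\beta}\partial_\alpha\partial_\beta s_I
\;-\; 2 g^{\alpha\beta}\sum_j \Gamma^\rho_{\alpha i_j}\partial_\beta s_{I[i_j\to\rho]}
\;-\; g^{\alpha\beta}\Gamma^\gamma_{\alpha\beta}\partial_\gamma s_I
\;+\; \mathcal{R}_I^J s_J,
\end{equation*}
with $\mathcal{R}$ collecting zero-order terms built from $\Gamma\Gamma$ and $\partial\Gamma$. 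The principal symbol of the matrix of scalar operators $(\Laplace^{(k)})_{IJ}$ is thus the scalar $p_m(x,\xi) = -g^{\alpha\beta}\xi_\alpha\xi_\beta$ times $\delta^J_I$, independent of the tensor indices.

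The key simplification is that at $x_0$ in normal coordinates, every first-order Christoffel contribution vanishes, so the matrix-valued subleading coefficient $p_{m-1}$ in the Hörmander expansion is zero at $x_0$. Since also $\partial_{x_j}g^{\alpha\beta}(x_0)=0$, the correction $-\frac{1}{2\ImagUnit}\sum_j \partial_{x_j\xi_j}p_m$ vanishes at $x_0$ as well. Therefore $\sigma_{sub}((\Laplace^{(k)})_{IJ})(x_0,\xi)=0$ for every $\xi$, and Definition \ref{linear:def:subprincipal-op-def:main-def} yields
\begin{equation*}
S_{sub}(\Laplace^{(k)})\Bigl(\sum_I q_I\, e_I\Bigr)\Big|_{x_0}
= -\ImagUnit \sum_I (H_G q_I)(x_0,\xi)\, e_I .
\end{equation*}
In parallel, decomposing $H_G$ into its horizontal and vertical components on $T^*\mathcal{M}\setminus 0$, the vertical part annihilates the pulled-back frame $\pi^*e_I$, while the horizontal part is a lift of $2g^{\alpha\beta}\xi_\beta\partial_\alpha \in T\mathcal{M}$; the action of $\nabla^{\pi^*T_k\mathcal{M}}_{H_G}$ on $\sum_I q_I\, e_I$ therefore produces $(H_G q_I)\,e_I$ plus a Christoffel correction which likewise vanishes at $x_0$. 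Matching the two expressions verifies the identity at $x_0$, and invariance promotes this to the global operator equality.

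The one technical point that needs attention (and is the main obstacle in writing a fully rigorous proof) is the interaction of the Hörmander subprincipal symbol with the half-density tensoring: strictly speaking, coordinate-invariance of $\sigma_{sub}$ requires viewing $\Laplace^{(k)}$ as acting on $T_k\mathcal{M}\otimes\Omega^{1/2}$, so that the coordinate change of $p_{m-1}$ absorbs the contribution from the transformation law of the volume density. Once this is set up so that the ingredients entering $S_{sub}$ transform covariantly, the normal-coordinate calculation above runs without modification, and one does not need to track the explicit Christoffel-symbol bookkeeping that would be required in a direct comparison in general coordinates.
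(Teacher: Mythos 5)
Your proposal is correct and follows essentially the same route as the paper's proof: both reduce to a pointwise check at an arbitrary $x_0$ in geodesic normal coordinates, where the Christoffel symbols and the first derivatives of the metric vanish, so that the Hörmander subprincipal symbol of the matrix components and the Christoffel correction in the pullback connection both vanish at $x_0$, leaving $-\ImagUnit H_G$ acting componentwise on both sides. Your closing remark on the half-density tensoring is already handled by the paper's framing, since Definition \ref{linear:def:subprincipal-op-def:main-def} and Lemma \ref{linear:lemma:subprincipal-op-def:sub-p-basic-props} are stated for operators on $\mathcal{E}\otimes\Omega^{\frac{1}{2}}$, so the normal-coordinate argument goes through exactly as you describe.
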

\begin{proof}
  See appendix \ref{linear:appendix:lemma:subprincipal-op-def:Laplacian-sub-p-gen-form}.
\end{proof}

This in particular gives us a convenient computation regarding the
invariant subprincipal symbol of $\LaplaceAngular^{(k)}$ the
Laplace-Beltrami operator acting on $T_k\Sphere^2$.
\begin{lemma}[Proposition 9.1 in \cite{hintz_global_2018}]
  \label{linear:lemma:subprincipal-op-def:Laplace-on-sphere-computation}
  Let $\pi=\pi_{\Sphere^2}$. Away from the zero section, we split
  \begin{equation*}
    \pi^*T^*\Sphere^2 = E\oplus F, \qquad E(y,\eta) = \Span(\eta),\qquad
    F(y, \eta) = \eta^\perp. 
  \end{equation*}
  Then $\nabla^{\pi^*T_k\Sphere^2}_{H_{\abs*{\eta}^2}}$ is diagonal in
  this splitting in the sense that it preserves both the space of sections
  of $E$ and the space of sections of $F$. 
\end{lemma}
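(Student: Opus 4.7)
The plan is to use the geometric identification of $H_{|\eta|^2}$ with (twice) the geodesic spray on $T^*\Sphere^2$ to reduce the statement to the standard fact that the tautological covector is parallel along geodesic flow. More precisely, I will argue that the section $\eta$, viewed as a section of $\pi^*T^*\Sphere^2$ over a curve in $T^*\Sphere^2$, is parallel for the pullback connection $\nabla^{\pi^*T^*\Sphere^2}$ along any integral curve of $H_{|\eta|^2}$. Once this is established for $k=1$, both cases $k=1$ and higher $k$ follow by metric compatibility and the Leibniz rule.

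First I would set things up. Fix a point $(y_0, \eta_0) \in T^*\Sphere^2 \setminus 0$ and let $(y(s), \eta(s))$ denote the integral curve of $H_{|\eta|^2}$ through $(y_0, \eta_0)$. By definition of the Hamiltonian vectorfield of $|\eta|^2_{\UnitSphereInvMetric}$ with respect to the canonical symplectic form on $T^*\Sphere^2$, the projected curve $y(s)$ is (up to an affine reparametrization by a factor of $2$) a geodesic on $(\Sphere^2, \UnitSphereMetric)$, and $\eta(s)$ is its cotangent velocity, i.e. $\eta(s) = (\dot{y}(s))^\flat$. The standard identity $\nabla_{\dot{y}} \dot{y} = 0$ for the Levi-Civita connection then gives, after raising/lowering indices, $\nabla_{\dot{y}} \eta = 0$ on $\Sphere^2$. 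Translating this statement to the pullback bundle yields
\begin{equation*}
\nabla^{\pi^*T^*\Sphere^2}_{H_{|\eta|^2}} \eta \;=\; 0
\end{equation*}
as a section of $\pi^*T^*\Sphere^2$ along $(y(s),\eta(s))$. Since $E$ is spanned pointwise by $\eta$, and any section of $E$ over a neighborhood in $T^*\Sphere^2 \setminus 0$ is of the form $f(y,\eta)\eta$, the Leibniz rule gives
\begin{equation*}
\nabla^{\pi^*T^*\Sphere^2}_{H_{|\eta|^2}}(f \eta) \;=\; \bigl(H_{|\eta|^2} f\bigr)\eta,
\end{equation*}
which remains a section of $E$. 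This establishes that $E$ is preserved for $k=1$.

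Next I would handle $F$. Metric compatibility of the Levi-Civita connection lifts to the pullback bundle, so for any section $\omega$ of $\pi^*T^*\Sphere^2$,
\begin{equation*}
H_{|\eta|^2}\bigl\langle \omega, \eta \bigr\rangle_{\UnitSphereInvMetric}
\;=\; \bigl\langle \nabla^{\pi^*T^*\Sphere^2}_{H_{|\eta|^2}}\omega, \eta \bigr\rangle_{\UnitSphereInvMetric}
+ \bigl\langle \omega, \nabla^{\pi^*T^*\Sphere^2}_{H_{|\eta|^2}}\eta \bigr\rangle_{\UnitSphereInvMetric}
\;=\; \bigl\langle \nabla^{\pi^*T^*\Sphere^2}_{H_{|\eta|^2}}\omega, \eta \bigr\rangle_{\UnitSphereInvMetric},
\end{equation*}
using the previous step. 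If $\omega$ is a section of $F$, then $\langle \omega, \eta\rangle_{\UnitSphereInvMetric} = 0$, and the left-hand side vanishes identically, hence $\nabla^{\pi^*T^*\Sphere^2}_{H_{|\eta|^2}}\omega$ is again orthogonal to $\eta$, i.e.\ a section of $F$. This handles $k=1$ completely, and I expect no serious obstacle here; the content is just the geometric reformulation.

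Finally, for general $k$, I would upgrade by induction on $k$ using the Leibniz rule. The connection $\nabla^{\pi^*T_k\Sphere^2}$ acts on decomposable sections $\omega_1 \otimes \cdots \otimes \omega_k$ by the standard tensorial Leibniz rule, so if each factor is chosen to be a section of $E$ or $F$, its covariant derivative along $H_{|\eta|^2}$ remains a sum of tensor products of sections of $E$ and $F$ in the same slot-by-slot pattern. The splitting $\pi^*T_k\Sphere^2 = \bigoplus_{I \subset \{1,\ldots,k\}} E_I$ into tensor products indexed by which slots are of type $E$ versus type $F$ is thus preserved by $\nabla^{\pi^*T_k\Sphere^2}_{H_{|\eta|^2}}$. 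In particular each summand labelled either ``pure $E$'' or ``pure $F$'' is preserved, which is the claim. The main (minor) subtlety is bookkeeping the signs/positions of $E$ and $F$ factors; this is purely algebraic once the $k=1$ statement is in hand.
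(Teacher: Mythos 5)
Your proof is correct and follows essentially the same route as the paper: establish $\nabla^{\pi^*T^*\Sphere^2}_{H_{|\eta|^2}}\eta = 0$, then use metric compatibility of the pullback connection to show sections of $F$ are preserved, and conclude for higher $k$ by the Leibniz rule. The only difference is stylistic -- where you invoke the geodesic-spray interpretation of $H_{|\eta|^2}$ and the classical fact that the cotangent velocity of a geodesic is parallel, the paper carries out the equivalent computation directly in geodesic normal coordinates at a point.
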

\begin{proof}
  See appendix \ref{linear:appendix:lemma:subprincipal-op-def:Laplace-on-sphere-computation}. 
\end{proof}



\subsubsection{The subprincipal operator of $\LinEinstein_{g_b}$}
\label{linear:sec:subprincipal-operator}

We are now ready to define the microlocal smallness we require at the
trapped set in proving a Morawetz estimate near trapping in Section
\ref{linear:sec:ILED}. 

\begin{lemma}
  \label{linear:lemma:subprincipal-symbol-control}
  Fix $b_0$ black hole parameters of a subextremal member of the
  \SdS{} family and some $\varepsilon_{\TrappedSet_{b_0}} > 0 $.
  Then, there exists a stationary, elliptic $\PseudoSubPFixer\in
  \SymClass^0$ defined microlocally near $\TrappedSet_{b_0}$, with
  parametrix $\PseudoSubPFixer^-$ such that
  in the $(t, r, \omega;\sigma,\xi, \FreqAngular)$
  Boyer-Lindquist coordinates for \SdS,
  \begin{equation}
    \label{linear:eq:subprincipal-symbol-control}
    \left.\frac{1}{2}|\FreqAngular|^{-1}
        \SubPConjSym_{b_0, a}\right\vert_{\TrappedSet_{b_0}}
    < \varepsilon_{\TrappedSet_{b_0}}, 
  \end{equation}
  where the operators 
  \begin{equation}
    \label{linear:eq:SubPConjOp-b0-def}
    \SubPConjOp_{b_0, a} := \frac{1}{2}\left(\SubPConjOp_{b_0} -\SubPConjOp_{b_0}^*\right),\qquad
    \SubPConjOp_{b_0}:=\PseudoSubPFixer\SubPOp_{b_0}\PseudoSubPFixer^-
    + \PseudoSubPFixer\left[\ScalarWaveOp[g_{b_0}], \PseudoSubPFixer^-\right],
  \end{equation}
  have principal symbols $\SubPConjSym_{b_0,a}$ and
  $\SubPConjSym_{b_0}$ respectively, and the adjoint in
  \eqref{linear:eq:SubPConjOp-b0-def} is taken with respect to the
  $L^2([0,\TStar]\times \Sigma)$ Hermitian inner product. 
\end{lemma}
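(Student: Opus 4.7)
The strategy is to construct $\PseudoSubPFixer$ so that its principal symbol $\PseudoSubPSymb$ encodes parallel transport of the bundle $\pi^{*}S^{2}T^{*}\StaticRegionWithExtension$ along the Hamiltonian flow of $p_{b_0}$ near the trapped set. The key observation, via Lemmas \ref{linear:lemma:subprincipal-op-def:sub-p-basic-props} and \ref{linear:lemma:subprincipal-op-def:Laplacian-sub-p-gen-form}, is that the invariant subprincipal operator of $\LinEinstein_{g_{b_0}} = \tfrac{1}{2}\TensorWaveOp[g_{b_0}] + \mathcal{R}_{g_{b_0}}$ is, up to a zero-order correction from $\mathcal{R}_{g_{b_0}}$,
\begin{equation*}
  S_{sub}(\LinEinstein_{g_{b_0}}) = -\tfrac{\ImagUnit}{2}\,\nabla^{\pi^{*}S^{2}T^{*}\StaticRegionWithExtension}_{H_{p_{b_0}}},
\end{equation*}
that is, the covariant derivative along the Hamiltonian flow. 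Since the Levi-Civita connection of $g_{b_0}$ is metric-compatible, parallel transport along $H_{p_{b_0}}$ is a fiberwise isometry, and in a parallel frame this covariant derivative reduces to the scalar directional derivative $H_{p_{b_0}}$ acting componentwise. Consequently, after conjugation by a symbol $\PseudoSubPSymb$ implementing such a change of frame, the first-order part of the principal symbol of $\SubPConjOp_{b_0}$ becomes a real scalar multiple of the identity matrix, hence Hermitian, so the principal symbol of its skew-adjoint part vanishes on the trapped set.

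Concretely, I would first pick a stationary, axisymmetric codimension-one submanifold $\mathcal{S} \subset T^{*}\StaticRegionWithExtension \setminus 0$ transversal to $H_{p_{b_0}}$ and intersecting each integral curve in a conic neighborhood of $\TrappedSet_{b_0}$ exactly once. Fixing an orthonormal frame $\{e_{j}\}$ for $\pi^{*}S^{2}T^{*}\StaticRegionWithExtension$ over $\mathcal{S}$, I would extend it by parallel transport along $H_{p_{b_0}}$ to an orthonormal frame $\{e_{j}'\}$ on a conic neighborhood $U$ of $\TrappedSet_{b_0}$. The matrix transforming the original trivialization into $\{e_{j}'\}$, cut off smoothly to $U$, defines a stationary symbol $\PseudoSubPSymb \in S^{0}(\Sigma)$ that is unitary at each fiber; $\PseudoSubPFixer$ is then any proper pseudo-differential quantization of $\PseudoSubPSymb$, elliptic on a conic neighborhood of $\TrappedSet_{b_0}$ and admitting a parametrix $\PseudoSubPFixer^{-}$ there. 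Stationarity and axisymmetry of $\PseudoSubPSymb$ follow from the fact that $H_{p_{b_0}}$ commutes with $\KillT$ and $\KillPhi$, so the parallel-transport construction descends to stationary, axisymmetric data.

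With such $\PseudoSubPSymb$, Lemma \ref{linear:lemma:subprincipal-op-def:sub-p-basic-props} yields
\begin{equation*}
  \sigma^{1}\bigl(\SubPConjOp_{b_0} - \SubPConjOp_{b_0}^{*}\bigr) = \frac{1}{\ImagUnit}\Bigl(\PseudoSubPSymb\, S_{sub}(\LinEinstein_{g_{b_0}})\,\PseudoSubPSymb^{-1} - \bigl(\PseudoSubPSymb\, S_{sub}(\LinEinstein_{g_{b_0}})\,\PseudoSubPSymb^{-1}\bigr)^{*}\Bigr),
\end{equation*}
and in the parallel frame the bracketed expression reduces, at the first-order symbolic level, to the skew-Hermitian part of $-\tfrac{\ImagUnit}{2}H_{p_{b_0}}\cdot \Identity$, which vanishes on $\TrappedSet_{b_0}$. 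Dividing by $|\FreqAngular|$ and restricting to the trapped set then leaves only order-zero residuals coming from $\mathcal{R}_{g_{b_0}}$ and lower-order commutator terms, which can be made smaller than any prescribed $\varepsilon_{\TrappedSet_{b_0}} > 0$ by absorbing further symbolic corrections into $\PseudoSubPSymb$ if necessary. The main obstacle is guaranteeing that the parallel-transported frame is smooth and single-valued on a full conic neighborhood of $\TrappedSet_{b_0}$ rather than merely along individual orbits of $H_{p_{b_0}}$; this is handled by exploiting the normal hyperbolicity of the trapped set established in Lemma \ref{linear:lemma:trapping:SdS}, together with the stationarity and spherical symmetry of $g_{b_0}$, which together guarantee that any stationary, axisymmetric transversal flows out to a genuine neighborhood of the trapped set without periodic obstructions.
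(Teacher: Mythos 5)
Your central idea — using Lemma \ref{linear:lemma:subprincipal-op-def:Laplacian-sub-p-gen-form} to identify $S_{sub}(\TensorWaveOp[g_{b_0}])$ with $-\ImagUnit\nabla^{\pi^*S^2T^*\StaticRegionWithExtension}_{H_{p_{b_0}}}$ and then passing to a frame that is parallel along the flow — is a natural-looking strategy, but it fails precisely at the step you flag as a worry and then wave away. After quotienting by the $\partial_t$-translation (which is forced on you by the stationarity requirement in the lemma), the Hamiltonian flow restricted to $\TrappedSet_{b_0}$ is periodic: the trapped null bicharacteristics project to great circles on the photon sphere. A parallel frame along a periodic orbit is single-valued only if the monodromy of the connection around that orbit is trivial. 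The explicit computation in the paper's appendix shows that the off-diagonal (``connection coefficient'') part of $S_{sub}$ at $\TrappedSet_{b_0}$, in the horizon-adapted refined splitting, is a \emph{nonzero nilpotent} matrix $\SubPSym_{b_0,\TrappedSet_{b_0}}$ — so the monodromy over one period $T$ is $\exp(-\ImagUnit \SubPSym_{b_0,\TrappedSet_{b_0}}T)$, a nontrivial unipotent matrix. The parallel-transported frame therefore returns after one period rotated by a genuinely nontrivial factor and cannot be glued into a smooth, stationary, single-valued symbol on a conic neighborhood of $\TrappedSet_{b_0}$. Normal hyperbolicity is irrelevant here: it governs the dynamics transverse to $\TrappedSet_{b_0}$, not the holonomy of the bundle connection along the orbits inside $\TrappedSet_{b_0}$.

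The gap is also visible from the shape of the conclusion. Your construction, if it worked, would make the skew-Hermitian part of $\SubPConjSym_{b_0}$ vanish \emph{exactly} at $\TrappedSet_{b_0}$, which is strictly stronger than what the lemma asserts. The lemma only claims $\varepsilon_{\TrappedSet_{b_0}}$-smallness, and the $\PseudoSubPFixer_{b_0}$ constructed in the paper has entries scaling like $\varepsilon_{\TrappedSet_{b_0}}^{-4}$ — it necessarily degenerates as $\varepsilon_{\TrappedSet_{b_0}}\to 0$. That blow-up is not an artifact: the paper's conjugator is a constant-coefficient change of basis to a Jordan-type frame followed by a diagonal rescaling $\mathrm{diag}(1,\lambda,\lambda^2,\dots)$, which exploits the nilpotency of $\SubPSym_{b_0,\TrappedSet_{b_0}}$ to shrink its entries by any prescribed factor while never killing them outright. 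That is the sharpest thing one can do in the presence of nontrivial unipotent holonomy, and it is the mechanism your proposal is missing. To salvage your approach you would have to forgo full trivialization, instead constructing a frame that trivializes the connection up to an $\varepsilon$-small remainder; but carrying that out amounts to the nilpotent Jordan-form scaling, so you would be reproducing the paper's argument rather than replacing it.
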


\begin{proof}
  See appendix \ref{linear:appendix:lemma:subprincipal-symbol-control}. 
\end{proof}

\begin{remark}
  In practice we will choose
  $\varepsilon_{\TrappedSet_{b_0}} < \SpectralGap_{b_0}$, so
  that the spectral gap is strong enough to overcome any potentially
  harmful contribution arising from the subprincipal component of the
  operator.
\end{remark}

This smallness at the trapped set in particular implies the following
convenient property of the subprincipal operator in a microlocal
neighborhood of the trapped set $\TrappedSet_b$.
We start with a decomposition lemma that will prove critical in what
follows.
\begin{lemma}
  \label{linear:lemma:ILED-near:s-decomp}
  Fix some $\delta_0>0$, and let $\SubPConjSym_b$ denote the principal
  symbol of the subprincipal operator of $\LinEinstein_{g_b}$
  conjugated by $\PseudoSubPFixer$, the operator in Lemma
  \ref{linear:lemma:subprincipal-symbol-control}. Then, there exists a choice
  of $a$, $\varepsilon_{\TrappedSet_{b_0}}$, $\delta_r$ and
  $\delta_\zeta$ sufficiently small so that 
  on $\TrappingNbhd$ and the support
  of $\mathring{\chi}_{\zeta}$ as defined in \eqref{linear:eq:freq-cutoff-construction}, 
  \begin{equation*}
    \SubPConjSym_{b,a} \in \delta_0\left( S^1(\Sigma) + \sigma S^0(\Sigma)\right).
  \end{equation*}
\end{lemma}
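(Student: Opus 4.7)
The plan is to promote the pointwise bound on $\SubPConjSym_{b_0,a}|_{\TrappedSet_{b_0}}$ from Lemma~\ref{linear:lemma:subprincipal-symbol-control} into the structural decomposition required in the microlocal neighborhood $\TrappingNbhd \cap \supp\mathring{\chi}_\zeta$, and then perturb in the black hole parameter. First I would decompose
\[
\SubPConjSym_{b_0,a}(r,\theta;\sigma,\xi,\eta) \;=\; \sigma\, s_0(r,\theta;\xi,\eta) \;+\; s_1(r,\theta;\xi,\eta),
\]
with $s_0\in S^0(\Sigma)$ and $s_1\in S^1(\Sigma)$, following the splitting $\SubPConjOp_b = \SubPConjOp_0\partial_{\tStar}+\SubPConjOp_1$ already recorded in the setup of Section~\ref{linear:sec:int-by-parts-arg}. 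The target claim $\SubPConjSym_{b,a}\in\delta_0(S^1(\Sigma)+\sigma S^0(\Sigma))$ is then the statement that both $s_0$ and $s_1$ (for the $b$-dependent analogues) can be made $\delta_0$-small on the specified set.

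Next I would extract separate smallness of $s_0$ and $s_1$ on the slice $\{r=3M,\ \xi=0\}$ from Lemma~\ref{linear:lemma:subprincipal-symbol-control} using the two-sheeted nature of $\TrappedSet_{b_0}$. By Lemma~\ref{linear:lemma:trapping:SdS}, the constraint $\PrinSymb_{b_0}=0$ at $r=3M,\xi=0$ forces $\sigma=\pm c(\theta)|\eta|$ for some $c(\theta)>0$ bounded away from zero. Applying the bound $|\eta|^{-1}|\SubPConjSym_{b_0,a}|<2\varepsilon_{\TrappedSet_{b_0}}$ on both branches and combining them via the triangle inequality yields
\[
|s_0|\lesssim \varepsilon_{\TrappedSet_{b_0}},\qquad |\eta|^{-1}|s_1|\lesssim \varepsilon_{\TrappedSet_{b_0}}\quad\text{on }\{r=3M,\ \xi=0\}.
\]
Since $s_0,s_1$ are smooth in $(r,\xi)$ and homogeneous in $(\xi,\eta)$, uniform continuity propagates these bounds to the open neighborhood $\{|r-3M|<\delta_r,\ |\xi|^2/|\eta|^2<2\delta_\zeta\}$ up to an error controlled by the moduli of continuity, which can be absorbed by choosing $\delta_r,\delta_\zeta$ small enough.

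Finally, to pass from $g_{b_0}$ to a nearby $g_b$, I would use the smooth dependence of the \KdS{} metric coefficients on $b=(M,a)$, which implies that $\SubPConjOp_b - \SubPConjOp_{b_0}$ is a first-order operator whose principal symbol is $O(a)$; this contribution splits naturally as $\sigma O(a) S^0 + O(a) S^1$, and taking $a$ sufficiently small in terms of $\delta_0$ and $\varepsilon_{\TrappedSet_{b_0}}$ keeps the combined estimate below the $\delta_0$ threshold. The main obstacle I anticipate is the separation-into-two-branches step: it relies on $\TrappedSet_{b_0}$ containing both signs of $\sigma/|\eta|$, and on $c(\theta)$ being bounded uniformly below on the photon sphere. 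A secondary point is that the perturbation $\SubPConjSym_{b,a}-\SubPConjSym_{b_0,a}$ must split cleanly into $\sigma$-linear and $\sigma$-independent parts of sizes $O(a)S^0$ and $O(a)S^1$ respectively, which is automatic from the explicit form of $g_b$ in \eqref{linear:eq:KdS-metric:static} but should be verified for the conjugation term $\PseudoSubPFixer[\ScalarWaveOp[g_b],\PseudoSubPFixer^-]$ as well.
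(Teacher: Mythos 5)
Your proposal is correct and proceeds by the same basic mechanism the paper invokes in its one-sentence proof---smallness at $\TrappedSet_{b_0}$ from Lemma~\ref{linear:lemma:subprincipal-symbol-control}, propagated to the microlocal neighborhood by continuity and then to nearby $b$ by smooth dependence---but you supply the key intermediate step that the paper's proof leaves implicit. The conclusion requires \emph{separate} smallness of the $\sigma$-independent piece $s_1\in S^1(\Sigma)$ and the $\sigma$-coefficient $s_0\in S^0(\Sigma)$, not merely pointwise smallness of $\SubPConjSym_{b_0,a}$ on $\TrappedSet_{b_0}$; this stronger form is genuinely used downstream (for instance in Step~1 of the proof of Lemma~\ref{linear:lemma:ILED-near:LoT-control}, where $\norm*{\SubPConjOp_{b,a}^{(0)}}_{L^2\to L^2}\lesssim\delta$ is needed). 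Your device of evaluating $\pm c|\FreqAngular|\,s_0+s_1$ on the two sheets $\sigma=\pm c|\FreqAngular|$ of the characteristic variety at $r=3M$, $\xi=0$ and solving the resulting $2\times 2$ linear system is exactly the right way to extract this separation, and it works precisely because $s_0,s_1$ are $\sigma$-independent symbols on $\Sigma$ while both signs of $\sigma/|\FreqAngular|$ lie in $\TrappedSet_{b_0}$. Two small remarks: on $\SdS$ the constant $c=\sqrt{\mu_{b_0}(3M)}/(3M)$ is $\theta$-independent and strictly positive by the subextremality condition~\eqref{linear:eq:SdS:non-degeneracy-condition}, so your worry about uniform lower bounds on $c(\theta)$ is automatic; and your caveat about the conjugation term does resolve, since $\PseudoSubPFixer[\ScalarWaveOp[g_b],\PseudoSubPFixer^-]-\PseudoSubPFixer[\ScalarWaveOp[g_{b_0}],\PseudoSubPFixer^-]=\PseudoSubPFixer[\ScalarWaveOp[g_b]-\ScalarWaveOp[g_{b_0}],\PseudoSubPFixer^-]$ is a first-order operator with $O(a)$ coefficients (and the paper even observes that $\PseudoSubPFixer^-$ commutes with $\ScalarWaveOp[g_{b_0}]$ near $\TrappedSet_{b_0}$), so the $O(a)$ splitting goes through for the commutator contribution as well.
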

\begin{proof}
  This follows directly by perturbing $\SubPConjSym_b$ and the smoothness
  of the trapped set on $a$.   
\end{proof}

\section{Morawetz estimates}
\label{linear:sec:ILED}

Recall from the discussion in Section \ref{linear:sec:QNM} that the quasinormal modes
(resonances) represent linear obstacles to decay. As a first step in
eliminating these linear obstacles to decay, we will prove the
existence of large, quasinormal-mode-free regions using a resolvent
estimate. 

Resolvent estimates are deeply connected to integrated local energy
decay estimates, and the primary geometric obstruction to proving
either in the current setting is the presence of \textit{trapped null
  geodesics}. Recall that we proved the energy estimates of Section
\ref{linear:sec:energy-estimates} for general strongly hyperbolic operators
without relying on any particular structure on the trapped set.
However, in this section, to deal with the trapped null geodesics, the
structure of $\LinEinstein_{g_b}$ at the trapped set will play a
critical role. It is well known that due to the presence of trapped
null geodesics, one does not expect to derive a full Morawetz
estimate, but instead one that loses derivatives at the trapped set
(see \cite{andersson_hidden_2015, dafermos_decay_2010,
  dafermos_decay_2016, klainerman_global_2020,
  hintz_non-trapping_2014, sbierski_characterisation_2015}).

To capture the loss of derivatives at the trapped set, we need to
define a new energy norm. This new energy norm will reflect that there
are three different regions of $\Sigma$: the redshift region, the
non-trapping region, and the trapping region. This division of
$\Sigma$ corresponds to the fact that the geometric difficulties of
trapping and superradiance are separated in physical space. On both
the redshift and the non-trapping region, we will prove the desired
Morawetz estimate using only physical space methods. On the trapping
region, the frequency-dependent nature of trapping will lead us to use
a pseudo-differentially modified divergence theorem in order to prove
the desired Morawetz estimate (compare with similar work in the
slowly-rotating Kerr case \cite{tataru_local_2010}).

We begin by defining some auxiliary cutoff functions.  To this end, for
fixed $b\in \BHParamNbhd$, let us define
\begin{equation*} 
  r_{\EventHorizonFuture} < \breve{r}_{-}
  < r_{\RedShift, \EventHorizonFuture}
  <r_{0} < \mathring{r}_{-} < \breve{r}_{+}
  <3M < \breve{R}_{-} < \mathring{R}_{ +}
  < R_{0} < R_{\RedShift,\EventHorizonFuture} 
  < \breve{R}_{+} < r_{\CosmologicalHorizonFuture}.
\end{equation*}
We now define the smooth physical cut-off functions
\begin{equation}
  \label{linear:eq:ILED-combine:cutoff-function-list}
  \dot{\chi}(r) := \chi_{\EventHorizonFuture}(r) + \chi_{\CosmologicalHorizonFuture}(r),\quad
  \breve{\chi}(r) := \breve{\chi}_-(r) + \breve{\chi}_+(r), \quad
  \mathring{\chi}(r),
\end{equation}
such that
\begin{equation}
  \label{linear:eq:ILED-combine:cutoff-function-definitions}
  \begin{split}
    \chi_{\EventHorizonFuture}(r)&=
    \begin{cases}
      1&r\in [r_{\EventHorizonFuture}, r_{\RedShift, \EventHorizonFuture}]\\
      0&r\in [r_{0}, r_{\CosmologicalHorizonFuture}]
    \end{cases},\\
    \breve{\chi}_{ -}(r)&=
    \begin{cases}
      1&r\in [r_{\EventHorizonFuture},\breve{r}_+]\\
      0&r\in (3M-\epsilon,r_{\CosmologicalHorizonFuture}]
    \end{cases},\\
    \mathring{\chi}(r)&=
    \begin{cases}
      1&r\in [\mathring{r}_{-}, \mathring{R}_{+}]\\
      0&r\not\in [\mathring{r}_{ -} - \epsilon, \mathring{R}_{ +}+\epsilon]
    \end{cases},\\
    \breve{\chi}_{+}(r)&=
    \begin{cases}
      1&r\in [\breve{R}_{-}, r_{\CosmologicalHorizonFuture}]\\
      0&r\in [r_{\EventHorizonFuture}, 3M+\epsilon) 
    \end{cases},\\
    \chi_{\CosmologicalHorizonFuture}(r)&=
    \begin{cases}
      1&r\in [R_{\RedShift, \CosmologicalHorizonFuture}, r_{\CosmologicalHorizonFuture}]\\ 
      0&r\in [r_{\EventHorizonFuture}, R_{0}]
    \end{cases}.
  \end{split}
\end{equation}
In what follows, we will denote $[r_{\EventHorizonFuture},
r_{\RedShift, \EventHorizonFuture})\bigcup (R_{\RedShift,
  \CosmologicalHorizonFuture}, r_{\CosmologicalHorizonFuture}]$ the
redshift region, $(\breve{r}_-, \breve{r}_+)\bigcup (\breve{R}_-,
\breve{R}_+)$ the nontrapping region, and $(\mathring{r}_{-},
\mathring{R}_{+})$ the trapping region.

\begin{definition}
  \label{linear:def:LE-norm}
  We define the \emph{Morawetz energy norms}
  $\MorawetzNorm^k(\DomainOfIntegration)$ and
  $\InducedMorawetzNorm^k(\Sigma)$ subsequently in Definition
  \ref{linear:def:Morawetz-norm} as $H^k$ norms with a degeneracy at the top
  level of derivatives.

  We define the \emph{local energy norm} by
  \begin{equation*}  
    \norm{h}_{LE^1(\DomainOfIntegration)}^2
    :={}\norm*{\dot{\chi}(r)h}_{H^1(\DomainOfIntegration)}^2
    + \norm*{\breve{\chi}(r)h}_{H^1(\DomainOfIntegration)}^2
    + \norm*{\mathring{\chi}(r)h}_{\MorawetzNorm(\DomainOfIntegration)}^2.
  \end{equation*}
  We also have its Laplace-transformed equivalent for a function $u$ on
  the spacelike slice $\Sigma$:
  \begin{equation*}
    \begin{split}
      \norm{u}_{\CombinedHk{1}(\Sigma)}^2:={}&
      \norm*{\dot{\chi}(r)u}_{\InducedHk{1}_\sigma(\Sigma)}^2
      + \norm*{\breve{\chi}(r)u}_{\InducedHk{1}_\sigma(\Sigma)}^2
      + \norm*{\mathring{\chi}(r)u}_{\InducedMorawetzNorm(\Sigma)}^2.
    \end{split}
  \end{equation*}
  We can likewise define the higher-order $LE^k$ spaces:
  \begin{align*}
    LE^k(\DomainOfIntegration)&:=\curlyBrace*{h: \RedShiftK^\alpha h\in LE^1(\DomainOfIntegration), \abs*{\alpha}\le k-1},\\
    \CombinedHk{k}(\Sigma) &:=\curlyBrace*{h: \RedShiftK^\alpha h\in \CombinedHk{1}(\Sigma), \abs*{\alpha}\le k-1}.
  \end{align*}
\end{definition}

Using this combined local energy norm, we are now ready to state the main
theorem of this section. 
\begin{theorem}
  \label{linear:thm:resolvent-estimate:main}
  Let $g_b$ be a fixed slowly-rotating \KdS{} background, and 
  $u\in H^k(\Sigma)$. Then, 
  for $k_0$ as defined in \eqref{linear:eq:threshold-reg-def}, there exists
  some $\SpectralGap>0$, $C_0>0$ such that for $k>k_0$, 
  \begin{equation}
    \label{linear:eq:resolvent-estimate:main}
    \norm{u}_{\CombinedHk{k}(\Sigma)}
    \lesssim \norm*{\widehat{\LinEinstein}_{g_b}(\sigma)u}_{\InducedHk{k-1}_\sigma(\Sigma)},
    \qquad
    \text{if }
    \Im\sigma=-\SpectralGap,\text{ or }
    \Im\sigma>-\SpectralGap, |\sigma|\ge C_0,
  \end{equation}
  for all $u$ where the norms on both sides are finite.
  
  Moreover, by adjusting the value of $\SpectralGap$ as necessary, the
  only poles of $\widehat{\LinEinstein}_{g_b}(\sigma)^{-1}$ which satisfy
  $\Im \sigma> -\SpectralGap$ in fact also satisfy $\Im \sigma \ge 0$.
\end{theorem}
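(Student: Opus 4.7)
The plan is to prove the resolvent estimate by decomposing $\Sigma$ into the redshift region, the non-trapping region, and the trapping region, proving a Morawetz-type estimate on each, and then gluing everything together via the cut-off functions $\dot\chi, \breve\chi, \mathring\chi$ from~\eqref{linear:eq:ILED-combine:cutoff-function-list}. The whole argument is carried out at the Laplace-transformed level, since the spatial slice norm $\CombinedHk{k}(\Sigma)$ is already built to be insensitive to the loss of one derivative at $\TrappedSet_b$. I will work first with $k=1$; higher regularity then follows from the commutator structure of Theorem~\ref{linear:thm:redshift-commutation:main} applied to $\LinEinsteinConj_{g_b}$, since commuting with the $\RedShiftK_i$ only improves $\SHorizonControl{\bL}[\Horizon]$, which preserves the threshold condition~\eqref{linear:eq:threshold-reg-def}.

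\textbf{Step 1 (redshift region).} In a neighborhood of $\EventHorizonFuture\cup\CosmologicalHorizonFuture$, I apply the redshift multiplier $\RedShiftN$ (Section~\ref{linear:sec:redshift-multiplier}) to $\dot\chi\, u$. Laplace-transforming the estimate~\eqref{linear:eq:redshift:morawetz-est-aux} of Theorem~\ref{linear:thm:redshift-energy-estimate}\ref{linear:item:RedShiftEstimate:four}, together with the threshold condition $\SurfaceGravity_\Horizon(2k_0+\tfrac12) > 2\SHorizonControl{\LinEinstein}[\Horizon]$, produces a positive definite bulk term controlling $\norm{\dot\chi u}_{\InducedHk{1}_\sigma(\Sigma)}^2$ modulo an $\InducedLTwo$ error supported away from the horizons and modulo $\norm{\widehat{\LinEinstein}_{g_b}(\sigma)u}_{\InducedLTwo(\Sigma)}^2$.

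\textbf{Step 2 (non-trapping region).} On the two intermediate annuli $r\in (\breve r_-,\breve r_+)\cup(\breve R_-,\breve R_+)$, away from both the horizons and $r=3M$, the Hamiltonian vectorfield $H_{\PrinSymb_b}$ does not close trapped orbits. I build a classical Morawetz vectorfield $\MorawetzVF_{b_0} = f_{b_0}(r)\,\HprVF$ with an associated Lagrangian corrector $\LagrangeCorr_{b_0}$, chosen so that on the \SdS{} background the principal bulk $\tfrac12 H_{\PrinSymb_{b_0}}\MorawetzVF_{b_0} + \LagrangeCorr_{b_0}\PrinSymb_{b_0}$ is coercive in $\xi$ and in $|\eta|^2/r^2$. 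Applying~\eqref{linear:eq:div-them:J-K-currents} to $\breve\chi\, u$ and perturbing in $a$ via the $\TFixer$-estimate (Corollary~\ref{linear:coro:Killing-estimate-with-gamma}) to absorb the superradiant $O(a)$ errors yields a non-degenerate $\InducedHk{1}_\sigma$-bound on $\breve\chi u$, again modulo terms supported in the trapping region and modulo $\widehat{\LinEinstein}_{g_b}(\sigma)u$.

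\textbf{Step 3 (trapping region, the main obstacle).} This is the heart of the proof and the only place where microlocal analysis enters. On $\TrappingNbhd$ I conjugate the operator to $\LinEinsteinConj_{g_b}$ as in~\eqref{linear:eq:LinEinsteinConj-def}, using the elliptic $\PseudoSubPFixer$ of Lemma~\ref{linear:lemma:subprincipal-symbol-control} to achieve the smallness $\tfrac12|\eta|^{-1}\SubPConjSym_{b_0,a}|_{\TrappedSet_{b_0}} < \varepsilon_{\TrappedSet_{b_0}}$. I then apply Proposition~\ref{linear:prop:div-thm:PDO-modification} with a multiplier of the form $\MorawetzVF_b = \MorawetzVF_{b_0} + \widetilde{\MorawetzVF}$, where $\widetilde{\MorawetzVF} \in \mathrm{Op}\,S^1(\Sigma) + \mathrm{Op}\,S^0(\Sigma)\,\p_t$ is a pseudo-differential perturbation tracking the frequency-dependent location $r = \rTrapping_b(\sigma,\FreqAngular)$ of the \KdS{} trapped set (Lemma~\ref{linear:lemma:trapping:KdS}). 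The principal symbol of $\widetilde{\MorawetzVF}$ is chosen so that $\tfrac12 H_{\PrinSymb_b}(\MorawetzSym_{b_0} + a\tilde{\MorawetzSym}) + (\LagrangeCorrSym_{b_0}+a\tilde{\LagrangeCorrSym})\PrinSymb_b$ is non-negative and vanishes exactly (to leading order) at $\TrappedSet_b$, using the instability $\pm H^2_{\PrinSymb_b}r > 0$. The subprincipal contribution $\SubPConjSym_b(\MorawetzSym_{b_0}+a\tilde{\MorawetzSym})$ is then controlled by combining Lemma~\ref{linear:lemma:ILED-near:s-decomp} on the microlocal support of $\mathring\chi_\zeta$ with the ellipticity of $C_\xi(H_{\PrinSymb_b}r)^2 - \PrinSymb_b$ on $\mathrm{supp}\,\breve\chi_\zeta$ from Lemma~\ref{linear:lemma:freq-cutoff-construction}. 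The sharp G\r{a}rding-type inequality of Theorem~\ref{linear:thm:sym-ineq-to-op-est} converts the symbolic positivity into the operator estimate, yielding $\norm{\mathring\chi u}_{\InducedMorawetzNorm(\Sigma)}$ with the degeneracy at trapping encoded in $\MorawetzNorm$. The main difficulty is that $\widetilde{\MorawetzVF}$ has order $1$, so commutators with $\ScalarWaveOp[g_b]$ are order $2$ and must be reabsorbed; this forces the bookkeeping of constants $\COuter, \CInnerNT, C_0$ indicated at the end of Section~\ref{linear:sec:main-theorem}.

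\textbf{Step 4 (gluing and conclusion).} Adding Steps 1--3 with the hierarchy $\COuter \gg 1/\delta_r^2$, $\CInnerNT \gg 1/\delta_\zeta$, $C_0 \gg \COuter,\CInnerNT$, the cross errors produced by each cutoff are absorbed into the positive bulk of the neighboring region. The residual $\InducedLTwo$ error on the right is handled by the high-frequency regime: for $|\sigma|\ge C_0$ large, $\norm{u}_{\InducedLTwo(\Sigma)} \lesssim |\sigma|^{-1}\norm{u}_{\InducedHk{1}_\sigma(\Sigma)}$, so a lower-order $\InducedLTwo$ bound is strictly subcritical. The $\TFixer$-estimates of Corollary~\ref{linear:coro:Killing-estimate-with-gamma} applied to $e^{-\ImagUnit\sigma\tStar}u$ close the remaining $\KillT$-derivatives needed to upgrade $\InducedHk{1}(\Sigma)$ to $\InducedHk{1}_\sigma(\Sigma)$. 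Commuting $k-1$ times with the $\RedShiftK_i$ via Theorem~\ref{linear:thm:redshift-commutation:main} upgrades the estimate to $k>k_0$, since the threshold~\eqref{linear:eq:threshold-reg-def} is invariant under the subprincipal improvement~\eqref{linear:eq:enhanced-redshift:improvement}. Finally, the last sentence of the theorem follows from the estimate itself: the estimate shows $\widehat{\LinEinstein}_{g_b}(\sigma)$ is invertible on the line $\Im\sigma=-\SpectralGap$ and for $|\sigma|\ge C_0, \Im\sigma>-\SpectralGap$; since by Theorem~\ref{linear:thm:meromorphic:main-A} the resolvent is meromorphic with discrete poles and only finitely many can lie in the compact remaining box $\{-\SpectralGap<\Im\sigma<0,|\sigma|<C_0\}$, a small downward adjustment of $\SpectralGap$ pushes $\Im\sigma=-\SpectralGap$ below all of them, at which point Corollary~\ref{linear:corollary:naive-energy-estimate} combined with the just-proven meromorphy forces these remaining poles to lie in $\Im\sigma\ge 0$.
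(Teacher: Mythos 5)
Your proposal follows the same route as the paper: a three-region decomposition (redshift, non-trapping, trapping) with physical-space Morawetz multipliers in the first two regions, a pseudo-differential modification of the multiplier concentrated at $\TrappedSet_b$ (using $\PseudoSubPFixer$ to make the skew-adjoint subprincipal part small, then building $\widetilde{\MorawetzVF}$ whose symbol vanishes at the frequency-dependent trapped set), and a final gluing of the resolvent estimates with the hierarchy $\COuter \gg 1/\delta_r^2$, $\CInnerNT \gg 1/\delta_\zeta$, $C_0 \gg \COuter,\CInnerNT$; higher $k$ then follows by commuting with the $\RedShiftK_i$ as in Theorem~\ref{linear:thm:redshift-commutation:main}, and the final sentence follows from discreteness of the poles plus the spectral gap.

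Two small misstatements to fix. In Step 2, the $O(a)$ superradiant errors in the non-trapping Morawetz bulk are not absorbed via the $\TFixer$-estimate; they are absorbed directly into the coercive part of $\KCurrent{\MorawetzOuterVF,\LagrangeCorrOuter,0}$ by taking $a$ sufficiently small (Lemma~\ref{linear:lemma:ILED-nontrapping:bulk-positivity}); the $\TFixer$/redshift interplay belongs to the Fredholm argument of Section~\ref{linear:sec:meromorphic-continuation}, not here. Also, the non-trapping vectorfield is $\MorawetzOuterVF = \fOuter(r)\HprVF$, not $\MorawetzVF_{b_0}$; the latter is reserved for the trapping multiplier near $r=3M$. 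Neither affects the validity of the argument.
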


\begin{remark}
  The proof of Theorem \ref{linear:thm:resolvent-estimate:main} relies
  heavily on the fact that $\LinEinstein_{g_b}$ is strongly
  hyperbolic, and the fact that we have good estimates for the scalar
  wave operator on the \SdS{} background $g_{b_0}$. Indeed, the proofs
  in the ensuing section should be thought of as perturbations of the
  arguments that can be used in the $b = b_0 = (M, 0)$ case.
\end{remark}

As alluded to by the construction of the cutoff functions at the
beginning of the section, we divide the proof of Theorem
\ref{linear:thm:resolvent-estimate:main} into three parts, corresponding to
the redshift region, the nontrapping region, and the trapping region.
We prove a resolvent estimate separately for solutions supported in
each of these regions in Sections \ref{linear:sec:ILED:redshift},
\ref{linear:sec:ILED:nontrapping}, \ref{linear:sec:ILED:nontrapping-freq}, and
\ref{linear:ILED:near} before showing that they can be appropriately combined
to yield a resolvent estimate on the whole spacetime in Section
\ref{linear:ILED:full}.

The main idea behind proving Theorem \ref{linear:thm:resolvent-estimate:main}
will be to first prove a Morawetz estimate for functions $h$ with
appropriate support, using well-chosen vectorfield multipliers
$(X,q,m)$ and the divergence theorem in
\eqref{linear:eq:div-them:J-K-currents}. We then pass from the Morawetz
estimate to a resolvent estimate using the following basic outline.
Let us assume that we have chosen $(X,q,m)$ appropriately so as to
arrive at the following inequality
\begin{equation*}
  \evalAt*{\int_{\breve{\Sigma}_{\tStar}}\JCurrent{X,q,m}[h]\cdot n_{\breve{\Sigma}_{\tStar}}}_{\tStar=0}^{\tStar=\TStar}
  + \int_{\DomainOfIntegration}\KCurrent{X,q,m}[h]
  \lesssim \norm*{\LinEinstein_{g_b}h}_{L^2(\DomainOfIntegration)}^2 + \norm*{h}_{L^2(\DomainOfIntegration)}^2,
\end{equation*}
where we will impose conditions on $\JCurrent{X,q,m}[h]$ and
$\KCurrent{X,q,m}[h]$ subsequently. 
Differentiating by $\p_{\tStar}$, we then have that 
\begin{equation}
  \label{linear:eq:ILED-outline:Morawetz-to-resolvent:differentiated-eq}
  \p_{\tStar}\int_{\breve{\Sigma}_{\tStar}}\JCurrent{X,q,m}[h]\cdot n_{\breve{\Sigma}_{\tStar}}
  + \int_{\breve{\Sigma}_{\tStar}}\KCurrent{X,q,m}[h]\,\sqrt{\GInvdtdt}
  \lesssim \norm*{\LinEinstein_{g_b}h}_{\InducedLTwo(\breve{\Sigma}_{\tStar})}^2 + \norm*{h}_{\InducedLTwo(\breve{\Sigma}_{\tStar})}^2. 
\end{equation}
Substituting in $h=e^{-\ImagUnit \sigma \tStar}u(x)$, we see that
\eqref{linear:eq:ILED-outline:Morawetz-to-resolvent:differentiated-eq}
reduces to
\begin{align*}
  &2\Im\sigma\int_{\breve{\Sigma}_{\tStar}}\JCurrent{X,q,m}\left[e^{-\ImagUnit\sigma\tStar}u\right]\cdot n_{\breve{\Sigma}_{\tStar}}
  + \int_{\breve{\Sigma}_{\tStar}}\KCurrent{X,q,m}\left[e^{-\ImagUnit\sigma\tStar}u\right]\,\sqrt{\GInvdtdt}\\
  \lesssim{}& \norm*{\LinEinstein_{g_b}e^{-\ImagUnit\sigma\tStar}u}_{\InducedLTwo(\breve{\Sigma}_{\tStar})}^2 + \norm*{e^{-\ImagUnit\sigma\tStar}u}_{\InducedLTwo(\breve{\Sigma}_{\tStar})}^2. 
\end{align*}
Multiplying both sides eliminates any $\tStar$-dependency in the equation, so
that 
\begin{equation}
  \label{linear:eq:ILED-outline:final-aux}
  2\Im\sigma\int_{\breve{\Sigma}_{\tStar}}\JLaplaceCurrent{X,q,m}\left[u\right]\cdot n_{\breve{\Sigma}_{\tStar}}
  + \int_{\breve{\Sigma}_{\tStar}}\KLaplaceCurrent{X,q,m}\left[u\right]\,\sqrt{\GInvdtdt}
  \lesssim \norm*{\widehat{\LinEinstein}_{g_b}(\sigma)u}_{\InducedLTwo(\breve{\Sigma}_{\tStar})}^2
  + \norm*{u}_{\InducedLTwo(\breve{\Sigma}_{\tStar})}^2.  
\end{equation}
To reduce to a desired resolvent estimate, we proceed in two steps. 
\begin{enumerate}
\item We will choose $(X,q,m)$ so that
  \begin{equation*}
    \KCurrent{X,q,m}[h] \ge 0,
  \end{equation*}
  which implies that 
  \begin{equation}
    \label{linear:eq:ILED-outline:K-positivity}
    \KLaplaceCurrent{X,q,m}\left[u\right] \ge 0, 
  \end{equation}
  and such that for $\GronwallExp\ge \Im\sigma
  \ge -\SpectralGap$,
  \begin{equation}
    \label{linear:eq:ILED-outline:J-K-compare}
    4\abs*{\Im\sigma\JLaplaceCurrent{X,q,m}\left[u\right]\cdot n_{\breve{\Sigma}_{\tStar}}}
    < \KLaplaceCurrent{X,q,m}\left[u\right]\,\sqrt{\GInvdtdt} + \abs*{u}^2
  \end{equation}
  Equations \eqref{linear:eq:ILED-outline:K-positivity} and
  \eqref{linear:eq:ILED-outline:J-K-compare} then imply that the left-hand
  side of \eqref{linear:eq:ILED-outline:final-aux} is positive (up to a
  lower-order term), and we have
  \begin{equation}
    \label{linear:eq:ILED-outline:after-absorbing-boundary}
    \int_{\breve{\Sigma}_{\tStar}}\KLaplaceCurrent{X,q,m}\left[u\right]\,\sqrt{\GInvdtdt}
    \lesssim \norm*{\widehat{\LinEinstein}_{g_b}(\sigma)u}_{\InducedLTwo(\breve{\Sigma}_{\tStar})}^2
    + \norm*{u}_{\InducedLTwo(\breve{\Sigma}_{\tStar})}^2.  
  \end{equation}
\item To absorb the lower-order term
  $\norm*{u}_{L^2(\breve{\Sigma}_{\tStar})}$ on the right-hand side of
  \eqref{linear:eq:ILED-outline:final-aux}, we will use the high-frequency
  condition in the resolvent estimates $\abs*{\sigma}\ge C_0$ for some
  $C_0$ sufficiently large. If $(X,q,m)$ are chosen in such a manner
  that
  \begin{equation*}
    \abs*{\sigma u}^2 \lesssim \KLaplaceCurrent{X,q,m}[u],
  \end{equation*}
  then for sufficiently large $C_0$, we have that the $L^2$ term on
  the right-hand side of
  \eqref{linear:eq:ILED-outline:after-absorbing-boundary} will be absorbed
  into the left-hand side, and we are left with exactly the desired
  resolvent estimate subject to the conditions $\GronwallExp \ge
  \Im\sigma \ge -\SpectralGap$ and $\abs*{\sigma}\ge C_0$ for some
  $C_0$ sufficiently large. 
\end{enumerate}
\begin{remark}
  In the trapping case of Section \ref{linear:sec:ILED-trapping:KdS}, it is
  not as direct to prove the desired resolvent estimate, due to the
  degeneracy at the top level of derivatives in $\KCurrent{X,q,m}[h]$,
  and we will in fact have to prove two separate resolvent estimates,
  one on $\abs*{\Im\sigma}\le \SpectralGap$, and one on $\Im\sigma >
  \frac{\SpectralGap}{2}$, but the core ideas remain the same. 
\end{remark}
\begin{remark}
  The application of Morawetz estimates to functions ${h}
  =e^{-\ImagUnit\sigma\tStar}u$ is the main tool for deriving the
  desired resolvent estimates. The choice ${h}
  =e^{-\ImagUnit\sigma\tStar}u$ allows us to handle the
  boundary terms by directly absorbing them into the bulk. This should
  be contrasted with the typical approach in the vectorfield method,
  which is to use a Killing energy estimate to control the boundary
  terms.   
\end{remark}

\subsection{Redshift region}
\label{linear:sec:ILED:redshift}

\begin{theorem}
  \label{linear:thm:ILED-redshift:main}
  Let $g$ be a slowly rotating \KdS{} metric, and define
  \begin{equation}
    \label{linear:eq:ILED-redshift:redshift-reg-def}
    \RedShiftReg :=
    \Sigma\left([r_{\EventHorizonFuture},r_{\RedShift,\EventHorizonFuture}]\right) \bigcup
    \Sigma\left([r_{\RedShift,\CosmologicalHorizonFuture},r_{\CosmologicalHorizonFuture}]\right).    
  \end{equation}  
  Then there exists a choice of
  $r_{\RedShift, \EventHorizonFuture}, r_{\RedShift,
    \CosmologicalHorizonFuture}$ 
  and some $\SpectralGap, C_0 >0$ such that for $k>k_0$, where $k_0$
  is as defined in \eqref{linear:eq:threshold-reg-def},
  \begin{equation}
    \label{linear:eq:ILED-redshift:main:resolvent-estimate}
    \norm{u}_{\InducedHk{k}_\sigma(\RedShiftReg)}
    \lesssim \norm{\widehat{\LinEinstein}_{g_b}(\sigma)u}_{\InducedHk{k-1}_\sigma(\RedShiftReg)},
    \qquad \text{if }\Im\sigma\ge -\SpectralGap,\text{ and } \abs*{\sigma}\ge C_0.
  \end{equation}
\end{theorem}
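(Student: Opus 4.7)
The plan is to derive the resolvent estimate on the redshift region by applying the redshift multiplier identity of Theorem \ref{linear:thm:redshift-energy-estimate}\ref{linear:item:RedShiftEstimate:four} to $h = e^{-\ImagUnit\sigma \tStar}u$, following the general strategy described at the opening of Section \ref{linear:sec:ILED}. First, choose $r_{\RedShift, \EventHorizonFuture} < r_0$ and $r_{\RedShift, \CosmologicalHorizonFuture} > R_0$ (with $r_0, R_0$ as in Proposition \ref{linear:prop:redshift:N-construction}) so that $\RedShiftReg$ defined in \eqref{linear:eq:ILED-redshift:redshift-reg-def} is strictly contained in the ``redshift-dominated'' region $\RedShiftReg$ of \eqref{linear:eq:redshift-reg:def}. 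Introduce a spatial cutoff $\chi(r)$ which equals $1$ on the former and is supported in the latter, and work with $\chi u$ so that the support hypothesis of Theorem \ref{linear:thm:redshift-energy-estimate}\ref{linear:item:RedShiftEstimate:four} is satisfied.

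First I would establish the estimate at regularity level $k=1$ modulo a threshold condition on $\SHorizonControl{\LinEinstein}[\Horizon]$. Setting $h = e^{-\ImagUnit\sigma\tStar}\chi u$, the left-hand side of \eqref{linear:eq:redshift:morawetz-est-aux} becomes $\p_{\tStar}\RedShiftEnergy(\tStar)[h] = 2\Im\sigma\, \RedShiftEnergy(\tStar)[h]$; multiplying through by $e^{-2\Im\sigma\tStar}$ eliminates all time-dependence and produces
\begin{equation*}
  \Bigl(2\Im\sigma - \max_{\Horizon}\bigl(\SHorizonControl{\LinEinstein}[\Horizon] - \SurfaceGravity_{\Horizon} + \varepsilon_{\RedShiftN} + \epsilon\bigr)\Bigr)\int_{\Sigma} \JLaplaceCurrent{\RedShiftN,0,0}[\chi u]\cdot n_\Sigma
  \lesssim \|\widehat{\LinEinstein}_{g_b}(\sigma)(\chi u)\|^2_{\InducedLTwo(\Sigma)} + \|\chi u\|^2_{\InducedLTwo(\Sigma)}.
\end{equation*}
By \eqref{linear:eq:RedShiftEstimate:one} the left-hand side dominates $\|\chi u\|^2_{\InducedHk{1}_\sigma(\Sigma)}$ provided the coefficient is bounded below by a positive constant. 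Expanding $\widehat{\LinEinstein}(\sigma)(\chi u) = \chi\widehat{\LinEinstein}(\sigma) u + [\widehat{\LinEinstein}(\sigma),\chi]u$, the commutator is a first-order operator supported away from $\RedShiftReg$ and can be merged into the lower-order $L^2$ error; this, in turn, is absorbed using $|\sigma|\ge C_0$ and the equivalence $\|u\|_{L^2}\lesssim |\sigma|^{-1}\|u\|_{\InducedHk{1}_\sigma}$.

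For general $k > k_0$, invoke the commutator construction of Theorem \ref{linear:thm:redshift-commutation:main}: commuting $\LinEinstein h = f$ with the vectorfields $\RedShiftK_i$ a total of $k-1$ times produces a strongly hyperbolic system $\bL\mathbf{h} = \blittleF$ acting on $\mathbf{h} = (h,\RedShiftK^\alpha h)_{|\alpha|\le k-1}$ with
\begin{equation*}
  \SHorizonControl{\bL}[\Horizon] = \SHorizonControl{\LinEinstein}[\Horizon] - 2(k-1)\SurfaceGravity_{\Horizon}.
\end{equation*}
Apply the $k=1$ argument to $\bL$: the threshold \eqref{linear:eq:threshold-reg-def} is exactly the condition needed so that, choosing $\varepsilon_{\RedShiftN}$ and $\epsilon$ sufficiently small, there exists $\SpectralGap > 0$ such that
\begin{equation*}
  2\Im\sigma - \max_{\Horizon}\bigl(\SHorizonControl{\LinEinstein}[\Horizon] - (2k-1)\SurfaceGravity_{\Horizon} + \varepsilon\bigr) \ge c > 0 \quad \text{whenever } \Im\sigma \ge -\SpectralGap.
\end{equation*}
The constraint propagation part of Theorem \ref{linear:thm:redshift-commutation:main} allows the $\bL$-estimate on $\mathbf{h}$ to be interpreted as an $\InducedHk{k}_\sigma$ estimate on the original $u$, and the $|\sigma|\ge C_0$ absorption step handles the remaining lower-order term to produce \eqref{linear:eq:ILED-redshift:main:resolvent-estimate}.

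The main technical point, rather than a true obstacle, is calibrating the various small parameters: $\varepsilon_{\RedShiftN}$ in Proposition \ref{linear:prop:redshift:N-construction} must be chosen smaller than the margin $\SurfaceGravity_{\Horizon}(k - k_0)$ afforded by the commutation improvement, and the cutoff $\chi$ must be supported so that its commutator contribution sits inside the redshift-positive bulk region rather than spilling into $\RedShiftReg$ itself; with those choices made consistently, everything closes.
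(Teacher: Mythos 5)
Your overall strategy is the same as the paper's: apply the part~\ref{linear:item:RedShiftEstimate:four} redshift estimate to $h = e^{-\ImagUnit\sigma\tStar}u$, cancel the $\tStar$-dependence, use the sign of the coefficient to get positivity on the left, absorb the lower-order $L^2$ error with $|\sigma|\ge C_0$, and bootstrap to $k>k_0$ via Theorem~\ref{linear:thm:redshift-commutation:main} with the $\SHorizonControl{\bL^{(k)}}[\Horizon] = \SHorizonControl{\LinEinstein}[\Horizon] - 2k\SurfaceGravity_{\Horizon}$ improvement. The threshold bookkeeping, the choice of the $\varepsilon_{\RedShiftN},\epsilon$ parameters relative to $\SurfaceGravity_{\Horizon}$, and the high-frequency absorption are all in agreement with the paper's proof.

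The one place to be careful is the cutoff step. You introduce $\chi$ so that $\chi u$ satisfies the support hypothesis of Theorem~\ref{linear:thm:redshift-energy-estimate}\ref{linear:item:RedShiftEstimate:four}, but then you need to dispose of $[\widehat{\LinEinstein}(\sigma),\chi]u$, which you ``merge into the lower-order $L^2$ error.'' This doesn't quite close: the commutator is a first-order (and $\sigma$-dependent) operator supported on the transition region of $\chi$, which lies \emph{outside} the theorem's $\RedShiftReg$, so it is not controlled by $\|u\|_{L^2(\RedShiftReg)}$ nor by the right-hand side $\|\widehat{\LinEinstein}_{g_b}(\sigma)u\|_{\InducedHk{k-1}_\sigma(\RedShiftReg)}$. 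The resolution is that the theorem should be read as applying to $u$ supported on the (interior) redshift region, in which case $\chi\equiv 1$ on $\supp u$ and the commutator vanishes identically — which is exactly the interpretation the paper uses, applying the redshift estimate to $u$ directly without a cutoff. The cutoff and its commutator error are then dealt with in the gluing step (Section~\ref{linear:ILED:full}, Lemma~\ref{linear:lemma:ILED-combine:nontrapping-redshift:aux-lemma}), where the Morawetz bulk from the neighboring region is available to absorb the transversal derivative that shows up in $[\ScalarWaveOp[g_b],\dot{\chi}]$. So: either drop the cutoff under a support hypothesis on $u$, or be explicit that the commutator is handled elsewhere — as written the absorption claim would not go through.
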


\begin{proof}
  We consider \eqref{linear:eq:redshift:morawetz-est-aux} with
  $\varepsilon_{\RedShiftN}$ and $\epsilon$ fixed such that
  \begin{equation*}
    \varepsilon_{\RedShiftN} < \frac{1}{4}\max_{\Horizon= \EventHorizonFuture, \CosmologicalHorizonFuture}\SurfaceGravity_{\Horizon},\qquad \epsilon = \frac{1}{4}\max_{\Horizon= \EventHorizonFuture, \CosmologicalHorizonFuture}\SurfaceGravity_{\Horizon},
  \end{equation*}
  we have that for ${h}$ supported on $\DomainOfIntegration$,
  \begin{equation*}
    \label{linear:eq:ILED-redshift:main:ILED}
    \p_{\tStar}\RedShiftEnergy(\tStar)[{h}] - \max_{\Horizon=\EventHorizonFuture, \CosmologicalHorizonFuture}\left(\SHorizonControl{\LinEinstein}[\Horizon]-\frac{1}{2}\SurfaceGravity_{\Horizon}\right)\RedShiftEnergy(\tStar)[{h}]
    \lesssim \norm{\LinEinstein_{g_b}{h}}_{\InducedLTwo(\Sigma_{\tStar})}^2 + \norm*{h}_{\InducedLTwo(\Sigma_{\tStar})}^2.     
  \end{equation*}

  To proceed from this version of the redshift estimate to a resolvent
  estimate, we consider   
  $h=e^{-\ImagUnit\sigma\tStar}u$, 
  \begin{equation*}
    \left(2\Im\sigma -  \max_{\Horizon=\EventHorizonFuture, \CosmologicalHorizonFuture}\left(\SHorizonControl{\LinEinstein_{g_b}}[\Horizon]-\frac{1}{2}\SurfaceGravity_{\Horizon}\right)\right)\RedShiftEnergy(\tStar)[e^{-\ImagUnit\sigma\tStar}u]
    \lesssim \norm{\LinEinstein_{g_b} e^{-\ImagUnit\sigma\tStar}u }_{\InducedLTwo(\Sigma_{\tStar})}^2 + \norm{e^{-\ImagUnit\sigma\tStar}u}_{\InducedLTwo(\Sigma_{\tStar})}^2.
  \end{equation*}
  If
  \begin{equation}
    \label{linear:eq:ILED-redshift:Im-sigma-condition}
    2\Im\sigma>\max_{\Horizon=\EventHorizonFuture,
    \CosmologicalHorizonFuture}\left(\SHorizonControl{\LinEinstein_{g_b}}[\Horizon]-\frac{1}{2}\SurfaceGravity_{\Horizon}\right),
  \end{equation}
  then the left-hand side of the above equation is
  positive. Multiplying both sides by $e^{2\Im\sigma\tStar}$ to cancel
  out any $\tStar$ dependency, we then have using
  \ref{linear:item:RedShiftEstimate:one} of Theorem
  \ref{linear:thm:redshift-energy-estimate} that
  \begin{align*}
    &\left(2\Im\sigma -  \max_{\Horizon=\EventHorizonFuture, \CosmologicalHorizonFuture}\left(\SHorizonControl{\LinEinstein_{g_b}}[\Horizon]-\frac{1}{2}\SurfaceGravity_{\Horizon}\right)\right)\left(
      \norm*{u}_{\InducedHk{1}(\Sigma)}^2 + \norm*{\sigma u}_{\InducedLTwo(\Sigma)}^2
    \right)\\
    \lesssim{}& \norm*{\widehat{\LinEinstein}_{g_b}(\sigma)u }_{\InducedLTwo(\Sigma)}^2 + \norm{u}_{\InducedLTwo(\Sigma_{\tStar})}^2.
  \end{align*}
  Then, if \eqref{linear:eq:ILED-redshift:Im-sigma-condition} is satisfied,
  and if $\abs*{\sigma} > C_0$ for some $C_0$ large enough, the
  $\norm{u}_{\InducedLTwo(\Sigma_{\tStar})}$ term on the right-hand
  side can be absorbed by the
  $\norm*{\sigma u}_{\InducedLTwo(\Sigma)}$ term on the left-hand
  side, yielding the desired resolvent estimate
  \eqref{linear:eq:ILED-redshift:main:resolvent-estimate} in the $k=1$ case.
  
  To prove the higher-order estimates, we can repeat the derivation of
  \eqref{linear:eq:ILED-redshift:main:ILED} for $\bL^{(k)}_{g_b}$ in place of
  $\LinEinstein$, where $\bL^{(k)}_{g_b}$ is the strongly hyperbolic
  operator constructed from $\LinEinstein_{g_b}$ after $k$
  commutations with the vectorfields $\{\RedShiftK_i\}_{i=1}^N$, as in
  Theorem \ref{linear:thm:redshift-commutation:main}.  Thus, to conclude, we
  need only show that for $k$ sufficiently large, in particular larger
  than $k_0$,
  $\max_{\Horizon=\EventHorizonFuture,
    \CosmologicalHorizonFuture}\left(\SHorizonControl{\bL^{(k)}_{g_b}}[\Horizon]-\frac{1}{2}\SurfaceGravity_{\Horizon}\right) < 0$.  But precisely from Theorem
  \ref{linear:thm:redshift-commutation:main}, we know that
  \begin{equation*}
    \SHorizonControl{\bL^{(k)}}[\Horizon] = \SHorizonControl{\LinEinstein_{g_b}}[\Horizon] - 2k\SurfaceGravity_{\Horizon}.
  \end{equation*}
  Thus, for $k > k_0$,
  \begin{equation*}
    \max_{\Horizon=\EventHorizonFuture,
    \CosmologicalHorizonFuture}\left(\SHorizonControl{\bL^{(k)}_{g_b}}[\Horizon]-\frac{1}{2}\SurfaceGravity_{\Horizon}\right)
  < 0,
  \end{equation*}
  as desired.
\end{proof}

\subsection{Nontrapping region}
\label{linear:sec:ILED:nontrapping}

The resolvent estimates away from trapping will be proven using the
following vectorfield 
\begin{equation}
  \label{linear:eq:ILED-nontrapping:XOuter-fOuter-def}
  \MorawetzOuterVF = \fOuter(r)\HprVF,
  \qquad \fOuter(r) = e^{\COuter(r-3M)^2}(r-3M)\Delta,
\end{equation}
where $\HprVF$ is as defined in \eqref{linear:eq:KdS:Rhat-That-def}.

\begin{theorem}
  \label{linear:thm:ILED-nontrapping:resolvent-estimate:main}
  Let $g$ be a fixed slowly-rotating \KdS{} background, and ${h}$ a
  compactly supported function on
  \begin{equation}
    \label{linear:eq:nontrapping-reg-def}
    \DomainOfIntegration:=\Real^+_{\tStar}\times\NonTrappingReg,\qquad
    \NonTrappingReg:=\Sigma\left((\breve{r}_-,\breve{r}_+)\bigcup(\breve{R}_-,
    \breve{R}_+)\right).
  \end{equation}  
  Then
  for $k_0$ the threshold regularity level defined in
  \eqref{linear:eq:threshold-reg-def}, there
  exists $\SpectralGap>0$, $C_0>0$ such that for $k>k_0$
  \begin{equation}
    \label{linear:eq:ILED-nontrapping:resolvent-estimate:main}
    \norm{u}_{\InducedHk{k}_\sigma(\NonTrappingReg)}
    \lesssim \norm{\widehat{\LinEinstein}(\sigma)u}_{\InducedHk{k-1}_\sigma(\NonTrappingReg)},\qquad
    \text{if }\GronwallExp\ge\Im\sigma \ge -\SpectralGap,\quad \abs*{\sigma} \ge C_0.
  \end{equation}
\end{theorem}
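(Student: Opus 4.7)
The plan is to prove a Morawetz-type estimate on $[0,\TStar]_{\tStar}\times\NonTrappingReg$ using the multiplier $\MorawetzOuterVF = \fOuter(r)\HprVF$ from \eqref{linear:eq:ILED-nontrapping:XOuter-fOuter-def}, together with well-chosen Lagrangian and zero-order corrections $q$ and $m$, and then convert it to the resolvent estimate \eqref{linear:eq:ILED-nontrapping:resolvent-estimate:main} by specializing to $h = e^{-\ImagUnit\sigma\tStar}u$. The point of this multiplier is that $\fOuter'(r)$ degenerates only at the photon sphere $r=3M$, which by construction lies a fixed positive distance from $\NonTrappingReg$, so coercivity is automatic in this region.

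First I apply the divergence identity \eqref{linear:eq:div-them:J-K-currents} on $\DomainOfIntegration = [0,\TStar]_{\tStar}\times\NonTrappingReg$. Since $h$ is compactly supported inside $\NonTrappingReg$, the spatial boundary terms vanish and only the $\tStar = 0,\TStar$ slice terms survive. Using Lemma \ref{linear:lemma:divergence-prop:freq-formulation} to read off the principal symbol of $\KCurrent{\MorawetzOuterVF, q, 0}$, the design $\fOuter(r) = e^{\COuter(r-3M)^2}(r-3M)\Delta_b$ ensures that $\fOuter'(r)$ gives uniformly positive control of $\abs*{\partial_r h}^2$ on $\NonTrappingReg$; simultaneously, a classical Morawetz Lagrangian $q$ together with a suitable first-order correction $m$ extracts coercive contributions in $\sigma^2$ and $\abs*{\FreqAngular}^2$ from the $q\PrinSymb_b$ term, once $\COuter$ is taken large relative to the distance from $\NonTrappingReg$ to $\{r=3M\}$ so as to dominate any indefinite cross terms. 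The $O(a)$ discrepancy between $g_b$ and the reference $\SdS{}$ metric $g_{b_0}$ is absorbed by taking $a$ small relative to $\COuter$. The upshot is
\begin{equation*}
\int_{\DomainOfIntegration}\KCurrent{\MorawetzOuterVF, q, m}[h]\,\gtrsim\, \norm*{h}_{H^1(\DomainOfIntegration)}^2 - C \norm*{h}_{L^2(\DomainOfIntegration)}^2.
\end{equation*}

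Next, the indefinite subprincipal and potential terms of $\LinEinstein_{g_b}$ are handled by Cauchy--Schwarz, producing errors of the form $\epsilon\norm*{h}_{H^1(\DomainOfIntegration)}^2 + C(\epsilon)\norm*{h}_{L^2(\DomainOfIntegration)}^2 + C\norm*{\LinEinstein_{g_b}h}_{L^2(\DomainOfIntegration)}^2$, which are absorbable for $\epsilon$ small. Substituting $h = e^{-\ImagUnit\sigma\tStar}u$, differentiating in $\tStar$, and multiplying by $e^{2\Im\sigma\tStar}$ eliminates the $\tStar$-dependence; the slice boundary contributions pick up a factor $2\Im\sigma$ and are therefore absorbable for $\Im\sigma\ge -\SpectralGap$ with $\SpectralGap$ small. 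One is left with
\begin{equation*}
\norm*{u}_{\InducedHk{1}_\sigma(\NonTrappingReg)}^2 \lesssim \norm*{\widehat{\LinEinstein}_{g_b}(\sigma)u}_{\InducedLTwo(\NonTrappingReg)}^2 + \norm*{u}_{\InducedLTwo(\NonTrappingReg)}^2,
\end{equation*}
and the residual $L^2$ term on the right is absorbed into the $\norm*{\sigma u}_{\InducedLTwo}^2$ piece on the left under the hypothesis $\abs*{\sigma}\ge C_0$ with $C_0$ large. This delivers the $k=1$ case.

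For $k>k_0$, I commute $\widehat{\LinEinstein}_{g_b}(\sigma)u = f$ through the redshift family $\{\RedShiftK_i\}$ of Lemma \ref{linear:lemma:enhanced-redshift:Ka-construction}, pass to the strongly hyperbolic commuted system produced by Theorem \ref{linear:thm:redshift-commutation:main}, and rerun the Morawetz argument; away from the horizons the threshold $k_0$ plays no further role here. The main technical obstacle I anticipate is tuning $q$ and $m$ in the matrix-valued setting so that the principal part of $\KCurrent{\MorawetzOuterVF, q, m}$ is simultaneously coercive in $(\sigma,\xi,\FreqAngular)$ throughout $\NonTrappingReg$ while keeping the non-signed subprincipal correction and the $O(a)$ perturbation strictly subordinate. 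Balancing $\COuter$ against $a$ and the minimum of $\abs*{r-3M}$ on $\NonTrappingReg$ is the delicate step, but because $\NonTrappingReg$ remains a fixed positive distance from the photon sphere this is far milder than the near-trapping estimate deferred to Section \ref{linear:sec:ILED-trapping:KdS}.
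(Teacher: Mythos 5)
Your $k=1$ argument tracks the paper's proof closely: same multiplier $\MorawetzOuterVF=\fOuter(r)\HprVF$, same mechanism of letting the exponential weight $\COuter$ dominate the non-signed subprincipal and $O(a)$ contributions, same maneuver of substituting $h=e^{-\ImagUnit\sigma\tStar}u$ and using $\abs{\sigma}\ge C_0$ to absorb the residual $L^2$ term. The paper does the bulk-positivity calculation purely in physical space via the deformation tensor (Lemma~\ref{linear:lemma:ILED-nontrapping:bulk-positivity}), taking $m=0$ and only a Lagrangian corrector $\LagrangeCorrOuter=\LagrangeCorrOuter_0+\LagrangeCorrOuter_1$; your invocation of Lemma~\ref{linear:lemma:divergence-prop:freq-formulation} is a Fourier-side paraphrase of the same computation and is harmless.

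Where you genuinely diverge is the passage to $k>1$. You propose to commute through the redshift family $\{\RedShiftK_i\}$ and run the extended strongly hyperbolic system of Theorem~\ref{linear:thm:redshift-commutation:main}. That machinery is built to gain $2\SurfaceGravity_{\Horizon}$ in $\SHorizonControl{}$ per commutation at the horizons, which is precisely what is \emph{not} needed on $\NonTrappingReg$ --- and as you note yourself, $k_0$ plays no role away from the horizons. It would still work (the bulk-positivity lemma allows arbitrary bounded subprincipal, so the commuted $\tensor[]{\bS}{_i^j}$ can be dominated by raising $\COuter$), but it is heavier than needed and forces you to re-verify coercivity for each commuted subprincipal. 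The paper instead exploits two facts that are essentially free on $\NonTrappingReg$: $\KillT$ is Killing, so applying the $k=1$ estimate to $\sigma u$ gives $\norm{\sigma u}_{\InducedHk{1}_\sigma}\lesssim\norm{\sigma\widehat{\LinEinstein}(\sigma)u}_{\InducedLTwo}$ with no commutator error; and $P_2$ (the purely spatial part of $\widehat{\LinEinstein}$) is elliptic on $\NonTrappingReg$ since it sits outside the ergoregions, so $P_2 u=\widehat{\LinEinstein}(\sigma)u-\sigma P_1 u-\sigma^2\GInvdtdt u$ yields the full $\InducedHk{2}_\sigma$ control by a standard elliptic estimate, with higher $k$ by induction. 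This route avoids the extended-system bookkeeping entirely and makes transparent why no threshold condition is needed in this region.
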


Similar to the approach taken to proving the resolvent estimates in
Theorem \ref{linear:thm:ILED-redshift:main} and Corollary
\ref{linear:corollary:naive-energy-estimate}, we will first use an energy
estimate of the form

There are two components to proving the desired Morawetz estimate. The
main difficulty will be the positive bulk term, which will come down
to fine-tuning the choice of $\COuter$ as well as the choice of
$\LagrangeCorrOuter$. After making these choices we will have to
handle the boundary terms. In the literature, the typical method for
handling boundary terms is to add a large amount of the standard
$\p_{\tStar}$-energy estimate to the Morawetz estimate to prove an
energy-Morawetz estimate.

We will handle the
boundary terms in a different way, proving
\eqref{linear:eq:ILED-nontrapping:resolvent-estimate:main} 
by absorbing the boundary terms directly
into positivity of the bulk term. We begin with the bulk term.

\begin{lemma}
  \label{linear:lemma:ILED-nontrapping:bulk-positivity}
  Fix a constant $C>0$ and a slowly-rotating \KdS{} metric
  $g$. Then there exists a choice of $\COuter$ and $\LagrangeCorrOuter$ so
  that for $\MorawetzOuterVF$ defined as in 
  (\ref{linear:eq:ILED-nontrapping:XOuter-fOuter-def}) and for ${h}$ compactly
  supported on $\NonTrappingReg$, there exists some $C_1$ such that
  \begin{equation}
    \label{linear:eq:ILED-nontrapping:bulk-positivity}
    \KCurrent{\MorawetzOuterVF, \LagrangeCorrOuter, 0}[{h}]
    - \Re\left[\MorawetzOuterVF{h} \cdot \SubPOp[\overline{{h}}]\right]
    > C\abs*{\nabla{h}}^2 - C_1\abs*{h}^2. 
  \end{equation}
\end{lemma}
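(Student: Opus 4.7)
The plan is to proceed at the symbolic level using Lemma~\ref{linear:lemma:divergence-prop:freq-formulation}, which expresses $\KCurrent{\MorawetzOuterVF,\LagrangeCorrOuter,0}[{h}]$ as a quadratic form in $\nabla h$ whose principal symbol is
\begin{equation*}
  \KCurrentSym^{\MorawetzOuterVF,\LagrangeCorrOuter}_{(2)}(x,\zeta)
  = \tfrac{1}{2}H_{\PrinSymb_b}\!\left(\fOuter\,\HprVFSym\right)
   + \left(\LagrangeCorrOuter - \tfrac{1}{2}\nabla_{g_b}\!\cdot\MorawetzOuterVF\right)\PrinSymb_b,
\end{equation*}
where $\HprVFSym(x,\zeta)=\xi \mp \Delta_b^{-1}(1+\lambda_b)\bigl((r^2+a^2)\sigma+a\FreqPhi\bigr)$ is the symbol of $\HprVF$. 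I first treat the \SdS{} case $b=b_0$, in which $\HprVFSym=\xi\mp \mu_{b_0}^{-1}\sigma$, and then perturb in $a$. A direct computation gives the radial contribution
\begin{equation*}
  \fOuter'(r)
  = e^{\COuter(r-3M)^2}\Bigl(\Delta_{b_0}+(r-3M)\Delta_{b_0}'+2\COuter(r-3M)^2\Delta_{b_0}\Bigr),
\end{equation*}
so $\fOuter'(3M)=\Delta_{b_0}(3M)>0$, and on $\NonTrappingReg$ (which is bounded away from $r=3M$) the exponential factor $e^{\COuter(r-3M)^2}$ makes $\fOuter'(r)$ dominate any fixed constant, providing strong uniform positivity of the $\xi^{2}$ coefficient on the support of $h$.

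Next, the cross contributions of the form $\xi\cdot\sigma$ and $\xi\cdot\FreqAngular$ in $H_{\PrinSymb_{b_0}}(\fOuter\,\HprVFSym)$, together with the $\sigma^{2}$ and $|\FreqAngular|^{2}$ coefficients that are not yet controlled, are absorbed by choosing $\LagrangeCorrOuter$ of order $\fOuter(r)/r$ with the appropriate sign. Concretely, after completing the square in $\xi$, the Lagrangian corrector contribution $\LagrangeCorrOuter\cdot \PrinSymb_{b_0}$ can be made to cancel the indefinite cross terms and reinforce the $\sigma^{2}$ and $|\FreqAngular|^{2}$ coefficients, thanks to the Morawetz-type identity for the scalar wave on \SdS{} away from the photon sphere (the key structural fact is that $\PrinSymb_{b_0}$ is uniformly elliptic away from $r=3M$ in the relevant sense). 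Taking $\COuter$ large enough then yields a bulk symbol bounded below by a positive multiple of $|\zeta|^{2}$ uniformly on $\NonTrappingReg$, which by Theorem~\ref{linear:thm:sym-ineq-to-op-est} translates to a pointwise lower bound
$\KCurrent{\MorawetzOuterVF,\LagrangeCorrOuter,0}[h]\gtrsim C'|\nabla h|^{2}-C_{1}|h|^{2}$. The subprincipal contribution $-\Re[\MorawetzOuterVF h\cdot \SubPOp[\overline{h}]]$ is a bounded first-order form in $h$ supported on the compact set $\NonTrappingReg$, and Cauchy--Schwarz gives
$|\Re[\MorawetzOuterVF h\cdot \SubPOp[\overline{h}]]|\le \epsilon|\nabla h|^{2}+C(\epsilon)|h|^{2}$,
which is absorbed into the bulk at the cost of enlarging $C_{1}$.

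Finally, to pass from \SdS{} to a slowly rotating \KdS{} background, I use that $\HprVFSym$, $\PrinSymb_{b}$, $\nabla_{g_{b}}\cdot\MorawetzOuterVF$, and the subprincipal symbol all depend smoothly on $a$, so for $a$ sufficiently small the $O(a)$ perturbation of each term can be absorbed into the strict positivity established in the $a=0$ case, after a possible further increase of $C_{1}$. The main obstacle in this argument is the coupling between the $\xi^{2}$ coefficient and the $\sigma^{2}$/$|\FreqAngular|^{2}$ coefficients: the exponential factor $e^{\COuter(r-3M)^{2}}$ only helps \emph{away} from $r=3M$, so the cross-terms must be eliminated through the careful algebraic choice of $\LagrangeCorrOuter$, and the positivity necessarily degenerates as one approaches the photon sphere. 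This is exactly the reason the lemma is restricted to the nontrapping region, and why a separate pseudodifferential argument is required in Section~\ref{linear:ILED:near} to handle a neighborhood of the trapped set.
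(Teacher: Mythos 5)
Your treatment of the subprincipal term is not correct, and this is the crux of the lemma. You write
\begin{equation*}
  \abs*{\Re[\MorawetzOuterVF h\cdot \SubPOp[\overline{h}]]}\le \epsilon|\nabla h|^{2}+C(\epsilon)|h|^{2},
\end{equation*}
calling the left-hand side ``a bounded first-order form in $h$.'' It is not: $\MorawetzOuterVF h$ and $\SubPOp[h]$ are each first-order in $h$, so the product is a quadratic form in $\nabla h$. A bound of the form $\epsilon|\nabla h|^2 + C(\epsilon)|h|^2$ can therefore not hold for arbitrary $\epsilon>0$, since at high frequency both sides are dominated by the $|\nabla h|^2$ pieces and one would need $\epsilon\gtrsim\sup\abs*{\MorawetzOuterVF}\cdot\sup\abs*{\SubPOp}$, which is not small. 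Worse, the coefficient of $\MorawetzOuterVF$ is $e^{\COuter(r-3M)^2}(r-3M)\Delta$, so that threshold blows up as $\COuter\to\infty$; ``enlarging $C_1$'' does nothing to fix this, and your argument has not shown the subprincipal term can be absorbed by the bulk.

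The mechanism that makes absorption possible is the extra factor of $(r-3M)$ sitting inside $\fOuter$. One must split the Cauchy--Schwarz asymmetrically: writing $\MorawetzOuterVF h = e^{\COuter(r-3M)^2}(r-3M)\Delta\,\HprVF h$ and keeping $\SubPOp[h]\lesssim|\nabla h|$ with no weight, one gets
\begin{equation*}
  \abs*{\SubPOp[h]\cdot \MorawetzOuterVF\overline{h}}\lesssim e^{\COuter(r-3M)^2}\left(\epsilon|\nabla h|^2 + \epsilon^{-1}\abs*{(r-3M)\HprVF h}^2\right).
\end{equation*}
The $\epsilon|\nabla h|^2$ piece is absorbed by the bulk coercivity that comes from the Lagrangian corrector (the $\delta_1 e^{\COuter(r-3M)^2}|\nabla h|^2$ term), provided $\epsilon<\delta_1$. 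The $\epsilon^{-1}\abs*{(r-3M)\HprVF h}^2$ piece is absorbed by the $\xi^2$-coefficient of the bulk, whose leading term is $e^{\COuter(r-3M)^2}\big(2\COuter(r-3M)^2+1\big)\Delta^2\abs*{\HprVF h}^2$ and hence gains a factor of $\COuter$ over the error; this is exactly where ``$\COuter$ large'' enters. Your proof never exploits the $(r-3M)$ factor or the $\COuter$-enhancement of the $\HprVF h$ coefficient, so the absorption step does not close. A secondary point: you invoke $\LagrangeCorrOuter$ ``of order $\fOuter(r)/r$'' to reinforce the $\sigma^2$ and $|\FreqAngular|^2$ coefficients, but the angular (i.e.\ $\CartarOp h$) coercivity is not free after the divergence cancellation — one must add a second Lagrangian corrector $\LagrangeCorrOuter_1 = \delta_1 e^{\COuter(r-3M)^2}$ with a small positive weight $\delta_1$ borrowed against the $\HprVF h$ and $\HawkingVF h$ positivity. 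That $\delta_1$-term is also what absorbs the $O(a)$ errors in the \KdS{} case, so the ``absorb into strict positivity for $a$ small'' step in your last paragraph quietly relies on the same ingredient you omitted.
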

\begin{remark}
  When proving a high-frequency Morawetz estimate for the scalar wave, the
  positivity requirement is merely that 
  \begin{equation*}
    \KCurrent{\MorawetzOuterVF,\LagrangeCorrOuter, 0}[{h}]>c\abs*{\nabla{h}}^2 - C_1\abs*{h}^2
  \end{equation*}
  for some $c>0$. Our requirement here that $C$ can be any positive
  real number, and that, in particular, it can be \textit{arbitrarily
    large} is due to the presence of the subprincipal operator, which
  does not carry a sign, and for which we do not have any control. In
  its absence, or in the case that it has a good sign, such strong
  positivity in the Morawetz estimate would not be necessary. To
  accommodate the subprincipal term with no sign, we need to prove
  \eqref{linear:eq:ILED-nontrapping:bulk-positivity} with $C>\sup \SubPOp$.
\end{remark}

\begin{proof}
  We first calculate the deformation tensor associated to the
  vectorfield multiplier $\fOuter(r)\HprVF$,
  \begin{equation*}
    2 \DeformationTensor{\MorawetzOuterVF}{^{\alpha\beta}}
    = \p_r\fOuter(r)g^{r(\beta}\HprVF^{\alpha)} + \fOuter(r)g^{r(\beta}\HprVF'^{\alpha)} - \fOuter(r)\HprVF g^{\alpha\beta},
  \end{equation*}
  where we recall the definition of $\HprVF$ and $\HprVF'$ in
  \eqref{linear:eq:KdS:Rhat-That-def}. 
  Then
  \begin{equation*}
    \begin{split}
      \KCurrent{\MorawetzOuterVF, 0, 0}[{h}]
      ={}&
      \frac{1}{2}\left(\Re\left(\left(\fOuter(r)\HprVF'{h} + \p_r\fOuter(r)\HprVF{h}\right) \cdot g^{r\gamma}\p_\gamma\overline{{h}}\right)
      - \fOuter(r)(\HprVF g)^{\alpha\beta}\p_{(\alpha}{h}\cdot  \p_{\beta)}\overline{{h}}\right)
       - (\nabla_g\cdot \MorawetzOuterVF )\p^\gamma{h}\cdot\p_\gamma \overline{{h}}.
    \end{split}
  \end{equation*}
  Let us rewrite this as
  \begin{equation}
    \label{linear:eq:ILED-nontrapping:K-XOuter-0:1}
    \begin{split}
      \rho^{2}\KCurrent{\MorawetzOuterVF, 0, 0}[{h}]
      ={}&   \frac{1}{2}\left(\Re\left(\fOuter(r)\HprVF'{h}\cdot  (\rho^2g)^{r\gamma}\p_\gamma\overline{{h}}
      +\p_r\fOuter(r)\HprVF{h}\cdot  (\rho^2g)^{r\gamma}\p_\gamma\overline{{h}}\right)
      - \fOuter(r)(\HprVF(\rho^2g))^{\alpha\beta}\p_{(\alpha}{h}\cdot  \p_{\beta)}\overline{{h}}\right)\\
       &- \left( \rho^2\nabla_g\cdot \MorawetzOuterVF - \rho \fOuter(r) \HprVF\rho\right) \p^\gamma{h}\cdot\p_\gamma \overline{{h}}.
    \end{split}
  \end{equation}
  We first ensure that the first line of equation
  (\ref{linear:eq:ILED-nontrapping:K-XOuter-0:1}) is positive. The
  second line will be handled by the choice of Lagrangian correction.

  Using the definition of $\fOuter(r)$ in equation
  (\ref{linear:eq:ILED-nontrapping:XOuter-fOuter-def}), we have that 
  \begin{align}
    \label{linear:eq:ILED-nontrapping:aux1}
      &\Re\left(\fOuter(r)\HprVF'{h}\cdot  (\rho^2g)^{r\gamma}\p_\gamma\overline{{h}}
      +\p_r\fOuter(r)\HprVF{h}\cdot  (\rho^2g)^{r\gamma}\p_\gamma\overline{{h}}\right)
      - \fOuter(r)(\HprVF(\rho^2g))^{\alpha\beta}\p_{(\alpha}{h}\cdot  \p_{\beta)}\overline{{h}}\notag \\
      ={}& e^{\COuter (r-3M)^2}\left(
        (2\COuter(r-3M)^2+1)\Delta^2\abs*{\HprVF h}^2 - \Re\left(\frac{r-3M}{\Delta}\HawkingVF {h}   \cdot \left(
          \p_r\Delta\HawkingVF\overline{{h}} -  2 \Delta\HawkingVF'\overline{{h}}
        \right)\right)\right) ,    
  \end{align}
  where we recall the definition of $\HawkingVF, \HawkingVF'$ from
  \eqref{linear:eq:KdS:Rhat-That-def}.
  
  Since $\abs*{r-3M}$ and $\Delta$ are bounded
  from below on $\NonTrappingReg$, there exists some choice of $\COuter$ sufficiently
  large so that on $\NonTrappingReg$, 
  \begin{equation*}
    2(\COuter(r-3M)^2 + 1)\Delta^2 > 0.
  \end{equation*}
  Thus the
  coefficient of $\abs*{\HprVF{h}}^2$ in
  \eqref{linear:eq:ILED-nontrapping:aux1} is positive. We move on to
  dealing with the last term in \eqref{linear:eq:ILED-nontrapping:aux1}. 
  Observe that
  \begin{equation*}
    r^2\p_r\Delta_{b_0} - 4r\Delta_{b_0}  = -2r^2(r-3M), \qquad
    \HawkingVF' = \frac{2}{r}\HawkingVF + O(a)(\p_{t_0}, \p_{\varphi_0}).
  \end{equation*}
  As a result, we can write
  \begin{equation*}
    \begin{split}
      -(r-3M)\Delta^{-1}\HawkingVF{h}\cdot \left(\p_r \Delta
        \HawkingVF\overline{{h}} -
        2\Delta\HawkingVF'\overline{{h}}\right)
      ={}& 2(r-3M)^2\Delta^{-1}\abs*{\HawkingVF{h}}^2
      + \Delta^{-1}O(a)\left(\abs*{\p_{t_0}{h}}^2 + \abs*{\p_{\varphi_0}{h}}^2\right).
    \end{split}
  \end{equation*}
  Now, in order to handle the terms on the second line of equation
  (\ref{linear:eq:ILED-nontrapping:K-XOuter-0:1}), we define
  \begin{equation*}
    \LagrangeCorrOuter_0
    =  \nabla_g\cdot \MorawetzOuterVF + \rho^{-1} \fOuter(r) \HprVF\rho. 
  \end{equation*}
  Recall that we have 
  \begin{equation*}
    \KCurrent{0, \LagrangeCorrOuter_0, 0}[{h}]
    = \LagrangeCorrOuter_0\p^\gamma{h}\cdot\p_\gamma\overline{{h}}
    - \frac{1}{2}\nabla^\gamma \p_\gamma \LagrangeCorrOuter_0\abs*{{h}}^2.
  \end{equation*}
  As a result, 
  \begin{equation*}
    \begin{split}
      2\KCurrent{\MorawetzOuterVF, \LagrangeCorrOuter_0, 0}[{h}]
      ={}& \rho^{-2}e^{\COuter (r-3M)^2}\left(
        \left(2\COuter(r-3M)^2+ 1\right)\abs*{\Delta\HprVF{h}}^2
        +\Delta^{-1}(r-3M)^2\abs*{\HawkingVF{h}}^2\right)\\
      &+\rho^{-2}e^{\COuter (r-3M)^2}\left(
        O(a)\left(\abs*{\p_{t_0}{h}}^2 + \abs*{\p_{\varphi_0}{h}}^2\right)
      \right)
      - \nabla^\alpha\p_\alpha \LagrangeCorrOuter_{0} \abs*{{h}}^2.
    \end{split}
  \end{equation*}
  By choosing $\COuter$ sufficiently large, we can in particular
  guarantee that there exists some $C_2$ such that 
  \begin{equation*}
    \KCurrent{\MorawetzOuterVF, \LagrangeCorrOuter_0, 0}[{h}]
    \ge C\abs*{\HprVF {h}}^2
    + C\abs*{\HawkingVF{h}}^2
    + O(a)\left(\abs*{\p_{t_0}{h}}^2 + \abs*{\p_{\varphi_0}{h}}^2\right)
    - C_2\abs*{{h}}^2.
  \end{equation*}
  There are two remaining issues. The first is the $O(a)$ errors. The
  second is that we do not have control over all the derivatives of
  ${h}$. For instance, control over $\HprVF{h}$ and $\HawkingVF {h}$
  do not yield any control over $\CartarOp{h}$. We fix these two
  problems at the same time by ``borrowing'' some positivity from
  $\HprVF h$ and $\HawkingVF h$ using the Lagrangian correction. To
  this end, consider
  \begin{equation}
    \label{linear:eq:ILED-nontrapping:q-Outer-def}
    \LagrangeCorrOuter = \LagrangeCorrOuter_0 + \LagrangeCorrOuter_1,\qquad 
    \LagrangeCorrOuter_1 =\delta_1e^{\COuter(r-3M)^2}.
  \end{equation}
  Up to lower order terms,
  \begin{align}    
    \rho^2\KCurrent{\MorawetzOuterVF, \LagrangeCorrOuter,0}[h]
    ={}&e^{\COuter(r-3M)^2}\left(\delta_1\rho^2\abs*{\CartarOp{h}}^2
         + \left(2\COuter(r-3M)^2 + 1\right)\abs*{\Delta\HprVF{h}}^2
         +\frac{(r-3M)^2-\delta_1}{\Delta}\abs*{\HawkingVF{h}}^2
         \right)\notag \\
       &+ e^{\COuter(r-3M)^2}\left(\delta_1\Delta\abs*{\HprVF {h}}^2
         + O(a)\left(\abs*{\p_{t_0}{h}}^2 + \abs*{\p_{\varphi_0}{h}}^2\right)\right), \label{linear:eq:ILED-nontrapping:K-control:aux1}
  \end{align}
  where we recall the definition of $\CartarOp h$ in \eqref{linear:eq:CartarOp:def}.
  
  For sufficiently large $\COuter$ and sufficiently small $\delta_1$,
  $a$, we have that $(r-3M)^2-\delta_1>0$ on $\breve{\Sigma}$, and the
  $O(a)$ errors are entirely controlled, so there exists some
  $\delta>0$ and $C>0$ such that 
  \begin{equation}
    \label{linear:eq:ILED-nontrapping:K-control}
    \KCurrent{\MorawetzOuterVF, \LagrangeCorrOuter,0}[h]
    > \delta e^{\COuter(r-3M)^2}\left(\COuter \abs*{\HprVF{h}}^2  + \abs*{\HawkingVF h}^2 + \abs*{\CartarOp h}^2\right)
    - C\abs*{{h}}^2.
  \end{equation}
  We next consider the contribution of the subprincipal
  operator. Using Cauchy-Schwarz, we observe that
  \begin{equation}
    \label{linear:eq:ILED-nontrapping:S-control:CS}
    \abs*{\SubPOp[h]\cdot \MorawetzOuterVF \overline{h}}
    \le e^{\COuter(r-3M)^2}\left(
      \epsilon |\nabla h|^2
      + \epsilon^{-1}\abs*{(r-3M)\HprVF h}^2\right). 
  \end{equation}
  Thus for sufficiently small $\epsilon$ and sufficiently large
  $\COuter$, we have that up to lower-order terms,
  \begin{equation*}
    \abs*{\SubPOp[h]\cdot \MorawetzOuterVF \overline{h}}
    \le \frac{1}{2}\KCurrent{\MorawetzOuterVF, \LagrangeCorrOuter,0}[h].
  \end{equation*}
  Thus, we have that for ${h}$ supported on
  $\NonTrappingReg$, up to lower-order terms,
  \begin{equation*}
    \KCurrent{\MorawetzOuterVF, \LagrangeCorrOuter,0}[h]
    - \Re\left[\SubPOp[{h}]\cdot \MorawetzOuterVF\overline{{h}}\right]
    \ge e^{\COuter(r-3M)^2}\frac{\delta}{2}\abs*{\nabla{h}}^2.
\  \end{equation*}
  We conclude the proof of Lemma
  \ref{linear:lemma:ILED-nontrapping:bulk-positivity} by further increasing
  $\COuter$ as necessary so that
  $\frac{\delta}{2}e^{\COuter\delta_r^2}>C$.
\end{proof}

We now show how to close the proof of Theorem
\ref{linear:thm:ILED-nontrapping:resolvent-estimate:main} in the $k=1$ case
given the positivity of the bulk term in Lemma
\ref{linear:lemma:ILED-nontrapping:bulk-positivity}.

\begin{proof}[Proof of Theorem
  \ref{linear:thm:ILED-nontrapping:resolvent-estimate:main} for $k=1$]
  For ${h}$ supported on
  $\DomainOfIntegration$ as defined in \eqref{linear:eq:nontrapping-reg-def},
  we define the Morawetz energy on a $\tStar$-constant spacelike slice
  by
  \begin{equation*}
    \breve{\MorawetzEnergy}(\tStar)[{h}]
    = \int_{\NonTrappingReg_{\tStar}}\JCurrent{\MorawetzOuterVF, \LagrangeCorrOuter, 0}[{h}] \cdot n_{\NonTrappingReg}.
  \end{equation*}
  It is clear upon inspection that
  \begin{equation*}
    \breve{\MorawetzEnergy}(\tStar)[{h}]
    \lesssim \norm{{h}}_{\InducedHk{1}(\NonTrappingReg)}^2
    + \norm{\KillT h}_{\InducedLTwo(\NonTrappingReg)}^2 ,
  \end{equation*}
  where we emphasize that unlike $\RedShiftEnergy(\tStar)$,
  $\MorawetzEnergy(\tStar)[h]$ does not necessarily have a sign.  We
  now apply \eqref{linear:eq:div-them:J-K-currents} to
  $\JCurrent{\MorawetzOuterVF, \LagrangeCorrOuter, 0}[{h}]$, to see
  that
  \begin{equation*}
    \nabla\cdot \JCurrent{\MorawetzOuterVF, \LagrangeCorrOuter, 0}[{h}]
    = \Re\squareBrace*{
      \MorawetzOuterVF\overline{{h}}\cdot
      \ScalarWaveOp[g]{h}
    }
    + \KCurrent{\MorawetzOuterVF, \LagrangeCorrOuter, 0}[{h}].
  \end{equation*}
  Since we have assumed ${h}$ with compact support on
  $\NonTrappingReg$,  ${h}$ vanishes on
  the horizons, and
  \begin{equation*}
    \int_{\EventHorizonFuture\bigcap \Sigma_{\tStar}}
    \JCurrent{\MorawetzOuterVF, \LagrangeCorrOuter, 0}[{h}]\cdot n_{\EventHorizonFuture}
    = \int_{\CosmologicalHorizonFuture\bigcap \Sigma_{\tStar}}
    \JCurrent{\MorawetzOuterVF, \LagrangeCorrOuter, 0}[{h}] \cdot n_{\CosmologicalHorizonFuture} = 0. 
  \end{equation*}
  Applying the divergence equation
  \eqref{linear:eq:div-them:J-K-currents}, 
  \begin{align}    
    &\left.\breve{\MorawetzEnergy}(\tStar)[{h}]\right\vert_{\tStar=0}^{\tStar= \TStar}
      + \int_{\DomainOfIntegration}\KCurrent{\MorawetzOuterVF, \LagrangeCorrOuter, 0}[{h}]
      - \Re \bangle*{\MorawetzOuterVF h , \SubPOp[{h}]}_{L^2(\DomainOfIntegration)}\notag \\
    ={}&  -\Re\bangle*{(\MorawetzOuterVF+\LagrangeCorrOuter) {h}, \LinEinstein{h}}_{L^2(\DomainOfIntegration)}
         + \Re\bangle*{(\MorawetzOuterVF+\LagrangeCorrOuter) {h}, \PotentialOp{h}}_{L^2(\DomainOfIntegration)}
         + \Re \bangle*{\LagrangeCorrOuter{h},\SubPOp[{h}]}_{L^2(\DomainOfIntegration)}.     \label{linear:eq:ILED-nontrapping:div:aux1}
  \end{align}
  Note that at this point, a Morawetz estimate would follow
  immediately
  from Lemma
  \ref{linear:lemma:ILED-nontrapping:bulk-positivity} and a Cauchy-Schwarz argument to control the
  lower-order term on the right-hand

  To prove the resolvent estimate in
  \eqref{linear:eq:ILED-nontrapping:resolvent-estimate:main}, we
  differentiate both sides of equation
  \eqref{linear:eq:ILED-nontrapping:div:aux1} by
  $\p_{\tStar}$.
  \begin{equation}
    \label{linear:eq:ILED-nontrapping:div:aux2}
    \p_{\tStar}\breve{\MorawetzEnergy}(\tStar)[h]
    + \int_{\NonTrappingReg_{\tStar}}\KCurrent{\MorawetzOuterVF, \LagrangeCorrOuter, 0}[{h}]\,\sqrt{\GInvdtdt}
    - \Re \bangle*{\MorawetzOuterVF h , \SubPOp[{h}]}_{\InducedLTwo(\NonTrappingReg_{\tStar})}
    \lesssim \norm*{\LinEinstein_{g_b}h}_{\InducedLTwo(\NonTrappingReg_{\tStar})}^2 + \norm*{h}_{\InducedLTwo(\NonTrappingReg_{\tStar})}^2. 
  \end{equation}
  We now show how to absorb the sign-less principal-level boundary
  term into the bulk term.  To this end, let
  $u\in \InducedHk{1}(\NonTrappingReg_{\tStar})$, and consider
  ${h} = e^{-\ImagUnit\sigma\tStar}u(x)$, so that
  \begin{equation*}
    \p_{\tStar}\breve{\MorawetzEnergy}(\tStar)[h] = 2\Im\sigma\breve{\MorawetzEnergy}(\tStar)[h]. 
  \end{equation*}
  Recalling that \begin{equation*}
    \JCurrent{\MorawetzOuterVF, \LagrangeCorrOuter, 0}[{h}]\cdot n_{\Sigma}
    = \MorawetzOuterVF{h} \cdot n_{\Sigma}\overline{{h}}
    + \LagrangeCorrOuter{h} \cdot n_{\Sigma}\overline{{h}},    
  \end{equation*}
  we choose $\COuter$ sufficiently large using Lemma
  \ref{linear:lemma:ILED-nontrapping:bulk-positivity} such that
  \begin{align*}
    &\int_{\NonTrappingReg_{\tStar}}\KCurrent{\MorawetzOuterVF, \LagrangeCorrOuter, 0}[{h}]\,\sqrt{\GInvdtdt}
    - \Re \bangle*{\MorawetzOuterVF h , \SubPOp[h]}_{\InducedLTwo(\Sigma_{\tStar})}
    + \Im\sigma \int_{\NonTrappingReg_{\tStar}}\JCurrent{\MorawetzOuterVF, \LagrangeCorrOuter, 0}[h]\cdot n_{\NonTrappingReg_{\tStar}}\\
    >{}& C\norm*{h}_{\HkWithT{}{1}(\NonTrappingReg_{\tStar})}^2  - C_1\norm*{h}_{\InducedLTwo(\NonTrappingReg_{\tStar})}^2. 
  \end{align*}
  Substituting this back into \eqref{linear:eq:ILED-nontrapping:div:aux2}, we
  have that for $h=e^{-\ImagUnit\sigma\tStar}u$, 
  \begin{equation*}
    \norm*{h}_{\HkWithT{1}(\NonTrappingReg_{\tStar})}^2 \lesssim \norm*{\LinEinstein_{g_b}h}_{\InducedLTwo(\NonTrappingReg_{\tStar})}^2 + \norm*{h}_{\InducedLTwo(\NonTrappingReg_{\tStar})}^2. 
  \end{equation*}
  Multiplying both sides by $e^{2\Im\sigma\tStar}$ to remove any
  $\tStar$-dependency, we then have that   
  \begin{equation*}
    \norm{u}_{\InducedHk{1}_\sigma(\NonTrappingReg_{\tStar})}
    \lesssim \norm{\widehat{\LinEinstein}(\sigma)u}_{\InducedLTwo(\NonTrappingReg_{\tStar})}
    + \norm{u}_{\InducedLTwo(\NonTrappingReg_{\tStar})}. 
  \end{equation*}
  Recalling from Definition \ref{linear:def:Hk-sigma} that
  \begin{equation*}
    \norm{u}_{\InducedHk{1}_\sigma(\NonTrappingReg)}^2 = \norm{u}_{\InducedHk{1}(\NonTrappingReg)}^2 + \norm{\sigma u}_{\InducedLTwo(\NonTrappingReg)}^2,
  \end{equation*}
  we see that for $\abs*{\sigma}$ sufficiently large, the
  $\norm{u}_{\LTwo(\NonTrappingReg_{\tStar})}$ term on the right-hand
  side can be absorbed into the
  $\norm{\sigma u}_{\LTwo(\NonTrappingReg_{\tStar})}$ left-hand side
  to conclude.
\end{proof}

Now let us prove Theorem
\ref{linear:thm:ILED-nontrapping:resolvent-estimate:main} for higher-order
$k$. To do so, we commute derivatives with the gauged linearized
Einstein operator to derive a higher order positive bulk
term. The rest of the proof is identical to the $k=0$ version. 

\begin{proof}[Proof of Theorem
  \ref{linear:thm:ILED-nontrapping:resolvent-estimate:main} for $k>1$]

  Let us define
  \begin{equation}
    \label{linear:eq:ILED-nontrapping:higher-order:aux1}
    \LinEinstein = \frac{1}{\GInvdtdt}D_{\tStar}^2 + P_1D_{\tStar} + P_2,
  \end{equation}
  where $P_i$ are order-$i$ differential operators on $\NonTrappingReg$. 

  The main idea will be to use the fact that $\KillT$ commutes with
  $\LinEinstein$, and that $P_2$ is elliptic on $\NonTrappingReg$.  We
  prove \eqref{linear:eq:ILED-nontrapping:resolvent-estimate:main} for the
  $k=2$ case. The higher-order cases follow from induction.

  First, from \eqref{linear:eq:ILED-nontrapping:resolvent-estimate:main} with
  $k=1$ have that for ${h} = e^{-\ImagUnit\sigma\tStar}u$, spatially supported on
  $\DomainOfIntegration$,
  \begin{equation}
    \label{linear:eq:ILED-nontrapping:higher-reg-bulk:aux1}
    \norm{\sigma u}_{\InducedHk{1}_\sigma(\NonTrappingReg)}
    \lesssim \norm{\sigma\widehat{\LinEinstein}(\sigma)u}_{\InducedLTwo(\NonTrappingReg)}.
  \end{equation}
  Rewriting \eqref{linear:eq:ILED-nontrapping:higher-order:aux1}, we thus have that
  \begin{equation*}
    P_2u = \widehat{\LinEinstein}(\sigma)u - \sigma P_1 u - \sigma^2 \GInvdtdt u. 
  \end{equation*}
  Since $P_2$ is elliptic on $\NonTrappingReg$, using standard
  elliptic estimates and
  \eqref{linear:eq:ILED-nontrapping:higher-reg-bulk:aux1}, we can conclude
  that
  \begin{equation*}
    \norm{u}_{\InducedHk{2}_\sigma(\NonTrappingReg)}
    \lesssim \norm{\widehat{\LinEinstein}(\sigma)u}_{\InducedHk{1}_\sigma(\NonTrappingReg)},
  \end{equation*}
  yielding the $k=2$ case. 
  Repeating the elliptic estimates and commutations with $\KillT$ as
  above yields the subsequent higher-order estimates. 
\end{proof}

\subsection{Nontrapped near $r=3M$}
\label{linear:sec:ILED:nontrapping-freq}

In the previous section, we proved the resolvent estimates away from
$r=3M$, using the vectorfield method to take advantage of the
non-trapping nature of the region. In this section, we restrict
ourselves to a physical neighborhood of $r=3M$ containing
$\TrappedSet_b$, but microlocalize away from $\TrappedSet_b$, so that
we are still able to prove a Morawetz estimate despite being in a
neighborhood of $r=3M$. As in the previous section, we fix $g=g_b$ and
drop the subscripts in what follows.

Recall from Lemma \ref{linear:lemma:trapping:KdS} that
\begin{equation*}
  \TrappedSet = \curlyBrace*{
    (t, r, \theta, \varphi;\sigma, \xi, \FreqTheta, \FreqPhi):
    \PrinSymb = 0, H_{\PrinSymb}r = 0, r = \rTrapping(\sigma,\FreqPhi)
  }.
\end{equation*}
In what follows, we work on a small neighborhood of $r=3M$ where the
$(\tStar, r, \theta, \phiStar)$ coordinates are identical to the
standard Boyer-Lindquist coordinates. As such, we use
Boyer-Lindquist coordinates in what follows, and denote
\begin{equation*}
  (\SpacetimeLoc, \zeta) = (t,r,\theta,\varphi; \sigma, \xi, \FreqTheta, \FreqPhi).
\end{equation*}

In this section then we will prove a Morawetz estimate for functions
supported physically neighborhood with Fourier transform supported in
a frequency neighborhood away from the trapped set. 

\begin{theorem}
  \label{linear:thm:ILED-nontrapping-freq:main}
  Let 
  \begin{equation*}
    \DomainOfIntegration:=\Real^+_{\tStar} \times \TrappingNbhd,\qquad
    \TrappingNbhd:= (\mathring{r}_-, \mathring{R}_+),
  \end{equation*}
  where we assume
  $3M-\mathring{r}_- \le \delta_r, \mathring{R}_+ - 3M \le \delta_r$
  for some $\delta_r$ sufficiently small. Then if
  $u\in H^1(\TrappingNbhd)$ and moreover, $\hat{u}(\xi, \eta)$ is
  supported on the region $\frac{\abs*{\xi}^2}{|\eta|^2} \ge
  \delta_\zeta$ for some $\delta_\zeta$ sufficiently small, then
  then there exist constants $\SpectralGap, C_0 > 0$ such that we have
  the following resolvent estimate
  \begin{equation}
    \label{linear:eq:ILED-nontrapping-freq:res-est}
    \norm*{u}_{\InducedHk{k}_\sigma(\TrappingNbhd)} \lesssim \norm*{\widehat{\LinEinstein}(\sigma)u}_{\InducedHk{k-1}_\sigma(\TrappingNbhd)},
    \qquad \text{if }
    -\SpectralGap\le \Im\sigma\le \GronwallExp,\quad\text{and } \abs*{\sigma}>C_0. 
  \end{equation}
\end{theorem}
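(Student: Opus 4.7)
The plan is to adapt the vectorfield multiplier argument from Theorem \ref{linear:thm:ILED-nontrapping:resolvent-estimate:main} to a physical neighborhood of $r=3M$, using the microlocal restriction on $\hat{u}$ to recover the bulk positivity that would otherwise fail because the trapped set $\TrappedSet_b$ sits at $r=3M$, $\xi=0$. First I would take the same multiplier triple $(\MorawetzOuterVF, \LagrangeCorrOuter, 0)$ from Section \ref{linear:sec:ILED:nontrapping}, with $\MorawetzOuterVF=\fOuter(r)\HprVF$ and $\fOuter(r)=e^{\COuter(r-3M)^2}(r-3M)\Delta$. The computation of $\KCurrent{\MorawetzOuterVF,\LagrangeCorrOuter,0}[h]$ from \eqref{linear:eq:ILED-nontrapping:K-control:aux1} still goes through on $\TrappingNbhd$, and by choosing $\COuter$ large enough the $|\Delta\HprVF h|^2$ coefficient remains strictly positive throughout $\TrappingNbhd$. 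However, the $|\HawkingVF h|^2$ coefficient $((r-3M)^2-\delta_1)/\Delta$ and the angular piece lose their sign precisely in the physical region $|r-3M|<\sqrt{\delta_1}$, which is exactly the Morawetz obstruction from trapping.

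The Fourier-support hypothesis on $u$ is what restores positivity. By Lemma \ref{linear:lemma:freq-cutoff-construction}, on the support of $\breve{\chi}_\zeta$ the symbol $C_\xi(H_{\PrinSymb_b}r)^2-\PrinSymb_b$ is elliptic of order two, and since $H_{\PrinSymb_b}r$ has leading symbol $2\rho_b^{-2}\Delta_b\xi$, Theorem \ref{linear:thm:sym-ineq-to-op-est} converts this into the operator inequality
$$ \|\Delta\HprVF u\|_{\InducedLTwo(\TrappingNbhd)}^2 \gtrsim \|\nabla_{\Sphere^2} u\|_{\InducedLTwo(\TrappingNbhd)}^2 + \|\HawkingVF u\|_{\InducedLTwo(\TrappingNbhd)}^2, $$
modulo an $\InducedLTwo$-error, valid whenever $\hat{u}$ is supported in $\{|\xi|^2/|\eta|^2\geq\delta_\zeta\}$. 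Trading a small fraction of the strongly positive $|\Delta\HprVF h|^2$ bulk against the degenerate terms, I would obtain $\KCurrent{\MorawetzOuterVF,\LagrangeCorrOuter,0}[h]\gtrsim|\nabla h|^2-C|h|^2$ on $\DomainOfIntegration$. The sign-less subprincipal operator $\SubPOp$ is then absorbed into the bulk by Cauchy--Schwarz after enlarging $\COuter$ once more, exactly as in Lemma \ref{linear:lemma:ILED-nontrapping:bulk-positivity}.

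With the bulk positivity in hand, the passage from the Morawetz estimate to the resolvent estimate \eqref{linear:eq:ILED-nontrapping-freq:res-est} tracks the argument at the end of Section \ref{linear:sec:ILED:nontrapping} line by line. Applying \eqref{linear:eq:div-them:J-K-currents}, substituting $h=e^{-\ImagUnit\sigma\tStar}u$, and differentiating in $\tStar$ eliminates the $\tStar$-dependence; the boundary flux becomes $2\Im\sigma\int_{\TrappingNbhd}\JCurrent{\MorawetzOuterVF,\LagrangeCorrOuter,0}[h]\cdot n$, which can be absorbed into the bulk provided $|\Im\sigma|\leq\SpectralGap$ with $\SpectralGap$ small relative to the Morawetz constant. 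The residual $\|u\|_{\InducedLTwo(\TrappingNbhd)}^2$ error is absorbed by the $\|\sigma u\|_{\InducedLTwo(\TrappingNbhd)}^2$ piece of $\|u\|_{\InducedHk{1}_\sigma(\TrappingNbhd)}^2$ once $|\sigma|\geq C_0$ is large. This yields the $k=1$ case, and higher $k$ follow exactly as before by commuting with $\KillT$ and exploiting the ellipticity of the spatial operator $P_2$ in the microlocally nontrapped region.

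The main obstacle will be the microlocal absorption step. The constants in the operator inequality from Lemma \ref{linear:lemma:freq-cutoff-construction} must be uniform in the slow-rotation parameter $a$, so that the smallness budget for $a$ is not consumed. I would also need to control the commutators $[\Op(\breve{\chi}_\zeta),\MorawetzOuterVF]$ and $[\Op(\breve{\chi}_\zeta),\fOuter]$ that arise when inserting the frequency cutoff into a vectorfield estimate; these are of order zero by Proposition \ref{linear:prop:Coifman-Meyer} and so contribute $\InducedLTwo$-errors that are absorbable under the $|\sigma|\geq C_0$ assumption, but verifying this cleanly, together with the handling of the physical boundary of $\TrappingNbhd$ (where $u$ need not vanish), is the technically delicate part of the proof. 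One expects that any such boundary contribution from $\{r=\mathring{r}_\pm\}$ is controlled by the energy flux already estimated in Theorem \ref{linear:thm:ILED-nontrapping:resolvent-estimate:main}, which overlaps $\TrappingNbhd$ on the physical overlap annuli.
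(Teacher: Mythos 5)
Your high-level strategy (use a nontrapping-style vectorfield multiplier, restore bulk positivity via the microlocal restriction on $\hat{u}$, then pass from Morawetz to resolvent by substituting $h=e^{-\ImagUnit\sigma\tStar}u$ and absorbing the boundary and $L^2$-error terms) is the right one, and the contour/high-frequency mechanics and the higher-$k$ elliptic bootstrap you sketch are correct. But the central bulk-positivity claim contains two genuine errors that break the argument.

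First, the operator inequality you invoke,
\begin{equation*}
  \norm*{\Delta\HprVF u}_{\InducedLTwo(\TrappingNbhd)}^2 \gtrsim \norm*{\nabla_{\Sphere^2}u}_{\InducedLTwo(\TrappingNbhd)}^2 + \norm*{\HawkingVF u}_{\InducedLTwo(\TrappingNbhd)}^2,
\end{equation*}
conflates $\HprVF$ with $\p_r$. The symbol of $\Delta_b\HprVF$ in the EF/Boyer--Lindquist frame is $\Delta_b\xi \mp (1+\lambda_b)\left((r^2+a^2)\sigma + a\FreqPhi\right)$, which also involves the \emph{time} frequency $\sigma$; it is not $H_{\PrinSymb_b}r = 2\rho_b^{-2}\Delta_b\xi$. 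The hypothesis is a spatial Fourier support condition $|\xi|^2/|\eta|^2\geq\delta_\zeta$ and carries no information about $\sigma$: for each $(\xi,\eta)$ in the support one can choose $\sigma$ so that the $\Delta_b\HprVF$-symbol vanishes on the characteristic set while $\HawkingVF$ remains large, so the claimed inequality fails. This is precisely why the paper's Section~\ref{linear:sec:ILED:nontrapping-freq} switches from the EF-adapted multiplier $\fOuter(r)\HprVF$ to $\MorawetzInnerNTVF = \fInnerNT(r)\p_r$ in Boyer--Lindquist coordinates: only then does the principal bulk symbol reduce to the quantity $H_{\PrinSymb}r$ that the microlocal cutoff controls.

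Second, even for the correct multiplier, Lemma~\ref{linear:lemma:freq-cutoff-construction} gives ellipticity of $C_\xi(H_{\PrinSymb_b}r)^2 - \PrinSymb_b$, \emph{not} of $(H_{\PrinSymb_b}r)^2$ alone. On the support of $\breve{\chi}_\zeta$, $(H_{\PrinSymb_b}r)^2$ controls $\xi^2+|\eta|^2$ but not $\sigma^2$; the $-\PrinSymb_b$ contribution, together with $g_b(\KillT,\KillT)<0$ on $\TrappingNbhd$, supplies the missing $\sigma^2$. This is the entire reason the Lagrangian corrector $\LagrangeCorrInnerNTSym$ in \eqref{linear:eq:ILED-nontrapping-freq:sym-defs} carries the factor $-C_\xi^{-1}\left(2\CInnerNT(r-3M)^2+1\right)$: via the symbolic formula for $\KCurrent{X,q,0}$ in Lemma~\ref{linear:lemma:divergence-prop:freq-formulation}, the term $\LagrangeCorrInnerNTSym\PrinSymb$ pairs with $H_{\PrinSymb}\MorawetzInnerNTSym$ so the principal bulk becomes $\left(2\CInnerNT(r-3M)^2+1\right)e^{\CInnerNT(r-3M)^2}\left((H_{\PrinSymb}r)^2 - C_\xi^{-1}\PrinSymb\right)$ plus a lower-order $(r-3M)H_{\PrinSymb}^2r$ remainder absorbed by taking $\CInnerNT$ large. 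Your $\LagrangeCorrOuter$ from Section~\ref{linear:sec:ILED:nontrapping} has no $\PrinSymb$-proportional piece, so you never create the elliptic combination that Lemma~\ref{linear:lemma:freq-cutoff-construction} is designed to exploit. The fix is to abandon the $\HprVF$-based multiplier entirely, use $\fInnerNT(r)\p_r$, compute $\KCurrent{\cdot,\cdot,0}$ at the symbol level as in Lemma~\ref{linear:lemma:divergence-prop:freq-formulation}, and choose the Lagrangian corrector so that the $\PrinSymb$-term reproduces the elliptic quantity of Lemma~\ref{linear:lemma:freq-cutoff-construction}. (Your worry about commutators with $\Op(\breve{\chi}_\zeta)$ is not relevant to this theorem: the statement already assumes the support condition on $\hat{u}$, so no cutoff operator needs to be inserted here; that issue only arises when gluing the estimates in Section~\ref{linear:ILED:full}.)
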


Due to the introduction of the frequency cutoff away from
$\TrappedSet_b$, we will pursue the resolvent estimates here in
frequency space instead of using the physical space calculations of
Section \ref{linear:sec:ILED:nontrapping}. To this end, we will show that
$\KCurrent{X,q, 0}[h]$, which has principal symbol
$\frac{1}{2\ImagUnit}H_{\PrinSymb}X$ from Lemma
\ref{linear:lemma:divergence-prop:freq-formulation}, is elliptic.
\begin{lemma}
  \label{linear:lemma:ILED-nontrapping-freq:bulk-positivity}
  Let
  \begin{equation}
    \label{linear:eq:ILED-nontrapping-freq:op-defs}
    \MorawetzInnerNTVF = \fInnerNT(r)\p_r,\qquad \fInnerNT(r) = e^{\CInnerNT(r-3M)^2}(r-3M)\Delta,
  \end{equation}
  and define 
  \begin{equation}
    \label{linear:eq:ILED-nontrapping-freq:sym-defs}
    \begin{split}
      \MorawetzInnerNTSym &:= \fInnerNT(r)\xi = \frac{1}{2}e^{\CInnerNT(r-3M)^2}(r-3M)H_{\PrinSymb}r,\\
      \LagrangeCorrInnerNTSym &:= -e^{\CInnerNT(r-3M)^2}\frac{1}{C_\xi}\left(2\CInnerNT(r-3M)^2+1\right),
    \end{split}    
  \end{equation}
  where $C_\xi$ is the positive constant from Lemma
  \ref{linear:lemma:freq-cutoff-construction}, and $\CInnerNT>0$ is some
  positive constant to be determined.
  
  Fix some $C>0$. Then, there exists some $\fInnerNT(r)$ and
  $\CInnerNT>0$, such that for $\MorawetzInnerNTVF$ and
  $\MorawetzInnerNTSym$ as defined in
  \eqref{linear:eq:ILED-nontrapping-freq:op-defs}, for any first order symbol
  $s \in \SymClass^1(\StaticRegionWithExtension)$ satisfying
  $\abs*{s} \le C|\zeta|$, there exists some $C_{\zeta}$ sufficiently
  large such that
  \begin{equation*}
    H_{\PrinSymb}\MorawetzInnerNTSym - s\MorawetzInnerNTSym  + \LagrangeCorrInnerNTSym \PrinSymb \gtrsim |\zeta|^2,\qquad \abs*{\zeta}>C_{\zeta}.
  \end{equation*}
\end{lemma}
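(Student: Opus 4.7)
The plan is a direct symbolic computation of $H_\PrinSymb \MorawetzInnerNTSym$ combined with the ellipticity built into Lemma~\ref{linear:lemma:freq-cutoff-construction}. Since $\MorawetzInnerNTSym = \fInnerNT(r) \xi$ depends on the fiber variables only through $\xi$, and $\frac{\p \PrinSymb}{\p \xi}$ is proportional to $H_\PrinSymb r$, I would first expand
\[
H_\PrinSymb \MorawetzInnerNTSym
= \fInnerNT'(r)(H_\PrinSymb r)\xi + \fInnerNT(r) H_\PrinSymb \xi
= \tfrac12 e^{\CInnerNT(r-3M)^2}\bigl(2\CInnerNT(r-3M)^2 + 1\bigr)(H_\PrinSymb r)^2 + \tfrac12 e^{\CInnerNT(r-3M)^2}(r-3M)\, H_\PrinSymb^2 r,
\]
using $\fInnerNT(r)\xi = \tfrac12 e^{\CInnerNT(r-3M)^2}(r-3M) H_\PrinSymb r$. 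Adding $\LagrangeCorrInnerNTSym \PrinSymb$ then produces the combination
\[
\tfrac12 e^{\CInnerNT(r-3M)^2}\bigl(2\CInnerNT(r-3M)^2+1\bigr)\Bigl[(H_\PrinSymb r)^2 - \tfrac{2}{C_\xi}\PrinSymb\Bigr] + \tfrac12 e^{\CInnerNT(r-3M)^2}(r-3M)\,H_\PrinSymb^2 r.
\]

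The first bracketed factor is exactly of the form covered by Lemma~\ref{linear:lemma:freq-cutoff-construction}: after possibly enlarging $C_\xi$ by a harmless factor of two in the construction of $\breve{\chi}_\zeta$, it is elliptic of order $|\zeta|^2$ uniformly on the frequency support of $\breve{\chi}_\zeta$, where $|\xi|^2 \gtrsim \delta_\zeta |\eta|^2$. Since $2\CInnerNT (r-3M)^2 + 1 \geq 1$ and $e^{\CInnerNT(r-3M)^2}\ge 1$, this contribution gives a lower bound of the form $c_0 \, |\zeta|^2$ with a constant $c_0 > 0$ independent of $\CInnerNT$. The second, error, term $(r-3M) H_\PrinSymb^2 r$ is controlled by shrinking $\delta_r$: since $H_\PrinSymb^2 r \in \SymClass^2(\StaticRegionWithExtension)$ is uniformly bounded in $|\zeta|^{-2}$, choosing $\delta_r$ small makes this an $O(\delta_r |\zeta|^2)$ term which is absorbed into the positive principal part.

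For the subprincipal contribution $s\, \MorawetzInnerNTSym$ with $|s|\le C|\zeta|$, I would use Cauchy--Schwarz in the form $|s \MorawetzInnerNTSym| \le C \bigl| e^{\CInnerNT(r-3M)^2}(r-3M)\Delta\xi\bigr|\,|\zeta|$. Using $|\xi|^2 \le |\zeta|^2$ and the factor $(r-3M)$, this is bounded by $C' e^{\CInnerNT(r-3M)^2} |r-3M| \, |\zeta|^2$; again by taking $\delta_r$ sufficiently small this is absorbed into the $e^{\CInnerNT(r-3M)^2}|\zeta|^2$ lower bound. The exponential weight $e^{\CInnerNT(r-3M)^2}$, which grows with $\CInnerNT$, provides an additional margin that allows us to absorb the error for any fixed $C$ appearing in the bound on $s$, by choosing $\CInnerNT$ large enough depending on $C$.

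The only genuine subtlety is verifying that the ellipticity constant from Lemma~\ref{linear:lemma:freq-cutoff-construction} is strong enough to dominate both the error terms; this is precisely the role of the high-frequency threshold $C_\zeta$, chosen so that symbol asymptotics hold and all the subleading terms in $\PrinSymb$, $H_\PrinSymb r$, $H_\PrinSymb^2 r$ (which are lower-order in $|\zeta|$) are negligible. The main obstacle I would anticipate is bookkeeping the interplay between the three large constants $C_\xi$, $\CInnerNT$, and $C_\zeta$ so that the order of quantifiers is respected: first fix $C$, then choose $C_\xi$ and $\delta_\zeta$ from Lemma~\ref{linear:lemma:freq-cutoff-construction}, then $\CInnerNT$ large enough to absorb $s \MorawetzInnerNTSym$, then $\delta_r$ small, and finally $C_\zeta$ large enough to handle lower-order terms in all the symbols involved.
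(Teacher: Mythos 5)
Your proof is essentially the paper's: the same expansion of $H_\PrinSymb\MorawetzInnerNTSym$ via the chain rule, the same design of $\LagrangeCorrInnerNTSym$ so that the combination $(H_\PrinSymb r)^2 - C_\xi^{-1}\PrinSymb$ appears and Lemma~\ref{linear:lemma:freq-cutoff-construction} certifies its ellipticity on $\supp\breve{\chi}_\zeta$, and the same identification of the two error terms $(r-3M)H_\PrinSymb^2 r$ and $s\MorawetzInnerNTSym$. The one real variation is the final absorption step: you use the factor $|r-3M|\le\delta_r$ to make both errors small relative to the elliptic principal term and then shrink $\delta_r$, whereas the paper applies Cauchy--Schwarz to split each error into an $\epsilon|\zeta|^2$ piece absorbed by the ellipticity and a $(r-3M)^2/\epsilon$ piece absorbed by the $2\CInnerNT(r-3M)^2$ factor after choosing $\CInnerNT > 2/\epsilon$. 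Both routes are legitimate since $\delta_r$ is a free smallness parameter in Theorem~\ref{linear:thm:ILED-nontrapping-freq:main}; the paper's route has the minor advantage of keeping $\CInnerNT$ as the only knob for this lemma, so that $\delta_r$ is not further constrained here (which is convenient in the gluing of Section~\ref{linear:ILED:full} where $\CInnerNT$ is matched with $\COuter$). One small inaccuracy worth fixing in your writeup: you attribute the large-$\CInnerNT$ margin to the exponential weight $e^{\CInnerNT(r-3M)^2}$, but that weight appears in the principal part and in $s\MorawetzInnerNTSym$ alike and cancels; the actual margin from $\CInnerNT$ comes from the polynomial factor $2\CInnerNT(r-3M)^2+1$ in the principal part, which is what the paper exploits. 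Your factor-of-two adjustment to $C_\xi$ is, as you note, harmless, since Lemma~\ref{linear:lemma:freq-cutoff-construction} only requires that $C_\xi$ be sufficiently large.
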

\begin{remark}
  Note that the inclusion of the $s \MorawetzInnerNTSym$ term with
  potentially large $s$ reflects that we will absorb both the
  contributions of the subprincipal term and the boundary terms in the
  Morawetz estimate via the positive bulk term generated by the
  principal wave component of the operator. This is entirely analogous
  to the argument taken in the previous section via the vectorfield
  argument.
\end{remark}
\begin{proof}  
  We can calculate that
  \begin{equation*}
    H_{\PrinSymb}\MorawetzInnerNTSym
    = e^{\CInnerNT(r-3M)^2}\left(
      \left(2\CInnerNT(r-3M)^2+1\right)(H_{\PrinSymb}r)^2
      + (r-3M)(H_{\PrinSymb}^2r)
    \right),
  \end{equation*}
  and hence,
  \begin{align*}
    H_{\PrinSymb}\MorawetzInnerNTSym
    - s\MorawetzInnerNTSym
    + \LagrangeCorrInnerNTSym \PrinSymb
    ={}& e^{\CInnerNT(r-3M)^2}\left(
         \left(2\CInnerNT(r-3M)^2+1\right)(H_{\PrinSymb}r)^2
         + (r-3M)(H_{\PrinSymb}^2r)\right)\\
       &+e^{\CInnerNT(r-3M)^2}\left(
         - s (r-3M)H_{\PrinSymb}
         - \frac{1}{C_\zeta}\left(2\CInnerNT(r-3M)^2 + 1\right)\PrinSymb
         \right).
  \end{align*}
  The main idea of the lemma will be to use the fact that on the
  support of $\breve{\chi}_{\zeta}$, we have already shown in Lemma
  \ref{linear:lemma:freq-cutoff-construction} that for some $C_\xi>0$
  sufficiently large, $C_\xi (H_{\PrinSymb}r)^2-\PrinSymb$ is
  elliptic. This will compensate for the fact that we are in a
  physical neighborhood of $\TrappedSet$. 
  First observe that by Cauchy-Schwarz, 
  \begin{equation*}
    \begin{split}
      \abs*{(r-3M)H_{\PrinSymb}^2r}
      &\le \epsilon (H_{\PrinSymb}^2r) + \frac{(r-3M)^2}{\epsilon}(H_{\PrinSymb}^2r),\\
      \abs*{(r-3M)s H_{\PrinSymb}r}
      &\le  \epsilon C\abs*{\zeta}^2
      + \frac{(r-3M)^2}{\epsilon}(H_{\PrinSymb}r)^2.    
    \end{split}
  \end{equation*}
  Since $(H_{\PrinSymb}r)^2 - C_\xi^{-1}\PrinSymb$ is elliptic, there exists a choice of
  $\epsilon$ sufficiently small so that 
  \begin{equation*}
    \epsilon\abs*{s}^2 +\epsilon \abs*{H_{\PrinSymb}^2r} \le (H_{\PrinSymb}r)^2- C_\xi^{-1}\PrinSymb.
  \end{equation*}
  Choosing $\epsilon$ sufficiently small as above, and
  $\CInnerNT > \frac{2}{\epsilon}$ sufficiently large,
  we have that 
  \begin{equation*}
    (2\CInnerNT(r-3M)^2+1)\left((H_{\PrinSymb}r)^2-C_\xi^{-1}\PrinSymb\right)
    + (r-3M)(H_{\PrinSymb}^2r)
    - \frac{1}{2}(r-3M)s H_{\PrinSymb}r   
  \end{equation*}
  is elliptic, as desired. 
\end{proof}

We now show how to prove the Morawetz estimate in Theorem
\ref{linear:thm:ILED-nontrapping-freq:main} for the case $k=1$, given the
ellipticity of the bulk term proven in Lemma
\ref{linear:lemma:ILED-nontrapping-freq:bulk-positivity}.
\begin{proof}[Proof of Theorem \ref{linear:thm:ILED-nontrapping-freq:main}
  for $k=1$]

  We first prove Theorem \ref{linear:thm:ILED-nontrapping-freq:main} for
  $k=1$. 
  The main difficulty, as was the case for the previous
  resolvent estimates investigated, is showing that the boundary term
  that arises after using the integration-by-parts (or divergence
  theorem) argument can be absorbed by the bulk term.
  Define, for $\LagrangeCorrInnerNTSym$ as constructed in
  \eqref{linear:eq:ILED-nontrapping-freq:sym-defs}, 
  \begin{equation*}
    \LagrangeCorrInnerNT = \frac{1}{2}\nabla_g\cdot \MorawetzInnerNTVF + \LagrangeCorrInnerNTSym,
  \end{equation*}
  so that using \eqref{linear:eq:divergence-prop:freq-formulation},
  \begin{equation*}
    \KCurrent{\MorawetzInnerNTVF, \LagrangeCorrInnerNT, 0}[h]
    =  (\KCurrentSym{\MorawetzInnerNTVF, \LagrangeCorrInnerNT}_{(2)})^{\alpha\beta }\nabla_{(\alpha}h\cdot\nabla_{\beta)}\overline{h}
    - \frac{1}{2}\nabla^\alpha\p_\alpha\LagrangeCorrInnerNT \abs*{h}^2,
  \end{equation*}
  where
  \begin{equation*}
    2(\KCurrentSym{\MorawetzInnerNTVF, \LagrangeCorrInnerNT}_{(2)})^{\alpha\beta}\zeta_\alpha\zeta_\beta
    = H_{\PrinSymb}\MorawetzInnerNTSym + \check{\LagrangeCorrSym}\PrinSymb.
  \end{equation*}
  We now also define the relevant Morawetz energy on the
  $\tStar$-constant hypersurfaces for this section
  \begin{equation*}
    \widecheck{\MorawetzEnergy}(\tStar)[{h}] = \int_{\Sigma_{\tStar}}\JCurrent{\MorawetzInnerNTVF, \LagrangeCorrInnerNT, 0}[{h}]\cdot n_{\Sigma_{\tStar}} . 
  \end{equation*}
  Applying the divergence relation in Corollary
  \ref{linear:cor:div-them:spacelike}, we have that
  \begin{align}
    -\Re\bangle*{\LinEinstein{h}, (\MorawetzInnerNTVF+\LagrangeCorrInnerNT){h}}_{L^2(\DomainOfIntegration)}
    ={}& \int_{\DomainOfIntegration}\KCurrent{\MorawetzInnerNTVF, \LagrangeCorrInnerNT, 0}[{h}]
         + \left.\widecheck{\MorawetzEnergy}(\tStar)[{h}]\right\vert_{\tStar = 0}^{\tStar = \TStar}
       - \Re \bangle*{\SubPOp[{h}], (\MorawetzInnerNTVF+\LagrangeCorrInnerNT){h}}_{L^2(\DomainOfIntegration)}
        \notag\\
    &- \Re \bangle*{\PotentialOp{h}, (\MorawetzInnerNTVF+\LagrangeCorrInnerNT){h}}_{L^2(\DomainOfIntegration)}.\label{linear:eq:ILED-nontrapping-freq:div-thm-app}
  \end{align}
  We now show the resolvent estimate
  \eqref{linear:eq:ILED-nontrapping-freq:res-est}\footnote{As was the case in
    the proof of Theorem
    \ref{linear:thm:ILED-nontrapping:resolvent-estimate:main}, a Morawetz
    estimate follows directly at this stage from the observation that
    using Lemma \ref{linear:lemma:ILED-nontrapping-freq:bulk-positivity},
    there exists a choice of $\CInnerNT$, such that for
    $\MorawetzInnerNTVF$ as defined above
    \begin{equation*}
      \norm{{h}}_{H^1(\DomainOfIntegration)}^2 \lesssim
      \int_{\DomainOfIntegration}\KCurrent{\MorawetzInnerNTVF, \LagrangeCorrInnerNT, 0}[{h}]
      - \Re \bangle*{\SubPOp[{h}], \MorawetzInnerNTVF{h}}_{L^2(\DomainOfIntegration)}
      + \norm{{h}}_{L^2(\DomainOfIntegration)}^2,
    \end{equation*}
    and an application of the Cauchy-Schwarz theorem to deduce that   
    \begin{equation*}
      \left.\widecheck{\MorawetzEnergy}(\tStar)[{h}]\right\vert_{\tStar = 0}^{\tStar = \TStar}
      +\norm{{h}}_{H^1(\DomainOfIntegration)}^2
      \lesssim
      \norm*{\LinEinstein h}_{L^2(\DomainOfIntegration)}^2
      + \norm{{h}}_{L^2(\DomainOfIntegration)}^2.    
    \end{equation*}
  }. Consider $u(x)$ such that $\hat{u}(\xi, \eta)$ is
  supported on the region $\frac{\abs*{\xi}^2}{|\eta|^2} \ge
  \delta_\zeta$ for some $\delta_\zeta$ sufficiently small, and let  
  $h = e^{-\ImagUnit\sigma\tStar}u$. Then,
  \begin{equation*}
    \p_{\tStar}\widecheck{\MorawetzEnergy}(\tStar)[{h}] = 2\Im\sigma\widecheck{\MorawetzEnergy}(\tStar)[{h}].
  \end{equation*}
  Recall that
  \begin{equation*}
    \JCurrent{\MorawetzInnerNTVF, \LagrangeCorrInnerNT, 0}[{h}]\cdot n_{\Sigma}
    = \MorawetzInnerNTVF{h} \cdot n_{\Sigma}\overline{{h}}
    + \LagrangeCorrInnerNT{h} \cdot n_{\Sigma}\overline{{h}}.
  \end{equation*}
  Then from Lemma \ref{linear:lemma:ILED-nontrapping-freq:bulk-positivity},
  we can choose $\CInnerNT$ sufficiently large so that for any
  $-\SpectralGap< \Im \sigma < \GronwallExp$,
    \begin{equation*}
    \norm{{h}}_{\HkWithT{1}(\TrappingNbhd)}^2 \lesssim \int_{\TrappingNbhd}\KCurrent{\MorawetzInnerNTVF, \LagrangeCorrInnerNT, 0}[{h}]\,\sqrt{\GInvdtdt}
    - \Re\bangle*{\SubPOp[{h}], \MorawetzInnerNTVF{h}}_{\InducedLTwo(\TrappingNbhd)}
    + \p_{\tStar} \widecheck{\MorawetzEnergy}(\tStar)[{h}] + \norm{{h}}_{\InducedLTwo(\TrappingNbhd)}^2.
  \end{equation*}
  Plugging this back into
  \eqref{linear:eq:ILED-nontrapping-freq:div-thm-app}, and applying
  Cauchy-Schwarz, we have that for $h = e^{\ImagUnit\sigma\tStar}u$,
  \begin{equation*}
    \norm*{h}_{\HkWithT{1}(\TrappingNbhd)}^2 \lesssim \norm*{\LinEinstein_{g_b}h}_{\InducedLTwo(\TrappingNbhd)}^2 + \norm*{h}_{\InducedLTwo(\TrappingNbhd)}^2. 
  \end{equation*}
  Multiplying both sides of the equation by $e^{2\Im\sigma\tStar}$ to
  get rid of any $\tStar$ dependency, we have that
  \begin{equation*}
    \norm{u}_{\InducedHk{1}_\sigma(\TrappingNbhd_{\tStar})}
    \lesssim \norm{\widehat{\LinEinstein}(\sigma)u}_{\InducedLTwo(\TrappingNbhd_{\tStar})}
    + \norm{u}_{\InducedLTwo(\TrappingNbhd_{\tStar})}. 
  \end{equation*}
  Recalling from Definition \ref{linear:def:Hk-sigma} that
  \begin{equation*}
    \norm{u}_{\InducedHk{1}_\sigma(\TrappingNbhd)}^2 = \norm{u}_{\InducedHk{1}(\TrappingNbhd)}^2 + \norm{\sigma u}_{\InducedLTwo(\TrappingNbhd)}^2,
  \end{equation*}
  we see that for $\abs*{\sigma}$ sufficiently large, the
  $\norm{u}_{\LTwo(\TrappingNbhd_{\tStar})}$ term on the right-hand
  side can be absorbed into the
  $\norm{\sigma u}_{\LTwo(\TrappingNbhd_{\tStar})}$ left-hand side
  to conclude.
\end{proof}

To prove Theorem \ref{linear:thm:ILED-nontrapping-freq:main} for higher-order
$k$, we again rely on a commutation with $\p_{\tStar}$ and an
elliptic argument, taking advantage of the fact that the
trapped set and the ergoregions in the slowly-rotating cases are
physically separated.

\begin{proof}[Proof of Theorem \ref{linear:thm:ILED-nontrapping-freq:main}
  for $k>1$.]
  We prove Theorem \ref{linear:thm:ILED-nontrapping-freq:main} for $k=2$. The
  $k>2$ case follows similarly.  Reflecting the fact that $\KillT$ is
  Killing and commutes with $\LinEinstein$, we have from
  \eqref{linear:eq:ILED-nontrapping-freq:res-est} with $k=1$ that for
  $u\in\InducedHk{k}(\TrappingNbhd)$ supported on $\TrappingNbhd$,
  \begin{equation}
    \label{linear:eq:ILED-nontrapping-freq:higher-order:aux1}
    \norm{\sigma u}_{\InducedHk{1}(\TrappingNbhd)}
    \lesssim \norm*{\sigma\widehat{\LinEinstein}(\sigma)u}_{\InducedLTwo(\TrappingNbhd)}. 
  \end{equation}
  Then, using that $\LinEinstein = P_2 + P_1D_{\tStar}
  + \frac{1}{\GInvdtdt} D^2_{\tStar}$, we have that for $u\in \InducedHk{2}(\TrappingNbhd)$, 
  \begin{equation*}
    P_2u = \widehat{\LinEinstein}(\sigma) u - \sigma P_1u -  \sigma^2 \GInvdtdt u.   
  \end{equation*}
  Recall that $P_2$ is elliptic away from the ergoregions. Since we
  are considering a region supported away from the ergoregions, we can
  apply a standard elliptic estimate to see that
  \begin{equation}
    \label{linear:eq:ILED-nontrapping-freq:higher-order:aux2}
    \norm{u}_{\InducedHk{2}(\TrappingNbhd)} \lesssim
    \norm*{\widehat{\LinEinstein}(\sigma)u}_{\InducedHk{1}_{\sigma}(\TrappingNbhd)}.
  \end{equation}
  Combining equations \eqref{linear:eq:ILED-nontrapping-freq:higher-order:aux1} and
  \eqref{linear:eq:ILED-nontrapping-freq:higher-order:aux2} allows us to
  conclude.
\end{proof}

\subsection{Trapping region}\label{linear:ILED:near}
\label{linear:sec:ILED:trapping}

We now microlocalize to the trapped set in a neighborhood of $r=3M$.
While in the previous sections we were able to prove Morawetz
estimates that controlled the full $H^k$ norm of solutions ${h}$, we
will be unable to do so in this section due to the presence of
trapping. Instead, we define new norms that account for trapping by
degenerating exactly on the trapped set. Also, we use the frequency
analysis in Section \ref{linear:sec:freq-analysis} to account for the
frequency-dependent nature of the trapped set in \KdS. The
pseudo-differential operators introduced should be compared to the
very similar pseudo-differential operators used by Tataru and
Tohaneanu in \cite{tataru_local_2010} to prove a Morawetz estimate for
solutions to the scalar wave on a Kerr
background. 

Throughout this section we will let $g_{b_0}=g(M,0)$ be a fixed \SdS{}
metric, and $g_b=g(M,a)$, be a nearby \KdS{} metric. As we discussed
in Section \ref{linear:sec:trapping-KdS}, for any $\delta_r>0$, there exists
a neighborhood of black hole parameters $\BHParamNbhd$ such that for
all $g_b$ \KdS{} backgrounds with $b\in\BHParamNbhd$, the trapped set
$\TrappedSet_b$ lies entirely within
$\curlyBrace*{|r-3M|<\delta_r}$. In addition on
$\curlyBrace*{|r-3M|<\delta_r}$, the Kerr-star coordinates
$(\tStar, r, \theta, \phiStar)$ reduce to the Boyer-Lindquist
coordinates $(t, r, \theta, \varphi)$, and we will use the
Boyer-Lindquist coordinates in the remainder of this section.

We begin with a proof of the trapped high-frequency Morawetz estimate
on \SdS{} in Section \ref{linear:sec:ILED-trapping:SdS-toy-model}, and
show how to use the basic spectral gap for the scalar wave equation
on \SdS{} to prove a spectral gap for the gauged linearized Einstein
operator linearized around a nearby \KdS{} metric.

Throughout this section, we will denote
\begin{equation}
  \label{linear:eq:ILED-trapping:trapping-reg-def}
  \DomainOfIntegration:=\Real^+_{\tStar} \times \TrappingNbhd,\qquad
  \TrappingNbhd:= \Sigma(\mathring{r}_-\le r\le \mathring{R}_+),
\end{equation}
where we assume $3M-\mathring{r}_- \le \delta_r, \mathring{R}_+ - 3M
\le  \delta_r$, so that for sufficiently slowly-rotating \KdS{}
metrics $g_b$, the trapped set $\TrappedSet_b$ lies entirely within
$\DomainOfIntegration$. 

\subsubsection{Scalar wave on \SdS}
\label{linear:sec:ILED-trapping:SdS-toy-model}

Before we prove the high-frequency Morawetz estimate for the gauged
linearized Einstein equations on \KdS, let us first review the proof
of high-frequency Morawetz  estimate in a neighborhood of the photon
sphere for the scalar wave equation on \SdS. This will serve as the
basis upon which we add the pseudo-differential modification of the
divergence theorem in Section \ref{linear:sec:int-by-parts-arg} to
prove a high-frequency Morawetz estimate for the gauged linearized Einstein
operator. 

Recall from Lemma \ref{linear:lemma:trapping:SdS} that
for $g_{b_0}$ a \SdS{} metric, the trapped set is contained exactly at
the photon sphere $r=3M$,
\begin{equation*}
  \TrappedSet_{b_0} = \curlyBrace*{(t,r,\omega;\sigma, \xi, \FreqAngular): r=3M, \xi=0, \PrinSymb_{b_0}=0}.
\end{equation*}
We then define the following norm
\begin{equation}
  \label{linear:eq:ILED-near:SdS:Morawetz-norm}
  \norm{{h}}_{\MorawetzNorm_{b_0}(\TrappingNbhd)}^2
  := \int_{\TrappingNbhd}(r-3M)^2(\abs*{\p_t{h}}^2
  + \abs*{\NablaAngular{h}}^2)
  + \abs*{\p_r{h}}^2 +\abs*{{h}}^2,
\end{equation}
where $\TrappingNbhd$ is as defined in
\eqref{linear:eq:ILED-trapping:trapping-reg-def}.  
\begin{remark}
  We remark that compared to the standard Morawetz norm defined on
  Schwarzschild (see for instance equation (1.11) in
  \cite{marzuola_strichartz_2010}), we differ by a power of $r$. This
  plays no role in our analysis as $r$ is both bounded above and below
  on the static region of \SdS.
\end{remark}
We also define the following auxiliary, non-coercive norm that will be
used in the subsequent proof of the desired Morawetz estimate,
\begin{equation*}
  \mathring{\MorawetzEnergy}_{b_0}(t)[{h}]
  = \int_{\TrappingNbhd_t}\JCurrent{\MorawetzVF_{b_0}, \LagrangeCorr_{b_0}, 0}[{h}]\cdot n_{\TrappingNbhd_t},
\end{equation*}
where $\MorawetzVF_{b_0}$ and $\LagrangeCorr_{b_0}$ will be defined in
what follows. 

With the desired norm in hand, we now state the
desired resolvent estimate for the scalar wave in \SdS:
\begin{prop}
  \label{linear:prop:ILED-near:SdS}
  Let
  \begin{equation*}
    \DomainOfIntegration:=\Real^+_t\times\TrappingNbhd,\qquad \TrappingNbhd = (3M-\delta_r, 3M+\delta_r),
  \end{equation*}
  for some sufficiently small $\delta_r$. Then if $u$ is a
  sufficiently smooth function with compact support $\TrappingNbhd$,
  then there exists some $\SpectralGap_{0}>0$ and some constant $C_0$
  such that
  \begin{equation}
    \label{linear:eq:ILED-near:SdS:resolvent-estimate}
    \norm{u}_{\MorawetzNorm_{b_0}(\TrappingNbhd)}
    \lesssim \norm*{\ScalarWaveLaplaceOp[g_{b_0}](\sigma)u}_{\InducedLTwo(\TrappingNbhd)},\qquad 
    \Im\sigma> -\SpectralGap_{0}, \text{or }
    \Im\sigma = -\SpectralGap_{0}.
  \end{equation}
\end{prop}
We will prove the result with a purely physical argument, emphasizing
that the pseudo-differential nature of the subsequent arguments in
\KdS{} reflects the frequency-dependent nature of trapping in \KdS{}
and the microlocal smallness we need at the level of the subprincipal
operator of $\LinEinstein_{g_b}$. 

There are two components to the proof of Proposition
\ref{linear:prop:ILED-near:SdS}. First, we prove the resolvent estimate
\eqref{linear:eq:ILED-near:SdS:resolvent-estimate} for $\abs*{\Im\sigma}\le
\SpectralGap$ for some $\SpectralGap>0$, and then we prove the
resolvent estimate for $\Im\sigma > \frac{\SpectralGap}{2}$. These two
steps correspond to using a Morawetz estimate and a basic energy
estimate respectively. 

The main lemma is as follows. 
\begin{lemma}
  \label{linear:lemma:ILED-near:SdS-scalar-wave:Bulk}
  For $\delta_r$ sufficiently small there exists:
  \begin{enumerate}
  \item a smooth vectorfield
    \begin{equation}
      \label{linear:eq:ILED-near:SdS:X-def}
      \MorawetzVF_{b_0} = f_{b_0}(r)\p_r, 
    \end{equation}
    where $f_{b_0}(r)$ is bounded near $r=3M$. In
    particular, we will choose
    \begin{equation}
      \label{linear:eq:ILED-near:SdS:X-def:f-def}
      f_{b_0}(r)= \frac{(r-3M)\mu_{b_0}}{r^2};
    \end{equation}
  \item a smooth function $\LagrangeCorr_{b_0}$;
    such that for ${h}$ supported in
    $\DomainOfIntegration:=\Real^+_t\times\TrappingNbhd$,
    $\TrappingNbhd = (3M-\delta_r, 3M+\delta_r)$,
    \begin{equation*}
      \KCurrent{\MorawetzVF_{b_0},\LagrangeCorr_{b_0},0}[{h}]
      \gtrsim (r-3M)^2(\abs*{\p_t {h}}^2 + \abs*{\NablaAngular {h}}^2) + \abs*{\p_r{h}}^2 + \abs*{{h}}^2.
    \end{equation*}
  \end{enumerate}  
\end{lemma}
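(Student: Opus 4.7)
By Lemma~\ref{linear:lemma:divergence-prop:freq-formulation}, the principal symbol of $\KCurrent{\MorawetzVF_{b_0},\LagrangeCorr_{b_0},0}[h]$ is $\frac{1}{2} H_{\PrinSymb_{b_0}}(f_{b_0}\xi) + \tilde{q}\,\PrinSymb_{b_0}$, where $\tilde{q} = \LagrangeCorr_{b_0} - \frac{1}{2}\nabla\cdot\MorawetzVF_{b_0}$ and, in Boyer-Lindquist coordinates, $\PrinSymb_{b_0} = -\mu_{b_0}^{-1}\sigma^2 + \mu_{b_0}\xi^2 + r^{-2}|\eta|^2$. A direct Hamiltonian-flow computation gives
\begin{equation*}
\tfrac{1}{2} H_{\PrinSymb_{b_0}}(f_{b_0}\xi) = \left(f_{b_0}'\mu_{b_0} - \tfrac{1}{2} f_{b_0}\mu_{b_0}'\right)\xi^2 - \tfrac{1}{2} f_{b_0}\mu_{b_0}^{-2}\mu_{b_0}'\sigma^2 + f_{b_0}r^{-3}|\eta|^2,
\end{equation*}
so the bulk symbol takes the form $A_\xi\xi^2 + A_\sigma\sigma^2 + A_\eta|\eta|^2$, and the plan is to fix $\tilde q$ (and hence $\LagrangeCorr_{b_0}$) so that each coefficient is pointwise positive on $\TrappingNbhd$ with the right degeneracy at $r=3M$.

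For $A_\xi$, the specific form $f_{b_0}(r)=(r-3M)\mu_{b_0}/r^2$ yields $f_{b_0}(3M)=0$ and $f_{b_0}'(3M)\mu_{b_0}(3M) = \mu_{b_0}(3M)^2/(9M^2) > 0$; provided $\tilde q$ vanishes at $r=3M$, $A_\xi(3M)>0$ and by continuity $A_\xi\gtrsim 1$ on $\TrappingNbhd$ for $\delta_r$ small, which controls $|\p_r h|^2$.

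For $A_\sigma$ and $A_\eta$ the analysis is more delicate, since both vanish at $r=3M$ when $\tilde q(3M)=0$. Expanding $\tilde q(r)=\alpha(r-3M)+\tfrac{1}{2}\beta(r-3M)^2+O((r-3M)^3)$ and using the photon-sphere identity $\mu_{b_0}'(3M) = 2\mu_{b_0}(3M)/(3M)$ (exactly the critical-point condition $(\mu_{b_0}/r^2)'(3M)=0$), a short computation shows that the coefficients of $(r-3M)$ in $A_\sigma$ and in $A_\eta$ both vanish for the single value $\alpha = -\mu_{b_0}(3M)/(27M^3)$; this simultaneous cancellation is the algebraic ``miracle'' of the photon sphere. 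With $\alpha$ so fixed, $A_\sigma$ and $A_\eta$ are of order $(r-3M)^2$ and their leading quadratic coefficients become affine functions of $\beta$. The requirements $A_\sigma\gtrsim (r-3M)^2$ and $A_\eta\gtrsim (r-3M)^2$ then translate into a two-sided constraint on $\beta$, whose compatibility interval has positive width controlled by $|\p_r^2(\mu_{b_0}/r^2)(3M)|>0$ --- the algebraic expression of the instability of the trapped set from Lemma~\ref{linear:lemma:trapping:SdS}. Picking any $\beta$ in this interval fixes $\tilde q$, and hence $\LagrangeCorr_{b_0}$, up to third order at $r=3M$.

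Finally, the $|h|^2$ lower bound comes from the zeroth-order term $-\tfrac{1}{2}\nabla^\alpha\p_\alpha \LagrangeCorr_{b_0}\,|h|^2$ built into $\KCurrent{0,\LagrangeCorr_{b_0},0}[h]$: the value of $\LagrangeCorr_{b_0}$ at $r=3M$ and its behavior at third order and beyond are unconstrained by the analysis above, and are used to arrange $-\nabla^\alpha\p_\alpha \LagrangeCorr_{b_0}\geq c>0$ on $\TrappingNbhd$. The main technical obstacle is verifying simultaneous pointwise positivity of $A_\sigma$ and $A_\eta$ at quadratic order; this is feasible precisely because the trapped set is unstable, and the nondegeneracy of the maximum of the effective potential $\mu_{b_0}/r^2$ at $r=3M$ furnishes exactly the margin required to reconcile the two degeneracy constraints.
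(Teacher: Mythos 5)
Your route is genuinely different from the paper's, and the core of it is sound. The paper computes $\EMTensor[h]\cdot\DeformationTensor{\MorawetzVF_{b_0}}$ directly in physical space, chooses $\LagrangeCorr_0 = \mu_{b_0}/(2r^2)$ in closed form so that the trace term $g^{\gamma\delta}\p_\gamma h\cdot\p_\delta\overline{h}$ cancels exactly (which also wipes out the $\p_t h$ coefficient), verifies by hand that $-\frac{1}{2}\nabla^\alpha\p_\alpha\LagrangeCorr_0(3M)>0$, and then adds $\LagrangeCorr_1 = -\delta_1(r-3M)^2/r^4$ with a small $\delta_1$ to redistribute positivity onto $\p_t$. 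You instead work at the symbol level via Lemma~\ref{linear:lemma:divergence-prop:freq-formulation}, treating the effective correction $\tilde{q}=\LagrangeCorr_{b_0}-\frac{1}{2}\nabla\cdot\MorawetzVF_{b_0}$ as unknown, and discovering that (i) the first-order coefficients of $A_\sigma$ and $A_\eta$ vanish simultaneously exactly at $\alpha = -\mu_{b_0}(3M)/(27M^3)$ -- this reproduces, as the leading Taylor coefficient, the paper's $\tilde{q}_0=-\frac{(r-3M)\mu_{b_0}'}{2r^2}$ -- and (ii) the compatibility interval for $\beta$ has width $-\p_r^2(\mu_{b_0}/r^2)(3M)>0$, which is the correct algebraic encoding of the photon-sphere instability. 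This is a more structural account of where the paper's $\delta_1$ lives, and it is a valid alternative.

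The final step, controlling the $|h|^2$ term, is where the argument as written fails. You claim that the value $\LagrangeCorr_{b_0}(3M)$ and the Taylor data of $\LagrangeCorr_{b_0}$ at third order and beyond are "unconstrained by the analysis above, and are used to arrange $-\nabla^\alpha\p_\alpha\LagrangeCorr_{b_0}\geq c>0$." Neither of those stated freedoms acts on the quantity you need. For radial $q$ one has $\nabla^\alpha\p_\alpha q = \mu_{b_0} q'' + (\mu_{b_0}' + 2\mu_{b_0}/r)q'$, so the value of $\nabla^\alpha\p_\alpha\LagrangeCorr_{b_0}$ at $r=3M$ depends only on $\LagrangeCorr_{b_0}'(3M)$ and $\LagrangeCorr_{b_0}''(3M)$ -- it does not see $\LagrangeCorr_{b_0}(3M)$ at all, nor any coefficient of order three or higher. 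But $\LagrangeCorr_{b_0}'(3M)$ and $\LagrangeCorr_{b_0}''(3M)$ are already rigid, being $\tilde{q}$'s $\alpha,\beta$ plus the fixed contribution of $\frac{1}{2}\nabla\cdot\MorawetzVF_{b_0}$. The sign is therefore an \emph{output} to be verified, not something you are free to arrange. It does hold: writing $r_0=3M$, $\mu_0=\mu_{b_0}(3M)$, $\mu_0''=\mu_{b_0}''(3M)$, and using $\mu_{b_0}'(3M)=2\mu_0/r_0$ one finds $\LagrangeCorr_{b_0}'(3M)=0$ and
\begin{equation*}
  \nabla^\alpha\p_\alpha\LagrangeCorr_{b_0}(3M)
  = \mu_0\left(\beta - \frac{5\mu_0}{r_0^4} + \frac{3\mu_0''}{2r_0^2}\right),
\end{equation*}
so the $|h|^2$ coefficient is positive iff $\beta < \frac{5\mu_0}{r_0^4} - \frac{3\mu_0''}{2r_0^2}$. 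This is a strictly weaker constraint than your $A_\sigma$ upper bound $\beta < \frac{4\mu_0}{r_0^4} - \frac{\mu_0''}{r_0^2}$, since their difference equals $\frac{\mu_0}{r_0^4} - \frac{\mu_0''}{2r_0^2} > 0$. Hence every admissible $\beta$ automatically gives a positive $|h|^2$ coefficient, but you need to replace the "arrange" sentence with this computation; as written, the justification of the $|h|^2$ lower bound has no content.
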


\begin{proof}
  We can first calculate that for $\MorawetzVF_{b_0} = f_{b_0}(r)\p_r$
  as defined above in \eqref{linear:eq:ILED-near:SdS:X-def:f-def},  
  \begin{equation}
    \label{linear:eq:ILED-near:SdS:EMT-DT:1}
    \EMTensor[h] \cdot \DeformationTensor[]{\MorawetzVF_{b_0}}
    = \mu_{b_0}^2\left(\frac{1}{r^2} - \frac{2(r-3M)}{r^3}\right)\abs*{\p_r{h}}^2
    + \frac{(r-3M)^2}{r^4}\abs*{\NablaAngular{h}}^2
    - \frac{\mu_{b_0}}{2r^2} g_{b_0}^{\gamma\delta}\p_\gamma{h}\cdot \p_\delta\overline{{h}}.
  \end{equation}
  Since $\mu_{b_0} > 0$ on $\TrappingNbhd$, the first two terms on the
  right-hand side of equation \eqref{linear:eq:ILED-near:SdS:EMT-DT:1} are
  clearly positive. There are two remaining issues. The first is a
  treatment of the third term in equation
  \eqref{linear:eq:ILED-near:SdS:EMT-DT:1}, which does not have a sign. The
  second is that we wish for
  $\KCurrent{\MorawetzVF_{b_0}, 0, 0}[{h}] =
  \EMTensor[{h}]\cdot \DeformationTensor[]{\MorawetzVF_{b_0}}$
  to be coercive in all the derivatives of ${h}$, not just $\p_r$ and
  $\NablaAngular$. The key to resolving both of these issues will be
  to make an appropriate choice of $\LagrangeCorr_{b_0}$.  Consider
  \begin{equation*}
    \LagrangeCorr_0 := \frac{\mu_{b_0}}{2r^2},
  \end{equation*}
  so that
  \begin{equation*}
    \KCurrent{0, \LagrangeCorr_0, 0}[h]
    = \frac{\mu_{b_0}}{2r^2}g_{b_0}^{\gamma\delta}\p_\gamma{h}\cdot \p_\delta\overline{h}
    - \frac{1}{2}\nabla^\alpha\p_\alpha \LagrangeCorr_0 \abs*{h}^2,
  \end{equation*}
  where we can directly calculate that
  \begin{equation*}
    -\frac{1}{2}\nabla^\alpha\p_\alpha \LagrangeCorr_0(3M) =  \frac{1-9M^2\Lambda}{243M^4}>0.
  \end{equation*}
  As a result, for $\delta_r$ sufficiently small,
  \begin{equation*}
    -\frac{1}{2}\nabla^\alpha\p_\alpha \LagrangeCorr_0 > 0
  \end{equation*}
  on all of $\TrappingNbhd$. 
  
  The addition of $\KCurrent{0, \LagrangeCorr_0, 0}[{h}]$ to
  $\KCurrent{\MorawetzVF, 0, 0}[{h}]$ will eliminate the third term
  on the right-hand side of equation
  \eqref{linear:eq:ILED-near:SdS:EMT-DT:1} and introduce the coerciveness of
  the $L^2$ norm, that is,
  \begin{equation*}
    \KCurrent{\MorawetzVF_{b_0}, \LagrangeCorr_0, 0}[{h}]
    = \mu_{b_0}^2\left(\frac{1}{r^2} - \frac{2(r-3M)}{r^3}\right)\abs*{\p_r{h}}^2
    + \frac{(r-3M)^2}{r^4}\abs*{\NablaAngular{h}}^2
    - \frac{1}{2}\nabla^\alpha\p_\alpha \LagrangeCorr_0\abs*{{h}}^2.
  \end{equation*}

  To solve the issue of coerciveness of the $\p_t$ derivatives, we
  borrow positivity from the angular and radial derivatives in
  $\KCurrent{\MorawetzVF_{b_0}, \LagrangeCorr_0,0}[{h}]$ via the
  Lagrangian correction. Consider
  \begin{equation}
    \label{linear:eq:ILED-near:SdS:q1-def}
    \LagrangeCorr_1 := -\delta_1\frac{(r-3M)^2}{r^4}.
  \end{equation}
  Then
  \begin{equation*}
    \begin{split}
      \KCurrent{0,\LagrangeCorr_1,0}[{h}]
      ={}& \delta_1\left(
        \mu_{b_0}^{-1}\frac{(r-3M)^2}{r^4}\abs*{\p_t{h}}^2
        - \mu_{b_0}\frac{(r-3M)^2}{r^4}\abs*{\p_r{h}}^2
        - \frac{(r-3M)^2}{r^4}\abs*{\NablaAngular{h}}^2
      \right)\\
      &- \frac{\delta_1}{2}\nabla^\alpha\p_\alpha \frac{(r-3M)^2}{r^4}|{h}|^2.
    \end{split}      
  \end{equation*}
  Combining the above calculations, we find that defining
  \begin{equation}
    \label{linear:eq:ILED-near:SdS:q-def}
    \LagrangeCorr_{b_0}:= \LagrangeCorr_0 + \LagrangeCorr_1,
  \end{equation}
  we can calculate 
  \begin{equation}
    \label{linear:eq:ILED-near:sum-of-squares:SdS}
    \KCurrent{\MorawetzVF_{b_0}, \LagrangeCorr_{b_0},0}[{h}]
    = \delta_1\alpha_{b_0}^2\abs*{\p_t{h}}^2 + \beta_{b_0}^2 \abs*{\p_r{h}}^2
    + (1-\delta_1)\frac{(r-3M)^2}{r^2}\abs*{\NablaAngular{h}}^2 + \gamma_{b_0}^2\abs*{{h}}^2,
  \end{equation}
  where
  \begin{align*}
    \alpha_{b_0}^2 &=  \mu_{b_0}^{-1}\frac{(r-3M)^2}{r^4},\\
    \beta_{b_0}^2 &=  \mu_{b_0}^2\left(\frac{1}{r^2} - \frac{2(r-3M)}{r^3} - \delta_1\mu_{b_0}^{-1}\frac{(r-3M)^2}{r^4}\right), \\
    \gamma_{b_0}^2 &= \nabla^\alpha\p_\alpha(q_0+q_1),
  \end{align*}
  where we have used that $q_1 = O(\delta_1)$ to write
  $\gamma_{b_0}^2$ as a positive function on $\DomainOfIntegration$. 
\end{proof}
Having shown that $\KCurrent{\MorawetzVF_{b_0}, \LagrangeCorr_{b_0},0}[{h}]$
generates a non-negative bulk term, we can now move onto the proof of
Proposition \ref{linear:prop:ILED-near:SdS}.

\begin{proof}[Proof of Proposition \ref{linear:prop:ILED-near:SdS}]
  Using the divergence theorem in Corollary
  \ref{linear:cor:div-them:spacelike}, we see that for ${h}$ supported in
  a neighborhood of $r=3M$,
  \begin{equation}
    \label{linear:eq:ILED-near:SdS:div-thm}
    \p_t\mathring{\MorawetzEnergy}(t)[{h}]
    + \int_{\TrappingNbhd}\KCurrent{\MorawetzVF_{b_0}, \LagrangeCorr_{b_0},0}[{h}]\,\sqrt{\GInvdtdt}
    = -\Re\int_{\TrappingNbhd} \ScalarWaveOp[g_{b_0}]{h}\cdot \left(\MorawetzVF_{b_0}+\LagrangeCorr_{b_0}\right)\overline{{h}}\,\sqrt{\GInvdtdt}.
  \end{equation}
  Using Lemma \ref{linear:lemma:ILED-near:SdS-scalar-wave:Bulk} and the
  Cauchy-Schwarz inequality, we then have that
  \begin{equation*}
    \p_t\mathring{\MorawetzEnergy}(t)[{h}]
    + \norm{{h}}_{\MorawetzNorm(\TrappingNbhd)}^2
    \lesssim \norm*{\ScalarWaveOp[g_{b_0}]h}_{\InducedLTwo(\TrappingNbhd)}^2.
  \end{equation*}
  To prove the resolvent estimate in
  \eqref{linear:eq:ILED-near:SdS:resolvent-estimate}, we consider
  ${h} = e^{-\ImagUnit\sigma t}u(x)$, where $u$ is supported in a
  neighborhood of $r=3M$. Furthermore, recall that
  \begin{equation}
    \label{linear:eq:ILED-near:SdS:resolvent-estimate:aux1}
    \mathring{\MorawetzEnergy}(t)[{h}]
    = \int_{\TrappingNbhd}\JCurrent{\MorawetzVF_{b_0}, \LagrangeCorr_{b_0},0}[h]\cdot n_{\TrappingNbhd}
    = \Re\int_{\TrappingNbhd}f_{b_0}(r)\p_r{h}\cdot n_{\TrappingNbhd}\overline{h}
    + \Re\int_{\TrappingNbhd}\LagrangeCorr_{b_0}{h}\cdot  n_{\TrappingNbhd}\overline{h}.    
  \end{equation}
  Using the fact that $\LagrangeCorr_{b_0}$ is a smooth stationary
  function on $\TrappingNbhd$, we use integration by parts to write
  that
  \begin{equation*}
    \begin{split}
      \bangle*{\p_r(3M-r){\LagrangeCorr}_{b_0}{h},n_{\TrappingNbhd}{h}}_{\LTwo(\TrappingNbhd)}
      ={}&\bangle*{(r-3M){\LagrangeCorr}_{b_0}\p_r{h},n_{\TrappingNbhd}{h}}_{\LTwo(\TrappingNbhd)}
      + \bangle*{(r-3M){\LagrangeCorr}_{b_0}{h},n_{\TrappingNbhd}\p_r{h}}_{\LTwo(\TrappingNbhd)}\\
      &+ Err[{h}],
    \end{split}
  \end{equation*} 
  where $Err[{h}]$ consists of lower-order terms satisfying the
  estimate
  \begin{equation*}
    \abs*{Err[{h}]} \lesssim \norm{(r-3M)\nabla{h}}_{\LTwo(\TrappingNbhd)}^2 + \norm{\p_r{h}}_{\LTwo(\TrappingNbhd)}^2 + \norm{{h}}_{\LTwo(\TrappingNbhd)}^2.
  \end{equation*}
  Recall that we assumed that $h$ could be written as
  $h=e^{-\ImagUnit\sigma t}u$, it is clear that
  $\p_th = -\ImagUnit\sigma h$. Thus, we can apply Cauchy-Schwarz to control
  \begin{equation*}
    \abs*{\bangle*{(r-3M){\LagrangeCorr}_{b_0}\p_r{h},n_{\TrappingNbhd}{h}}_{\LTwo(\TrappingNbhd)}}
    + \abs*{\bangle*{(r-3M){\LagrangeCorr}_{b_0}{h},n_{\TrappingNbhd}\p_r{h}}_{\LTwo(\TrappingNbhd)}}
      \lesssim \norm{h}_{\MorawetzNorm(\TrappingNbhd)}^2.
  \end{equation*}
  We can also apply Cauchy-Schwarz to control
  \begin{equation*}
    \int_{\TrappingNbhd}\abs*{f_{b_0}(r)\p_rh\cdot n_{\TrappingNbhd}\overline{h}} \lesssim \norm{h}_{\MorawetzNorm(\TrappingNbhd)}^2.
  \end{equation*}
  Having controlled each of the terms in
  \eqref{linear:eq:ILED-near:SdS:resolvent-estimate:aux1}, we can write that 
  \begin{equation*}
    \abs*{\mathring{\MorawetzEnergy}_{b_0}(t)[{h}]} \lesssim \norm{h}_{\MorawetzNorm(\TrappingNbhd)}^2.
  \end{equation*}
  Recalling that for $h=e^{-\ImagUnit\sigma t}$,
  $\p_th = -\ImagUnit\sigma h$, we see that for any $\delta>0$, there
  exists a choice of $\SpectralGap_0>0$ such that for
  $\abs*{\Im\sigma}<\SpectralGap_0$,
  \begin{equation*}
    2\Im\sigma\mathring{\MorawetzEnergy}(t)[{h}]
    + \int_{\TrappingNbhd}\KCurrent{\MorawetzVF_{b_0}, \LagrangeCorr_{b_0},0}[{h}]\,\sqrt{\GInvdtdt}
    \gtrsim \norm*{h}_{\MorawetzEnergy(\TrappingNbhd)}.
  \end{equation*}
  Plugging this back into \eqref{linear:eq:ILED-near:SdS:div-thm}, we have
  from Cauchy-Schwarz that
  \begin{equation*}
    \norm*{h}_{\MorawetzEnergy(\TrappingNbhd)}
    \lesssim
    \norm*{\ScalarWaveOp[g_{b_0}]{h}}_{\InducedLTwo(\TrappingNbhd)}^2
    + \norm*{h}_{\InducedLTwo(\TrappingNbhd)}^2.
  \end{equation*}
  Multiplying both sides by $e^{2\Im\sigma t}$ to remove any
  $t$-dependency, and using Cauchy-Schwarz, we have that 
  \begin{equation}
    \label{linear:eq:ILED-near:SdS:Scalar-wave:ILED}
    \norm*{u}_{\MorawetzNorm_{b_0}(\TrappingNbhd)} \lesssim
    \norm*{\TransformScalarWaveOp[g_{b_0}](\sigma)u}_{\InducedLTwo(\TrappingNbhd)},
    \qquad \abs*{\Im\sigma}\le \SpectralGap_{0}.
  \end{equation}
  It then remains to prove the resolvent estimate
  \eqref{linear:eq:ILED-near:SdS:resolvent-estimate} with
  $\Im{\sigma}>\SpectralGap_{0}$. In the case of the scalar wave on
  \SdS, there is no subprincipal component to consider, and the
  Killing energy is conserved. Thus, using the equivalent of a naive
  Gronwall-type energy estimate, we have that for any $\epsilon>0$,
  there exists a constant $C(\epsilon)>0$ such that
  \begin{equation*}
    \p_{t}\EnergyKill(t)[{h}] \le 
    C(\epsilon)\norm*{\ScalarWaveOp[g_{b_0}]h}_{\InducedLTwo(\TrappingNbhd)}^2 + \epsilon \EnergyKill(t)[h].
  \end{equation*}
  Since ${h}$ is supported near $\TrappedSet_{b_0}$ (in particular, it
  is supported away from both $\EventHorizonFuture$ and
  $\CosmologicalHorizonFuture$), the Killing energy
  norm of ${h}$ controls all derivatives of ${h}$. Thus, when we plug
  in ${h} = e^{-\ImagUnit\sigma t}u(x)$ for some $u$ that
  is compactly supported on a small neighborhood of $\TrappedSet_{b_0}$
  (and in particular, compactly supported away from the event horizon
  and the cosmological horizon), we have
  \begin{equation*}
    \p_t\EnergyKill(t)[{h}] \gtrsim \Im\sigma \norm{{h}}_{\InducedHk{1}_\sigma(\TrappingNbhd)}^2. 
  \end{equation*}
  As a result, for $\Im\sigma>\frac{\SpectralGap_0}{2}$, and
  sufficiently small $\epsilon$,
  \begin{equation}
    \label{linear:eq:ILED-near:SdS:Scalar-wave:energy}
    \norm*{u}_{H^1_{\sigma}(\TrappingNbhd)} \lesssim
    \norm*{\TransformScalarWaveOp[g_{b_0}](\sigma)u}_{\InducedLTwo(\TrappingNbhd)},
    \qquad \Im\sigma>\frac{\SpectralGap_0}{2}.
  \end{equation}
  Combining the estimates in (\ref{linear:eq:ILED-near:SdS:Scalar-wave:ILED})
  and (\ref{linear:eq:ILED-near:SdS:Scalar-wave:energy}) yields a resolvent
  estimate on the entire half-plane $\Im\sigma\ge -\SpectralGap_0$
  as desired, concluding the proof of Proposition \ref{linear:prop:ILED-near:SdS}.    
\end{proof}


\subsubsection{Morawetz estimate near $r=3M$ for the gauged
  linearized Einstein operator}
\label{linear:sec:ILED-trapping:KdS}

We now turn to the problem of proving resolvent estimates for the
gauged linearized Einstein operator in \KdS. We will first need to
define the relevant norms.
To capture the idea that trapping is a
feature of the characteristic set, for  $r\in (3M - \delta_r,
3M+\delta_r)$, we factor 
\begin{equation*}
  \PrinSymb_b(r, \theta; \sigma, \xi, \FreqTheta, \FreqPhi) =
  g^{tt}(\sigma-\sigma_1(r,\theta; \xi, \FreqTheta,
  \FreqPhi))(\sigma-\sigma_2(r, \theta; \xi, \FreqTheta, \FreqPhi)),
\end{equation*}
where $\sigma_1, \sigma_2\in S_{hom}^1$ are distinct smooth
symbols. On the cones $\sigma=\sigma_i$ (i.e. on the characteristic
set), the symbol $r-\rTrapping_b$ is then equal to 
\begin{equation*}
  \ell_i = r-\rTrapping_b(\sigma_i,\FreqPhi) = r-3M - a \tilde{r}_b \left(\sigma_i, \FreqPhi\right).
\end{equation*}
To use $\ell_i$, we cut off away from the singularity at frequency $0$
and redefine:
\begin{equation*}
  \ell_i = r-\rTrapping_b(\sigma_i,\FreqPhi) = r-3M - a \chi_{\ge 1}\tilde{r}_b \left(\sigma_i, \FreqPhi\right) \in S^0(\Sigma),
\end{equation*} 
where $\chi_{\ge1}$ is a smooth symbol such that $\chi_{\ge 1}=1$ for
frequencies $\ge 2$, $\chi_{\ge1}=0$ for frequencies $\le 1$.

The symbols $\ell_i$ can then be used to define microlocally weighted
$L^2$ function spaces in a neighborhood of $r=3M$.  

\begin{definition}
  Let $h:\StaticRegionWithExtension\to \Complex^D$ such that
  $h(\tStar, \cdot)$ is compactly supported on $\TrappingNbhd$ for all
  $\tStar>0$, and let $\DomainOfIntegration$ be as defined in
  \eqref{linear:eq:ILED-trapping:trapping-reg-def}. We define
  \begin{equation*}
    \begin{split}
      \norm{h}_{L^2_{\ell_i}(\DomainOfIntegration)}^2
      &= \norm{\ell_i(D, x){h}}_{L^2(\DomainOfIntegration)}^2.
      + 
    \end{split}
  \end{equation*}
  We also define the corresponding norms over a spacelike slice.
  \begin{align*}
    \norm{{h}}_{\InducedLTwo_{\ell_i}(\TrappingNbhd)}^2
    &= \norm{\ell_i(D, x){h}}_{\InducedLTwo(\TrappingNbhd)}^2.
  \end{align*}
\end{definition}
\begin{remark}
  The symbols $\ell_i$ are nonzero outside an $O(a)$ neighborhood of
  $3M$, so $L_{\ell_i}^2$ is equivalent to the $L^2$ norm
  outside an $O(a)$ neighborhood of $r=3M$.
\end{remark}


\begin{definition}
  \label{linear:def:Morawetz-norm}
  Consider $h:\StaticRegionWithExtension\to\Complex^D$ such that
  $h(\tStar,\cdot)$ is compactly supported on $\TrappingNbhd$ for all
  $\tStar\ge 0$. Then for
  $\DomainOfIntegration$ as defined in \eqref{linear:eq:ILED-trapping:trapping-reg-def}
  we define the \emph{local Morawetz norm} by
  \begin{equation*}
    \begin{split}
      \norm{h}^2_{\MorawetzNorm_b(\DomainOfIntegration)}
      ={}& \norm{ (D_t-\sigma_2(D, x)) {h}}^2_{L^2_{\ell_1}(\DomainOfIntegration)}
      + \norm{ (D_t-\sigma_1(D, x)) {h}}^2_{L^2_{\ell_2}(\DomainOfIntegration)}
      + \norm{\p_r {h}}^2_{L^2(\DomainOfIntegration)} + \norm{{h}}^2_{L^2(\DomainOfIntegration)}.
    \end{split}
  \end{equation*}
  We also define the 
  higher-order Morawetz norms
  \begin{equation*}
    \norm{h}_{\MorawetzNorm^k_b(\DomainOfIntegration)}^2 = \sum_{|\alpha|\le {k-1}}\norm{\RedShiftK^{\alpha}h}_{\MorawetzNorm_b(\DomainOfIntegration)}^2. 
  \end{equation*}
  Observe that $\MorawetzNorm^1(\DomainOfIntegration) =
  \MorawetzNorm(\DomainOfIntegration)$. We also have the equivalent
  norms over the spacelike slice $\TrappingNbhd$,
  \begin{align*}
      \norm{h}_{\InducedMorawetzNorm_b(\TrappingNbhd)}^2
      &={} \norm{ (D_t-\sigma_2(D, x)) {h}}_{\InducedLTwo_{\ell_1}(\TrappingNbhd)}^2
      + \norm{ (D_t-\sigma_1(D, x)) {h}}_{\InducedLTwo_{\ell_2}(\TrappingNbhd)}^2
        + \norm{\p_r {h}}_{\InducedLTwo(\TrappingNbhd)}^2
        + \norm{{h}}_{\InducedLTwo(\TrappingNbhd)}^2,\\
    \norm{h}_{\InducedMorawetzNorm_b^k(\TrappingNbhd)}^2 & = \sum_{|\alpha|\le {k-1}}\norm{\RedShiftK^{\alpha}h}_{\InducedMorawetzNorm_b(\TrappingNbhd)}^2. 
  \end{align*}
\end{definition}

\begin{remark}
  Note that for the \SdS{} metric $g_{b_0}$, the Morawetz norm in
  Definition \ref{linear:def:Morawetz-norm} reduces to
  \begin{equation*}
    \begin{split}
      \norm{{h}}_{\MorawetzNorm_{b_0}(\DomainOfIntegration)}^2
      = \norm{(r-3M)\p_t{h}}_{L^2(\DomainOfIntegration)}^2
      + \norm{(r-3M)\NablaAngular {h}}_{L^2(\DomainOfIntegration)}^2
      + \norm{\p_r {h}}_{L^2(\DomainOfIntegration)}^2
      + \norm{{h}}_{L^2(\DomainOfIntegration)}^2,
    \end{split}
  \end{equation*}
  which agrees with our earlier definition in
  \eqref{linear:eq:ILED-near:SdS:Morawetz-norm} of the Morawetz norm on \SdS.
\end{remark}
\begin{remark}
  Observe that the local Morawetz norm is fine-tuned so that all
  the derivatives taken of ${h}$ have symbols which vanish exactly on
  the trapped set. This will be exploited heavily in what follows to
  prove the desired Morawetz estimate. 
\end{remark}


We state the main theorem of this section. 
\begin{theorem}
  \label{linear:thm:ILED-near:main}
  Let $\DomainOfIntegration$ and $\TrappingNbhd$ be as defined in
  \eqref{linear:eq:ILED-trapping:trapping-reg-def}, and $u$ be a sufficiently
  smooth function supported in $\TrappingNbhd$ such that
  $\hat{u}(\xi, \eta)$ is supported on the region
  $\frac{\abs*{\xi}^2}{|\eta|^2} \le \frac{1}{2}\delta_\zeta$.  Then
  for $\delta_r, \delta_\zeta$ sufficiently small, there exists
  $\SpectralGap>0$, $C_0>0$, such that for $k\ge k_0$, where $k_0$ is
  the threshold regularity level defined in
  \eqref{linear:eq:threshold-reg-def}, such that 
  \begin{equation}
    \label{linear:eq:ILED-near:main}
    \norm{u}_{\MorawetzNormk{k}_b(\TrappingNbhd)} \lesssim
    \norm*{\widehat{\LinEinstein}_{g_b}(\sigma)u}_{\InducedHk{k-1}_\sigma(\TrappingNbhd)},\qquad
    \text{if }\Im\sigma> -\SpectralGap,\text{ and } |\sigma| \ge C_0;
    \text{ or } \Im\sigma = -\SpectralGap.
  \end{equation}
\end{theorem}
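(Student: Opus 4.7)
The plan is to mirror the physical-space argument of Proposition 10.8 for the scalar wave on \SdS, but lifted to a pseudo-differential setting via the modified divergence identity of Proposition \ref{linear:prop:div-thm:PDO-modification}, in order to accommodate (i) the frequency-dependence of $\TrappedSet_b$ and (ii) the (small but non-signed) skew-adjoint part of the subprincipal operator of $\LinEinstein_{g_b}$. First I would conjugate by the order-$0$ elliptic operator $\PseudoSubPFixer$ of Lemma \ref{linear:lemma:subprincipal-symbol-control}, working throughout with $\LinEinsteinConj_{g_b} = \PseudoSubPFixer \LinEinstein_{g_b}\PseudoSubPFixer^-$ whose skew-adjoint subprincipal piece $\SubPConjOp_{b,a}$ is microlocally $\delta_0$-small near $\TrappedSet_b$ (Lemma \ref{linear:lemma:ILED-near:s-decomp}); a resolvent estimate for $\widehat{\LinEinsteinConj}_{g_b}(\sigma)$ transfers back to one for $\widehat{\LinEinstein}_{g_b}(\sigma)$ up to a loss of one order (absorbed by the high-frequency cutoff $|\sigma|\ge C_0$).

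Next I would build multipliers of the form \eqref{linear:eq:div-thm:PDO:Morawetz-Lagrangecorr-g_b-def},
\[
\MorawetzVF_b = \MorawetzVF_{b_0}+\widetilde{\MorawetzVF},\qquad \LagrangeCorr_b = \LagrangeCorr_{b_0}+a\tilde{\LagrangeCorr},
\]
where $\MorawetzVF_{b_0}=f_{b_0}(r)\partial_r$ and $\LagrangeCorr_{b_0}$ are exactly the \SdS{} choices \eqref{linear:eq:ILED-near:SdS:X-def:f-def}, \eqref{linear:eq:ILED-near:SdS:q-def}, and $\widetilde{\MorawetzVF}, \tilde{\LagrangeCorr}$ are stationary pseudo-differential operators whose principal symbols are designed so that the resulting principal bulk symbol factorises through the symbols $\ell_1,\ell_2$ defining $\MorawetzNorm_b$: on the cone $\sigma=\sigma_i$, one has $r-\rTrapping_b = \ell_i$, so mimicking \eqref{linear:eq:ILED-near:sum-of-squares:SdS} the principal part of $\KCurrent{\MorawetzVF_{b_0},\LagrangeCorr_{b_0},0}+a\,\KCurrentIbP{\widetilde{\MorawetzVF},\tilde{\LagrangeCorr}}_{(2)}$ should be a sum of squares
\[
|\zeta|^{-1}\bigl(\alpha_i\, (\sigma-\sigma_j)\,\ell_i\bigr)^2 + \beta_b^2\xi^2 + \gamma_b^2,\qquad \{i,j\}=\{1,2\},
\]
modulo $S^0$. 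Theorem \ref{linear:thm:sym-ineq-to-op-est} then upgrades this symbolic inequality to the operator bound $\KCurrentIbP{\widetilde{\MorawetzVF},\tilde{\LagrangeCorr}}_{(2)}[h]+\KCurrent{\MorawetzVF_{b_0},\LagrangeCorr_{b_0},0}[h]\gtrsim \|h\|_{\MorawetzNorm_b(\TrappingNbhd_\tStar)}^2$ after integration. The instability of the trapped set (Lemma \ref{linear:lemma:trapping:KdS}) and smoothness of $\rTrapping_b$ in $b$ makes $\widetilde{\MorawetzVF},\tilde{\LagrangeCorr}$ an $O(a)$ smooth perturbation of the \SdS{} construction, and the frequency cutoff $\mathring{\chi}_\zeta$ restricts all estimates to the region where the pseudo-differential factorisation is meaningful.

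The key energy identity then comes from Proposition \ref{linear:prop:div-thm:PDO-modification} applied to $h = e^{-\ImagUnit\sigma\tStar}u$: differentiating in $\tStar$ (or equivalently, picking up a factor of $2\Im\sigma$ on the boundary term) and exploiting the decomposition \eqref{linear:eq:ILED-near:ILED-int-by-parts-decomposed}, the principal bulk controls $\|u\|_{\InducedMorawetzNorm_b}^2$, the subprincipal bulk is bounded by Cauchy--Schwarz combined with the $\delta_0$-smallness of $\SubPConjOp_{b,a}$ (Lemma \ref{linear:lemma:ILED-near:s-decomp}) and the generalised G\aa{}rding inequality (Theorem \ref{linear:thm:sym-ineq-to-op-est}), and the boundary fluxes $\JCurrentIbP{\widetilde{\MorawetzVF},\tilde{\LagrangeCorr}}$ are absorbed by the Morawetz bulk exactly as in the paragraph following \eqref{linear:eq:ILED-near:SdS:resolvent-estimate:aux1}, using Cauchy--Schwarz with $\partial_\tStar h = -\ImagUnit\sigma h$. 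This yields \eqref{linear:eq:ILED-near:main} with $k=1$ for $|\Im\sigma|\le\SpectralGap$, with $\SpectralGap>0$ depending only on the coercivity constant of the bulk and the smallness parameters $\varepsilon_{\TrappedSet},\delta_0,\delta_r,\delta_\zeta, a$. For $\Im\sigma > \SpectralGap/2$ and $|\sigma|\ge C_0$, the estimate \eqref{linear:eq:ILED-near:main} follows from the coercive $\EnergyKill$-energy estimate of Corollary \ref{linear:corollary:naive-energy-estimate} since $u$ is supported away from both horizons (so $\EnergyKill$ controls the full $H^1$ norm and dominates $\MorawetzNorm_b$). The higher-order case $k>1$ follows by commuting the commuted system \eqref{linear:eq:enhanced-redshift:new-system} (Theorem \ref{linear:thm:redshift-commutation:main}), combined with an elliptic regularity step for $P_2$ as in the proof of Theorem \ref{linear:thm:ILED-nontrapping-freq:main}; the threshold $k_0$ is dictated by requiring $\SHorizonControl{\bL^{(k)}}[\Horizon]<0$ via \eqref{linear:eq:threshold-reg-def}, although this is largely moot since $u$ is supported away from the horizons.

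The main obstacle will be the construction of $\widetilde{\MorawetzVF},\tilde{\LagrangeCorr}$ realising the sum-of-squares structure at the symbol level while keeping their $\Op S^1$-orders (and their boundary contributions) compatible with the $\MorawetzNorm_b$-degeneracy; a wrong choice either produces a bulk that degenerates transversally to $\TrappedSet_b$ (killing the absorption of $\SubPConjOp_{b,a}$) or boundary fluxes of order $|\zeta|^2$ that cannot be handled by $2\Im\sigma\, \MorawetzEnergy$. The balance is delicate but is controlled by the simultaneous smallness of $a$, $\varepsilon_{\TrappedSet_{b_0}}$ and $\delta_\zeta$, together with the fact that in the \SdS{} limit the argument reduces exactly to the purely physical argument of Proposition \ref{linear:prop:ILED-near:SdS}.
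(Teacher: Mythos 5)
Your high-level plan matches the paper's actual proof essentially step for step: conjugate by $\PseudoSubPFixer$ from Lemma \ref{linear:lemma:subprincipal-symbol-control} and work with $\LinEinsteinConj_{g_b}$; construct pseudo-differential modifications $\widetilde{\MorawetzVF}, \tilde{\LagrangeCorr}$ of the \SdS{} multipliers producing a sum-of-squares principal bulk (the paper does this in Lemma \ref{linear:lemma:ILED-near:sum-of-squares}, via the Mather division theorem, which your proposal does not name but implicitly needs in order to turn the $\sigma$-dependent symbol $r-\rTrapping_b(\sigma,\FreqPhi)$ into a polynomial-in-$\sigma$ multiplier); apply the modified divergence identity of Proposition \ref{linear:prop:div-thm:PDO-modification} to $h=e^{-\ImagUnit\sigma\tStar}u$ and absorb the boundary terms via $|\Im\sigma|\le\SpectralGap$; use a $\KillT$/Gronwall estimate for $\Im\sigma>\SpectralGap/2$; and commute together with elliptic regularity for $k>1$.

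The genuine gap is in the claim that the subprincipal and lower-order bulk (and boundary) contributions ``are bounded by Cauchy--Schwarz combined with the $\delta_0$-smallness of $\SubPConjOp_{b,a}$.'' That works only for the specific term $\Re\bangle*{\SubPConjOp_{b,a}h,\MorawetzVF_b h}$, because $\MorawetzVF_b h$ is controlled by $\|h\|_{\MorawetzNorm_b}$ (Corollary \ref{linear:cor:ILED-near:alt-form-Xb}). It fails for the remaining first- and zeroth-order bulk terms --- e.g.\ $\Re\bangle*{\SubPConjOp_b h, \tilde{\mathfrak{\LagrangeCorr}}_{b_0} h}$, the commutator $[\SubPConjOp_{b,s},\mathfrak{\MorawetzVF}_{b_0}]$, and all of $\KCurrentIbP{\widetilde{\MorawetzVF},\tilde{\LagrangeCorr}}_{(1)},\KCurrentIbP{\widetilde{\MorawetzVF},\tilde{\LagrangeCorr}}_{(0)}$ --- because neither factor degenerates at $\TrappedSet_b$ while the right-hand side $\|h\|_{\MorawetzNorm_b}^2$ does; Cauchy--Schwarz gives a bound by the full $H^1\times H^1$ product which the degenerate Morawetz bulk cannot absorb, and $\delta_0$-smallness does not help since the mismatch is in the weight, not the size. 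The paper's resolution is the ``exchange degeneracy for derivatives'' device (Lemma \ref{linear:lemma:ILED-near:LoT-control:exchange-degeneracy-trick} and its supporting Lemmas \ref{linear:lemma:ILED-near:LoT-control:lowest-order-control}--\ref{linear:lemma:ILED-near:LoT-control:zero-order-mixed-term}, \ref{linear:lemma:ILED-near:LoT-control:symmetry-control}, and Lemma/Corollary \ref{linear:lemma:ILED-near:LoT-control:exchange-degeneracy-trick:order-minus-one}, \ref{linear:coro:ILED-near:LoT-control:exchange-degeneracy-trick:order-minus-one:with-L}): integrate by parts in $\p_r$ using $\p_r(r-\rTrapping_b(\sigma_i,\FreqPhi))=1$, decompose $\sigma=\sigma^{(1)}+\sigma^{(2)}$ along the two sheets of the characteristic set, and trade one derivative for a degenerating factor $(r-\rTrapping_i)$ adapted to $\MorawetzNorm_b$, iterating as needed. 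The same mechanism is required for the lower-order boundary terms in $\JCurrentIbP{\widetilde{\MorawetzVF},\tilde{\LagrangeCorr}}$ (Lemma \ref{linear:lemma:ILED-near:LoT-control:exchange-degeneracy-trick-boundary}); your proposal treats these ``exactly as in the \SdS{} paragraph'' but that argument only applies to the physical boundary terms, not to the pseudo-differential ones. Finally, your transfer from $\LinEinsteinConj_{g_b}$ back to $\LinEinstein_{g_b}$ is more delicate than ``loss of one order absorbed by $|\sigma|\ge C_0$'': it needs an elliptic splitting (the $|\sigma|\gtrsim|\xi,\FreqAngular|$ regime is handled separately), the $\InducedHk{-1}$ control on $\p_\tStar u$ of Lemma \ref{linear:lemma:ILED-near:D-t-minus-1-control}, and the Coifman--Meyer-type commutator bound, as carried out in Lemma \ref{linear:lemma:ILED-near:legal-conjugation}.
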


Like in the case for the scalar wave on
\SdS{} in Section \ref{linear:sec:ILED-trapping:SdS-toy-model}, we will
prove the resolvent estimate in \eqref{linear:eq:ILED-near:main} in two
parts. We first prove the resolvent estimate for
$\abs*{\Im\sigma}\le \SpectralGap$ for some $\SpectralGap>0$, and
then we prove the resolvent estimate for $\Im\sigma >
\frac{\SpectralGap}{2}$. Again like in the case for the scalar wave
equation on \SdS, these two estimates correspond to a Morawetz
estimate and a $\p_{\tStar}$-energy estimate respectively.

We give a brief outline of the proof:
\begin{enumerate}
  
\item We begin with proving the resolvent estimate for
  $\abs*{\Im\sigma}\le \SpectralGap$. The bulk of the proof will be
  dedicated, as in the case in Section
  \ref{linear:sec:ILED-trapping:SdS-toy-model} for the scalar wave on \SdS{},
  to handling the bulk terms in the first line of
  \eqref{linear:eq:ILED-near:combined-divergence-theorem}. The main outline
  remains similar to the approach used in the \SdS{} case and can be
  viewed as a perturbation of the proof in Lemma
  \ref{linear:lemma:ILED-near:SdS-scalar-wave:Bulk}. The key idea is to
  extract a non-negative bulk term at the principal level that
  degenerates only at the trapped set. This is done in Lemma
  \ref{linear:lemma:ILED-near:sum-of-squares}, and relies on finding suitable
  $\MorawetzVF_b$, $\LagrangeCorr_b$ to extract a sum of squares
  expression for
  \begin{equation*}
    \frac{1}{2\ImagUnit}H_{\PrinSymb_b} \MorawetzSym_b + \PrinSymb_b \LagrangeCorrSym_b,
  \end{equation*}
  the bulk term that comes out of the principal scalar-wave component
  of $\LinEinstein_{g_b}$.  To do so, we will take pseudodifferential
  modifications of the vectorfield multiplier $\MorawetzVF_{b_0}$, and
  the Lagrangian corrector $\LagrangeCorr_{b_0}$
  constructed in Lemma \ref{linear:lemma:ILED-near:SdS-scalar-wave:Bulk} that
  take into account the more complicated (in particular
  frequency-dependent) nature of trapping in \KdS{}
  to guarantee the desired degenerate ellipticity.
  
\item We will treat the remaining terms in the integration by parts
  argument  as small perturbations of the
  positive bulk we obtained in the previous step. We first control in
  Corollary \ref{linear:coro:ILED-near:SX-control} the
  remaining terms at the principal level. These come from the
  contribution of the subprincipal symbol and the pseudo-differential
  conjugation, and have symbol $\SubPConjSym_b \MorawetzSym_b$.  It
  is critical here that for $b=b_0$, this symbol can be made
  arbitrarily small by an appropriate choice of
  $\PseudoSubPFixer$. This allows us to continue to treat it as a
  small perturbation of the scalar wave for $b$ close to $ b_0$.
  
\item We will also use the degenerate ellipticity to show that the
  lower-order bulk terms are appropriately controlled in Lemma
  \ref{linear:lemma:ILED-near:LoT-control}. Unlike in the non-trapping
  regimes, here we cannot simply control these terms via a
  high-frequency argument. Since the ellipticity that we recover at
  the principal level is degenerate at $\TrappedSet_b$, we can only
  control degenerate lower-order terms. We will get around this
  difficulty by showing that we can suitably modify the lower-order
  terms so that they respect the aforementioned degeneracy at the cost
  of a derivative.

\item The final step in proving the resolvent estimate for
  $\abs*{\Im\sigma}\le \SpectralGap$ is to show that when
  $h=e^{-\ImagUnit\sigma\tStar}u$, the boundary terms themselves can
  also be controlled by the degenerate ellipticity of the bulk term
  for some $\SpectralGap>0$ sufficiently small. We emphasize
  that in this proof, we do not appeal to energy-boundedness. In fact,
  the only time we do use a $\p_{\tStar}$-energy estimate is to show
  that the energy grows at most like $e^{\epsilon \tStar}$ for
  $\epsilon\ll 1$. This is done in Lemma
  \ref{linear:lemma:ILED-near:boundary-terms}. 

\item Finally, we proving a resolvent for
  $\abs*{\Im\sigma}\le \SpectralGap$. This step makes use of a naive
  Gronwall-based energy estimate which takes advantage of the
  smallness of the subprincipal symbol when microlocalized near
  trapping.
\end{enumerate}

We divide the proof into sections as outlined above. 
\paragraph{Principal level bulk terms}

We first handle the bulk term rising from the principal scalar wave
component of $\LinEinstein_{g_b}$. The main degenerate positivity is
the following.
\begin{lemma}
  \label{linear:lemma:ILED-near:bulk-positivity}
  Let $\MorawetzVF_{b_0}$, $\LagrangeCorr_{b_0}$ be as defined in
  Lemma \ref{linear:lemma:ILED-near:SdS-scalar-wave:Bulk} (specifically, as
  defined in \eqref{linear:eq:ILED-near:SdS:X-def} and
  \eqref{linear:eq:ILED-near:SdS:q-def} respectively).

  Then for sufficiently
  small $a$, there exists some
  \begin{equation*}
    \widetilde{\MorawetzVF}\in \Op S^1 +
      \Op S^0\p_{\tStar},\qquad
    \tilde{\LagrangeCorr} \in \Op S^0 +
      \Op S^{-1}\p_{\tStar},
  \end{equation*}
  depending smoothly on $a$ such that defining 
  \begin{equation*}
    \MorawetzVF_b := \MorawetzVF_{b_0} + a\widetilde{\MorawetzVF},\qquad
    \LagrangeCorr_b:= \LagrangeCorr_{b_0} + a\tilde{\LagrangeCorr}, 
  \end{equation*}
  there exists a Hermitian (with respect to the
  $L^2(\DomainOfIntegration)$ inner product) operator
  \begin{equation*}
    \tilde{\mathfrak{\LagrangeCorr}}_{b_0}\in (a+\delta_r)\Op S^0,
  \end{equation*}
  such that 
  \begin{align*}
    \norm*{h}_{\MorawetzNorm_b(\DomainOfIntegration)}^2
    \lesssim{}&\int_{\DomainOfIntegration}\KCurrent{\MorawetzVF_{b_0},\LagrangeCorr_{b_0},0}[h]
                + a \KCurrentIbP{\widetilde{\MorawetzVF}, \tilde{\LagrangeCorr}}[h]
                - 2\Re\bangle*{\SubPConjOp_b h, \MorawetzVF_{b} h}_{L^2(\DomainOfIntegration)}
                + 2\Re\bangle*{\SubPConjOp_bh, \tilde{\mathfrak{\LagrangeCorr}}_{b_0}h}_{L^2(\DomainOfIntegration)}\notag \\
              &+ 2\evalAt*{\Re\bangle*{\SubPConjOp_{0}h, a\widetilde{\MorawetzVF}h}_{L^2(\TrappingNbhd)}}^{\tStar=\TStar}_{\tStar=0}
                + 2\evalAt*{\Re\bangle*{\SubPConjOp_bh, a\widetilde{\MorawetzVF}_0h}_{L^2(\TrappingNbhd)}}^{\tStar=\TStar}_{\tStar=0},
  \end{align*}
\end{lemma}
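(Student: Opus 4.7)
The plan is to view this lemma as the KdS perturbation of Lemma \ref{linear:lemma:ILED-near:SdS-scalar-wave:Bulk}, using the pseudo-differentially modified divergence theorem of Proposition \ref{linear:prop:div-thm:PDO-modification} to correct for the $O(a)$ frequency-dependent drift of the trapped set from $\{r=3M\}$ to $\TrappedSet_b = \{r=\rTrapping_b(\sigma,\FreqPhi)\}$. The starting identity will be \eqref{linear:eq:ILED-near:combined-divergence-theorem} applied to $\MorawetzVF_b = \MorawetzVF_{b_0} + a\widetilde{\MorawetzVF}$ and $\LagrangeCorr_b = \LagrangeCorr_{b_0} + a\tilde{\LagrangeCorr}$; the right-hand side contains the SdS bulk $\int \KCurrent{\MorawetzVF_{b_0},\LagrangeCorr_{b_0},0}[h]$, the IbP bulk $a\KCurrentIbP{\widetilde{\MorawetzVF},\tilde{\LagrangeCorr}}[h]$, a subprincipal contribution $-\Re\bangle{\SubPConjOp_b h,(\MorawetzVF_{b_0}+\LagrangeCorr_{b_0})h}$, and boundary terms. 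The claimed inequality is obtained by rearranging: everything but the Morawetz control of $h$ is moved to the right.

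The heart of the proof is the choice of $\widetilde{\MorawetzVF},\tilde{\LagrangeCorr}$. From Lemma \ref{linear:lemma:divergence-prop:freq-formulation} the principal symbol of the full bulk is
\[
  \MorawetzBulkSym_b := \tfrac{1}{2}H_{\PrinSymb_b}\MorawetzSym_b + \PrinSymb_b\bigl(\LagrangeCorrSym_{b_0}+a\tilde{\LagrangeCorrSym}\bigr),
  \qquad \MorawetzSym_b = f_{b_0}(r)\xi + a\widetilde{\MorawetzSym}.
\]
For $b=b_0$, Lemma \ref{linear:lemma:ILED-near:SdS-scalar-wave:Bulk} already writes $\MorawetzBulkSym_{b_0}$ as a sum of squares with the radial-derivative square non-degenerate and the other squares carrying the factor $(r-3M)^2$, matching the $\MorawetzNorm_{b_0}$ norm. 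I will factor $\PrinSymb_b = g^{tt}(\sigma-\sigma_1(D,x))(\sigma-\sigma_2(D,x))$ and recall that on the characteristic set $r-3M$ agrees with $\ell_i$. The design of $\widetilde{\MorawetzSym}$ and $\tilde{\LagrangeCorrSym}$ will shift the two degenerate factors $(r-3M)\sigma\pm \cdots$ appearing in \eqref{linear:eq:ILED-near:sum-of-squares:SdS} into their KdS counterparts $\ell_1(\sigma-\sigma_2)$ and $\ell_2(\sigma-\sigma_1)$, while leaving the $\abs{\partial_r h}^2$ and $\abs{h}^2$ squares essentially unchanged. Because $\ell_i - (r-3M) = O(a)$, this correction is indeed in $aS^0$ (after adjusting by the symbol classes in \eqref{linear:eq:div-thm:PDO:multipler-def}), and a suitable choice of $\tilde{\LagrangeCorrSym}$ absorbs the mismatch of the factor $g^{tt}$. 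This gives the pointwise symbol inequality
\[
  \MorawetzBulkSym_b \gtrsim \bigl|\ell_1(\sigma-\sigma_2)\bigr|^2 + \bigl|\ell_2(\sigma-\sigma_1)\bigr|^2 + \abs{\xi}^2 + 1,
\]
which, once Theorem \ref{linear:thm:sym-ineq-to-op-est} is invoked, upgrades to the $L^2$ operator inequality
\[
  \int_{\DomainOfIntegration}\bigl(\KCurrent{\MorawetzVF_{b_0},\LagrangeCorr_{b_0},0}[h]+a\KCurrentIbP{\widetilde{\MorawetzVF},\tilde{\LagrangeCorr}}[h]\bigr)
  \gtrsim \norm{h}_{\MorawetzNorm_b(\DomainOfIntegration)}^2 - C\norm{h}_{H^{-1}}^2,
\]
modulo the subprincipal and boundary terms already listed.

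The remaining (subprincipal) contribution $-2\Re\bangle{\SubPConjOp_b h,\MorawetzVF_b h}$ must be kept on the right-hand side, but I will introduce the Hermitian correction $\tilde{\mathfrak{\LagrangeCorr}}_{b_0}\in(a+\delta_r)\Op S^0$ to absorb those pieces of the subprincipal pairing that are forced on us by integration by parts against $\tilde{\LagrangeCorr}$: concretely, the terms $\bangle{\SubPConjOp_b h,\tilde{\LagrangeCorr}h}$ appearing in \eqref{linear:eq:ILED-near:ILED-int-by-parts-decomposed:decomposition-def} are rewritten symmetrically via the self-/skew-adjoint splitting \eqref{linear:eq:SubPConjOp:sym-skew-sym-split-def}, producing exactly a pairing of the form $\bangle{\SubPConjOp_b h,\tilde{\mathfrak{\LagrangeCorr}}_{b_0}h}$ where $\tilde{\mathfrak{\LagrangeCorr}}_{b_0}$ is Hermitian and has size $a+\delta_r$ (the $\delta_r$ coming from the support size of $\LagrangeCorr_1$ in \eqref{linear:eq:ILED-near:SdS:q1-def}). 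The boundary terms in \eqref{linear:eq:ILED-near:J-K-def} involving the operator $\SubPConjOp_0$ give precisely the two boundary summands in the statement.

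The main obstacle is this last step: on KdS we cannot simply discard the subprincipal-skew part. Its contribution is morally of principal size, and only the microlocal smallness \eqref{linear:eq:subprincipal-symbol-control} together with the $O(a)$ size of the $(\widetilde{\MorawetzVF},\tilde{\LagrangeCorr})$ correction keeps it from destroying the positivity. Thus the proof must keep the subprincipal pairing \emph{explicit} on the right-hand side, rather than absorbing it; it is then handled in the subsequent corollary (Corollary \ref{linear:coro:ILED-near:SX-control}) by exploiting $\varepsilon_{\TrappedSet_{b_0}}\ll 1$ and Lemma \ref{linear:lemma:ILED-near:s-decomp}. With this accounting the inequality in the statement follows.
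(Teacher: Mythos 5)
Your high-level strategy is right: correct the SdS multiplier pair by $O(a)$ pseudodifferential symbols so the new bulk degenerates exactly on $\TrappedSet_b$, invoke Lemma \ref{linear:lemma:ILED-near:sum-of-squares} (via Corollary \ref{linear:coro:ILED-near:SX-control}) and Theorem \ref{linear:thm:sym-ineq-to-op-est} for positivity, and keep the subprincipal pairing explicit on the right. But the mechanism you propose for producing $\tilde{\mathfrak{\LagrangeCorr}}_{b_0}$ (and, with it, the boundary summands) is not what the proof requires and would not deliver the stated structure.

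You assert that $\tilde{\mathfrak{\LagrangeCorr}}_{b_0}$ arises by rewriting the pairing $\bangle{\SubPConjOp_b h,\tilde{\LagrangeCorr}h}$ via the Hermitian/skew-Hermitian splitting. That term sits inside $a\KCurrentIbP{\widetilde{\MorawetzVF},\tilde{\LagrangeCorr}}_{(1)}[h]$ (see \eqref{linear:eq:ILED-near:ILED-int-by-parts-decomposed:decomposition-def}), carries an overall factor $a$, and is handled downstream in Lemma \ref{linear:lemma:ILED-near:LoT-control} --- it is not the source of $\tilde{\mathfrak{\LagrangeCorr}}_{b_0}$, and in particular would not produce the $\delta_r$ contribution to its size. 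The operator $\tilde{\mathfrak{\LagrangeCorr}}_{b_0}$ is instead the Hermitian remainder when you split the \emph{SdS} multiplier as
$\MorawetzVF_{b_0}+\LagrangeCorr_{b_0}=\mathfrak{\MorawetzVF}_{b_0}+\tilde{\mathfrak{\LagrangeCorr}}_{b_0}$ with $\mathfrak{\MorawetzVF}_{b_0}=\MorawetzVF_{b_0}+\tfrac12\nabla_{g_b}\cdot\MorawetzVF_{b_0}$ anti-Hermitian and $\tilde{\mathfrak{\LagrangeCorr}}_{b_0}=\LagrangeCorr_{b_0}-\tfrac12\nabla_{g_b}\cdot\MorawetzVF_{b_0}$ Hermitian. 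Integrating by parts once then gives
\begin{equation*}
2\Re\bangle*{\SubPConjOp_b h,(\MorawetzVF_{b_0}+\LagrangeCorr_{b_0})h}_{L^2(\DomainOfIntegration)}
= -\bangle*{\left(\mathfrak{\MorawetzVF}_{b_0}\SubPConjOp_b+\SubPConjOp_b\mathfrak{\MorawetzVF}_{b_0}\right)h,h}_{L^2(\DomainOfIntegration)}
+2\Re\bangle*{\SubPConjOp_b h,\tilde{\mathfrak{\LagrangeCorr}}_{b_0}h}_{L^2(\DomainOfIntegration)},
\end{equation*}
and $\tilde{\mathfrak{\LagrangeCorr}}_{b_0}=\tfrac12(\nabla_{g_b}-\nabla_{g_{b_0}})\cdot\MorawetzVF_{b_0}+\LagrangeCorr_1$, which is $O(a+\delta_r)$. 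This step is not cosmetic: the anticommutator has principal symbol $2\SubPConjSym_b\MorawetzSym_{b_0}$, which is the object Corollary \ref{linear:coro:ILED-near:SX-control} bounds by $\tfrac12\sum_j\SquareDecomp_j^2$, and Theorem \ref{linear:thm:sym-ineq-to-op-est} is applied to the sum of this with $\KCurrent{\MorawetzVF_{b_0},\LagrangeCorr_{b_0},0}+a\KCurrentIbP{\widetilde{\MorawetzVF},\tilde{\LagrangeCorr}}$. Without the anti-Hermitian/Hermitian decomposition of $\MorawetzVF_{b_0}+\LagrangeCorr_{b_0}$ you never isolate that anticommutator, and the route through Theorem \ref{linear:thm:sym-ineq-to-op-est} breaks. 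Likewise, the two boundary summands in the statement do not come from $\JCurrentIbP{\widetilde{\MorawetzVF},\tilde{\LagrangeCorr}}$ in \eqref{linear:eq:ILED-near:J-K-def}; they are the boundary contributions of the separate integration by parts of $\SubPConjOp_b$ against $a\widetilde{\MorawetzVF}$, which is why both carry an explicit factor $a$.
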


The main key to proving Lemma \ref{linear:lemma:ILED-near:bulk-positivity}
will be an appropriate degenerate positivity in principal-order bulk
terms for an appropriately chosen multiplier. This stems from the fact
that the operator $\LinEinstein_{g_b}$ is strongly hyperbolic, and the
positive bulk term gained from commuting with the scalar wave
operator. The proof of the lemma below follows closely that of Tataru
and Tohaneanu in the Kerr setting in Lemma 4.3 of
\cite{tataru_local_2010}.
\begin{lemma}
  \label{linear:lemma:ILED-near:sum-of-squares}
  Let
  \begin{equation*}
    \MorawetzSym_{b_0} = \ImagUnit f_{b_0}\xi,\qquad
    \LagrangeCorrSym_{b_0} = \LagrangeCorr_{b_0} - \frac{1}{2}\nabla_{g_{b_0}}\cdot\MorawetzVF_{b_0},
  \end{equation*}
  be symbols corresponding to the choice of vectorfield multiplier and
  Lagrangian corrector in Proposition
  \ref{linear:prop:ILED-near:SdS}\footnote{We have included the divergence
    term in the Lagrange corrector for computational convenience,
    since it effectively plays a similar role.}.  Then for 
  sufficiently small $a$, there exist smooth homogeneous symbols
  $\tilde{\MorawetzSym}\in S^1 + \sigma S^0$,
  $\tilde{\LagrangeCorrSym}\in S^0+\sigma S^{-1}$ that depend smoothly
  on $a$ such that defining,
  \begin{equation*}
    \MorawetzSym_b = \MorawetzSym_{b_0} + a\tilde{\MorawetzSym},\qquad \LagrangeCorrSym_b = \LagrangeCorrSym_{b_0}+ a\tilde{\LagrangeCorrSym},
  \end{equation*}
  for $|r-3M|< \delta_r$, we have the following sum of squares
  representation
  \begin{equation}
    \label{linear:eq:ILED-near:sum-of-squares:KdS}
    \rho^2\left(
      \frac{1}{2\ImagUnit}H_{\PrinSymb_b} \MorawetzSym_{b}
      + \PrinSymb_b  
        \LagrangeCorrSym_{b}  
    \right) = \sum_{j=1}^7\SquareDecomp_j^2
  \end{equation}
  where $\SquareDecomp_j\in S^1 + \sigma S^0$. Moreover,
  \begin{enumerate}
  \item The decomposition  \eqref{linear:eq:ILED-near:sum-of-squares:KdS} 
    extends the decomposition in \eqref{linear:eq:ILED-near:sum-of-squares:SdS}
    in the sense that
    \begin{equation*}
      \begin{split}
        &(\SquareDecomp_1, \SquareDecomp_2, \SquareDecomp_3, \SquareDecomp_4, \SquareDecomp_5) \mod a(S^1_{hom} + \sigma S^0_{hom}) \\
        ={}& \left(
          \sqrt{\delta_1}\alpha_{b_0}\sigma,
          \beta_{b_0}\xi,
          \sqrt{\frac{\mu_{b_0}(1-\delta_1)}{r^2}}\alpha_{b_0}\FreqTheta,
          \sqrt{\frac{\mu_{b_0}(1-\delta_1)}{r^2}}\alpha_{b_0}\FreqPhi,
          \sqrt{\frac{\mu_{b_0}(1-\delta_1)}{r^2}}\alpha_{b_0}\xi
        \right)
        ,
      \end{split}      
    \end{equation*}
    and
    \begin{equation*}
      (\SquareDecomp_6,\SquareDecomp_7) \in \sqrt{a}(S^1_{hom}+\sigma S^0_{hom}). 
    \end{equation*}
  \item $\{\SquareDecomp_j\}_{1\le j\le 7}$ is elliptically equivalent to the family
    of symbols $(\ell_2(\sigma-\sigma_1),\ell_1(\sigma-\sigma_2),\xi)$
    in the sense that there exists a symbol valued matrix $\mathbb{M}\in
    M^{7\times 3}(S^0)$ with maximum rank $3$ everywhere such that
    \begin{equation*}
      \SquareDecomp = \mathbb{M} \mathfrak{b},\quad \mathfrak{b}=
      \begin{pmatrix}
        \ell_2(\sigma-\sigma_1)\\
        \ell_1(\sigma-\sigma_2)\\
        \xi
      \end{pmatrix}.
    \end{equation*}
  \end{enumerate}
\end{lemma}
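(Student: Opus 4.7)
The plan is to obtain \eqref{linear:eq:ILED-near:sum-of-squares:KdS} by a perturbative construction treating the KdS situation as a small deformation of SdS. At the symbol level, combining \eqref{linear:eq:ILED-near:sum-of-squares:SdS} with Lemma \ref{linear:lemma:divergence-prop:freq-formulation} already yields the identity
\begin{equation*}
\rho_{b_0}^2\!\left(\tfrac{1}{2\ImagUnit}H_{\PrinSymb_{b_0}}\MorawetzSym_{b_0} + \PrinSymb_{b_0}\LagrangeCorrSym_{b_0}\right) = \sum_{j=1}^{5}(\SquareDecomp_j^{(0)})^2,
\end{equation*}
where $\SquareDecomp_j^{(0)}$ are precisely the five symbols listed in the first bullet of the \emph{moreover} clause (the square root of each term in \eqref{linear:eq:ILED-near:sum-of-squares:SdS}). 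The idea is to construct $\tilde{\MorawetzSym}$ and $\tilde{\LagrangeCorrSym}$ so that after setting $\MorawetzSym_b = \MorawetzSym_{b_0}+a\tilde{\MorawetzSym}$, $\LagrangeCorrSym_b = \LagrangeCorrSym_{b_0}+a\tilde{\LagrangeCorrSym}$, the $O(a)$-discrepancy on the left of \eqref{linear:eq:ILED-near:sum-of-squares:KdS} can be absorbed either into small perturbations of the existing five squares (giving the $a(S^1_{\mathrm{hom}}+\sigma S^0_{\mathrm{hom}})$ error allowed in bullet (1)) or into two genuinely new squares of size $\sqrt{a}$.

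Concretely, I will expand every ingredient to first order in $a$, writing $\PrinSymb_b = \PrinSymb_{b_0} + a\mathfrak{p}_1 + O(a^2)$, $H_{\PrinSymb_b} = H_{\PrinSymb_{b_0}} + a H_1 + O(a^2)$, and $\rho_b^2 = \rho_{b_0}^2 + a\mathfrak{r}_1 + O(a^2)$ using the explicit form \eqref{linear:eq:KdS:null-inv-metric}. This gives
\begin{equation*}
\rho_b^2\!\left(\tfrac{1}{2\ImagUnit}H_{\PrinSymb_b}\MorawetzSym_b + \PrinSymb_b\LagrangeCorrSym_b\right) = \sum_{j=1}^{5}(\SquareDecomp_j^{(0)})^2 + a\,\mathfrak{E}\big(\tilde{\MorawetzSym},\tilde{\LagrangeCorrSym}\big) + O(a^2),
\end{equation*}
with $\mathfrak{E}$ a quadratic form in $(\sigma,\xi,\FreqTheta,\FreqPhi)$ that is affine in the unknowns. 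Following the strategy of Tataru--Tohaneanu \cite{tataru_local_2010} in the Kerr setting, I use the factorization $\PrinSymb_b = g^{tt}(\sigma-\sigma_1)(\sigma-\sigma_2)$ and the smoothness of $\rTrapping_b(\sigma,\FreqPhi)$ (Lemma \ref{linear:lemma:trapping:KdS}) to decompose $\mathfrak{E}$ modulo multiples of $\PrinSymb_b$ and of the existing squares into a residual cross-term of the schematic form $c_1\,\xi(\sigma-\sigma_1) + c_2\,\xi(\sigma-\sigma_2)$, for smooth real coefficients $c_i$. The correction $\tilde{\LagrangeCorrSym}$ is chosen to eat the $\PrinSymb_b$-proportional remainder, and $\tilde{\MorawetzSym}$ is chosen so that $\tfrac{1}{2\ImagUnit}H_{\PrinSymb_{b_0}}\tilde{\MorawetzSym}$ eliminates everything except these cross-terms.

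Once $\mathfrak{E}$ is reduced to a pair of cross-terms of the above form, I complete the square: for each $i$ the term $a\,c_i\,\xi(\sigma-\sigma_i)$ is written as $(\sqrt{a}\,d_i(\sigma-\sigma_i) + \sqrt{a}\,d_i'\xi)^2 - $ (pure squares in $\sigma-\sigma_i$ and $\xi$), where the subtracted squares are $O(a)$ perturbations of $\SquareDecomp_1^{(0)}$, $\SquareDecomp_2^{(0)}$ and $\SquareDecomp_5^{(0)}$ and hence are absorbed into the new $\SquareDecomp_1,\ldots,\SquareDecomp_5$ as permitted by bullet (1). The two completed squares become $\SquareDecomp_6,\SquareDecomp_7 \in \sqrt{a}(S^1_{\mathrm{hom}}+\sigma S^0_{\mathrm{hom}})$, proving bullet (1). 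For bullet (2), the elliptic equivalence with $\mathfrak{b} = (\ell_2(\sigma-\sigma_1),\ell_1(\sigma-\sigma_2),\xi)^{\top}$ is then transparent: by construction each $\SquareDecomp_j$ is an $S^0$-linear combination of the three entries of $\mathfrak{b}$ (since $\ell_i$ agrees with $r-3M$ modulo $O(a)$, and $\SquareDecomp_j^{(0)}$ for $j=1,\ldots,5$ is proportional to one of $(r-3M)\sigma$, $\xi$, $(r-3M)\FreqTheta$, $(r-3M)\FreqPhi$, all expressible via $\mathfrak{b}$ on the characteristic set); the corresponding matrix $\mathbb{M}\in M^{7\times 3}(S^0)$ is read off from the explicit formulas, and its maximal rank follows from the non-degeneracy of the SdS limit.

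The main obstacle will be the explicit computation of $\mathfrak{E}$ and the verification that its cross-term structure is indeed the canonical one claimed above, with the correct signs so that completion of squares produces a genuinely positive semi-definite increment. This reduces to a careful first-order $a$-expansion of $\PrinSymb_b$ and $H_{\PrinSymb_b}$ acting on $\MorawetzSym_{b_0}$, combined with the implicit-function characterization of $\rTrapping_b$ from Lemma \ref{linear:lemma:trapping:KdS}. The non-degeneracy $1-9\Lambda M^2>0$ together with the uniform instability of $\TrappedSet_b$ will ensure that the coefficients $c_i$ and the resulting completed-square coefficients $d_i,d_i'$ remain smooth and bounded uniformly for $b$ in a small neighborhood $\BHParamNbhd$ of $b_0$, closing the construction.
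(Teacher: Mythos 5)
Your proposal is aligned with the paper's high-level strategy — both treat $g_b$ as a perturbation of $g_{b_0}$, both use the factorization of the characteristic set via $\sigma=\sigma_1,\sigma_2$, and both follow Tataru--Tohaneanu's template — but there is a genuine gap at the single most delicate step of the paper's argument.

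The crux is your construction of $\tilde{\MorawetzSym}$. You say it is "chosen so that $\tfrac{1}{2\ImagUnit}H_{\PrinSymb_{b_0}}\tilde{\MorawetzSym}$ eliminates everything except these cross-terms", but this phrases a transport problem whose solvability in the required class $S^1 + \sigma S^0$ is exactly what needs to be proved, not assumed. The geometrically natural KdS extension of $\MorawetzSym_{b_0}$ is $\MorawetzSym_b' = \ImagUnit\Delta_b\rho_b^{-4}\big(r-\rTrapping_b(\sigma,\FreqPhi)\big)\xi$ — this is what correctly vanishes on $\TrappedSet_b$ and whose $H_{\RescaledPrinSymb_b}$-derivative is positive off the trapped set (cf.\ \eqref{linear:eq:Trapping-KdS:H2pr-at-Trapping}). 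But $\rTrapping_b(\sigma,\FreqPhi)$ is smooth, not polynomial, in $\sigma$, so $\MorawetzSym_b'$ is not of the required form $\ImagUnit(S^1+\sigma S^0)$ and cannot serve as the multiplier in the integration-by-parts/divergence argument. The paper resolves this by applying the Mather (Malgrange) division theorem (Theorem \ref{linear:thm:Mather-division}) to write $\MorawetzSym_b' - \MorawetzSym_{b_0} = a(\tilde{\MorawetzSym}_1 + \tilde{\MorawetzSym}_0\sigma) + a\rAux\PrinSymb_b$, keeping only the polynomial part as $\tilde{\MorawetzSym}$; the discarded $\PrinSymb_b$-proportional piece is precisely what gets absorbed by adjusting $\tilde{\LagrangeCorrSym}$. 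Your proposal never uses this device, and without it there is no reason your $\MorawetzSym_b$ should vanish on $\TrappedSet_b$ — which is indispensable, since every square on the right-hand side of \eqref{linear:eq:ILED-near:sum-of-squares:KdS} does vanish there.

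Two further issues. First, your reduction of the $O(a)$ error $\mathfrak{E}$ modulo $\PrinSymb_b$ and the existing squares to pure cross-terms $c_1\xi(\sigma-\sigma_1)+c_2\xi(\sigma-\sigma_2)$ is asserted, not derived; a priori $\mathfrak{E}$ contains $\sigma\FreqTheta$, $\xi\FreqAngular$, etc. The paper gets to the clean form by a second application of Mather division to reduce $H_{\RescaledPrinSymb_b}(\MorawetzSym_{b_0}+a\tilde{\MorawetzSym})$ from cubic to linear in $\sigma$ plus an $e(\sigma-\sigma_1)(\sigma-\sigma_2)$ remainder, and then solves the $2\times2$ linear system obtained by restricting to the two sheets $\sigma=\sigma_i$ of the characteristic set — this yields the $\gamma_1,\gamma_2$ coefficients explicitly and makes the cross-term structure manifest; the $\FreqTheta$-dependence enters only through $\sigma_i$ by axisymmetry/hidden symmetry, which is why no spurious $\sigma\FreqTheta$ terms survive. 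Second, the lemma claims an exact sum-of-squares identity for each fixed small $a$, whereas your calculation is a Taylor expansion "to first order in $a$" and leaves an unaddressed $O(a^2)$ remainder. The paper's Mather division is not a Taylor expansion: it produces exact decompositions with smooth-in-$a$ remainders, which is what delivers the exact identity \eqref{linear:eq:ILED-near:sum-of-squares:KdS}.
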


\begin{proof}
  As discussed above, the main idea is to find appropriate
  pseudo-differential modifications of the vectorfield
  $\MorawetzVF_{b_0}$ and the Lagrangian correction
  $\LagrangeCorr_{b_0}$ that are adapted to the perturbed trapping
  dynamics of \KdS.
  
  For the sake of simplifying some of our ensuing calculations, define
  \begin{equation*}
    \LagrangeCorrSym'_{b_0} = \LagrangeCorrSym_{b_0} - 2\PoissonB{\ln
      \rho_b, \MorawetzSym_{b_0}}, \qquad \tilde{\LagrangeCorrSym}'_{b_0} =
    \tilde{\LagrangeCorrSym}_{b_0} -2\PoissonB{\ln\rho_b,
      \tilde{\MorawetzSym}}, 
  \end{equation*}
  so that
  \begin{equation*}
    \rho^2_b\left(
      \frac{1}{2\ImagUnit}H_\PrinSymb
      (\MorawetzSym_{b_0}+a\tilde{\MorawetzSym})
      +(\LagrangeCorrSym_{b_0}+a\tilde{\LagrangeCorrSym})\PrinSymb_b  
    \right)
    = \frac{1}{2\ImagUnit}H_{\RescaledPrinSymb} (\MorawetzSym_{b_0}+a\tilde{\MorawetzSym}) +
    \left(\tilde{\LagrangeCorrSym}'
      +a\tilde{\LagrangeCorrSym}'\right)(\RescaledPrinSymb_b).
  \end{equation*}
  We first choose $\tilde{\MorawetzSym}$ so that
  $H_{\RescaledPrinSymb_b} (\MorawetzSym_{b_0}+a\tilde{\MorawetzSym})$  
  vanishes at the trapped set $\TrappedSet_b$. The most immediate
  \KdS{} extension of the choice of $\MorawetzSym_{b_0}$ is the symbol
  \begin{equation*}
    \MorawetzSym_{b}':=\ImagUnit \Delta_{b}\rho_b^{-4}\left(r-\rTrapping_b(\sigma,\FreqPhi)\right)\xi
    = \ImagUnit\frac{r-\rTrapping_b(\sigma,\FreqPhi)}{2\rho_b^4}H_{\RescaledPrinSymb_{b}}r.
  \end{equation*}
  This symbol would clearly extend our choice in \SdS{} in the sense
  that
  \begin{equation*}
    \MorawetzSym_{b_0}' = \MorawetzSym_{b_0},
  \end{equation*}
  and moreover, $\MorawetzSym_{b}'$ is well defined and smooth in a
  neighborhood of the trapped set. We can calculate that on the
  characteristic set $\PrinSymb_b=0$, we have
  \begin{equation*}
    2 H_{\RescaledPrinSymb}\MorawetzSym_b' =
    \left(\frac{1}{\rho_b^4} - \frac{4(r-\rTrapping_b)\p_r\rho_b}{\rho_b^5}\right)(H_{\RescaledPrinSymb_b} r)^2
    + \frac{r-\rTrapping_b(\sigma,\FreqPhi)}{\rho_b^4}H_{\RescaledPrinSymb_b}^2r.
  \end{equation*}
  Recall from \eqref{linear:eq:Trapping-KdS:H2pr-at-Trapping} that for
  $\PrinSymb=0$, near $r=3M$, we have that
  \begin{equation*}
    H_{\RescaledPrinSymb_b}^2r
    = 2\Delta_b\p_r\left(\frac{(1+\lambda_b)^2}{\Delta_b}\left((r^2+a^2)\sigma +a\FreqPhi\right)\right).
  \end{equation*}
  Since $\rTrapping_b$ is the unique minimum of
  $\frac{(1+\lambda_b)^2}{\Delta_b}\left((r^2+a^2)\sigma
      +a\FreqPhi\right)$, and we are in a $\delta_r$
  neighborhood of $r=3M$, there exist
  positive symbols $\alpha,\beta\in \SymClass^0_{\hom}$ such that on
  $\PrinSymb_b=0$, near $r=3M$,
  \begin{equation}
    \label{linear:eq:ILED-near:sum-of-squares:KdS-ILED-sym-characteristic}
    H_{\RescaledPrinSymb}\MorawetzSym_b' =
    \alpha^2(r,\sigma, \FreqPhi) \sigma^2(r-\rTrapping_b)^2 + \beta^2(r,\sigma, \FreqPhi)\xi^2.
  \end{equation}  
  Unfortunately, the problem with $\MorawetzSym_b'$ is that it is not
  a polynomial in $\sigma$, and thus cannot be directly used in
  conjunction with our integration-by-parts or divergence theorem
  method to produce a Morawetz estimate. To overcome this difficulty,
  recall that we defined $\MorawetzSym_b$ so that it is smooth in $a$,
  and so that
  \begin{equation*}
    \MorawetzSym_b'-\MorawetzSym_{b_0}\in aS^1_{hom}.
  \end{equation*}
  Thus the Mather division theorem (Theorem
  \ref{linear:thm:Mather-division}) gives us 
  \begin{equation}
    \label{linear:eq:ILED-near:x-div-theorem}
    \frac{1}{\ImagUnit}\left(\MorawetzSym_b' - \MorawetzSym_{b_0}\right)
    = a\left(\tilde{\MorawetzSym}_1(r,\theta;\xi,\FreqTheta, \FreqPhi)
      + \tilde{\MorawetzSym}_0(r,\theta;\xi, \FreqTheta, \FreqPhi)\sigma\right)
    + a \rAux(r,\theta;\xi,\FreqTheta, \FreqPhi)\PrinSymb_b,
  \end{equation}
  where $\tilde{\MorawetzSym}_i\in S^i_{hom}$ and $\rAux \in S^{-1}_{hom}$.
  Now, we
  define
  \begin{equation*}
    \frac{1}{\ImagUnit}\tilde{\MorawetzSym} = \tilde{\MorawetzSym}_1+\tilde{\MorawetzSym}_0\sigma,
  \end{equation*}
  so that on $\PrinSymb_b=0$,
  \begin{equation*}
    \MorawetzSym_b = \MorawetzSym_{b_0} + a\tilde{\MorawetzSym} = \MorawetzSym_b'.
  \end{equation*}
  Thus $\MorawetzSym_b$ is a symbol which is a polynomial in $\sigma$
  and moreover vanishes at the trapped set $\TrappedSet_{b}$.
  
  \textit{A priori}, $H_\RescaledPrinSymb \tilde{\MorawetzSym}_b$ is a
  third degree polynomial in $\sigma$. Applying the Mather division
  theorem (Theorem \ref{linear:thm:Mather-division}) again yields that there
  exist some $\gamma_1\in S^1, \gamma_2 \in S^2$,
  $f_0\in S^0, f_{-1}\in \sigma S^{-1}$ such that
  \begin{equation*}
    \frac{1}{2\ImagUnit\rho_b^2} H_{\RescaledPrinSymb_b}(\MorawetzSym_{b_0}+a\tilde{\MorawetzSym})
    + \LagrangeCorrSym_{b_0}' (\RescaledPrinSymb_b)
    = \gamma_2+\gamma_1\sigma  + \left(
      e_{b_0} + a(f_0+f_{-1}\sigma)\right)(\sigma-\sigma_1)(\sigma-\sigma_2),
  \end{equation*}
  observing that
  \begin{equation*}
    e_{b_0} := \delta_1\alpha_{b_0}^2. 
  \end{equation*}
  is the coefficient for $\sigma^2$ in the expression for
  $\frac{1}{2\ImagUnit}H_{\PrinSymb_{b_0}}\MorawetzSym_{b_0} +
  \LagrangeCorrSym_{b_0}\PrinSymb_{b_0}$ (see \eqref{linear:eq:ILED-near:sum-of-squares:SdS}).                 
  It now remains to demonstrate that $\gamma_2+\gamma_1\sigma
  +e_{b_0}(\sigma-\sigma_1)(\sigma-\sigma_2)$ can be expressed as a
  sum of squares up to some error in
  $a(S^0+ S^{-1}\sigma)\PrinSymb_b$. If this were true, we could write
  \begin{equation}
    \label{linear:eq:ILED-near:sum-of-squares:aux-1}
    \gamma_2+\gamma_1\sigma + e_{b_0}(\sigma-\sigma_1)(\sigma-\sigma_2) = \sum \SquareDecomp_j^2 + a(g_0+g_{-1}\sigma)(\sigma-\sigma_1)(\sigma-\sigma_2). 
  \end{equation}
  We could then define $\tilde{\LagrangeCorrSym}$ such that
  \begin{equation*}
    \tilde{\LagrangeCorrSym}' = -2\left(f_0+g_0+(f_{-1}+g_{-1})\sigma\right),
  \end{equation*}
  so that the $a(S^0+\sigma S^{-1})\PrinSymb_b$ terms are all canceled.

  We now return to showing
  (\ref{linear:eq:ILED-near:sum-of-squares:aux-1}). Recall that on
  $\PrinSymb_b=0$,
  \begin{equation*}
    H_{\RescaledPrinSymb_b}(\MorawetzSym_{b_0} + a\tilde{\MorawetzSym}) = H_{\RescaledPrinSymb_b}\MorawetzSym_b'.
  \end{equation*}
  As a result of
  (\ref{linear:eq:ILED-near:sum-of-squares:KdS-ILED-sym-characteristic}), we
  now have that if $\sigma=\sigma_i$, which in particular implies that $\PrinSymb_b=0$,
  \begin{equation*}
    \gamma_2+\gamma_1\sigma =\alpha^2(r, \sigma, \FreqPhi)\sigma^2(r-\rTrapping_b)^2 + \beta^2(r,\sigma,\FreqPhi)\xi^2.
  \end{equation*}
  We can solve for $\gamma_2, \gamma_1$ explicitly now by considering
  the two-dimensional system of equations
  \begin{equation*}
    \begin{split}
      \gamma_2+\gamma_1\sigma_i &= \frac{1}{4}\alpha_i^2(\sigma_1-\sigma_2)^2 + \beta_i^2\xi^2,\\
      \alpha_i &= \frac{2|\sigma_i|}{\sigma_1-\sigma_2}\alpha(r,
      \sigma_i, \FreqPhi)(r-\rTrapping_b(\sigma_i, \FreqPhi))\in S^0,\\
      \beta_i &= \beta(r, \sigma_i, \FreqPhi) \in S^0.
    \end{split}
  \end{equation*}
  Solving the system yields 
  \begin{equation}
    \begin{split}
      \label{linear:eq:ILED-near:sum-of-squares:gamma-def}
      \gamma_2 &= \frac{1}{4}(\sigma_1-\sigma_2)(\alpha_2^2\sigma_1 - \alpha_1^2\sigma_2) + \frac{\sigma_1\beta_2^2 - \sigma_2\beta_1^2}{\sigma_1-\sigma_2}\xi^2,  \\
      \gamma_1 &= \frac{1}{4}(\sigma_1-\sigma_2)(\alpha_1^2-\alpha_2^2)+\frac{\beta_1^2 - \beta_2^2}{\sigma_1-\sigma_2}\xi^2. 
    \end{split} 
  \end{equation}
  We first add together the first two terms in $\gamma_i$ to see that
  \begin{equation}
    \label{linear:eq:ILED-near:sum-of-squares:gamma-sum-first-terms}
    \begin{split}
      (\sigma_1-\sigma_2)\left(\alpha_2^2\sigma_1 -\alpha_1^2\sigma_2 + \sigma(\alpha_1^2-\alpha_2^2)\right)
      ={}&(1-{\delta_1})(\alpha_1(\sigma-\sigma_2)-\alpha_2(\sigma-\sigma_1))^2\\
      &+ {\delta_1}\left(\alpha_1(\sigma-\sigma_2)+\alpha_2(\sigma-\sigma_1)\right)^2\\
      &- 4e_b(\sigma-\sigma_1)(\sigma-\sigma_2),
    \end{split}
  \end{equation}
  where
  \begin{equation*}
    e_b = \frac{(\alpha_1-\alpha_2)^2}{4}+ {\delta_1}\alpha_1\alpha_2. 
  \end{equation*}
  Recall that in $g_{b_0}$, $\alpha_1=\alpha_2=\alpha_{b_0}$,
  $\sigma_2=-\sigma_1$, and that $\LagrangeCorrSym_{b_0} = \delta_1\alpha_{b_0}^2$.
  This implies that
  \begin{equation}
    \label{linear:eq:ILED-near:sum-of-squares:e-diff}
    e_b-e_{b_0}\in a(S^0 + \sigma S^{-1})
  \end{equation}
  as desired.
  We now add together the second terms in the $\gamma_i$ given in \eqref{linear:eq:ILED-near:sum-of-squares:gamma-def}
  \begin{equation}
    \label{linear:eq:ILED-near:sum-of-squares:gamma-sum-second-terms}
    \begin{split}
      \frac{\sigma_1\beta_2^2-\sigma_2\beta_1^2}{\sigma_1-\sigma_2}
      + \sigma\frac{\beta_1^2-\beta_2^2}{\sigma_1-\sigma_2}
      ={}&
      \frac{1}{2}\left(\beta_1^2+\beta_2^2 -Ca\right) +
      \frac{(Ca-\beta_2^2+\beta_1^2)(\sigma-\sigma_2)^2}{2(\sigma_1-\sigma_2)^2}\\
      &+
      \frac{(Ca-\beta_1^2+\beta_2^2)(\sigma-\sigma_1)^2}{2(\sigma_1-\sigma_2)^2}
      + O(a)\PrinSymb_b.
    \end{split}
  \end{equation}
  Summing \eqref{linear:eq:ILED-near:sum-of-squares:gamma-sum-first-terms}
  and \eqref{linear:eq:ILED-near:sum-of-squares:gamma-sum-second-terms}
  together, we have that
  \begin{equation*}
    \begin{split}
      &\frac{1}{2\ImagUnit}H_{\RescaledPrinSymb_b}(
      \MorawetzSym_{b_0}+a\tilde{\MorawetzSym}) + (\RescaledPrinSymb_b)\LagrangeCorrSym'_b\\
      ={}& \frac{1-{\delta_1}}{4}\left(
        \alpha_1(\sigma-\sigma_2)-\alpha_2(\sigma-\sigma_1)
      \right)^2
      + \frac{{\delta_1}}{4}\left(\alpha_1(\sigma-\sigma_2) + \alpha_2(\sigma-\sigma_1)\right)^2\\
      &+ \frac{1}{2}\left(\beta_1^2+\beta_2^2 - Ca\right)\xi^2
      + \frac{(Ca-\beta_2^2+\beta_1^2)(\sigma-\sigma_2)^2}{2(\sigma_1-\sigma_2)^2}\xi^2\\
      &+ \frac{(Ca-\beta_1^2+\beta_2^2)(\sigma-\sigma_1)^2}{2(\sigma_1-\sigma_2)^2}\xi^2
      +a(S^0+S^{-1}\sigma)(\sigma-\sigma_1)(\sigma-\sigma_2).
    \end{split}
  \end{equation*}
  We then pick
  \begin{align*}
    \SquareDecomp_1^2 &= \frac{\delta_1}{4}\left(\alpha_1(\sigma-\sigma_2)+\alpha_2(\sigma-\sigma_1)\right)^2,\\
    \SquareDecomp_2^2 &= \frac{1}{2}\left(\beta_1^2 + \beta_2^2 - Ca\right)\xi^2,\\
    \SquareDecomp_{3}^2 &= \frac{\FreqTheta^2}{|\FreqAngular|^2 + \Delta_{b_0}\xi^2}\frac{(1-\delta_1)}{4} \left(\alpha_1(\sigma-\sigma_2)-\alpha_2(\sigma-\sigma_1)\right)^2, \\ 
    \SquareDecomp_{4}^2 &= \frac{\FreqPhi^2}{|\FreqAngular|^2 + \Delta_{b_0}\xi^2}\frac{(1-\delta_1)}{4} \left(\alpha_1(\sigma-\sigma_2)-\alpha_2(\sigma-\sigma_1)\right)^2,\\
    \SquareDecomp_{5}^2 &= \frac{\Delta_{b_0}\xi^2}{|\FreqAngular|^2 + \Delta_{b_0}\xi^2}\frac{(1-\delta_1)}{4}\left(\alpha_1(\sigma-\sigma_2)-\alpha_2(\sigma-\sigma_1)\right)^2,\\
    \SquareDecomp_6^2 &= \frac{\left(Ca-\beta_2^2 + \beta_1^2\right)\left(\sigma-\sigma_2\right)^2}{2\left(\sigma_1-\sigma_2\right)^2}\xi^2,\\
    \SquareDecomp_7^2 &= \frac{\left(Ca-\beta_1^2 + \beta_2^2\right)\left(\sigma-\sigma_1\right)^2}{2\left(\sigma_1-\sigma_2\right)^2}\xi^2,
  \end{align*}
  concluding the proof of Lemma \ref{linear:lemma:ILED-near:sum-of-squares}.
\end{proof}

Before we proceed, we note the following useful rewriting of
$\MorawetzSym_b$.
\begin{corollary}
  \label{linear:cor:ILED-near:alt-form-Xb}
  Let $\MorawetzSym_b$ be as constructed in the proof of Lemma
  \ref{linear:lemma:ILED-near:sum-of-squares}. Then we can write that
  \begin{equation}
    \label{linear:eq:ILED-near:alt-form-Xb:form}
    \MorawetzSym_b = \frac{\Delta_{b}}{\rho_b^4}(r-\rTrapping_b(\sigma_1,
    \FreqPhi))\frac{\xi(\sigma-\sigma_2)}{\sigma_1-\sigma_2} +
    \frac{\Delta_{b}}{\rho_b^4}(r-\rTrapping_b(\sigma_2, \FreqPhi))\frac{\xi(\sigma-\sigma_1)}{\sigma_2-\sigma_1}.
  \end{equation}
  As a result, if $u$ is compactly supported on $\TrappingNbhd$, then
  \begin{equation}
    \label{linear:eq:ILED-near:alt-form-Xb:bound}
    \norm{\MorawetzVF_bu}_{L^2(\TrappingNbhd)} \lesssim \norm{u}_{\MorawetzNorm_b(\TrappingNbhd)}.
  \end{equation}
\end{corollary}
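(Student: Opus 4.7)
The plan is to establish the symbol identity \eqref{linear:eq:ILED-near:alt-form-Xb:form} first, and then to deduce the bound \eqref{linear:eq:ILED-near:alt-form-Xb:bound} as an essentially immediate consequence of the factored form, using pseudodifferential calculus.

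For the identity, I would first observe that both sides are polynomials in $\sigma$ of degree at most one, with coefficients smooth in the remaining variables. On the right-hand side this is direct, since the quantities $\sigma_i$ and $\rTrapping_b(\sigma_i, \FreqPhi)$ depend on $(x, \xi, \FreqTheta, \FreqPhi)$ but not on $\sigma$. On the left-hand side, recall from the construction in Lemma \ref{linear:lemma:ILED-near:sum-of-squares} that $\MorawetzSym_b = \MorawetzSym_{b_0} + a\tilde{\MorawetzSym}$, where $\MorawetzSym_{b_0} = \ImagUnit f_{b_0}\xi$ is $\sigma$-independent, and where the specific $\tilde{\MorawetzSym}$ produced from the Mather division \eqref{linear:eq:ILED-near:x-div-theorem} is exactly $\ImagUnit(\tilde{\MorawetzSym}_1 + \tilde{\MorawetzSym}_0\sigma)$, affine in $\sigma$ by design. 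Since two polynomials in $\sigma$ of degree at most one agreeing at two distinct values coincide identically, it then suffices to check the identity at $\sigma = \sigma_1$ and $\sigma = \sigma_2$. At these characteristic values, $\MorawetzSym_b$ coincides with $\MorawetzSym_b' = \ImagUnit \frac{\Delta_b}{\rho_b^4}(r - \rTrapping_b(\sigma, \FreqPhi))\xi$ by construction, which at $\sigma = \sigma_i$ yields $\ImagUnit\frac{\Delta_b}{\rho_b^4}\ell_i\xi$; substituting $\sigma = \sigma_i$ into the right-hand side of \eqref{linear:eq:ILED-near:alt-form-Xb:form} kills the $j \neq i$ summand and reproduces exactly this value (modulo the $\ImagUnit$ convention implicit in the statement).

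For the $L^2$ bound, I would rewrite the factored form symbolically as
\begin{equation*}
  \MorawetzSym_b \;=\; c_1(x,\zeta)\, \ell_1(\sigma - \sigma_2) \;+\; c_2(x,\zeta)\, \ell_2(\sigma - \sigma_1),
  \qquad c_i := \pm \frac{\Delta_b}{\rho_b^4(\sigma_1 - \sigma_2)}\xi,
\end{equation*}
where $c_i \in S^0(\Sigma)$ microlocally away from the zero section (using that $\sigma_1 - \sigma_2$ is elliptic there). Via the composition calculus of Proposition \ref{linear:prop:PsiDO:calculus} together with Proposition \ref{linear:prop:Coifman-Meyer}, the Poisson-bracket correction yields an order-zero error, so
\begin{equation*}
  \Op\bigl(c_i\, \ell_i(\sigma - \sigma_j)\bigr)\,u \;=\; \Op(c_i)\bigl[\,\ell_i(D,x)(D_t - \sigma_j(D,x))\,u\,\bigr] \;+\; E_i u,
\end{equation*}
with $E_i$ bounded on $L^2(\TrappingNbhd)$. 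Since $\Op(c_i)$ is an order-zero operator bounded on $L^2$, summing the two contributions produces
\begin{equation*}
  \|\MorawetzVF_b u\|_{L^2(\TrappingNbhd)} \;\lesssim\; \|\ell_1(D_t - \sigma_2)u\|_{L^2} + \|\ell_2(D_t - \sigma_1)u\|_{L^2} + \|u\|_{L^2(\TrappingNbhd)},
\end{equation*}
and the right-hand side is exactly dominated by $\|u\|_{\MorawetzNorm_b(\TrappingNbhd)}$ via Definition \ref{linear:def:Morawetz-norm}.

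No serious obstacle appears in this argument; the proof is structural and relies only on earlier constructions. The only point that requires a brief sanity check is the symbol class of $(\sigma_1 - \sigma_2)^{-1}$, which is a genuine order $-1$ symbol only microlocally away from the zero section, and the mismatch between $r - \rTrapping_b(\sigma_i, \FreqPhi)$ appearing in \eqref{linear:eq:ILED-near:alt-form-Xb:form} and the cutoff symbol $\ell_i(D, x)$ (which incorporates $\chi_{\ge 1}$) used in Definition \ref{linear:def:Morawetz-norm}; both contribute only smoothing remainders, which are absorbed harmlessly into the $\|u\|_{L^2(\TrappingNbhd)}$ error term on the right-hand side.
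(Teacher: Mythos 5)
Your proof is correct and takes essentially the same route as the paper. Your ``match a degree-one polynomial at the two distinct characteristic roots $\sigma_1,\sigma_2$'' argument is exactly what the paper's double application of the Mather division theorem with divisors $\sigma-\sigma_i$ amounts to (the remainder $\mathfrak{r}_{(i)}$ is the value at $\sigma=\sigma_i$), and your $L^2$ bound via composition calculus is the natural unpacking of the paper's terse appeal to Lemma~\ref{linear:lemma:ILED-near:sum-of-squares} together with the definition of the Morawetz norm. You also correctly notice that the displayed identity \eqref{linear:eq:ILED-near:alt-form-Xb:form} as printed in the paper has dropped the $\ImagUnit$ prefactor carried by $\MorawetzSym_{b_0}=\ImagUnit f_{b_0}\xi$ and $\MorawetzSym_b'$ from the construction in Lemma~\ref{linear:lemma:ILED-near:sum-of-squares}; this is a typographical convention slip in the paper, not a gap in your argument.
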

\begin{proof}
  Using the observation that $\MorawetzSym_b\in \sigma S^0 + S^1$, we
  can apply the Mather division theorem (Theorem
  \ref{linear:thm:Mather-division}) to write that there exist
  $\mathfrak{e}_{(i)}$, $\mathfrak{r}_{(i)}$, for $i=1,2$ such that
  \begin{equation}
    \label{linear:cor:ILED-near:alt-form-Xb:eqn1}
    \begin{split}
      \MorawetzSym_b &= (\sigma-\sigma_1)\mathfrak{e}_{(1)} + \mathfrak{r}_{(1)},\\
      \MorawetzSym_b &= (\sigma-\sigma_2)\mathfrak{e}_{(2)} + \mathfrak{r}_{(2)}.
    \end{split}
  \end{equation}
  Moreover, using $\eqref{linear:eq:ILED-near:x-div-theorem}$, we have that
  \begin{equation*}
    \label{linear:cor:ILED-near:alt-form-Xb:eqn2}
    \begin{split}
      (r-\rTrapping_b(\sigma_1, \FreqPhi))\frac{H_{\RescaledPrinSymb_b}r}{2\rho_b^4} = \mathfrak{r}_1,\\
      (r-\rTrapping_b(\sigma_2, \FreqPhi))\frac{H_{\RescaledPrinSymb_b}r}{2\rho_b^4} = \mathfrak{r}_2.
    \end{split}
  \end{equation*}
  Solving the combined system of equations given by
  \eqref{linear:cor:ILED-near:alt-form-Xb:eqn1} and 
  \eqref{linear:cor:ILED-near:alt-form-Xb:eqn2} for $\mathfrak{e}_{(i)}$ and
  $\mathfrak{r}_{(i)}$, $i=1,2$ then yields
  \eqref{linear:eq:ILED-near:alt-form-Xb:form} using the fact that
  \begin{equation*}
    H_{\RescaledPrinSymb_b}r = 2\Delta_{b}.
  \end{equation*}
  The bound in
  \eqref{linear:eq:ILED-near:alt-form-Xb:bound} is an immediate corollary of
  Lemma \ref{linear:lemma:ILED-near:sum-of-squares}.
\end{proof}

We now illustrate how to account for the contribution of the
subprincipal operator in the principal bulk term. 
\begin{corollary}
  \label{linear:coro:ILED-near:SX-control}
  Let
  \begin{equation}
    \label{linear:eq:SubPConjSym:def}
    \SubPConjSym_{b,a} = \sigma_1(\SubPConjOp_{b,a})
  \end{equation}
  denote the principal symbol of the skew-adjoint component of the
  subprincipal operator of $\LinEinsteinConj_{g_b}$.  For $a<a_0$ of
  Lemma \ref{linear:lemma:ILED-near:sum-of-squares}, and the symbols
  $\tilde{\MorawetzSym}, \tilde{\LagrangeCorrSym}$ of Lemma
  \ref{linear:lemma:ILED-near:sum-of-squares}, there exists a choice of
  $\PseudoSubPFixer$, such that in a small neighborhood of
  $\TrappedSet_b$, (in both frequency and physical space),
  \begin{equation}
    \label{linear:eq:KCurrentIbPSym-def}
    \rho_b^2\KCurrentIbPSym{\MorawetzSym_b,\LagrangeCorrSym_b} := \rho_b^2 \left(
      \frac{1}{2\ImagUnit}H_{\PrinSymb_b}\MorawetzSym_{b} +
      \PrinSymb_b\LagrangeCorrSym_b - \SubPConjSym_{b,a}\MorawetzSym_b
    \right)
    \ge
    \frac{1}{2}\sum_{j=1}^7\SquareDecomp_{j}^2.
  \end{equation}
\end{corollary}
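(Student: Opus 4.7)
The plan is to extract the corollary from a direct symbolic comparison between $\SubPConjSym_{b,a}\MorawetzSym_b$ and the sum of squares furnished by Lemma~\ref{linear:lemma:ILED-near:sum-of-squares}. The starting point is Lemma~\ref{linear:lemma:ILED-near:sum-of-squares}, which gives the identity
\[
\rho_b^2\left(\frac{1}{2\ImagUnit}H_{\PrinSymb_b}\MorawetzSym_b + \PrinSymb_b\LagrangeCorrSym_b\right) = \sum_{j=1}^{7}\SquareDecomp_j^2,
\]
together with the elliptic equivalence of $\{\SquareDecomp_j\}$ to the family $(\ell_2(\sigma-\sigma_1),\ell_1(\sigma-\sigma_2),\xi)$. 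Thus it is enough to choose $\PseudoSubPFixer$ and shrink $a$ and the phase space neighborhood of $\TrappedSet_b$ so that
\[
\bigl|\rho_b^2\SubPConjSym_{b,a}\MorawetzSym_b\bigr| \le \tfrac{1}{2}\sum_{j=1}^{7}\SquareDecomp_j^2
\]
pointwise in this neighborhood.

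The first key input is Lemma~\ref{linear:lemma:subprincipal-symbol-control}: for a given small $\varepsilon_{\TrappedSet_{b_0}}>0$, we select $\PseudoSubPFixer$ so that $|\FreqAngular|^{-1}\SubPConjSym_{b_0,a}\le \varepsilon_{\TrappedSet_{b_0}}$ at $\TrappedSet_{b_0}$. Because $\SubPConjSym_{b,a}$ is homogeneous of degree $1$, depends smoothly on $b$, and varies continuously in phase space, there exists a neighborhood $\mathcal{U}$ of $\TrappedSet_b$ (in both physical and frequency space) and a constant $C$ such that
\[
|\SubPConjSym_{b,a}| \le (2\varepsilon_{\TrappedSet_{b_0}} + C a + C\,\dist_{\mathrm{ph}})|\FreqAngular| \quad \text{on } \mathcal{U},
\]
where $\dist_{\mathrm{ph}}$ denotes the distance to $\TrappedSet_b$ in a suitable homogeneous metric on phase space.

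The second key input is Corollary~\ref{linear:cor:ILED-near:alt-form-Xb}, which factors
\[
\MorawetzSym_b = \frac{\Delta_b}{\rho_b^4}\left(\ell_1\frac{\xi(\sigma-\sigma_2)}{\sigma_1-\sigma_2} + \ell_2\frac{\xi(\sigma-\sigma_1)}{\sigma_2-\sigma_1}\right).
\]
Since $|\FreqAngular|/|\sigma_1-\sigma_2|$ is a bounded order-zero factor near $\TrappedSet_b$, this yields the pointwise estimate
\[
|\FreqAngular||\MorawetzSym_b| \lesssim |\xi|\bigl(\ell_1|\sigma-\sigma_2| + \ell_2|\sigma-\sigma_1|\bigr).
\]
An AM-GM application bounds each cross term by $\xi^2 + \ell_i^2(\sigma-\sigma_{3-i})^2$, and the elliptic equivalence of Lemma~\ref{linear:lemma:ILED-near:sum-of-squares}(ii) then gives
\[
|\FreqAngular||\MorawetzSym_b| \lesssim \sum_{j=1}^{7}\SquareDecomp_j^2.
\]
Combining these two bounds produces the estimate $|\rho_b^2\SubPConjSym_{b,a}\MorawetzSym_b| \le C'(\varepsilon_{\TrappedSet_{b_0}} + a + \dist_{\mathrm{ph}})\sum_{j=1}^{7}\SquareDecomp_j^2$, and choosing $\varepsilon_{\TrappedSet_{b_0}}$, $a$, and the neighborhood small enough makes $C'(\varepsilon_{\TrappedSet_{b_0}} + a + \dist_{\mathrm{ph}}) \le 1/2$, yielding~\eqref{linear:eq:KCurrentIbPSym-def}.

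The main technical point, rather than any deep obstacle, is ensuring that the bookkeeping of homogeneities works out: $\MorawetzSym_b$ is of order one, $\SubPConjSym_{b,a}$ of order one, and $\sum\SquareDecomp_j^2$ of order two, and the AM-GM step above is precisely what converts the product of two order-one quantities vanishing at $\TrappedSet_b$ into a controlled order-two quantity. A subtle bookkeeping check is that the smallness of $\SubPConjSym_{b_0,a}$ is only guaranteed at $\TrappedSet_{b_0}$ itself, so one must verify that shrinking $a$ keeps $\TrappedSet_b$ close to $\TrappedSet_{b_0}$ (this is the smoothness content of Lemma~\ref{linear:lemma:trapping:KdS}), and that the phase space neighborhood $\mathcal{U}$ can be taken independent of $a$ for $a$ small.
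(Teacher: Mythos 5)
Your proposal is correct and takes essentially the same route as the paper: both use the factored form of $\MorawetzSym_b$ from Corollary~\ref{linear:cor:ILED-near:alt-form-Xb} together with the smallness of the skew-adjoint subprincipal symbol near $\TrappedSet_b$, and then absorb the product $\SubPConjSym_{b,a}\MorawetzSym_b$ into the sum of squares by a Cauchy--Schwarz/AM--GM step. The one cosmetic difference worth noting: you invoke Lemma~\ref{linear:lemma:subprincipal-symbol-control} plus a smoothness-in-$b$ continuity argument to get the smallness of $\SubPConjSym_{b,a}$ on a neighborhood of $\TrappedSet_b$, whereas the paper packages exactly this step as Lemma~\ref{linear:lemma:ILED-near:s-decomp} and then cites it directly; and you treat the comparison of $|\mathfrak{b}|^2$ with $\sum_j\SquareDecomp_j^2$ via the abstract two-sided elliptic equivalence of Lemma~\ref{linear:lemma:ILED-near:sum-of-squares}(ii), whereas the paper does an explicit term-by-term estimate against the $\SquareDecomp_j$ using their closed forms (splitting $\sigma=(\sigma-\sigma_1)+\sigma_1$, etc.). Your AM--GM bookkeeping is therefore a streamlined repackaging of the paper's computations, not a genuinely distinct argument.
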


\begin{proof}
  We use the characterization of $\MorawetzSym_b$ in Corollary
  \ref{linear:cor:ILED-near:alt-form-Xb} to write 
  \begin{equation*}
    \SubPConjSym_b\MorawetzSym_b
    = \frac{\Delta_b}{\rho_b^4}(r-\rTrapping_b(\sigma_1, \FreqPhi))\frac{\xi(\sigma-\sigma_2) \SubPConjSym_b}{\sigma_1-\sigma_2}
    + \frac{\Delta_b}{\rho_b^4}(r-\rTrapping_b(\sigma_2, \FreqPhi))\frac{\xi(\sigma-\sigma_1) \SubPConjSym_b}{\sigma_2-\sigma_1}.
  \end{equation*}
  Recall from Lemma \ref{linear:lemma:ILED-near:s-decomp} that for any fixed
  $\delta_0>0$, there exists a choice of $a$,
  $\varepsilon_{\TrappedSet_{b_0}}$, $\delta_r$, and
  $\delta_\zeta$ such that   
  \begin{equation*}
    \abs*{\SubPConjSym_b}\lesssim \delta_0\abs*{\zeta}.
  \end{equation*}  
  
  On $\TrappingNbhd$, $|\xi|^2 + |\FreqAngular|^2
  \lesssim \sigma_1^2 + \sigma_2^2$, so it will be sufficient to bound
  \begin{equation*}
    \begin{split}
      \abs*{\frac{\Delta_b}{\rho_b^4}(r-\rTrapping_b(\sigma_1,\FreqPhi))\frac{\xi(\sigma-\sigma_2)\sigma}{\sigma_1-\sigma_2}},
      &\quad\abs*{\frac{\Delta_b}{\rho_b^4}(r-\rTrapping_b(\sigma_2,\FreqPhi))\frac{\xi(\sigma-\sigma_1)\sigma}{\sigma_2-\sigma_1}},\\
      \abs*{\frac{\Delta_b}{\rho_b^4}(r-\rTrapping_b(\sigma_1,\FreqPhi))\frac{\xi(\sigma-\sigma_2)\sigma_1}{\sigma_1-\sigma_2}},
      &\quad\abs*{\frac{\Delta_b}{\rho_b^4}(r-\rTrapping_b(\sigma_2,\FreqPhi))\frac{\xi(\sigma-\sigma_1)\sigma_1}{\sigma_2-\sigma_1}},\\
      \abs*{\frac{\Delta_b}{\rho_b^4}(r-\rTrapping_b(\sigma_1,\FreqPhi))\frac{\xi(\sigma-\sigma_2)\sigma_2}{\sigma_1-\sigma_2}},
      &\quad\abs*{\frac{\Delta_b}{\rho_b^4}(r-\rTrapping_b(\sigma_2,\FreqPhi))\frac{\xi(\sigma-\sigma_1)\sigma_2}{\sigma_2-\sigma_1}},
    \end{split}
  \end{equation*}
  by $\sum_{j=1}^7\SquareDecomp_j^2$ to conclude. 
  
  The two terms in each line are handled in an identical manner
  (simply switching $\sigma_1$, $\sigma_2$), so without loss of generality, we will only
  handle one term from each line.  Let us first consider the symbol
  given by 
  \begin{equation*}
    \frac{\Delta_b}{\rho_b^4}(r-\rTrapping_b(\sigma_1, \FreqPhi))\frac{\xi(\sigma-\sigma_2) \sigma}{\sigma_1-\sigma_2}.
  \end{equation*}
  We can immediately write 
  \begin{equation}
    \label{linear:eq:ILED-near:SubPOp:aux1}
    \begin{split}
      \frac{\Delta_b}{\rho_b^4}(r-\rTrapping_b(\sigma_1,
      \FreqPhi))\frac{\xi(\sigma-\sigma_2)
        \sigma}{\sigma_1-\sigma_2}
      ={}&
      \frac{\Delta_b}{\rho_b^4}(r-\rTrapping_b(\sigma_1, \FreqPhi))\frac{\xi (\sigma
        -\sigma_1)(\sigma-\sigma_2)}{\sigma_1-\sigma_2}\\
      & + \frac{\Delta_b}{\rho_b^4}(r-\rTrapping_b(\sigma_1, \FreqPhi))\frac{\xi(\sigma-\sigma_2)\sigma_1}{\sigma_1-\sigma_2}.
    \end{split}
  \end{equation}
  Recalling the explicit forms of
  $\curlyBrace*{\SquareDecomp_j}_{j=1}^{7}$, we see that
  applying Cauchy-Schwarz yields that
  \begin{equation*}
    \frac{\Delta_b}{\rho_b^4}(r-\rTrapping_b(\sigma_1, \FreqPhi))\frac{\xi
      (\sigma-\sigma_2)\sigma_1}{\sigma_1-\sigma_2} \lesssim \SquareDecomp_1^2 + \SquareDecomp_2^2.
  \end{equation*}
  To deal with the first term on the right-hand side of
  \eqref{linear:eq:ILED-near:SubPOp:aux1}, we again apply Cauchy Schwarz, 
  \begin{equation*}
    \frac{\Delta_b}{\rho_b^4}(r-\rTrapping_b(\sigma_1, \FreqPhi))\frac{\xi (\sigma
      -\sigma_1)(\sigma-\sigma_2)}{\sigma_1-\sigma_2}
    \le \frac{\Delta_b^2}{2\rho_b^8}\frac{\xi^2(\sigma-\sigma_1)^2}{(\sigma_1-\sigma_2)^2} + \frac{1}{2}(r-\rTrapping_b(\sigma_1,\FreqPhi))^2(\sigma-\sigma_2)^2.
  \end{equation*}
  The first term on the right-hand side is controlled by
  $\SquareDecomp_7^2$, while the second term is controlled by
  $\mathfrak{b}$.  Now let us consider the symbol
  \begin{equation*}
    \frac{\Delta_b}{\rho_b^4}(r-\rTrapping_b(\sigma_1,\FreqPhi))\frac{\xi (\sigma-\sigma_2)\sigma_1}{\sigma_1-\sigma_2}.
  \end{equation*}
  As previously mentioned, this term can be controlled directly by
  applying Cauchy-Schwarz, using the explicit forms of
  $\curlyBrace*{\SquareDecomp_j}_{j=1}^7$,
  \begin{equation*}
    \frac{\Delta_b}{\rho_b^4}(r-\rTrapping_b(\sigma_1,\FreqPhi))\frac{\xi (\sigma-\sigma_2)\sigma_1}{\sigma_1-\sigma_2} \lesssim \SquareDecomp_1^2 + \SquareDecomp_2^2.
  \end{equation*}
  We now handle the final term,
  \begin{equation*}
    \frac{\Delta_b}{\rho_b^4}(r-\rTrapping_b(\sigma_1,\FreqPhi))\frac{\xi (\sigma-\sigma_2)\sigma_2}{\sigma_1-\sigma_2}.
  \end{equation*}
  To control this term, we use that
  \begin{equation*}
    \begin{split}
      \abs*{\frac{\Delta_b}{\rho_b^4}(r-\rTrapping_b(\sigma_1,\FreqPhi))\frac{\xi (\sigma-\sigma_2)\sigma_2}{\sigma_1-\sigma_2}}
      \le{}& \abs*{\frac{\Delta_b}{\rho_b^4}(r-\rTrapping_b(\sigma_1,\FreqPhi))\frac{\xi (\sigma-\sigma_2)\sigma_1}{\sigma_1-\sigma_2}}\\
      &+ \abs*{\frac{\Delta_b}{\rho_b^4}(r-\rTrapping_b(\sigma_1,\FreqPhi))\xi (\sigma-\sigma_2)}.
    \end{split}      
  \end{equation*}
  At this point we can again use Cauchy-Schwarz to bound
  \begin{equation*}
    \abs*{\frac{\Delta_b}{\rho_b^4}(r-\rTrapping_b(\sigma_1,\FreqPhi))\frac{\xi (\sigma-\sigma_2)\sigma_1}{\sigma_1-\sigma_2}}
    + \abs*{\frac{\Delta_b}{\rho_b^4}(r-\rTrapping_b(\sigma_1,\FreqPhi))\xi (\sigma-\sigma_2)}
    \lesssim \SquareDecomp_{1}^2 + \SquareDecomp_2^2 + \mathfrak{b}^2.
  \end{equation*}
  
  Combining the above estimates, we see that if $\delta_0$ is chosen
  sufficiently small,
  \begin{equation*}
    \abs*{\SubPConjSym_b\MorawetzSym_b} \le \frac{1}{2}\sum_{j=1}^7 \SquareDecomp_j^2,
  \end{equation*}
  which concludes the proof of Corollary \ref{linear:coro:ILED-near:SX-control}. 
\end{proof}
We are now ready to prove Lemma \ref{linear:lemma:ILED-near:bulk-positivity}.
\begin{proof}[Proof of Lemma \ref{linear:lemma:ILED-near:bulk-positivity}]
  We pick
  \begin{equation*}
    \widetilde{\MorawetzVF}_i = \frac{1}{2}\left(\Op(\widetilde{\MorawetzSym}_i) - \Op(\widetilde{\MorawetzSym}_i)^*\right), \qquad
    \tilde{\LagrangeCorr}_i = \left(\Op(\tilde{\LagrangeCorrSym}_i) + \Op(\tilde{\LagrangeCorrSym}_i) \right),
  \end{equation*}
  where the adjoint is taken with respect to the $L^2(\TrappingNbhd)$
  inner product.
  
  Let us rewrite
  \begin{equation}
    \label{linear:eq:ILED-near:bulk:positivity:adjusted-X-q-def}
    \mathfrak{\MorawetzVF}_{b_0} := \MorawetzVF_{b_0} + \frac{1}{2}\nabla_{g_{b}} \cdot \MorawetzVF_{b_0},\qquad
    \tilde{\mathfrak{\LagrangeCorr}}_{b_0} := \LagrangeCorr_{b_0} - \frac{1}{2}\nabla_{g_{b}} \cdot \MorawetzVF_{b_0},
  \end{equation}
  so that $\mathfrak{\MorawetzVF}_{b_0}$ is anti-Hermitian and
  $\tilde{\mathfrak{\LagrangeCorr}}_{b_0}$ is Hermitian with
  respect to the $L^2(\DomainOfIntegration)$ inner product.
  Observe that by construction,
  \begin{equation*}    
    \tilde{\mathfrak{\LagrangeCorr}}_{b_0} = \frac{1}{2}\left(\nabla_{g_b}-\nabla_{g_{b_0}} \right)\cdot \MorawetzVF_{b_0} + \LagrangeCorr_1, 
  \end{equation*}
  where $\LagrangeCorr_1$ is as defined in
  \eqref{linear:eq:ILED-near:SdS:q1-def}.
  This directly implies that 
  \begin{equation*}
    \tilde{\mathfrak{\LagrangeCorr}}_{b_0} \lesssim a + \delta_r. 
  \end{equation*}
  Then directly by integrating by parts, we have that
  \begin{align}
    2\Re\bangle*{\SubPConjOp_b h , \left(\MorawetzVF_{b_0} + \LagrangeCorr_{b_0}\right) h}_{L^2(\DomainOfIntegration)}
    ={}& - \bangle*{\left(\mathfrak{\MorawetzVF}_{b_0}\SubPConjOp_b + \SubPConjOp_b\mathfrak{\MorawetzVF}_{b_0}\right) h ,  h}_{L^2(\DomainOfIntegration)}
         + 2\Re\bangle*{\SubPConjOp_bh, \tilde{\mathfrak{\LagrangeCorr}}_{b_0}h}_{L^2(\DomainOfIntegration)}\notag \\
    \Re\bangle*{\SubPConjOp_b h , \widetilde{\MorawetzVF}  h}_{L^2(\DomainOfIntegration)}
    ={}& - \bangle*{\left(\widetilde{\MorawetzVF}\SubPConjOp_b + \SubPConjOp_b\widetilde{\MorawetzVF}\right) h ,  h}_{L^2(\DomainOfIntegration)}         
         - \evalAt*{\Re\bangle*{\SubPConjOp_0h, \widetilde{\MorawetzVF} h}_{L^2(\TrappingNbhd)}}^{\tStar=\TStar}_{\tStar=0}\notag\\
         &- \evalAt*{\Re\bangle*{\SubPConjOp_bh, \widetilde{\MorawetzVF}_0h}_{L^2(\TrappingNbhd)}}^{\tStar=\TStar}_{\tStar=0},
         \label{linear:eq:ILED-near:bulk-positivity:IbP-S-X}
  \end{align}
  where
  \begin{equation*}
    \SubPConjOp_{b} = \SubPConjOp_0\p_{\tStar} + \SubPConjOp_1,\qquad
    \widetilde{\MorawetzVF} = \widetilde{\MorawetzVF}_0\p_{\tStar} + \widetilde{\MorawetzVF}_1, 
  \end{equation*}
  and $\widetilde{\MorawetzVF}_i, \SubPConjOp_i\in \Op S^i$.
  Then Corollary \ref{linear:coro:ILED-near:SX-control} and Theorem
  \ref{linear:thm:sym-ineq-to-op-est} show that
  \begin{equation*}
    \int_{\DomainOfIntegration}\KCurrent{\MorawetzVF_{b_0}, \LagrangeCorr_{b_0}, 0}[h]
    + a\KCurrentIbP{\widetilde{\MorawetzVF}, \tilde{\LagrangeCorr}}[h]
    + \Re\bangle*{\left(\mathfrak{\MorawetzVF}_{b_0}\SubPConjOp_b + \SubPConjOp_b\mathfrak{\MorawetzVF}_{b_0}\right) h ,  h}_{L^2(\DomainOfIntegration)}
    \gtrsim \norm*{h}_{\MorawetzNorm_b(\DomainOfIntegration)}^2. 
  \end{equation*}
  The conclusion of the lemma then follows quickly from the definition
  of $\tilde{\mathfrak{\LagrangeCorr}}_{b_0}$ and
  \eqref{linear:eq:ILED-near:bulk-positivity:IbP-S-X}. 
\end{proof}
\begin{remark}
  At first glance, the pseudo-differential operators
  $\widetilde{\MorawetzVF}$ and $\tilde{\LagrangeCorr}$ are only
  well-defined away from the singularities of the $(\theta,\phi)$
  coordinates on the sphere, namely, the poles. However, we can
  smoothly extend both $\widetilde{\MorawetzVF}$ and
  $\tilde{\LagrangeCorr}$ to the poles by repeating their
  constructions in charts covering the poles and gluing the resulting
  operators together since their construction only relies on
  $\sigma_1$, $\sigma_2$, which are smooth at the poles. 
\end{remark}

\paragraph{Lower-order bulk terms}

Next, we show that the lower-order bulk terms can be appropriately
controlled by the (degenerate) ellipticity of the principal symbol. We
start with some auxiliary lemmas that will be useful in controlling
the lower-order terms that appear in the divergence theorem argument.

The easiest terms to control will be the lowest-order bulk terms. 
\begin{lemma}
  \label{linear:lemma:ILED-near:LoT-control:lowest-order-control}
  Let $\DomainOfIntegration$ be as defined in
  \eqref{linear:eq:ILED-trapping:trapping-reg-def}. Fix $\delta>0$ and
  $\PotentialOp\in \Op S^0$. Then, for $\delta_r$ sufficiently small
  and $h$ supported on $\DomainOfIntegration$, 
  \begin{equation*}
    \abs*{\Re \bangle*{\PotentialOp h, h}_{L^2(\DomainOfIntegration)}} \lesssim \delta\left(
      \norm{h}_{L^2(\DomainOfIntegration)}^2 + \norm{\p_rh}_{L^2(\DomainOfIntegration)}^2\right),
  \end{equation*}
  and in particular,
  \begin{equation*}
    \norm*{h}_{L^2(\DomainOfIntegration)}^2
    \lesssim \delta\left(
      \norm{h}_{L^2(\DomainOfIntegration)}^2 + \norm{\p_rh}_{L^2(\DomainOfIntegration)}^2\right). 
  \end{equation*}
  Similarly, if $h(\tStar, \cdot)$ is compactly supported on
  $\TrappingNbhd$ for all $\tStar\ge 0$, then for $\delta_r$
  sufficiently small,
  \begin{equation*}
    \abs*{\Re \bangle*{\PotentialOp h, h}_{L^2(\TrappingNbhd)}}
    \lesssim \delta\left(
      \norm{h}_{L^2(\TrappingNbhd)}^2 + \norm{\p_rh}_{L^2(\TrappingNbhd)}^2\right). 
  \end{equation*}
  In particular,
  \begin{equation*}
    \norm*{h}_{L^2(\TrappingNbhd)}^2
    \lesssim \delta\left(
      \norm{h}_{L^2(\TrappingNbhd)}^2 + \norm{\p_rh}_{L^2(\TrappingNbhd)}^2\right). 
  \end{equation*}
\end{lemma}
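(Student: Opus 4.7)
The plan is to combine the $L^2$-boundedness of zeroth-order pseudo-differential operators with a one-dimensional Poincar\'e inequality in the radial direction, exploiting that the spatial support of $h$ lies in a slab of radial width at most $2\delta_r$ about $r=3M$. Since $\PotentialOp\in\Op S^0(\Sigma)$ is a stationary order-zero pseudo-differential operator in the spatial variables, it is bounded on $L^2(\TrappingNbhd)$, hence on $L^2(\DomainOfIntegration)$ after integrating in $\tStar$. Cauchy--Schwarz therefore gives
\begin{equation*}
\abs*{\Re\bangle*{\PotentialOp h, h}_{L^2(\DomainOfIntegration)}} \lesssim \norm{h}_{L^2(\DomainOfIntegration)}^2,
\end{equation*}
with implicit constant depending only on $\PotentialOp$.

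Next, because $h$ is supported in $\DomainOfIntegration$ it vanishes at the radial boundaries $r=\mathring{r}_-$ and $r=\mathring{R}_+$. Writing, for each fixed $(\tStar,\theta,\varphi)$,
\begin{equation*}
h(\tStar,r,\theta,\varphi) = \int_{\mathring{r}_-}^{r}\p_{r'} h(\tStar, r',\theta,\varphi)\,dr',
\end{equation*}
and applying Cauchy--Schwarz in $r'$ yields the pointwise bound
\begin{equation*}
\abs*{h(\tStar,r,\theta,\varphi)}^2 \le (r-\mathring{r}_-)\int_{\mathring{r}_-}^{\mathring{R}_+}\abs*{\p_{r'}h}^2\,dr'.
\end{equation*}
Integrating over $r\in[\mathring{r}_-,\mathring{R}_+]$ (of width at most $2\delta_r$) and then over the remaining variables (using that the volume density on $\TrappingNbhd$ is uniformly comparable to $dr\,d\theta\,d\varphi$) gives the Poincar\'e-type estimate
\begin{equation*}
\norm{h}_{L^2(\DomainOfIntegration)}^2 \lesssim \delta_r^2\,\norm{\p_r h}_{L^2(\DomainOfIntegration)}^2.
\end{equation*}

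Combining the two previous steps, we obtain $\abs*{\Re\bangle*{\PotentialOp h, h}_{L^2(\DomainOfIntegration)}} \lesssim \delta_r^2\,\norm{\p_r h}_{L^2(\DomainOfIntegration)}^2$. Choosing $\delta_r$ small enough that the product of $\delta_r^2$ with the combined implicit constant is at most $\delta$ yields the first displayed inequality of the lemma, and the ``in particular'' bound is exactly the Poincar\'e estimate above (which is independent of $\PotentialOp$). The spacelike-slice versions follow identically: for each $\tStar$, $h(\tStar,\cdot)$ is compactly supported in $\TrappingNbhd$ and hence vanishes near the radial endpoints, so the same one-dimensional Poincar\'e argument applies slice by slice. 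There is no substantive obstacle here beyond checking that all implicit constants depend only on $\PotentialOp$ and the fixed background geometry, and not on $h$ or $\delta_r$---this is immediate from the $L^2$-boundedness of order-zero pseudo-differential operators and the elementary form of the Poincar\'e constant. The real difficulty in this portion of Section~\ref{linear:ILED:near} comes not from this auxiliary lemma but from the higher-order bulk terms that build on it.
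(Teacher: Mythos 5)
Your proposal is correct, and it takes a genuinely different route from the paper's proof. The paper proves the estimate by writing $1 = \p_r(r-3M)$ and integrating by parts, which produces $\bangle{(r-3M)\PotentialOp\p_r h, h}$, $\bangle{(r-3M)\PotentialOp h, \p_r h}$, and a commutator error; the factors of $(r-3M)$ are then exploited for smallness on the support. This mirrors the integration-by-parts pattern used throughout the rest of the section (e.g.\ in Lemma~\ref{linear:lemma:ILED-near:LoT-control:exchange-degeneracy-trick}), where the same device with $(r-\rTrapping_i)$ is essential and cannot be replaced. Your approach instead decouples the two ingredients cleanly: Calder\'{o}n--Vaillancourt $L^2$-boundedness of $\Op S^0$ disposes of $\PotentialOp$ outright, and the one-dimensional Poincar\'e inequality on the radial slab of width $O(\delta_r)$ supplies the small factor. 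Both are correct, but yours is more elementary here---it avoids any commutator bookkeeping (the paper's $Err[h]$ term implicitly requires controlling $[(r-3M),\PotentialOp]$, an order $-1$ commutator)---and it directly identifies the ``in particular'' statement as nothing but the Poincar\'e estimate, which is a clean observation. The one small point worth stating explicitly if you write this up: $\PotentialOp h$ is not itself supported in $\TrappingNbhd$ (pseudo-differential operators are nonlocal), but since the pairing $\bangle{\PotentialOp h, h}$ integrates against $\overline h$, only the restriction of $\PotentialOp h$ to the support of $h$ contributes, so $|\bangle{\PotentialOp h, h}| \le \norm{\PotentialOp h}_{L^2}\norm{h}_{L^2} \lesssim \norm{h}_{L^2}^2$ holds without further ado.
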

\begin{proof}
  Using the fact that $\p_r(r-3M)=1$, we use integration by
  parts to see that
  \begin{equation*}
    -\Re \bangle*{\PotentialOp h, h}_{L^2(\TrappingNbhd)}
    = \bangle*{(r-3M)\PotentialOp\p_rh,h}_{L^2(\DomainOfIntegration)} +
    \bangle*{(r-3M)\PotentialOp h,\p_rh}_{L^2(\DomainOfIntegration)}
    Err[h],
  \end{equation*}
  where $\abs*{Err[h]} \lesssim
  \delta_r\norm{h}_{L^2(\DomainOfIntegration)}^2$. The first then
  follows by Cauchy-Schwarz and taking $\delta_r$ sufficiently
  small. Observe that since the argument only involved integration by
  parts in $\p_r$, we can repeat the argument over $\TrappingNbhd$
  instead of over $\DomainOfIntegration$ to achieve the second
  conclusion. 
\end{proof}

Throughout the proof, we will also accumulate lower-order terms of the
form $\bangle{D_x}^{-1}D_{\tStar}$ which need to be dealt with. Fortunately,
this can be done with a simple symbol decomposition. 
\begin{lemma}
  \label{linear:lemma:ILED-near:LoT-control:zero-order-mixed-term}
  Let $\DomainOfIntegration$ be as defined in
  \eqref{linear:eq:ILED-trapping:trapping-reg-def}. Fix $\delta_0>0$, and
  some $\PotentialOp \in S^0 + S^{-1}D_{\tStar}$, where $S^0$ and
  $S^{-1}$ are Hermitian with respect to the $L^2(\TrappingNbhd)$
  Hermitian inner product \footnote{The restriction to Hermitian
    $S^0$, $S^{-1}$ is unnecessary, but is sufficient here.}. Then for
  $\delta_r$ sufficiently small and $h$ such that $h(\tStar, \cdot)$
  is supported on $\TrappingNbhd$ for all $\tStar\ge 0$, we have that
  \begin{equation*}
    \abs*{\Re\bangle*{\PotentialOp h, h}_{L^2(\DomainOfIntegration)}} < \delta_0 \norm*{h}_{\MorawetzNorm_b(\DomainOfIntegration)}^2.
  \end{equation*}
\end{lemma}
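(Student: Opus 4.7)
The plan is to split $\PotentialOp = A_0 + A_{-1}D_{\tStar}$ with $A_0 \in \Op S^0$ and $A_{-1}\in \Op S^{-1}$ both Hermitian on $L^2(\TrappingNbhd)$, and bound the two pieces separately. The $A_0$ contribution is handled immediately by the preceding Lemma~\ref{linear:lemma:ILED-near:LoT-control:lowest-order-control}, which gives $|\Re\bangle*{A_0 h, h}_{L^2(\DomainOfIntegration)}| \lesssim \delta_r(\norm*{h}_{L^2}^2 + \norm*{\p_r h}_{L^2}^2) \le \delta_r\norm*{h}^2_{\MorawetzNorm_b(\DomainOfIntegration)}$, which is $< \delta_0\norm*{h}^2_{\MorawetzNorm_b}$ once $\delta_r$ is taken small enough.

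For the $A_{-1}D_{\tStar}$ piece I would first observe that since $A_{-1}$ is Hermitian and commutes with $D_{\tStar}$ (being $\tStar$-independent), $A_{-1}D_{\tStar}$ is self-adjoint on $L^2(\DomainOfIntegration)$ modulo $\tStar$-boundary contributions that do not enter the bulk estimate, so $\Re\bangle*{A_{-1}D_{\tStar}h,h} = \bangle*{A_{-1}D_{\tStar}h,h}$ modulo boundary. I would then integrate by parts in $r$, using $\p_r(r-3M) = 1$ together with the compact $r$-support of $h(\tStar,\cdot)$ in $\TrappingNbhd$, to write
\begin{equation*}
\bangle*{A_{-1}D_{\tStar}h, h}_{L^2(\DomainOfIntegration)} = -\int_{\DomainOfIntegration}(r-3M)\,\p_r\squareBrace*{(A_{-1}D_{\tStar}h)\overline{h}}\, dV,
\end{equation*}
so that every resulting bulk term carries a factor $|r-3M|\le \delta_r$.

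The main conceptual step is that the $D_{\tStar}$ factors appearing inside the integrand cannot be bounded in $L^2$ directly by $\norm*{h}_{\MorawetzNorm_b}$, so the integrand must be reorganized first. For this I would combine the Mather-division identity $D_{\tStar} = a_1(D_{\tStar}-\sigma_1) + a_2(D_{\tStar}-\sigma_2) + E_0$, with $a_j, E_0 \in \Op S^0$, with the factorization $r-3M = \ell_{3-j} + a\chi_{\ge 1}\tilde{r}_b(\sigma_{3-j}, \FreqPhi)$: for each $j \in \{1,2\}$, the bilinear form $\int \ell_{3-j}\cdot(A_{-1}a_j(D_{\tStar}-\sigma_j)\phi)\overline{\psi}\, dV$ with $\phi,\psi \in \{h, \p_r h\}$ can, after commuting $\ell_{3-j}$ past $A_{-1}a_j$, be rewritten as $\int A_{-1}a_j\cdot \ell_{3-j}(D_{\tStar}-\sigma_j)\phi\cdot\overline{\psi}\, dV$ plus a commutator $[\ell_{3-j},A_{-1}a_j]\in \Op S^{-2}$ remainder; Cauchy-Schwarz together with the Morawetz bound $\norm*{\ell_{3-j}(D_{\tStar}-\sigma_j)\phi}_{L^2}\lesssim \norm*{h}_{\MorawetzNorm_b}$ (where for $\phi = \p_r h$ one first commutes $\p_r$ past $\ell_{3-j}(D_{\tStar}-\sigma_j)$) then bounds this principal contribution by $\norm*{h}_{\MorawetzNorm_b}^2$.

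The hardest part will be the careful bookkeeping of commutator errors and the $O(a)$ remainder. The commutators $[A_{-1},\ell_{3-j}]\in \Op S^{-2}$, $[A_{-1},\sigma_j]\in \Op S^{-1}$, $[A_{-1},\p_r]\in \Op S^{-1}$, and $[\ell_{3-j}(D_{\tStar}-\sigma_j),\p_r]\in \Op S^{1}$, after composition with $A_{-1}$ and pairing against $h$ or $\p_r h$, all yield $L^2$-bounded contributions of order at most $\Op S^0$, giving $\lesssim \norm*{h}_{L^2}^2 + \norm*{\p_r h}_{L^2}^2 \lesssim \norm*{h}^2_{\MorawetzNorm_b}$, absorbed via the $\delta_r$ prefactor from the integration by parts. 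The $O(a)$ remainder coming from $r-3M = \ell_{3-j} + O(a)$, combined with the $\delta_r$ prefactor, yields an $O(a\delta_r)$ contribution absorbed in the same way. Assembling everything gives $|\Re\bangle*{\PotentialOp h, h}_{L^2(\DomainOfIntegration)}|\lesssim (\delta_r + a)\norm*{h}^2_{\MorawetzNorm_b}$, which is $< \delta_0\norm*{h}^2_{\MorawetzNorm_b}$ for $\delta_r$ sufficiently small and $a$ sufficiently small as assumed throughout the paper.
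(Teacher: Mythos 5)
Your overall strategy matches the paper's: split $\PotentialOp = A_0 + A_{-1}D_{\tStar}$, dispatch $A_0$ by Lemma~\ref{linear:lemma:ILED-near:LoT-control:lowest-order-control}, Mather-divide $D_{\tStar} = a_1(D_{\tStar}-\sigma_1)+a_2(D_{\tStar}-\sigma_2)+E_0$, integrate by parts in $r$ to produce a degeneracy factor, and pair that factor with the Morawetz weights. The paper's $\sigma^{(i)}$-decomposition and integration by parts against $r-\rTrapping_i$ is the same idea in slightly different bookkeeping, so the approach is not in question.

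There is, however, a real gap at the end. For the principal contribution $\int A_{-1}a_j\cdot\ell_{3-j}(D_{\tStar}-\sigma_j)\phi\cdot\overline{\psi}$ you apply $A_{-1}a_j$ to the Morawetz-controlled side and then Cauchy--Schwarz, arriving at a bound of size $\norm*{h}_{\MorawetzNorm_b(\DomainOfIntegration)}^2$ with a constant that has no reason to shrink as $\delta_r\to 0$. The $(r-3M)$ produced by the integration by parts has already been consumed to recover the weight $\ell_{3-j}$, so it cannot also supply the $\delta_r$ you then invoke in the assembly step $\lesssim(\delta_r+a)\norm*{h}^2_{\MorawetzNorm_b}$; that inequality does not follow from what you have written. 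Since the lemma demands an \emph{arbitrarily small} constant $\delta_0$, this is not a cosmetic omission. The missing ingredient is to send the regularizing factor the other way: write the pairing as $\int \ell_{3-j}(D_{\tStar}-\sigma_j)\phi\cdot\overline{(A_{-1}a_j)^*\psi}$, so that Cauchy--Schwarz gives $\lesssim \norm*{\ell_{3-j}(D_{\tStar}-\sigma_j)\phi}_{L^2}\,\norm*{(A_{-1}a_j)^*\psi}_{L^2}$, and for $\psi\in\{h,\p_r h\}$ the second factor is $\lesssim\norm*{\psi}_{H^{-1}}\lesssim\norm*{h}_{L^2}$. The small $r$-support of $h$ in $\TrappingNbhd$ then gives, via the Poincar\'e-type second conclusion of Lemma~\ref{linear:lemma:ILED-near:LoT-control:lowest-order-control}, the bound $\norm*{h}_{L^2}\lesssim\sqrt{\delta}\,\norm*{h}_{\MorawetzNorm_b}$, which is where the genuine smallness in $\delta_r$ enters. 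This is precisely what the paper is doing when it cites ``Cauchy--Schwarz and Lemma~\ref{linear:lemma:ILED-near:LoT-control:lowest-order-control}'' for each of the three bulk terms. With that redirection your argument closes; without it, the principal contribution is only bounded, not small.
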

\begin{proof}
  Fix some auxiliary $\delta>0$. 
  For $\PotentialOp\in S^0$ we can conclude directly using Lemma
  \ref{linear:lemma:ILED-near:LoT-control:lowest-order-control}, so it
  suffices to just consider
  $\PotentialOp = \PotentialOp_{-1}D_{\tStar}\in S^{-1}D_{\tStar}$ with
  symbol $\PotentialSym_0\sigma$. 
  The critical observation is that we can decompose
  \begin{equation*}
    \sigma = \sigma^{(1)} + \sigma^{(2)},\qquad
    \sigma^{(1)}:= \frac{\sigma-\sigma_2}{\sigma_1-\sigma_2},\qquad
    \sigma^{(2)}:= - \frac{\sigma-\sigma_1}{\sigma_1-\sigma_2}. 
  \end{equation*}
  We can now take advantage of the fact that
  \begin{equation*}
    \p_r(r-\rTrapping_b(\sigma_1, \FreqPhi)) = \p_r(r-\rTrapping_b(\sigma_2, \FreqPhi)) = 1
  \end{equation*}
  to obtain after integration by parts that
  \begin{align}
    - \bangle*{\PotentialOp h,h}_{L^2(\DomainOfIntegration)}
    ={}& \sum_{i=1,2}\bangle*{(r-\rTrapping_i)\Op\left(\sigma^{(i)}\right)\circ \p_rh,\PotentialOp_{-1} h}_{L^2(\DomainOfIntegration)}\notag\\
    &+ \sum_{i=1,2}\bangle*{(r-\rTrapping_i)\Op\left(\sigma^{(i)}\right)h,\p_r \circ \PotentialOp_{-1} h}_{L^2(\DomainOfIntegration)}\notag\\
    & + \sum_{i=1,2}\bangle*{(r-\rTrapping_i)\squareBrace*{\p_r, \Op\left(\sigma^{(i)}\right)}h,h}_{L^2(\DomainOfIntegration)}
    + Err, \label{linear:lemma:ILED-near:LoT-control:zero-order-mixed-term:IbP} 
  \end{align}
  where
  \begin{equation*}
    \rTrapping_i := \rTrapping_b(\sigma_i, \FreqPhi), 
  \end{equation*}
  and $Err$ is an error term satisfying the control
  \begin{equation*}
    \abs*{Err} < \delta\norm*{h}_{\MorawetzNorm_b(\DomainOfIntegration)}^2. 
  \end{equation*}
  The terms on the right-hand side of
  \eqref{linear:lemma:ILED-near:LoT-control:zero-order-mixed-term:IbP} can
  now be handled individually.

  To handle the first term on the right-hand side of
  \eqref{linear:lemma:ILED-near:LoT-control:zero-order-mixed-term:IbP}, we
  observe that
  \begin{equation*}
    \norm*{\Op\left(\left(\frac{(r-r_1)(\sigma-\sigma_2)}{\sigma_1-\sigma_2} - \frac{(r-r_2)(\sigma-\sigma_2)}{\sigma_1-\sigma_2}\right)\xi \right)h}_{L^2(\DomainOfIntegration)} \lesssim \norm*{h}_{\MorawetzNorm_b(\DomainOfIntegration)},
  \end{equation*}
  so that using Cauchy Schwarz and Lemma
  \ref{linear:lemma:ILED-near:LoT-control:lowest-order-control}, we have in
  fact that
  \begin{equation*}
    \abs*{
      \sum_{i=1,2}\bangle*{(r-\rTrapping_i)\Op\left(\sigma^{(i)}\right)\circ \p_rh,\PotentialOp_{-1} h}_{L^2(\DomainOfIntegration)}}
    \lesssim \delta \norm*{h}_{\MorawetzNorm_b(\DomainOfIntegration)}^2. 
  \end{equation*}  

  The second term on the right-hand side of
  \eqref{linear:lemma:ILED-near:LoT-control:zero-order-mixed-term:IbP} is
  handled directly by Cauchy-Schwarz and Lemma
  \ref{linear:lemma:ILED-near:LoT-control:lowest-order-control}, taking
  $\delta_r$ to be sufficiently small.

  Finally, to control the third term on the right-hand side of
  \eqref{linear:lemma:ILED-near:LoT-control:zero-order-mixed-term:IbP}, we
  observe that
  \begin{equation*}
    \PoissonB*{\xi, \sigma^{(1)}}
    = - \p_r(\sigma_1-\sigma_2)^{-1}(\sigma-\sigma_2)
    -\frac{\p_r\sigma_2}{\sigma_1-\sigma_2},\qquad
    \PoissonB*{\xi, \sigma^{(2)}}
    =  \p_r(\sigma_1-\sigma_2)^{-1}(\sigma-\sigma_1)
    + \frac{\p_r\sigma_1}{\sigma_1-\sigma_2}.
  \end{equation*}
  We see then that using Cauchy-Schwarz and Lemma
  \ref{linear:lemma:ILED-near:LoT-control:lowest-order-control} and taking
  $\delta_r$ to be sufficiently small, 
  \begin{align*}
    \abs*{\bangle*{(r-\rTrapping_1)\Op\left(\p_r(\sigma_1-\sigma_2)^{-1}(\sigma-\sigma_2)\right)h,\PotentialOp_{-1} h}_{L^2(\DomainOfIntegration)}}
    &< \delta \norm*{h}_{\MorawetzNorm_b(\DomainOfIntegration)}^2,\\
    \abs*{\bangle*{(r-\rTrapping_2)\Op\left(\p_r(\sigma_1-\sigma_2)^{-1}(\sigma-\sigma_1)\right)h,\PotentialOp_{-1} h}_{L^2(\DomainOfIntegration)}}
    &< \delta \norm*{h}_{\MorawetzNorm_b(\DomainOfIntegration)}^2.
  \end{align*}
  Moreover, directly by taking $\delta_r$ to be sufficiently small, we
  have that
  \begin{align*}
    &\abs*{\bangle*{(r-\rTrapping_1)\Op\left(\frac{\p_r\sigma_2}{\sigma_1-\sigma_2}\right)h,\PotentialOp_{-1} h}_{L^2(\DomainOfIntegration)}}
    +\abs*{\bangle*{(r-\rTrapping_2)\Op\left(\frac{\p_r\sigma_1}{\sigma_1-\sigma_2}\right)h,\PotentialOp_{-1} h}_{L^2(\DomainOfIntegration)}}\\
    <{}& \delta \norm*{h}_{\MorawetzNorm_b(\DomainOfIntegration)}^2. 
  \end{align*}
  This concludes the proof of Lemma
  \ref{linear:lemma:ILED-near:LoT-control:zero-order-mixed-term}.     
\end{proof}

At the level of first-order bulk terms that arise in the application
of the divergence theorem, a skew-Hermitian operator can easily be
handled via an integration by parts argument.
\begin{lemma}
  \label{linear:lemma:ILED-near:LoT-control:symmetry-control}
  Fix $\delta>0$. Let $\SubPOp = \SubPOp_0D_{\tStar} + \SubPOp_1$ be a
  first-order pseudo-differential operator such that
  $ \SubPOp_i \in \Op S^i(\Sigma)$ are skew-Hermitian with respect to
  the $\InducedLTwo(\Sigma)$ inner product respectively.
  
  Then, for $h$ such that $h(\tStar,\cdot)$ is compactly
  supported in $\TrappingNbhd$ for all $\tStar\ge 0$, with
  $\TrappingNbhd$ defined as in
  \eqref{linear:eq:ILED-trapping:trapping-reg-def},
  \begin{equation}
    \label{linear:eq:ILED-near:LoT-control:symmetry-control:bulk}
    \Re\bangle*{\SubPOp h, h}_{L^2(\DomainOfIntegration)}
    = \evalAt*{\Re\bangle*{\SubPOp_0h, h}_{\InducedLTwo(\Sigma_{\tStar})}}^{\tStar=T}_{\tStar=0}.    
  \end{equation}

  Similarly, if there exist $\sigma$ and $u$ such that $h =
  e^{-\ImagUnit\sigma\tStar}u$, and $\SubPOp$ is as specified above,
  then we also have that
  \begin{equation}
    \label{linear:eq:ILED-near:LoT-control:symmetry-control:boundary}
    \Re\bangle*{\SubPOp h, h}_{L^2(\TrappingNbhd)} = 0. 
  \end{equation}
\end{lemma}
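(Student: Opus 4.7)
The plan is to split $\SubPOp = \SubPOp_0 D_{\tStar} + \SubPOp_1$ as in the hypothesis and treat each summand separately, using skew-symmetry pointwise in $\tStar$ for the $\SubPOp_1$ piece and integration by parts in $\tStar$ for the $\SubPOp_0 D_{\tStar}$ piece.

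For the $\SubPOp_1$ contribution to \eqref{linear:eq:ILED-near:LoT-control:symmetry-control:bulk}, skew-Hermiticity of $\SubPOp_1$ on $\InducedLTwo(\Sigma_{\tStar})$ immediately makes $\bangle{\SubPOp_1 h, h}_{\InducedLTwo(\Sigma_{\tStar})}$ purely imaginary at every fixed $\tStar$, so after integrating over $\tStar\in [0,T]$ against the co-area density it contributes nothing to $\Re\bangle{\SubPOp h, h}_{L^2(\DomainOfIntegration)}$. The substantive step is the $\SubPOp_0 D_{\tStar}$ contribution. Using that $\SubPOp_0$ and $\sqrt{\GInvdtdt}$ are both stationary, I can differentiate $\bangle{\SubPOp_0 h, h}_{\InducedLTwo(\Sigma_{\tStar})}$ in $\tStar$ and then apply the skew-Hermiticity relation $\bangle{\SubPOp_0 h, \p_{\tStar}h}_{\InducedLTwo} = -\overline{\bangle{\SubPOp_0 \p_{\tStar}h, h}_{\InducedLTwo}}$ together with $\p_{\tStar} = -\ImagUnit D_{\tStar}$ to obtain the pointwise-in-time conservation law
\begin{equation*}
\p_{\tStar}\bangle{\SubPOp_0 h, h}_{\InducedLTwo(\Sigma_{\tStar})} = -2\ImagUnit \,\Re \bangle{\SubPOp_0 D_{\tStar}h, h}_{\InducedLTwo(\Sigma_{\tStar})}.
\end{equation*}
Integrating in $\tStar$ over $[0,T]$ and combining with the vanishing of the $\SubPOp_1$ piece gives \eqref{linear:eq:ILED-near:LoT-control:symmetry-control:bulk}; note that both sides in fact vanish identically, as $\bangle{\SubPOp_0 h, h}_{\InducedLTwo}$ is purely imaginary by another application of skew-Hermiticity, which makes the boundary value of its real part zero.

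For the modal statement \eqref{linear:eq:ILED-near:LoT-control:symmetry-control:boundary}, substituting $h = e^{-\ImagUnit\sigma\tStar}u$ gives $D_{\tStar}h = \sigma h$ and hence $\SubPOp h = e^{-\ImagUnit\sigma\tStar}(\sigma\SubPOp_0 + \SubPOp_1)u$. The spatial pairing reduces to $e^{2(\Im\sigma)\tStar}\bangle{(\sigma\SubPOp_0 + \SubPOp_1)u, u}_{L^2(\TrappingNbhd)}$, with the exponential prefactor real and positive. Skew-Hermiticity of each of $\SubPOp_0$ and $\SubPOp_1$ on $\InducedLTwo$ makes the pairings $\bangle{\SubPOp_j u, u}$ purely imaginary, and the result then follows by inspection after expanding $\sigma$ into real and imaginary parts, confirming that the real part of the pairing vanishes.

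The only real subtlety is reconciling the $L^2$ and $\InducedLTwo$ notational conventions appearing in the statement, which is handled by the co-area identification $\int_{\DomainOfIntegration} = \int_0^T \int_{\Sigma_{\tStar}} \sqrt{\GInvdtdt}\, d\tStar$; this identification is legitimate because $\sqrt{\GInvdtdt}$ is strictly positive and stationary, by \eqref{linear:eq:GInvdtdt-prop} and the construction of the Kerr-star coordinates. Once this compatibility is in place, the proof reduces to the short algebraic manipulation above, with no analytic content beyond skew-symmetry and differentiation of a stationary bilinear pairing.
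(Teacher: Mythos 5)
Your overall strategy is the one the paper intends: split $\SubPOp = \SubPOp_0 D_{\tStar} + \SubPOp_1$, use pointwise-in-$\tStar$ skew-Hermiticity for the $\SubPOp_1$ piece, and integrate by parts in $\tStar$ for the $\SubPOp_0 D_{\tStar}$ piece. The paper gives no details beyond calling this ``a simple integration by parts exercise,'' and your pointwise conservation law
\begin{equation*}
\p_{\tStar}\bangle{\SubPOp_0 h, h}_{\InducedLTwo(\Sigma_{\tStar})}
= -2\ImagUnit\,\Re \bangle{\SubPOp_0 D_{\tStar}h, h}_{\InducedLTwo(\Sigma_{\tStar})}
\end{equation*}
is correct. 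The error is in what you conclude from it.

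Integrating your own identity over $[0,T]$ and adding back the $\SubPOp_1$ piece gives
\begin{equation*}
\Re\bangle*{\SubPOp h, h}_{L^2(\DomainOfIntegration)}
= \tfrac{\ImagUnit}{2}\left.\bangle*{\SubPOp_0 h, h}_{\InducedLTwo(\Sigma_{\tStar})}\right|_{\tStar=0}^{\tStar=T}.
\end{equation*}
You then assert that ``both sides in fact vanish identically'' because $\bangle{\SubPOp_0 h, h}_{\InducedLTwo}$ is purely imaginary. This is where the proof goes wrong. Since $\bangle{\SubPOp_0 h, h}_{\InducedLTwo}$ is purely imaginary, the quantity $\tfrac{\ImagUnit}{2}\bangle{\SubPOp_0 h, h}_{\InducedLTwo}$ is a \emph{real, generically nonzero} number; you have conflated ``$\tfrac{\ImagUnit}{2}$ times a purely imaginary quantity'' (which is nonzero real) with ``the real part of a purely imaginary quantity'' (which is zero). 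Your concluding claim therefore directly contradicts the conservation law you just derived. It should also have raised a flag on structural grounds: if both sides of \eqref{linear:eq:ILED-near:LoT-control:symmetry-control:bulk} were identically zero, the lemma would carry no content, whereas its entire role in Section~\ref{linear:sec:ILED} is to convert a bulk term into a \emph{nonzero} boundary term that is subsequently estimated and absorbed in Lemmas~\ref{linear:lemma:ILED-near:LoT-control} and~\ref{linear:lemma:ILED-near:boundary-terms}.

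The same confusion undermines your treatment of \eqref{linear:eq:ILED-near:LoT-control:symmetry-control:boundary}. After substituting $h = e^{-\ImagUnit\sigma\tStar}u$ you correctly reduce to $e^{2(\Im\sigma)\tStar}\bangle{(\sigma\SubPOp_0 + \SubPOp_1)u, u}$ and correctly note that the pairings $\bangle{\SubPOp_j u, u}$ are purely imaginary on $\InducedLTwo$, but the vanishing of the real part does \emph{not} ``follow by inspection.'' Writing $\bangle{\SubPOp_0 u, u} = \ImagUnit c$ with $c$ real, one has $\Re(\sigma\bangle{\SubPOp_0 u, u}) = -(\Im\sigma)\,c$, which is nonzero whenever $\Im\sigma \neq 0$ and $c\neq 0$; the cross term between $\Im\sigma$ and the imaginary pairing does not cancel. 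Your computation only establishes the vanishing for $\sigma\in\Real$, and you have not supplied the missing ingredient to deal with the general case.
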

\begin{proof}
  The proof for \eqref{linear:eq:ILED-near:LoT-control:symmetry-control:bulk}
  is a simple integration by parts exercise. The proof for
  \eqref{linear:eq:ILED-near:LoT-control:symmetry-control:boundary} is also a
  simple integration by parts exercise, where we note that
  $D_{\tStar}h = \sigma h$. 
\end{proof}

Unfortunately, not all of the first-order bulk terms that we pick up
in the application of the divergence theorem respect the symmetry
assumptions of Lemma
\ref{linear:lemma:ILED-near:LoT-control:symmetry-control}. For these
first-order terms, we will need to use a more delicate argument
relying on exchanging the lower-order nature of the bulk terms for
some degeneracy at the trapped set. Since this argument essentially
promotes these lower-order terms to become principal level errors, to
apply the control effectively will require some additional smallness
parameter. 
\begin{lemma}
  \label{linear:lemma:ILED-near:LoT-control:exchange-degeneracy-trick}
  Let $\DomainOfIntegration$ be as defined in
  \eqref{linear:eq:ILED-trapping:trapping-reg-def}, and fix some
  $\delta_0>0$. Let
  $\SubPOp\in \Op S^1(\Sigma) + \Op S^{0}(\Sigma)\p_{\tStar}$ be a first
  order pseudo-differential operator such that
  \begin{equation*}
    \SubPOp = \SubPOp_1 + \SubPOp_0D_{\tStar},
  \end{equation*}
  where $\SubPOp_1$ and $\SubPOp_0$ are Hermitian with respect to the
  $L^2(\TrappingNbhd)$ Hermitian inner product, such that the
  symbol $\SubPSym$ of $\SubPOp$ satisfies
  \begin{equation*}
    \abs*{\SubPSym_i}\le \varepsilon_{\SubPOp}|\zeta|^i. 
  \end{equation*}
  Then for $\varepsilon_{\SubPOp}$ and $\delta_r$ sufficiently small,
  \begin{equation}
    \label{linear:eq:ILED-near:LoT-control:exchange-degeneracy-trick}
    \Re\bangle*{\SubPOp h, h}_{L^2(\DomainOfIntegration)}
    \le \delta_0 \norm*{h}_{\MorawetzNorm_b(\DomainOfIntegration)}^2
    + \evalAt*{\int_{\TrappingNbhd_{\tStar}}\widetilde{J}[h]}_{\tStar=0}^{\tStar=\TStar}, 
  \end{equation}
  where $h$ is as specified in Theorem \ref{linear:thm:ILED-near:main}, and
  \begin{equation}
    \label{linear:eq:ILED-near:LoT-control:boundary-err-control}
    \abs*{\int_{\TrappingNbhd_{\tStar}}\widetilde{J}[h]}
    \lesssim \delta_0\norm*{h}_{\InducedMorawetzNorm(\Sigma_{\tStar})}^2.
  \end{equation}
\end{lemma}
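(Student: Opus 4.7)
The plan is to bound $\Re\bangle*{\SubPOp h, h}_{L^2(\DomainOfIntegration)}$ by $\delta_0\,\norm*{h}_{\MorawetzNorm_b(\DomainOfIntegration)}^2$ plus the boundary term $\widetilde{J}[h]$, treating $\SubPOp = \SubPOp_1 + \SubPOp_0 D_\tStar$ as a small ($\varepsilon_\SubPOp$) Hermitian perturbation whose contribution must be hidden inside the degenerate Morawetz norm. The argument parallels the $S^{-1}D_\tStar$ case of Lemma \ref{linear:lemma:ILED-near:LoT-control:zero-order-mixed-term}, except that for a genuinely first-order operator we must, for every term, either expose a pairing with $(D_\tStar - \Op(\sigma_{3-i}))h$ in $L^2_{\ell_i}$ or with $\p_r h$, or else extract via integration by parts a factor $(r-3M)$ providing smallness $\delta_r$.

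First I would dispose of the time-derivative piece $\SubPOp_0 D_\tStar$ using the partition $1 = \sigma^{(1)} + \sigma^{(2)}$ with $\sigma^{(i)} := (\sigma-\sigma_{3-i})/(\sigma_1-\sigma_2)\in S^0$, exactly as in Lemma \ref{linear:lemma:ILED-near:LoT-control:zero-order-mixed-term}. Writing
\begin{equation*}
\SubPOp_0 D_\tStar = \sum_{i=1,2}\SubPOp_0\,\Op(\sigma^{(i)})\bigl(D_\tStar - \Op(\sigma_{3-i})\bigr) + \mathrm{l.o.t.},
\end{equation*}
each bulk term now contains the factor $(D_\tStar-\Op(\sigma_{3-i}))h$ controlled in $L^2_{\ell_i}$ by $\norm*{h}_{\MorawetzNorm_b}$. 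A $\p_r(r-\rTrapping_i)=1$ integration by parts transfers the $\ell_i$ weight onto the adjacent factor; after Cauchy--Schwarz, the smallness $\varepsilon_\SubPOp$, and Lemma \ref{linear:lemma:ILED-near:LoT-control:lowest-order-control} absorbing the lowest-order residues, the bulk is bounded by $\delta_0\norm*{h}_{\MorawetzNorm_b}^2$. The spatial boundaries produced by this integration by parts aggregate, slice by slice, into contributions to $\widetilde{J}[h]$ satisfying \eqref{linear:eq:ILED-near:LoT-control:boundary-err-control} by the same Cauchy--Schwarz reasoning applied directly on $\TrappingNbhd_\tStar$.

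For the purely spatial operator $\SubPOp_1\in\Op S^1(\Sigma)$, whose symbol is $\sigma$- and $\tStar$-independent, Mather division in $\xi$ gives
\begin{equation*}
\SubPSym_1(x;\xi,\eta) = a(x;\eta)\,\xi + b(x;\eta), \qquad a\in S^0,\ b\in S^1,
\end{equation*}
with $b$ depending only on the angular frequencies. The $\Op(a)\p_r$ piece pairs against $h$ via Cauchy--Schwarz, giving $\varepsilon_\SubPOp\,\norm*{\p_r h}_{L^2}\norm*{h}_{L^2}$ which is absorbable using $\norm*{\p_r h}_{L^2}\le\norm*{h}_{\MorawetzNorm_b}$ and Lemma \ref{linear:lemma:ILED-near:LoT-control:lowest-order-control}. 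The angular remainder $\Op(b)$ is the main difficulty since its symbol does not vanish on $\TrappedSet_b$. I would use the Hermitian symmetry $\Op(b) = \tfrac{1}{2}(\Op(b)+\Op(b)^*)$ together with the identity $1=\p_r(r-3M)$ to integrate by parts, producing a factor $(r-3M)$ bounded by $\delta_r$ and reducing the principal contribution to a pairing of $\p_r h$ against $(r-3M)\Op(b) h$, controlled by Cauchy--Schwarz. Commutator residues are order zero by the symbol calculus and are absorbed by Lemma \ref{linear:lemma:ILED-near:LoT-control:lowest-order-control}, while the boundary contributions join $\widetilde{J}[h]$ and inherit the $\delta_0$ bound from the same $\varepsilon_\SubPOp$-smallness.

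The main obstacle will be the angular remainder $\Op(b)$: it has no natural alignment with the factors $\ell_1(\sigma-\sigma_2)$, $\ell_2(\sigma-\sigma_1)$, or $\xi$ that underlie $\MorawetzNorm_b$, so the integration-by-parts maneuver must be performed carefully to ensure the principal-order contribution is genuinely converted into a $\delta_r$-small term rather than a symbol-level obstruction. A secondary technical point is the bookkeeping of the boundary contributions, since each integration by parts in $\p_r$ produces a slice-wise boundary that must be assembled into $\widetilde{J}[h]$ satisfying \eqref{linear:eq:ILED-near:LoT-control:boundary-err-control}; fortunately the $\sigma^{(i)}$ decomposition is $\tStar$-independent and contributes no time-boundary term, so all boundary terms genuinely localize at $\tStar=0,\TStar$ as required.
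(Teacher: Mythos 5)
Your decomposition of the purely spatial piece $\SubPOp_1$ via Mather division in $\xi$, $\SubPSym_1 = a(x;\eta)\xi + b(x;\eta)$, is where the argument breaks down, and you correctly flag it as the main obstacle but do not actually resolve it. The $\xi$-multiple is fine, since $\p_r h$ is controlled by $\MorawetzNorm_b$. But for the angular remainder $\Op(b)$, integrating by parts against $\p_r(r-3M)=1$ produces a factor $(r-3M)$, which does \emph{not} vanish on the \KdS{} trapped set $\TrappedSet_b$: trapping sits at $r = \rTrapping_b(\sigma_i,\FreqPhi)$, displaced from $3M$ by $O(a)$. After Cauchy--Schwarz you are left with terms like $\|(r-3M)\Op(b)h\|_{L^2}$, whose symbol $(r-3M)\,b$ does not vanish at $\TrappedSet_b$ and hence is \emph{not} dominated by the degenerate symbols $\ell_1(\sigma-\sigma_2)$, $\ell_2(\sigma-\sigma_1)$, $\xi$ that build $\MorawetzNorm_b$. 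The leftover is an $O(a\,\varepsilon_\SubPOp)$ multiple of a genuine first-order norm $\|\nabla h\|_{L^2}$, which cannot be absorbed by $\delta_0\,\norm{h}^2_{\MorawetzNorm_b}$ for any fixed $\delta_0$, since the ratio $\|\nabla h\|/\|h\|_{\MorawetzNorm_b}$ is unbounded near trapping. The paper avoids this by \emph{never} splitting off a purely angular piece: it partitions $\SubPSym_1 = \SubPSym_1^{(1)} + \SubPSym_1^{(2)}$ with $\SubPSym_1^{(i)} = \pm\frac{\sigma-\sigma_{3-i}}{\sigma_1-\sigma_2}\SubPSym_1$, so that each summand already carries a Morawetz degeneracy factor $(\sigma-\sigma_{3-i})$, and only then integrates by parts against the matching weight $\p_r(r-\rTrapping_i)=1$; the pairing $\ell_i(\sigma-\sigma_{3-i})$ then lands squarely in $\MorawetzNorm_b$.

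Two further issues. First, the commutator residues $[\p_r,\Op(\sigma^{(i)})]$ produced by the integration by parts are not harmless: they contain pieces like $(\sigma-\sigma_{3-i})\p_r(\sigma_1-\sigma_2)^{-1}$ that carry a Morawetz factor (fine) but also pieces like $\tfrac{\SubPSym_1\p_r\sigma_{3-i}}{\sigma_1-\sigma_2}$ that do not, and these must be re-partitioned and iterated (the paper runs the recursion through $\tilde{\SubPSym}_{1,1}$ and $\tilde{\SubPSym}_{1,1,2}$ until two factors $\ell_1\ell_2$ accumulate, at which point the remaining symbol is dominated by the Morawetz family). Your proposal only invokes ``commutator residues are order zero by the symbol calculus,'' but they are not: they are first-order, and the recursion is essential. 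Second, the time-derivative decomposition as you wrote it, $\SubPOp_0 D_\tStar = \sum_{i}\SubPOp_0\Op(\sigma^{(i)})(D_\tStar - \Op(\sigma_{3-i})) + \text{l.o.t.}$, does not hold at the symbol level: using $\sigma^{(1)}+\sigma^{(2)}=1$ one gets $\sum_i \sigma^{(i)}(\sigma-\sigma_{3-i}) = 2\sigma - (\sigma_1+\sigma_2)$, so the ``remainder'' is $-\sigma + (\sigma_1+\sigma_2)$, which contains the negative of the original operator and is therefore not lower order. The paper's partition for this term, $\SubPSym_0^{(1)} = \frac{\sigma_1(\sigma-\sigma_2)}{\sigma_1-\sigma_2}\SubPSym_0$, $\SubPSym_0^{(2)} = -\frac{\sigma_2(\sigma-\sigma_1)}{\sigma_1-\sigma_2}\SubPSym_0$, is exact with $\SubPSym_0^{(1)}+\SubPSym_0^{(2)} = \sigma\SubPSym_0$ and no remainder; notice the crucial $\sigma_i$ in the numerator rather than $(\sigma-\sigma_{3-i})$.
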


\begin{proof}
  Fix some arbitrary $\delta>0$.  Observe that for $a$ sufficiently
  small with respect to $M$ and $\Lambda$, we have that
  $\sigma_1-\sigma_2$ is an elliptic operator.  We now define
  \begin{equation*}
    \rTrapping_i := \rTrapping_b(\sigma_i, \FreqPhi),
  \end{equation*}
  and  write
  \begin{equation*}
    \SubPSym = \SubPSym_0\sigma + \SubPSym_1,\qquad \SubPSym_i\in S^i. 
  \end{equation*}    
  We will handle the two terms separately. We first handle
  $\SubPSym_0\sigma$. For this term, we define
  \begin{equation*}
    \SubPSym_0^{(2)} := -\frac{\sigma_2(\sigma-\sigma_1)}{\sigma_1-\sigma_2}\SubPSym_0,\qquad
    \SubPSym_0^{(1)} := \frac{\sigma_1(\sigma-\sigma_2)}{\sigma_1-\sigma_2}\SubPSym_0,
  \end{equation*}
  so that
  \begin{equation*}
    \SubPSym_0\sigma = \SubPSym_0^{(1)} + \SubPSym_0^{(2)}.
  \end{equation*}
  Now, integrating by parts, we have that
  \begin{align}
    -\bangle*{\SubPOp_0\p_{\tStar} h, h}_{L^2(\DomainOfIntegration)}
    ={}& \sum_{i=1,2}\bangle*{(r-\rTrapping_i)\Op\left(\SubPSym_0^{(i)}\right)\circ \p_r h, h}_{L^2(\DomainOfIntegration)}
    + \sum_{i=1,2}\bangle*{(r-\rTrapping_i)\Op\left(\SubPSym_0^{(i)}\right)h, \p_r h}_{L^2(\DomainOfIntegration)}\notag \\
    &+ \sum_{i=1,2} \bangle*{(r-\rTrapping_i)\squareBrace*{\p_r, \Op\left(\SubPSym_0^{(i)}\right)}h, h}_{L^2(\DomainOfIntegration)}
    + Err[h], \label{linear:eq:ILED-near:LoT-control:exchange-degeneracy-trick:IbP:S0}
  \end{align}
  where the error terms are lower-order terms respecting the
  degeneracy in the Morawetz norm, which  for $\delta_r$ sufficiently
  small, using Lemma
  \ref{linear:lemma:ILED-near:LoT-control:lowest-order-control}, satisfies 
  \begin{equation*}
    \abs*{Err[h]}\lesssim \delta \norm*{h}_{\MorawetzNorm_b(\DomainOfIntegration)}^2.
  \end{equation*}
  We treat each of the terms on the right-hand side of
  \eqref{linear:eq:ILED-near:LoT-control:exchange-degeneracy-trick:IbP:S1}
  individually. First, we use that up to lower-order terms,
  \begin{equation*}
    \left((r-\rTrapping_i)\Op\left(\SubPSym_0^{(i)})\right)\right)^*
    =   (r-\rTrapping_i)\Op\left(\SubPSym_0^{(i)}\right)
    - \left[(r-\rTrapping_i), \Op\left(\SubPSym_0^{(i)}\right)\right], 
  \end{equation*}
  so that using integrating by parts and using Lemma
  \ref{linear:lemma:ILED-near:LoT-control:zero-order-mixed-term} and
  Cauchy-Schwarz, 
  \begin{equation*}
    \abs*{ \Re\bangle*{(r-\rTrapping_i)\Op\left(\SubPSym_0^{(i)}\right)\circ \p_r h, h}_{L^2(\DomainOfIntegration)}}
    < \abs*{ \Re\bangle*{\p_r h, (r-\rTrapping_i)\Op\left(\SubPSym_0^{(i)}\right) h}_{L^2(\DomainOfIntegration)}}
    + Err_{\DomainOfIntegration}[h] + \evalAt*{Err_{\TrappingNbhd}[h]}_{\tStar=0}^{\tStar=\TStar},
  \end{equation*}
  where using Lemma \ref{linear:lemma:ILED-near:LoT-control:lowest-order-control}, satisfies 
  \begin{equation*}
    \abs*{Err_{\DomainOfIntegration}[h]}\lesssim \delta \norm*{h}_{\MorawetzNorm_b(\DomainOfIntegration)}^2, \qquad
    \abs*{Err_{\TrappingNbhd}}\lesssim \delta\norm*{h}_{\InducedMorawetzNorm_b(\Sigma_{\TrappingNbhd})}^2.
  \end{equation*} 
  Then, using Lemma~\ref{linear:lemma:ILED-near:sum-of-squares}, we have that
  \begin{equation*}
    \abs*{ \Re\bangle*{\p_r h, (r-\rTrapping_i)\Op\left(\SubPSym_0^{(i)}\right) h}_{L^2(\DomainOfIntegration)}}
    \lesssim \varepsilon_{\SubPOp}\norm*{h}_{\MorawetzNorm_b(\DomainOfIntegration)},
  \end{equation*}
  which is controlled for $\varepsilon_{\SubPOp}$ sufficiently small. 
  Similarly, we have that
  \begin{equation*}
    \norm*{(r-\rTrapping_i)\Op\left(\SubPSym_0^{(i)}\right)h}_{L^2(\DomainOfIntegration)} \lesssim \varepsilon_{\SubPOp}\norm*{h}_{\MorawetzNorm_b(\DomainOfIntegration)},
  \end{equation*}
  which is controlled for $\varepsilon_{\SubPOp}$ sufficiently small. 
  Then using Cauchy-Schwarz and Lemma
  \ref{linear:lemma:ILED-near:LoT-control}, we have that for $\delta_r$
  sufficiently small,
  \begin{equation*}
    \sum_{i=1,2}\abs*{\bangle*{(r-\rTrapping_2)(r-\rTrapping_1)\Op\left(\SubPSym_{0}^{(i)}\right)h,h}_{L^2(\DomainOfIntegration)}}\lesssim \delta \norm*{h}_{\MorawetzNorm_b(\DomainOfIntegration)}^2. 
  \end{equation*}
  To handle the commutator term in
  \eqref{linear:eq:ILED-near:LoT-control:exchange-degeneracy-trick:IbP:S0},
  we first observe that
  \begin{align*}
    \PoissonB*{\xi, \SubPSym_0^{(2)}} &= -(\sigma-\sigma_1)\p_r\left(\frac{\sigma_2\SubPSym_0}{\sigma_1-\sigma_2}\right)
                                        +\frac{ \SubPSym_0\sigma_2\p_r\sigma_1}{\sigma_1-\sigma_2},\\
    \PoissonB*{\xi, \SubPSym_0^{(1)}} &= (\sigma-\sigma_2)\p_r\left(\frac{\sigma_1\SubPSym_0}{\sigma_1-\sigma_2}\right)
                                        -\frac{ \SubPSym_0\sigma_1\p_r\sigma_2}{\sigma_1-\sigma_2},
  \end{align*}
  Thus, we can decompose
  \begin{align}
    &\sum_{i=1,2} \bangle*{(r-\rTrapping_i)\squareBrace*{\p_r, \Op\left(\SubPSym_0^{(i)}\right)}h, h}_{L^2(\DomainOfIntegration)}\notag\\
    ={}& \bangle*{(r-\rTrapping_1)\Op\left(
      (\sigma-\sigma_2)\p_r\left(\frac{\sigma_1\SubPSym_0}{\sigma_1-\sigma_2}\right)
      \right)h,h}_{L^2(\DomainOfIntegration)}
      - \bangle*{(r-\rTrapping_1)\Op\left(
      \frac{ \SubPSym_0\sigma_1\p_r\sigma_2}{\sigma_1-\sigma_2}
      \right)h,h}_{L^2(\DomainOfIntegration)}\notag\\
    &- \bangle*{(r-\rTrapping_2)\Op\left(
      (\sigma-\sigma_1)\p_r\left(\frac{\sigma_2\SubPSym_0}{\sigma_1-\sigma_2}\right)
      \right)h,h}_{L^2(\DomainOfIntegration)}
      + \bangle*{(r-\rTrapping_2)\Op\left(
      \frac{ \SubPSym_0\sigma_2\p_r\sigma_1}{\sigma_1-\sigma_2}
      \right)h,h}_{L^2(\DomainOfIntegration)}. \label{linear:eq:ILED-near:LoT-control:exchange-degeneracy-trick:IbP:S0:commutator:decomposition}
  \end{align}
  We see that the first terms in each line of the right-hand side of
  \eqref{linear:eq:ILED-near:LoT-control:exchange-degeneracy-trick:IbP:S0:commutator:decomposition}
  are controlled by the Morawetz norm. Thus, using Cauchy-Schwarz and
  Lemma \ref{linear:lemma:ILED-near:LoT-control:lowest-order-control}, we
  have that for $\delta_r$ sufficiently small, 
  \begin{align*}
    \delta \norm*{h}_{\MorawetzNorm_b(\DomainOfIntegration)}^2\gtrsim{}
    &\abs*{\bangle*{(r-\rTrapping_1)\Op\left(
          (\sigma-\sigma_2)\p_r\left(\frac{\sigma_1\SubPSym_0}{\sigma_1-\sigma_2}\right)
        \right)h,h}_{L^2(\DomainOfIntegration)}}\\
    &+ \abs*{\bangle*{(r-\rTrapping_2)\Op\left(
          (\sigma-\sigma_1)\p_r\left(\frac{\sigma_2\SubPSym_0}{\sigma_1-\sigma_2}\right)
      \right)h,h}_{L^2(\DomainOfIntegration)}}.
  \end{align*}
  It remains to handle the second terms in each line of the right-hand
  side of
  \eqref{linear:eq:ILED-near:LoT-control:exchange-degeneracy-trick:IbP:S0:commutator:decomposition}. Without
  loss of generality, we handle
  $(\sigma_1-\sigma_2)^{-1}{\SubPSym_0\sigma_2\p_r\sigma_1}$ since
  $(\sigma_1-\sigma_2)^{-1}{\SubPSym_0\sigma_1\p_r\sigma_2}$ is
  handled identically. To this end, define
  \begin{equation*}
    \tilde{\SubPSym}_{0,2} := \frac{\SubPSym_0\sigma_2\p_r\sigma_1}{\sigma_1-\sigma_2}\in  S^1,\qquad
    \tilde{\SubPSym}_{0,2}^{(2)} := -\frac{\sigma-\sigma_1}{\sigma_1-\sigma_2} \tilde{\SubPSym}_{0,2},\qquad
    \tilde{\SubPSym}_{0,2}^{(1)} := \frac{\sigma-\sigma_2}{\sigma_1-\sigma_2} \tilde{\SubPSym}_{0,2},
  \end{equation*}
  so that
  \begin{equation*}
    \sum_{i=1,2}\tilde{\SubPSym}_{0,2}^{(i)} = \tilde{\SubPSym}_{0,2}. 
  \end{equation*}
  Then we have that
  \begin{equation*}
    -\bangle*{(r-\rTrapping_2)\Op\left(\tilde{\SubPSym}_{0,2}\right)h,h}_{L^2(\DomainOfIntegration)}
    = -\sum_{i=1,2}\bangle*{(r-\rTrapping_2) \Op\left(\tilde{\SubPSym}_{0,2}^{(i)}\right)h,h}_{L^2(\DomainOfIntegration)}.
  \end{equation*}
  Then observe that by construction,
  \begin{equation*}
    \norm*{(r-\rTrapping_2)\Op\left(\tilde{\SubPSym}_{0,2}^{(2)}\right)h}_{L^2(\DomainOfIntegration)} \lesssim \norm*{h}_{\MorawetzNorm_b(\DomainOfIntegration)}.
  \end{equation*}
  Thus, it suffices to control
  $\bangle*{(r-\rTrapping_2)\Op\left(\tilde{\SubPSym}_{0,2}^{(1)}\right)h,h}_{L^2(\DomainOfIntegration)}$.   
  To this end, we use integration by parts to write that,
  \begin{align}
    -\bangle*{(r-\rTrapping_2)\Op\left(\tilde{\SubPSym}_{0,2}^{(1)}\right)h,h}_{L^2(\DomainOfIntegration)}
    ={}& \bangle*{(r-\rTrapping_1)(r-\rTrapping_2)\Op\left(\tilde{\SubPSym}_{0,2}^{(1)}\right)\circ \p_r h,h}_{L^2(\DomainOfIntegration)} \notag\\
       &+ \bangle*{(r-\rTrapping_1)(r-\rTrapping_2)\Op\left(\tilde{\SubPSym}_{0,2}^{(1)}\right)h,\p_rh}_{L^2(\DomainOfIntegration)}\notag\\
    &+ \bangle*{(r-\rTrapping_1)\Op\left(\tilde{\SubPSym}_{0,2}^{(1)}\right)h,h}_{L^2(\DomainOfIntegration)}\notag \\
    &+ \bangle*{(r-\rTrapping_1)(r-\rTrapping_2)\squareBrace*{\p_r,\Op\left(\tilde{\SubPSym}_{0,2}^{(1)}\right)}h,h}_{L^2(\DomainOfIntegration)}. \label{linear:eq:ILED-near:LoT-control:exchange-degeneracy-trick:IbP:S0:rep1}
  \end{align}
  Observing again that
  $(r-\rTrapping_1)(r-\rTrapping_2)\Op\left(\tilde{\SubPSym}_{0,2}^{(1)}\right)$
  is Hermitian up to a $\Op S^0 + \Op^{-1}\p_{\tStar}$ term, we
  control the first three terms on the right-hand side of
  \eqref{linear:eq:ILED-near:LoT-control:exchange-degeneracy-trick:IbP:S0:rep1}
  by a combination of integration by parts, Cauchy-Schwarz, and Lemma
  \ref{linear:lemma:ILED-near:LoT-control:zero-order-mixed-term} after taking
  $\varepsilon_{\SubPOp}$ and $\delta_r$ sufficiently small. To handle
  the final commutator term, we again observe that
  \begin{equation*}
    \PoissonB*{\xi, \tilde{\SubPSym}_{0,2}^{(1)}}
    = (\sigma-\sigma_2)\p_r\left(\frac{\tilde{\SubPSym}_{0,2}}{\sigma_1-\sigma_2}\right)
    - \tilde{\SubPSym}_{0,2}\frac{\p_r\sigma_2}{\sigma_1-\sigma_2}.
  \end{equation*}
  Again we can thus decompose
  \begin{align}
    &\bangle*{(r-\rTrapping_1)(r-\rTrapping_2)\squareBrace*{\p_r,\Op\left(\tilde{\SubPSym}_{0,2}^{(1)}\right)}h,h}_{L^2(\DomainOfIntegration)}\notag\\
    ={}& \bangle*{(r-\rTrapping_1)(r-\rTrapping_2)\Op\left( (\sigma-\sigma_2)\p_r\left(\frac{\tilde{\SubPSym}_{0,2}}{\sigma_1-\sigma_2}\right)\right)h,h}_{L^2(\DomainOfIntegration)}\notag\\         
    &- \bangle*{(r-\rTrapping_1)(r-\rTrapping_2)\Op\left(\tilde{\SubPSym}_{0,2}\frac{\p_r\sigma_2}{\sigma_1-\sigma_2}\right)h,h}_{L^2(\DomainOfIntegration)}.\label{linear:eq:ILED-near:LoT-control:exchange-degeneracy-trick:IbP:S0:commutator2:decomposition}
  \end{align}
  Once again, we see that the first term on the right-hand side of
  \eqref{linear:eq:ILED-near:LoT-control:exchange-degeneracy-trick:IbP:S0:commutator2:decomposition}
  is controlled directly by the Morawetz norm, using Lemma
  \ref{linear:lemma:ILED-near:LoT-control:lowest-order-control} and taking
  $\delta_r$ sufficiently small.  To control the second term on the
  right-hand side of
  \eqref{linear:eq:ILED-near:LoT-control:exchange-degeneracy-trick:IbP:S0:commutator2:decomposition},
  we partition one final time to define
  \begin{equation*}
    \tilde{\SubPSym}_{0,2,1}:= \tilde{\SubPSym}_{0,2}\frac{\p_r\sigma_2}{\sigma_1-\sigma_2}\in S^1,\qquad
    \tilde{\SubPSym}_{0,2,1}^{(2)}:= -\frac{\sigma-\sigma_1}{\sigma_1-\sigma_2}\tilde{\SubPSym}_{0,2,1},\qquad
    \tilde{\SubPSym}_{0,2,1}^{(1)}:= \frac{\sigma-\sigma_2}{\sigma_1-\sigma_2}\tilde{\SubPSym}_{0,2,1},
  \end{equation*}
  so that
  \begin{equation*}
    \tilde{\SubPSym}_{0,2,1} = \tilde{\SubPSym}_{0,2,1}^{(2)} + \tilde{\SubPSym}_{0,2,1}^{(1)}.
  \end{equation*}
  Then we have that
  \begin{equation*}
    \bangle*{(r-\rTrapping_2)(r-\rTrapping_1)\Op\left(\tilde{\SubPSym}_{0,2,1}\right)h,h}_{L^2(\DomainOfIntegration)}
    = \sum_{i=1,2}\bangle*{(r-\rTrapping_2)(r-\rTrapping_1)\Op\left(\tilde{\SubPSym}_{0,2,1}^{(i)}\right)h,h}_{L^2(\DomainOfIntegration)}.
  \end{equation*}
  But by construction, 
  \begin{equation*}
    \norm*{(r-\rTrapping_2)(r-\rTrapping_1)\Op\left(\tilde{\SubPSym}_{0,2,1}^{(1)}\right)h}_{L^2(\DomainOfIntegration)}
    + \norm*{(r-\rTrapping_2)(r-\rTrapping_1)\Op\left(\tilde{\SubPSym}_{0,2,1}^{(2)}\right)h}_{L^2(\DomainOfIntegration)}
    \lesssim \norm*{h}_{\MorawetzNorm_b(\DomainOfIntegration)}.
  \end{equation*}
  As a result, using Cauchy-Schwarz and Lemma
  \ref{linear:lemma:ILED-near:LoT-control}, we have that
  \begin{equation*}
    \sum_{i=1,2}\abs*{\bangle*{(r-\rTrapping_2)(r-\rTrapping_1)\Op\left(\tilde{\SubPSym}_{0,2,1}^{(i)}\right)h,h}_{L^2(\DomainOfIntegration)}} < \delta \norm*{h}_{\MorawetzNorm_b(\DomainOfIntegration)}^2. 
  \end{equation*}    
  Since $\delta$ was arbitrary, we can choose it sufficiently small so
  that
  \begin{equation}
    \label{linear:eq:ILED-near:LoT-control:exchange-degeneracy-trick:S0:result}
    \abs*{\Re\bangle*{\SubPOp_0D_{\tStar} h, h}_{L^2(\DomainOfIntegration)}} < \frac{\delta_0}{2} \norm*{h}_{\MorawetzNorm_b(\DomainOfIntegration)}^2. 
  \end{equation}
 
  We now move on to handling $\Re\bangle*{\SubPOp_1 h,
    h}_{L^2(\DomainOfIntegration)}$. The main idea will be the same as
  when handling $\SubPOp_0\p_{\tStar}$. To this end, we define
  \begin{equation*}
    \SubPSym_1^{(2)}:= -\frac{\sigma-\sigma_1}{\sigma_1-\sigma_2}\SubPSym,\qquad
    \SubPSym_1^{(1)}:= \frac{\sigma-\sigma_2}{\sigma_1-\sigma_2}\SubPSym,
  \end{equation*}
  so that
  \begin{equation*}
    \SubPSym = \SubPSym_1^{(1)}+\SubPSym_1^{(2)}. 
  \end{equation*}  
  Integrating by parts, we have that
  \begin{align}
    -\bangle*{\Op(\SubPSym_1^{(i)})h, h}_{L^2(\DomainOfIntegration)}
    ={}& \bangle*{(r-\rTrapping_i)\Op\left(\SubPSym_1^{(i)}\right)\circ \p_r h, h}_{L^2(\DomainOfIntegration)}
    + \bangle*{(r-\rTrapping_i)\Op\left(\SubPSym_{1}^{(i)}\right)h, \p_r h}_{L^2(\DomainOfIntegration)}\notag \\
    &+ \bangle*{(r-\rTrapping_i)\squareBrace*{\p_r, \Op\left(\SubPSym_1^{(i)}\right)}h, h}_{L^2(\DomainOfIntegration)}
    + Err[h], \label{linear:eq:ILED-near:LoT-control:exchange-degeneracy-trick:IbP:S1}
  \end{align}
  where for $\delta_r$ sufficiently small,
  \begin{equation*}
    \abs*{Err[h]} \lesssim \delta \norm*{h}_{\MorawetzNorm_b(\DomainOfIntegration)}^2. 
  \end{equation*}
  We treat each of the terms on the right-hand side of
  \eqref{linear:eq:ILED-near:LoT-control:exchange-degeneracy-trick:IbP:S1}
  individually. Observing that
  \begin{equation*}
    \left((r-\rTrapping_i)\Op\left(\SubPSym_1^{(i)}\right)\right)^* = (r-\rTrapping_i)\Op\left(\SubPSym_1^{(i)}\right) + \Op S^0 + \Op S^{-1}\p_{\tStar},
  \end{equation*}
  we have by integration by parts that 
  \begin{equation*}
    \bangle*{(r-\rTrapping_i)\Op\left(\SubPSym_1^{(i)}\right)\circ \p_r h, h}_{L^2(\DomainOfIntegration)}
    =  \bangle*{\p_r h, (r-\rTrapping_i)\Op\left(\SubPSym_1^{(i)}\right) h}_{L^2(\DomainOfIntegration)}
    + Err_{\DomainOfIntegration}[h] + \evalAt*{Err_{\TrappingNbhd}[h]}_{\tStar=0}^{\tStar=\TStar},
  \end{equation*}
  where by Cauchy-Schwarz, and Lemma
  \ref{linear:lemma:ILED-near:LoT-control:zero-order-mixed-term} that for
  sufficiently small $\delta_r$,
  \begin{equation*}
    \abs*{Err_{\DomainOfIntegration}[h]}
    \lesssim \delta \norm*{h}_{\MorawetzNorm_b(\DomainOfIntegration)},
    \qquad \abs*{Err_{\TrappingNbhd}} \lesssim \delta \norm*{h}_{\InducedMorawetzNorm(\TrappingNbhd)}.
  \end{equation*}
  Then from Lemma \ref{linear:lemma:ILED-near:sum-of-squares}, we have that
  \begin{equation*}
    \abs*{\bangle*{\p_r h, (r-\rTrapping_i)\Op\left(\SubPSym_1^{(i)}\right) h}_{L^2(\DomainOfIntegration)}} \lesssim \varepsilon_{\SubPOp}\norm*{h}_{\MorawetzNorm(\DomainOfIntegration)}^2,
  \end{equation*}
  which is clearly controlled for $\varepsilon_{\SubPOp}$ sufficiently
  small. 

  To handle the commutator term in
  \eqref{linear:eq:ILED-near:LoT-control:exchange-degeneracy-trick:IbP:S1},
  we proceed as previously and observe that (without loss of
  generality, we will just consider the $i=1$ case)
  \begin{align*}
    \bangle*{(r-\rTrapping_1)\squareBrace*{\p_r, \Op\left(\SubPSym_1^{(1)}\right)}h, h}_{L^2(\DomainOfIntegration)}
    ={}& \bangle*{(r-\rTrapping_1)\Op\left((\sigma-\sigma_2)\p_r\left(\frac{\SubPSym_1}{\sigma_1-\sigma_2}\right)\right)h, h}_{L^2(\DomainOfIntegration)}\\
    &- \bangle*{(r-\rTrapping_1)\Op\left(\frac{\SubPSym_1\p_r\sigma_2}{\sigma_1-\sigma_2}\right)h, h}_{L^2(\DomainOfIntegration)},
  \end{align*}
  where it is clear that the first term on the right-hand side is
  well-controlled using Lemma
  \ref{linear:lemma:ILED-near:LoT-control:lowest-order-control},
  Cauchy-Schwarz, and choosing $\delta_r$ sufficiently small as
  before. To handle the second term, we can define
  \begin{equation*}
    \tilde{\SubPSym}_{1,1}:= \frac{\SubPSym_1\p_r\sigma_2}{\sigma_1-\sigma_2}\in S^1,\qquad
    \tilde{\SubPSym}_{1,1}^{(2)} := -\frac{\sigma-\sigma_1}{\sigma_1-\sigma_2}\tilde{\SubPSym}_{1,1},\qquad
    \tilde{\SubPSym}_{1,1}^{(1)} := \frac{\sigma-\sigma_2}{\sigma_1-\sigma_2}\tilde{\SubPSym}_{1,1}
  \end{equation*}
  so that
  \begin{equation*}
    \tilde{\SubPSym}_{1,1} = \sum_{i=1,2}\tilde{\SubPSym}_{1,1}^{(i)}. 
  \end{equation*}
  and
  \begin{equation*}
    \norm*{(r-\rTrapping_i)\squareBrace*{\p_r, \Op\left(\SubPSym_1^{(i)}\right)}h}_{L^2(\DomainOfIntegration)} \lesssim \norm*{h}_{\MorawetzNorm_b(\DomainOfIntegration)}.
  \end{equation*}
  Then, we see that
  \begin{equation*}
    - \bangle*{(r-\rTrapping_1)\Op\left(\tilde{\SubPSym}_{1,1}\right)h, h}_{L^2(\DomainOfIntegration)}
    = - \bangle*{(r-\rTrapping_1)\Op\left(\tilde{\SubPSym}_{1,1}^{(1)}\right)h, h}_{L^2(\DomainOfIntegration)}
    - \bangle*{(r-\rTrapping_1)\Op\left(\tilde{\SubPSym}_{1,1}^{(2)}\right)h, h}_{L^2(\DomainOfIntegration)},
  \end{equation*}
  where it is clear by construction that the first term on the
  right-hand side respects the degeneracy in the Morawetz norm and
  thus can be controlled by using Cauchy-Schwarz and Lemma
  \ref{linear:lemma:ILED-near:LoT-control}. to control the second term on the
  right-hand side, we repeat the integration by parts argument above
  to see that
  \begin{align}
    -\bangle*{(r-\rTrapping_1)\Op\left(\tilde{\SubPSym}_{1,1}^{(2)}\right)h,h}_{L^2(\DomainOfIntegration)}
    ={}& \bangle*{(r-\rTrapping_2)(r-\rTrapping_1)\Op\left(\tilde{\SubPSym}_{1,1}^{(2)}\right)\circ \p_r h,h}_{L^2(\DomainOfIntegration)} \notag\\
       &+ \bangle*{(r-\rTrapping_2)(r-\rTrapping_1)\Op\left(\tilde{\SubPSym}_{1,1}^{(2)}\right)h,\p_rh}_{L^2(\DomainOfIntegration)}\notag\\
    &+ \bangle*{(r-\rTrapping_2)\Op\left(\tilde{\SubPSym}_{1,1}^{(2)}\right)h,h}_{L^2(\DomainOfIntegration)}\notag \\
    &+ \bangle*{(r-\rTrapping_2)(r-\rTrapping_1)\squareBrace*{\p_r,\Op\left(\tilde{\SubPSym}_{1,1}^{(2)}\right)}h,h}_{L^2(\DomainOfIntegration)}. \label{linear:eq:ILED-near:LoT-control:exchange-degeneracy-trick:IbP:S1:rep1}
  \end{align}
  Again, using that
  \begin{equation*}
    \left((r-\rTrapping_2)(r-\rTrapping_1)\Op\left(\tilde{\SubPSym}_{1,1}^{(2)}\right)\right)^*
    = (r-\rTrapping_2)(r-\rTrapping_1)\Op\left(\tilde{\SubPSym}_{1,1}^{(2)}\right) + \Op S^0 + \Op S^{-1}\p_{\tStar},
  \end{equation*}
  we have from integrating by parts, Cauchy-Schwarz, Lemma
  \ref{linear:lemma:ILED-near:LoT-control:zero-order-mixed-term}, that the
  first three terms are controlled, taking $\delta_r$ to be
  sufficiently small. To handle the last term on the right-hand side
  of
  \eqref{linear:eq:ILED-near:LoT-control:exchange-degeneracy-trick:IbP:S1:rep1},
  we see that
  \begin{align*}
    &\bangle*{(r-\rTrapping_2)(r-\rTrapping_1)\squareBrace*{\p_r,\Op\left(\tilde{\SubPSym}_{1,1}^{(2)}\right)}h,h}_{L^2(\DomainOfIntegration)}\\
    = {}& -\bangle*{(r-\rTrapping_2)(r-\rTrapping_1)\Op\left(
          (\sigma-\sigma_1)\p_r\left(\frac{\tilde{\SubPSym}_{1,1}}{\sigma_1-\sigma_2}\right)
          \right)h,h}_{L^2(\DomainOfIntegration)}\\
    & + \bangle*{(r-\rTrapping_2)(r-\rTrapping_1)\Op\left(
      \frac{\tilde{\SubPSym}_{1,1}\p_r\sigma_1}{\sigma_1-\sigma_2}
      \right)h,h}_{L^2(\DomainOfIntegration)},
  \end{align*}
  where the first term on the right-hand side is handled directly as
  before using Cauchy-Schwarz, and Lemma
  \ref{linear:lemma:ILED-near:LoT-control:lowest-order-control}, taking
  $\delta_r$ to be sufficiently small. The second term on the
  right-hand side is handled by a final
  decomposition of
  \begin{equation*}    
    \tilde{\SubPSym}_{1,1,2}:= \tilde{\SubPSym}_{1,1}\frac{\p_r\sigma_1}{\sigma_1-\sigma_2}\in S^1,\qquad
    \tilde{\SubPSym}_{1,1,2}^{(2)}:= -\frac{\sigma-\sigma_1}{\sigma_1-\sigma_2}\tilde{\SubPSym}_{1,1,2},\qquad
    \tilde{\SubPSym}_{1,1,2}^{(1)}:= \frac{\sigma-\sigma_2}{\sigma_1-\sigma_2}\tilde{\SubPSym}_{1,1,2}.  
  \end{equation*}
  Then it is apparent that
  \begin{equation*}
    \sum_{i=1,2}\norm*{(r-\rTrapping_1)(r-\rTrapping_2)\Op\left(\tilde{\SubPSym}_{1,1,2}^{(i)}\right)h}_{\MorawetzNorm_b(\DomainOfIntegration)} \lesssim \norm*{h}_{\MorawetzNorm_b(\DomainOfIntegration)},
  \end{equation*}
  and we conclude the proof of Lemma
  \ref{linear:lemma:ILED-near:LoT-control:exchange-degeneracy-trick} by using
  Cauchy-Schwarz, Lemma \ref{linear:lemma:ILED-near:LoT-control}, and taking
  $\delta_r$ and $\delta$ sufficiently small.   
\end{proof}
We can achieve a similar control for the lower-order boundary terms
that come out of the integration by parts argument.
\begin{lemma}
  \label{linear:lemma:ILED-near:LoT-control:exchange-degeneracy-trick-boundary}
  Let $\TrappingNbhd$ be as defined in
  \eqref{linear:eq:ILED-trapping:trapping-reg-def}, and fix some
  $\delta_0>0$. Let
  $\SubPOp\in \Op S^1(\Sigma) + \Op S^{0}(\Sigma)D_{\tStar}$ be a first
  order pseudo-differential operator such that
  \begin{equation*}
    \SubPOp = \SubPOp_1 + \SubPOp_0D_{\tStar},
  \end{equation*}
  where $\SubPOp_1$ is Hermitian and $\SubPOp_0$ is skew-Hermitian
  with respect to the $L^2(\TrappingNbhd)$ norm respectively,
  such that the symbols $\SubPSym_i$ of $\SubPOp_i$ satisfy
  \begin{equation*}
    \abs*{\SubPSym_i}\le \varepsilon_{\SubPOp}|\zeta|^i. 
  \end{equation*}
  Then for $\varepsilon_{\SubPOp}$ and $\delta_r$ sufficiently small,
  \begin{equation}
    \label{linear:eq:ILED-near:LoT-control:exchange-degeneracy-trick-boundary}
    \abs*{\Re\bangle*{\SubPOp h, h}_{L^2(\TrappingNbhd)}}
    < \delta_0 \norm*{h}_{\InducedMorawetzNorm_b(\TrappingNbhd)}^2, 
  \end{equation}
  where $h$ is as specified in Theorem \ref{linear:thm:ILED-near:main}. 
\end{lemma}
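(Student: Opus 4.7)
The plan is to adapt the argument of Lemma \ref{linear:lemma:ILED-near:LoT-control:exchange-degeneracy-trick} to the boundary setting by working on the spatial slice $\TrappingNbhd$ in place of the spacetime slab $\DomainOfIntegration$, and by exploiting the explicit form $h=e^{-\ImagUnit\sigma\tStar}u$ so that $D_{\tStar}$ acts on $h$ as multiplication by $\sigma$. Splitting $\SubPOp = \SubPOp_1 + \SubPOp_0 D_{\tStar}$, I would handle the two pieces separately, using the Hermiticity of $\SubPOp_1$ and the skew-Hermiticity of $\SubPOp_0$ in their respective ways. The main simplifications compared to the bulk version are the absence of $\tStar$-boundary terms (we are already on a spatial slice) and the absence of $r$-boundary terms (since $h(\tStar,\cdot)$ is compactly supported in $\TrappingNbhd$).

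The contribution of $\SubPOp_0 D_{\tStar}$ is handled quickly. Since $\SubPOp_0$ is skew-Hermitian in $L^2(\TrappingNbhd)$, the inner product $\langle \SubPOp_0 h, h\rangle_{L^2(\TrappingNbhd)}$ is purely imaginary, and so
\[
\Re\langle \SubPOp_0 D_{\tStar}h, h\rangle_{L^2(\TrappingNbhd)}
= \Re\bigl[\sigma\langle \SubPOp_0 h, h\rangle_{L^2(\TrappingNbhd)}\bigr]
= -\Im\sigma\cdot \Im\langle \SubPOp_0 h, h\rangle_{L^2(\TrappingNbhd)}.
\]
By Cauchy--Schwarz and the smallness $|\SubPSym_0|\le \varepsilon_{\SubPOp}$, this is bounded by $C|\Im\sigma|\varepsilon_{\SubPOp}\|h\|_{L^2(\TrappingNbhd)}^2$. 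In the range of $\sigma$ relevant for the application of this lemma (inside the proof of the Morawetz estimate), $|\Im\sigma|$ is bounded by a fixed constant, so choosing $\varepsilon_{\SubPOp}$ small enough absorbs this into $\tfrac{\delta_0}{2}\|h\|^2_{\InducedMorawetzNorm_b(\TrappingNbhd)}$.

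The main work is then to control $\Re\langle \SubPOp_1 h, h\rangle_{L^2(\TrappingNbhd)}$ for a Hermitian first-order $\SubPOp_1$ with $|\SubPSym_1|\le \varepsilon_{\SubPOp}|\zeta|$. Here I would follow line by line the strategy of Lemma~\ref{linear:lemma:ILED-near:LoT-control:exchange-degeneracy-trick}: introduce the partition
\[
\SubPSym_1 = \SubPSym_1^{(1)} + \SubPSym_1^{(2)},\qquad \SubPSym_1^{(1)}:=\tfrac{\sigma-\sigma_2}{\sigma_1-\sigma_2}\SubPSym_1,\qquad \SubPSym_1^{(2)}:=-\tfrac{\sigma-\sigma_1}{\sigma_1-\sigma_2}\SubPSym_1,
\]
and observe, by Lemma~\ref{linear:lemma:ILED-near:sum-of-squares}, that each operator $(r-\rTrapping_i)\Op(\SubPSym_1^{(i)})$ has operator norm $O(\varepsilon_{\SubPOp})$ from $\InducedMorawetzNorm_b(\TrappingNbhd)$ to $\InducedLTwo(\TrappingNbhd)$. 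Using $\p_r(r-\rTrapping_i)=1$ and integrating by parts in $\p_r$ (with no boundary terms), I would split each summand into three kinds of pieces: a direct term $\langle(r-\rTrapping_i)\Op(\SubPSym_1^{(i)})\circ \p_r h,h\rangle$ and its adjoint analogue, both bounded by $\varepsilon_{\SubPOp}\|h\|_{\InducedMorawetzNorm_b}^2$ via Cauchy--Schwarz; a commutator term $\langle(r-\rTrapping_i)[\p_r,\Op(\SubPSym_1^{(i)})]h,h\rangle$; and lower-order $L^2$ remainders absorbed via Lemma~\ref{linear:lemma:ILED-near:LoT-control:lowest-order-control} for $\delta_r$ sufficiently small.

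The main obstacle, exactly as in the bulk version, lies in the commutator terms: the Poisson bracket $\{\xi,\SubPSym_1^{(i)}\}$ decomposes into a piece $(\sigma-\sigma_{3-i})\p_r\bigl(\SubPSym_1/(\sigma_1-\sigma_2)\bigr)$ that already respects the Morawetz degeneracy, plus a residual piece $\SubPSym_1\p_r\sigma_{3-i}/(\sigma_1-\sigma_2)$ that does not. For this residual, I iterate the same $(\sigma-\sigma_j)/(\sigma_1-\sigma_2)$ partition and the integration-by-parts trick once more, producing a second factor $(r-\rTrapping_j)$; after this second iteration the resulting operator $(r-\rTrapping_1)(r-\rTrapping_2)\Op(\cdot)$ is bounded on $\InducedMorawetzNorm_b(\TrappingNbhd)$ and the remaining estimate closes by Cauchy--Schwarz together with Lemma~\ref{linear:lemma:ILED-near:LoT-control:lowest-order-control}. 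Choosing first $\delta_r$ and then $\varepsilon_{\SubPOp}$ sufficiently small (depending on $\delta_0$) yields \eqref{linear:eq:ILED-near:LoT-control:exchange-degeneracy-trick-boundary}.
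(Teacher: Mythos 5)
Your proof is correct and carries out exactly what the paper's one-sentence proof intends (the paper's reference to "Lemma \ref{linear:lemma:ILED-near:LoT-control:exchange-degeneracy-trick-boundary}" in its own proof is a typo for the bulk lemma): rerun the bulk argument of Lemma~\ref{linear:lemma:ILED-near:LoT-control:exchange-degeneracy-trick} over the slice $\TrappingNbhd$, with no $\tStar$- or $r$-boundary terms, and use $D_{\tStar}h=\sigma h$. One small remark: your quick dispatch of the $\SubPOp_0 D_{\tStar}$ term via skew-Hermiticity implicitly relies on $\abs{\Im\sigma}$ being bounded, which is not written in the lemma statement but is supplied by the regime ($\abs{\Im\sigma}\le\SpectralGap$, resp.\ $-\SpectralGap\le\Im\sigma\le\GronwallExp$) in which the lemma is actually applied inside the proof of Theorem~\ref{linear:thm:ILED-near:main}; it would be worth flagging this explicitly, since the same computation with $\Re\sigma$ in place of $\Im\sigma$ would be useless precisely because $\abs{\Re\sigma}$ can be taken arbitrarily large in the high-frequency regime — this is the reason the boundary version of the lemma assumes $\SubPOp_0$ skew-Hermitian rather than Hermitian as in the bulk version.
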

\begin{proof}
  The proof follows exactly as the proof of
  Lemma \ref{linear:lemma:ILED-near:LoT-control:exchange-degeneracy-trick-boundary},
  observing that $D_{\tStar}e^{-\ImagUnit\sigma\tStar}u = \sigma u$. 
\end{proof}
Using Lemma
\ref{linear:lemma:ILED-near:LoT-control:exchange-degeneracy-trick-boundary},
we can actually apply the control in Lemma
\ref{linear:lemma:ILED-near:LoT-control:exchange-degeneracy-trick} to
operators in $\SubPOp^{(-1)}\p_t^2 + \SubPOp^{(0)}\p_t +
\SubPOp^{(1)}$ with the aid of some lower-order Lagrangian
correction. 
\begin{lemma}
  \label{linear:lemma:ILED-near:LoT-control:exchange-degeneracy-trick:order-minus-one}
  Fix $\delta_0>0$ and let
  \begin{equation*}
    \SubPOp = \SubPOp_{-1}D_{\tStar}^2 + \SubPOp_{0}D_{\tStar} + \SubPOp_{1},
  \end{equation*}
  where $\SubPOp_{i} \in \Op S^i$ and
  \begin{equation*}
    \abs*{\SubPSym_{i}} \le \varepsilon_{\SubPOp}\left(\abs*{\xi}^i + \abs*{\FreqAngular}^i\right),\qquad
    i=-1,0,1,
  \end{equation*}
  where $\SubPSym_i$ is the symbol of $\SubPOp_i$.  Then, there exists
  some Hermitian $\tilde{r}\in \Op S^{-1}(\Sigma)$ with principal
  symbol $\tilde{\mathfrak{r}} \in S^{-1}$ and sufficiently small $a$,
  and $\delta_r$, so that for $\varepsilon_{\SubPOp}$ sufficiently
  small,
  \begin{equation*}
    \abs*{
      \Re \bangle*{\SubPOp h, h}_{L^2(\DomainOfIntegration)}
      - 2\Re\bangle*{\ScalarWaveOp[g_b]h, \tilde{r} h}_{L^2(\DomainOfIntegration)}}
    \le \delta_0 \norm*{h}_{\MorawetzNorm(\DomainOfIntegration)}^2
    + \evalAt*{\int_{\TrappingNbhd_{\tStar}}\tilde{J}[h]}^{\tStar=\TStar}_{\tStar=0},
  \end{equation*}
  where
  \begin{equation*}
    \abs*{\int_{\TrappingNbhd_{\tStar}}\tilde{J}[h]} \lesssim \delta_0 \norm*{h}_{\InducedMorawetzNorm(\Sigma_{\tStar})}^2.
  \end{equation*}  
\end{lemma}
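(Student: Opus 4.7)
The strategy is to observe that the only obstruction to directly applying Lemma~\ref{linear:lemma:ILED-near:LoT-control:exchange-degeneracy-trick} to $\SubPOp$ is the $\SubPOp_{-1}D_{\tStar}^2$ piece, whose symbol $\SubPSym_{-1}\sigma^2$ is not dominated by $\varepsilon_{\SubPOp}|\zeta|^1$ uniformly. I will eliminate this obstruction by subtracting a carefully chosen multiple of $\ScalarWaveOp[g_b]$: since the principal symbol $\PrinSymb_b$ of $\ScalarWaveOp[g_b]$ contains $G^{tt}\sigma^2$ with $G^{tt}$ elliptic on $\TrappingNbhd$ (the trapping region lies away from the ergoregion, so $g_b(\KillT,\KillT)$ is bounded away from zero), this will cancel the $\sigma^2$ contribution at the principal level.

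Concretely, decomposing $\SubPOp$ into its Hermitian and skew-Hermitian parts with respect to $L^2(\DomainOfIntegration)$ and discarding the skew-Hermitian piece since $\Re\bangle{\cdot h,h}=0$ for skew-adjoint operators, I may assume the coefficient of $D_{\tStar}^2$ has real principal symbol. Let then $\tilde{r}$ be the Weyl (symmetrized) quantization of $\tilde{\mathfrak{r}} := -\SubPSym_{-1}/(2G^{tt})\in S^{-1}(\Sigma)$ localized via a smooth cutoff supported in $\TrappingNbhd$; the operator $\tilde{r}\in \Op S^{-1}(\Sigma)$ is Hermitian by construction. The standard calculus in Proposition~\ref{linear:prop:PsiDO:calculus} gives
\[
2\Re\bangle*{\ScalarWaveOp[g_b]h,\tilde{r}h}_{L^2(\DomainOfIntegration)}
= \bangle*{(\tilde{r}\ScalarWaveOp[g_b]+\ScalarWaveOp[g_b]\tilde{r})h,h}_{L^2(\DomainOfIntegration)} + \text{boundary},
\]
with bulk principal symbol $2\tilde{\mathfrak{r}}\PrinSymb_b = \SubPSym_{-1}\sigma^2 + (\SubPSym_{-1}/G^{tt})(2G^{t\mu}\sigma\xi_\mu + G^{\mu\nu}\xi_\mu\xi_\nu)$. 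Subtracting this from the principal symbol of the Hermitian part of $\SubPOp$ removes the $\sigma^2$ term entirely, leaving a residual operator $\widetilde{\SubPOp} = \widetilde{\SubPOp}_0 D_{\tStar} + \widetilde{\SubPOp}_1$ with $\widetilde{\SubPOp}_i\in \Op S^i(\Sigma)$ Hermitian (after symmetrization) and symbols bounded pointwise by $C\varepsilon_{\SubPOp}(|\xi|+|\FreqAngular|)^i$ for a constant $C$ depending only on $g_b$ through the lower bound on $|G^{tt}|$ in $\TrappingNbhd$.

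Having reduced to such a $\widetilde{\SubPOp}$, I apply Lemma~\ref{linear:lemma:ILED-near:LoT-control:exchange-degeneracy-trick} directly with effective smallness parameter $C\varepsilon_{\SubPOp}$ (taken small by choosing the original $\varepsilon_{\SubPOp}$ small), yielding a bulk bound $\delta_0\|h\|^2_{\MorawetzNorm_b(\DomainOfIntegration)}$ plus a boundary contribution controlled by $\delta_0\|h\|^2_{\InducedMorawetzNorm_b(\Sigma_{\tStar})}$. The total boundary term $\tilde{J}[h]$ in the statement is the sum of this contribution and the boundary terms generated by the integration by parts that symmetrized $\tilde{r}\ScalarWaveOp[g_b]$; the latter are first-order flux terms on $\TrappingNbhd_{\tStar}$ of the form $\bangle*{g_b(\KillT,n_{\TrappingNbhd})\tilde{r}\,\p_{\tStar}h,h}$, which are controlled by $\|h\|^2_{\InducedMorawetzNorm_b(\Sigma_{\tStar})}$ after exchanging one $D_{\tStar}$ derivative for a factor $(r-\rTrapping_i)$ via the decomposition $\sigma = \sigma^{(1)} + \sigma^{(2)}$ exactly as in Lemma~\ref{linear:lemma:ILED-near:LoT-control:exchange-degeneracy-trick-boundary}.

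The main technical obstacle I anticipate is the careful bookkeeping of the boundary and commutator errors. Every integration by parts against $\ScalarWaveOp[g_b]$ produces boundary terms involving first derivatives of $h$, and these must be cast in the ``exchange degeneracy'' decomposition to obtain bounds in $\InducedMorawetzNorm_b$ rather than the stronger $\InducedHk{1}_\sigma$. Additionally, the commutator $[\ScalarWaveOp[g_b],\tilde{r}]\in \Op S^0$ and the $\Op S^{-1}\p_{\tStar}$ pieces arising from the non-exact relation $\tilde{r}\ScalarWaveOp[g_b]+\ScalarWaveOp[g_b]\tilde{r} = 2\tilde{r}\ScalarWaveOp[g_b] + \Op S^0 + \Op S^{-1}\p_{\tStar}$ must be absorbed into the Morawetz norm; this is precisely the content of Lemmas~\ref{linear:lemma:ILED-near:LoT-control:lowest-order-control} and~\ref{linear:lemma:ILED-near:LoT-control:zero-order-mixed-term}, applicable after taking $\delta_r$ small and invoking that $\TrappingNbhd$ lies in a small neighborhood of $r=3M$.
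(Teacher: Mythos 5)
Your proof takes essentially the same approach as the paper: reduce to the $\SubPOp_{-1}D_{\tStar}^2$ piece, subtract $2\Re\bangle*{\ScalarWaveOp[g_b]h,\tilde{r}h}$ with $\tilde{r}\in\Op S^{-1}$ a multiple of $\SubPSym_{-1}$ chosen so the $\sigma^2$ contribution cancels against the factorization $\PrinSymb_b=g^{tt}(\sigma-\sigma_1)(\sigma-\sigma_2)$, and dispatch the residual $\Op S^0 D_{\tStar}+\Op S^1$ pieces together with the $\tStar$-boundary fluxes through Lemmas~\ref{linear:lemma:ILED-near:LoT-control:exchange-degeneracy-trick} and~\ref{linear:lemma:ILED-near:LoT-control:exchange-degeneracy-trick-boundary}. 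One step to tidy: dropping the skew-Hermitian part of $\SubPOp$ is not free on $\DomainOfIntegration=[0,\TStar]\times\TrappingNbhd$, because the formal-adjoint identity only kills the bulk integral and leaves a $\tStar$-boundary flux (cf.\ Lemma~\ref{linear:lemma:ILED-near:LoT-control:symmetry-control}); that flux should be folded into $\tilde{J}[h]$ and bounded via Lemma~\ref{linear:lemma:ILED-near:LoT-control:exchange-degeneracy-trick-boundary} rather than declared zero, though this does not change the structure of the argument.
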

\begin{proof}
  In view of Lemma
  \ref{linear:lemma:ILED-near:LoT-control:exchange-degeneracy-trick}, it
  suffices to consider the case where $\SubPOp =
  \SubPOp^{-1}\p_t^2$. Denoting $\SubPSym_{-1}$ as the symbol of
  $\SubPOp_{-1}$,
  \begin{equation*}
    \SubPSym_{-1}\sigma^2
    = \SubPSym_{-1}(\sigma-\sigma_1)(\sigma-\sigma_2)
    - \SubPSym_{-1}\sigma_1\sigma_2
    + \SubPSym_{-1}\sigma(\sigma_1+\sigma_2).
  \end{equation*}
  Letting $\tilde{r} = -G_b(d\tStar,d\tStar)\SubPSym_{-1}$, we have
  that
  \begin{equation*}
    2\Re\bangle*{\ScalarWaveOp[g_b]h, \tilde{r}h}_{L^2(\DomainOfIntegration)}
    = \bangle*{\left(\ScalarWaveOp[g_b]\tilde{r} + \tilde{r}\ScalarWaveOp[g_b]\right)h, h}_{L^2(\DomainOfIntegration)}
    + \evalAt*{2\Re\bangle*{n_{\Sigma}h, \tilde{r} h}_{L^2(\TrappingNbhd_{\tStar})}}^{\tStar=\TStar}_{\tStar=0},
  \end{equation*}
  so that up to lower order terms, 
  \begin{align*}
    &\Re \bangle*{\SubPOp h, h}_{L^2(\DomainOfIntegration)}
    - 2\Re\bangle*{\ScalarWaveOp[g_b]h, \tilde{r} h}_{L^2(\DomainOfIntegration)}\\
    ={}& \bangle*{\Op\left(\SubPSym_{-1}\sigma(\sigma_1+\sigma_2) - \SubPSym_{-1}\sigma_1\sigma_2\right)h,h}_{L^2(\DomainOfIntegration)}
         + \evalAt*{2\Re\bangle*{n_{\Sigma}h, \tilde{r} h}_{L^2(\TrappingNbhd_{\tStar})}}^{\tStar=\TStar}_{\tStar=0}.
  \end{align*}
  Using Lemma
  \ref{linear:lemma:ILED-near:LoT-control:exchange-degeneracy-trick-boundary},
  we see that for $\varepsilon_{\SubPOp}$ sufficiently small and
  $\delta_r$ sufficiently small,
  \begin{equation*}
    \Re\bangle*{n_{\Sigma}h, \tilde{r} h}_{L^2(\TrappingNbhd_{\tStar})}< \frac{1}{4}\delta_0\norm*{h}_{\InducedMorawetzNorm(\Sigma_{\tStar})}^2. 
  \end{equation*}
  Moreover, directly using Lemma
  \ref{linear:lemma:ILED-near:LoT-control:exchange-degeneracy-trick}, we have
  that for $a$, $\delta_r$, and $\varepsilon_{\SubPOp}$ sufficiently
  small,
  \begin{equation*}
    \abs*{\bangle*{\Op\left(\SubPSym_{-1}\sigma(\sigma_1+\sigma_2) - \SubPSym_{-1}\sigma_1\sigma_2\right)h,h}_{L^2(\DomainOfIntegration)}}
    \lesssim \delta_0\norm*{h}_{\MorawetzNorm(\DomainOfIntegration)}^2 + \evalAt*{\int_{\TrappingNbhd_{\tStar}}\tilde{J}_1[h]}_{\tStar=0}^{\tStar=\TStar},
  \end{equation*}
  where
  \begin{equation*}
    \int_{\TrappingNbhd_{\tStar}}\tilde{J}_1[h]
    = \Re\bangle*{\SubPOp_{-1}h, n_{\Sigma} h}_{L^2(\TrappingNbhd_{\tStar})}.
  \end{equation*}
  Then, using the fact that $\SubPOp^{-1}n_{\Sigma} \in \Op
  S^{-1}\p_{\tStar}$, we can use Lemma
  \ref{linear:lemma:ILED-near:LoT-control:exchange-degeneracy-trick-boundary}
  to see that for $\varepsilon_{\SubPOp}$, and $\delta_r$ sufficiently
  small, 
  \begin{equation*}
    \abs*{\int_{\TrappingNbhd_{\tStar}}\tilde{J}_1[h] }<\frac{1}{2}\delta_0\norm*{h}_{\InducedMorawetzNorm(\TrappingNbhd_{\tStar})}^2, 
  \end{equation*}
  as desired. 
\end{proof}
The following corollary follows immediately from Lemma
\ref{linear:lemma:ILED-near:LoT-control:exchange-degeneracy-trick:order-minus-one}
and will be more useful in the context of proving a high-frequency
Morawetz estimate for $\LinEinstein_{g_b}$. 
\begin{corollary}
  \label{linear:coro:ILED-near:LoT-control:exchange-degeneracy-trick:order-minus-one:with-L}
  Fix $\delta_0>0$ and let
  \begin{equation*}
    \SubPOp = \SubPOp_{-1}D_{\tStar}^2 + \SubPOp_0D_{\tStar} + \SubPOp_1,
  \end{equation*}
  where $\SubPOp_i \in \varepsilon_{\SubPOp}\Op S^i$, where
  $\SubPSym_{i}$ is the symbol of $\SubPOp_{i}$, moreover,
  $\SubPOp_{-1}$, $\SubPOp_0$, and $\SubPOp_{1}$ are Hermitian with
  respect to the $L^2(\TrappingNbhd)$ Hermitian inner product. Then,
  there exists some Hermitian $\tilde{r} \in S^{-1}$ and sufficiently
  small $a$, and $\delta_r$, so that for $\varepsilon_{\SubPOp}$
  sufficiently small,
  \begin{equation*}
    \abs*{
      \Re \bangle*{\SubPOp h, h}_{L^2(\DomainOfIntegration)}
      - 2\Re\bangle*{\LinEinsteinConj_{g_b}h, \tilde{r} h}_{L^2(\DomainOfIntegration)}}
    < \delta_0 \norm*{h}_{\MorawetzNorm(\DomainOfIntegration)}^2
    + \evalAt*{\int_{\TrappingNbhd_{\tStar}}\tilde{J}[h]}^{\tStar=\TStar}_{\tStar=0},
  \end{equation*}
  where
  \begin{equation*}
    \abs*{\int_{\TrappingNbhd_{\tStar}}\tilde{J}[h]}
    \lesssim \norm*{h}_{H^1(\Sigma_{\tStar})}^2 + \norm*{\KillT h}_{\LTwo(\Sigma_{\tStar})}^2 ,
  \end{equation*}
  and if there exists some $\sigma\in \Complex$ such that
  $h=e^{-\ImagUnit\sigma\tStar}u$, then
  \begin{equation*}
    \abs*{\int_{\TrappingNbhd_{\tStar}}\tilde{J}[h]} \lesssim \delta_0 \norm*{h}_{\InducedMorawetzNorm(\Sigma_{\tStar})}^2.
  \end{equation*}   
\end{corollary}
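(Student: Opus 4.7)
The plan is to deduce this corollary directly from the preceding Lemma \ref{linear:lemma:ILED-near:LoT-control:exchange-degeneracy-trick:order-minus-one} by treating the difference $\LinEinsteinConj_{g_b} - \ScalarWaveOp[g_b] = \SubPConjOp_b + \PotentialConjOp_b$ as an error term. First I apply the lemma to produce the Hermitian symbol $\tilde{r}\in \Op S^{-1}$ with principal symbol $\tilde{\mathfrak{r}}$, giving the desired bound with $\ScalarWaveOp[g_b]$ in place of $\LinEinsteinConj_{g_b}$, up to a bulk error $\delta_0\norm{h}_{\MorawetzNorm}^2$ and a boundary contribution $\tilde{J}_0[h]$. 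It then suffices to show that
\[
\Re\bangle*{(\SubPConjOp_b+\PotentialConjOp_b)h,\tilde{r}h}_{L^2(\DomainOfIntegration)}
\]
can be absorbed into the right-hand side, producing only boundary corrections of the desired form.

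Recall $\SubPConjOp_b = \SubPConjOp_0\p_{\tStar}+\SubPConjOp_1$ with $\SubPConjOp_i\in\Op S^i$. Since $\tilde r \in \Op S^{-1}$ is independent of $\p_{\tStar}$, the composition $\tilde{r}^*\SubPConjOp_1$ belongs to $\Op S^0$, so $\Re\bangle{\SubPConjOp_1 h,\tilde r h}$ is controlled by $\delta_0\norm{h}_{L^2}^2 + \delta_0\norm{\p_r h}_{L^2}^2 \lesssim \delta_0\norm{h}_{\MorawetzNorm}^2$ via Lemma \ref{linear:lemma:ILED-near:LoT-control:lowest-order-control}, after choosing $\delta_r$ sufficiently small (the freedom in $\delta_r$ was already used by the lemma, so I shrink it further). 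For the $\SubPConjOp_0\p_{\tStar}$ contribution, I integrate by parts in $\tStar$, which moves the derivative onto $\tilde r h$ and produces a bulk term of the form $\bangle{h,\mathfrak{a}\p_{\tStar}h}$ with $\mathfrak{a}\in\Op S^{-1}$ (so $\mathfrak{a}D_{\tStar}\in\Op S^0+\Op S^{-1}D_{\tStar}$) plus boundary terms at $\tStar=0,\TStar$; the bulk term is bounded by $\delta_0\norm{h}_{\MorawetzNorm}^2$ using Lemma \ref{linear:lemma:ILED-near:LoT-control:zero-order-mixed-term}. The purely zero-order term $\bangle{\PotentialConjOp_b h,\tilde r h}$ has integrand in $\Op S^{-1}$, hence is trivially dominated by $\delta_0\norm h_{L^2}^2$ after a further shrinking of constants.

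It remains to treat the boundary contributions. Combined with the boundary term from the lemma, the total boundary correction $\tilde J[h]$ is schematically $\Re\bangle{\SubPConjOp_0 h,\tilde r h}_{L^2(\TrappingNbhd_{\tStar})}$ plus the analogous $\tilde J_0$ term. Since $\SubPConjOp_0\in \Op S^0$ and $\tilde r\in \Op S^{-1}$, a direct Cauchy–Schwarz gives the coarse $H^1$-type bound $|\int_{\TrappingNbhd_{\tStar}}\tilde J[h]|\lesssim \norm h_{H^1(\Sigma_{\tStar})}^2+\norm{\KillT h}_{L^2(\Sigma_{\tStar})}^2$ that is claimed in the general case. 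When $h=e^{-\ImagUnit\sigma\tStar}u$, the factor $\p_{\tStar}$ is replaced by multiplication by $\sigma$, and Lemma \ref{linear:lemma:ILED-near:LoT-control:exchange-degeneracy-trick-boundary} (applied with the first-order Hermitian operator $\tilde r^{*}\SubPConjOp_0 D_{\tStar}+\tilde r^{*}\SubPConjOp_1$ whose symbol is $O(\varepsilon_{\SubPOp})|\zeta|$ for small $\varepsilon_{\SubPOp}$) upgrades the bound to the sharp $\delta_0\norm h_{\InducedMorawetzNorm(\Sigma_{\tStar})}^2$ form.

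The only mildly delicate step is the last one: one must verify that the combined symbol of $\tilde r^{*}\SubPConjOp_b$ satisfies the smallness hypothesis of Lemma \ref{linear:lemma:ILED-near:LoT-control:exchange-degeneracy-trick-boundary}. This is automatic because $\tilde r\in \Op S^{-1}$ contributes a factor $|\zeta|^{-1}$, so the resulting first-order symbol is bounded uniformly in $\varepsilon_{\SubPOp}$; one therefore applies the boundary lemma with an admissible (small) constant after taking $\varepsilon_{\SubPOp},\delta_r,a$ sufficiently small, exactly as in the proof of the lemma. Once the bulk and boundary pieces are assembled, the conclusion follows.
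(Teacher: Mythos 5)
Your proof is correct and follows essentially the same strategy as the paper: treat $\LinEinsteinConj_{g_b}-\ScalarWaveOp[g_b]=\SubPConjOp_b+\PotentialConjOp_b$ as a perturbation, use the Hermitian property of $\tilde{r}$ (equivalently, integration by parts in $\tStar$) to reduce $\Re\bangle{\SubPConjOp_bh,\tilde{r}h}$ to a bulk term of order $S^0+\sigma S^{-1}$ controlled by the earlier degeneracy-exchange lemmas, and absorb the resulting $\tStar$-boundary contributions into $\tilde{J}[h]$. The paper records this more compactly as $2\Re\bangle{\SubPConjOp_bh,\tilde{r}h}=\bangle{(\tilde{r}\SubPConjOp_b+\SubPConjOp_b\tilde{r})h,h}$ and invokes the degeneracy-exchange lemma once, but this is the same argument written out term by term as you have done; only a small imprecision remains in your phrase ``bounded uniformly in $\varepsilon_{\SubPOp}$'', which should read that the composite symbol $\tilde{\mathfrak r}\SubPConjSym_b$ is $O(\varepsilon_{\SubPOp})$-small because $\tilde{r}$ carries the $\varepsilon_{\SubPOp}$ factor.
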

\begin{proof}
  With Lemma
  \ref{linear:lemma:ILED-near:LoT-control:exchange-degeneracy-trick-boundary}
  already in hand, it suffices to show that for $\SubPOp\in
  \SubPOp^{-1}D_{\tStar}^2$ and for the same $\tilde{r}$
  constructed in Lemma
  \ref{linear:lemma:ILED-near:LoT-control:exchange-degeneracy-trick-boundary}, 
  \begin{equation*}
    2\abs*{\Re \bangle*{\SubPConjOp_{b}h,\tilde{r} h}_{L^2(\DomainOfIntegration)}}
    \lesssim \varepsilon_{\SubPOp}\norm*{h}_{\MorawetzNorm_b(\DomainOfIntegration)}^2. 
  \end{equation*}
  To this end, we write
  \begin{equation*}
    \SubPConjOp_{b} = \SubPConjOp_{1} + \SubPConjOp_{0}D_{\tStar}
  \end{equation*}
  where $\SubPConjOp_{i} \in \Op S^i(\Sigma)$. Then, observe that since
  $\tilde{r}$ is Hermitian, we have that
  \begin{equation*}
    2\Re\bangle*{\SubPConjOp_{b}h,\tilde{r} h}_{L^2(\DomainOfIntegration)}
    = \bangle*{\left( \tilde{r}\SubPConjOp_{b} + \SubPConjOp_{b}\tilde{r} \right)h, h}_{L^2(\DomainOfIntegration)}. 
  \end{equation*}
  But since
  $\tilde{\mathfrak{r}}\SubPConjSym_b \in S^0 + \sigma S^{-1}$, we can
  directly apply Lemma
  \ref{linear:lemma:ILED-near:LoT-control:exchange-degeneracy-trick} to
  conclude.
\end{proof}


We are now ready to control the lower-order terms that arise in the
divergence theorem argument. 
\begin{lemma}
  \label{linear:lemma:ILED-near:LoT-control}
  Fix $\delta_0>0$, and let $\MorawetzVF_b$, $\LagrangeCorr_b$ be
  those constructed in Lemma \ref{linear:lemma:ILED-near:sum-of-squares}.
  Then there exists a choice of $a$, $\delta_r$, $\delta_\zeta$, and
  $\varepsilon_{\TrappedSet_{b_0}}$ such that for $h$ as specified in
  Theorem \ref{linear:thm:ILED-near:main}, there exists a choice of
  $\PseudoSubPFixer$ such that for $C(\delta)\in \Real$, we have the
  following inequality
  \begin{align*}
    &\delta_0 \norm{h }_{\MorawetzNorm_b(\DomainOfIntegration)}^2
      + C(\delta_0)\norm*{\LinEinsteinConj_{g_b}h}_{L^2(\DomainOfIntegration)}^2  \\    
    >{}& \frac{1}{2}\Re\bangle*{\squareBrace*{\SubPConjOp_{b, s}, \mathfrak{\MorawetzVF}_{b_0}}h , h }_{L^2(\DomainOfIntegration)}
         + \Re\bangle*{\SubPConjOp_b[h ], \tilde{\mathfrak{\LagrangeCorr}}_{b_0} h }_{L^2(\DomainOfIntegration)}
         + \Re\bangle*{\PotentialConjOp_bh , (\MorawetzVF_{b_0}+\LagrangeCorr_{b_0})h }_{L^2(\DomainOfIntegration)}\\
    &+ a\Re\KCurrentIbP{\widetilde{\MorawetzVF}, \tilde{\LagrangeCorr}}_{(1)}[h ]
      + a\Re\KCurrentIbP{\widetilde{\MorawetzVF}, \tilde{\LagrangeCorr}}_{(0)}[h ] 
          + \left.\int_{\TrappingNbhd_{\tStar}}\tilde{J}[{h}]\right\vert_{\tStar= 0 }^{\tStar = \TStar},      
  \end{align*}
  where $\mathfrak{\MorawetzVF}_{b_0}$ and
  $\tilde{\mathfrak{\LagrangeCorr}}_{b_0}$ are as defined in
  \eqref{linear:eq:ILED-near:bulk:positivity:adjusted-X-q-def}, and
  \begin{equation*}
    \abs*{\int_{\TrappingNbhd_{\tStar}}\tilde{J}[h]}
    \lesssim \norm*{h}_{H^1(\Sigma_{\tStar})}^2 + \norm*{\KillT h}_{\LTwo(\Sigma_{\tStar})}^2 .
  \end{equation*}
  Moreover, if there exists some $\sigma\in \Complex$ such that
  $h=e^{-\ImagUnit\sigma\tStar}u$, then
  \begin{equation*}
    \abs*{\int_{\TrappingNbhd_{\tStar}}\tilde{J}[h]} \lesssim \delta_0 \norm*{h}_{\InducedMorawetzNorm(\Sigma_{\tStar})}^2.
  \end{equation*}    
\end{lemma}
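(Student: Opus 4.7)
The plan is to match each of the six contributions on the right-hand side of the stated inequality to one of the auxiliary lemmas already developed in this subsection (Lemmas \ref{linear:lemma:ILED-near:LoT-control:lowest-order-control} through \ref{linear:lemma:ILED-near:LoT-control:exchange-degeneracy-trick:order-minus-one}, plus Corollary \ref{linear:coro:ILED-near:LoT-control:exchange-degeneracy-trick:order-minus-one:with-L}). The three inputs that drive every estimate are (i) the microlocal $\delta_0$-smallness of $\SubPConjSym_{b,a}$ near $\TrappedSet_b$ secured by Lemma \ref{linear:lemma:ILED-near:s-decomp} via an appropriate choice of $\PseudoSubPFixer$, (ii) the $a$-prefactors attached to $\widetilde{\MorawetzVF}$, $\tilde{\LagrangeCorr}$ and the $O(a+\delta_r)$-prefactor on $\tilde{\mathfrak{\LagrangeCorr}}_{b_0}$, and (iii) the radial degeneracy built into $\MorawetzVF_{b_0} = f_{b_0}\partial_r$ through $f_{b_0} = (r-3M)\mu_{b_0}/r^2$. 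Before touching the subprincipal terms I would split $\SubPConjOp_b = \SubPConjOp_{b,s} + \SubPConjOp_{b,a}$ at the operator level (relative to the $L^2(\TrappingNbhd)$ Hermitian inner product) and, where $\widetilde{\MorawetzVF}$ and $\tilde{\LagrangeCorr}$ appear, use \eqref{linear:eq:div-thm:PDO:multipler-def} to separate Hermitian and skew-Hermitian pieces as well.

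First I would dispose of the two purely lower-order contributions: the zeroth-order piece $a\Re\KCurrentIbP{\widetilde{\MorawetzVF},\tilde{\LagrangeCorr}}_{(0)}[h]$ and the $\LagrangeCorr_{b_0}$-paired part of $\Re\bangle{\PotentialConjOp_b h,(\MorawetzVF_{b_0}+\LagrangeCorr_{b_0})h}$ go directly through Lemma \ref{linear:lemma:ILED-near:LoT-control:lowest-order-control} (using $\delta_r\ll 1$), while their $\MorawetzVF_{b_0}$-paired parts are handled by integrating one radial derivative off, as in the proof of Lemma \ref{linear:lemma:ILED-near:LoT-control:lowest-order-control}. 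Next, for $\Re\bangle{\SubPConjOp_b h,\tilde{\mathfrak{\LagrangeCorr}}_{b_0}h}$, I would commute $\tilde{\mathfrak{\LagrangeCorr}}_{b_0}$ through to the right and split $\SubPConjOp_b = \SubPConjOp_{0}D_{\tStar} + \SubPConjOp_1$: the skew-Hermitian halves fall under Lemma \ref{linear:lemma:ILED-near:LoT-control:symmetry-control} (producing only a boundary remainder of the desired $\tilde{J}[h]$ type), while the Hermitian halves, together with the $O(a+\delta_r)$-factor from $\tilde{\mathfrak{\LagrangeCorr}}_{b_0}$, fit the hypothesis $|\SubPSym_i|\lesssim \varepsilon_{\SubPOp}|\zeta|^i$ of Lemma \ref{linear:lemma:ILED-near:LoT-control:exchange-degeneracy-trick}. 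For the first-order bulk term $a\Re\KCurrentIbP{\widetilde{\MorawetzVF},\tilde{\LagrangeCorr}}_{(1)}[h]$ (unpacked via \eqref{linear:eq:ILED-near:ILED-int-by-parts-decomposed:decomposition-def}), the $a$-smallness itself furnishes the $\varepsilon_{\SubPOp}$ needed by Lemma \ref{linear:lemma:ILED-near:LoT-control:exchange-degeneracy-trick}, except for the $D_{\tStar}^2$-order piece arising from $\widetilde{\MorawetzVF}_0 D_{\tStar}\circ \SubPConjOp_0 D_{\tStar}$; this last piece I would eliminate by invoking Corollary \ref{linear:coro:ILED-near:LoT-control:exchange-degeneracy-trick:order-minus-one:with-L}, which produces precisely the $C(\delta_0)\|\LinEinsteinConj_{g_b}h\|_{L^2}^2$ buffer appearing on the left of the claimed inequality and a boundary term absorbed into $\tilde{J}[h]$.

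The hard part will be Term 1, $\tfrac{1}{2}\Re\bangle{[\SubPConjOp_{b,s},\mathfrak{\MorawetzVF}_{b_0}]h,h}$: Lemma \ref{linear:lemma:subprincipal-symbol-control} controls only the antisymmetric part of the subprincipal symbol, so no microlocal smallness of $\SubPConjSym_{b,s}$ is available a priori. The saving grace is the commutator structure. Computing the principal symbol via Proposition \ref{linear:prop:PsiDO:calculus}, one finds
\begin{equation*}
\sigma_1\bigl([\SubPConjOp_{b,s},\mathfrak{\MorawetzVF}_{b_0}]\bigr)
= \tfrac{1}{i}\bigl\{\SubPConjSym_{b,s},\, i f_{b_0}\xi\bigr\}
= f_{b_0}\,\partial_r \SubPConjSym_{b,s} \;-\; (\partial_r f_{b_0})\,\xi\,\partial_\xi \SubPConjSym_{b,s},
\end{equation*}
so every contribution carries either $f_{b_0}=O(r-3M)$ or the fiber factor $\xi$ that vanishes on $\TrappedSet_b$. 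Together with $\delta_r$-smallness and perturbation from $b=b_0$ (where the spherical symmetry of $g_{b_0}$ combined with the radial nature of $\MorawetzVF_{b_0}$ and the freedom in the choice of $\PseudoSubPFixer$ allows one to arrange additional cancellation), this reduces the commutator symbol to size $\varepsilon_{\SubPOp}|\zeta|$ in a microlocal neighborhood of trapping, after which Lemma \ref{linear:lemma:ILED-near:LoT-control:exchange-degeneracy-trick} applies and exchanges the size for the Morawetz degeneracy. The cumulative integration-by-parts residues produced by these applications form the announced $\tilde{J}[h]$; the two bounds on $\tilde{J}[h]$ (general $h$ versus $h = e^{-\ImagUnit\sigma\tStar}u$) follow respectively from the boundary estimate inside Lemma \ref{linear:lemma:ILED-near:LoT-control:exchange-degeneracy-trick} and from Lemma \ref{linear:lemma:ILED-near:LoT-control:exchange-degeneracy-trick-boundary}, whose hypotheses are met because $D_{\tStar}e^{-\ImagUnit\sigma\tStar}u = \sigma u$ converts $D_{\tStar}$-factors into multiplication by $\sigma$ and puts the boundary term exactly into the framework of that lemma.
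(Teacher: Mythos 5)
The proposal reaches the same conclusion by essentially the same route: the six contributions are matched to the auxiliary lemmas of the subsection, the antisymmetric pieces of $\SubPConjOp_b$ are peeled off via Lemma \ref{linear:lemma:ILED-near:LoT-control:symmetry-control} and made small through Lemma \ref{linear:lemma:ILED-near:s-decomp}, the $O(a+\delta_r)$-prefactor on $\tilde{\mathfrak{\LagrangeCorr}}_{b_0}$ feeds Lemma \ref{linear:lemma:ILED-near:LoT-control:exchange-degeneracy-trick}, the $a$-prefactored $\widetilde{\MorawetzVF}$, $\tilde{\LagrangeCorr}$ terms are handled analogously with Corollary \ref{linear:coro:ILED-near:LoT-control:exchange-degeneracy-trick:order-minus-one:with-L} producing the $\|\LinEinsteinConj_{g_b}h\|_{L^2}^2$ buffer, and the residual boundary integrals are collected into $\tilde{J}[h]$.

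The one place where your reasoning diverges from the paper's, and where the stated mechanism is not quite right, is the commutator term $\Re\bangle{[\SubPConjOp_{b,s},\mathfrak{\MorawetzVF}_{b_0}]h,h}$. Your Poisson-bracket computation is correct and is equivalent in content to the paper's operator-level split $[\SubPConjOp_{b,s},\MorawetzVF_{b_0}] = [\SubPConjOp_{b,s},f_{b_0}]\partial_r + f_{b_0}[\SubPConjOp_{b,s},\partial_r]$, and you correctly observe that each piece carries either $f_{b_0}=O(r-3M)$ or an explicit $\xi$-factor. However, you then attribute the smallness of the resulting symbol to cancellation arranged through the choice of $\PseudoSubPFixer$. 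That mechanism is not available here: $\PseudoSubPFixer$ was constructed precisely so that the \emph{anti-Hermitian} part $\SubPConjSym_{b,a}$ becomes $\varepsilon_{\TrappedSet_{b_0}}$-small at trapping (Lemma \ref{linear:lemma:subprincipal-symbol-control}), and the conjugation cannot make the \emph{Hermitian} part $\SubPConjSym_{b,s}$ small, which in general remains $O(1)$ on $\TrappedSet_{b_0}$. The correct source of control for the $\xi$-factored piece $(\partial_r f_{b_0})\,\xi\,\partial_\xi\SubPConjSym_{b,s}$ is the $\delta_\zeta$-frequency localization inherited from the support hypothesis $\tfrac{|\xi|^2}{|\eta|^2}\le\tfrac{1}{2}\delta_\zeta$ together with the structure of the Morawetz norm (which keeps $\|\partial_r h\|_{L^2}$ non-degenerate); the paper routes this through Lemma \ref{linear:lemma:ILED-near:LoT-control:zero-order-mixed-term} and the $\delta_r$-absorption of Lemma \ref{linear:lemma:ILED-near:LoT-control:lowest-order-control}, not through the exchange-degeneracy lemma with small $\varepsilon_{\SubPOp}$. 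Your estimate still closes because the factors you identified really are present and small on the relevant microlocal support, but the stated justification should be amended to cite the frequency localization and the $\MorawetzNorm_b$ structure rather than a $\PseudoSubPFixer$-induced cancellation of $\SubPConjSym_{b,s}$.
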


\begin{remark}
  The specific choice of the smallness constants $a$, $\delta_r$,
  $\delta_\zeta$, and $\varepsilon_{\TrappedSet}$ does depend here on
  the choice of $\delta_0>0$. In particular, these smallness constants
  could degenerate as $\delta_0\to 0$. However, in application, we will
  only need to take $\delta_0$ sufficiently small so that
  \begin{equation*}
    \delta_0\norm*{h}_{\MorawetzNorm(\DomainOfIntegration)}^2
    \le \frac{1}{2} \left( \int_{\DomainOfIntegration} \KCurrent{\MorawetzVF_{b_0}, \LagrangeCorr_{b_0}, 0}[h]
      + a\KCurrentIbP{\widetilde{\MorawetzVF}, \tilde{\LagrangeCorr}}[h]\right),
  \end{equation*}
  and do not need to take $\delta_0$ arbitrarily small. 
\end{remark}

\begin{proof}
  Let us first split the lower-order terms we wish to control into the
  first-order terms
  \begin{equation}
    \label{linear:eq:ILED-near:LoT-control:first-order-terms}
    \frac{1}{2}\Re\bangle*{\squareBrace*{\SubPConjOp_{b,s}, \mathfrak{\MorawetzVF}_{b_0}}{h}, {h}}_{L^2(\DomainOfIntegration)}
    + \Re\bangle*{\SubPConjOp_b[{h}], \tilde{\mathfrak{\LagrangeCorr}}_{b_0}{h}}_{L^2(\DomainOfIntegration)}
    + \Re \bangle*{\PotentialConjOp_b{h}, \MorawetzVF_{b_0}{h}}_{L^2(\DomainOfIntegration)}
    + a\Re\KCurrentIbP{\widetilde{\MorawetzVF}, \tilde{\LagrangeCorr}}_{(1)}[{h}];
  \end{equation}
  and the zero-order terms
  \begin{equation}
    \label{linear:eq:ILED-near:LoT-control:zero-order-terms}
    \Re \bangle*{\PotentialConjOp_b{h}, \LagrangeCorr_{b_0}{h}}_{L^2(\DomainOfIntegration)}
    + a\Re\KCurrentIbP{\widetilde{\MorawetzVF}, \tilde{\LagrangeCorr}}_{(0)}[{h}].
  \end{equation}
  Let us briefly recall that in the non-trapping regimes dealt with
  previously, the lower-order terms are dealt with
  by a high-frequency argument, using the principal bulk.
  
  In the trapped regime however, this is not possible because of the
  degeneration of the principal ellipticity at the trapped set. As an
  illustration, consider the term
  $\Re\bangle*{\SubPConjOp_b{h},
    \tilde{\mathfrak{\LagrangeCorr}}_{b_0}{h}}_{L^2(\DomainOfIntegration)}$.
  If we were to naively apply Cauchy-Schwarz, we would have
  \begin{equation*}
    \abs*{\bangle*{\SubPConjOp_b{h}, \tilde{\mathfrak{\LagrangeCorr}}_{b_0}{h}}_{L^2(\DomainOfIntegration)}} \lesssim \varepsilon_{\TrappedSet_{b_0}}\left(
      \norm{{h}}_{H^1(\DomainOfIntegration)}^2 + \norm{\KillT {h}}_{L^2(\DomainOfIntegration)}^2
    \right).
  \end{equation*}
  The apparent problem is that we have no way of controlling
  $\abs*{\bangle*{\SubPConjOp_b{h},
      \tilde{\mathfrak{\LagrangeCorr}}_{b_0}{h}}_{L^2(\DomainOfIntegration)}}$ by
  $\norm{{h}}_{\MorawetzNorm_b(\DomainOfIntegration)}^2$ directly
  using Cauchy-Schwarz because neither $\SubPConjOp_b$ nor
  $\tilde{\mathfrak{\LagrangeCorr}}_{b_0}$ degenerate at the trapped set in phase space,
  while the norm $\norm{{h}}_{\MorawetzNorm_b(\DomainOfIntegration)}$
  was specifically engineered to vanish at the trapped set.

  The way around this difficulty is to invoke Lemma
  \ref{linear:lemma:ILED-near:LoT-control:exchange-degeneracy-trick},
  which trades the subprincipal nature
  of the first-order terms for a degeneracy at trapping.


  \noindent{\textbf{Step 1: The first-order terms not O(a)}}.
  We first deal with the first order terms that are not $O(a)$,
  \begin{equation*}    
    \frac{1}{2}\Re\bangle*{\squareBrace*{\SubPConjOp_{b,s}, \mathfrak{\MorawetzVF}_{b_0}}{h}, {h}}_{L^2(\DomainOfIntegration)}
      +\Re\bangle*{\SubPConjOp_b[{h}], \tilde{\mathfrak{\LagrangeCorr}}_{b_0}{h}}_{L^2(\DomainOfIntegration)}
      + \Re\bangle*{\PotentialConjOp_b{h}, \MorawetzVF_{b_0}{h}}_{L^2(\DomainOfIntegration)}.
  \end{equation*}
  We show explicitly how to use the idea of exchanging degeneracy and
  derivatives discussed above with the terms in
  \eqref{linear:eq:ILED-near:LoT-control:first-order-terms}.  Fix some
  $\delta>0$. We begin by
  controlling
  $\Re\bangle*{\SubPConjOp_{b}h,
    \tilde{\mathfrak{\LagrangeCorr}}_{b_0}h}_{L^2(\DomainOfIntegration)}$ by using the
  smallness we gain from the conjugation by $\PseudoSubPFixer$.

  Observe that we can split
  \begin{equation*}
    \Re\bangle*{\SubPConjOp_{b}h,    \tilde{\mathfrak{\LagrangeCorr}}_{b_0}h}_{L^2(\DomainOfIntegration)}
    = \Re\bangle*{\SubPConjOp_{b,s}h,    \tilde{\mathfrak{\LagrangeCorr}}_{b_0}h}_{L^2(\DomainOfIntegration)}
    + \Re\bangle*{\SubPConjOp_{b,a}h,    \tilde{\mathfrak{\LagrangeCorr}}_{b_0}h}_{L^2(\DomainOfIntegration)},
  \end{equation*}
  where $\SubPConjOp_{b,s}$ and $\SubPConjOp_{b,a}$ are as defined in
  \eqref{linear:eq:SubPConjOp:sym-skew-sym-split-def}. Using Lemma
  \ref{linear:lemma:ILED-near:LoT-control:symmetry-control}, the latter term
  reduces to
  \begin{equation*}
    \Re\bangle*{\SubPConjOp_{b,a}h,    \tilde{\mathfrak{\LagrangeCorr}}_{b_0}h}_{L^2(\DomainOfIntegration)}
    = \evalAt*{\Re\bangle*{\tilde{\mathfrak{\LagrangeCorr}}_{b_0}\SubPConjOp_{b,a}^{(0)}h, h}_{\InducedLTwo(\Sigma_{\tStar})}}_{\tStar=0}^{\tStar=\TStar},
  \end{equation*}
  where
  \begin{equation*}
    \SubPConjOp_{b,a} = \SubPConjOp_{b,a}^{(0)}D_{\tStar} + \SubPConjOp_{b,a}^{(1)},\qquad
    \SubPConjOp_{b,a}^{(i)} \in \Op S^i. 
  \end{equation*}
  Moreover, from Lemma \ref{linear:lemma:ILED-near:s-decomp}, we have that
  for $a$, $\delta_r$,
  $\delta_\zeta$, and $\varepsilon_{\TrappedSet_{b_0}}$ sufficiently
  small we have that $\norm*{\SubPConjOp_{b,a}^{(0)}}_{L^2\to
    L^2}\lesssim \delta$. 

  To control
  $\Re\bangle*{\SubPConjOp_{b,s}[h],
    \tilde{\mathfrak{\LagrangeCorr}}_{b_0}h}_{L^2(\DomainOfIntegration)}$,
  we observe from its construction in Lemma
  \ref{linear:lemma:ILED-near:bulk-positivity}  that for $a$,
  $\delta_r$ sufficiently small, 
  \begin{equation*}
    \sup_{\TrappingNbhd}\abs*{\tilde{\mathfrak{\LagrangeCorr}}_{b_0}} < \delta.
  \end{equation*}  
  Then using Lemma
  \ref{linear:lemma:ILED-near:LoT-control:exchange-degeneracy-trick}, we have
  that for $a$, $\delta_r$, $\delta_\zeta$, and
  $\varepsilon_{\TrappedSet}$ sufficiently small, we have that
  \begin{equation*}
    \abs*{\bangle*{\tilde{\mathfrak{\LagrangeCorr}}_{b_0}\SubPConjOp_{b,s}[h], h}_{L^2(\DomainOfIntegration)}}
    \le \delta \norm*{h}_{\MorawetzNorm(\DomainOfIntegration)}^2 + \evalAt*{Err_{\TrappingNbhd}[h]}_{\tStar=0}^{\tStar=\TStar}, 
  \end{equation*}
  where $\abs*{Err_{\TrappedSet}[h]}\lesssim \delta
  \norm*{h}_{\InducedMorawetzNorm(\TrappingNbhd)}^2$ for $\delta_r$,
  $\varepsilon_{\TrappedSet}$, and $a$ sufficiently small. 
  We now consider the term
  \begin{equation*}
    \Re\bangle*{\squareBrace*{\SubPConjOp_{b,s}, \mathfrak{\MorawetzVF}_{b_0}}h, h}_{L^2(\DomainOfIntegration)}.
  \end{equation*}
  Recall from its definition in
  \eqref{linear:eq:ILED-near:bulk:positivity:adjusted-X-q-def} that
  \begin{equation*}
    \mathfrak{\MorawetzVF}_{b_0} = \MorawetzVF_{b_0} + \frac{1}{2}\nabla_{g_b}\cdot \MorawetzVF_{b_0}.
  \end{equation*}
  Since $\nabla_{g_b}\cdot \MorawetzVF_{b_0}$ is just a smooth
  function in $r$ on $\TrappingNbhd$, we have that
  \begin{equation*}
    \squareBrace*{\SubPConjOp_{b,s}, \nabla_{g_b}\cdot \MorawetzVF_{b_0}} \in \Op S^0.
  \end{equation*}
  As a result, using Lemma
  \ref{linear:lemma:ILED-near:LoT-control:lowest-order-control}, we have that
  for sufficiently small $\delta_r$,
  \begin{equation*}
    \abs*{\bangle*{\squareBrace*{\SubPConjOp_{b,s}, \nabla_{g_b}\cdot \MorawetzVF_{b_0}}h, h}_{L^2(\DomainOfIntegration)}}< \delta \norm*{h}_{\MorawetzNorm_b(\DomainOfIntegration)}^2. 
  \end{equation*}
  Furthermore, using the explicit form of $\MorawetzVF_{b_0}$ in
  \eqref{linear:eq:ILED-near:SdS:X-def}, we can write that
  \begin{align*}
    \squareBrace*{\SubPConjOp_{b,s}, \MorawetzVF_{b_0}}
    &= \squareBrace*{\SubPConjOp_{b,s}, f_{b_0}(r)}\p_r
      + f_{b_0}(r)\squareBrace{\SubPConjOp_{b,s}, \p_r}.
  \end{align*}  
  Since $f_{b_0}(r)$ is a smooth function we can use Lemma
  \ref{linear:lemma:ILED-near:LoT-control:zero-order-mixed-term} to see that
  for $\delta_r$ sufficiently small,
  \begin{equation}
   \label{linear:eq:ILED-near:LoT:first-order:aux1}
    \abs*{\bangle*{\squareBrace*{\SubPConjOp_{b,s}, f_{b_0}(r)}\p_rh, h}_{L^2(\DomainOfIntegration)}}
    < \delta\norm*{h}_{\MorawetzNorm(\DomainOfIntegration)}^2.    
  \end{equation}
  The last term,  $\bangle*{\PotentialConjOp_b{h},
    \MorawetzVF_{b_0}{h}}_{L^2(\DomainOfIntegration)}$, can be
  handled directly using Cauchy-Schwarz from Lemma
  \ref{linear:lemma:ILED-near:LoT-control:lowest-order-control}.

  \noindent{\textbf{Step 2: First-order terms that are O(a)}.}
  We handle the remaining first order term,
  $a\KCurrentIbP{\widetilde{\MorawetzVF},
    \tilde{\LagrangeCorr}}_{(1)}[{h}]$ in a similar fashion. The main
  difference to the previous case is that instead of relying on the
  smallness we gain by considering the symbol decomposition of
  $\SubPConjOp_b$, we rely on the inherent smallness present in $a$.  
  
  First, we recall that
  \begin{equation*}
    \KCurrentIbP{\widetilde{\MorawetzVF}, \tilde{\LagrangeCorr}}_{(1)}[{h}]
    = \frac{1}{2}\bangle*{\left[\SubPConjOp_{b,s}, \widetilde{\MorawetzVF}\right]{h},{h}}_{L^2(\DomainOfIntegration)}
    + \bangle*{\SubPConjOp_b {h}, \tilde{\LagrangeCorr} {h}}_{L^2(\DomainOfIntegration)}
    + \bangle*{\PotentialConjOp_b {h}, \widetilde{\MorawetzVF} {h}}_{L^2(\DomainOfIntegration)}.
  \end{equation*}
  Then observe that directly using Lemma
  \ref{linear:lemma:ILED-near:LoT-control:symmetry-control} and Lemma
  \ref{linear:lemma:ILED-near:LoT-control:lowest-order-control} we have that
  for $\delta_r$ sufficiently small, 
  \begin{equation*}
    \abs*{\bangle*{\SubPConjOp_{b,a}h, \tilde{\LagrangeCorr}h }_{L^2(\DomainOfIntegration)}}
    + \abs*{\bangle*{\PotentialConjOp_{b,a}h, \tilde{\LagrangeCorr}h }_{L^2(\DomainOfIntegration)}}
    < \delta \norm*{h}_{\InducedMorawetzNorm_b(\TrappingNbhd}^2.  
  \end{equation*}
  We now move on to handling
  \begin{equation*}
    \KCurrentIbP{\widetilde{\MorawetzVF}, \tilde{\LagrangeCorr}}_{(1,s)}[{h}]
    := \frac{1}{2}\bangle*{\left[\SubPConjOp_{b,s}, \widetilde{\MorawetzVF}\right]{h},{h}}_{L^2(\DomainOfIntegration)}
    + \bangle*{\SubPConjOp_{b,s} {h}, \tilde{\LagrangeCorr} {h}}_{L^2(\DomainOfIntegration)}
    + \bangle*{\PotentialConjOp_{b,a} {h}, \widetilde{\MorawetzVF} {h}}_{L^2(\DomainOfIntegration)}.
  \end{equation*}
  Observe that
  \begin{equation*}
    \KCurrentIbP{\widetilde{\MorawetzVF}, \tilde{\LagrangeCorr}}_{(1,s)}[{h}]
    = \bangle*{\KCurrentSym{\widetilde{\MorawetzVF}, \tilde{\LagrangeCorr}}_{(1)}{h}, {h}}_{L^2(\DomainOfIntegration)}
    +\left.\bangle*{\PotentialConjOp_{b,a}{h}, \widetilde{\MorawetzVF}_0{h}}_{\LTwo(\TrappingNbhd_{\tStar})}\right\vert_{\tStar=0}^{\tStar =\TStar}
    +\left.\bangle*{\SubPConjOp_{b,s}{h}, \tilde{\LagrangeCorr}_{-1}{h}}_{\LTwo(\TrappingNbhd_{\tStar})}\right\vert_{\tStar=0}^{\tStar =\TStar},
  \end{equation*}
  where
  \begin{equation*}
    \KCurrentSym{\widetilde{\MorawetzVF}, \tilde{\LagrangeCorr}}_{(1)}
    = \left[\SubPConjOp_{b,s}, \widetilde{\MorawetzVF}\right]
    + \tilde{q}\SubPConjOp_b
    -\widetilde{\MorawetzVF}\PotentialConjOp_b
    \in \Op S^{-1}D_{\tStar}^2 + \Op S^0D_{\tStar} + \Op S^1.
  \end{equation*}
  It is clear that the accumulated boundary terms satisfy
  \begin{equation*}
    \abs*{\bangle*{\PotentialConjOp_b{h}, \widetilde{\MorawetzVF}_0{h}}_{\LTwo(\TrappingNbhd_{\tStar})}}
    +\abs*{\bangle*{\SubPConjOp_{b}{h}, \tilde{\LagrangeCorr}_{-1}{h}}_{\LTwo(\TrappingNbhd_{\tStar})}}
    \lesssim \norm{h}_{H^1(\TrappingNbhd_{\tStar})}^2 + \norm{\KillT h}_{\LTwo(\TrappingNbhd_{\tStar})}^2.
  \end{equation*}
  Moreover, using Lemma
  \ref{linear:lemma:ILED-near:LoT-control:exchange-degeneracy-trick-boundary},
  for $a$ sufficiently small, these auxiliary boundary
  terms can be controlled by
  $\delta\norm{{h}}_{\InducedMorawetzNorm(\TrappingNbhd_{\tStar})}^2$.  We can now
  use Lemma \ref{linear:lemma:ILED-near:LoT-control:lowest-order-control} to
  directly see that for $a$ and $\delta_r$ sufficiently small, we have
  that
  \begin{equation*}
    \abs*{\bangle*{a\KCurrentSym{\widetilde{\MorawetzVF}, \tilde{\LagrangeCorr}}_{(1)}{h}, {h}}_{L^2(\DomainOfIntegration)}}
    < \delta \norm*{h}_{\MorawetzNorm(\DomainOfIntegration)}^2. 
  \end{equation*}    

  \noindent{\textbf{Step 3: Zero-order terms.}}
  The zero-order terms can be directly controlled by Lemma \ref{linear:lemma:ILED-near:LoT-control:lowest-order-control}.    

  We conclude the proof of Lemma \ref{linear:lemma:ILED-near:LoT-control} by
  picking a suitable $\tilde{J}[{h}]$ to cancel out all the auxiliary
  boundary terms we picked up in the process of integrating by parts.
\end{proof}

We have now generated a principally positive bulk term up to some
error consisting entirely of boundary integrals.

\paragraph{Controlling the boundary terms}

The last step in proving a resolvent estimate for
$\abs{\Im\sigma}\le \SpectralGap$ will be to show that we can absorb
the boundary terms into the positive bulk term as in the proof
Proposition \ref{linear:prop:ILED-near:SdS}.

\begin{lemma}
  \label{linear:lemma:ILED-near:boundary-terms}
  Fix some $0<\delta_0\ll1$. Then, there exists a choice of $a$,
  $\delta_r$, $\delta_\zeta$, $\varepsilon_{\TrappedSet_{b_0}}$, and
  $\SpectralGap$ sufficiently small such that for
  ${h}=e^{-\ImagUnit\sigma t}u$ where
  $\abs*{\Im\sigma} < \SpectralGap$,
  \begin{equation*}
    e^{2\Im\sigma t}\abs*{\Im\sigma}\abs*{
      \int_{\TrappingNbhd}\JCurrent{\MorawetzVF_{b_0}, \LagrangeCorr_{b_0},0}[{h}]\cdot n_{\TrappingNbhd} 
      + a\Re\JCurrentIbP{\widetilde{\MorawetzVF}, \tilde{\LagrangeCorr}}[{h}]}
    <
    \delta_0 \norm{u}_{\InducedMorawetzNorm_b(\TrappingNbhd)}^2.
  \end{equation*}
\end{lemma}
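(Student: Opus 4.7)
The plan is to reduce the statement to a uniform bound
\[
  \left|\int_{\TrappingNbhd}\JCurrent{\MorawetzVF_{b_0},\LagrangeCorr_{b_0},0}[h]\cdot n_{\TrappingNbhd} + a\Re\JCurrentIbP{\widetilde{\MorawetzVF},\tilde{\LagrangeCorr}}[h]\right|
  \;\lesssim\; e^{2\Im\sigma\,t}\,\norm{u}_{\InducedMorawetzNorm_b(\TrappingNbhd)}^2,
\]
where the implicit constant depends on $\delta_r,\delta_\zeta,a,\varepsilon_{\TrappedSet_{b_0}}$ but is independent of $\sigma$. Once this is established, multiplying by $|\Im\sigma|$ and using that the $e^{2\Im\sigma t}$ factor cancels the one produced by $|h|^2=e^{2\Im\sigma t}|u|^2$, it suffices to require $\SpectralGap\cdot C<\delta_0$ with $C$ the accumulated constant, which only constrains $\SpectralGap$ to be small. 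So the entire problem is to prove the stated boundary bound by the Morawetz norm.

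For the classical piece, I would mimic the calculation in the proof of Proposition \ref{linear:prop:ILED-near:SdS}: since $\MorawetzVF_{b_0}=f_{b_0}(r)\p_r$ with $f_{b_0}(r)=(r-3M)\mu_{b_0}/r^2$ vanishing at $r=3M$, I rewrite
\[
  f_{b_0}(r) \;=\; \ell_1(\sigma_1,\FreqPhi)\,\tilde f_1 + \ell_2(\sigma_2,\FreqPhi)\,\tilde f_2 + O(a),
\]
where $\ell_i=r-\rTrapping_b(\sigma_i,\FreqPhi)$ are the defining weights of the Morawetz norm. The resulting boundary pairings $\langle\ell_i\tilde f_i\p_r u,n_{\TrappingNbhd}u\rangle$ are estimated by Cauchy-Schwarz against $\|\p_r u\|_{L^2}\|\ell_i u\|_{L^2}$, both bounded by $\norm{u}_{\InducedMorawetzNorm_b(\TrappingNbhd)}$ after using that $n_{\TrappingNbhd}u = -\ImagUnit\sigma u$ can be grouped with the $\ell_i$ factor to produce the exact $\ell_i(\sigma-\sigma_j)u$ building blocks appearing in Definition \ref{linear:def:Morawetz-norm}. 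The Lagrangian-corrector contribution to $\JCurrent{\MorawetzVF_{b_0},\LagrangeCorr_{b_0},0}[h]$ is a zeroth-order pairing, handled by Lemma \ref{linear:lemma:ILED-near:LoT-control:lowest-order-control}.

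For the pseudo-differential piece, I decompose $\JCurrentIbP{\widetilde{\MorawetzVF},\tilde{\LagrangeCorr}}$ according to \eqref{linear:eq:ILED-near:ILED-int-by-parts-decomposed:J-decomposition-def}. The principal-order part $\langle n_{\TrappingNbhd}h,\widetilde{\MorawetzVF}h\rangle$ is the hard one. Here I invoke Corollary \ref{linear:cor:ILED-near:alt-form-Xb}: by its construction in Lemma \ref{linear:lemma:ILED-near:sum-of-squares}, $\widetilde{\MorawetzSym}$ enters $\MorawetzSym_b$ in such a way that $\MorawetzSym_b$ factors as a sum whose principal part vanishes on $\TrappedSet_b$, so that $\widetilde{\MorawetzVF}u$ can be expressed as a symbolic combination of $\ell_1(D_t-\sigma_2(D,x))u$, $\ell_2(D_t-\sigma_1(D,x))u$, and $\p_r u$ modulo an $\Op S^0+\Op S^{-1}D_{\tStar}$ error. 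Pairing against $n_{\TrappingNbhd}u$ and applying Cauchy-Schwarz gives the Morawetz-norm bound; the lower-order error is absorbed via Lemma \ref{linear:lemma:ILED-near:LoT-control:exchange-degeneracy-trick-boundary} (which trades the subprincipal nature for a degeneracy at the trapped set), using the smallness of $a$ in place of $\varepsilon_{\SubPOp}$. The remaining first-order boundary contributions in $\JCurrentIbP{}_{(1)}$ involve the symbols $\widetilde{\MorawetzVF}_0\,g_b(\KillT,n_{\TrappingNbhd})\SubPConjOp_b$ and $\SubPConjOp_{0}g_b(\KillT,n_{\TrappingNbhd})\widetilde{\MorawetzVF}$, each of which is an $O(a)+O(\varepsilon_{\TrappedSet_{b_0}})$ operator of first order, so Lemma \ref{linear:lemma:ILED-near:LoT-control:exchange-degeneracy-trick-boundary} again applies with constants we can drive below any prescribed threshold by choosing $a$, $\delta_r$, $\delta_\zeta$, $\varepsilon_{\TrappedSet_{b_0}}$ small.

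The main obstacle is the principal $a\langle n_{\TrappingNbhd}h,\widetilde{\MorawetzVF}h\rangle$ term, because $\widetilde{\MorawetzVF}$ is not a differential operator, so the usual integration-by-parts trick of the \SdS{} computation does not apply verbatim; the key point is the structural fact from Corollary \ref{linear:cor:ILED-near:alt-form-Xb} that $\MorawetzSym_b$ is elliptically controlled by the Morawetz building blocks, which lets pseudo-differential $L^2$-boundedness replace the classical Cauchy-Schwarz. After these estimates, choosing $\SpectralGap$ small relative to $\delta_0$ and the fixed constant produced by the preceding steps finishes the proof.
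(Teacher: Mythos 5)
Your proposal is correct and is essentially the paper's own argument reorganized: you first observe that $n_{\TrappingNbhd}h=-\ImagUnit\sigma h/\sqrt{\GInvdtdt}$, absorb the resulting factor of $\sigma$ into the degenerate weights $\ell_i$ that make up the Morawetz norm (which is precisely what Corollary~\ref{linear:cor:ILED-near:alt-form-Xb} is designed to enable for the \KdS{} symbol $\MorawetzSym_b$), dispatch the subprincipal and zeroth-order boundary remainders via Lemma~\ref{linear:lemma:ILED-near:LoT-control:exchange-degeneracy-trick-boundary}, and only then bring in the smallness of $\SpectralGap$. The paper folds the $\abs*{\Im\sigma}$-smallness into both the principal (Step 1) and lower-order (Step 2) estimates directly, whereas you extract a $\sigma$-uniform bound first and multiply by $\abs*{\Im\sigma}$ at the end; the two routes are equivalent, and the lemmas invoked are the same.

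One point worth tightening: the Cauchy--Schwarz bound you write for the classical piece should pair $\norm*{\p_r u}_{\LTwo}$ against $\norm*{\ell_i \sigma u}_{\LTwo}$ (with the $\sigma$ from $n_{\TrappingNbhd}u$ attached to $\ell_i$), not simply $\norm*{\ell_i u}_{\LTwo}$. The latter is trivially bounded by $\norm*{u}_{\LTwo}$ and yields no help; the cancellation that makes the estimate $\sigma$-uniform is that the parallelogram inequality applied to $\norm*{\ell_1(\sigma-\sigma_2)u}_{\LTwo}^2+\norm*{\ell_2(\sigma-\sigma_1)u}_{\LTwo}^2$ controls $\norm*{\ell_i \sigma u}_{\LTwo}$, so that the factor of $\abs*{\Re\sigma}$ coming from $n_{\TrappingNbhd}h$ is absorbed into the Morawetz building blocks. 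You do gesture at this in the following clause, so it is more a matter of phrasing than of substance.
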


\begin{proof}
  The main idea in this proof will be to recycle the methods we have
  already used to handle the bulk terms to analyze the boundary
  terms. Let us first fix some auxiliary $\delta>0$. 

  \noindent{\textbf{Step 1: Principal boundary term.}}
  We begin by applying the methods used to handle the principal bulk
  term to handle the principal boundary term, showing that for
  ${h} = e^{-\ImagUnit\sigma\tStar}u$,
  \begin{equation}
    \label{linear:eq:ILED-near:boundary:principal-sym-result}
    e^{2\Im\sigma\tStar}\abs*{\Re\bangle*{n_{\TrappingNbhd}{h}, \MorawetzVF_{b_0}{h}}_{\LTwo(\TrappingNbhd)}
      + a\Re\JCurrentIbP{\widetilde{\MorawetzVF},
        \tilde{\LagrangeCorr}}_{(2)}[h]}
    < \delta \norm{u}_{\InducedMorawetzNorm_b(\TrappingNbhd)}^2.
  \end{equation}
  Observe that on $\TrappingNbhd$,
  $n_{\TrappingNbhd}= \frac{1}{\sqrt{\GInvdtdt}}d\tStar$, so for ${h} = e^{-\ImagUnit\sigma\tStar}u$,
  \begin{equation*}
    e^{2\Im\sigma\tStar}\left(\Re\bangle*{n_{\TrappingNbhd}{h}, \MorawetzVF_{b_0}{h}}_{\LTwo(\TrappingNbhd)}
    + a\Re\bangle*{n_{\TrappingNbhd}{h}, \widetilde{\MorawetzVF}{h}}_{\LTwo(\TrappingNbhd)} \right)
    = -\Re\bangle*{ \sigma u, \MorawetzVF_b u}_{\LTwo(\TrappingNbhd)}.
  \end{equation*}
  As a result, choosing $\abs*{\Im\sigma}$ sufficiently small,   
  \begin{equation*}
    \abs*{\Im\sigma}\abs*{\bangle*{{h}, n_{\TrappingNbhd}\MorawetzVF_b{h}}_{\LTwo(\TrappingNbhd)}}
    < \delta\norm{{u}}_{\InducedMorawetzNorm_b(\TrappingNbhd)}^2.
  \end{equation*}
  We remark that since we use the smallness $\Im\sigma$ to conclude,
  this step necessarily restricts the size of the spectral gap. 
    
  \noindent{\textbf{Step 2: Lower-order boundary terms.}}
  We now handle the lower-order boundary terms, showing that for
  ${h}=e^{-\ImagUnit\sigma\tStar}u$,
  \begin{align}
    \delta \norm{h}_{\InducedMorawetzNorm_b(\TrappingNbhd)}^2>
    & \abs*{\Im\sigma}\left(
      \abs*{\bangle*{\LagrangeCorr_{b_0}{h}, n_{\TrappingNbhd}{h}}_{\LTwo(\TrappingNbhd)}}
      + \abs*{\bangle*{g_b(\KillT, n_{\TrappingNbhd})\SubPConjOp_{0,a}h, \MorawetzVF_{b_0}h}_{L^2(\TrappingNbhd)}}
      + a\abs*{\JCurrentIbP{\widetilde{\MorawetzVF}, \tilde{\LagrangeCorr}}_{(1)}[{h}]}      
    \right).\label{linear:eq:ILED-near:boundary:LoT-estimate}
  \end{align}
  To handle these lower-order boundary terms, we appeal to Lemma
  \ref{linear:lemma:ILED-near:LoT-control:exchange-degeneracy-trick-boundary},
  mirroring the approach taken in proving Lemma
  \ref{linear:lemma:ILED-near:LoT-control}.  Using Lemma
  \ref{linear:lemma:ILED-near:LoT-control:exchange-degeneracy-trick-boundary},
  we see that for sufficiently small\footnote{Observe that due to the
    fact that we need to use $\Im\sigma$ as a smallness parameter in
    controlling
    $\abs*{\bangle*{\LagrangeCorr_{b_0}{h},
        n_{\TrappingNbhd}{h}}_{\LTwo(\TrappingNbhd)}}$, this also
    limits the size of the spectral gap.} $\delta_r$, $\Im\sigma$,
  \begin{equation}
    \label{linear:eq:ILED-near:boundary-terms:subprinc-not-a}
    \Im\sigma \left(\abs*{\bangle*{\LagrangeCorr_{b_0}{u}, \sigma {u}}_{L^2(\TrappingNbhd)}}
      + \abs*{\bangle*{g_b(\KillT, n_{\TrappingNbhd})\SubPConjOp_{0,a}u, \MorawetzVF_{b_0}u}_{L^2(\TrappingNbhd)}}
    \right)
    < \delta  \norm{u}_{\InducedMorawetzNorm_b(\TrappingNbhd)}^2.
  \end{equation}
  To handle $a\abs*{\JCurrentIbP{\widetilde{\MorawetzVF},
      \tilde{\LagrangeCorr}}_{(1)}[{h}]}$, we can again use Lemma
  \ref{linear:lemma:ILED-near:LoT-control:exchange-degeneracy-trick-boundary}
  to see that for sufficiently small\footnote{We emphasize that here
    because of the additional smallness factor in $a$, we do not
    need to rely on the smallness in $\Im\sigma$.} $a$, $\delta_r$,
  \begin{equation}
    \label{linear:eq:ILED-near:boundary-terms:subprinc-a}
    \abs*{a e^{2\Im\sigma\tStar}\Re \JCurrentIbP{\widetilde{\MorawetzVF}, \tilde{\LagrangeCorr}}_{(1)}[h]}
    <\delta \norm{u}_{\InducedMorawetzNorm_b(\TrappingNbhd)}^2.
  \end{equation}
  Combining \eqref{linear:eq:ILED-near:boundary-terms:subprinc-not-a} and
  \eqref{linear:eq:ILED-near:boundary-terms:subprinc-a} concludes the proof
  of Lemma \ref{linear:lemma:ILED-near:boundary-terms}.
\end{proof}

\paragraph{Resolvent estimate for $\Im\sigma>\frac{\SpectralGap}{2}$}
Let us first
begin then by showing that a naive Gronwall-type energy estimate
allows us to reduce the problem to the region where
$\abs*{\Im\sigma}\le\SpectralGap$.

\begin{lemma}
  \label{linear:lemma:ILED-near:Gronwall-resolvent-estimate}
  Let 
  $u$ be as specified in the assumptions of Theorem
  \ref{linear:thm:ILED-near:main}. 
  Then, for any $\delta>0$, there exists some constant $C>0$ such that
  for $\Im\sigma > \delta$, $\abs*{\sigma}>C$,
  \begin{equation}
    \label{linear:eq:ILED-near:Gronwall-resolvent-estimate}
    \norm{u}_{\InducedLTwo(\TrappingNbhd)} \lesssim \norm*{\widehat{\LinEinsteinConj}_{g_b}(\sigma)u}_{\InducedLTwo(\TrappingNbhd)}.
  \end{equation}
\end{lemma}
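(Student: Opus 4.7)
The plan is to derive the bound from a Gronwall-type energy estimate using $\KillT$ as a multiplier. Three structural facts make this work on $\TrappingNbhd$: the vectorfield $\KillT$ is uniformly timelike on $\TrappingNbhd$ (since, for sufficiently slowly-rotating \KdS{}, the trapping neighborhood is physically separated from both ergoregions), so the associated energy $E_{\KillT}[h] := \int_{\TrappingNbhd}\JCurrent{\KillT,0,0}[h]\cdot n_{\Sigma}$ is equivalent to $\norm{h}_{\InducedHk{1}_\sigma(\TrappingNbhd)}^2$; $\KillT$ is Killing so its deformation tensor vanishes, eliminating any principal bulk term from $\ScalarWaveOp[g_b]$; and the Fourier-support hypothesis on $u$ places the argument in the microlocal regime where Lemma~\ref{linear:lemma:ILED-near:s-decomp} forces the anti-Hermitian part $\SubPConjOp_{b,a}$ of the conjugated subprincipal operator to be of size $\delta_0$ for any prescribed small $\delta_0>0$.

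Concretely, I apply the divergence identity \eqref{linear:eq:div-them:J-K-currents} with multiplier $\KillT$ on $[0,T]\times\TrappingNbhd$. Since $u$ has compact spatial support in $\TrappingNbhd$ the spatial boundary integrals vanish, and since $\KillT$ is Killing the bulk identity reduces to
\[
\p_{\tStar} E_{\KillT}[h] = \Re\bangle*{\KillT h, \LinEinsteinConj_{g_b}h}_{\LTwo(\TrappingNbhd)}
- \Re\bangle*{\KillT h, \SubPConjOp_b h}_{\LTwo(\TrappingNbhd)}
- \Re\bangle*{\KillT h, \PotentialConjOp_b h}_{\LTwo(\TrappingNbhd)}.
\]
Because both $\SubPConjOp_b$ and $\PotentialConjOp_b$ are stationary, they commute with $\KillT$, so the Hermitian part $\SubPConjOp_{b,s}$ contributes a quantity with an overall factor of $\Im\sigma$ that can be absorbed into the left-hand side, while the anti-Hermitian part is estimated by Cauchy-Schwarz using the microlocal smallness of Lemma~\ref{linear:lemma:ILED-near:s-decomp}, producing a bound of size $\delta_0\norm{u}_{\InducedHk{1}_\sigma(\TrappingNbhd)}^2$.

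Substituting $h = e^{-\ImagUnit\sigma\tStar}u$ converts $\p_{\tStar}E_{\KillT}[h]$ into $2\Im\sigma \, e^{2\Im\sigma\tStar}E_{\KillT}[u]$, which is $\gtrsim \delta \, e^{2\Im\sigma\tStar}\norm{u}_{\InducedHk{1}_\sigma(\TrappingNbhd)}^2$ once $\Im\sigma>\delta$. Choosing $\delta_0<\delta/C$ in the construction of $\PseudoSubPFixer$ absorbs the subprincipal contribution into the left-hand side, and the forcing term is controlled by
\[
\abs*{\Re\bangle*{-\ImagUnit\sigma u, \widehat{\LinEinsteinConj}_{g_b}(\sigma)u}_{\InducedLTwo(\TrappingNbhd)}} \le \norm{u}_{\InducedHk{1}_\sigma(\TrappingNbhd)}\norm{\widehat{\LinEinsteinConj}_{g_b}(\sigma)u}_{\InducedLTwo(\TrappingNbhd)}.
\]
This yields $\norm{u}_{\InducedHk{1}_\sigma(\TrappingNbhd)}\lesssim \norm{\widehat{\LinEinsteinConj}_{g_b}(\sigma)u}_{\InducedLTwo(\TrappingNbhd)}$, and since $\abs*{\sigma}\norm{u}_{\InducedLTwo(\TrappingNbhd)}\le \norm{u}_{\InducedHk{1}_\sigma(\TrappingNbhd)}$, the hypothesis $\abs*{\sigma}>C$ yields the claimed $\LTwo$ bound (residual $\LTwo$ errors produced by the zero-order operator $\PotentialConjOp_b$ are absorbed at high frequency in the same way).

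The main obstacle is precisely the skew-adjoint part of the subprincipal operator, which is sign-indefinite and would ordinarily force the energy-growth constant to be $O(1)$ rather than arbitrarily small $\delta$. This is circumvented by the pseudo-differential conjugation $\PseudoSubPFixer$ from Lemma~\ref{linear:lemma:subprincipal-symbol-control} together with the Fourier-support hypothesis on $u$, which combine via Lemma~\ref{linear:lemma:ILED-near:s-decomp} to make $\SubPConjOp_{b,a}$ as small as required on the effective phase-space support of $u$.
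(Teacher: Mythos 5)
Your proposal is correct and follows essentially the same route as the paper: a $\KillT$-multiplier Gronwall-type energy estimate on $\TrappingNbhd$, exploiting that $\KillT$ is Killing and uniformly timelike there, the microlocal smallness of $\SubPConjOp_{b,a}$ from Lemma~\ref{linear:lemma:ILED-near:s-decomp}, positivity from $\Im\sigma>\delta$, and high-frequency absorption of the zero-order remainder.

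The one place you diverge slightly from the paper is the treatment of the Hermitian part $\SubPConjOp_{b,s}$. The paper discards it by appealing to the symmetry identity in Lemma~\ref{linear:lemma:ILED-near:LoT-control:symmetry-control}. You instead observe it picks up a factor of $\Im\sigma$ and absorb it into the left-hand side; that observation is correct, but note the mechanism is the Hermiticity of $\SubPConjOp_{b,s}$ paired against $\KillT h=-\ImagUnit\sigma h$ (making $\bangle{\SubPConjOp_{1,s}h,h}$ real, so only $\Im\sigma$ survives in the real part), not the fact that $\SubPConjOp_b$ commutes with $\KillT$. Also, since $\SubPConjOp_{b,s}$ is first order with no smallness, the absorption of its $\Im\sigma\,\bangle{\SubPConjOp_{1,s}h,h}$ contribution into $2\Im\sigma\,\EnergyKill[h]$ is not free: it requires the same Young-inequality-plus-high-frequency step you already invoke for $\PotentialConjOp_b$, which is fine but should be stated rather than bundled into "can be absorbed into the left-hand side."
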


\begin{proof}
  The proof of this lemma will proceed in the same way as Corollary
  \ref{linear:corollary:naive-energy-estimate}, we will take advantage of the
  smallness of $\SubPConjSym_b$ by microlocalizing to a neighborhood
  of $\TrappedSet_b$.

  Recall that $\KillT$ is uniformly timelike
  and Killing on $\TrappingNbhd$. Thus, on $\TrappingNbhd$,
  $\EnergyKill(\tStar)[{h}]$ is coercive. Moreover, using the
  divergence theorem we have that
  \begin{align*}
    \EnergyKill(\TStar)[{h}] - \EnergyKill(0)[{h}]
    ={}& - \Re\bangle{\LinEinsteinConj_{g_b} {h}, \KillT {h}}_{L^2(\DomainOfIntegration)}
         + \Re\bangle*{\SubPConjOp_b[{h}],\KillT {h}}_{L^2(\DomainOfIntegration)}
         + \Re\bangle*{\PotentialConjOp_b{h}, \KillT {h}}_{L^2(\DomainOfIntegration)}.
  \end{align*}
  Taking a $\p_{\tStar}$ derivative, we have that
  \begin{align*}
    \p_{\tStar}\EnergyKill(\tStar)[{h}]
    ={}& - \Re\bangle{\LinEinsteinConj_{g_b} {h}, \KillT {h}}_{\InducedLTwo(\TrappingNbhd)}
         + \Re\bangle*{\SubPConjOp_b[{h}],\KillT {h}}_{\InducedLTwo(\TrappingNbhd)}
         + \Re\bangle*{\PotentialConjOp_b{h}, \KillT {h}}_{\InducedLTwo(\TrappingNbhd)}.
  \end{align*}
  Using Lemma \ref{linear:lemma:ILED-near:LoT-control:symmetry-control}, we
  have that
  \begin{align*}
    \p_{\tStar}\EnergyKill(\tStar)[{h}]
    ={}& - \Re\bangle{\LinEinsteinConj_{g_b} {h}, \KillT {h}}_{\InducedLTwo(\TrappingNbhd)}
         + \Re\bangle*{\SubPConjOp_{b,a}[{h}],\KillT {h}}_{\InducedLTwo(\TrappingNbhd)}
         + \Re\bangle*{\PotentialConjOp_b{h}, \KillT {h}}_{\InducedLTwo(\TrappingNbhd)}.
  \end{align*}
  Using Lemma \ref{linear:lemma:ILED-near:s-decomp}, we have that for any
  fixed $\epsilon>0$, there exists a choice of $a$,
  $\varepsilon_{\TrappedSet}$, $\delta_r$, and $\delta_\zeta$
  sufficiently small such that
  $\abs*{\SubPConjSym_b} \le \delta_0\abs*{\zeta}$.  As a result,
  large,
  \begin{align*}
    \abs*{\bangle*{\SubPConjOp_{b,a}{h},  \KillT{h}}_{\InducedLTwo(\TrappingNbhd)}}
    &< \epsilon\norm*{h}_{\HkWithT{1}(\TrappingNbhd)}^2,\\    
    \abs*{\bangle*{\PotentialConjOp_b{h}, \KillT {h}}_{\InducedLTwo(\TrappingNbhd)}}
    &< \epsilon \norm*{h}_{\InducedLTwo(\TrappingNbhd)}^2 + C(\epsilon)\norm*{h}_{\InducedLTwo(\TrappingNbhd)}^2.       
  \end{align*}
  Recall that $\TrappingNbhd$ does not intersect the ergoregion of
  $g_b$ and that therefore $\EnergyKill(\tStar)[h]$ is strictly
  positive and coercive for $h$ such that $h(\tStar, \cdot)$ is
  supported on $\TrappingNbhd$ for all $\tStar$. As a result, for
  ${h}=e^{-\ImagUnit\sigma\tStar}u$, 
  \begin{equation*}
    \p_{\tStar}\EnergyKill(\tStar)[{h}]
    \gtrsim \Im \sigma \norm*{h}_{\HkWithT{1}(\TrappingNbhd)}^2. 
  \end{equation*}
  Since we are only considering $\Im\sigma>\delta$, the right-hand
  side is positive, and now we can now multiply through by
  $e^{2\Im\sigma\tStar}$ to remove any $\tStar$ dependency to see that
  \begin{equation*}
    \norm*{u}_{\InducedHk{1}(\TrappingNbhd)}^2
    + \norm*{\sigma u}_{\InducedLTwo(\TrappingNbhd)}^2
    \lesssim \norm*{\widehat{\LinEinsteinConj}_{g_b}(\sigma)u}_{\InducedLTwo(\TrappingNbhd)}^2
    +\norm*{u}_{\InducedLTwo(\TrappingNbhd)}^2.
  \end{equation*}
  We conclude by observing as before that the lower-order term on the
  right-hand side can be absorbed into the left-hand side for
  $\abs*{\sigma}$ sufficiently large. 
\end{proof}

\paragraph{Reduction to conjugated operator}

All of the estimates thus far in the section have been proven by
multiplying against $\LinEinsteinConj_{g_b}$ instead of
$\LinEinstein$. In this section, we show that this is in fact
sufficient to prove Theorem \ref{linear:thm:ILED-near:main}.

We first prove the following auxiliary lemma. 
\begin{lemma}
  \label{linear:lemma:ILED-near:D-t-minus-1-control}
  Let $\DomainOfIntegration$ be as defined in
  \eqref{linear:eq:ILED-trapping:trapping-reg-def}. Then for $\delta_r$
  sufficiently small and $u$ compactly supported on $\TrappingNbhd$,
  we have that for $\GronwallExp\ge \Im\sigma \ge -\SpectralGap$,
  \begin{equation*}
    \norm*{\p_{\tStar}u}_{\InducedHk{-1}(\TrappingNbhd)}^2
    \lesssim \norm*{\widehat{\LinEinstein}_{g_b}(\sigma)u}_{\InducedLTwo(\TrappingNbhd)}^2
    + \norm*{u}_{L^2(\DomainOfIntegration)}^2. 
  \end{equation*}
\end{lemma}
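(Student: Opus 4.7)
The target inequality amounts to controlling $|\sigma|^2\norm{u}_{\InducedHk{-1}(\TrappingNbhd)}^2$, where one interprets $\p_{\tStar}u$ as the Laplace-transformed $-\ImagUnit\sigma u$ acting on the spatial function $u$; the spacetime $L^2(\DomainOfIntegration)$ norm on the right agrees with $L^2(\TrappingNbhd)$ up to a finite $\tStar$-factor once we multiply the corresponding identity for $h=e^{-\ImagUnit\sigma\tStar}u$ by $e^{2\Im\sigma\tStar}$ to eliminate the $\tStar$-dependence. Thus the task reduces to showing $|\sigma|^2\norm{u}_{\InducedHk{-1}(\TrappingNbhd)}^2 \lesssim \norm{\widehat{\LinEinstein}_{g_b}(\sigma)u}_{\InducedLTwo(\TrappingNbhd)}^2 + \norm{u}_{\InducedLTwo(\TrappingNbhd)}^2$.

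The plan is to exploit the decomposition \eqref{linear:eq:Sol-Op:L-def},
\[
\widehat{\LinEinstein}_{g_b}(\sigma)u = P_2 u + \sigma P_1 u + \sigma^2 \GInvdtdt^{-1} u,
\]
in which $P_i\in\Op S^i(\Sigma)$ is an $i$th-order spatial differential operator. Pairing both sides against the test function $\bangle{D_x}^{-2}\GInvdtdt u$ in $L^2(\Sigma)$ isolates the $|\sigma|^2$ factor on the right, since the contribution of the third summand is $|\sigma|^2\bangle{u,\bangle{D_x}^{-2}u}_{L^2}$, comparable to $|\sigma|^2\norm{u}_{\InducedHk{-1}(\TrappingNbhd)}^2$. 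The left-hand pairing is in turn bounded via Cauchy-Schwarz and Young's inequality by $\epsilon\norm{\widehat{\LinEinstein}_{g_b}(\sigma)u}_{\InducedLTwo(\TrappingNbhd)}^2 + C(\epsilon)\norm{u}_{\InducedLTwo(\TrappingNbhd)}^2$.

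The two remaining summands on the right are then controlled by the standard symbol calculus of Section \ref{linear:sec:pseudodifferential-analysis}. Transposing $\bangle{D_x}^{-2}$ so that the $P_2$-term becomes $\bangle{u, P_2^\dagger \bangle{D_x}^{-2}\GInvdtdt u}$, the composition $P_2^\dagger \circ \bangle{D_x}^{-2}$ is of order $0$ and hence $L^2$-bounded, yielding a bound by $\norm{u}_{\InducedLTwo(\TrappingNbhd)}^2$. Similarly $P_1^\dagger \circ \bangle{D_x}^{-2}$ is of order $-1$, so the middle term is dominated by $|\sigma|\norm{u}_{\InducedLTwo(\TrappingNbhd)}\norm{u}_{\InducedHk{-1}(\TrappingNbhd)}$, which Young's inequality converts into $\tfrac{1}{4}|\sigma|^2\norm{u}_{\InducedHk{-1}(\TrappingNbhd)}^2 + C\norm{u}_{\InducedLTwo(\TrappingNbhd)}^2$. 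Absorbing the $\tfrac{1}{4}|\sigma|^2\norm{u}_{\InducedHk{-1}(\TrappingNbhd)}^2$ term into the left-hand side yields the desired inequality, using that $\GInvdtdt$ is a uniformly positive bounded function so that the weighted and unweighted $\InducedHk{-1}$ norms are equivalent.

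The main technical point requiring care will be the interaction of the non-local operator $\bangle{D_x}^{-2}$ with the compact region $\TrappingNbhd$ and the compact support of $u$: strictly speaking, $\bangle{D_x}^{-2}u$ need not be supported in $\TrappingNbhd$. This is handled in the standard way by inserting a cutoff $\tilde{\chi}\in C_c^\infty(\Sigma)$ equal to $1$ on a neighborhood of $\supp u$ before applying $\bangle{D_x}^{-2}$, and absorbing the commutator $[\bangle{D_x}^{-2},\tilde{\chi}]$---which is smoothing of arbitrarily negative order---into the $\norm{u}_{\InducedLTwo(\TrappingNbhd)}^2$ error term.
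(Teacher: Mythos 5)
Your approach is essentially the paper's, recast in the Laplace-transformed picture: both arguments test the equation against an order-$(-2)$ weight applied to the solution in order to extract the $|\sigma|^2\norm{u}_{\InducedHk{-1}}^2$ coercivity from the leading $\GInvdtdt^{-1}D_{\tStar}^2$ (resp.\ $\sigma^2\GInvdtdt^{-1}$) term. The paper works in spacetime with a compactly supported self-adjoint multiplier $R_{-1}\in\Op S^{-1}(\TrappingNbhd)$, pairing $\GInvdtdt^{-1}\ScalarWaveOp[g_b]h$ against $R_{-1}^2 h$ and integrating by parts once in $\tStar$; because $R_{-1}$ is stationary and self-adjoint, this produces the manifestly nonnegative quantity $\norm{R_{-1}\p_{\tStar}h}_{L^2(\DomainOfIntegration)}^2$, which after specializing to $h=e^{-\ImagUnit\sigma\tStar}u$ and multiplying by $e^{2\Im\sigma\tStar}$ gives $|\sigma|^2\norm{R_{-1}u}_{L^2}^2$ directly. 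The compactly supported $R_{-1}$ also lets the paper avoid the cutoff you need for $\bangle{D_x}^{-2}$; both points are real but small advantages of the paper's formulation.

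There is one step of yours that should be corrected rather than glossed over. The contribution of the third summand is \emph{not} $|\sigma|^2\bangle{u,\bangle{D_x}^{-2}u}_{L^2}$; with the Hermitian $L^2$ pairing it is
\[
\sigma^2\bangle{\GInvdtdt^{-1}u,\bangle{D_x}^{-2}\GInvdtdt u}_{L^2},
\]
which carries a holomorphic factor $\sigma^2$ (not $|\sigma|^2$) and a complex-valued pairing. For $\sigma$ in the strip $-\SpectralGap\le\Im\sigma\le\GronwallExp$ this is fixable: take moduli, note $|\sigma^2|=|\sigma|^2$, and show that the pairing $\bangle{\GInvdtdt^{-1}u,\bangle{D_x}^{-2}\GInvdtdt u}$ has modulus bounded below by $c\norm{u}_{\InducedHk{-1}}^2$ minus commutator errors (write $\bangle{D_x}^{-2}=(\bangle{D_x}^{-1})^2$ and commute $\bangle{D_x}^{-1}$ past $\GInvdtdt^{\pm 1}$). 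But one is then rearranging a complex identity and taking moduli, and the resulting commutator errors, weighted by the external $|\sigma|^2$, land in $|\sigma|^2\norm{u}_{\InducedHk{-2}}^2$; absorbing these into the $\norm{u}_{L^2}^2$ error requires a further step (e.g.\ working with $\sigma$-weighted symbol classes $\bangle{D_x}+|\sigma|$, or a frequency split at $|\zeta|\sim|\sigma|$) that you do not mention. The paper's positivity-first spacetime route sidesteps the $\sigma^2$ vs.\ $|\sigma|^2$ subtlety entirely, which is the main reason to prefer it here; it is also worth noting that the same commutator-absorption point is not fully spelled out in the paper's own sketch, so this is a shared loose end rather than a defect unique to your argument. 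The rest of your proposal --- transposing $\bangle{D_x}^{-2}$ onto $P_1$, $P_2$ and using $P_2^\dagger\bangle{D_x}^{-2}\in\Op S^0$, $P_1^\dagger\bangle{D_x}^{-2}\in\Op S^{-1}$, Young's inequality, and the cutoff to localize $\bangle{D_x}^{-2}$ --- is correct and matches the paper's mechanism.
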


\begin{proof}
  Let $R_{-1}\in \Op S^{-1}(\TrappingNbhd)$ be a compactly supported
  self-adjoint operator. We use $R_{-1}$ as a Lagrangian
  multiplier. Integrating by parts, we have that 
  \begin{align*}
    2\bangle*{\GInvdtdt^{-1}\ScalarWaveOp[g_b]h, R_{-1}^2h}_{L^2(\DomainOfIntegration)}
    ={}& \norm*{R_{-1}\p_{\tStar}h}_{L^2(\DomainOfIntegration)}^2
    + O\left(\norm*{R_{-1}\p_{\tStar}h}_{L^2(\DomainOfIntegration)}\norm*{h}_{L^2(\DomainOfIntegration)} + \norm*{h}_{L^2(\DomainOfIntegration)}^2\right)\\
    & + \evalAt*{\bangle*{n_{\TrappingNbhd}h, R_{-1}h}_{\InducedLTwo(\TrappingNbhd)}}_{\tStar=0}^{\tStar=\TStar}. 
  \end{align*}
  Thus, applying Cauchy-Schwarz,
  \begin{equation}
    \label{linear:eq:ILED-near:D-t-minus-1-control:aux1}
    \norm*{R_{-1}\p_{\tStar}h}_{L^2(\DomainOfIntegration)}^2
    + \evalAt*{\bangle*{n_{\TrappingNbhd}h, R_{-1}h}_{\InducedLTwo(\TrappingNbhd)}}_{\tStar=0}^{\tStar=\TStar}
    \lesssim \norm*{\LinEinstein_{g_b}h}_{L^2(\DomainOfIntegration)}^2
    + \norm*{h}_{L^2(\DomainOfIntegration)}^2.
  \end{equation}
  Now applying \eqref{linear:eq:ILED-near:D-t-minus-1-control:aux1} to
  $h=e^{-\ImagUnit\sigma\tStar}u$, and multiplying both sides by
  $e^{2\Im\sigma\tStar}$ to remove any $\tStar$-dependency, we have
  using Cauchy-Schwarz and the boundedness of $\abs*{\Im\sigma}\le
  \GronwallExp$, that 
  \begin{equation*}
    \norm*{R_{-1}\p_{\tStar}u}_{\InducedLTwo(\TrappingNbhd)}^2
    \lesssim \norm*{\widehat{\LinEinstein}_{g_b(\sigma)}h}_{\InducedLTwo(\TrappingNbhd)}^2
    + \norm*{u}_{L^2(\DomainOfIntegration)}^2. 
  \end{equation*}
  So in fact,
  \begin{equation*}
    \norm*{\p_{\tStar}u}_{\InducedHk{-1}(\TrappingNbhd)}^2
    \lesssim \norm*{\widehat{\LinEinstein}_{g_b(\sigma)}h}_{\InducedLTwo(\TrappingNbhd)}^2
    + \norm*{u}_{L^2(\TrappingNbhd)}^2,
  \end{equation*}
  as desired.
\end{proof}

We are now ready to show that it suffices to prove Theorem
\ref{linear:thm:ILED-near:main} with $\LinEinsteinConj_{g_b}$ in place of
$\LinEinstein_{g_b}$.
\begin{lemma}
  \label{linear:lemma:ILED-near:legal-conjugation}
  Let $\DomainOfIntegration$ and $\TrappingNbhd$ be as defined in
  \eqref{linear:eq:ILED-trapping:trapping-reg-def}, and $u$ be a sufficiently
  smooth function supported in $\TrappingNbhd$ such that
  $\hat{u}(\xi, \eta)$ is supported on the region
  $\frac{\abs*{\xi}^2}{|\eta|^2} \le \frac{1}{2}\delta_\zeta$.  Then
  if there exists $\SpectralGap>0$, $C_0>0$, such that 
  \begin{equation}
    \label{linear:eq:ILED-near:main-conjugated}
    \norm{u}_{\MorawetzNorm_b(\TrappingNbhd)} \lesssim
    \norm*{\widehat{\LinEinsteinConj}_{g_b}(\sigma)u}_{\InducedLTwo_\sigma(\TrappingNbhd)},\qquad
    \text{if }\Im\sigma> -\SpectralGap,\text{ and } |\sigma| \ge C_0;
    \text{ or } \Im\sigma = -\SpectralGap,
  \end{equation}
  Then Theorem \ref{linear:thm:ILED-near:main} holds. 
\end{lemma}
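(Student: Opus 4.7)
The plan is to transfer the resolvent estimate \eqref{linear:eq:ILED-near:main-conjugated} for the conjugated operator $\widehat{\LinEinsteinConj}_{g_b}(\sigma)$ to the unconjugated operator $\widehat{\LinEinstein}_{g_b}(\sigma)$ by exploiting the fact that $\PseudoSubPFixer$ is elliptic of order zero with parametrix $\PseudoSubPFixer^-$, and that $\PseudoSubPFixer$ is stationary so that Laplace transformation commutes with the conjugation. Concretely, from the definition \eqref{linear:eq:LinEinsteinConj-def} we obtain at the level of Laplace transforms
\begin{equation*}
  \widehat{\LinEinsteinConj}_{g_b}(\sigma)
  = \PseudoSubPFixer\,\widehat{\LinEinstein}_{g_b}(\sigma)\,\PseudoSubPFixer^-,
  \qquad
  \PseudoSubPFixer^- \PseudoSubPFixer = \Identity + R,
\end{equation*}
where $R$ is microlocally smoothing near $\TrappedSet_{b_0}$.

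First, I would apply the hypothesis \eqref{linear:eq:ILED-near:main-conjugated} to $v := \PseudoSubPFixer u$, which by construction retains the frequency-support condition of $u$. This yields
\begin{equation*}
  \norm{\PseudoSubPFixer u}_{\MorawetzNorm_b(\TrappingNbhd)}
  \lesssim \norm{\widehat{\LinEinsteinConj}_{g_b}(\sigma)\PseudoSubPFixer u}_{\InducedLTwo_\sigma(\TrappingNbhd)}
  \lesssim \norm{\PseudoSubPFixer\,\widehat{\LinEinstein}_{g_b}(\sigma)u}_{\InducedLTwo_\sigma(\TrappingNbhd)} + \norm{u}_{H^{-N}(\TrappingNbhd)}
\end{equation*}
for any large $N$, using $L^2$-boundedness of $\PseudoSubPFixer$ on the right-hand side and absorbing the smoothing error from the parametrix composition. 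To revert the left-hand side to a bound on $u$, I would write $u = \PseudoSubPFixer^-(\PseudoSubPFixer u) + Ru$ and show that $\PseudoSubPFixer^-$ is bounded on $\MorawetzNorm_b(\TrappingNbhd)$ up to lower-order commutator errors: commuting $\PseudoSubPFixer^-$ past the order-one factors $\ell_i(D,x)(D_t - \sigma_j(D,x))$ appearing in the definition of the Morawetz norm produces order-zero symbols (with a gain of one derivative from the Poisson bracket), and these are controlled by $L^2$ norms of $u$.

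The residual error terms from the above reduction consist of $L^2$ norms of $u$ and negative-order time derivatives. The $L^2$ errors are absorbed using the high-frequency hypothesis $|\sigma| \ge C_0$, since $\norm{\sigma u}_{L^2}$ is controlled by $\norm{u}_{\MorawetzNorm_b}$ (recalling $\MorawetzNorm_b$ controls the full $L^2$ norm times a spatial weight). Any remaining $\p_{\tStar}$-type error can be handled by invoking Lemma \ref{linear:lemma:ILED-near:D-t-minus-1-control}, which estimates $\norm{\p_{\tStar} u}_{\InducedHk{-1}(\TrappingNbhd)}$ directly by $\norm{\widehat{\LinEinstein}_{g_b}(\sigma) u}_{L^2}$ and a further $L^2$ term (also absorbable by high-frequency). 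The main technical obstacle will be keeping precise track of commutator errors between $\PseudoSubPFixer^{\pm}$ and the degenerate factors $\ell_i,(\sigma - \sigma_j)$ defining $\MorawetzNorm_b$, verifying that each such commutator genuinely drops a derivative (so that its $L^2$ contribution can be absorbed). The higher-order estimates $k > 1$ in Theorem \ref{linear:thm:ILED-near:main} then follow by commuting $\widehat{\LinEinstein}_{g_b}(\sigma)$ with the vectorfields $\{\RedShiftK_i\}$ from Lemma \ref{linear:lemma:enhanced-redshift:Ka-construction} and iterating the above reduction, mirroring the argument already used in Sections \ref{linear:sec:ILED:nontrapping} and \ref{linear:sec:ILED:nontrapping-freq} to upgrade $k = 1$ estimates to $k > 1$.
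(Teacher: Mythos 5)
Your high-level strategy---apply the hypothesis to $\PseudoSubPFixer u$, commute, and absorb errors from the parametrix---is the right kind of argument, but there are two genuine gaps and one missing step that the paper handles essentially differently.

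\textbf{First gap: the Morawetz norm does not control $\sigma u$ in $L^2$.} You propose to absorb $L^2$-type errors by invoking $|\sigma|\ge C_0$, claiming that $\norm{\sigma u}_{L^2}$ is controlled by $\norm{u}_{\MorawetzNorm_b}$. This is false: by Definition~\ref{linear:def:Morawetz-norm}, the Morawetz norm controls $\norm{u}_{L^2}$, $\norm{\p_r u}_{L^2}$, and the degenerate combinations $\norm{(D_t-\sigma_j)u}_{L^2_{\ell_i}}$, but the weights $\ell_i$ vanish precisely on the trapped set, so $D_tu=\sigma u$ in $L^2$ is \emph{not} controlled there. This degeneracy is the whole point of the norm. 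In the paper the lower-order $L^2$ errors are instead absorbed via Lemma~\ref{linear:lemma:ILED-near:LoT-control:lowest-order-control} by taking $\delta_r$ small (trading the $L^2$ error for a weighted $\p_r$ term), not via high frequency.

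\textbf{Second gap: you assert without justification that $\PseudoSubPFixer u$ retains the frequency-support condition.} $\PseudoSubPFixer$ is a zero-order $\Psi$DO, not a Fourier multiplier, and it does not in general preserve a hard cutoff of the form $\abs*{\xi}^2/\abs*{\eta}^2\le\tfrac12\delta_\zeta$. Moreover, the remainder $R=\Identity-\PseudoSubPFixer^-\PseudoSubPFixer$ is smoothing only in the spatial frequencies; if $u$ is concentrated at fixed spatial frequency while $|\sigma|$ is taken large, smoothing gives you nothing. This is exactly where the paper's missing preliminary step enters: before conjugating, the paper first separates off the frequency regime where $|\sigma|>C_{ell}(|\xi|+|\eta|)$, where $\PrinSymb_b$ is elliptic and a direct Lagrangian-multiplier argument closes the estimate without any trapping analysis. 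On the complementary set $|\xi,\eta|\gtrsim C_{ell}^{-1}|\sigma|$, the high-frequency hypothesis then forces the \emph{spatial} frequencies to be large, which is what actually makes $\PseudoSubPFixer^-$ well behaved on $\InducedMorawetzNorm$ and makes the parametrix error controllable. The paper also treats the commutator $[\widehat{\LinEinstein}_{g_b}(\sigma),\mathfrak{r}]\in\sigma S^{-1}(\Sigma)$ explicitly via Lemma~\ref{linear:lemma:ILED-near:D-t-minus-1-control}, which you do gesture at correctly. Without the frequency reduction, however, the core parametrix step in your argument cannot close.
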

\begin{proof}
  Observe that there exists some $C_{ell}>0$ such that if
  $\abs*{\sigma} > C_{ell}(\abs*{\xi} + \abs*{\FreqAngular})$, then
  $\PrinSymb_b > 0 $ is elliptic.  Then we first show that it is
  sufficient to prove Theorem \ref{linear:thm:ILED-near:main} for
  $\hat{u}(\xi, \FreqAngular)$ supported on\footnote{In fact, it is
    sufficient to prove Theorem \ref{linear:thm:ILED-near:main} for
    $\hat{u}(\xi, \FreqAngular)$ supported on
    $C_{ell}\abs*{\sigma}>\abs*{\xi, \FreqAngular} >
    \frac{1}{C_{ell}}\abs*{\sigma}$, but just proving the lower bound
  is sufficient here.}
  $\abs*{\xi, \FreqAngular} > \frac{1}{C_{ell}}\abs*{\sigma}$.
  
  To this end, let $\hat{u}(\xi, \FreqAngular)$ be supported on
  $\abs*{\sigma} > C_{ell}(\xi + \abs*{\FreqAngular})$. Then
  $\PrinSymb_b$ is elliptic, and we have directly via
  \eqref{linear:eq:div-them:J-K-currents} that
  \begin{equation*}
    -\bangle*{\LinEinstein_{g_b}h, h}_{L^2(\TrappingNbhd)}
    = \int_{\TrappingNbhd}\KCurrent{0, 1, 0}[h]\,\sqrt{\GInvdtdt}
    - \bangle*{\SubPOp_bh, h}_{L^2(\TrappingNbhd)}
    - \bangle*{\PotentialOp_bh, h}_{L^2(\TrappingNbhd)}
    + \p_{\tStar}\int_{\TrappingNbhd} \JCurrent{0, 1, 0}[h] ,
  \end{equation*}
  where 
  \begin{equation*}
    \KCurrent{0,1,0}[h] \gtrsim \abs*{\nabla h}^2 - C|h|^2, \qquad
    \abs*{\JCurrent{0,1,0}[h]} < \epsilon \abs*{\nabla h}^2 + C(\epsilon)\abs*{h}^2. 
  \end{equation*}
  Directly from Cauchy-Schwarz, we have then that
  \begin{equation*}
    \int_{\TrappingNbhd}\KCurrent{0, 1, 0}[h]\,\sqrt{\GInvdtdt}
    - \bangle*{\SubPOp_bh, h}_{L^2(\TrappingNbhd)}
    - \bangle*{\PotentialOp_bh, h}_{L^2(\TrappingNbhd)}
    \gtrsim \norm*{h}_{\HkWithT{1}(\TrappingNbhd)}^2 - \norm*{h}_{\InducedLTwo(\TrappingNbhd)}^2. 
  \end{equation*}
  Then, considering $h=e^{-\ImagUnit\sigma\tStar}u(x)$, and multiplying
  both sides by $e^{2\Im\sigma\tStar}$ and using the fact that
  $\abs*{\Im\sigma}\le \GronwallExp$ as usual, we have that
  \begin{equation*}
    \norm*{u}_{\InducedHk{1}(\TrappingNbhd)}^2 + \norm*{\sigma u}_{\InducedLTwo(\TrappingNbhd)}^2
    \lesssim  \norm*{\widehat{\LinEinstein}_{g_b}(\sigma)u}_{\InducedLTwo(\TrappingNbhd)}^2 + \norm*{u}_{\InducedLTwo(\TrappingNbhd)}^2.  
  \end{equation*}
  The desired resolvent estimate in \eqref{linear:eq:ILED-near:main} follows
  directly by taking $\abs*{\sigma}$ sufficiently large and absorbing
  the lower-order term on the right-hand side into the left-hand side.

  We now show that return to proving the lemma itself. Since it
  suffices to consider $\hat{u}(\xi, \FreqAngular)$ supported on
  $\abs*{\xi, \FreqAngular} > \frac{1}{C_{ell}}\abs*{\sigma}$, and we
  are can choose $\abs*{\sigma}$ as large as we like, we can in
  particular choose $\abs*{\sigma}$ large enough so that $\hat{u}(\xi,
  \FreqAngular)$ is supported on the region where
  \begin{equation*}
    \norm*{\PseudoSubPFixer u}_{\InducedHk{-1}(\TrappingNbhd)} \lesssim \norm*{u}_{\InducedLTwo(\TrappingNbhd)}, \qquad  \norm*{\PseudoSubPFixer^- u}_{\InducedHk{-1}(\TrappingNbhd)} \lesssim \norm*{u}_{\InducedLTwo(\TrappingNbhd)}.
  \end{equation*}
  Since $\PseudoSubPFixer$ is elliptic, we now have that 
  \begin{equation*}
    \norm*{u}_{L^2(\TrappingNbhd)}\lesssim \norm*{\PseudoSubPFixer u}_{L^2(\TrappingNbhd)} \lesssim \norm*{u}_{L^2(\TrappingNbhd)},
  \end{equation*}
  Moreover, we also have using Lemma
  \ref{linear:lemma:ILED-near:LoT-control:lowest-order-control} that for
  $\delta_r$ sufficiently small,
  \begin{equation*}
    \norm*{u}_{\InducedMorawetzNorm(\TrappingNbhd)}\lesssim \norm*{\PseudoSubPFixer u}_{\InducedMorawetzNorm(\TrappingNbhd)}.
  \end{equation*}  
  As a result, we have that 
  \begin{align*}
    \norm*{u}_{\InducedMorawetzNorm(\TrappingNbhd)}
    &\lesssim
      \norm*{\PseudoSubPFixer u}_{\InducedMorawetzNorm(\TrappingNbhd)} \notag \\
    &\lesssim \norm*{\widehat{\LinEinsteinConj}_{g_b}(\sigma)\left(\PseudoSubPFixer u\right)}_{\InducedLTwo(\TrappingNbhd)}\notag \\
    &\lesssim \norm*{\PseudoSubPFixer \widehat{\LinEinstein}(\sigma) u}
      + \norm*{\widehat{\LinEinstein}_{g_b}(\sigma) \mathfrak{r}u}_{L^2(\DomainOfIntegration)}\notag \\
    &\lesssim \norm*{\widehat{\LinEinstein}(\sigma)u}_{\InducedLTwo(\TrappingNbhd)}
      + \norm*{\left[\widehat{\LinEinstein}_{g_b}(\sigma), \mathfrak{r}\right]u}_{\InducedLTwo(\TrappingNbhd)},
  \end{align*}
  where $\mathfrak{r} = 1 - \PseudoSubPFixer\PseudoSubPFixer^- \in \Op
  S^{-\infty}$.
  Then, we observe that $\left[\widehat{\LinEinstein}_{g_b}(\sigma),
    \mathfrak{r}\right] \in \sigma S^{-1}(\Sigma)$.
  As a result, we can directly use Lemma
  \ref{linear:lemma:ILED-near:D-t-minus-1-control} to write that
  \begin{equation*}
    \norm*{u}_{\InducedMorawetzNorm(\TrappingNbhd)}
    \lesssim \norm*{\widehat{\LinEinstein}(\sigma)u}_{\InducedLTwo(\TrappingNbhd)}
    + \norm*{u}_{\InducedLTwo(\TrappingNbhd)}. 
  \end{equation*}
  But then using Lemma
  \ref{linear:lemma:ILED-near:LoT-control:lowest-order-control}, we can
  choose $\delta_r$ sufficiently small so that in fact,
  \begin{equation*}
    \norm*{u}_{\InducedMorawetzNorm(\TrappingNbhd)}
    \lesssim \norm*{\widehat{\LinEinstein}(\sigma)u}_{\InducedLTwo(\TrappingNbhd)}, 
  \end{equation*}
  as desired.
\end{proof}

\paragraph{Proving Theorem \ref{linear:thm:ILED-near:main}}

We are now ready to prove Theorem \ref{linear:thm:ILED-near:main}. We first
show how to combine the above results to conclude the $k=1$ case of
Theorem \ref{linear:thm:ILED-near:main}.

\begin{proof}[Proof of Theorem \ref{linear:thm:ILED-near:main} for $k=1$.]
  Using Lemma \ref{linear:lemma:ILED-near:sum-of-squares} and Corollary
  \ref{linear:coro:ILED-near:SX-control} we can write
  \begin{equation*}
    \begin{split}
      \int_{\DomainOfIntegration}\KCurrent{\MorawetzVF_{b_0},\LagrangeCorr_{b_0},0}[{h}]
      + a\Re\KCurrentIbP{\widetilde{\MorawetzVF}, \tilde{\LagrangeCorr}}_{(2)}[{h}]
      - \Re \bangle*{\SubPConjOp_b[{h}], \MorawetzVF_{b_0}{h}}_{L^2(\DomainOfIntegration)}
      \ge{}&\frac{1}{2}\int_{\DomainOfIntegration}\sum_{j=1}^7\abs*{\SquareDecompOp_j{h}}^2
      + \int_{\DomainOfIntegration}\nabla^\alpha\p_\alpha\LagrangeCorr_{b_0}\abs*{{h}}^2.
    \end{split}
  \end{equation*}
  Choosing $a$, and $\delta$, sufficiently small we can use
  Lemma \ref{linear:lemma:ILED-near:LoT-control} to control the lower order
  bulk terms as well and see that
  \begin{equation*}
    \begin{split}
      \int_{\DomainOfIntegration}\sum_{j=1}^7\abs*{\SquareDecompOp_j{h}}^2
      + \norm{{h}}_{L^2(\DomainOfIntegration)}^2 
      \lesssim{}&\int_{\DomainOfIntegration}\KCurrent{\MorawetzVF_{b_0},\LagrangeCorr_{b_0},0}[{h}]
      - \Re \bangle*{\SubPConjOp_b[{h}], (\MorawetzVF_{b_0}+\LagrangeCorr_{b_0}){h}}_{L^2(\DomainOfIntegration)}\\
      &- \Re\bangle*{\PotentialConjOp_b{h},(\MorawetzVF_{b_0}+\LagrangeCorr_{b_0}){h} }_{L^2(\DomainOfIntegration)}
       + a\Re\KCurrentIbP{\widetilde{\MorawetzVF}, \tilde{\LagrangeCorr}}[{h}]
      + \left.\tilde{J}(\tStar)[{h}]\right\vert_{\tStar=0}^{\tStar = \TStar}.
    \end{split}
  \end{equation*}
  Defining
  \begin{equation*}
    \mathring{\MorawetzEnergy}(\tStar)[{h}] = \int_{\Sigma_{\tStar}}\JCurrent{\MorawetzVF_{b_0}, \LagrangeCorr_{b_0},0}[{h}]\cdot n_{\TrappingNbhd}
    + a \Re \JCurrentIbP{\widetilde{\MorawetzVF}, \tilde{\LagrangeCorr}}(\tStar)[{h}]
    + \tilde{J}(\tStar)[{h}],
  \end{equation*}
  it is clear that
  \begin{equation}
    \label{linear:eq:ILED-near:physical-ineq}
    \left.\mathring{\MorawetzEnergy}(\tStar)[{h}]\right\vert_{\tStar = 0}^{\tStar = \TStar} +
    \norm{{h}}_{\MorawetzNorm_b(\DomainOfIntegration)}^2
    \lesssim \norm*{\LinEinstein_{g_b}{h}}_{L^2(\DomainOfIntegration)}^2.
  \end{equation}
  We now prove the resolvent estimate in \eqref{linear:eq:ILED-near:main} for
  $k=1$. As before, we consider $h= e^{-\ImagUnit\sigma\tStar}u(x)$,
  where $u$ satisfies the assumptions made in the statement of the
  theorem. 
  
  Differentiating both sides of \eqref{linear:eq:ILED-near:physical-ineq} by
  $\p_{\tStar}$ and multiplying by $e^{2\Im\sigma\tStar}$ to remove
  any $\tStar$-dependency, we see that
  \begin{equation*}
    \begin{split}
      \norm{u}_{\MorawetzNorm_b(\TrappingNbhd)}^2
      \lesssim{}& \norm*{\widehat{\LinEinstein}_{g_b}(\sigma)u}_{\LTwo(\TrappingNbhd)}^2
      + e^{2\Im\sigma\tStar}\abs*{\p_{\tStar}\left(\int_{\Sigma_{\tStar}}\JCurrent{\MorawetzVF_{b_0}, \LagrangeCorr_{b_0},0}[{h}]\cdot n_{\TrappingNbhd}
          + a \Re\JCurrentIbP{\widetilde{\MorawetzVF}, \tilde{\LagrangeCorr}}(\tStar)[{h}]
          + \tilde{J}(\tStar)[{h}]
        \right)}.
    \end{split}      
  \end{equation*}
  We now show that for $\SpectralGap$ sufficiently small, and $|\sigma|$
  sufficiently large, the boundary terms on the right-hand side can be
  absorbed into the bulk norm on the left-hand side.  Since
  ${h}=e^{-\ImagUnit\sigma\tStar}u$, we can now apply Lemma
  \ref{linear:lemma:ILED-near:boundary-terms}, and the bound on
  $\tilde{J}(\tStar)[{h}]$ from Lemma
  \ref{linear:lemma:ILED-near:LoT-control} to choose
  $\varepsilon_{\TrappedSet_{b_0}}$, $a$, $\SpectralGap$ sufficiently small
  so that for $\abs*{\Im\sigma} \le \SpectralGap$,
  \begin{equation*}
    \begin{split}
      e^{2\Im\sigma\tStar}\abs*{\p_{\tStar}\left(\int_{\Sigma_{\tStar}}\JCurrent{\MorawetzVF_{b_0}, \LagrangeCorr_{b_0},0}[{h}]\cdot n_{\TrappingNbhd}
          + a \Re\JCurrentIbP{\widetilde{\MorawetzVF}, \tilde{\LagrangeCorr}}(\tStar)[{h}]
          + \tilde{J}(\tStar)[{h}]
        \right)}
      \lesssim \delta \norm{u}_{\InducedMorawetzNorm_b(\TrappingNbhd)}^2.
    \end{split}
  \end{equation*}
  
  Choosing $\delta$ sufficiently small, we conclude via Lemma
  \ref{linear:lemma:ILED-near:sum-of-squares} and Proposition
  \ref{linear:prop:div-thm:PDO-modification} that for $\abs*{\sigma}$
  sufficiently large,
  \begin{equation*}
    \norm{u}_{\InducedMorawetzNorm_b(\TrappingNbhd)}
    \lesssim \norm*{\widehat{\LinEinstein}_{g_b}(\sigma)u}_{\InducedLTwo(\TrappingNbhd)},
  \end{equation*}
  as desired.
\end{proof}

We now show how to commute derivatives through the
relevant estimates to obtain higher order resolvent
estimates. The main idea here is that $\KillT$ is both Killing and
time-like on $\TrappingNbhd$. As a result, we can use a combination of
commuting with $\KillT$ and elliptic estimates to prove via induction
higher-order estimates.

\begin{proof}[Proof of Theorem \ref{linear:thm:ILED-near:main} for $k>1$.]
  We begin with the observation that since $\KillT$ is Killing, it
  commutes with $\LinEinstein_{g_b}$. As a result,
  \begin{equation*}
    \norm{\sigma u}_{\InducedMorawetzNorm_b(\TrappingNbhd)} \lesssim \norm*{\sigma\widehat{\LinEinstein}_{g_b}(\sigma)u}_{\InducedLTwo(\TrappingNbhd)}.
  \end{equation*}
  To control the rest of the derivatives, we recall that we can write 
  \begin{equation*}
    \LinEinstein_{g_b} = \GInvdtdt^{-1} D_{\tStar}^2 + P_1D_{\tStar} + P_2,  
  \end{equation*}
  where $P_i\in OPS^i(\TrappingNbhd)$. We rewrite the main equation as
  \begin{equation*}
    P_2{h} = \LinEinstein_{g_b}{h} - \GInvdtdt^{-1} D_{\tStar}^2 {h} - P_1D_{\tStar}{h}.
  \end{equation*}
  We conclude by recalling that $P_2$ is
  elliptic on $\TrappingNbhd$ and using standard elliptic estimates.  
\end{proof}
\subsection{Proof of Theorem \ref{linear:thm:resolvent-estimate:main}} \label{linear:ILED:full}

At this point, we have proven resolvent estimates for
${h}$ supported on the redshift region, the non-trapping region, and
the trapping region. In this section, the goal will be to glue these
estimates together. To this end, recall the relation between the
constants
\begin{equation*}
  r_{\EventHorizonFuture} < \breve{r}_{-}
  < r_{\RedShift,\EventHorizonFuture} <r_{0} < \mathring{r}_{-}
  < \breve{r}_{+}<3M <\breve{R}_{-}<\mathring{R}_{ +}
   < R_{0} < R_{\RedShift,\EventHorizonFuture} < \breve{R}_{+}
  < r_{\CosmologicalHorizonFuture}.
\end{equation*}
In practice, it will be useful to consider
\begin{gather*}
  \breve{r}_{-} = r_{\EventHorizonFuture}+\delta_{\Horizon}, \quad 
  r_{\bullet, \EventHorizonFuture}  = r_{\EventHorizonFuture} + 2\delta_{\Horizon} ,\quad
  \mathring{r}_{ -} = 3M- 4\delta_r,  \quad
  \breve{r}_{ +} = 3M - 2\delta_r, 
  \\
  \breve{R}_{-} = 3M + 2\delta_r,\quad
  \mathring{R}_{ +} = 3M + 4\delta_r,\quad    
  R_{\bullet, \CosmologicalHorizonFuture}  = r_{\CosmologicalHorizonFuture} - 2\delta_{\Horizon} ,\quad
  \breve{R}_+ = r_{\CosmologicalHorizonFuture} - \delta_{\Horizon}.
\end{gather*}
From Theorems \ref{linear:thm:ILED-redshift:main}, 
\ref{linear:thm:ILED-nontrapping:resolvent-estimate:main}, 
\ref{linear:thm:ILED-nontrapping-freq:main}, and 
\ref{linear:thm:ILED-near:main}, we have that for $k>\ThresholdReg$
\begin{equation*}
  \begin{split}
    \norm{u}_{\CombinedHk{k}(\Sigma)} \lesssim
    &\norm*{\widehat{\LinEinstein}_{g_b}(\sigma)(\dot{\chi}u)}_{\InducedHk{k-1}_\sigma(\Sigma)} +
    \norm*{\widehat{\LinEinstein}_{g_b}(\sigma)(\breve{\chi}u)}_{\InducedHk{k-1}_\sigma(\Sigma)} +
    \norm*{\widehat{\LinEinstein}_{g_b}(\sigma)(\mathring{\chi} u)}_{\InducedHk{k-1}_\sigma(\Sigma)}
    +\norm{u}_{\CombinedHk{k-1}(\Sigma)},
  \end{split}
\end{equation*}
where we recall the definition of the $LE^k$ norm defined in
Definition \ref{linear:def:LE-norm}.  We see here that this is not enough to
conclude, since we have to commute the cut-off functions with
$\widehat{\LinEinstein}_{g_b}(\sigma)$, producing commutation error
terms in $\InducedHk{k}_\sigma(\Sigma)$ that can not necessarily be
controlled by the left-hand side. To achieve the desired estimate on
the entire exterior region then, we will rely on a careful patching
that will rely on making a good choice of the relevant
constants. Recall that in the slowly-rotating regime, the redshift
regions and the trapping regions are physically disjoint from one
another. We then divide our analysis into two components: one
analyzing the intersection of the trapping and non-trapping regions,
and one analyzing the intersection of the non-trapping and redshift
regions.

To glue the estimates together, consider that
\begin{equation}
  \label{linear:eq:ILED-full:commutator:aux1}
  \bangle*{\chi\LinEinstein_{g_b} {h},(\MorawetzVF+\LagrangeCorr)(\chi{h})}_{L^2(\DomainOfIntegration)}
  = \bangle*{\squareBrace*{\chi, \LinEinstein_{g_b}}{h}, (\MorawetzVF+\LagrangeCorr)(\chi{h})}_{L^2(\DomainOfIntegration)}
  + \bangle*{\LinEinstein_{g_b}(\chi{h}),(\MorawetzVF+\LagrangeCorr)(\chi{h})}_{L^2(\DomainOfIntegration)}.
\end{equation}
Letting $\chi$ be one of the cutoff functions defined at the beginning
of the section, and $\MorawetzVF, \LagrangeCorr$ be a pair of the
multipliers constructed in the earlier sections, we see that the
second term on the right-hand side of equation
\eqref{linear:eq:ILED-full:commutator:aux1} can already be controlled by one
of Theorems \ref{linear:thm:ILED-redshift:main}, 
\ref{linear:thm:ILED-nontrapping:resolvent-estimate:main},
\ref{linear:thm:ILED-nontrapping-freq:main}, \ref{linear:thm:ILED-near:main}. It
remains to choose the $\chi$ in such a way such that the commutator
term is also controlled and does not disrupt the ellipticity generated
by the previously proven Morawetz estimates.

\begin{remark}
  \label{linear:rmk:ILED-combine:important-terms}
  Recalling from Lemma
  \ref{linear:lemma:EVE:GHC-quasilinear} that $\LinEinstein_{g_b}$ is
  strongly hyperbolic, we observe that
  \begin{equation*}
    [\chi, \LinEinstein_{g_b}]
    = [\chi, \ScalarWaveOp[g_b]] + [\chi, \SubPOp_b] + [\chi, \PotentialOp_b].
  \end{equation*}
  For any smooth, compactly supported $\chi=\chi(r)$, such that the
  support of $\p_r\chi$ is supported away from the trapped set,
  $[\chi, \SubPOp_b] + [\chi, \PotentialOp_b]$ is bounded as an
  operator on
  $\LTwo(\DomainOfIntegration)\to\LTwo(\DomainOfIntegration)$, they
  can be controlled by a high-frequency argument in the combined
  Morawetz estimate. Thus, it will suffice in what follows to consider
  the error term produced by $[\chi, \ScalarWaveOp[g_b]]$. The same
  result is true when considering the commutation
  $[\chi, \LinEinsteinConj_{g_b}]$.
\end{remark}

We will first show that we can combine the Morawetz estimates in the
non-trapping region and the trapping region.


\begin{prop}
  \label{linear:prop:ILED-combine:not-redshift}
  Let $g_b$ be a fixed slowly-rotating \KdS{} background, and let  
  \begin{equation*}
    \DomainOfIntegration := [0, \TStar]_{\tStar}\times
    \Sigma^{\NotRedshift},\qquad
    \Sigma^{\NotRedshift} := \Sigma\left(\breve{r}_-,
    \breve{R}_+\right).
  \end{equation*}
  Then for $u$ compactly supported on $\Sigma$ and $k\ge 1$, there
  exist constants $\SpectralGap, C_0>0$ such that we have the
  following resolvent estimate for
  $ \GronwallExp \ge \Im\sigma\ge -\SpectralGap, \abs*{\sigma} > C_0$,
  \begin{equation}
    \label{linear:eq:ILED-combine:not-redshift}
    \norm{u}_{\CombinedHk{k}(\Sigma^{\NotRedshift})}
    \lesssim \norm{\widehat{\LinEinstein}_{g_b}(\sigma)u}_{\InducedHk{k-1}_\sigma(\Sigma^{\NotRedshift})}. 
  \end{equation}
\end{prop}

As mentioned previously, we will prove this proposition by combining
the vectorfields and Lagrangian correctors used to prove Theorems
\ref{linear:thm:ILED-nontrapping:resolvent-estimate:main},
\ref{linear:thm:ILED-nontrapping-freq:main}, and \ref{linear:thm:ILED-near:main}.
Recall that in handling the Morawetz estimates in a neighborhood of
$r=3M$, we needed to localize to a neighborhood of $\TrappedSet_{b}$
in both physical space and frequency space. Now that we are
considering ${h}$ without compact support in a neighborhood of
$\TrappedSet_{b}$, we will need to add both physical cutoffs and
frequency cutoffs and show that we can combine the resolvent estimates
in the trapping and non-trapping regimes.

Recalling that on
$\Real^+\times (\mathring{r}_{ -},\mathring{R}_{ +}) \times
\Sphere^2$, the $(\tStar, r, \theta,\phiStar)$ coordinates agree with
the Boyer-Lindquist coordinates, we work in the
$(t,r,\theta,\varphi;\sigma,\xi,\FreqTheta, \FreqPhi)$ Boyer-Lindquist
coordinates in what follows.

Observe that the cutoff functions
$\chi\in \Op S^0(\StaticRegionWithExtension)$ will not effect
our integration-by-parts arguments. As such, we can repeat the proofs
of Theorems \ref{linear:thm:ILED-nontrapping:resolvent-estimate:main},
\ref{linear:thm:ILED-nontrapping-freq:main}, \ref{linear:thm:ILED-near:main} with
$\breve{\chi}{h}$, $\mathring{\chi}\breve{\chi}_{\zeta}{h}$, and
$\mathring{\chi}\mathring{\chi}_{\zeta}{h}$ in place of ${h}$
respectively.
The resulting commutation error is handled in the
following lemma.
\begin{lemma}
  \label{linear:lemma:ILED-full:Trapping-NonTrapping}
  There exists a choice of constants
  $\breve{c}, \check{c}, \mathring{c}$, such that for vectorfields
  $\MorawetzOuterVF, \MorawetzInnerNTVF, \MorawetzVF_b$, and
  Lagrangian correctors
  $\LagrangeCorrOuter, \LagrangeCorrInnerNT, \LagrangeCorr_{b}$ as
  defined above such that
  \begin{equation}
    \label{linear:eq:ILED-full:Trapping-NonTrapping:main}
    \KCurrentIbPSym{\MorawetzSym^{\NotRedshift},\LagrangeCorrSym^{\NotRedshift}} 
    \ge \frac{1}{2}\sum_{j=1}^7 \SquareDecomp_j^2
  \end{equation}
  on $r\in (\breve{r}_{-}, \breve{R}_{+})$, where
  \begin{equation*}
    \MorawetzSym^{\NotRedshift}
    = \breve{\chi}^2\MorawetzOuterSym
    + \mathring{\chi}^2\mathring{\chi}_{\zeta}^2\MorawetzInnerNTSym
    + \mathring{\chi}^2\mathring{\chi}_{\zeta}^2 \MorawetzSym_b,\qquad \MorawetzOuterSym = \frac{\fOuter}{\Delta_b}H_{\rho_b^2\PrinSymb_b}r, \qquad \MorawetzInnerNTSym = \frac{\fInnerNT}{\Delta_b}H_{\rho_b^2\PrinSymb_b}r,
  \end{equation*}
  and
  \begin{equation*}
    \LagrangeCorrSym^{\NotRedshift}
    = \breve{\chi}^2\LagrangeCorrOuter
    + \mathring{\chi}^2\mathring{\chi}_{\zeta}^2\LagrangeCorrInnerNT
    + \mathring{\chi}^2\mathring{\chi}_{\zeta}^2 \LagrangeCorrSym_b.
  \end{equation*}
\end{lemma}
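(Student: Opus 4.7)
The plan is to construct the positive bulk region by region using linearity of $H_{\PrinSymb_b}$ over products, then absorb the unavoidable commutator errors by exploiting the exponential gain constants in the two non-trapping multipliers. Expanding $\KCurrentIbPSym{\MorawetzSym^{\NotRedshift},\LagrangeCorrSym^{\NotRedshift}}$ by linearity,
\[
\KCurrentIbPSym{\MorawetzSym^{\NotRedshift},\LagrangeCorrSym^{\NotRedshift}}
= \sum_{\bullet\in\{\breve{\chi},\mathring{\chi}\breve{\chi}_\zeta,\mathring{\chi}\mathring{\chi}_\zeta\}} \psi_\bullet^2\,\KCurrentIbPSym{\MorawetzSym_\bullet,\LagrangeCorrSym_\bullet}
+ \mathcal{E},
\]
where $\mathcal{E}$ collects the $\frac{1}{2\ImagUnit}(H_{\PrinSymb_b}\psi_\bullet^2)\MorawetzSym_\bullet$ commutator contributions supported exactly on the cutoff transitions. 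Each of the three diagonal summands is handled by the corresponding single-region lemma: the $\breve{\chi}$ piece by Lemma \ref{linear:lemma:ILED-nontrapping:bulk-positivity}, the $\mathring{\chi}\breve{\chi}_\zeta$ piece by Lemma \ref{linear:lemma:ILED-nontrapping-freq:bulk-positivity}, and the $\mathring{\chi}\mathring{\chi}_\zeta$ piece by Corollary \ref{linear:coro:ILED-near:SX-control} together with Lemma \ref{linear:lemma:ILED-near:sum-of-squares}. On the common interior where all cutoffs are $1$ these all reduce to the seven-square decomposition (or are elliptic, hence dominate it).

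The remaining work is to absorb $\mathcal{E}$ into the diagonal positive contributions. For the \emph{physical} transition $\breve{\chi}\leftrightarrow \mathring{\chi}$, the derivative $H_{\PrinSymb_b}\breve{\chi}$ is supported in an annular shell bounded away from $r=3M$, and there the principal bulk $\frac{1}{2\ImagUnit}H_{\PrinSymb_b}\MorawetzOuterSym$ carries the factor $e^{\COuter(r-3M)^2}$ with $(r-3M)^2 \gtrsim \delta_r^2$. Choosing $\COuter$ sufficiently large relative to $\sup|H_{\PrinSymb_b}\breve{\chi}|\cdot|\MorawetzOuterSym| + |H_{\PrinSymb_b}\mathring{\chi}|\cdot|\MorawetzInnerNTSym + \MorawetzSym_b|$ makes the diagonal outer bulk strictly dominate the transition errors — this is exactly the role of the constant $\breve{c}$, which rescales $\MorawetzOuterVF$ relative to the two $r=3M$ multipliers. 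An analogous argument handles the physical derivative of $\mathring{\chi}$ since $\MorawetzInnerNTSym$ also carries $e^{\CInnerNT(r-3M)^2}$ away from $r=3M$.

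For the \emph{frequency} transition $\breve{\chi}_\zeta \leftrightarrow \mathring{\chi}_\zeta$, which is localized near $r=3M$, I use that on $\supp(H_{\PrinSymb_b}\breve{\chi}_\zeta^2)$ we have $\abs*{\xi}^2/|\eta|^2 \sim \delta_\zeta$, so the symbol $C_\xi(H_{\PrinSymb_b}r)^2 - \PrinSymb_b$ from Lemma \ref{linear:lemma:freq-cutoff-construction} is elliptic. This ellipticity is inherited by the diagonal $\mathring{\chi}\breve{\chi}_\zeta$ contribution and, after rescaling by $\check{c}$, can be used to absorb the errors from the $\mathring{\chi}\mathring{\chi}_\zeta$ piece (whose diagonal only controls the seven-square quantity, which degenerates at $\TrappedSet_b$ but remains a valid lower bound). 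The relative scaling $\mathring{c}$ ensures the trapping-side bulk is not swamped by these transition corrections.

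The main obstacle will be the third item: verifying that on the overlap $\supp(\mathring{\chi}\breve{\chi}_\zeta)\cap\supp(\mathring{\chi}\mathring{\chi}_\zeta)$ the two different principal bulks are compatible in the sense that their sum still dominates $\sum_{j=1}^7\SquareDecomp_j^2$. The subtlety is that $\MorawetzSym_b$ was engineered in Lemma \ref{linear:lemma:ILED-near:sum-of-squares} so that $H_{\PrinSymb_b}\MorawetzSym_b$ vanishes to second order on $\TrappedSet_b$, whereas $\MorawetzInnerNTSym$ has no such degeneracy. Consequently cross terms of the form $(H_{\PrinSymb_b}\breve{\chi}_\zeta^2)(\MorawetzSym_b - \MorawetzInnerNTSym)$ could in principle have indefinite sign near the trapped set. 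The resolution is to note that on $\supp(H_{\PrinSymb_b}\breve{\chi}_\zeta^2)$ we are frequency-bounded away from $\TrappedSet_b$, and there both $\MorawetzSym_b$ and $\MorawetzInnerNTSym$ are ellipitically equivalent to $\xi$, so the cross-term is absorbed by the elliptic diagonal part after picking $\check{c},\mathring{c}$ of comparable sizes but both large relative to $\breve{c}^{-1}$. Theorem \ref{linear:thm:sym-ineq-to-op-est} then allows passage from this symbol-level inequality to the operator-level conclusion \eqref{linear:eq:ILED-full:Trapping-NonTrapping:main}.
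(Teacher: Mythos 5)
There is a genuine gap in your treatment of the physical transition error coming from $H_{\PrinSymb_b}(\breve{\chi}^2)$. You propose to absorb $\MorawetzOuterSym\, H_{\PrinSymb_b}(\breve{\chi}^2)$ into the outer diagonal bulk by taking $\COuter$ large and exploiting the exponential factor $e^{\COuter(r-3M)^2}$. This cannot work, for two reasons that compound. First, the outer diagonal bulk in $\KCurrentIbPSym{\MorawetzSym^{\NotRedshift},\LagrangeCorrSym^{\NotRedshift}}$ carries the weight $\breve{\chi}^2$, and $\supp\p_r\breve{\chi}$ is precisely where $\breve{\chi}$ degenerates; the error term itself carries only one power of $\breve{\chi}$ (from $H_{\PrinSymb_b}(\breve{\chi}^2)=2\breve{\chi}\,\p_r\breve{\chi}\cdot H_{\PrinSymb_b}r$), so the ratio error-to-bulk blows up as $\breve{\chi}\to 0$. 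Second, $\MorawetzOuterSym$ itself contains $e^{\COuter(r-3M)^2}$, so the error grows exponentially in $\COuter$ on $\supp\p_r\breve{\chi}$ at the same rate as the bulk; enlarging $\COuter$ gains you nothing. Nor can you fall back on the trapping/inner bulk: that bulk does not carry an $e^{\COuter(r-3M)^2}$ factor, so for large $\COuter$ it is dwarfed by the error. Finally, rescaling with $\breve{c}$ is a red herring (the paper in fact normalizes $\breve{c}=1$): it multiplies the error and the outer bulk by the same constant.

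The resolution the paper uses is not an absorption argument but a sign observation which you do not mention. Writing
\[
\MorawetzOuterSym\, H_{\PrinSymb_b}(\breve{\chi}^2)
= e^{\COuter(r-3M)^2}(r-3M)\,\p_r(\breve{\chi}^2)\,(H_{\PrinSymb_b}r)^2,
\]
one observes that $(r-3M)\,\p_r\breve{\chi}\ge 0$ by construction of the cutoff ($\breve{\chi}$ increases away from $r=3M$ on both sides), so the entire expression is pointwise nonnegative and requires no absorption at all. This structural fact, which comes precisely from the factor $(r-3M)$ in $\fOuter$, is what makes the physical gluing across the $\breve{\chi}$ boundary work, and it is used again in the second step of the paper's proof (gluing $\MorawetzOuterSym$ against $\MorawetzInnerNTSym$). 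A very similar sign/smallness structure underlies the frequency gluing: the paper does not invoke the raw ellipticity of $C_\xi(H_{\PrinSymb_b}r)^2-\PrinSymb_b$ as you suggest, but instead shows by an explicit computation (using the unstable-trapping inequality $(r-\rTrapping_b)\,\p_r\bigl(\Delta_b^{-1}((r^2+a^2)\sigma+a\FreqPhi)^2\bigr)\ge 0$) that $\MorawetzInnerNTSym\, H_{\RescaledPrinSymb_b}(\breve{\chi}_\zeta^2)$ is bounded below by $-O(a)|\zeta|^2 - O(\delta_r)\xi^2$, hence nearly nonnegative for $a$ and $\delta_r$ small. In both cases, your proposal substitutes a brute-force largeness argument for a signed-error argument, and without the sign it does not close.
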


\begin{proof}
  Recall that from Theorems
  \ref{linear:thm:ILED-nontrapping:resolvent-estimate:main} ,
  \ref{linear:thm:ILED-nontrapping-freq:main}, and \ref{linear:thm:ILED-near:main},
  we have already proven the lemma on the regions where the cutoff
  functions are all constant. The main difficulty is to account
  for the error terms that arise from using these cutoffs.

  These terms are of the form $H_{\PrinSymb_b}\chi$, where $\chi$ is a
  cutoff function (see Remark
  \ref{linear:rmk:ILED-combine:important-terms}). Throughout the proof, we
  will normalize the constants $\breve{c}$, $\check{c}$, and
  $\mathring{c}$ so that $\breve{c}=1$. The cutoff errors fall into
  three cases.
  \begin{enumerate}
  \item The errors arising from physically gluing together the
    resolvent estimates in a frequency neighborhood of
    $\TrappedSet_b$, i.e. on the intersection of the supports of
    $\breve{\chi}$, $\mathring{\chi}$, and $\mathring{\chi}_\zeta$. To
    handle this case, we will choose $\COuter \gg \mathring{c},
    \delta_r^{-1}$.    
  \item The errors arising from physically gluing together the
    resolvent estimates outside of a frequency neighborhood of
    $\TrappedSet_b$, i.e. on the intersection of the supports of
    $\mathring{\chi}, \breve{\chi}$, and $\breve{\chi}_\zeta$. To
    handle this case, we will choose $\COuter \gg \check{c},
    \delta_\zeta^{-1}$.  
  \item The errors arising from gluing together the resolvent
    estimates in a frequency space in a neighborhood of $r=3M$,
    i.e. on the intersection of the supports of
    $\mathring{\chi}_\zeta$, $\breve{\chi}_\zeta$, and
    $\mathring{\chi}$. To handle this case, we will choose
    $\mathring{c} \ll \frac{1}{\delta_\zeta}$ and
    $a, \delta_r \ll \frac{\mathring{c}}{\delta_\zeta}$.
  \end{enumerate}
  It suffices to address each of these cases separately. 

  \noindent{\textbf{Step 1: Physical gluing in a frequency
      neighborhood of $\TrappedSet$}.}
  Define
  \begin{equation*}
    \begin{split}
      \MorawetzSym^{\NotRedshift}_1
      &= \breve{\chi}^2\MorawetzOuterSym + \mathring{c}\mathring{\chi}^2\MorawetzSym_b,\\
      \LagrangeCorrSym^{\NotRedshift}_1
      &= \breve{\chi}^2\LagrangeCorrOuterSym + \mathring{c}\mathring{\chi}^2\LagrangeCorrSym_b.
    \end{split}
  \end{equation*}
  Then we will show that we can find $\mathring{c}$ and
  $\COuter_\Trapping$
  , such that
  \begin{equation*}
    \KCurrentIbPSym{\MorawetzSym^{\NotRedshift}_1, \LagrangeCorrSym^{\NotRedshift}_1}
    \gtrsim \sum_{j=1}^7\SquareDecomp_j^2. 
  \end{equation*}
  Using the constructions of $\breve{\chi}$,
  $\mathring{\chi}$, it suffices to prove that for any $\delta>0$,
  there exists a choice of $\mathring{c}, \COuter$ such
  that for $\frac{\xi^2}{|\FreqAngular|^2}\le 2\delta_\zeta^2$,
  \begin{align}    
    \delta \mathring{c} \KCurrentIbPSym{\MorawetzSym_b, \LagrangeCorrSym_b}
    \ge  -\MorawetzOuterSym H_{\PrinSymb_b}(\breve{\chi}^{2}),
    &\qquad r\in (\breve{r}_{ +}, 3M-\delta_r)\bigcup (3M+\delta_r,\breve{R}_{-}), \label{linear:eq:ILED-combine:Nontrapping-trapping:conditions:1}    \\
    \delta \KCurrentIbPSym{\MorawetzOuterSym, \LagrangeCorrOuterSym}
    \ge 
    \mathring{c}\abs*{\MorawetzSym_b H_{\PrinSymb_b}(\mathring{\chi}^{2})},
    &\qquad r\in (\mathring{r}_{ -}-\delta_r, \mathring{r}_{ -})\bigcup (\mathring{R}_{ +}, \mathring{R}_{ +}+\delta_r). \label{linear:eq:ILED-combine:Nontrapping-trapping:conditions:2}        
  \end{align}
  Observe that
  \begin{equation}
    \label{linear:eq:ILED-combine:nontrapping:sign-condition:aux}
    \MorawetzOuterSym H_{\PrinSymb}\breve{\chi} = e^{\COuter (r-3M)^2}(r-3M)\p_r\breve{\chi} (H_{\PrinSymb_b}r)^2,
  \end{equation}
  and that moreover, $(r-3M)\p_r\breve{\chi}\ge 0$ since
  $\p_r\breve{\chi}$ is supported away from $r=3M$. As a result,
  $\MorawetzOuterSym H_{\PrinSymb}\breve{\chi}\ge 0$, and
  \eqref{linear:eq:ILED-combine:Nontrapping-trapping:conditions:1} follows
  immediately.

  It remains to show
  \eqref{linear:eq:ILED-combine:Nontrapping-trapping:conditions:2}.  To this
  end, observe that $\breve{\chi}=1$ on $\supp\p_r\mathring{\chi}$,
  and that $\abs*{\MorawetzSym_b H_{\PrinSymb_b}(\mathring{\chi}^2)} $
  is a bounded second order symbol. Then, using Lemma
  \ref{linear:lemma:ILED-nontrapping:bulk-positivity}, we can pick $\COuter$
  sufficiently large so that
  \eqref{linear:eq:ILED-combine:Nontrapping-trapping:conditions:2} holds.

  \noindent{\textbf{Step 2: Physical gluing in a frequency
      neighborhood away from $\TrappedSet$.}}
  Next we handle the error rising from physically gluing
  together the resolvent estimates microlocalized away from
  $\TrappedSet_b$. To make this more precise, consider
  \begin{align*}
    \MorawetzSym^{\NotRedshift}_2 &= \breve{\chi}^2\MorawetzOuterSym + \check{c}\mathring{\chi}^2\MorawetzInnerNTSym,\\
    \LagrangeCorrSym^{\NotRedshift}_2 &= \breve{\chi}^2\LagrangeCorrOuterSym + \check{c}\mathring{\chi}^2 \LagrangeCorrInnerNTSym.
  \end{align*}
  Then we will show that we can find $\check{c}$, $\check{C}$ such
  that for $\frac{\xi^2}{\abs*{\FreqAngular}^2}\ge \delta_\zeta^2$. 
  \begin{align}
    \delta \check{c} \KCurrentIbPSym{\MorawetzInnerNTSym, \LagrangeCorrInnerNTSym}
    \ge  -\MorawetzOuterSym H_{\PrinSymb_b}(\breve{\chi}^{2}),
    &\qquad r\in (\breve{r}_{ +}, 3M-\delta_r)\bigcup (3M+\delta_r,\breve{R}_{-}), \label{linear:eq:ILED-combine:Nontrapping-trapping:conditions:3}    \\
    \delta \KCurrentIbPSym{\MorawetzOuterSym, \LagrangeCorrOuterSym}
    \ge 
    \check{c}\abs*{\MorawetzInnerNTSym H_{\PrinSymb_b}(\mathring{\chi}^{2})},
    &\qquad r\in (\mathring{r}_{ -}-\delta_r, \mathring{r}_{ -})\bigcup (\mathring{R}_{ +}, \mathring{R}_{ +}+\delta_r). \label{linear:eq:ILED-combine:Nontrapping-trapping:conditions:4}        
  \end{align}
  Using \eqref{linear:eq:ILED-combine:nontrapping:sign-condition:aux}, we see
  that
  \begin{equation*}
    \MorawetzOuterSym H_{\PrinSymb_b}(\breve{\chi}^2) \ge 0,
  \end{equation*}
  so on the region of interest,
  \eqref{linear:eq:ILED-combine:Nontrapping-trapping:conditions:3} is
  immediately satisfied.
  
  To prove \eqref{linear:eq:ILED-combine:Nontrapping-trapping:conditions:4},
  we observe that $H_{\PrinSymb_b}(\mathring{\chi}^2)$ is a
  bounded operator, then with the choice $\CInnerNT = \COuter$, 
  \begin{align*}
    \abs*{\MorawetzInnerNTSym H_{\PrinSymb_b}(\mathring{\chi}^2)}
    = \abs*{e^{\COuter(r-3M)^2}(r-3M)(H_{\PrinSymb_b}r)H_{\PrinSymb_b}(\mathring{\chi}^2)}. 
  \end{align*}
  Then we see that since we are only considering
  $\frac{\xi^2}{\abs*{\FreqAngular}^2}\ge
  \delta_\zeta^2$, we can pick some $\COuter$ sufficiently large
  depending on $\delta_{\zeta}$ and $\delta$ such that
  \begin{equation*}
    \abs*{(r-3M)(H_{\PrinSymb_b}r)H_{\PrinSymb_b}(\mathring{\chi}^2)}
    < \delta\left( 2\COuter(r-3M)^2+1\right)\abs*{H_{\PrinSymb_b}r}^2, 
  \end{equation*}
  As a result, we can take $\check{c}=1$ and $\CInnerNT = \COuter$,
  and see that
  \eqref{linear:eq:ILED-combine:Nontrapping-trapping:conditions:4} is
  satisfied.

  \noindent{\textbf{Step 3: Gluing together the frequency cutoffs.}}
  Having dealt with the errors arising from the spatial cutoff
  functions, it remains to deal with the errors from the frequency
  cutoffs. Because on $\breve{\Sigma}$ we do not divide the
  analysis into two frequency cases, we only have to deal with the
  error arising from using the frequency cutoff on
  $\TrappingNbhd$. That is, define
  \begin{align*}
    \MorawetzSym^{\NotRedshift}_{3}
    &= \mathring{c}\mathring{\chi}^2_{\zeta}\MorawetzSym_{b} + \check{c}\breve{\chi}^2_{\zeta}\MorawetzOuterSym,\\
    \LagrangeCorrSym^{\NotRedshift}_3 
    &= \mathring{c}\mathring{\chi}^2_{\zeta}\LagrangeCorrSym_b + \check{c}\breve{\chi}^2_{\zeta}\LagrangeCorrInnerNTSym.
  \end{align*}
  Then, we will show that there exists a choice of constants such that 
  \begin{equation*}
    \KCurrentIbPSym{\MorawetzSym^{\NotRedshift}_3,\LagrangeCorrSym^{\NotRedshift}_3}
    \gtrsim \sum_{j=1}^7\SquareDecomp_j^2.
  \end{equation*}
  Using the construction of the cutoffs, it suffices to show that
  \begin{align}
    \delta \mathring{c}\KCurrentSym{\MorawetzSym_b, \LagrangeCorrSym_b}
    > & -\check{c}\MorawetzInnerNTSym H_{\RescaledPrinSymb_b}(\breve{\chi}_{\zeta}^2), \qquad
    \supp \mathring{\chi}_{\zeta},\label{linear:eq:ILED-full:freq-trapping-commute:condition:1}\\
    \delta \check{c}\KCurrentSym{\MorawetzInnerNTSym, \LagrangeCorrInnerNTSym}
    > & \abs*{\mathring{c}\MorawetzSym_b H_{\PrinSymb_b} (\mathring{\chi}_{\zeta}^2)},
    \qquad \supp \breve{\chi}_{\zeta}. \label{linear:eq:ILED-full:freq-trapping-commute:condition:2}
  \end{align}
  We first ensure that
  \eqref{linear:eq:ILED-full:freq-trapping-commute:condition:2} holds simply
  by choosing some $\mathring{c}$ sufficiently small.

  To show that \eqref{linear:eq:ILED-full:freq-trapping-commute:condition:1}
  also holds, we will show that in fact
  \begin{equation*}
    \MorawetzInnerNTSym H_{\RescaledPrinSymb_b}(\breve{\chi}_{\zeta}^2) > - O(a)\abs*{\zeta}^2 - O(\delta_r)\xi^2,
  \end{equation*}
  so that for $a$ and $\delta_r$ sufficiently small,
  \eqref{linear:eq:ILED-full:freq-trapping-commute:condition:1} is always
  satisfied. To this end, it will be convenient to observe that
  \begin{equation}
    \label{linear:eq:ILED-full:freq-trapping:Hp-chi-freq}
    H_{\RescaledPrinSymb_b}\breve{\chi}_{\zeta}
    = \p_\xi\breve{\chi}_{\zeta}\frac{\xi H_{\RescaledPrinSymb_b}\xi}{\abs*{\eta}^2}
    + \p_{\eta}\breve{\chi}_{\zeta} \xi^2 S^{-1}(r,\theta,\varphi;\FreqTheta,\FreqPhi).
  \end{equation}  
  Now observe that
  \begin{equation*}
    \MorawetzInnerNTSym H_{\RescaledPrinSymb_b}(\breve{\chi}_\zeta^2)
    = 2\breve{\chi}_{\zeta}e^{\CInnerNT(r-3M)^2}\frac{r-3M}{\rho_b^2}\Delta_b\xi\left(
\p_\xi\breve{\chi}_{\zeta}\frac{\xi H_{\RescaledPrinSymb_b}\xi}{\abs*{\eta}^2}
    +  \p_{\eta}\xi^2 S^{-1}(r,\theta,\varphi;\FreqTheta,\FreqPhi)
    \right).
  \end{equation*}
  Observe that we can also write
  \begin{align*}
    H_{\RescaledPrinSymb_b}\xi
    &= -\p_r\Delta_{b}\xi^2 + \p_r\left(\frac{(1+\lambda_b)^2}{\Delta_b}\left((r^2+a^2)\sigma + a\FreqPhi\right)^2\right),
  \end{align*}
  where from Lemma \ref{linear:lemma:trapping:KdS} we know that 
  \begin{equation*}
    (r-\rTrapping_b)\p_r\left(\frac{(1+\lambda_b)^2}{\Delta_b}\left((r^2+a^2)\sigma + a\FreqPhi\right)^2\right)\ge 0,
  \end{equation*}
  with vanishing exactly at $\TrappedSet_b$. Observing that
  \begin{equation*}
    \abs*{\rTrapping_b - 3M}\lesssim a, 
  \end{equation*}
  we then have that
  \begin{align*}
    \MorawetzInnerNTSym H_{\RescaledPrinSymb_b}(\breve{\chi}_\zeta^2)
    &= 2\frac{\breve{\chi}_\zeta\Delta_b}{\rho_b^2}e^{\CInnerNT(r-3M)^2}\left(
      \frac{\p_\xi\breve{\chi}_\zeta\xi^2}{\abs*{\FreqAngular}^2}(r-\rTrapping_b)\p_r\left(\frac{(1+\lambda_b)^2}{\Delta_b}\left((r^2+a^2)\sigma + a\FreqPhi\right)^2\right)
      \right)\\
    & + O(\delta_r)\xi^2 + a\SymClass^2,
  \end{align*}
  where on the domain of interest,
  \begin{align*}
    \frac{\p_\xi\breve{\chi}_\zeta\xi^2}{\abs*{\FreqAngular}^2}(r-\rTrapping_b)\p_r\left(\frac{(1+\lambda_b)^2}{\Delta_b}\left((r^2+a^2)\sigma + a\FreqPhi\right)^2\right)\ge 0,\\
    \frac{\breve{\chi}_\zeta\Delta_b}{\rho_b^2}e^{\CInnerNT(r-3M)^2} > 0. 
  \end{align*}
  This concludes the proof of Lemma
  \ref{linear:lemma:ILED-full:Trapping-NonTrapping}.        
\end{proof}

Since we are proving a resolvent estimate on a region including the
trapped set, we need two integration-by-parts arguments to prove a
resolvent estimate on $\GronwallExp\ge \Im\sigma\ge
-\SpectralGap$. We need one integration-by-parts argument using
$(\MorawetzVF_b, \LagrangeCorr_b, 0)$ as multipliers in a
neighborhood of the trapped set to prove the resolvent estimate for
$\abs*{\Im\sigma}\le \SpectralGap$, and one integration-by-parts
argument using $(\KillT, 0, 0)$ as multipliers in a neighborhood of
the trapped set to prove the resolvent estimate for
$\Im\sigma > \frac{\SpectralGap}{2}$.

Thus we also need the following analogue of Lemma
\ref{linear:lemma:ILED-full:Trapping-NonTrapping}, which allows us to glue
together the nontrapping Morawetz estimate and the trapping Morawetz
estimate when $\Im\sigma \ge \frac{\SpectralGap}{2}$.
\begin{lemma}
  \label{linear:lemma:ILED-full:Trapping-NonTrapping:case-2}
  Fix $\delta_0$. Then for $\varepsilon_{\TrappedSet}$ sufficiently
  small, there exists a choice of constants
  $\breve{c}, \check{c}, \mathring{c}$, such that for the vectorfields
  $\MorawetzOuterVF$ and $\MorawetzInnerNTVF$, and
  Lagrangian correctors
  $\LagrangeCorrOuter$ and $\LagrangeCorrInnerNT$ as
  defined above,
  \begin{equation}
    \label{linear:eq:ILED-full:Trapping-NonTrapping:main:2}
    \KCurrentIbPSym{\MorawetzSym^{\NotRedshift},\LagrangeCorrSym^{\NotRedshift}} 
    \ge -\delta_0|\zeta|^2,
  \end{equation}
  on $r\in (\breve{r}_{-}, \breve{R}_{+})$, where
  \begin{equation*}
    \MorawetzSym^{\NotRedshift}
    = \breve{\chi}^2\MorawetzOuterSym
    + \mathring{\chi}^2\mathring{\chi}_{\zeta}^2\MorawetzInnerNTSym
    + \mathring{\chi}^2\mathring{\chi}_{\zeta}^2 \sigma,
    \qquad \MorawetzOuterSym = \frac{\fOuter}{\Delta_b}H_{\rho_b^2\PrinSymb_b}r,
    \qquad \MorawetzInnerNTSym = \frac{\fInnerNT}{\Delta_b}H_{\rho_b^2\PrinSymb_b}r,
  \end{equation*}
  and
  \begin{equation*}
    \LagrangeCorrSym^{\NotRedshift}
    = \breve{\chi}^2\LagrangeCorrOuter
    + \mathring{\chi}^2\mathring{\chi}_{\zeta}^2\LagrangeCorrInnerNT. 
  \end{equation*}
\end{lemma}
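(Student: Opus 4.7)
The proof will follow the same three-step decomposition as Lemma \ref{linear:lemma:ILED-full:Trapping-NonTrapping}, with the key structural simplification that the trapping-region multiplier is now $\sigma$ (corresponding to $\KillT$) in place of $\MorawetzSym_b$. Since $\KillT$ is Killing and $\PrinSymb_b$ is stationary, $H_{\PrinSymb_b}\sigma = 0$, so the $\sigma$-multiplier contributes no principal bulk whatsoever; it generates only commutator terms when paired with the spatial and frequency cutoffs. This is precisely why the target bound is the weaker $\ge -\delta_0|\zeta|^2$ rather than ellipticity: in the regime $\Im\sigma \ge \SpectralGap/2$ to which the lemma will be applied, the dominant positivity will come from the $\KillT$-flux boundary terms, and the bulk need only be admissibly small on the negative side.

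Normalizing $\breve{c}=1$, I would split the commutator errors into three families exactly as in Lemma \ref{linear:lemma:ILED-full:Trapping-NonTrapping}: (i) spatial-transition errors on $\supp\p_r\breve{\chi}$; (ii) spatial-transition errors on $\supp\p_r\mathring{\chi}$; (iii) frequency-transition errors on $\supp\p\mathring{\chi}_\zeta\cup\supp\p\breve{\chi}_\zeta$. Family (i) is handled verbatim: on $\supp\p_r\breve{\chi}$ we have $\mathring{\chi}=0$, so only the single term $-\MorawetzOuterSym H_{\PrinSymb_b}(\breve{\chi}^2)$ survives, and \eqref{linear:eq:ILED-combine:nontrapping:sign-condition:aux} together with $(r-3M)\p_r\breve{\chi}\ge 0$ shows this term is non-negative. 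For family (ii), the two errors $\check{c}\mathring{\chi}_\zeta^2\MorawetzInnerNTSym H_{\PrinSymb_b}(\mathring{\chi}^2)$ and $\mathring{c}\mathring{\chi}_\zeta^2\sigma H_{\PrinSymb_b}(\mathring{\chi}^2)$ are both bounded second-order symbols supported on $\supp\p_r\mathring{\chi}$, where $\breve{\chi}=1$, so the outer-nontrapping bulk $\KCurrentIbPSym{\MorawetzOuterSym,\LagrangeCorrOuter}$ is available with ellipticity constant arbitrarily large by choice of $\COuter$, via Lemma \ref{linear:lemma:ILED-nontrapping:bulk-positivity}; taking $\COuter$ large relative to $\check{c},\mathring{c},\delta_0^{-1}$ absorbs both errors.

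Family (iii) splits into two terms. The first, $\check{c}\MorawetzInnerNTSym H_{\PrinSymb_b}(\breve{\chi}_\zeta^2)$, is precisely the expression analyzed in Step 3 of the proof of Lemma \ref{linear:lemma:ILED-full:Trapping-NonTrapping}: using \eqref{linear:eq:ILED-full:freq-trapping:Hp-chi-freq} and the sign-condition $(r-\rTrapping_b)\p_r\bigl(\Delta_b^{-1}(1+\lambda_b)^2((r^2+a^2)\sigma + a\FreqPhi)^2\bigr)\ge 0$ from Lemma \ref{linear:lemma:trapping:KdS}, it is non-negative up to $O(a)|\zeta|^2+O(\delta_r)\xi^2$ errors, which are made negligible compared to $\delta_0|\zeta|^2$ by taking $a,\delta_r$ small. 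For the second term $\mathring{c}\sigma H_{\PrinSymb_b}(\mathring{\chi}_\zeta^2)$, applying \eqref{linear:eq:ILED-full:freq-trapping:Hp-chi-freq} expresses $H_{\PrinSymb_b}(\mathring{\chi}_\zeta^2)$ as derivatives of the cutoff multiplied by $\xi H_{\PrinSymb_b}\xi/|\eta|^2$ and by symbols of order $\xi^2 S^{-1}$; the whole error is therefore supported in the transition shell $\{\delta_\zeta\le|\xi|^2/|\eta|^2\le 2\delta_\zeta\}$, on which $\breve{\chi}_\zeta=1$ so that the elliptic inner-nontrapping bulk of Lemma \ref{linear:lemma:ILED-nontrapping-freq:bulk-positivity} is active with ellipticity controlled by $\CInnerNT$. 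Choosing $\CInnerNT$ large relative to $\mathring{c}\delta_0^{-1}$ absorbs this error. The only obstacle is coordinating the hierarchy of smallness parameters, which I would do in the order already fixed in the list of constants at the end of Section 4: first $\COuter$, then $\CInnerNT$, then $\check{c},\mathring{c}$, and finally $a,\delta_r,\delta_\zeta,\varepsilon_{\TrappedSet}$ small with respect to everything above. Because the target bound tolerates a loss of size $\delta_0|\zeta|^2$ and we have two sources of arbitrarily large positive bulk in which to absorb commutators, the coordination is strictly less delicate than in Lemma \ref{linear:lemma:ILED-full:Trapping-NonTrapping}.
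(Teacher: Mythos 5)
Your proof follows the paper's own route exactly: the paper disposes of this lemma with the single sentence ``The proof is identical to that of Lemma~\ref{linear:lemma:ILED-full:Trapping-NonTrapping},'' and your three-family decomposition, the observation that $H_{\PrinSymb_b}\sigma=0$ kills the principal bulk from the trapping multiplier, and the hierarchy of constants are precisely what that reference entails. Two small points are worth fixing. First, the claim that $\mathring{\chi}=0$ on $\supp\p_r\breve{\chi}$ is backwards: with $\breve{r}_{+}=3M-2\delta_r$, $\breve{R}_{-}=3M+2\delta_r$, $\mathring{r}_{-}=3M-4\delta_r$, $\mathring{R}_{+}=3M+4\delta_r$, the set $\supp\p_r\breve{\chi}$ lies strictly inside $(\mathring{r}_{-},\mathring{R}_{+})$ where $\mathring{\chi}\equiv 1$. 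This does not break your Family (i) argument, since the point is only that $-\MorawetzOuterSym H_{\PrinSymb_b}(\breve{\chi}^2)\le 0$ via \eqref{linear:eq:ILED-combine:nontrapping:sign-condition:aux}, which is true regardless. Second, $H_{\PrinSymb_b}\sigma=0$ removes the Poisson-bracket piece but the trapping-region multiplier still contributes the non-commutator term $-\mathring{c}\,\mathring{\chi}^2\mathring{\chi}_\zeta^2\,\SubPConjSym_{b,a}\sigma$ through the definition \eqref{linear:eq:KCurrentIbPSym-def} of $\KCurrentIbPSym{\cdot,\cdot}$. You do place $\varepsilon_{\TrappedSet}$ in your hierarchy of small constants, but it is worth stating explicitly that this is the term the hypothesis ``for $\varepsilon_{\TrappedSet}$ sufficiently small'' is there to control, via Lemma~\ref{linear:lemma:ILED-near:s-decomp}.
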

The proof of Lemma \ref{linear:lemma:ILED-full:Trapping-NonTrapping:case-2}
is identical to that of \ref{linear:lemma:ILED-full:Trapping-NonTrapping}.
Since the gluing procedure is also identical for the two cases, we
only explicitly work out the procedure for the case where we use
$(\MorawetzVF_b, \LagrangeCorr_b, 0)$ to prove a resolvent estimate
for $\abs*{\Im\sigma}\le \SpectralGap$.

We are now ready to handle the proof of Proposition
\ref{linear:prop:ILED-combine:not-redshift}.
\begin{proof}[Proof of Proposition
  \ref{linear:prop:ILED-combine:not-redshift}]

  Define
  \begin{equation*}
    \MorawetzEnergy^{\NotRedshift}(\tStar)[h]
    = \breve{c}\breve{\MorawetzEnergy}(\tStar)[\breve{\chi}h]
    + \check{c}\widecheck{\MorawetzEnergy}(\tStar)[\mathring{\chi}\breve{\chi}_\zeta h]
    + \mathring{c}\mathring{\MorawetzEnergy}(\tStar)[\mathring{\chi}\mathring{\chi}_\zeta h],
  \end{equation*}
  where
  \begin{align*}
    \breve{\MorawetzEnergy}(\tStar)[\breve{\chi}h]
    &= \int_{\Sigma_{\tStar}}
      \JCurrent{\MorawetzOuterVF, \LagrangeCorrOuter, 0}[\breve{\chi}h]\cdot n_{\Sigma_{\tStar}},\\
    \check{\MorawetzEnergy}(\tStar)[\mathring{\chi}\breve{\chi}_\zeta h]
    &= \int_{\Sigma_{\tStar}}
      \JCurrent{\MorawetzInnerNTVF, \LagrangeCorrInnerNT, 0}[\mathring{\chi}\breve{\chi}_\zeta h] \cdot n_{\Sigma_{\tStar}},\\
    \mathring{\MorawetzEnergy}(\tStar)[\mathring{\chi}\mathring{\chi}_\zeta h]
    &= \int_{\Sigma_{\tStar}}
      \JCurrent{\MorawetzVF_b, \LagrangeCorr_b, 0}[\mathring{\chi}\mathring{\chi}_\zeta h] \cdot n_{\Sigma_{\tStar}}.
  \end{align*}
  Then, using the pseudo-differential modification of the divergence
  theorem in \eqref{linear:eq:ILED-near:combined-divergence-theorem}, we have
  then that up to lower-order terms,
  \begin{align}    
      &\left.\MorawetzEnergy^{\NotRedshift}(\tStar)[h]\right\vert_{\tStar=0}^{\tStar = \TStar}+\int_{\DomainOfIntegration}
      \breve{c} 
        \KCurrent{\MorawetzOuterVF, \LagrangeCorrOuter, 0}[\breve{\chi}{h}]
      + \check{c}\KCurrent{\MorawetzInnerNTVF, \LagrangeCorrInnerNT, 0}[\mathring{\chi}\breve{\chi}_{\zeta}{h}]
      + \mathring{c} \KCurrent{\MorawetzVF_{b_0}, \LagrangeCorr_{b_0}, 0}[\mathring{\chi}\mathring{\chi}_{\zeta}{h}]      
      + \mathring{c}a\KCurrentIbP{\widetilde{\MorawetzVF}, \tilde{\LagrangeCorr}, 0}[\mathring{\chi}\mathring{\chi}_{\zeta}{h}]\notag \\
      & - \breve{c}\Re\bangle*{\MorawetzOuterVF (\breve{\chi}h), \SubPOp_b[\breve{\chi}h]}_{L^2(\DomainOfIntegration)}
        - \check{c}\Re\bangle*{\MorawetzInnerNTVF (\mathring{\chi}\breve{\chi}_\zeta h), \SubPOp_b[\mathring{\chi}\breve{\chi}_\zeta h]}_{L^2(\DomainOfIntegration)}
        - \mathring{c}\Re\bangle*{\MorawetzVF_{b_0}(\mathring{\chi}\mathring{\chi}_\zeta h), \SubPConjOp_b(\mathring{\chi}\mathring{\chi}_\zeta h)}_{L^2(\DomainOfIntegration)}
      \notag \\
      ={} &
      -\breve{c}\Re\bangle*{\LinEinstein_{g_b}(\breve{\chi}{h}), \MorawetzOuterVF(\breve{\chi}{h})}_{L^2(\DomainOfIntegration)}
      - \check{c}\Re\bangle*{\LinEinstein_{g_b}(\mathring{\chi}\breve{\chi}_{\zeta}{h}), \MorawetzInnerNTVF(\mathring{\chi}\breve{\chi}_{\zeta}{h})}_{L^2(\DomainOfIntegration)}
      - \mathring{c}\Re\bangle{\LinEinstein_{g_b}(\mathring{\chi}\mathring{\chi}_{\zeta}{h}), \MorawetzVF_{b}(\mathring{\chi}\mathring{\chi}_{\zeta}{h})}_{L^2(\DomainOfIntegration)}. \label{linear:eq:ILED-combined:not-redshift:div-thm}
  \end{align}
  We will discuss neither the treatment of the boundary terms nor
  of the lower-order terms in detail here, having already provided a
  treatment of them in the preceding sections\footnote{In particular,
    notice that error terms arising from commuting derivatives with
    cutoff functions for the boundary terms can be controlled by
    reducing $\SpectralGap$ if necessary, and error terms arising from
    commuting derivatives with lower-order terms can be controlled by
    increasing the high-frequency threshold $C_0$ if necessary.}. The
  terms on the left-hand side generate positive coercive (and
  degenerate at $\TrappedSet_b$) terms. It suffices then to show that
  it is possible to choose the vectorfield multipliers, the Lagrangian
  correctors, and the real weights
  $\breve{c}, \check{c}, \mathring{c}$ so that the commutation errors
  arising from commuting the cutoffs with $\LinEinstein_{g_b}$ on the
  right-hand side are controlled by the ellipticity of the left-hand
  side. Recall that since we are only interested in obtaining a
  high-frequency Morawetz estimate and $\LinEinstein_{g_b}$ is
  strongly hyperbolic, it is sufficient to consider the commutation of
  the cutoffs with $\ScalarWaveOp[g_b]$ (see Remark
  \ref{linear:rmk:ILED-combine:important-terms}). To be more precise, we only
  need to show that there exists constants
  $\breve{c}, \check{c}, \mathring{c}$ such that for ${h}$ spatially
  supported in $\Sigma^{\NotRedshift}$,
  \begin{equation*}
    \begin{split}
      &\bangle*{\squareBrace*{\ScalarWaveOp[g_b],
          \breve{c}\breve{\chi}}{h}, \MorawetzOuterVF(\breve{\chi}{h})}_{L^2(\DomainOfIntegration)}
      + \bangle*{\squareBrace*{\rho_b^2\ScalarWaveOp[g_b],
          \check{c}\mathring{\chi}\breve{\chi}_{\zeta}}{h}, \rho_b^{-2}\MorawetzInnerNTVF(\mathring{\chi}\breve{\chi}_{\zeta}{h})}_{L^2(\DomainOfIntegration)}\\
      &+\bangle*{\squareBrace*{\ScalarWaveOp[g_b],
          \mathring{c}\mathring{\chi}\mathring{\chi}_{\zeta}}{h}, \MorawetzVF_b(\mathring{c}\mathring{\chi}\mathring{\chi}_{\zeta}{h})}_{L^2(\DomainOfIntegration)}\\
      \le{}& \delta\left( \int_{\DomainOfIntegration}
      \breve{c} a\Re\KCurrent{\MorawetzOuterVF, \LagrangeCorrOuter, 0}[\breve{\chi}{h}]
      + \check{c}\KCurrent{\MorawetzInnerNTVF, \LagrangeCorrInnerNT, 0}[\mathring{\chi}\breve{\chi}_{\zeta}{h}]
      + \mathring{c} \KCurrent{\MorawetzVF_{b_0}, \LagrangeCorr_{b_0}, 0}[\mathring{\chi}\mathring{\chi}_{\zeta}{h}]      
      + \mathring{c}\KCurrentIbP{\widetilde{\MorawetzVF}, \tilde{\LagrangeCorr}}[\mathring{\chi}\mathring{\chi}_{\zeta}{h}]\right)\\
    &+\delta\left(\breve{c}\Re\bangle*{\MorawetzOuterVF (\breve{\chi}h), \SubPOp_b[\breve{\chi}h]}_{L^2(\DomainOfIntegration)}
      + \check{c}\Re\bangle*{\MorawetzInnerNTVF (\mathring{\chi}\breve{\chi}_\zeta h), \SubPOp_b[\mathring{\chi}\breve{\chi}_\zeta h]}_{L^2(\DomainOfIntegration)}\right)
    \end{split}
  \end{equation*}
  for some $0<\delta<1$, but this is exactly the content of Lemma
  \ref{linear:lemma:ILED-full:Trapping-NonTrapping}, so we conclude after a
  simple application of Cauchy-Schwarz. 
\end{proof}

Next, we move onto handling the intersection of the redshift and the
non-trapping regions.

\begin{prop}
  \label{linear:prop:ILED-combine:not-trapping}
  Let $g_b$ be a fixed slowly-rotating \KdS{} background, and let
  \begin{equation*}
    \DomainOfIntegration:= \Real^+_{\tStar}\times\Sigma^{\NotTrapping},\qquad
    \Sigma^{\NotTrapping}:= \curlyBrace*{(r,\omega)\in\Sigma: r\not\in(\breve{r}_+, \breve{R}_-)}.
  \end{equation*}
  Then for $k>k_0$, where $k_0$ is the threshold regularity level in
  \eqref{linear:eq:threshold-reg-def}, there exist constants
  $\SpectralGap, C_0>0$ such that for $u$ compactly supported in
  $\Sigma^{\NotTrapping}$,
  \begin{equation}
    \label{linear:eq:ILED-combine:not-trapping}
    \norm{u}_{\CombinedHk{k}(\Sigma^{\NotTrapping})}
    \lesssim \norm*{\widehat{\LinEinstein}_{g_b}(\sigma)u}_{\InducedHk{k-1}_\sigma(\Sigma^{\NotTrapping})}.
  \end{equation}
\end{prop}

Gluing together the non-trapping Morawetz estimate and the redshift
estimate is less nuanced than gluing together the trapping
Morawetz estimate and the non-trapping Morawetz estimate. We no longer
have frequency-dependent multipliers and thus can glue the two
estimates together using physical space methods. In addition, we only
have one cutoff function to handle instead of two. The main difficulty
with gluing the non-trapping Morawetz estimate and the redshift
estimate together then is that the Morawetz estimate degenerates in
derivatives transverse to the horizons at the horizons, but this is
exactly in the region where the redshift estimate is coercive and
positive definite. We first let
\begin{equation}
  \label{linear:eq:ILED-combine:LinEinstein-k-commute}
  \LinEinstein_{g_b}^{(k)}
  := \ScalarWaveOp[g_b]
  + \SubPOp_{b}^{(k)}
  + \PotentialOp_{b}^{(k)}
\end{equation}
denote the linear operator that
results from commuting $k$ times $\LinEinstein_{g_b}$ with
$\{\RedShiftK_i\}$ as in Theorem
\ref{linear:thm:redshift-commutation:main}. 

To proceed with the resolvent estimate, we first prove the desired
positivity of the bulk terms.

\begin{lemma}
  \label{linear:lemma:ILED-combine:nontrapping-redshift:aux-lemma}
  Let $g$ be a fixed slowly-rotating \KdS{} background, let
  $\MorawetzOuterVF, \LagrangeCorrOuter$ be as constructed in
  \eqref{linear:eq:ILED-nontrapping:XOuter-fOuter-def},
  \eqref{linear:eq:ILED-nontrapping:q-Outer-def}, and fix some
  $\dot{c}>0$. Then for ${h}$ supported outside $\TrappingNbhd$, there
  exists $\COuter_\star>0$ such that for $\COuter>\COuter_\star$, and
  $\varepsilon_{\StaticRegionWithExtension}$ sufficiently small, and
  $k>k_0$, where $k_0$ is the threshold regularity level defined in
  \eqref{linear:eq:threshold-reg-def}, such that on
  $\StaticRegionWithExtension\bigcap\{r<r_0, r>R_0\}$,
  \begin{equation}
    \KCurrent{\MorawetzOuterVF, \LagrangeCorrOuter, 0}[{h}]
    + \dot{c}\KCurrent{\RedShiftN, 0, 0}[{h}]
    - \Re \left[\SubPOp_b^{(k)}[{h}]\cdot (\MorawetzOuterVF + \dot{c}\RedShiftN)\overline{{h}}\right]
    - \Re\left[\dot{c}\left[\ScalarWaveOp[g_b], \dot{\chi}\right]h\cdot \dot{\chi}\RedShiftN \overline{h}\right]
    \gtrsim \abs*{\nabla h}^2 - C\abs*{h}^2,
  \end{equation}
  where $\SubPOp_b^{(k)}$ is the subprincipal operator of
  $\LinEinstein_{g_b}^{(k)}$, as in \eqref{linear:eq:ILED-combine:LinEinstein-k-commute}.
\end{lemma}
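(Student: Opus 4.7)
The strategy is to combine the non-trapping Morawetz bulk (Lemma~\ref{linear:lemma:ILED-nontrapping:bulk-positivity}) with the redshift bulk (Theorem~\ref{linear:thm:redshift-energy-estimate}, part~\ref{linear:item:RedShiftEstimate:two}) applied to the commuted system $\LinEinstein_{g_b}^{(k)}$, and to absorb the cutoff commutator $[\ScalarWaveOp[g_b],\dot\chi]$ into the ellipticity of the Morawetz bulk obtained by taking $\COuter$ sufficiently large. I will split the spatial domain $\{r<r_0\}\cup\{r>R_0\}$ into two pieces: a ``horizon zone'' $\mathcal{Z}_{\mathcal{H}}:=\{r\le r_{\RedShift,\EventHorizonFuture}\}\cup\{r\ge r_{\RedShift,\CosmologicalHorizonFuture}\}$ on which $\dot\chi\equiv 1$, and a ``transition zone'' $\mathcal{Z}_{t}:=(r_{\RedShift,\EventHorizonFuture},r_0]\cup[R_0,r_{\RedShift,\CosmologicalHorizonFuture})$, on which $\p_r\dot\chi$ is supported.

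On $\mathcal{Z}_{\mathcal{H}}$ I would rely on the redshift bulk. Since we work with $\LinEinstein_{g_b}^{(k)}$ obtained by commuting with the $\RedShiftK_i$ family, Theorem~\ref{linear:thm:redshift-commutation:main} gives $\SHorizonControl{\LinEinstein_{g_b}^{(k)}}[\Horizon]=\SHorizonControl{\LinEinstein_{g_b}}[\Horizon]-2k\SurfaceGravity_{\Horizon}$, so for $k>k_0$ as in~\eqref{linear:eq:threshold-reg-def} the quantity $\SHorizonControl{\LinEinstein_{g_b}^{(k)}}[\Horizon]-\SurfaceGravity_{\Horizon}$ is strictly negative. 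Applying part~\ref{linear:item:RedShiftEstimate:four} of Theorem~\ref{linear:thm:redshift-energy-estimate} (the variant where $h$ is supported in the redshift region), and choosing $\varepsilon_{\RedShiftN}$ (and hence $\varepsilon_{\StaticRegionWithExtension}$) small, one gets
\[
\KCurrent{\RedShiftN,0,0}[h]-\Re\bigl[\SubPOp_b^{(k)}[h]\cdot \RedShiftN\overline{h}\bigr]\gtrsim |\nabla h|^2-C|h|^2,
\]
while $\KCurrent{\MorawetzOuterVF,\LagrangeCorrOuter,0}[h]-\Re[\SubPOp_b^{(k)}[h]\cdot \MorawetzOuterVF\overline{h}]$, although degenerating like $\Delta$ at the horizons, remains controlled from below by $-C|\nabla h|^2$ after noticing that the subprincipal contribution there is lower order relative to the redshift gain.

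On $\mathcal{Z}_t$, far from $r=3M$ and from the horizons, the redshift bulk $\KCurrent{\RedShiftN,0,0}[h]$ has no useful sign (only the weak control~\eqref{linear:eq:redshift:DeformTen-redshift-weak-control} is available). The positivity must come entirely from $\MorawetzOuterVF$. Following the proof of Lemma~\ref{linear:lemma:ILED-nontrapping:bulk-positivity}, the bulk $\KCurrent{\MorawetzOuterVF,\LagrangeCorrOuter,0}[h]-\Re[\SubPOp_b^{(k)}[h]\cdot\MorawetzOuterVF\overline{h}]$ controls $C|\nabla h|^2-C_1|h|^2$ for \emph{arbitrarily large} $C$ provided $\COuter$ is chosen sufficiently large (compared with $\sup|\SubPOp_b^{(k)}|$ and with $\dot c$); this absorbs the weakly-controlled redshift term $\dot c\,\KCurrent{\RedShiftN,0,0}[h]$ via \eqref{linear:eq:redshift:DeformTen-redshift-weak-control}, and also absorbs the subprincipal contribution from $\RedShiftN$. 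The commutator term $\dot c\,\Re[[\ScalarWaveOp[g_b],\dot\chi]h\cdot\dot\chi\RedShiftN\overline{h}]$ is supported exactly in $\mathcal{Z}_t$, is at most of first order in $h$ with bounded coefficients, and by Cauchy--Schwarz satisfies
\[
\bigl|\dot c\,\Re[[\ScalarWaveOp[g_b],\dot\chi]h\cdot\dot\chi\RedShiftN\overline{h}]\bigr|\le \epsilon |\nabla h|^2+C_{\epsilon,\dot c}|h|^2,
\]
so it too is absorbed by the large bulk from $\MorawetzOuterVF$ after enlarging $\COuter_\star$ once more.

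Combining both zones with the weight $\dot c>0$ and fixing $\COuter>\COuter_\star(\dot c,k)$ and then $\varepsilon_{\StaticRegionWithExtension}$ small (so that the extended horizon region does not destroy the signs in Proposition~\ref{linear:prop:redshift:N-construction}), the sum of the four terms on the left-hand side controls $|\nabla h|^2$ up to a $C|h|^2$ error, which is exactly the claim. The main obstacle is the coupling at the interface $\partial\mathcal Z_t\cap\mathcal Z_\mathcal{H}$: one must simultaneously choose $k>k_0$ large enough for the horizon zone estimate to survive the subprincipal term at the horizon, and $\COuter$ large enough (depending on $\dot c$ and on $k$) for the Morawetz bulk to dominate both the redshift weak bulk and the cutoff commutator in the transition zone, without the two choices becoming mutually inconsistent. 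The hierarchy $k=k(k_0)$ first, then $\COuter=\COuter(k,\dot c)$, then $\varepsilon_{\StaticRegionWithExtension}=\varepsilon(\COuter,k)$ makes the argument close.
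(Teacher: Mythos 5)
Your overall strategy --- decomposing the domain $\{r<r_0\}\cup\{r>R_0\}$ into a horizon zone where $\dot\chi\equiv 1$ and a transition zone where $\p_r\dot\chi$ lives, using the redshift bulk plus the correct sign of the commuted subprincipal in the former and the arbitrarily-large Morawetz bulk in the latter --- is the same as the paper's, and your treatment of the transition zone is essentially sound. The gap is in the horizon zone and in the order in which you choose the constants.

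First, near the horizons the sign-indefinite term $-\Re[\SubPOp_b^{(k)}[h]\cdot\MorawetzOuterVF\overline{h}]$ is not ``lower order relative to the redshift gain'' without a quantitative argument. Since $\MorawetzOuterVF=\fOuter\HprVF$ with $\fOuter=e^{\COuter(r-3M)^2}(r-3M)\Delta$, the transverse subprincipal piece contributes $\sim e^{\COuter(r-3M)^2}(r-3M)\Delta\,\SubPOp_{b,\Horizon}^{(k)}|\HprVF h|^2$, which \emph{grows} with $\COuter$ at fixed $r$ in the horizon zone; the Morawetz bulk there only carries $e^{\COuter(r-3M)^2}(2\COuter(r-3M)^2+1)\Delta^2|\HprVF h|^2$, and the ratio of the bad term to the bulk blows up like $1/\Delta$ as $\Delta\to 0$, so the Morawetz bulk alone can never absorb it near $\Horizon$ no matter how large $\COuter$ is. What the paper actually does (and what your plan omits) is to decompose $\SubPOp_b^{(k)}=\SubPOp_{b,\Horizon}^{(k)}\HprVF+\widetilde{\SubPOp}_b^{(k)}$ and $\RedShiftN|_{\Horizon}=-\HprVF+\widetilde{\RedShiftN}$, and then split the horizon zone into $\Delta_b>4/\COuter$ (where the $\COuter$-gain in the Morawetz bulk suffices after the $\dot\chi=1$ simplification) and $\Delta_b<4/\COuter$ (where the redshift gain $\dot c\varepsilon_\Horizon|\nabla h|^2$ must dominate, forcing $\dot c$ to satisfy the explicit bound \eqref{linear:eq:ILED-full:not-trapping:c-dot-choice}).

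Second, your hierarchy ``$k$, then $\COuter=\COuter(k,\dot c)$, then $\varepsilon_{\StaticRegionWithExtension}$'' inverts the order that makes the argument close. The bound $\dot c > e^{\COuter(r-3M)^2}|(r-3M)\SubPOp_{b,\Horizon}^{(k)}/(\varepsilon_\Horizon\COuter)|$ in \eqref{linear:eq:ILED-full:not-trapping:c-dot-choice} makes $\dot c$ a function of $\COuter$, not the other way around. One could try to keep $\dot c$ fixed by taking $\delta_\Horizon$ exponentially small in $\COuter$ so that $\Delta\le\delta_\Horizon$ offsets the $e^{\COuter(\cdot)^2}$ growth, but then $|\p_r\dot\chi|\sim\delta_\Horizon^{-2}$ makes the commutator $\dot c[\ScalarWaveOp[g_b],\dot\chi]h\cdot\dot\chi\RedShiftN\overline{h}$ grow even faster, and you would have to argue that the Morawetz bulk on $\supp\p_r\dot\chi$ still wins; you don't address this circularity. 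Finally, $\varepsilon_{\RedShiftN}$ and $\varepsilon_{\StaticRegionWithExtension}$ are distinct parameters (one enters the redshift construction of Proposition~\ref{linear:prop:redshift:N-construction}, the other measures the extension past the horizons), so ``choosing $\varepsilon_{\RedShiftN}$ (and hence $\varepsilon_{\StaticRegionWithExtension}$) small'' conflates two independent quantities; the smallness parameter that the paper actually tunes in the commutator step is $\delta_\Horizon$.
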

\begin{proof}
  We decompose
  \begin{equation*}
    \SubPOp_b^{(k)} = \SubPOp_{b, \Horizon}^{(k)}\HprVF + \widetilde{\SubPOp}_b^{(k)},
  \end{equation*}
  where $\widetilde{\SubPOp}_b^{(k)}$ is tangent to both
  $\EventHorizonFuture$ and $\CosmologicalHorizonFuture$, and we know
  that for $k>k_0$,
  $\overline{\xi}\cdot \SubPOp_{b, \Horizon}^{(k)}\xi>0$.
  
  Similarly, we decompose for $\Horizon = \EventHorizonFuture,
  \CosmologicalHorizonFuture$, 
  \begin{equation*}
    \evalAt*{\RedShiftN}_{\Horizon}
    = -\HprVF + \widetilde{\RedShiftN},
  \end{equation*}
  where $\widetilde{\RedShiftN}$ is tangent to both
  $\EventHorizonFuture$ and $\CosmologicalHorizonFuture$.

  Observe that for $k>k_0$, we have by the choice of $\RedShiftN$ in
  Proposition \ref{linear:prop:redshift:N-construction} that there exists
  some $\varepsilon_{\Horizon}>0$ such that  
  \begin{equation*}
    \KCurrent{\RedShiftN, 0, 0}[h]
    - \Re\left[\SubPOp_b^{(k)}[h]\cdot \RedShiftN \overline{h}\right]
    > \varepsilon_{\Horizon}\abs*{\nabla h}^2. 
  \end{equation*}  
  We also assume that we have already chosen $\COuter$ large enough
  that
  \begin{equation*}
    \KCurrent{\MorawetzOuterVF,\LagrangeCorrOuter, 0}[h]
    > 4\dot{c} \abs*{\p_r\dot{\chi}\Delta  \HprVF h \cdot \widetilde{\RedShiftN} \overline{h}}
    + 4 \abs*{\widetilde{\SubPOp}_{b}^{(k)}[h]\cdot \MorawetzOuterVF \overline{h}}.
  \end{equation*}
  We see that it then suffices to choose $\COuter$, $\dot{c}$, and
  $\delta_{\Horizon}$ such that
  \begin{align}       
    \KCurrent{\MorawetzOuterVF, \LagrangeCorrOuter, 0}[h]
    > - 4 \dot{c}\p_r\dot{\chi}\Delta_b\abs*{\HprVF h}^2,
    \qquad &  \supp\p_r\dot{\chi}, \label{linear:eq:ILED-full:not-trapping-gluing-conditions:1}\\
    \KCurrent{\MorawetzOuterVF, \LagrangeCorrOuter, 0}[h]
    + \dot{c}\varepsilon_{\Horizon}\abs*{\nabla h}^2 > 4 e^{\COuter(r-3M)^2}\abs*{\Delta_b (r-3M)\SubPOp_{b,\Horizon}^{(k)}}\abs*{\HprVF h}^2,
    \qquad & \{r:\dot{\chi}=1\}.   \label{linear:eq:ILED-full:not-trapping-gluing-conditions:2}  
  \end{align}
  
  We first consider
  \eqref{linear:eq:ILED-full:not-trapping-gluing-conditions:2}.  If
  $\Delta_b > \frac{4}{\COuter}$, then
  \eqref{linear:eq:ILED-full:not-trapping-gluing-conditions:2} follows
  directly from the form of
  $\KCurrent{\MorawetzOuterVF, \LagrangeCorrOuter, 0}[h]$. On the
  other hand, if $\Delta_b < \frac{4}{\COuter}$, then we choose
  $\dot{c}$ such that
  \begin{equation}
    \label{linear:eq:ILED-full:not-trapping:c-dot-choice}
    \dot{c} > e^{\COuter(r-3M)^2}\abs*{\frac{(r-3M)\SubPOp_{b,\Horizon}^{(k)}}{\varepsilon_{\Horizon} \COuter}},
  \end{equation}
  so that \eqref{linear:eq:ILED-full:not-trapping-gluing-conditions:2} is
  verified.

  We now turn our attention to
  \eqref{linear:eq:ILED-full:not-trapping-gluing-conditions:1}. Recalling the
  form of $\KCurrent{\MorawetzOuterVF, \LagrangeCorrOuter, 0}[h]$, and
  using the fact that our choice of $\dot{c}$ satisfies
  \eqref{linear:eq:ILED-full:not-trapping:c-dot-choice}, we have that
  \eqref{linear:eq:ILED-full:not-trapping-gluing-conditions:1} is satisfied if
  \begin{equation*}
    \Delta_b > 16\abs*{\frac{\SubPOp_{b,\Horizon}^{(k)} \p_r\dot{\chi} }{\varepsilon_{\Horizon}\COuter^2(r-3M)}},\qquad \supp\p_r\dot{\chi}. 
  \end{equation*}
  But from the definition of $\dot{\chi}$ in
  \eqref{linear:eq:ILED-combine:cutoff-function-list}, we see that we have
  the bound
  \begin{equation*}
    \abs*{\p_r\dot{\chi}} < \delta_{\Horizon}^{-2}. 
  \end{equation*}
  For $\delta_{\Horizon}$ sufficiently small, we have that
  \begin{equation*}
    \sup_{\supp \p_r\dot\chi}\Delta_b > \delta_{\Horizon}^2. 
  \end{equation*}
  Then we take $\COuter$ sufficiently large so that 
  \begin{equation*}
    \COuter^2 > 16\abs*{\frac{\SubPOp_{b,\Horizon}^{(k)}}{\varepsilon_{\Horizon}(r-3M)\delta_{\Horizon}^4}},  
  \end{equation*}
  so that \eqref{linear:eq:ILED-full:not-trapping-gluing-conditions:1} is
  verified. 
\end{proof}

We can now use the positivity in Lemma
\ref{linear:lemma:ILED-combine:nontrapping-redshift:aux-lemma} to glue
together the red-shift and nontrapping Morawetz estimates for the
rescaled problem. 

\begin{lemma}
  \label{linear:lemma:ILED-nontrapping:extension}
  For $\MorawetzOuterVF$ and $\LagrangeCorrOuter$ as constructed in
  Lemma \ref{linear:lemma:ILED-nontrapping:bulk-positivity}, there exists an 
  auxiliary one-form $\breve{\ZeroOCorr}$ and a
  $\dot{c}>0$ such that the following properties
  hold. 
  \begin{enumerate}
  \item On the hypersurfaces $\EventHorizonFuture_-,
    \CosmologicalHorizonFuture_+$, the  boundary flux has the
    following properties:
    \begin{equation}
      \label{linear:eq:ILED-nontrapping:extension:boundary-flux}
      \begin{split}
        \dot{c}\int_{\EventHorizonFuture_-}\JCurrent{\RedShiftN, 0, 0}[h]\cdot n_{\EventHorizonFuture_-}
      + \int_{\EventHorizonFuture_-}\JCurrent{\MorawetzOuterVF, \LagrangeCorrOuter, \breve{\ZeroOCorr}}[h]\cdot n_{\EventHorizonFuture_-}&\ge 0,\\
      \dot{c}\int_{\CosmologicalHorizonFuture_+}\JCurrent{\RedShiftN, 0, 0}[h]\cdot n_{\CosmologicalHorizonFuture_+}
      +\int_{\CosmologicalHorizonFuture_+}\JCurrent{\MorawetzOuterVF, \LagrangeCorrOuter, \breve{\ZeroOCorr}}[h]\cdot n_{\CosmologicalHorizonFuture_+}&\ge 0.
      \end{split}      
    \end{equation}
  \item For $h$ supported away from $r=3M$, the inequality in
    \eqref{linear:eq:ILED-nontrapping:bulk-positivity} continues to hold with
    $\KCurrent{\MorawetzOuterVF, \LagrangeCorrOuter,
      \breve{\ZeroOCorr}}[h]$ in place of
    $\KCurrent{\MorawetzOuterVF, \LagrangeCorrOuter, 0}[h]$. For
    $\varepsilon_{\StaticRegionWithExtension}$ sufficiently small, we
    have moreover that on all of $\Sigma$ as defined in
    \eqref{linear:eq:Sigma-def}, the following relation holds up to
    zero-order terms
    \begin{align}
      \abs*{\nabla h}^2\lesssim{}&
      \dot{c}\KCurrent{\RedShiftN, 0, 0}[\dot{\chi}h]
      + \KCurrent{\MorawetzOuterVF, \LagrangeCorrOuter,\breve{\ZeroOCorr}}[h]\notag \\
                                 &- \Re\left[
                                   \SubPOp_b^{(k)}h\cdot
                                   \left(\dot{c}\RedShiftN\circ\dot{\chi}+ \MorawetzOuterVF\right)\overline{h}\right]
      + \Re\left[\dot{c}\left[\ScalarWaveOp[g_b], \dot{\chi}\right]h\cdot\RedShiftN(\dot{\chi}\overline{h})\right], \label{linear:eq:ILED-nontrapping:extension:bulk}      
    \end{align}
    where $\SubPOp_b^{(k)}$ is the subprincipal operator of
    $\LinEinstein_{g_b}^{(k)}$ as defined in
    \eqref{linear:eq:ILED-combine:LinEinstein-k-commute}, $k>k_0$ the
    threshold regularity level as defined in
    \eqref{linear:eq:threshold-reg-def}, and $\dot{\chi}$ is the cutoff
    localizing to the redshift region as defined in
    \eqref{linear:eq:ILED-combine:cutoff-function-definitions}.
  \end{enumerate}
\end{lemma}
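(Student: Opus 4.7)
The plan is to construct the zero-order correction $\breve{\ZeroOCorr}$ and choose $\dot{c}$ in a manner analogous to the redshift arguments in Theorem~\ref{linear:thm:redshift-energy-estimate} and Proposition~\ref{linear:prop:ILED-combine:not-redshift}. The key point is that the extension parameter $\varepsilon_{\StaticRegionWithExtension}$ controls how far we go past the horizons, and we already have Lemma~\ref{linear:lemma:ILED-combine:nontrapping-redshift:aux-lemma} providing the desired coercivity of the combined bulk in the redshift region with zero lower-order correction. The role of $\breve{\ZeroOCorr}$ is purely to cancel the sign-indefinite components of $\JCurrent{\MorawetzOuterVF,\LagrangeCorrOuter,0}$ at $\EventHorizonFuture_{-}$ and $\CosmologicalHorizonFuture_{+}$ without destroying the bulk positivity.

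First I would examine the boundary flux of $\JCurrent{\MorawetzOuterVF, \LagrangeCorrOuter, 0}[h]$ on $\EventHorizonFuture_{-}$ and $\CosmologicalHorizonFuture_{+}$. Since $\MorawetzOuterVF = \fOuter(r)\HprVF$ with $\fOuter(r_{\EventHorizonFuture})\neq 0$ and $\HprVF$ transverse to the horizons, the principal part of this flux is sign-indefinite. I would introduce $\breve{\ZeroOCorr} = \breve{\zeta}(r) \MorawetzOuterVF^{\flat}$ supported in a neighborhood of $\EventHorizonFuture_{-}$ and $\CosmologicalHorizonFuture_{+}$ with $\breve{\zeta}(r_{\EventHorizonFuture,-}), \breve{\zeta}(r_{\CosmologicalHorizonFuture,+})$ chosen so that the quadratic-in-$h$ boundary contribution cancels (up to a term controlled by $|h|^2$). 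Concretely, this is the standard trick of adding a $\tfrac{1}{2}m_\mu|h|^2$ current to neutralize the mixed $\nabla h \cdot h$ term on the extended horizons; the remaining $|h|^2$ piece on the boundary is non-negative after appropriate sign choice.

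Next, to verify~\eqref{linear:eq:ILED-nontrapping:extension:boundary-flux}, I would use that $\RedShiftN$ is uniformly timelike and future-directed on all of $\StaticRegionWithExtension$ (by Proposition~\ref{linear:prop:redshift:N-construction}), so
\begin{equation*}
  \int_{\EventHorizonFuture_{-}}\JCurrent{\RedShiftN, 0, 0}[h]\cdot n_{\EventHorizonFuture_{-}}
  \gtrsim \int_{\EventHorizonFuture_{-}} |\nabla h|^2 + |h|^2,
\end{equation*}
and similarly for $\CosmologicalHorizonFuture_{+}$. The Morawetz flux, after adding $\breve{\ZeroOCorr}$, is bounded in absolute value by a multiple of this redshift flux, so choosing $\dot{c}$ sufficiently large (depending only on fixed constants from the construction of $\MorawetzOuterVF$ and the geometry) produces the desired non-negativity.

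For the second item, I would split $\Sigma$ into $\{|r-3M|\ge \delta_r\}$ and a small neighborhood of the support of $1-\dot{\chi}$ away from $r=3M$. On the former, Lemma~\ref{linear:lemma:ILED-nontrapping:bulk-positivity} already gives $\KCurrent{\MorawetzOuterVF,\LagrangeCorrOuter,0}[h]\gtrsim |\nabla h|^2$ up to lower-order terms, and the $\breve{\ZeroOCorr}$ correction contributes only zero-order bulk terms which are absorbable. On the latter, Lemma~\ref{linear:lemma:ILED-combine:nontrapping-redshift:aux-lemma} provides exactly the coercivity in~\eqref{linear:eq:ILED-nontrapping:extension:bulk}, with the $[\ScalarWaveOp[g_b],\dot{\chi}]$ commutator term appearing explicitly as a controlled error. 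The main obstacle will be ensuring that the zero-order correction $\breve{\ZeroOCorr}$, chosen to make the boundary flux positive, does not introduce a bulk term $\tfrac{1}{2}\nabla^\alpha(\breve{\ZeroOCorr}_\alpha - \partial_\alpha\LagrangeCorrOuter)|h|^2$ that violates the degenerate ellipticity; this is controlled by taking $\varepsilon_{\StaticRegionWithExtension}$ small so that $\breve{\zeta}$ is supported on a narrow strip, and absorbing the resulting $|h|^2$ contribution into the standing lower-order terms allowed by the statement.
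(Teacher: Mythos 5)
Your overall plan — reduce the bulk estimate to Lemma~\ref{linear:lemma:ILED-combine:nontrapping-redshift:aux-lemma} since the $\breve{\ZeroOCorr}$ correction changes $\KCurrent{\cdot,\cdot,\cdot}$ only by absorbable lower-order terms, and then pick $\breve{\ZeroOCorr}$ to fix the boundary flux — is in the right spirit, but the boundary analysis contains a substantive error that would derail the argument.

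You assert that because $\fOuter(r_{\EventHorizonFuture,-})\neq 0$ and $\HprVF$ is transverse to the horizons, the principal part of the flux $\JCurrent{\MorawetzOuterVF,\LagrangeCorrOuter,0}[h]\cdot n$ on $\EventHorizonFuture_-$ and $\CosmologicalHorizonFuture_+$ is sign-indefinite. This is the opposite of the key observation: $\MorawetzOuterVF = e^{\COuter(r-3M)^2}(r-3M)\Delta\,\HprVF$, and since $\Delta\HprVF$ is timelike just past the horizons while $(r-3M)\Delta > 0$ there, $\MorawetzOuterVF$ is a causal, future-directed vectorfield on $\EventHorizonFuture_-$ and $\CosmologicalHorizonFuture_+$. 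The dominant energy condition then gives $\JCurrent{\MorawetzOuterVF,0,0}[h]\cdot n \ge 0$ there. The genuinely indefinite contribution to the flux is not principal: it is the mixed $\Re[\LagrangeCorrOuter\,h\cdot n\overline{h}]$ term coming from the Lagrangian corrector, together with its $\tfrac12(n\LagrangeCorrOuter)|h|^2$ tail.

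This matters because your absorption scheme does not close without the non-negativity of the $\MorawetzOuterVF$ flux. A zero-order corrector $\breve{\ZeroOCorr}$ only contributes $\tfrac12 g(n,\breve{\ZeroOCorr})|h|^2$ to the boundary integrand, so it cannot repair an indefinite $|\nabla h|^2$-type flux, and you cannot simply take $\dot{c}$ larger to dominate such a flux either, since $\dot{c}$ is pinned both below and above by the bulk conditions~\eqref{linear:eq:ILED-full:not-trapping:c-dot-choice} and~\eqref{linear:eq:ILED-full:not-trapping-gluing-conditions:1} in Lemma~\ref{linear:lemma:ILED-combine:nontrapping-redshift:aux-lemma}. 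The paper's proof sidesteps all of this: with $\dot{c}$ already fixed by the bulk, it takes $\breve{\ZeroOCorr} = -\COuter'\RedShiftN^\flat$ (so that $g(n_{\Horizon_\pm},\breve{\ZeroOCorr})>0$), and absorbs the Lagrangian mixed term by Cauchy-Schwarz against the sum of the non-negative $\JCurrent{\MorawetzOuterVF,0,0}\cdot n$, the $\dot{c}\JCurrent{\RedShiftN,0,0}\cdot n$ redshift flux, and the newly introduced $\tfrac14 g(n,\breve{\ZeroOCorr})|h|^2$. Your alternative choice $\breve{\ZeroOCorr}=\breve{\zeta}(r)\MorawetzOuterVF^\flat$ could be made to work with the correct sign of $\breve{\zeta}$, but only once you have the non-negativity of the principal flux in hand, and you would still need to separate the constant choices in the order the paper does: $\dot{c}$ first, independent of $\breve{\ZeroOCorr}$, then $\breve{\ZeroOCorr}$ for that fixed $\dot{c}$.
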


\begin{proof}
  We first prove \eqref{linear:eq:ILED-nontrapping:extension:bulk}. First,
  observe that it suffices to
  show \eqref{linear:eq:ILED-nontrapping:extension:bulk} with the choice
  $\breve{\ZeroOCorr}=0$. Observe that 
  \begin{equation}    
    \KCurrent{\MorawetzOuterVF, \LagrangeCorrOuter,\breve{\ZeroOCorr}}[h]      
    ={} \KCurrent{\MorawetzOuterVF, \LagrangeCorrOuter,0}[h]
    + \Re\left[\breve{\ZeroOCorr}_\alpha h\cdot\p^\alpha\overline{h}\right]
    + \frac{1}{2}\left(\nabla\cdot \breve{\ZeroOCorr}\right) \abs*{h}^2.
  \end{equation}
  As a result, we have that for any choice of one-form
  $\breve{\ZeroOCorr}$ and $\epsilon>0$, there exists some
  $C(\epsilon)>0$ such that
  \begin{equation*}
    \KCurrent{\MorawetzOuterVF, \LagrangeCorrOuter,\breve{\ZeroOCorr}}[h] \le  \KCurrent{\MorawetzOuterVF, \LagrangeCorrOuter, 0}[h] + \epsilon\abs*{\nabla h}^2 + C(\epsilon)\abs*{h}^2.
  \end{equation*}
  But then \eqref{linear:eq:ILED-nontrapping:extension:bulk} with the choice
  choice $\breve{\ZeroOCorr}=0$ is exactly the statement of Lemma
  \ref{linear:lemma:ILED-combine:nontrapping-redshift:aux-lemma}.   

  We emphasize that the
  proof of \eqref{linear:eq:ILED-nontrapping:extension:bulk} and in
  particular the choice of $\dot{c}$ is independent
  of $\breve{\ZeroOCorr}$. We now consider $\dot{c}$
  fixed and show that there exists some choice of $\breve{\ZeroOCorr}$
  such that \eqref{linear:eq:ILED-nontrapping:extension:boundary-flux}
  holds. We will just prove the statement on
  $\EventHorizonFuture_-$. A similar argument will suffice to show the
  conclusion on $\CosmologicalHorizonFuture_+$.

  Recall that $\RedShiftN$ is uniformly timelike on
  $\StaticRegionWithExtension$. Since $n_{\EventHorizonFuture_-}$ is
  also timelike, we have that on $\EventHorizonFuture_-$, 
  \begin{equation*}
    \JCurrent{\RedShiftN, 0, 0}[h]\cdot n_{\EventHorizonFuture_-} \gtrsim \abs*{\nabla h}^2.
  \end{equation*}
  Now observe that
  \begin{equation*}
    \int_{\EventHorizonFuture_-}\JCurrent{\MorawetzOuterVF, \LagrangeCorrOuter, \breve{\ZeroOCorr}}[h]\cdot n_{\EventHorizonFuture_-}
    = \int_{\EventHorizonFuture_-}\JCurrent{\MorawetzOuterVF,0, 0}[h]\cdot n_{\EventHorizonFuture_-}
    +\int_{\EventHorizonFuture_-}\JCurrent{0, \LagrangeCorrOuter, 0}[h]\cdot n_{\EventHorizonFuture_-}
    +\int_{\EventHorizonFuture_-}\JCurrent{0, 0, \breve{\ZeroOCorr}}[h]\cdot n_{\EventHorizonFuture_-}.
  \end{equation*}
  Since $\Delta_b\HprVF$ is timelike beyond the horizons (and
  vanishes at the horizons),  we have that
  \begin{equation*}
    \int_{\EventHorizonFuture_-}\JCurrent{\MorawetzOuterVF,0, 0}[h]\cdot n_{\EventHorizonFuture_-} \ge 0.
  \end{equation*}
  Recall from the definition of $\JCurrent{X,q,m}[h]$ in \eqref{linear:eq:J-K-currents:def} that
  \begin{equation}
    \label{linear:eq:ILED-combine:not-trapping:J-0-q-m}
    \JCurrent{0, \LagrangeCorrOuter, \breve{\ZeroOCorr}}[h]\cdot n_{\EventHorizonFuture_-}
    = \Re\left[\LagrangeCorrOuter h\cdot n_{\EventHorizonFuture_-}\overline{h}\right]
    - \frac{1}{2}n_{\EventHorizonFuture_-} \LagrangeCorrOuter \abs*{h}^2
    + \frac{1}{2}g(n_{\EventHorizonFuture_-}, \breve{\ZeroOCorr}) \abs*{h}^2.
  \end{equation}
  If we pick for instance $\breve{\ZeroOCorr}_\alpha =
  -\COuter'\RedShiftN_\alpha$ then
  \begin{equation*}
    \evalAt*{g(n_{\EventHorizonFuture_-}, \breve{\ZeroOCorr})}_{\EventHorizonFuture_-} > 0.
  \end{equation*}
  Picking $\COuter'$ sufficiently large, it is clear that the second
  term on the right-hand side of
  \eqref{linear:eq:ILED-combine:not-trapping:J-0-q-m} will be controlled by
  the last term on the right-hand side of
  \eqref{linear:eq:ILED-combine:not-trapping:J-0-q-m}. Moreover, for
  $\COuter'$ sufficiently large, we have, using Cauchy-Schwarz, that
  in fact the first term on the right-hand side of
  \eqref{linear:eq:ILED-combine:not-trapping:J-0-q-m} can be controlled
  \begin{equation*}
    \abs*{\LagrangeCorrOuter h \cdot n_{\EventHorizonFuture_-}\overline{h} }
    \lesssim \frac{1}{4}g(n_{\EventHorizonFuture_-}, \breve{\ZeroOCorr})\abs*{h}^2
    + \JCurrent{\MorawetzOuterVF, 0, 0}[h]\cdot n_{\EventHorizonFuture_-}
    + \dot{c}\JCurrent{\RedShiftN, 0, 0 }[h]\cdot n_{\EventHorizonFuture_-}
  \end{equation*}
  on $\EventHorizonFuture_-$.
  This concludes the proof of Lemma
  \ref{linear:lemma:ILED-nontrapping:extension}.   
\end{proof}

We are now ready to prove Proposition
\ref{linear:prop:ILED-combine:not-trapping}. 
\begin{proof}[Proof of Proposition
  \ref{linear:prop:ILED-combine:not-trapping}.]

  Define
  \begin{equation*}
    \MorawetzEnergy^{\NotTrapping}(\tStar)[h]
    = \breve{\MorawetzEnergy}(\tStar)[\breve{\chi}h]
    + \dot{c}\dot{\MorawetzEnergy}(\tStar)[\dot{\chi}h]. 
  \end{equation*}
  Then, applying the divergence theorem in
  \eqref{linear:eq:div-thm:spacetime} 
  and using
  the control of the boundary terms along $\EventHorizonFuture_-$,
  $\CosmologicalHorizonFuture_+$ present in
  \eqref{linear:eq:ILED-nontrapping:extension:boundary-flux}, we have that up
  to lower order terms
  \begin{align}
    &\left.\MorawetzEnergy^{\NotTrapping}(\tStar)[{h}]\right\vert_{\tStar=0}^{\tStar=\TStar}
      - \Re\bangle*{\MorawetzOuterVF h, \SubPOp_b[{h}]}_{L^2(\DomainOfIntegration)}
    -\dot{c}\Re\bangle*{\RedShiftN (\dot{\chi}{h}), \SubPOp_b[\dot{\chi}{h}]}_{L^2(\DomainOfIntegration)}
      + \int_{\DomainOfIntegration}\KCurrent{\MorawetzOuterVF, \LagrangeCorrOuter, \breve{\ZeroOCorr}}[{h}]
    +\dot{c}\KCurrent{\RedShiftN, 0, 0}[\dot{\chi}{h}]\notag \\    
    \le{}& \abs*{\Re\bangle*{\LinEinstein_{g_b}{h}, \MorawetzOuterVF {h}}_{\LTwo(\DomainOfIntegration)}}
    + \abs*{ \dot{c}\Re\bangle*{\LinEinstein_{g_b}(\dot{\chi}{h}), \RedShiftN(\dot{\chi}{h})}_{\LTwo(\DomainOfIntegration)}}.\label{linear:eq:ILED-nontrapping:not-trapping:div-thm}
  \end{align}
  
  To prove the resolvent estimate in
  \eqref{linear:eq:ILED-combine:not-trapping}, it suffices to differentiate
  \eqref{linear:eq:ILED-nontrapping:not-trapping:div-thm} in $\p_{\tStar}$ and
  multiply both sides by $e^{2\Im\sigma\tStar}$. Higher-order
  estimates then follow as before by commuting through with
  $\RedShiftK_i, \KillT$, and using elliptic estimates.
\end{proof}

We are now ready to prove Theorem \ref{linear:thm:ILED-near:main} for
$\abs*{\Im\sigma}\le \SpectralGap$. 
\begin{proof}[Proof of Theorem \ref{linear:thm:ILED-near:main}  for
  $\abs*{\Im\sigma}\le \SpectralGap$.]
  The theorem follows directly from
  \eqref{linear:eq:ILED-nontrapping:not-trapping:div-thm} and
  \eqref{linear:eq:ILED-combined:not-redshift:div-thm}, using the bulk
  positivity in Lemmas \ref{linear:lemma:ILED-full:Trapping-NonTrapping} and
  \ref{linear:lemma:ILED-nontrapping:extension}, and controlling boundary
  terms and lower-order terms as previously done.
\end{proof}

The proof of Theorem \ref{linear:thm:ILED-near:main} for $\abs*{\Im\sigma} >
\frac{\SpectralGap}{2}$ follows exactly like the proof of Theorem
\ref{linear:thm:ILED-near:main} for $\abs*{\Im\sigma}\le \SpectralGap$, so we
omit it here.

\section{Exponential decay up to compact perturbation}
\label{linear:sec:asymptotic-expansion}

The main goal of this section will be to prove Theorems
\ref{linear:thm:meromorphic:main-A} and
\ref{linear:thm:resolvent-estimate:inf-gen}. Together, these two theorems
imply that there are only finitely many non-decaying
$\LSolHk{k}$-quasinormal mode solutions for
$\LinEinstein=\LinEinstein_{g_b}$, giving exponential decay up to
compact perturbation, and in particular, the asymptotic expansion in
Corollary \ref{linear:coro:asymptotic-expansion}.

\subsection{Fredholm alternative for $\InfGen$ (Proof of Theorem
  \ref{linear:thm:meromorphic:main-A})}
\label{linear:sec:meromorphic-continuation}

In this section, we will prove Theorem \ref{linear:thm:meromorphic:main-A}.
We do so by analyzing the invertibility of the Laplace-transformed
operator $\widehat{\LinEinstein}(\sigma)$.  By using the Killing
energy estimate, the redshift energy estimate, and commuting with the
redshift vectorfield, we will be able to show that for sufficiently
slowly-rotating \KdS{} metrics, there exists some $\gamma$ such that
$(\widehat{\LinEinstein}(\sigma) - \gamma)^{-1}$ is a well-defined,
compact operator in the half-plane
\begin{equation}
  \label{linear:eq:meromophic:omega-range}
  \curlyBrace*{\sigma\in\Complex:\Im\sigma>
    \frac{1}{2}\max_{\Horizon=\EventHorizonFuture, \CosmologicalHorizonFuture}\left(
      \SHorizonControl{\LinEinstein}[\Horizon]
      -\left(2k + \frac{1}{2}\right)\SurfaceGravity_\Horizon \right) }.
\end{equation}
An appeal to the analytic Fredholm theorem then allows us to derive
the equivalent of Theorem \ref{linear:thm:meromorphic:main-A} for the
Laplace-transformed operator, which directly implies Theorem
\ref{linear:thm:meromorphic:main-A}, despite lacking compactness for
$(\InfGen-\sigma)^{-1}$. This follows closely the approach taken by
Warnick in proving an equivalent result on asymptotically
Schwarzschild anti-de Sitter spacetimes in
\cite{warnick_quasinormal_2015}.

In what follows, we first prove estimates for
$\widehat{\LinEinstein}(\sigma)$ with domain
$D^1(\widehat{\LinEinstein}(\sigma))$, and then for higher regularity
domains.

\subsubsection{Injectivity}

We first show that $(\widehat{\LinEinstein}(\sigma) - \gamma)^{-1}$
with domain $D^1(\widehat{\LinEinstein}(\sigma))$ is injective. 
\begin{theorem}
  \label{linear:thm:meromorphic:injective}
  For all sufficiently
  slowly-rotating \KdS{} metrics $g_b$ where $b=(M, a)$ the following
  holds. For a fixed compact domain
  \begin{equation*}
    \Omega\subset\curlyBrace*{\sigma\in \Complex: \Im\sigma >
      \frac{1}{2} \max_{\Horizon=\EventHorizonFuture, \CosmologicalHorizonFuture}\left(\SHorizonControl{\LinEinstein}[\Horizon] -\frac{1}{2}\SurfaceGravity_{\Horizon} \right)},
  \end{equation*}
  there exists some $\gamma_1$ such that
  $\widehat{\LinEinstein}(\sigma) - \gamma$ is injective for any
  $\gamma>\gamma_1$ and
  $D^1(\widehat{\LinEinstein}(\sigma))\subset \InducedHk{1}(\Sigma)$,
  for any $\sigma\in\Omega$, and that furthermore we have the estimate
  \begin{equation}
    \label{linear:eq:meromorphic:injective:main-eqn}
    \norm{u}_{\InducedHk{1}(\Sigma)} \lesssim \norm*{(\widehat{\LinEinstein}(\sigma) - \gamma )u}_{\InducedLTwo(\Sigma)}.
  \end{equation}
\end{theorem}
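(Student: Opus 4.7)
The plan is to work with $h := e^{-\ImagUnit\sigma\tStar}u$ for $u\in C_0^\infty(\Sigma,\Complex^D)$, apply the $\gamma$-modified Killing and redshift estimates of Corollaries \ref{linear:coro:Killing-estimate-with-gamma} and \ref{linear:coro:redshift-energy-estimate-with-gamma}, and exploit the facts that $\partial_\tStar \RedShiftEnergy_\gamma[h]=2\Im\sigma\,\RedShiftEnergy_\gamma[h]$, $\partial_\tStar \EnergyKill_\gamma[h]=2\Im\sigma\,\EnergyKill_\gamma[h]$, and $\|\KillT h\|_{\InducedLTwo}^2=|\sigma|^2\|h\|_{\InducedLTwo}^2$. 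Multiplication by $e^{2\Im\sigma\tStar}$ then removes all $\tStar$-dependence and converts each energy into a spatial norm of $u$. Once \eqref{linear:eq:meromorphic:injective:main-eqn} is established on $C_0^\infty$, it extends to $D^1(\widehat{\LinEinstein}(\sigma))$ by density with respect to the graph norm; this simultaneously delivers the embedding $D^1\subset \InducedHk{1}$ and the injectivity of $\widehat{\LinEinstein}(\sigma)-\gamma$.

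The substantive part is a two-step combination of the $\gamma$-modified estimates. Since $\Omega$ is contained in the half-plane above $\tfrac12\max_\Horizon(\SHorizonControl{\LinEinstein}[\Horizon]-\tfrac12\SurfaceGravity_\Horizon)$, there is a margin of at least $\tfrac14\min_\Horizon \SurfaceGravity_\Horizon$ over the threshold required for \eqref{linear:eq:redshift-with-gamma:main-est} to close. Choosing $\varepsilon_{\RedShiftN}$ and $\epsilon$ small, this margin produces some $c_0>0$ such that
\begin{equation*}
  c_0\,\RedShiftEnergy_\gamma[h]\;\lesssim\;\|(\LinEinstein-\gamma)h\|_{\InducedLTwo}^2+\EnergyKill_\gamma[h]+\|h\|_{\InducedLTwo}^2
\end{equation*}
uniformly in $\sigma\in\Omega$. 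The Killing estimate \eqref{linear:eq:Killing-estimate-with-gamma:three}, together with $\|\KillT h\|_{\InducedLTwo}^2=|\sigma|^2\|h\|_{\InducedLTwo}^2\le C_\Omega\|h\|_{\InducedLTwo}^2$, controls the $\EnergyKill_\gamma[h]$ term in terms of $\|(\LinEinstein-\gamma)h\|_{\InducedLTwo}^2$, a small multiple of $\|h\|_{\InducedHk{1}}^2$, and $(C_\Omega+a\gamma)\|h\|_{\InducedLTwo}^2$. Substituting back, invoking the coercivity $\RedShiftEnergy_\gamma[h]\gtrsim \|h\|_{\InducedHk{1}}^2+\gamma\|h\|_{\InducedLTwo}^2$ from \eqref{linear:eq:RedShiftEstimate-with-gamma:one}, and multiplying through by $e^{2\Im\sigma\tStar}$ yields
\begin{equation*}
  \|u\|_{\InducedHk{1}}^2+\gamma\|u\|_{\InducedLTwo}^2\;\lesssim\;\|(\widehat{\LinEinstein}(\sigma)-\gamma)u\|_{\InducedLTwo}^2+\epsilon\|u\|_{\InducedHk{1}}^2+(C_\Omega+aC\gamma+1)\|u\|_{\InducedLTwo}^2.
\end{equation*}
Taking $\epsilon$ small absorbs the $\epsilon\|u\|_{\InducedHk{1}}^2$ term on the left; taking $\gamma>\gamma_1$ large compared to $C_\Omega$ and then restricting $a$ small enough that $aC\gamma<\gamma/4$ absorbs the remaining $L^2$ error into the good $\gamma\|u\|_{\InducedLTwo}^2$ term, producing \eqref{linear:eq:meromorphic:injective:main-eqn}.

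The main obstacle is juggling the four parameters $\varepsilon_{\RedShiftN},\epsilon,\gamma,a$ in a consistent hierarchy: first fix $\varepsilon_{\RedShiftN}$ and $\epsilon$ small enough to use the $\tfrac14\SurfaceGravity$ slack in $\Omega$; next fix $\gamma_1$ and hence $\gamma>\gamma_1$ in terms of $C_\Omega$, which is independent of both $a$ and $\gamma$; and only then restrict $a$. This order is permissible because the constant $C$ appearing in the Killing estimate's $aC\gamma\|h\|^2$ error term is universal, so for each $\gamma$ the slowly-rotating restriction $a<1/(4C\gamma)$ is a genuine constraint on the black hole parameters, and the hypothesis that $g_b$ be ``sufficiently slowly rotating" is enough to meet it once $\gamma_1$ has been chosen.
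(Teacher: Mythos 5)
Your two-step combination of the $\gamma$-modified redshift and Killing estimates, together with the substitution $h=e^{-\ImagUnit\sigma\tStar}u$ and the subsequent multiplication by $e^{2\Im\sigma\tStar}$, is exactly the strategy the paper uses. You dispense with the paper's auxiliary frequency shift $\sigma\mapsto\sigma+c$; that is fine here, since the domain hypothesis $\Im\sigma>\tfrac{1}{2}\max_{\Horizon}\left(\SHorizonControl{\LinEinstein}[\Horizon]-\tfrac{1}{2}\SurfaceGravity_{\Horizon}\right)$ combined with $\SHorizonControl{\LinEinstein}[\Horizon]\approx 4\SurfaceGravity_{\Horizon}$ from Lemma~\ref{linear:lemma:SubPOp:horizons} already forces $\Im\sigma$ to be bounded away from zero on the compact set $\Omega$, which is all the positivity the $c$-shift would buy you.

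However, your final paragraph contains a quantifier error that would break the argument if taken literally: you state the slow-rotation constraint as $a<1/(4C\gamma)$. If that were the actual constraint, then for any fixed $a>0$ the admissible range of $\gamma$ would be bounded above, and no $\gamma_1$ could exist such that all $\gamma>\gamma_1$ work. Fortunately, the inequality you actually derived in the preceding paragraph, $aC\gamma<\gamma/4$, divides out to the $\gamma$-independent condition $a<1/(4C)$, and the paper's proof confirms this structure (the constraint there is $aC_{\RedShiftN}(\epsilon_0)C_{\TFixer}<\tfrac{1}{2}$, with no $\gamma$ on either side). The correct hierarchy is therefore: fix $\varepsilon_{\RedShiftN},\epsilon$ and the Killing-estimate smallness $\delta$ first, which pin down the operator-dependent constants; the slow-rotation bound on $a$ and the lower bound $\gamma_1$ are then determined \emph{independently of each other} from those constants, with $\gamma_1$ also absorbing the $\Omega$-dependent factor $\sup_{\Omega}|\sigma|^2$ from $\norm{\KillT h}^2$. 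It is precisely because the $a$-constraint is $\gamma$-free that the theorem's quantifier order --- restrict $a$ once and for all, then find $\gamma_1$ --- is attainable; your third paragraph should be rewritten to reflect this.
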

\begin{proof}
  It suffices to prove \eqref{linear:eq:meromorphic:injective:main-eqn} for
  smooth $u$. The rest of the conclusions then follow by a density
  argument.
  
  We first define $P_2, P_1$ as in \eqref{linear:eq:P-i-def} such that $P_i$
  is a bounded differential operators of order $i$ such that
  \begin{equation*}
    \LinEinstein u =  \frac{1}{\GInvdtdt} D_{\tStar}^2 u  +  P_1D_{\tStar} u + P_2 u.
  \end{equation*}
  We first apply the $\TFixer$-energy estimate in part
  \ref{linear:item:Killing-estimate-with-gamma:three} of Corollary
  \ref{linear:coro:Killing-estimate-with-gamma} to the function
  $e^{-\ImagUnit(\sigma + c)\tStar} u(x)$ and multiply both sides by
  $e^{2\Im c \tStar}$ to see that
  \begin{equation*}
    \begin{split}
      2\Im(c+\sigma)\EnergyKill_\gamma(\tStar)[e^{-\ImagUnit\sigma \tStar}u]
      \le{}&
      \epsilon \left(\norm*{e^{-\ImagUnit\sigma \tStar} u}^2_{\InducedHk{1}(\widetilde{\Sigma})}
        + \norm*{(\LinEinstein - \gamma) (e^{-\ImagUnit\sigma \tStar} u)}_{\InducedLTwo(\widetilde{\Sigma})}^2\right)\\
        &+ \epsilon \norm*{\left(cP_1+2c\GInvdtdt^{-1}+ c^2\GInvdtdt^{-1}\right)e^{-\ImagUnit\sigma \tStar} u}_{\InducedLTwo(\widetilde{\Sigma})}^2 
      + C(\epsilon)
      |c+\sigma|^2\norm*{e^{-\ImagUnit\sigma\tStar}u}_{\InducedLTwo(\widetilde{\Sigma})}^2\\
      &+ a C(\epsilon)\norm*{ e^{-\ImagUnit\sigma\tStar}u}_{\InducedHk{1}(\widetilde{\Sigma})}^2
      + a\gamma C\norm*{e^{-\ImagUnit\sigma\tStar}u}_{\InducedLTwo(\widetilde{\Sigma})}^2,
    \end{split}
  \end{equation*}
  where we recall $\widetilde{\Sigma}$ is as constructed in
  \eqref{linear:eq:Sigma-tilde:def}.  Now, we choose $c$ such that
  $\Im(c+\sigma)>0$ for all $\sigma\in\Omega$ 
  so that the left-hand side of the inequality is positive. We
  emphasize that this choice of $c$ very much depends on $\Omega$.

  Next, recall that
  $P_1: \InducedHk{1}(\Sigma)\to \InducedLTwo(\Sigma)$ is a bounded
  operator. Thus, given $\delta>0$, there exist constants
  $C_{\TFixer}(\delta), C_{\TFixer}$ depending on $c$, $\Omega$ and
  $P_1$, but independent of $\gamma$ such that for all $\sigma\in \Omega$,
  \begin{align}
    \EnergyKill_\gamma(\tStar)[e^{-\ImagUnit\sigma \tStar}u]
    \le{}& \delta\left(\norm*{e^{-\ImagUnit\sigma \tStar}u}^2_{\InducedHk{1}(\widetilde{\Sigma})}
           + \norm*{(\LinEinstein - \gamma)(e^{-\ImagUnit\sigma \tStar} u)}^2_{\InducedLTwo(\widetilde{\Sigma})}\right)
           + C_{\TFixer}(\delta)\norm*{e^{-\ImagUnit\sigma \tStar} u}_{\InducedLTwo(\widetilde{\Sigma})}^2\notag \\
         &+ aC_{\TFixer}(\delta)\norm*{ e^{-\ImagUnit\sigma\tStar}u}_{\InducedHk{1}(\widetilde{\Sigma})}^2
           + aC_{\TFixer}\gamma \norm*{e^{-\ImagUnit\sigma\tStar} u}^2_{\InducedLTwo(\widetilde{\Sigma})}\label{linear:eq:meromorphic:injective:Killing-first} .           
  \end{align}
  We now apply the redshift estimate in part
  \ref{linear:item:RedShiftEstimate-with-gamma:two} of Corollary
  \ref{linear:coro:redshift-energy-estimate-with-gamma} with
  \begin{equation*}
    \varepsilon_{\RedShiftN} < \frac{1}{4}\max_{\Horizon= \EventHorizonFuture, \CosmologicalHorizonFuture}\SurfaceGravity_{\Horizon}
  \end{equation*}
  to $e^{-\ImagUnit\sigma \tStar}u$ to deduce that
  \begin{align}
    &\left(2\Im\sigma - \max_{\Horizon=\EventHorizonFuture, \CosmologicalHorizonFuture}\left(\SHorizonControl{\LinEinstein}[\Horizon] - \SurfaceGravity_{\Horizon} + \varepsilon_{\RedShiftN} + \epsilon_0\right)\right)
      \RedShiftEnergy_\gamma(\tStar)[e^{-\ImagUnit\sigma\tStar}u] \notag \\ 
    \le{}& C^\RedShiftN(\epsilon_0)\left( \norm*{(\LinEinstein - \gamma)(e^{-\ImagUnit\sigma\tStar}u)}_{\InducedLTwo(\Sigma)}^2
           + \norm*{e^{-\ImagUnit\sigma\tStar}u}_{\InducedLTwo(\Sigma)}^2
           + \EnergyKill_\gamma(\tStar)[e^{-\ImagUnit\sigma\tStar} u]\right)       \label{linear:eq:meromorphic:injective:redshift-first},
  \end{align}
  where
  $\epsilon_0 = \frac{1}{4}\max_{\Horizon= \EventHorizonFuture,
    \CosmologicalHorizonFuture}\SurfaceGravity_{\Horizon}$ is chosen
  so that no matter our choice of $\Omega$,
  \begin{equation*}
    2\Im\sigma - \max_{\Horizon=\EventHorizonFuture, \CosmologicalHorizonFuture}\left(\SHorizonControl{\LinEinstein}[\Horizon] - \SurfaceGravity_{\Horizon}\right) - \varepsilon_{\RedShiftN} - \epsilon_0 > 0.
  \end{equation*}
  We will now apply
  (\ref{linear:eq:meromorphic:injective:Killing-first}) to control the
  $\TFixer$-energy norm on the right-hand side of
  (\ref{linear:eq:meromorphic:injective:redshift-first}). To do this,
  we realize that there exists $\delta_0$ sufficiently small, and
  independent of our choice of $\Omega$ such that for any
  $\delta\le\delta_0$,
  \begin{equation*}
    \delta C_{\RedShiftN}(\epsilon_0)
    \norm*{e^{-\ImagUnit\sigma\tStar}u}_{\InducedHk{1}(\Sigma)}^2
    <\epsilon_0 \RedShiftEnergy(\tStar)[e^{-\ImagUnit\sigma\tStar}u]. 
  \end{equation*}

  Now let $C_{\TFixer}(\delta_0)$ be the large constant such that
  (\ref{linear:eq:meromorphic:injective:Killing-first}) holds with the choice
  $\delta=\delta_0$. We can then find $a_0$ sufficiently small such
  that for all $a<a_0$,
  \begin{equation}
    \label{linear:eq:mermorphic:injective:a-condition}
    aC_{\RedShiftN}(\epsilon_0)C_{\TFixer}(\delta_0)\norm*{
      e^{-\ImagUnit\sigma\tStar}u}_{\InducedHk{1}(\Sigma)}^2 <
    \epsilon_0 \RedShiftEnergy(\tStar)[e^{-\ImagUnit\sigma\tStar}u],\qquad
    a C_{\RedShiftN}(\epsilon_0) C_{\TFixer} < \frac{1}{2}.
  \end{equation}
  We observe that this is possible precisely because both
  $\epsilon_0=\frac{1}{4}\max_{\Horizon= \EventHorizonFuture,
    \CosmologicalHorizonFuture}\SurfaceGravity_{\Horizon}$ and
  $\delta_0$ were chosen in a manner such that so that they are
  positive for all $a$, independent of the choice of $\Omega$. As a
  result, $C_{\RedShiftN}(\epsilon_0), C_{\TFixer}(\delta_0)$ are
  finite, and we can pick $a_0$ such that for all $a<a_0$,
  \eqref{linear:eq:mermorphic:injective:a-condition} is satisfied by
  continuity.
  
  Thus, using (\ref{linear:eq:meromorphic:injective:Killing-first})
  to control the $\TFixer$-energy in
  (\ref{linear:eq:meromorphic:injective:redshift-first}), we have that
  there exists some $C(\epsilon,\delta)$ such that
  \begin{equation*}
    \RedShiftEnergy_\gamma(\tStar)[e^{-\ImagUnit\sigma\tStar}u]
    \le C(\epsilon_0,\delta_0)
    \left(\norm*{(\LinEinstein - \gamma)\left(e^{-\ImagUnit\sigma\tStar}u\right)}_{\InducedLTwo(\Sigma)}^2
      + \norm*{e^{-\ImagUnit\sigma\tStar}u}_{\InducedLTwo(\Sigma)}^2 \right)
    + \frac{\gamma}{2}\norm*{e^{-\ImagUnit\sigma\tStar}u}_{\InducedLTwo(\Sigma)}^2. 
  \end{equation*}
  
  Picking $\gamma$ such that
  $\gamma>2 C(\epsilon_0,\delta_0)$, and
  observing that
  \begin{equation*}
    \RedShiftEnergy_\gamma(\tStar)[{h}] =
    \RedShiftEnergy_{\gamma'}(\tStar)[{h}] + 
    (\gamma-\gamma')\norm*{{h}}_{\LTwo(\Sigma)}^2,
  \end{equation*}  
  we have in fact that
  \begin{equation*}
    \RedShiftEnergy(\tStar)[e^{-\ImagUnit\sigma\tStar}u]
    \le C(\epsilon_0, \delta_0) \norm*{(\LinEinstein - \gamma)e^{-\ImagUnit\sigma\tStar}u}_{\InducedLTwo(\Sigma)}^2.
  \end{equation*}
  Multiplying by $e^{2\Im\sigma\tStar}$, both sides of the inequality
  become independent of time, and using the formula for the
  Laplace-transformed operator and \eqref{linear:eq:RedShiftEstimate:one}, we
  have that
  \begin{equation*}
    \norm*{u}_{\InducedHk{1}_\sigma(\Sigma)}\le C(\epsilon,\delta)
    \norm*{(\widehat{\LinEinstein}(\sigma) - \gamma)u}_{\InducedLTwo(\Sigma)}, 
  \end{equation*}
  as desired, concluding the proof of Theorem 
  \ref{linear:thm:meromorphic:injective}. 
\end{proof}

\subsubsection{Surjectivity}

In the previous subsection, we established injectivity of
$\widehat{\LinEinstein}(\sigma) - \gamma:
D^1(\widehat{\LinEinstein}(\sigma))\to \LTwo(\Sigma)$ for a certain
range of $\sigma$. To complete the proof of invertibility, we need to
verify surjectivity, which will follow from injectivity of the adjoint
operator.

\begin{lemma}
  Let $A$ be a closed, densely defined operator on a Hilbert space $H$
  with closed range. Then $A$ is surjective if and only if the adjoint
  $A^*$ is injective. 
\end{lemma}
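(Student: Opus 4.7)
The plan is to deduce this from the standard orthogonality identity
\[
  \overline{\Ran(A)} = (\ker A^*)^\perp,
\]
which holds for any closed, densely defined operator $A$ on a Hilbert space. Since $A$ is densely defined, the adjoint $A^*$ is well-defined as a closed operator; the identity above is a direct consequence of unpacking the definition of the adjoint (an element $y$ is orthogonal to $\Ran(A)$ iff $\bangle{Ax,y}=0$ for all $x\in D(A)$, i.e. iff $y\in D(A^*)$ and $A^*y=0$).

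Given the closed range hypothesis, $\overline{\Ran(A)} = \Ran(A)$, so the identity becomes $\Ran(A) = (\ker A^*)^\perp$. The equivalence then follows from a short chain: $A$ is surjective iff $\Ran(A) = H$ iff $(\ker A^*)^\perp = H$ iff $\ker A^* = \{0\}$ iff $A^*$ is injective, where the third equivalence uses that $(\ker A^*)^\perp = H$ forces $\ker A^* \subset (\ker A^*)^{\perp\perp\perp} = \{0\}$ (using that $\ker A^*$ is closed, being the kernel of a closed operator).

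Since every step is a standard functional-analytic fact, there is no real obstacle; the proof is essentially a two-line citation of the closed range theorem. The only point worth being careful about is ensuring $A^*$ is densely defined and closed so that the biorthogonality manipulations are justified — this is automatic from $A$ being closed and densely defined on a Hilbert space. In the write-up I would simply state the orthogonality identity, invoke the closed range hypothesis, and conclude.
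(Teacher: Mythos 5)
Your proof is correct and uses the standard orthogonality identity $\overline{\Ran(A)} = (\ker A^*)^\perp$ together with the closed-range hypothesis, which is exactly the classical argument the paper is implicitly invoking (the paper states this lemma without proof as a known fact). One small slip: $(\ker A^*)^{\perp\perp\perp}$ should presumably read $(\ker A^*)^{\perp\perp}$, though the conclusion is immediate anyway since any $y\in\ker A^*$ lies in $(\ker A^*)^\perp = H$... more simply, $y\in\ker A^*\cap(\ker A^*)^\perp=\{0\}$.
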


We now move onto the main result of this section, the injectivity of
$\widehat{L}^\dagger(\sigma) - \gamma$. 

\begin{theorem}
  \label{linear:thm:meromorphic:surjective}
  For all sufficiently
  slowly-rotating \KdS{} metrics $g_b$ where $b=(M, a)$ the following
  holds. For a fixed compact domain
  \begin{equation*}
    \Omega\subset\curlyBrace*{\sigma\in\Complex:\Im\sigma
      > \frac{1}{2}\max_{\Horizon=\EventHorizonFuture,\CosmologicalHorizonFuture}
      \left(\SHorizonControl{\LinEinstein}[\Horizon]
        + \frac{3}{2}\SurfaceGravity_{\Horizon} \right) },
  \end{equation*}
  there exists $\gamma_1$ depending on $\Omega, g_b$, such that
  $\widehat{\LinEinstein}^\dagger(\sigma)$ is injective for
  $\gamma>\gamma_1$ and
  $D(\widehat{\LinEinstein}^\dagger(\sigma))\subset H^1(\Sigma)$ for
  all $\sigma\in\Omega$. In addition, the following estimate holds:
  \begin{equation*}
    \norm{u}_{\InducedHk{1}(\Sigma)} \lesssim \norm*{(\widehat{\LinEinstein}^\dagger(\sigma) - \gamma)u}_{\InducedLTwo(\Sigma)}.
  \end{equation*}
\end{theorem}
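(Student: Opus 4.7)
The strategy is to run the argument of Theorem \ref{linear:thm:meromorphic:injective} verbatim, but now applied to the strongly hyperbolic operator $\LinEinstein^\dagger$ (which is strongly hyperbolic by the remark following \eqref{linear:eq:L-dagger-def}) and using the \emph{backward-in-time} versions of the $\TFixer$-energy estimate and the redshift estimate. Concretely, given $u\in D^1(\widehat{\LinEinstein}^\dagger(\sigma))$, by density we may assume $u\in C^\infty_c(\Sigma)$, so that $\tilde h := e^{-\ImagUnit\overline{\sigma}\tStar}u$ vanishes identically on a neighborhood of $\EventHorizonFuture_-\cup\CosmologicalHorizonFuture_+$ (and in particular on the horizons themselves). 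This puts us squarely in the setting of parts \ref{linear:item: Killing-estimate:four} / \ref{linear:item:Killing-estimate-with-gamma:four} of Theorem \ref{linear:thm:Killing-estimate}--Corollary \ref{linear:coro:Killing-estimate-with-gamma} and of part \ref{linear:item:RedShiftEstimate-with-gamma:three} of Corollary \ref{linear:coro:redshift-energy-estimate-with-gamma}, neither of which was usable in Theorem \ref{linear:thm:meromorphic:injective} since there $h=e^{-\ImagUnit\sigma\tStar}u$ need not vanish on the horizons.

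\textbf{Step 1: translating Laplace data.} Because $\abs{\tilde h}^2 = e^{-2\Im\sigma\tStar}\abs{u}^2$, differentiation in $\tStar$ now produces the factor $-2\Im\sigma$ (rather than $+2\Im\sigma$), so it is the quantity $-\p_{\tStar}\EnergyKill_\gamma(\tStar)[\tilde h]$ and $-\p_{\tStar}\RedShiftEnergy_\gamma(\tStar)[\tilde h]$ that become proportional to $\Im\sigma\cdot(\text{energy})$ with the correct sign, exactly when $\Im\sigma > 0$. After multiplication by $e^{2\Im\sigma\tStar}$, every instance of $(\LinEinstein^\dagger-\gamma)\tilde h$ becomes $(\widehat{\LinEinstein}^\dagger(\sigma)-\gamma)u$, and the $\TFixer$- and redshift-energies become their Laplace-transformed $\sigma$-dependent analogues; the lower-order $\abs{\KillT\tilde h}^2$ picks up a factor $\abs{\sigma}^2\abs{u}^2$, which will be absorbed into $\EnergyKill_{\gamma,\sigma}[u]\sim \InducedHk{1}_\sigma$ after tuning the auxiliary shift $c$ (as in the injective case) so that $\Im(c+\sigma)>0$ uniformly on~$\Omega$.

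\textbf{Step 2: combining the two backward estimates.} The key quantitative input is part \ref{linear:item:RedShiftEstimate-with-gamma:three} of Corollary \ref{linear:coro:redshift-energy-estimate-with-gamma}, which for $\LinearOp=\LinEinstein^\dagger$ gives the principal coefficient
\[
\max_{\Horizon}\bigl(-\SHorizonControl{\LinEinstein^\dagger}^*[\Horizon]+\SurfaceGravity_{\Horizon}+\varepsilon_{\RedShiftN}+\epsilon\bigr).
\]
Using $\Re[\overline{\xi}\cdot S^\dagger\xi]=\Re[\overline{\xi}\cdot S\xi]$ and the sign flip $S\mapsto -S^\dagger$ in the subprincipal of $\LinEinstein^\dagger$ (cf.~\eqref{linear:eq:L-dagger-def}), one finds $-\SHorizonControl{\LinEinstein^\dagger}^*[\Horizon]=\SHorizonControl{\LinEinstein}[\Horizon]$. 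Choosing $\varepsilon_{\RedShiftN}$ and $\epsilon$ each smaller than $\tfrac{1}{4}\max_{\Horizon}\SurfaceGravity_{\Horizon}$ then converts the inequality ``LHS positive and dominates RHS'' into precisely the threshold $\Im\sigma > \tfrac{1}{2}\max_{\Horizon}(\SHorizonControl{\LinEinstein}[\Horizon]+\tfrac{3}{2}\SurfaceGravity_{\Horizon})$ that defines $\Omega$.

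\textbf{Step 3: closing with the $\TFixer$-energy and $a$-smallness.} With the redshift estimate in hand, the argument from Theorem \ref{linear:thm:meromorphic:injective} goes through unchanged: use the backward $\TFixer$-estimate to bound $\EnergyKill_\gamma(\tStar)[\tilde h]$ on the right-hand side of the redshift estimate in terms of $\norm{\tilde h}_{\InducedHk{1}}^2$, $\norm{(\LinEinstein^\dagger-\gamma)\tilde h}_{\InducedLTwo}^2$, and an $\InducedLTwo$ remainder; choose successively $\epsilon_0=\tfrac14\max_\Horizon\SurfaceGravity_\Horizon$, then $\delta_0$ small so that $\delta_0 C_{\RedShiftN}(\epsilon_0)\cdot\norm{\cdot}_{\InducedHk{1}}^2$ is absorbed into $\epsilon_0\RedShiftEnergy$; then take $a<a_0$ so that $aC_{\RedShiftN}(\epsilon_0)C_{\TFixer}(\delta_0)$ and $aC_{\RedShiftN}(\epsilon_0)C_{\TFixer}$ are both small; finally take $\gamma$ larger than the accumulated constant $C(\epsilon_0,\delta_0)$ to absorb the $\gamma$-independent $\InducedLTwo$ lower order term. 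Multiplying by $e^{2\Im\sigma\tStar}$ and invoking the equivalence \eqref{linear:eq:RedShiftEstimate-with-gamma:one} yields
\[
\norm{u}_{\InducedHk{1}(\Sigma)}\lesssim \norm{(\widehat{\LinEinstein}^\dagger(\sigma)-\gamma)u}_{\InducedLTwo(\Sigma)},
\]
which is the desired conclusion, with injectivity and the domain inclusion $D^1(\widehat{\LinEinstein}^\dagger(\sigma))\subset \InducedHk{1}(\Sigma)$ as immediate corollaries.

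\textbf{Expected main obstacle.} The bookkeeping for the adjoint is the only subtle part: one must verify that the natural cancellations occur at the level of principal constants so that the half-plane threshold comes out to $\tfrac{1}{2}\max_\Horizon(\SHorizonControl{\LinEinstein}[\Horizon]+\tfrac{3}{2}\SurfaceGravity_\Horizon)$ rather than to an expression involving $\SHorizonControl{\LinEinstein}^*$; this is where the identity $\Re[\overline{\xi}\cdot S^\dagger\xi]=\Re[\overline{\xi}\cdot S\xi]$ plays the decisive role, together with the $\tfrac{1}{2}\SurfaceGravity_\Horizon$ slack provided by $\varepsilon_{\RedShiftN}+\epsilon$. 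Once this is correctly set up, the remainder is a routine replay of the estimates of Theorem \ref{linear:thm:meromorphic:injective}.
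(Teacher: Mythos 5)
Your proposal is correct and follows essentially the same route as the paper: run the injectivity argument with $\LinEinstein^\dagger$ in place of $\LinEinstein$, using the backward‐in‐time $\TFixer$‐ and redshift‐estimates (available because $u\in C^\infty_c(\Sigma)$ vanishes on the horizons), and exploit $-\SHorizonControl{\LinEinstein^\dagger}^*=\SHorizonControl{\LinEinstein}$ to land on the stated threshold. One bookkeeping detail you omit: the dual redshift estimate (part \ref{linear:item:RedShiftEstimate-with-gamma:three} of Corollary \ref{linear:coro:redshift-energy-estimate-with-gamma}, and ultimately part \ref{linear:item:RedShiftEstimate:three} of Theorem \ref{linear:thm:redshift-energy-estimate}) produces $\EnergyKill_\gamma(\tStar)[\chi_\bullet h]$ rather than $\EnergyKill_\gamma(\tStar)[h]$ on its right-hand side, so the backward $\TFixer$‐estimate must be applied to $e^{-\ImagUnit(\overline\sigma+c)\tStar}\chi_\bullet u$; this generates an additional commutator term $[\LinEinstein^\dagger,\chi_\bullet]$ which is harmlessly absorbed into the $\InducedHk{1}$ error since $\chi_\bullet$ is a smooth fixed cutoff in $r$. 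With that adjustment, your plan reproduces the paper's proof.
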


\begin{proof}
  The outline of the proof closely follows the proof of Theorem
  \ref{linear:thm:meromorphic:injective}. The main difference lies in that
  the domain of $\LinEinstein^\dagger$ consists of functions vanishing
  along the horizons. Thus instead of applying the estimates in part
  \ref{linear:item:RedShiftEstimate-with-gamma:two} of Corollary
  \ref{linear:coro:redshift-energy-estimate-with-gamma}, we apply the
  estimates in part \ref{linear:item:RedShiftEstimate-with-gamma:three} of
  Corollary \ref{linear:coro:redshift-energy-estimate-with-gamma}.
  
  We apply Corollary \ref{linear:coro:Killing-estimate-with-gamma}
  part \ref{linear:item:Killing-estimate-with-gamma:four} to the
  function
  $e^{-\ImagUnit(\overline{\sigma} + c)\tStar}\chi_{\bullet}u$, where
  $\chi_\bullet = \chi_\bullet(r)$ is that from Part
  \ref{linear:item:RedShiftEstimate:three} of Theorem
  \ref{linear:thm:redshift-energy-estimate} and $c$ is some constant
  which we will determine later, with $\LinEinstein^\dagger$ in place
  of $\LinEinstein$. Then,
  \begin{equation*}
    \begin{split}
      2\Im(\sigma+c)E_\gamma(\tStar)[e^{-\ImagUnit\overline{\sigma}\tStar} \chi_{\bullet}u] 
      \le {}& 
              \epsilon\left(
              \norm*{e^{-\ImagUnit\overline{\sigma}\tStar} u}_{\InducedHk{1}(\widetilde{\Sigma})}^2
              + \norm*{(\LinEinstein^\dagger - \gamma)e^{-\ImagUnit\overline{\sigma}\tStar}  u}_{\InducedLTwo(\widetilde{\Sigma})}^2
              + \norm*{\left[\LinEinstein^\dagger, \chi_{\bullet} \right]e^{-\ImagUnit\overline{\sigma}\tStar}u}_{\InducedLTwo(\widetilde{\Sigma})}^2\right)\\
            &+ \epsilon\norm*{\left(cP_1^\dagger + 2c\GInvdtdt^{-1} + c^2\GInvdtdt^{-1}\right)e^{-\ImagUnit\overline{\sigma}\tStar} u}_{\InducedLTwo(\widetilde{\Sigma})}^2 
              +C(\epsilon)|c+\overline{\sigma}|^2\norm*{e^{-\ImagUnit\overline{\sigma}\tStar} u}_{\InducedLTwo(\widetilde{\Sigma})}^2\\
            &+ aC(\epsilon)\norm*{ e^{-\ImagUnit\overline{\sigma}\tStar} u}_{\InducedHk{1}(\widetilde{\Sigma})}^2
              + a\gamma C\norm*{e^{-\ImagUnit\overline{\sigma}\tStar}u}_{\InducedLTwo(\widetilde{\Sigma})}^2.
    \end{split}
  \end{equation*}
  We choose $c$ such that $\Im(\sigma+c)>0$ so that the left-hand side
  is positive. Like in the proof of Theorem
  \ref{linear:thm:meromorphic:injective}, we use the fact that $P_1^\dagger$
  is a bounded map from $\InducedHk{1}(\Sigma)\to \LTwo(\Sigma)$, and
  that $\epsilon$ can be made arbitrarily small to conclude that given
  $\delta>0$, there exists $C(\delta)$ such that for any
  $\sigma\in\Omega$,
  \begin{align}
    \EnergyKill_\gamma(\tStar)[e^{-\ImagUnit\overline{\sigma}\tStar}\chi_{\bullet}(r) u]
    \le{} &
      \delta\left(\norm*{e^{-\ImagUnit\overline{\sigma}\tStar} u}_{\InducedHk{1}(\widetilde{\Sigma})}^2
      + \norm*{(\LinEinstein^\dagger - \gamma)e^{-\ImagUnit\overline{\sigma}\tStar} u}_{\InducedLTwo(\widetilde{\Sigma})}^2\right)\notag \\
          &+ a C_{\TFixer}(\delta)\norm*{ e^{-\ImagUnit\overline{\sigma}\tStar} u}_{\InducedHk{1}(\widetilde{\Sigma})}^2
           +  a C_{\TFixer}\gamma\norm*{e^{-\ImagUnit\overline{\sigma}\tStar} u}_{\InducedLTwo(\widetilde{\Sigma})}^2
            \label{linear:eq:meromorphic:adjoint-injective:Killing-first}.
  \end{align} 
  Next, we apply the redshift estimate in part
  \ref{linear:item:RedShiftEstimate-with-gamma:three} of Corollary
  \ref{linear:coro:redshift-energy-estimate-with-gamma}, with
  $\LinEinstein^\dagger$ in place of $\LinEinstein$, and with
  $\varepsilon_{\RedShiftN}<\frac{1}{4}\max_{\Horizon=\EventHorizonFuture,\CosmologicalHorizonFuture}\SurfaceGravity_{\Horizon}$. Recalling
  that $\SHorizonControl{\LinEinstein^\dagger}^* =
  -\SHorizonControl{\LinEinstein}$, we have that 
  \begin{align}
    &\left(
      2\Im\sigma - \max_{\Horizon=\EventHorizonFuture, \CosmologicalHorizonFuture}(\SHorizonControl{\LinEinstein}[\Horizon]
      - \SurfaceGravity_{\Horizon}) - \varepsilon_{\RedShiftN}-\epsilon_0
      \right)
      \RedShiftEnergy_\gamma(\tStar)[e^{-\ImagUnit\overline{\sigma}\tStar}u] \notag \\
    \le{}& 
           C_{\RedShiftN}(\epsilon_0)\left(
           \norm*{(\LinEinstein^\dagger - \gamma)e^{-\ImagUnit\overline{\sigma}\tStar}u}_{\InducedLTwo(\Sigma)}^2
           + \norm*{e^{-\ImagUnit\overline{\sigma}\tStar}u}_{\InducedLTwo(\Sigma)}^2
           + \EnergyKill_\gamma(\tStar)[e^{-\ImagUnit\overline{\sigma}\tStar}\chi_\bullet(r)u]
           \right),  \label{linear:eq:merormorphic:adjoint-injective:Redshift-first}
  \end{align}
  where we choose
  $\epsilon_0 =
  \frac{1}{4}\max_{\Horizon=\EventHorizonFuture,\CosmologicalHorizonFuture}\SurfaceGravity_{\Horizon}$,
  so that no matter our choice of $\Omega$,
  \begin{equation*}
    2\Im\sigma-\max_{\Horizon=\EventHorizonFuture,\CosmologicalHorizonFuture}(\SHorizonControl{\LinEinstein}[\Horizon]+\SurfaceGravity_{\Horizon})
    - \varepsilon_{\RedShiftN}-\epsilon_0 > 0.
  \end{equation*}
  We now use (\ref{linear:eq:meromorphic:adjoint-injective:Killing-first}) to
  control the
  $C(\epsilon)\EnergyKill_\gamma(\tStar)[e^{-\ImagUnit\overline{\sigma}\tStar}u]$
  term on the right-hand side of
  (\ref{linear:eq:merormorphic:adjoint-injective:Redshift-first}).
  We first recognize that there exists a $\delta_0$ such that for all
  $\delta<\delta_0$,  
  \begin{equation*}
    \delta C_{\RedShiftN}(\epsilon_0)\norm*{e^{-\ImagUnit\overline{\sigma}\tStar}u}_{\InducedHk{1}(\Sigma)}^2
    \le \epsilon_0 \RedShiftEnergy_\gamma(\tStar)[e^{-\ImagUnit\overline{\sigma}\tStar}u].
  \end{equation*}
  We then see that there exists some $a_0$ sufficiently small so that
  for all $a<a_0$,
  \begin{equation}
    \label{linear:eq:mermorphic:surjective:a-condition}
    aC_{\RedShiftN}(\epsilon_0)C_{\TFixer}(\delta_0)\norm*{
      e^{-\ImagUnit\sigma\tStar}u}_{\InducedHk{1}(\Sigma)}^2 <
    \epsilon_0 \RedShiftEnergy(\tStar)[e^{-\ImagUnit\sigma\tStar}u],\qquad
    a C_{\RedShiftN}(\epsilon_0) C_{\TFixer} < \frac{1}{2}.
  \end{equation}
  As in the proof of Theorem \ref{linear:thm:meromorphic:injective}, we
  remark that the existence of such an $a_0$ comes from the fact that
  $\epsilon_0$ is uniformly positive and bounded away from $0$ for all
  $a$ sufficiently small. Then we have that
  \begin{equation*}
    \RedShiftEnergy_\gamma(\tStar)[e^{-\ImagUnit\overline{\sigma}\tStar}u]
    \le C(\epsilon_0, \delta_0)\left(\norm*{(\LinEinstein^\dagger - \gamma)e^{-\ImagUnit\overline{\sigma}\tStar}u}_{\InducedLTwo(\Sigma)}^2
      +\norm*{e^{-\ImagUnit\overline{\sigma}\tStar}u}_{\InducedLTwo(\Sigma)}^2\right)
    + \frac{\gamma}{2}\norm*{e^{-\ImagUnit\overline{\sigma}\tStar}u}_{\InducedLTwo(\Sigma)}^2. 
  \end{equation*}
  Again, we choose $\gamma$ such that
  $\frac{\gamma}{2} >C(\epsilon_0,\delta_0)$. The $L^2$ norm on the
  right-hand side of
  (\ref{linear:eq:meromorphic:adjoint-injective:Killing-first}) can
  then be absorbed to obtain
  \begin{equation*}
    \RedShiftEnergy(\tStar)[e^{-\ImagUnit\overline{\sigma}\tStar}u]
    \le C(\epsilon,\delta)\norm*{(\LinEinstein^\dagger - \gamma)e^{-\ImagUnit\overline{\sigma}\tStar}u}_{\InducedLTwo(\Sigma)}^2. 
  \end{equation*}
  for any $\gamma>\gamma_0$.  Finally, multiplying both sides by
  $e^{2\Im\sigma\tStar}$ and using the definition of the
  Laplace-transformed operator and \eqref{linear:eq:RedShiftEstimate:one}
  concludes the proof of Theorem \ref{linear:thm:meromorphic:surjective}.
\end{proof}

Given the proofs of injectivity for $\widehat{\LinEinstein}(\sigma)$ and its
adjoint on the region
\begin{equation*}
  \Im\sigma>\max_{\Horizon=\EventHorizonFuture, \CosmologicalHorizonFuture}\frac{1}{2}\left(
    \SHorizonControl{\LinEinstein}[\Horizon]
    + \frac{3}{2}\SurfaceGravity_{\Horizon}
  \right),
\end{equation*}
we can now prove invertibility.

\begin{theorem}
  \label{linear:thm:meromorphic:invertibility:zero-order}
  Let $\LinEinstein$ be the gauged Einstein operator linearized around
  a slowly-rotating \KdS{} black hole. Fix a compact domain
  \begin{equation*}
    \Omega\subset \curlyBrace*{
      \sigma\in\Complex: 
      \Im\sigma>\frac{1}{2}\max_{\Horizon=\EventHorizonFuture, \CosmologicalHorizonFuture}\left(
        \SHorizonControl{\LinEinstein}[\Horizon] + \frac{3}{2}\SurfaceGravity_{\Horizon} \right)
    }.
  \end{equation*}
  Then
  there exists some $\gamma_1$ depending only on $\Omega$, the
  black-hole parameters $(M, a)$ such that for $\gamma>\gamma_1$,
  \begin{equation*}
    \widehat{\LinEinstein}(\sigma) - \gamma: D^1(\widehat{\LinEinstein}(\sigma))\to
    \InducedLTwo(\Sigma)
  \end{equation*}
  is invertible,  and
  \begin{equation*}
    (\widehat{\LinEinstein}(\sigma) - \gamma)^{-1}: \InducedLTwo(\Sigma)\mapsto \InducedHk{1}(\Sigma)
  \end{equation*}
  is well-defined. 
\end{theorem}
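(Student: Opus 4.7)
The plan is to derive invertibility by combining the two preceding theorems via the closed range theorem. Let $\gamma_1$ be the larger of the two constants produced by Theorems~\ref{linear:thm:meromorphic:injective} and~\ref{linear:thm:meromorphic:surjective} for the given compact set $\Omega$. The hypothesis on $\Omega$ here is chosen to sit inside the half-plane required by Theorem~\ref{linear:thm:meromorphic:surjective}, which is the more restrictive of the two, so both results are simultaneously applicable.

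First I would set up the functional-analytic framework. Since $D^1(\widehat{\LinEinstein}(\sigma))$ was defined as the closure of $C^\infty_0(\Sigma,\Complex^D)$ with respect to the graph norm $\norm{u}_{\InducedLTwo(\Sigma)}^2 + \norm{\widehat{\LinEinstein}(\sigma)u}_{\InducedLTwo(\Sigma)}^2$, the operator $\widehat{\LinEinstein}(\sigma) - \gamma$ is automatically closed and densely defined on $\InducedLTwo(\Sigma)$. Theorem~\ref{linear:thm:meromorphic:injective} then provides injectivity together with the quantitative bound
\begin{equation*}
  \norm{u}_{\InducedHk{1}(\Sigma)} \lesssim \norm{(\widehat{\LinEinstein}(\sigma)-\gamma)u}_{\InducedLTwo(\Sigma)},
\end{equation*}
which immediately yields two consequences: the domain $D^1(\widehat{\LinEinstein}(\sigma))$ embeds into $\InducedHk{1}(\Sigma)$, and the range of $\widehat{\LinEinstein}(\sigma)-\gamma$ is closed in $\InducedLTwo(\Sigma)$ (because if $(\widehat{\LinEinstein}(\sigma)-\gamma)u_n$ is Cauchy, the estimate forces $\{u_n\}$ to be Cauchy in $\InducedHk{1}(\Sigma)$ and hence in the graph norm, and closedness of the operator identifies the limit as a genuine element of the range).

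Next I would invoke the abstract closed-range lemma recalled in the surjectivity subsection: a closed, densely defined operator on a Hilbert space with closed range is surjective if and only if its Hilbert-space adjoint is injective. The adjoint of $\widehat{\LinEinstein}(\sigma)-\gamma$ with respect to the $\LTwo(\Sigma)$ inner product is exactly $\widehat{\LinEinstein}^\dagger(\sigma)-\gamma$ on $D^1(\widehat{\LinEinstein}^\dagger(\sigma))$, and Theorem~\ref{linear:thm:meromorphic:surjective} gives the required injectivity (again with a quantitative $\InducedHk{1}$-bound, which is more than we need here). Combined with closed range from the previous step, this delivers surjectivity of $\widehat{\LinEinstein}(\sigma)-\gamma$, and together with injectivity we conclude bijectivity onto $\InducedLTwo(\Sigma)$.

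Finally, the bounded-inverse theorem promotes this algebraic bijection to a bounded inverse, and the $\InducedHk{1}$ estimate of Theorem~\ref{linear:thm:meromorphic:injective} refines this to the mapping property $(\widehat{\LinEinstein}(\sigma)-\gamma)^{-1}:\InducedLTwo(\Sigma)\to\InducedHk{1}(\Sigma)$ stated in the theorem. There is no real obstacle beyond the bookkeeping of selecting $\gamma_1$ compatibly and checking that the compactness of $\Omega$ allows a uniform choice; both injectivity theorems were proved with constants depending only on $\Omega$ and the black-hole parameters, so the resulting $\gamma_1$ inherits the same dependence, consistent with the statement.
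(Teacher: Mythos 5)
Your proof is correct and follows essentially the same route as the paper: combine Theorem~\ref{linear:thm:meromorphic:injective} (injectivity plus the quantitative $\InducedHk{1}$ estimate) with Theorem~\ref{linear:thm:meromorphic:surjective} (injectivity of the adjoint) via the closed-range lemma stated just before Theorem~\ref{linear:thm:meromorphic:surjective}. The only difference is that you make explicit the step the paper leaves implicit — namely that the lower bound $\norm{u}_{\InducedHk{1}(\Sigma)}\lesssim\norm{(\widehat{\LinEinstein}(\sigma)-\gamma)u}_{\InducedLTwo(\Sigma)}$ forces the range to be closed, which is what licenses the application of the closed-range lemma.
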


\begin{proof}
  Theorem \ref{linear:thm:meromorphic:injective} gives us that for
  sufficiently large $\gamma$,
  $(\widehat{\LinEinstein}(\sigma)-\gamma):D^1(\widehat{\LinEinstein}(\sigma))\to
  \InducedLTwo(\Sigma)$ is injective. Then by Theorem
  \ref{linear:thm:meromorphic:surjective} we also have surjectivity (after
  potentially increasing $\gamma$ if necessary). This proves the
  existence of the resolvent $(\widehat{\LinEinstein}(\sigma)
  -\gamma)^{-1}$. The desired estimate is simply a consequence of equation
  (\ref{linear:eq:meromorphic:injective:main-eqn}). 
\end{proof}

\begin{remark}
  From Theorem \ref{linear:thm:meromorphic:injective} we have that
  $(\widehat{\LinEinstein}(\sigma)-\gamma):D^1(\widehat{\LinEinstein}(\sigma))\to
  \InducedLTwo(\Sigma)$ is injective on
  $\Im\sigma>\frac{1}{2}
  \max_{\Horizon=\EventHorizonFuture,\CosmologicalHorizonFuture}
  \left(\SHorizonControl{\LinEinstein}[\Horizon] -
    \frac{1}{2}\SurfaceGravity_{\Horizon} \right)$. On the other hand,
  Theorem \ref{linear:thm:meromorphic:surjective} only gives surjectivity on
  the region
  $\Im\sigma> \frac{1}{2}
  \max_{\Horizon=\EventHorizonFuture,\CosmologicalHorizonFuture}
  (\SHorizonControl{\LinEinstein}[\Horizon] +
  \frac{3}{2}\SurfaceGravity_{\Horizon})$. While this appears to be an
  obstacle to proving invertibility on the entirety of the region
  $\Im\sigma>\frac{1}{2}\max_{\Horizon=\EventHorizonFuture,\CosmologicalHorizonFuture}\left(\SHorizonControl{\LinEinstein}[\Horizon]
    - \frac{1}{2}\SurfaceGravity_{\Horizon}\right)$, as we will see in
  the next section, it is possible to extend the range of
  invertibility to the full range on which we have shown that
  $\widehat{\LinEinstein}(\sigma)-\gamma$ is injective.
\end{remark}

\subsubsection{Extending the inverse}

Recall that in commuting the equation $\LinEinstein {h}=f$ with the
vectorfields $\RedShiftK_i$ constructed in Lemma
\ref{linear:lemma:enhanced-redshift:Ka-construction}, we were able to recover
a strongly hyperbolic operator $\bL$ satisfying $\bL {h}= \blittleF$
with $\SHorizonControl{\bL}[\Horizon]$ improved by
$2\SurfaceGravity_{\Horizon}$ with each commutation. This allowed us
to prove higher-regularity energy inequalities for
$\LinEinstein {h}=0$. We will now apply these principles to the
Laplace-transformed operator to prove that for $f$ sufficiently
regular, $\widehat{\LinEinstein}(\sigma)-\gamma$ is invertible on a
larger set.

\begin{theorem}
  \label{linear:thm:meromorphic:extension:main}
  For $g$ a sufficiently
  slowly-rotating \KdS{} metric, let $\LinEinstein$ denote the gauged
  linearized Einstein operator. Moreover, let $k$ be a positive integer and fix
  a compact domain
  \begin{equation}
    \label{linear:eq:meromorphic:extension:domain}
    \Omega\subset
    \curlyBrace*{\sigma\in\Complex:\Im\sigma>\frac{1}{2}\max_{\Horizon=\EventHorizonFuture,\CosmologicalHorizonFuture}\left(
        \SHorizonControl{\LinEinstein}[\Horizon]
        - \left(2k + \frac{1}{2} \right)\SurfaceGravity_{\Horizon}
      \right)}.
  \end{equation}
  Then there exists
  $\gamma_k$ depending on $\Omega, b, k$, such that for
  $\gamma>\gamma_k$, the equation
  \begin{equation}
    \label{linear:eq:meromorphic:extension:Laplace-transformed-eqn}
    (\widehat{\LinEinstein}(\sigma)-\gamma)u=f 
  \end{equation}
  admits a unique solution for any $f\in
  \InducedHk{k-1}(\Sigma)$. Furthermore, $u\in \InducedHk{k}(\Sigma)$ with the
  estimate
  \begin{equation}
    \label{linear:eq:meromorphic:extension:main-est}
    \norm{u}_{\InducedHk{k}(\Sigma)}
    \lesssim 
    \norm{f}_{\InducedHk{k-1}(\Sigma)}.
  \end{equation}
\end{theorem}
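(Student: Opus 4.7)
The strategy is to reduce the higher-regularity invertibility problem to the $H^1$-regularity problem already solved in Theorem \ref{linear:thm:meromorphic:invertibility:zero-order} by commuting with the redshift vectorfields $\{\RedShiftK_i\}_{i=1}^N$ of Lemma \ref{linear:lemma:enhanced-redshift:Ka-construction}. Recall from Theorem \ref{linear:thm:redshift-commutation:main} that commuting $\LinEinstein{h}=f$ with the $\RedShiftK_i$ $k$ times produces a new strongly hyperbolic system $\bL^{(k)}\mathbf{h} = \blittleF$ for $\mathbf{h}$ built from $h$ and its $\RedShiftK_i$-derivatives, where $\blittleF$ is built from $f$ and its $\RedShiftK_i$-derivatives, and critically
\begin{equation*}
\SHorizonControl{\bL^{(k)}}[\Horizon] = \SHorizonControl{\LinEinstein}[\Horizon] - 2k\SurfaceGravity_{\Horizon}.
\end{equation*}
The improvement of $2k\SurfaceGravity_{\Horizon}$ per commutation is exactly what is needed to match the shift in the half-plane \eqref{linear:eq:meromorphic:extension:domain} allowed in the theorem statement compared to the $k=1$ case.

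First I would Laplace-transform the commuted system to obtain $\widehat{\bL^{(k)}}(\sigma)\mathbf{u} = \widehat{\blittleF}(\sigma)$, where $\mathbf{u}$ has components $u$ together with the Laplace-transformed vectorfields $\widehat{\RedShiftK}_i(\sigma)u$, and $\widehat{\blittleF}(\sigma)$ is built analogously from $f$. Given $f\in \InducedHk{k-1}(\Sigma)$, the components $\widehat{\RedShiftK}^\alpha(\sigma)f$ with $|\alpha|\le k-1$ all lie in $\InducedLTwo(\Sigma)$, so the forcing $\widehat{\blittleF}(\sigma)$ lies in the appropriate $\InducedLTwo$ space for the commuted system. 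Since $\bL^{(k)}$ is strongly hyperbolic and $\Omega$ lies in the half-plane where Theorem \ref{linear:thm:meromorphic:invertibility:zero-order} applies to $\bL^{(k)}$, I can invoke the $H^1$-level invertibility for $\widehat{\bL^{(k)}}(\sigma) - \gamma$ for $\gamma$ above some threshold $\gamma_k$ depending on $\Omega$, $b$, and $k$, producing a unique $\mathbf{u}\in \InducedHk{1}(\Sigma)^{N+1}$ with the estimate $\norm{\mathbf{u}}_{\InducedHk{1}(\Sigma)}\lesssim \norm{\widehat{\blittleF}(\sigma)}_{\InducedLTwo(\Sigma)}$.

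Next I would extract from $\mathbf{u}$ the candidate solution $u := \mathbf{u}_0$ of the original equation \eqref{linear:eq:meromorphic:extension:Laplace-transformed-eqn} and verify two things: first, that $u$ really does satisfy $(\widehat{\LinEinstein}(\sigma) - \gamma)u = f$; and second, that $u\in \InducedHk{k}(\Sigma)$ with the claimed estimate \eqref{linear:eq:meromorphic:extension:main-est}. The first point is handled by the constraint-propagation half of Theorem \ref{linear:thm:redshift-commutation:main}: the differences $\delta\mathbf{u}_i := \widehat{\RedShiftK}_i(\sigma)u - \mathbf{u}_i$ satisfy a homogeneous hyperbolic system (or rather, its Laplace transform) with $\SHorizonControl{\bL'} = \SHorizonControl{\bL^{(k)}}$ and vanishing data in the appropriate sense, so injectivity at the $H^1$ level for the auxiliary operator (provided $\gamma$ is taken sufficiently large) forces $\delta\mathbf{u}\equiv 0$. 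Once this is in hand, the bound $\norm{\mathbf{u}}_{\InducedHk{1}(\Sigma)}\lesssim \norm{\widehat{\blittleF}(\sigma)}_{\InducedLTwo(\Sigma)}$ translates directly into \eqref{linear:eq:meromorphic:extension:main-est} because by construction $\norm{u}_{\InducedHk{k}(\Sigma)}^2 \sim \sum_{|\alpha|\le k-1}\norm{\widehat{\RedShiftK}^\alpha(\sigma)u}_{\InducedHk{1}(\Sigma)}^2$ is controlled by $\norm{\mathbf{u}}_{\InducedHk{1}(\Sigma)}^2$, and $\norm{\widehat{\blittleF}(\sigma)}_{\InducedLTwo(\Sigma)}\lesssim \norm{f}_{\InducedHk{k-1}(\Sigma)}$.

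The main obstacle I anticipate is bookkeeping the threshold $\gamma_k$ and verifying that the application of Theorem \ref{linear:thm:meromorphic:invertibility:zero-order} to the commuted operator $\bL^{(k)}$ goes through uniformly on the larger $\Omega$ allowed by \eqref{linear:eq:meromorphic:extension:domain}. The improved $\SHorizonControl{\bL^{(k)}}[\Horizon]$ exactly compensates for the enlarged half-plane, so the choice of the auxiliary constant $c$ in the $\TFixer$-energy estimate still produces $\Im(c+\sigma)>0$ uniformly on $\Omega$, and the constants $\epsilon_0$ and $\delta_0$ can be chosen independently of $\gamma$ so that the smallness requirement on $a$ from \eqref{linear:eq:mermorphic:injective:a-condition} remains satisfied; the threshold $\gamma_k$ then absorbs the $L^2$ lower-order term as in the base case. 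Uniqueness of $u$ follows from the injectivity half of the $H^1$ argument applied to the operator $\bL^{(k)}$, which propagates back to uniqueness for $\widehat{\LinEinstein}(\sigma)-\gamma$ acting on $\InducedHk{k}(\Sigma)$.
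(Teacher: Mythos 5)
Your plan to commute $k$ times (or $k-1$ times, depending on the bookkeeping --- more on this below) with the $\RedShiftK_i$ and then apply Theorem \ref{linear:thm:meromorphic:invertibility:zero-order} to the commuted operator $\bL^{(k)}$ is essentially the paper's Step 1, carried out in a single pass rather than by a once-per-step induction. The constraint-propagation trick to identify $\mathbf{u}_0 = u$ as a genuine solution is also exactly what the paper does. So the core mechanism is right. However, there is a genuine gap: the domain does not come out correctly, and the sentence ``$\Omega$ lies in the half-plane where Theorem \ref{linear:thm:meromorphic:invertibility:zero-order} applies to $\bL^{(k)}$'' is false in general. Theorem \ref{linear:thm:meromorphic:invertibility:zero-order} applies to $\bL^{(k)}$ on the half-plane
\begin{equation*}
\Im\sigma > \frac{1}{2}\max_{\Horizon}\left(\SHorizonControl{\bL^{(k)}}[\Horizon] + \frac{3}{2}\SurfaceGravity_{\Horizon}\right)
= \frac{1}{2}\max_{\Horizon}\left(\SHorizonControl{\LinEinstein}[\Horizon] - \left(2k - \frac{3}{2}\right)\SurfaceGravity_{\Horizon}\right),
\end{equation*}
which is a \emph{strictly smaller} half-plane than the one in \eqref{linear:eq:meromorphic:extension:domain}, whose threshold is lower by $\SurfaceGravity_{\Horizon}$. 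Your $k$-fold commutation shifts the surjectivity threshold by $2k\SurfaceGravity_\Horizon$ starting from $+\tfrac{3}{2}\SurfaceGravity_\Horizon$, so you land at $-(2k-\tfrac{3}{2})\SurfaceGravity_\Horizon$, not $-(2k+\tfrac{1}{2})\SurfaceGravity_\Horizon$. The paper closes this residual gap with a second, separate step: it uses the higher regularity produced by Step 1 together with the larger-domain injectivity estimate of Theorem \ref{linear:thm:meromorphic:injective} and a density argument ($\InducedHk{1}(\Sigma)$ dense in $\InducedLTwo(\Sigma)$) to push the surjectivity domain further down, iterating this bootstrap in $k$. Your proposal omits this step entirely, so as written it only proves the theorem on the smaller half-plane.

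A secondary point is an internal inconsistency in your bookkeeping: you say ``commute $k$ times,'' producing $\bL^{(k)}$ with $\SHorizonControl{\bL^{(k)}}=\SHorizonControl{\LinEinstein}-2k\SurfaceGravity_\Horizon$, but then you take $\blittleF$ to have components $\widehat{\RedShiftK}^\alpha(\sigma)f$ only for $\abs{\alpha}\le k-1$, which corresponds to the $(k-1)$-fold commuted system $\bL^{(k-1)}$; these give different shifts of $\SHorizonControl{}$ and different data/solution regularities. You should fix one convention (commuting $k-1$ times is the one that gives $f\in\InducedHk{k-1}\Rightarrow u\in\InducedHk{k}$) and track it through. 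Doing so makes the domain gap described above even wider, which underscores that the density/bootstrap step cannot be skipped.
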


\begin{proof}
  Consider the Laplace-transformed commutators $\widehat{\RedShiftK}_i:
  \InducedHk{k}(\Sigma) \to \InducedHk{k-1}(\Sigma)$, defined by
  \begin{equation*}
    \widehat{\RedShiftK}_i(\sigma)u = \left.e^{\ImagUnit\sigma\tStar}\RedShiftK_ie^{-\ImagUnit\sigma\tStar}u\right\vert_{\Sigma_{\tStar}}.
  \end{equation*}
  The theorem is proven in two steps. 
  The first step to proving the theorem will then be to inductively
  commute with $\widehat{\RedShiftK}_i$ to prove that
  (\ref{linear:eq:meromorphic:extension:main-est}) holds for $\sigma$ in a
  compact domain
  \begin{equation}
    \label{linear:eq:mermomorphic:ext:first-domain}
    \Omega\subset \curlyBrace*{\sigma\in\Complex:
      \Im\sigma>\frac{1}{2}\max_{\Horizon=\EventHorizonFuture,\CosmologicalHorizonFuture}
      \left(
        \SHorizonControl{\LinearOp}[\Horizon] -
        \left(2k-\frac{3}{2}\right)\SurfaceGravity_{\Horizon}
      \right)}.
  \end{equation}
  We will then extend this to the full region in the theorem by
  taking the higher-order estimates, re-applying them to the
  lower-order estimates via an approximation argument, and then
  repeating the induction argument to achieve the desired results.

  The $k=0$ case of our induction is the content of Theorem
  \ref{linear:thm:meromorphic:invertibility:zero-order}.  Fix $k>0$, and a
  complex subset of the complex plane
  $\Omega\subset \curlyBrace*{\sigma\in\Complex:
    \Im\sigma>\max_{\Horizon=\EventHorizonFuture,
      \CosmologicalHorizonFuture}(\frac{1}{2}\SHorizonControl{\LinEinstein}[\Horizon]
    -
    \left(k-\frac{3}{4}\right)\SurfaceGravity_{\Horizon})}$. Assume
  for the sake of induction that the theorem holds for $k-1$, with
  \begin{equation*}
    \Omega\subset\curlyBrace*{\sigma\in\Complex:
      \Im\sigma>\frac{1}{2}\max_{\Horizon=\EventHorizonFuture, \CosmologicalHorizonFuture}
      \left(
        \SHorizonControl{\LinEinstein}[\Horizon]
        -\left(2k - \frac{7}{2}\right)\SurfaceGravity_{\Horizon}
      \right)}.
  \end{equation*}
  Commuting (\ref{linear:eq:meromorphic:extension:Laplace-transformed-eqn})
  with $\widehat{\RedShiftK}_i$ and applying Theorem
  \ref{linear:thm:redshift-commutation:main}, we can conclude that a solution
  $u$ to (\ref{linear:eq:meromorphic:extension:Laplace-transformed-eqn}) must
  induce a solution $\blittleU:=(u, \widehat{\RedShiftK}_i u)$ such that
  \begin{equation}
    \label{linear:eq:meromorphic:invertibility:proof-aux-one}
    (\widehat{\bL}(\sigma)-\gamma)\blittleU = \blittleF.
  \end{equation}
  Crucially, observe that 
  \begin{equation*}
    \SHorizonControl{\bL}[\Horizon] = \SHorizonControl{\LinEinstein}[\Horizon]
    -2\SurfaceGravity_{\Horizon}.
  \end{equation*}
  Since
  $\Omega\subset
  \curlyBrace{\sigma\in\Complex:\Im\sigma>
    \frac{1}{2}\max_{\Horizon=\EventHorizonFuture,
      \CosmologicalHorizonFuture}\left(
      \SHorizonControl{\bL}[\Horizon]
    - \left(2k - \frac{7}{2}
    \right)\SurfaceGravity_{\Horizon}
  \right)}$ and
  $\blittleF\in \InducedHk{k-2}(\Sigma)$, we now apply the induction
  assumption on the commuted equation in
  \eqref{linear:eq:meromorphic:invertibility:proof-aux-one}. This yields
  that for sufficiently large $\gamma$, there exists a unique
  solution $\blittleU= (u, u_i)$ to
  (\ref{linear:eq:meromorphic:invertibility:proof-aux-one}), and moreover,
  that $\blittleU\in \InducedHk{k-1}(\Sigma)$.
  The second part of Theorem \ref{linear:thm:redshift-commutation:main}
  then shows that
  \begin{equation*}
    (\widehat{\bL}'(\sigma)-\gamma)\tilde{u}=0,\qquad  \tilde{u} = u_i-\widehat{\RedShiftK}_i(\sigma)u.
  \end{equation*}
  Thus, we have that $\tilde{u}$ solves
  (\ref{linear:eq:meromorphic:extension:Laplace-transformed-eqn}).  The
  estimate from the inductive assumption,
  \begin{equation*}
    \norm{\blittleU}_{\InducedHk{k-1}(\Sigma)} \le C \norm{\blittleF}_{\InducedHk{k-2}(\Sigma)},
  \end{equation*}
  then implies
  (\ref{linear:eq:meromorphic:extension:main-est}). We can then relax the
  assumption that $f$ is smooth to an assumption that $f\in
  \InducedHk{k-1}(\Sigma)$.
  
  Now we move on to extending the domain from that in
  \eqref{linear:eq:mermomorphic:ext:first-domain} to that in
  \eqref{linear:eq:meromorphic:extension:domain}. To do so, we reconsider the
  $k=0$ case. Let us assume that $f\in \InducedHk{1}(\Sigma)$. Then
  for sufficiently large $\gamma$, we have already shown that
  (\ref{linear:eq:meromorphic:extension:Laplace-transformed-eqn}) has a
  $\InducedHk{2}(\Sigma)$ solution if
  \begin{equation*}
    \sigma\in\Omega\subset\curlyBrace*{\sigma\in\Complex:\Im\sigma >
      \frac{1}{2}\max_{\Horizon=\EventHorizonFuture,
        \CosmologicalHorizonFuture}\left(
        \SHorizonControl{\LinEinstein}[\Horizon]
        -\frac{1}{2}\SurfaceGravity_{\Horizon} \right)}.
  \end{equation*}
  Theorem
  \ref{linear:thm:meromorphic:injective} then shows that this solution is
  unique in $\InducedHk{1}(\Sigma)$, and moreover satisfies the
  estimate
  \begin{equation*}
    \norm{u}_{\InducedHk{1}(\Sigma)} \le C\norm{f}_{\InducedLTwo(\Sigma)}. 
  \end{equation*}
  Then, the fact that $\InducedHk{1}(\Sigma)$ is dense in $\InducedLTwo(\Sigma)$
  allows us to deduce that for any $f\in \InducedLTwo(\Sigma)$,
  (\ref{linear:eq:meromorphic:extension:Laplace-transformed-eqn}) has a
  unique solution for $s\in\Omega$. We have thus extended the possible
  range of $\Omega$ to the desired range for $k=1$. Doing this inductively
  yields the result for arbitrary $k$. 
\end{proof}

We can now prove a Fredholm alternative for
$\widehat{\LinEinstein}(\sigma)$.

\begin{theorem}
  \label{linear:thm:meromorphic:fredholm-alt:Laplace-transformed-op}
  For $g$ a sufficiently
  slowly-rotating \KdS{} metric, let $\LinEinstein$ be the gauged
  linearized Einstein operator on $g$. Then for any $k\in \Natural$,
  and 
  $\sigma\in \Complex$ such that
  \begin{equation*}
    \Im\sigma >\frac{1}{2}\max_{\Horizon=\EventHorizonFuture,
      \CosmologicalHorizonFuture}\left(
      \SHorizonControl{\LinEinstein}[\Horizon]
      -\left(2k+\frac{1}{2}\right)\SurfaceGravity_{\Horizon}\right) ,
  \end{equation*}
  one of the following holds: either
  \begin{enumerate}
  \item $\widehat{\LinEinstein}(\sigma)^{-1}$ exists as a bounded map from
    $\InducedHk{k-1}(\Sigma)$ to $D^{k}(\widehat{\LinEinstein}(\sigma))$, or
  \item there exists a finite-dimensional family of solutions to
    $\widehat{\LinEinstein}(\sigma)u=0$.
  \end{enumerate}
  Moreover, the latter occurs only when $\sigma\in \QNFk{k}(\LinEinstein)$,
  where $\QNFk{k}(\LinEinstein)$ is a discrete set of points with no
  accumulation point except at infinity, satisfying that
  \begin{equation*}
    \QNFk{k}(\LinEinstein)\subset \QNFk{k+1}(\LinEinstein).
  \end{equation*}
  
\end{theorem}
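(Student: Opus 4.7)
\medskip

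\noindent\textbf{Proof proposal.} The plan is to convert the equation $\widehat{\LinEinstein}(\sigma)u = f$ into a second-kind Fredholm problem on $\InducedHk{k-1}(\Sigma)$ involving a compact operator, and then apply the analytic Fredholm theorem. Fix a compact subset $\Omega_0$ of the half-plane
\begin{equation*}
  \curlyBrace*{\sigma\in\Complex:\Im\sigma>\tfrac{1}{2}\max_{\Horizon=\EventHorizonFuture,\CosmologicalHorizonFuture}\left(\SHorizonControl{\LinEinstein}[\Horizon]-(2k+\tfrac{1}{2})\SurfaceGravity_{\Horizon}\right)}.
\end{equation*}
By Theorem~\ref{linear:thm:meromorphic:extension:main}, there exists $\gamma=\gamma(\Omega_0,k)$ large enough that, for all $\sigma\in\Omega_0$, the operator $(\widehat{\LinEinstein}(\sigma)-\gamma)^{-1}$ exists as a bounded map $\InducedHk{k-1}(\Sigma)\to D^k(\widehat{\LinEinstein}(\sigma))\subset \InducedHk{k}(\Sigma)$. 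Writing $\widehat{\LinEinstein}(\sigma)u=f$ in the form $(I+K(\sigma))u = (\widehat{\LinEinstein}(\sigma)-\gamma)^{-1}f$, with
\begin{equation*}
  K(\sigma) := \gamma(\widehat{\LinEinstein}(\sigma)-\gamma)^{-1},
\end{equation*}
and recalling that $\Sigma$ is a bounded interval cross $\Sphere^2$, Rellich--Kondrachov implies the embedding $\InducedHk{k}(\Sigma)\hookrightarrow \InducedHk{k-1}(\Sigma)$ is compact. Hence $K(\sigma)$ is a compact operator on $\InducedHk{k-1}(\Sigma)$.

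The next step is to check that $\sigma\mapsto K(\sigma)$ is holomorphic on $\Omega_0$. Since $\widehat{\LinEinstein}(\sigma)=P_2+\sigma P_1+\sigma^2\GInvdtdt^{-1}$ is a polynomial in $\sigma$ with $\sigma$-independent coefficients, $\sigma\mapsto\widehat{\LinEinstein}(\sigma)-\gamma$ is entire, and the standard resolvent identity shows $\sigma\mapsto(\widehat{\LinEinstein}(\sigma)-\gamma)^{-1}$ is holomorphic on $\Omega_0$. The analytic Fredholm theorem then gives the dichotomy: either $I+K(\sigma)$ is nowhere invertible on $\Omega_0$, or $(I+K(\sigma))^{-1}$ is meromorphic on $\Omega_0$ with residues of finite rank. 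The first alternative is excluded by exhibiting at least one $\sigma_*\in\Omega_0$ (of sufficiently large imaginary part, up to enlarging $\Omega_0$) for which $\widehat{\LinEinstein}(\sigma_*)^{-1}$ exists by the naive Gronwall-type estimate of Corollary~\ref{linear:corollary:naive-energy-estimate:higher-order}. Exhausting the full half-plane by an increasing sequence of compact subsets $\Omega_0\subset\Omega_1\subset\cdots$ (adjusting $\gamma$ on each) gives the dichotomy globally, with $\ker\widehat{\LinEinstein}(\sigma)\cong\ker(I+K(\sigma))$ finite-dimensional at each pole; this establishes the Fredholm alternative and the fact that $\QNFk{k}(\LinEinstein)$ is discrete with the only possible accumulation at infinity.

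For the nesting $\QNFk{k}(\LinEinstein)\subset\QNFk{k+1}(\LinEinstein)$, suppose $\sigma_0\in\QNFk{k}(\LinEinstein)$ with associated eigenfunction $u_0\in D^k(\widehat{\LinEinstein}(\sigma_0))\subset\InducedHk{k}(\Sigma)$ satisfying $\widehat{\LinEinstein}(\sigma_0)u_0=0$. The half-plane defining $\QNFk{k+1}(\LinEinstein)$ strictly contains the one for $\QNFk{k}(\LinEinstein)$, so the frequency condition is automatic. To promote the regularity of $u_0$, I apply the Laplace-transformed version of the redshift commutation of Theorem~\ref{linear:thm:redshift-commutation:main}: commuting with the $\widehat{\RedShiftK}_i(\sigma_0)$ once more produces a strongly hyperbolic system with $\SHorizonControl{\bL^{(k+1)}}[\Horizon]=\SHorizonControl{\LinEinstein}[\Horizon]-2(k+1)\SurfaceGravity_{\Horizon}$, and then the Laplace-transformed redshift estimate applied exactly as in the proof of Corollary~\ref{linear:coro:redshift-regularity} gives $u_0\in \InducedHk{k+1}_{\sigma_0}(\Sigma)$, so $u_0\in D^{k+1}(\widehat{\LinEinstein}(\sigma_0))$ and $\sigma_0\in\QNFk{k+1}(\LinEinstein)$.

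The main technical nuisance, rather than any deep obstacle, is that $\gamma=\gamma(\Omega,k)$ in Theorem~\ref{linear:thm:meromorphic:extension:main} depends on the compact subset $\Omega$, so one cannot invoke a single $\widehat{\LinEinstein}(\sigma)-\gamma$ uniformly on the whole half-plane. The fix is the compact exhaustion above: meromorphy is a local property, and the poles detected on $\Omega_m$ and $\Omega_{m+1}$ agree on the overlap because both characterize zero modes of $\widehat{\LinEinstein}(\sigma)$ intrinsically, so the locally-constructed meromorphic inverses glue unambiguously into a meromorphic family on the entire specified half-plane.
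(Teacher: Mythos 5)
Your proposal is correct and follows essentially the same route as the paper: reduce to a second-kind Fredholm problem $(I - K(\sigma))u = (\widehat{\LinEinstein}(\sigma)-\gamma)^{-1}f$ with $K(\sigma) = -\gamma(\widehat{\LinEinstein}(\sigma)-\gamma)^{-1}$ compact by Rellich--Kondrachov, verify analyticity using the polynomial dependence of $\widehat{\LinEinstein}(\sigma)$ on $\sigma$, rule out the nowhere-invertible alternative via the Gronwall-type resolvent estimate, invoke the analytic Fredholm theorem, and obtain the nesting from the redshift-commutator elliptic-regularity argument (which is exactly the content of Corollary~\ref{linear:coro:redshift-regularity}, which the paper cites directly). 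Your explicit treatment of the compact exhaustion needed because $\gamma$ depends on $\Omega$ is a small point the paper handles only implicitly, but it does not change the substance of the argument.
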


\begin{remark}
  The theorem above is, alternatively stated, the following: the
  function $\sigma\to \widehat{\LinEinstein}(\sigma)$ is meromorphic on
  the half-plane
  $\curlyBrace*{\sigma:\Im\sigma>\frac{1}{2}\max_{\Horizon=\EventHorizonFuture,\CosmologicalHorizonFuture}
    \left(
      \SHorizonControl{\LinEinstein}[\Horizon]
      -\left(2k + \frac{1}{2}\right)\SurfaceGravity_{\Horizon}
    \right)
  }$,
  with poles of finite order at $\QNFk{k}(\LinEinstein)$.
\end{remark}

\begin{proof}
  Let $\Omega$ be a fixed compact connected set such that
  \begin{equation*}
    \Omega\subset
    \curlyBrace*{\sigma:\Im\sigma> \frac{1}{2}\max_{\Horizon=\EventHorizonFuture,\CosmologicalHorizonFuture}
      \left(
        \SHorizonControl{\LinEinstein}[\Horizon]
      -\left(2k + \frac{1}{2}\right)\SurfaceGravity_{\Horizon}      
      \right)
    }.
  \end{equation*}
  Recall that from Theorem \ref{linear:thm:meromorphic:extension:main}, we
  have shown that there exists a $\lambda$ sufficiently large such
  that
  \begin{equation*}
    (\widehat{\LinEinstein}(\sigma)-\lambda)^{-1}:
    \InducedHk{k}(\Sigma)\to D^k(\widehat{\LinEinstein}(\sigma))\subset \InducedHk{k+1}(\Sigma)
  \end{equation*}
  is a well-defined, bounded operator for all $\sigma\in\Omega$. Then
  for the same $\lambda$, we can define the operator
  $A(\sigma):\InducedHk{k}(\Sigma)\to \InducedHk{k+1}(\Sigma)$ by
  \begin{equation*}
    A(\sigma):=-\lambda(\widehat{\LinEinstein}(\sigma)-\lambda)^{-1},
  \end{equation*}
  exists as a bounded operator on the entirety of $\Omega$. We will
  prove the main theorem by using the analytic Fredholm theorem on
  $A(\sigma)$. We first verify the conditions for the application of
  the analytic Fredholm theorem (Theorem \ref{linear:thm:AnalyticFredholm}):
  \begin{enumerate}
  \item As previously mentioned, using Theorem
    \ref{linear:thm:meromorphic:extension:main}, there exists some $\lambda$
    such that $A(\sigma)$ is a bounded operator on the entirety of
    $\Omega$.
  \item Also by Theorem \ref{linear:thm:meromorphic:extension:main}, we know
    that for $\sigma\in\Omega$, $A(\sigma)$ maps $\InducedHk{k}(\Sigma)\mapsto
    \InducedHk{k+1}(\Sigma)$. Thus, by Rellich-Kondrachov, $A(\sigma)$ is a compact operator.
  \item $A(\sigma)$ is analytic on $\Omega$, and we can calculate directly that
    \begin{equation*}
      \lim_{\sigma\to \sigma_0} \frac{A(\sigma)-A(\sigma_0)}{\sigma-\sigma_0}
      = -\lambda(\widehat{\LinEinstein}(\sigma_0)-\lambda)^{-1}
      \left(P_1 + 2\sigma_0\GInvdtdt^{-1}\right)
      (\widehat{\LinEinstein}(\sigma_0)-\lambda)^{-1},
    \end{equation*}
    which is a bounded operator on $\InducedHk{k}(\Sigma)$, where we
    define $P_1$ by writing
    \begin{equation*}
      \LinEinstein_{g_b} = \GInvdtdt^{-1}D_{\tStar}^2 + P_1D_{\tStar} + P_2.
    \end{equation*}    
  \end{enumerate}
  We have thus confirmed that $A(\sigma)$ verifies the conditions to
  apply the analytic Fredholm theorem. We now observe that
  \begin{equation*}
    \widehat{\LinEinstein}(\sigma)u=f
    \quad\iff
    \quad(1-A(\sigma))u = (\widehat{\LinEinstein}(\sigma)-\lambda)^{-1}f,
  \end{equation*}
  and that moreover, by Proposition
  \ref{linear:prop:basic-C0-semigroup-properties},
  $\widehat{\LinEinstein}(\sigma)$ always exists for some
  $\sigma\in\Omega$ after potentially extending $\Omega$. We can now
  apply the analytic Fredholm theorem to draw the conclusion that
  either
  \begin{enumerate}
  \item $\widehat{\LinEinstein}(\sigma)^{-1}$ exists as a bounded map from
    $\InducedHk{k}(\Sigma)$ to $\InducedHk{k+1}(\Sigma)$, or
  \item There exists a finite-dimensional family of solutions to
    $\widehat{\LinEinstein}(\sigma)u=0$.
  \end{enumerate}

  From Corollary \ref{linear:coro:redshift-regularity},
  we see that solutions of $\widehat{\LinEinstein}(\sigma)u=0$ are
   smooth. Thus, $\QNFk{k}(\LinEinstein)\subset \QNFk{k+1}(\LinEinstein)$.
  Finally, $1-A(\sigma)$ is a compact perturbation of the identity,
  and thus is Fredholm of index 0. Consequently, the dimension of the
  kernel and the co-kernel agree.
\end{proof}

\subsection{Identifying the spectral gap (Proof of Theorem
  \ref{linear:thm:resolvent-estimate:inf-gen})}
\label{linear:sec:Resolvent-Estimates}

As was the case for the proof of Theorem \ref{linear:thm:meromorphic:main-A},
we prove Theorem \ref{linear:thm:resolvent-estimate:inf-gen} by analyzing
$\widehat{\LinEinstein}(\sigma)$ in place of $\InfGen -
\sigma$. Recall from Lemma \ref{linear:lemma:laplace:inverse-A} that $(\InfGen
- \sigma)^{-1}(\Sigma):\LSolHk{k}(\Sigma) \to D^k(\InfGen)$ exists and is a bounded
linear transformation if and only if
$\widehat{\LinEinstein}(\sigma)^{-1}: \InducedHk{k-1}(\Sigma) \to
D^k(\widehat{\LinEinstein}(\sigma))$ exists and is a bounded linear
transformation. But from the Morawetz estimate and the corresponding
resolvent estimates in Theorem \ref{linear:thm:resolvent-estimate:main}, we
know that this is the case for $\Im\sigma\ge -\SpectralGap,
\abs*{\sigma}>C_0$, as desired.

\begin{remark}
  We briefly remark that what we have essentially shown in this
  section, is that Assumption \ref{linear:ass:QNM} could instead be reduced
  to the following two assumptions. 
  \begin{enumerate}
  \item  $\LinearOp$ is a strongly hyperbolic linear operator on a
    slowly-rotating \KdS{} background, $g_b$.
  \item For any $\varepsilon>0$, there exists an elliptic stationary
    zero-order pseudo-differential operator $\PseudoSubPFixer$ such
    that
    \begin{equation*}
      \evalAt*{\frac{1}{\abs*{\FreqAngular}}\sigma_1\left(
        \PseudoSubPFixer\LinearOp \PseudoSubPFixer^- - (\PseudoSubPFixer\LinearOp \PseudoSubPFixer^-)^*
      \right)}_{\TrappedSet_{b}} <\varepsilon \Identity,
    \end{equation*}
    where $\PseudoSubPFixer^-$ denotes the parametrix of
    $\PseudoSubPFixer$, and $\sigma_1(P)$ denotes the principal symbol
    of $P$, and $\TrappedSet_b$ denotes the trapped set of the \KdS{}
    metric $g_b$ (recall the definitions in Section \ref{linear:sec:freq-analysis}).  
  \end{enumerate}
\end{remark}

\section{Mode Stability}
\label{linear:sec:mode-stability}

Having shown that $\LinEinstein_{g_b}$ satisfies Theorem
\ref{linear:thm:meromorphic:main-A} and Theorem
\ref{linear:thm:resolvent-estimate:inf-gen}, the only remaining obstacle to
exponential decay are the finitely many residual non-decaying
resonances. To this end, we show that these resonances are
non-physical. At the level of the gauged linearized Einstein equation,
this can mean one of two things.
\begin{enumerate}
\item First, it is possible that the quasinormal mode solution does
  not itself satisfy the linearized constraint equations. As such, it
  is not a valid solution to the linearized Einstein vacuum equations.
\item Second, even if the quasinormal mode solution does satisfy the
  linearized constraint equations, if the quasinormal mode in question
  is an infinitesimal diffeomorphism of the zero
  solution, it is not truly a distinct solution.
\end{enumerate}
Our goal then will be to show that all of the non-decaying
$\LSolHk{k}$-quasinormal mode solutions fall into one of these two
categories, and are thus unphysical.

\subsection{Geometric mode stability of \SdS}

We begin by considering the case of the linearized ungauged Einstein's
equations linearized around a fixed \SdS{} background, $g_{b_0}$. On a
fixed \SdS{} background, a strong geometric mode stability
statement (GMS) exists, having first been proven by Kodama and
Ishibashi, \cite{kodama_master_2003}, but presented below in the
slightly modified form derived by Hintz and Vasy
\cite{hintz_global_2018}.

\begin{theorem} [Geometric mode stability (GMS). Referred to as
  Ungauged Einstein Mode Stability (UEMS) in \cite{hintz_global_2018}]
  \label{linear:thm:Kodama-Ishibashi}
  Let $b_0$ be the black hole parameters of a \SdS{} black
  hole. Then,
  \begin{enumerate}
  \item Let $\sigma\in \Complex, \Im\sigma\ge 0$, $\sigma\neq0$ and
    suppose that ${h}(\tStar, x)=e^{-\ImagUnit\sigma\tStar}u(x)$, 
    $u\in C^\infty(\Sigma)$ is a mode solution of the
    linearized Einstein equation
    \begin{equation*}
      D_{g_{b_0}}(\Ric - \Lambda)({h}) = 0.
    \end{equation*}
    Then there exists a 1-form $\omega(\tStar,
    x)=e^{-\ImagUnit\sigma\tStar}\omega_\sigma(x)$ with $\omega_\sigma\in
    C^\infty(\Sigma)$ such that
    \begin{equation*}
      {h} =\nabla_{g_{b_0}}\otimes \omega. 
    \end{equation*}
  \item For all $k\in\Natural$, and all generalized mode solutions
    \begin{equation*}
      {h}(\tStar, x) = \sum_{j=0}^k \tStar^j u_j(x),\quad u_j\in
      C^\infty(\Sigma),\quad 0\le j\le k,
    \end{equation*}
    of the linearized Einstein equation, there exist $b'\in T_{b_0}B$
    and $\omega\in C^\infty(M)$, such that
    \begin{equation*}
      {h} = g'_{b_0}(b') +\nabla_{g_{b_0}}\otimes \omega. 
    \end{equation*}
  \end{enumerate}
\end{theorem}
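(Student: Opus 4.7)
The plan is to exploit the spherical symmetry of $g_{b_0}$ to decompose the metric perturbation $h$ into spherical harmonics and reduce the linearized Einstein equations to a decoupled system of master equations to which one can apply classical ODE mode stability arguments. More precisely, I would write $\Sigma = \Real_r \times \UnitSphere^2$ and expand $h$ in terms of tensor spherical harmonics, separating \emph{scalar-type} (even parity) and \emph{vector-type} (odd parity) modes at each angular frequency $\ell$. For each $(\ell, \sigma)$ with $\sigma \neq 0$, Kodama--Ishibashi construct gauge-invariant combinations of the metric components on the $(t,r)$ quotient, and show that each gauge invariant satisfies a single second-order wave equation of Schr\"odinger form on the tortoise coordinate line, with an explicit real potential (the Zerilli potential for scalar type, the Regge--Wheeler potential for vector type). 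These potentials are smooth, non-negative, and decay at both horizons, and both master equations are mode stable in the sense that there are no solutions $\psi = e^{-\ImagUnit\sigma t}\phi(r)$ with $\Im\sigma \geq 0$, $\sigma \neq 0$ satisfying outgoing boundary conditions at the two horizons.

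First I would carry out this decomposition and verify that for $\sigma \neq 0$, mode stability of the master equations implies that the gauge-invariant part of $h$ vanishes identically. The second step is the reconstruction: once the gauge invariants are zero, one shows algebraically (inverting the expression of the invariants in terms of the metric components) that there exists a one-form $\omega$ with $\omega(\tStar,x) = e^{-\ImagUnit\sigma\tStar}\omega_\sigma(x)$ such that $h = \nabla_{g_{b_0}}\otimes \omega$, establishing item (1). Separately, the $\ell = 0$ and $\ell = 1$ modes must be handled by hand, since the usual tensor-harmonic decomposition degenerates there; for $\sigma \neq 0$ these low harmonics are also pure gauge, which is the content of Birkhoff-type rigidity together with a direct calculation on a two-dimensional reduction.

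For item (2), concerning generalized modes with polynomial $\tStar$ dependence, I would first reduce to the case $\sigma = 0$, since for $\Im\sigma > 0$ the presence of polynomial prefactors in $\tStar$ combined with the ODE analysis at fixed frequency forces the coefficients $u_j$ with $j \geq 1$ to vanish (the master potentials admit no Jordan blocks away from $\sigma = 0$), reducing to item (1). The stationary case $\sigma = 0$ is the essential new ingredient: here the mode equation becomes an elliptic problem on $\Sigma$, and its kernel modulo gauge is precisely two-dimensional, spanned by $\partial_M g_{b_0}$ and the first-order-in-$a$ Kerr--de~Sitter perturbation $\partial_a g_{(M,a)}|_{a=0}$. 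One identifies these explicitly using the two-parameter family $g_b$, observing that $\p_M g_b$ and $\p_a g_b|_{a=0}$ are stationary solutions of the ungauged linearized equations that are not pure gauge (the former changes ADM-type mass, the latter changes angular momentum).

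The main obstacle will be the $\sigma = 0$, low-$\ell$ analysis underlying item (2): showing that the space of stationary non-gauge solutions is exactly parametrized by $T_{b_0}B$ requires ruling out any additional stationary obstructions not captured by the scalar/vector master variables, and handling the $\ell = 1$ vector sector where Kerr-type angular momentum perturbations live. A secondary obstacle is the rigorous reconstruction of the one-form $\omega$ from the vanishing of the gauge invariants in a manner that respects the smoothness and mode structure in $\tStar$; this requires tracking the $\sigma$-dependence carefully so that $\omega$ inherits the same frequency $\sigma$ (or, in the generalized case, the same polynomial-in-$\tStar$ structure up to the explicit $g'_{b_0}(b')$ contribution). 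Fortunately this theorem is quoted rather than reproved in the paper, and I would follow the presentation in Section 7 of \cite{hintz_global_2018}, which adapts \cite{kodama_master_2003, ishibashi_stability_2003} to precisely the form stated above.
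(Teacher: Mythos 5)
The paper does not give a proof of this theorem at all: it is quoted as a classical result of Kodama and Ishibashi, reformulated by Hintz and Vasy, and you correctly flag this in your last paragraph. Your sketch of how the cited works establish the statement is therefore not being compared against an alternative argument in the paper; at the level of strategy it does track the standard tensor-spherical-harmonic decomposition, master-equation, and gauge-reconstruction route used in \cite{kodama_master_2003,ishibashi_stability_2003} and Section~7 of \cite{hintz_global_2018}.

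Two inaccuracies are worth fixing. First, you assert that the kernel of the stationary problem modulo gauge is ``precisely two-dimensional, spanned by $\partial_M g_{b_0}$ and the first-order-in-$a$ Kerr--de~Sitter perturbation.'' In this paper the black-hole parameter space is $b = (M, \mathbf{a})$ with $\mathbf{a} \in \Real^3$ a vector (not just the scalar $a = |\mathbf{a}|$), so $T_{b_0}B$ is \emph{four}-dimensional: one mass direction plus the three rotation directions, the latter living in the $\ell=1$ vector sector. The count $d_0 = 4$ is used explicitly later in the paper (proof of Proposition~\ref{linear:prop:Linear-Stability:SdS:robust}) and your argument for item~(2) would not close with a two-dimensional kernel. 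Second, your reduction ``I would first reduce to the case $\sigma = 0$, since for $\Im\sigma > 0$\ldots'' is unnecessary: item~(2) as stated in the theorem already concerns generalized modes $h = \sum_j \tStar^j u_j(x)$ with \emph{no} exponential factor, i.e.\ it is the $\sigma = 0$ case from the outset, so the Jordan-block elimination you describe is not a step that is needed to interpret the statement (though it is a valid observation about the $\Im\sigma > 0$ case, which is absorbed into item~(1)). Finally, your justification for mode stability of the master equations from ``smooth, non-negative'' potentials is stated more cleanly than the literature delivers: on \SdS{} the Regge--Wheeler (vector) potential is manifestly positive, but the scalar-type (Zerilli) potential requires an S-deformation or the supersymmetric intertwining with the vector sector, which is the content of the Kodama--Ishibashi analysis and not a direct sign check.
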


GMS states that any mode solution to the ungauged linearized vacuum
Einstein equations (linearized around a fixed \SdS{} background) that
is non-decaying is a pure gauge solution, arising either from an
infinitesimal diffeomorphism ${h} = \LieDerivative_{\omega^\sharp} g_{b_0}$, or a
linearized \KdS{} metric (or a combination of the two). In particular,
being a mode stability statement at the level of the ungauged system
of equations, GMS also implies a mode stability statement at the level
of the gauged equations. That is, applying the linearized Bianchi
equation, we immediately have the following corollary:
\begin{corollary}
  \label{linear:cor:Kodama-Ishibashi:Constraint}
  Let $b_0$ be the black hole parameters of a \SdS{} black hole.
  Let $\sigma\in \Complex, \Im\sigma\ge 0$, $\sigma\neq0$ and
  suppose that ${h}(\tStar, x)=e^{-\ImagUnit\sigma\tStar}u(x)$, 
  $u\in C^\infty(\Sigma)$ is a mode solution of the gauged
  linearized Einstein operator,
  \begin{equation}
    \label{linear:eq:mode-stab:Kodama-Ishibashi:gauged-corollary:gauged-eqn}
    \LinEinstein_{g_{b_0}}{h}=0.
  \end{equation}  
  Then one of the following must be true:
  \begin{enumerate}
  \item either ${h} = \nabla_{g_{b_0}} \otimes\omega$, or
  \item $\psi := \Constraint_{g_{b_0}}({h}) = e^{-\ImagUnit\sigma
      \tStar}v(x)$ is a non-zero mode solution to the constraint
    propagation equation
    \begin{equation*}
      \ConstraintPropagationOp_{g_{b_0}}\psi = (\VectorWaveOp[g_{b_0}]-\Lambda)\psi = 0.
    \end{equation*}
  \end{enumerate}
  A similar result holds for the quasinormal mode at 0. Indeed, let
  \begin{equation*}
    {h}(\tStar,x) = \sum_{j=0}^k \tStar^ju_{jk}(x),\quad
    u_{jk}\in C^\infty(\Sigma),\quad 0\le j\le k,
  \end{equation*}
  be a generalized mode solution of the gauged linearized Einstein
  equation
  \eqref{linear:eq:mode-stab:Kodama-Ishibashi:gauged-corollary:gauged-eqn}. Then,
  one of the following must hold true:
  \begin{enumerate}
  \item  there exist $b'\in T_{b_0}B$
    and $\omega\in C^\infty(T^*\StaticRegionWithExtension)$, such that
    \begin{equation*}
      {h} = g'_{b_0}(b') +\nabla_{g_{b_0}}\otimes \omega,
    \end{equation*}
    and moreover, ${h}$ satisfies the linearized gauge constraint
    \begin{equation*}
      \Constraint_{g_{b_0}}{h} = 0 
    \end{equation*}
    uniformly in $\StaticRegionWithExtension$;
  \item $\psi := \Constraint_{g_{b_0}}({h}) = \sum_{j=0}^k
    \tStar^jv_j(x)$ is a non-zero mode solution to the constraint 
    propagation equation
    \begin{equation*}
      \ConstraintPropagationOp_{g_{b_0}}\psi = (\VectorWaveOp[g_{b_0}]-\Lambda) \psi = 0.
    \end{equation*}
  \end{enumerate}
\end{corollary}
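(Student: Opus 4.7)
The plan is to reduce the corollary to the geometric mode stability theorem of Kodama--Ishibashi (Theorem \ref{linear:thm:Kodama-Ishibashi}) via the constraint propagation identity in Lemma \ref{linear:lemma:EVE:linearized-constraint-prop}. Since $\LinEinstein_{g_{b_0}}$ is built from the ungauged linearized Einstein operator by subtracting $\nabla_{g_{b_0}} \otimes \Constraint_{g_{b_0}}$, the hypothesis $\LinEinstein_{g_{b_0}} h = 0$ is equivalent to
\begin{equation*}
  D_{g_{b_0}}(\Ric - \Lambda)(h) = \nabla_{g_{b_0}} \otimes \Constraint_{g_{b_0}}(h),
\end{equation*}
and Lemma \ref{linear:lemma:EVE:linearized-constraint-prop} then yields that $\psi := \Constraint_{g_{b_0}}(h)$ automatically satisfies the hyperbolic propagation equation $\ConstraintPropagationOp_{g_{b_0}} \psi = (\Box_{g_{b_0}} - \Lambda)\psi = 0$.

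The next step is to verify that $\psi$ inherits the mode (or generalized mode) structure of $h$. The key observation is that $\Constraint_{g_{b_0}} = -\nabla_{g_{b_0}} \cdot \TraceReversal_{g_{b_0}}$ is a smooth, stationary differential operator on $\StaticRegionWithExtension$, so it commutes with $\KillT = \partial_{\tStar}$. Consequently, if $h = e^{-\ImagUnit\sigma\tStar}u(x)$ with $u \in C^\infty(\Sigma)$, then $\psi = e^{-\ImagUnit \sigma \tStar}v(x)$ with $v \in C^\infty(\Sigma)$; and if $h = \sum_{j=0}^k \tStar^j u_j(x)$ is a generalized mode at frequency $0$, then $\psi = \sum_{j=0}^k \tStar^j v_j(x)$ is also such a generalized mode. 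This dichotomy is clean: either $\psi \not\equiv 0$, and we land directly in the second alternative of the corollary, or $\psi \equiv 0$, in which case the right-hand side of the displayed equation vanishes and $h$ itself solves the \emph{ungauged} linearized Einstein equation $D_{g_{b_0}}(\Ric - \Lambda)(h) = 0$. In the $\sigma \neq 0$ case, part 1 of Theorem \ref{linear:thm:Kodama-Ishibashi} then furnishes a 1-form $\omega$ with $h = \nabla_{g_{b_0}} \otimes \omega$; in the generalized-mode case at $\sigma = 0$, part 2 furnishes $b' \in T_{b_0}B$ and $\omega$ with $h = g'_{b_0}(b') + \nabla_{g_{b_0}} \otimes \omega$, and the uniform vanishing $\Constraint_{g_{b_0}} h = 0$ in $\StaticRegionWithExtension$ is exactly the case hypothesis.

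The argument is therefore essentially algebraic once Lemma \ref{linear:lemma:EVE:linearized-constraint-prop} and Theorem \ref{linear:thm:Kodama-Ishibashi} are in hand, and there is no substantive analytical obstacle to overcome. The one subtle point worth care is that Theorem \ref{linear:thm:Kodama-Ishibashi} is stated for smooth (generalized) mode solutions on the static region; since by assumption $u, u_j \in C^\infty(\Sigma)$ extend smoothly across the horizons and define solutions on all of $\StaticRegionWithExtension$, and since the twice-contracted Bianchi identity used to prove Lemma \ref{linear:lemma:EVE:linearized-constraint-prop} is pointwise, no extra regularity or matching input is required. Verifying the mode-preservation property of $\Constraint_{g_{b_0}}$ and recording the trivial reduction in each branch of the dichotomy therefore completes the proof.
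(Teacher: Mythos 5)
Your proposal is correct and follows the same route as the paper's proof: invoke Lemma \ref{linear:lemma:EVE:linearized-constraint-prop} to get $\ConstraintPropagationOp_{g_{b_0}}\Constraint_{g_{b_0}}(h)=0$, observe the dichotomy between $\Constraint_{g_{b_0}}(h)\neq 0$ (second alternative) and $\Constraint_{g_{b_0}}(h)=0$ (in which case $h$ solves the ungauged equation and Theorem \ref{linear:thm:Kodama-Ishibashi} applies). The additional remark that $\Constraint_{g_{b_0}}$ is stationary and hence preserves the (generalized) mode form of $h$ is a worthwhile explicit check that the paper leaves implicit, but it does not change the argument.
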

\begin{proof}
  Recall from Section \ref{linear:sec:linearized-EVE} that a solution
  ${h}$ to the gauged linearized Einstein equation
  $\LinEinstein_{g_{b_0}}{h} = 0$ also satisfies
  \begin{equation*}
    \ConstraintPropagationOp_{g_{b_0}}\Constraint_{g_{b_0}}({h}) = 0. 
  \end{equation*}
  Thus, either $\Constraint_{g_{b_0}}({h})$ is a non-zero mode solution to
  the constraint propagation equation
  \begin{equation*}
    \ConstraintPropagationOp_{g_{b_0}}\psi = 0,
  \end{equation*}
  or $\Constraint_{g_{b_0}}({h})=0$. But if $\Constraint_{g_{b_0}}({h})=0$,
  then ${h}$ must actually be a mode solution of the ungauged linearized
  Einstein equation. An application of Theorem
  \ref{linear:thm:Kodama-Ishibashi} allows us to conclude. 
\end{proof}

We see from Theorem \ref{linear:thm:Kodama-Ishibashi} and Corollary
\ref{linear:cor:Kodama-Ishibashi:Constraint} that there are two types of
unphysical modes: those that violate the constraint conditions, and
those that are infinitesimal diffeomorphisms of a nearby linearized
\KdS{} metric.  This leads us to define the following two categories
of unphysical $\LSolHk{k}$-quasinormal mode solutions.
\begin{definition}
  Let ${h}=e^{-\ImagUnit\sigma\tStar}u$ be an
  $\LSolHk{k}$-quasinormal mode solution
  $\LinEinstein_{g_b}{h}=0$. Then we say that ${h}$ is a
  $\LSolHk{k}$-\emph{geometric quasinormal mode solution} for
  $\LinEinstein_{g_b}$ if there exists a linearized \KdS{} metric
  $g_b'(b')$ and a one-form $\omega$ such that
  \begin{equation*}
    {h} = g_{b}'(b') + \nabla_{g_{b}}\otimes\omega,
  \end{equation*}
  and moreover, ${h}$ satisfies the linearized gauge constraint
  \begin{equation*}
    \Constraint_{g_{b}}({h}) = 0
  \end{equation*}
  uniformly on $\StaticRegionWithExtension$. 
  On the other hand, let us call any quasinormal mode solution ${h}$ such that
  \begin{equation*}
    \Constraint_{g_b}({h}) \neq 0
  \end{equation*}
  a $\LSolHk{k}$-\emph{constraint quasinormal mode solution} for
  $\LinEinstein_{g_b}$. 
\end{definition}
\begin{definition}
  For any $\sigma\in \Complex$, if there exists an
  $\LSolHk{k}$-geometric quasinormal mode solution for
  $\LinEinstein_{g_b}$ of the form $e^{-\ImagUnit\sigma\tStar}u$, then
  we call $\sigma$ a $\LSolHk{k}$-\emph{geometric quasinormal
    frequency} of $\LinEinstein_{g_b}$. Similarly, if for
  $\sigma\in \Complex$, there exists an $\LSolHk{k}$-constraint
  quasinormal mode solution $e^{-\ImagUnit\sigma\tStar}u$ of
  $\LinEinstein_{g_b}$, then we call $\sigma$ a
  $\LSolHk{k}$-\emph{constraint quasinormal frequency} of
  $\LinEinstein_{g_b}$. Given an open subset $\Xi\subset \Complex$ that contains only
  finitely many $\LSolHk{k}$-quasinormal frequencies of
  $\LinEinstein_{g_b}$, we denote by $\QNFGeok{k}(\LinEinstein_{g_b},
  \Xi)$ and $\QNFConstraintk{k}(\LinEinstein_{g_b}, \Xi)$ the set of
  $\LSolHk{k}$-geometric quasinormal frequencies and
  $\LSolHk{k}$-constraint quasinormal frequencies respectively in
  $\Xi$. 
\end{definition}
\begin{definition}
  Given an open subset $\Xi\subset \Complex$ that contains only
  finitely many $\LSolHk{k}$-quasinormal frequencies of
  $\LinEinstein_{g_b}$, we denote by $\QNMGeok{k}(\LinEinstein_{g_b},
  \Xi)$ the set of all $\LSolHk{k}$-geometric quasinormal mode solutions with
  frequency $\sigma\in \Xi$, and $\QNMConstraintk{k}(\LinEinstein_{g_b},
  \Xi)$ the set of all $\LSolHk{k}$-constraint quasinormal mode solutions with
  frequency $\sigma\in \Xi$. 
\end{definition}
\begin{remark}
  Note that $\sigma$ can be both a geometric and a constraint
  quasinormal frequency for $\LinEinstein_{g_b}$ and \textit{a priori}
  it is not clear that all $\LSolHk{k}$-quasinormal frequencies of
  $\LinEinstein_{g_b}$ are either geometric or constraint frequencies.
\end{remark}

The nomenclature reflects that $\LSolHk{k}$-geometric quasinormal mode
solutions will be handled via a gauge choice, while
$\LSolHk{k}$-constraint quasinormal mode solutions violate the
constraint conditions and thus both geometric and constraint
quasinormal modes represent unphysical quasinormal mode solutions to
the gauged linearized Einstein equations.

With this new nomenclature, we see that Corollary
\ref{linear:cor:Kodama-Ishibashi:Constraint} is a precise statement that in
the case $g_b=g_{b_0}$ is a fixed \SdS{} background, any non-decaying
$\LSolHk{k}$-quasinormal mode solution
${h} = e^{-\ImagUnit\sigma\tStar}u$ of $\LinEinstein_{g_{b_0}}$ are
either an $\LSolHk{k}$-geometric quasinormal mode or an $\LSolHk{k}$-constraint
quasinormal mode solution of $\LinEinstein_{g_{b_0}}$, and
therefore, all non-decaying $\LSolHk{k}$-quasinormal mode solutions of
$\LinEinstein_{g_{b_0}}$ are unphysical.

\subsection{Linearized stability of \SdS}
\label{linear:sec:lin-stability:SdS}

Given the GMS statement in Theorem \ref{linear:thm:Kodama-Ishibashi} and the
gauged mode stability statement in Corollary
\ref{linear:cor:Kodama-Ishibashi:Constraint} for \SdS, we can now prove the
linearized stability statement for the linearized Einstein's equations
around $g_{b_0}$.

\begin{theorem}[Stability of the linearized gauged Einstein vacuum
  equations linearized around $g_{b_0}$]
  \label{linear:thm:lin-stability:SdS}
  Fix $k>k_0$, and let $(\InducedMetric', k')\in H^{k+1}(\Sigma_0,
  S^2T^*\Sigma_0)\oplus H^{k}(\Sigma_0;S^2T^*\Sigma_0)$ be
  solutions of the linearized constraint equations, linearized around
  the initial data $(\InducedMetric_{b_0}, k_{b_0})$ of the \SdS{} space
  $(\StaticRegionWithExtension, g_{b_0})$. Let ${h}$ be a solution to
  the initial value problem
  \begin{equation*}
    \begin{cases}
      \LinEinstein_{g_{b_0}}{h} = 0 &\text{in }\StaticRegionWithExtension,\\
      \gamma_0({h}) = D_{(\InducedMetric_{b_0}, k_{b_0})}i_{b_0, \Identity}(\InducedMetric', k')&\text{on }\Sigma_0,
    \end{cases}
  \end{equation*}
  where $i_{b, \Identity}$ is defined in Proposition
  \ref{linear:prop:initial-data:ib-construction}.
  
  Then, there exist $b'\in T_{b_0}B$ and
  a 1-form $\omega\in C^\infty(\StaticRegionWithExtension, T^*\StaticRegionWithExtension)$ such that
  \begin{equation*}
    h= g_{b_0}'(b') +\nabla_{g_{b_0}}\otimes \omega +\tilde{h},
  \end{equation*}
  where $\tilde{h}$ satisfies the pointwise bounds
  \begin{equation*} 
    \begin{split}
      \norm{\tilde{h}}_{\HkWithT{k}(\Sigma_{\tStar})} &\lesssim
      e^{-\SpectralGap\tStar}\left(\norm{\InducedMetric'}_{H^{k+1}(\Sigma_0)}
        + \norm{k'}_{H^{k}(\Sigma_0)}\right),\\
      \abs{b'} + e^{-\GronwallExp\tStar}\norm{\omega}_{\HkWithT{k}(\Sigma_{\tStar})} &\lesssim \left(\norm{\InducedMetric'}_{H^{k+1}(\Sigma_0)} + \norm{k'}_{H^{k}(\Sigma_0)}\right).      
    \end{split}
  \end{equation*}
\end{theorem}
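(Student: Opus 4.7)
The plan is to combine the asymptotic expansion of Corollary \ref{linear:coro:asymptotic-expansion} with the geometric mode stability of Corollary \ref{linear:cor:Kodama-Ishibashi:Constraint}, using the constraint propagation equation to rule out constraint-violating obstacles. First, I would observe that the Cauchy data $(h_0,h_1) = D_{(\InducedMetric_{b_0},k_{b_0})} i_{b_0,\Identity}(\InducedMetric',k')$ produced by Corollary \ref{linear:corollary:initial-data:ib-linearized} is admissible, i.e.\ the linearized gauge constraint is satisfied on $\Sigma_0$: $\Constraint_{g_{b_0}}(h)|_{\Sigma_0}=0$, and the linearized second Bianchi identity together with the fact that $(\InducedMetric',k')$ solves the linearized constraint equations forces $\LieDerivative_{\KillT}\Constraint_{g_{b_0}}(h)|_{\Sigma_0}=0$ as well. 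Then Lemma \ref{linear:lemma:EVE:linearized-constraint-prop} and uniqueness for the principally hyperbolic operator $\ConstraintPropagationOp_{g_{b_0}}$ applied to the one-form $\Constraint_{g_{b_0}}(h)$ gives $\Constraint_{g_{b_0}}(h)\equiv 0$ on all of $\StaticRegionWithExtension$.

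Next, since $\LinEinstein_{g_{b_0}}$ is strongly hyperbolic and satisfies the hypotheses established by Theorems \ref{linear:thm:meromorphic:main-A} and \ref{linear:thm:resolvent-estimate:inf-gen}, I would invoke Corollary \ref{linear:coro:asymptotic-expansion} to write
\begin{equation*}
  h = \tilde h + \sum_{j=1}^{N}\sum_{\ell=0}^{d_j} e^{-\ImagUnit\sigma_j\tStar}\tStar^{\ell} u_{j\ell}(x),
\end{equation*}
with $\Im\sigma_j\ge 0$ for each $j$ (using Theorem \ref{linear:thm:resolvent-estimate:inf-gen} to rule out QNFs with $-\SpectralGap<\Im\sigma<0$ by enlarging $\SpectralGap$), and with $\tilde h$ satisfying the desired exponential decay bound. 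By grouping the terms at each fixed $\sigma_j$, I may assume each sum in $\ell$ is itself an $\LSolHk{k}$-quasinormal mode solution $h_j$ of $\LinEinstein_{g_{b_0}}$ in the sense of the space $\QNMk{k}(\LinEinstein_{g_{b_0}},\sigma_j)$.

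The heart of the argument is the extraction of the constraint at the level of each individual mode. Applying $\Constraint_{g_{b_0}}$ term by term yields
\begin{equation*}
  0 = \Constraint_{g_{b_0}}(\tilde h) + \sum_{j=1}^{N} \Constraint_{g_{b_0}}(h_j),
\end{equation*}
where $\Constraint_{g_{b_0}}(\tilde h)$ decays at rate $\SpectralGap$ while $\sum_j \Constraint_{g_{b_0}}(h_j)$ is a finite sum of polynomial-in-$\tStar$ times exponentials $e^{-\ImagUnit\sigma_j\tStar}$ with $\Im\sigma_j\ge 0$. By linear independence of these profiles over $C^\infty(\Sigma)$ (distinct $\sigma_j$ give distinct growth rates; a common $\sigma_j$ is separated by polynomial order), each $\Constraint_{g_{b_0}}(h_j)$ must vanish identically on $\StaticRegionWithExtension$. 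Hence each $h_j$ is a non-decaying generalized mode solution of the \emph{ungauged} linearized Einstein equation on $g_{b_0}$, and Theorem \ref{linear:thm:Kodama-Ishibashi} yields $b'_j\in T_{b_0}B$ and a one-form $\omega_j$ with $h_j = g'_{b_0}(b'_j) + \nabla_{g_{b_0}}\otimes\omega_j$.

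Finally, I would set $b':=\sum_j b'_j$ and $\omega:=\sum_j \omega_j$; the quantitative control $|b'|+|\omega|$ on compact time slabs in terms of the initial data norm follows from the continuity in Corollary \ref{linear:coro:asymptotic-expansion}, Corollary \ref{linear:corollary:initial-data:ib-linearized}, and the proof of Theorem \ref{linear:thm:Kodama-Ishibashi}. The main obstacle is the mode-by-mode extraction of the constraint in the third step: one must be careful that the decomposition of $h$ into $\tilde h$ and the individual $h_j$ is indeed unique up to the precise ambiguity described in the remark after Corollary \ref{linear:coro:asymptotic-expansion}, and that applying the first-order differential operator $\Constraint_{g_{b_0}}$ preserves the mode structure (which it does, since $\Constraint_{g_{b_0}}$ is stationary). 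A subsidiary technical point is to ensure regularity of $\omega$: since $u_{j\ell}\in C^\infty(\Sigma)$ by Corollary \ref{linear:coro:redshift-regularity} applied to the Laplace-transformed operator, the gauge one-form constructed in Theorem \ref{linear:thm:Kodama-Ishibashi} is smooth, giving the claimed bounds on $\omega$ after noting it grows at most at the rate $\GronwallExp$ dictated by the largest $|\Im\sigma_j|$ in the expansion.
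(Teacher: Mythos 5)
Your proof is correct and takes essentially the same route as the paper's: use the construction of $i_{b_0,\Identity}$ and the constraint propagation equation to get $\Constraint_{g_{b_0}}(h)\equiv 0$, apply the asymptotic expansion of Corollary~\ref{linear:coro:asymptotic-expansion}, deduce that the non-decaying modal part has vanishing constraint, and invoke geometric mode stability. Your explicit mode-by-mode extraction via linear independence of the profiles $e^{-\ImagUnit\sigma_j\tStar}\tStar^\ell$ fills in a step the paper merely asserts, and invoking Theorem~\ref{linear:thm:Kodama-Ishibashi} directly is equivalent to the paper's use of Corollary~\ref{linear:cor:Kodama-Ishibashi:Constraint} once the constraint is known to vanish.
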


\begin{proof}
  Let $\Xi = \QNFk{k}(\LinEinstein_{g_{b_0}}, \UpperHalfSpace)$. At the
  cost of reducing $\SpectralGap$, we can ensure that in fact, $\Xi$
  consists of all $\LSolHk{k}$-quasinormal frequencies such that
  $\Im\sigma>-\SpectralGap$. We index $\Xi =
  \{\sigma_{j}\}, 1\le j \le N_{\LinEinstein_{g_{b_0}}}$. 
  Applying Corollary \ref{linear:coro:asymptotic-expansion}, we see
  that ${h}$ can be written as
  \begin{equation}
    \label{linear:eq:linear-stab:SdS:Proof1:h-expression}
    {h} = \tilde{{h}} + \sum_{1\le j \le N_{\LinEinstein_{g_{b_0}}}}\sum_{\ell = 1}^{d_j}\sum_{k=0}^{n_{j\ell}} e^{-\ImagUnit\sigma_j\tStar}\tStar^ku_{j\ell},
  \end{equation}
  where each $\sum_{\ell = 1}^{d_j}\sum_{k=0}^{n_{j\ell}}
  e^{-\ImagUnit\sigma_j\tStar}\tStar^ku_{j\ell}\in
  \QNMk{k}(\LinEinstein_{g_{b_0}}, \sigma_j)$, and moreover,
  \begin{equation*}
    \begin{split}
      \norm{\tilde{{h}}}_{\HkWithT{k}(\Sigma_{\tStar})} &\lesssim e^{-\SpectralGap\tStar}\left(\norm{\InducedMetric'}_{H^{k+1}(\Sigma_0)} + \norm{k'}_{H^{k}(\Sigma_0)}\right),\\
      \norm*{\sum_{\ell = 1}^{d_j}\sum_{k=0}^{n_{j\ell}}
        e^{-\ImagUnit\sigma_j\tStar}\tStar^ku_{j\ell}}_{\HkWithT{k}(\Sigma_{\tStar})} &\lesssim e^{\GronwallExp\tStar}\left(\norm{\InducedMetric'}_{H^{k+1}(\Sigma_0)} + \norm{k'}_{H^{k}(\Sigma_0)}\right).              
    \end{split}
  \end{equation*}
  
  Recall from the construction of the map $i_{b, \Identity}$ in
  Proposition \ref{linear:prop:initial-data:ib-construction} that for ${h}$
  inducing on $\Sigma_0$ the initial data
  $i_{b,\Identity}(\InducedMetric', k') = \gamma_0(h)$, ${h}$
  satisfies the initial gauge constraint
  \begin{equation*}
    \Constraint_{g_{b_0}}{h}\vert_{\Sigma_0}
    =\LieDerivative_{\KillT} \Constraint_{g_{b_0}}{h}\vert_{\Sigma_0}
    = 0 .
  \end{equation*}
  Then, recalling from the discussion in Section \ref{linear:sec:EVE},
  $\Constraint_{g_{b_0}}{h}$ itself satisfies a wave equation, and as
  a result, the initial data $(\InducedMetric', k')$ launches a
  solution ${h}$ to the gauged linearized EVE such that
  \begin{equation*}
    \Constraint_{g_{b_0}}({h})=0
  \end{equation*}
  for all time, and thus, ${h}$ solves not only the gauged
  linearized EVE, but also the ungauged linearized EVE,
  $D_{g_{b_0}}(\Ric-\Lambda){h} = 0$. This implies that the
  $\LSolHk{k}$-quasinormal mode solutions in the finite sum in
  equation \eqref{linear:eq:linear-stab:SdS:Proof1:h-expression} also
  satisfy the linearized constraint equations
  \begin{equation*}
    \Constraint_{g_{b_0}}\left(\sum_{1\le j \le N_{\LinEinstein_{g_{b_0}}}}\sum_{\ell = 1}^{d_j}\sum_{k=0}^{n_{j\ell}} e^{-\ImagUnit\sigma_j\tStar}\tStar^ku_{j\ell}\right) =0.
  \end{equation*}
  As a result, using
  Corollary \ref{linear:cor:Kodama-Ishibashi:Constraint}, we see that
  $\sum_{1\le j \le N_{\LinEinstein_{g_{b_0}}}}\sum_{\ell =
    1}^{d_j}\sum_{k=0}^{n_{j\ell}}
  e^{-\ImagUnit\sigma_j\tStar}\tStar^ku_{j\ell}$ must in fact be a
  linear combination of 
  $\LSolHk{k}$-geometric quasinormal mode solutions, and thus there
  exists some $b'$, $\omega$ such that
  \begin{equation*}
    \sum_{1\le j \le N_{\LinEinstein_{g_{b_0}}}}\sum_{\ell = 1}^{d_j}\sum_{k=0}^{n_{j\ell}} e^{-\ImagUnit\sigma_j\tStar}\tStar^ku_{j\ell} = g_{b_0}'(b') +\nabla_{g_{b_0}}\otimes \omega,
  \end{equation*}
  as desired. 
\end{proof}

This method of proving stability of the linearized EVE around
$g_{b_0}$ relies only on having the GMS statement in Theorem
\ref{linear:thm:Kodama-Ishibashi} and an appropriate high-frequency
resolvent estimate like in Theorem
\ref{linear:thm:resolvent-estimate:inf-gen}. When trying to generalize the
result to $g_b$, $b\in\BHParamNbhd$, we see that while we do still
have a good high-frequency resolvent estimate in Theorem
\ref{linear:thm:resolvent-estimate:inf-gen}, we do not have a version of GMS
which holds for the ungauged linearized EVE around $g_b$ for
$b=(M, \AngularMomentum)$ where $\AngularMomentum\neq 0$. Instead of
proving a version of GMS for $g_b$ directly, which is in and of itself
a difficult problem, we will take advantage of the fact that we are
only considering $b\in\BHParamNbhd$, where $\BHParamNbhd$ is a small
neighborhood of black hole parameters of $b_0$, to prove a mode
stability statement for $\LinEinstein_{g_b}$ perturbatively.

To this end, we consider an alternative proof for the linear stability
of the gauged Einstein equations linearized around $g_{b_0}$ by
showing how to use the $\LSolHk{k}$-quasinormal modes to generate an
appropriate linearized generalized harmonic gauge for the equation, in
which the contribution of the non-decaying $\LSolHk{k}$-quasinormal
modes disappears. 
\begin{prop}
  \label{linear:prop:Linear-Stability:SdS:robust}
  For $b'\in T_{b_0}B$, let $\omega_{b_0}^\Upsilon(b')$ denote the
  solution of the Cauchy problem
  \begin{equation*}
    \begin{split}
      \begin{cases}
        \Constraint_{g_{b_0}}\circ\nabla_{g_{b_0}}\otimes\omega_{b_0}^\Upsilon(b')
        = -\Constraint_{g_{b_0}}(g_{b_0}'(b')) & \text{in }\mathcal{M}\\
        \InitData(\omega_{b_0}^\Upsilon(b'))=(0,0)&\text{on }\Sigma_0.
      \end{cases} 
    \end{split}
  \end{equation*}
  Define
  \begin{equation*}
    (g_{b_0}')^\Upsilon(b') := g_{b_0}'(b') +\nabla_{g_{b_0}}\otimes \omega_{b_0}^\Upsilon(b'),
  \end{equation*}
  which solves the linearized gauged Einstein equation
  $\LinEinstein_{g_{b_0}}((g_{b_0}')^\Upsilon(b'))=0$. Fix
  $k>k_0$, and let $\SpectralGap>0$ be sufficiently small. 

  Then there exists a finite-dimensional linear subspace 
  \begin{equation*}
    \Theta\subset C^\infty_0 (\StaticRegionWithExtension, T^*\StaticRegionWithExtension),
    \qquad \Upsilon \subset \LSolHk{k}(\Sigma_0)
  \end{equation*}
  such that the following holds: for any $(f, {h}_0, {h}_1)\in
  D^{k+1,\SpectralGap}(\Omega)$ there exist unique $b'\in T_{b_0}B$,
  $\vartheta\in\Theta$, and $\upsilon\in \Upsilon$,
  such that the solution $\tilde{{h}}$ of the forward problem
  \begin{equation*} 
    \begin{cases}
    \LinEinstein_{g_{b_0}}\tilde{{h}} = f, & \text{in }\StaticRegionWithExtension,\\
    \InitData(\tilde{{h}})=({h}_0, {h}_1)  -
    \InitData((g_{b_0}')^\Upsilon(b') +\nabla_{g_{b_0}}\otimes \vartheta) - \upsilon,
    & \text{on }\Sigma_0
    \end{cases}    
  \end{equation*}
  satisfies the pointwise bound
  \begin{equation*}
    \norm{\tilde{{h}}}_{\HkWithT{k}(\Sigma_{\tStar})} \lesssim e^{-\SpectralGap\tStar}\norm{(f, {h}_0, {h}_1)}_{D^{k+1,\SpectralGap}(\StaticRegionWithExtension)},
  \end{equation*}
  and moreover, the
  map $(f, {h}_0, {h}_1)\mapsto (b', \vartheta, \upsilon)$ is linear and
  continuous.  Finally, if $({h}_0, {h}_1)$ satisfy the linearized
  constraint conditions, then in fact $\upsilon=0$. 
\end{prop}

\begin{proof}
  Consider the set of $\LSolHk{k}$-quasinormal mode solutions of
  $\LinEinstein_{g_{b_0}}$ with non-negative imaginary part
  $\QNMk{k}(\LinEinstein_{g_{b_0}}, \UpperHalfSpace)$. Recall from
  Theorem \ref{linear:thm:meromorphic:main-A} and Theorem
  \ref{linear:thm:resolvent-estimate:inf-gen} that there are only finitely
  many such $\LSolHk{k}$-quasinormal modes.
  Let us first recall that all $\LSolHk{k}$-geometric
  quasinormal mode solutions $\psi$ satisfy the
  linearized gauge constraint 
  \begin{equation*}
    \Constraint_{g_{b_0}}\psi = 0.
  \end{equation*}
  Moreover, since all all $\LSolHk{k}$- quasinormal mode solutions of
  $\LinEinstein_{g_{b_0}}$ satisfying the linearized gauge constraint are
  also mode solutions of the ungauged linearized Einstein equation, we
  know from Theorem \ref{linear:thm:Kodama-Ishibashi} that all
  $\LSolHk{k}$-quasinormal mode solutions of $\LinEinstein_{g_{b_0}}$
  satisfying the linearized gauge constraint are in fact
  $\LSolHk{k}$-geometric quasinormal modes.  On the other hand, since
  the constraint modes of $\LinEinstein_{g_{b_0}}$ are exactly those modes
  of $\LinEinstein_{g_{b_0}}$ which do not satisfy the linearized gauge
  condition, all the quasinormal modes of $\LinEinstein_{g_{b_0}}$ must
  either be an $\LSolHk{k}$-quasinormal geometric mode  solution or an
  $\LSolHk{k}$-quasinormal constraint mode solution. We can then
  characterize the $\LSolHk{k}$-geometric quasinormal mode solutions
  and the $\LSolHk{k}$-constraint quasinormal mode solutions by:
  \begin{equation*}
    \begin{split}
      \QNMGeok{k}(\LinEinstein_{g_{b_0}},\UpperHalfSpace)
      &= \curlyBrace*{{h}\in\QNMk{k}(\LinEinstein_{g_{b_0}}, \UpperHalfSpace):\Constraint_{g_{b_0}}({h}) = 0},\\
      \QNMConstraintk{k}(\LinEinstein_{g_{b_0}},\UpperHalfSpace)
      &= \curlyBrace*{{h}\in\QNMk{k}(\LinEinstein_{g_{b_0}}, \UpperHalfSpace):\Constraint_{g_{b_0}}({h}) \neq 0}.
    \end{split}
  \end{equation*}
  It is clear from the preceding discussion that
  \begin{equation*}
    \QNMk{k}(\LinEinstein_{g_{b_0}},\UpperHalfSpace)
    = \QNMGeok{k}(\LinEinstein_{g_{b_0}}, \UpperHalfSpace)+ \QNMConstraintk{k}(\LinEinstein_{g_{b_0}}, \UpperHalfSpace).
  \end{equation*}
  We index the $\LSolHk{k}$-geometric quasinormal frequencies,
  $\curlyBrace*{\sigma_{j}^\Upsilon}_{j=1}^{N_{\Upsilon}}$ such that
  $\sigma_1=0$. For each $\sigma_j^\Upsilon$, $j\ge 2$, we fix a basis
  $\curlyBrace{\nabla_{g_{b_0}}\otimes\vartheta_{j\ell}}_{\ell=1}^{N^\Upsilon_j}$
  of $\QNMGeok{k}(\LinEinstein_{g_{b_0}}, \sigma_j^\Upsilon)$,
  and define
  \begin{equation*}
    \Theta_{\neq 0}:= \Span\curlyBrace{\vartheta_{j\ell}:2\le j\le N_{b_0}^\Upsilon, 1\le \ell\le N^{\Upsilon}_j}.
  \end{equation*}
  Next, index the constraint $\LSolHk{k}$-quasinormal frequencies
  $\{\sigma_j^\Constraint\}_{i=1}^{N_{\Constraint}}$. Then for each
  $\sigma_j^{\Constraint}$, we fix a basis
  $\curlyBrace{\upsilon_{j\ell}}_{\ell=1}^{N^{\Constraint}_j}$, and
  define
  \begin{equation*}
    \Upsilon:=\Span\curlyBrace*{\upsilon_{j\ell}: 1\le j\le N^{\Constraint}_{b_0}, 1\le\ell\le N_j^{\Constraint}}.
  \end{equation*}

  
  
  It remains to deal with the geometric $\LSolHk{k}$-quasinormal mode
  solutions at $0$. At the zero quasinormal mode, we need to separate
  the gauge modes coming from a linearized \KdS{} metric and the
  remaining modes arising from an infinitesimal diffeomorphism.
  
  To do so, recall that we defined $\omega_{b_0}^\Upsilon(b')$ as the
  solution of a Cauchy problem for
  \begin{equation*}
    \Box^{\Upsilon}_{g_{b_0}} = \Constraint_{g_{b_0}}\circ\nabla_{g_{b_0}}\otimes.
  \end{equation*}
  We can calculate using the definition of
  $\Constraint,\nabla_{g_{b_0}}\otimes $, that $\Box^{\Upsilon}_{g_{b_0}}$ is
  principally $\ScalarWaveOp[g_{b_0}]$ and is equal to
  $\VectorWaveOp[g_{b_0}]$ up to an order-zero
  perturbation\footnote{We can actually calculate that
    \begin{equation*}
      \Box^{\Upsilon}_{g_{b_0}} = \VectorWaveOp[g_{b_0}] - \Lambda.
    \end{equation*}
  }.  As a result, $\Box^{\Upsilon}_{g_{b_0}}$ is a strongly
  hyperbolic operator with a well-defined $\LSolHk{k}$-quasinormal
  spectrum just like $\LinEinstein_{g_{b_0}}$. In fact, from Lemma
  \ref{linear:lemma:subprincipal-symbol-control}, we see that it also has the
  desired pseudo-differential smallness at $\TrappedSet_{b_0}$ that
  would allow us to prove the existence of a spectral gap for
  $\Box^{\Upsilon}_{g_{b_0}}$. As a result,
  \begin{equation*}
    \omega_{b_0}^\Upsilon = \omega_0 + \omega_{-} + \omega_{+},
  \end{equation*}
  where $\omega_{-}$ decays exponentially,
  $\omega_{+}\in \QNMk{k}(\Box^{\Upsilon}_{g_{b_0}},
  \UpperHalfSpaceExcZero)$, and
  $\omega_0$ is the contribution of the
  zero-mode. Then
  \begin{equation*}
    (g_{b_0}')^\Upsilon(b')^{(0)} = g_{b_0}'(b') +\nabla_{g_{b_0}}\otimes \omega_0 
  \end{equation*}
  satisfies $\LinEinstein_{g_{b_0}}((g_{b_0}')^\Upsilon(b')^{(0)}) =
  0$, and $(g_{b_0}')^\Upsilon(b')^{(0)} \in
  \QNMGeok{k}(\LinEinstein_{g_{b_0}}, 0)$. 
  
  We now define
  \begin{equation*}
    K := \curlyBrace*{(g_{b_0}')^\Upsilon(b')^{(0)}:b'\in T_{b_0}B}\subset \QNMGeok{k}(\LinEinstein_{g_{b_0}}, 0),
  \end{equation*}
  which has dimension $d_0= 4$. Then
  $\QNMGeok{k}(\LinEinstein_{g_{b_0}}, 0)\backslash K$ has a basis of
  the form
  $\curlyBrace*{\nabla_{g_{b_0}}\otimes\vartheta_\ell:1\le\ell\le d_1}$,
  where $d_1=\dim\QNMGeok{k}(\LinEinstein_{g_{b_0}}, 0)-d_0$. Define
  \begin{equation*}
    \Theta_0 := \Span\curlyBrace*{\vartheta_\ell:1\le\ell\le d_1},\quad \Theta = \Theta_0\oplus\Theta_{\neq0}.
  \end{equation*}
  We then observe that
  \begin{equation*}
    \LinEinstein_{g_{b_0}}\left((g_{b_0}')^\Upsilon(b')-(g_{b_0}')^\Upsilon(b')^{(0)}\right) = 0.
  \end{equation*}
  As a result,
  \begin{equation*}
    (g_{b_0}')^\Upsilon(b')-(g_{b_0}')^\Upsilon(b')^{(0)} \in\nabla_{g_{b_0}}\otimes \Theta + O(e^{-\SpectralGap \tStar}).
  \end{equation*}
  We can now define the space of initial value problem
  modifications,
  \begin{equation*}
    \ZCal :=
    (0, \gamma_0((g_{b_0}')^\Upsilon(T_{b_0}B)))
    +(0, \gamma_0(\nabla_{g_{b_0}}\otimes\Theta)) + (0, \gamma_0(\Upsilon)) \subset D^{\infty,\SpectralGap}(\StaticRegionWithExtension, S^2T^*\StaticRegionWithExtension).
  \end{equation*}
  The map $\lambda_{\ZCal}$ as defined in Corollary
  \ref{linear:coro:lambda-IVP} is then bijective by
  construction. $\lambda_{\ZCal}$ is surjective since any
  non-decaying asymptotic behavior in a solution of
  $\LinEinstein_{g_{b_0}}{h} = (f,{h}_0,{h}_1)$ can be removed by
  modifying $(f', {h}_0', {h}_1') = (f,{h}_0,{h}_1) + z$
  for some $z\in\ZCal$. By construction
  $\lambda_{\ZCal}$ is also injective as the dimension of $\ZCal$ is
  at most as large as the space of $\LSolHk{k}$-quasinormal mode solutions
  $\QNMk{k}(\LinEinstein_{g_{b_0}}, \UpperHalfSpace)$.
  We can also observe directly that in the case that
  $({h}_0, {h}_1)$ solve the linearized constraint equation, we in
  fact have that $\Constraint_{g_{b_0}}({h}) = 0$, and thus, there are
  no contributions from the $\LSolHk{k}$-quasinormal constraint
  modes, turning the family of initial value problem modifications
  simply into a family of gauge modifications, as desired. The
  desired estimates follow from the estimates in Corollary \ref{linear:coro:lambda-IVP}. 
\end{proof}

With Proposition \ref{linear:prop:Linear-Stability:SdS:robust}, we can now
offer an alternative proof of Theorem
\ref{linear:thm:lin-stability:SdS}.
\begin{proof}[Alternative proof of Theorem \ref{linear:thm:lin-stability:SdS}]
  Using Proposition \ref{linear:prop:Linear-Stability:SdS:robust}, we know
  that there exists some $b'\in T_{b_0}B$, $\vartheta\in\Theta$ such
  that the solution $\tilde{h}$ of the Cauchy problem
  \begin{equation*}
    \begin{split}
      \LinEinstein_{g_{b_0}}{h} &=  0\\
      \gamma_0({h}) &= i_{b_0, \Identity}(\InducedMetric', k') - \gamma_0((g_{b_0}')^{\Upsilon}(b') +\nabla_{g_{b_0}}\otimes \vartheta)
    \end{split}
  \end{equation*}
  satisfies the decay bound
  \begin{equation*}
    \norm{\tilde{h}}_{\HkWithT{k}(\Sigma_{\tStar})} \lesssim e^{-\SpectralGap\tStar}\left(
      \norm{g'}_{H^{k+1}(\Sigma_0)} + \norm{k'}_{H^{k}(\Sigma_0)}
    \right).
  \end{equation*}
  Moreover, by construction,
  \begin{equation*}
    \LinEinstein_{g_{b_0}}\left((g_{b_0}')^{\Upsilon}(b') +\nabla_{g_{b_0}}\otimes \vartheta\right) = 0.
  \end{equation*}
  As a result,
  \begin{equation*}
    h = (g_{b_0}')^{\Upsilon}(b') +\nabla_{g_{b_0}}\otimes \vartheta + \tilde{h}
  \end{equation*}
  solves 
  \begin{equation*}
    \begin{split}
      \LinEinstein_{g_{b_0}}h &=  0,\\
      \gamma_0(h) &= i_{b_0, \Identity}(\InducedMetric', k'),
    \end{split}
  \end{equation*}
  and moreover since $\Constraint_{g_{b_0}}(h) = 0$ uniformly on $\StaticRegionWithExtension$, we
  must have that actually, $h$ also solves the ungauged
  linearized Einstein equation and has
  the desired form in Theorem~\ref{linear:thm:lin-stability:SdS}.
\end{proof}

\subsection{Perturbation of the $\LSolHk{k}$-quasinormal spectrum to
  \KdS}
\label{linear:sec:KdS-QNM-perturb}

We now wish to show that Theorem \ref{linear:thm:lin-stability:SdS} holds not
only for $g_{b_0}$, but also for $g_b$ sufficiently
slowly-rotating. Unfortunately, in the \KdS{} case, we do not have a
statement of geometric mode stability like we did in the \SdS{}
case. Therefore, as previously mentioned, the direct proof of linear
stability of \SdS{} using Theorem \ref{linear:thm:Kodama-Ishibashi} and
Theorems \ref{linear:thm:meromorphic:main-A} and
\ref{linear:thm:resolvent-estimate:inf-gen} cannot be adjusted to extend to
the slowly-rotating \KdS{} case. What saves us is the alternative
method of proving the linear stability of \SdS{} in Proposition
\ref{linear:prop:Linear-Stability:SdS:robust} which uses only the fact that
all non-decaying $\LSolHk{k}$-quasinormal modes of
$\LinEinstein_{g_{b_0}}$ are either $\LSolHk{k}$-geometric or
$\LSolHk{k}$-constraint quasinormal modes. The main goal of this
section will be to utilize perturbative properties of the spectrum of
strongly hyperbolic linear operators to deduce the that all
non-decaying $\LSolHk{k}$-quasinormal modes of $\LinEinstein_{g_b}$
are also either geometric or constraint modes. From there, it will
follow that we can repeat the proof of Proposition
\ref{linear:prop:Linear-Stability:SdS:robust}, and as a result, attain a
proof for the linear stability of the gauged Einstein's equations
linearized around $g_b$.
\begin{prop}[Mode stability of $\LinEinstein_{g_b}$, version 2]
  \label{linear:prop:KdS-QNM-perturb}
  The set of $\LSolHk{k}$-geometric quasinormal frequencies of
  $\LinEinstein_{g_b}$, and $\LSolHk{k}$-constraint quasinormal frequencies of
  $\LinEinstein_{g_b}$ are continuous in $b$ in the Hausdorff distance
  sense, and in particular, for a sufficiently small neighborhood
  $\BHParamNbhd$ of $b_0$, 
  \begin{equation*}
    \begin{split}
      \dim \QNMGeok{k}(\LinEinstein_{g_{b_0}},
      \UpperHalfSpace)
      &= \dim \QNMGeok{k}(\LinEinstein_{g_b},
      \UpperHalfSpace),\\
      \dim \QNMConstraintk{k}(\LinEinstein_{g_{b_0}},
      \UpperHalfSpace)
      &= \dim \QNMConstraintk{k}(\LinEinstein_{g_b},
      \UpperHalfSpace),
    \end{split}
  \end{equation*}
  for all $b\in\BHParamNbhd$. As a result, for $b\in \BHParamNbhd$,
  all $\LSolHk{k}$-quasinormal mode solutions of $\LinEinstein_{g_b}$ are
  either $\LSolHk{k}$-geometric or $\LSolHk{k}$-constraint quasinormal
  mode solutions.
\end{prop}

\begin{proof}
  The main difficulty in applying perturbation theory to
  $\LinEinstein_{g_b}$ directly is that while $\LinEinstein_{g_b}$ is strongly
  hyperbolic, the perturbation theory established in Proposition
  \ref{linear:prop:QNM-perturb:gen} does not distinguish between
  geometric and constraint modes. Neither does it prevent the
  introduction or destruction of geometric or constraint modes.  The
  main tool that allows us to circumvent this in the proof of the
  current proposition is the observation that we can identify the
  $\LSolHk{k}$-geometric quasinormal modes of $\LinEinstein_{g_b}$ with
  the $\LSolHk{k}$-quasinormal modes of the constraint propagation
  operator $\ConstraintPropagationOp_{g_{b}}$. Since
  $\ConstraintPropagationOp_{g_b}$ is a strongly hyperbolic operator,
  we can apply perturbation theory results to
  $\ConstraintPropagationOp_{g_b}$, which gives us a perturbative way
  of treating the $\LSolHk{k}$-geometric quasinormal modes of
  $\LinEinstein_{g_b}$. 
  
  We begin by relating the non-zero $\LSolHk{k}$-quasinormal modes of
  $\LinEinstein_{g_b}$ with the non-zero $\LSolHk{k}$-quasinormal modes of
  $\ConstraintPropagationOp_{g_b}$. Define
  \begin{align*}
    \UpperHalfSpace_\epsilon := \curlyBrace*{\sigma\in\Complex:\abs*{\sigma}>\epsilon, \Im\sigma > -\epsilon},\qquad
    \mathbb{B}_\epsilon := \curlyBrace*{\sigma\in \Complex: |\sigma|<\epsilon},
  \end{align*}
  where $\epsilon$ is chosen so that
  \begin{equation*}
    \p\UpperHalfSpace_{\epsilon}\bigcap \QNMk{k}(\LinEinstein_{g_b}) =
    \emptyset,\qquad
    \mathbb{B}_{\epsilon} \bigcap \QNMk{k}(\LinEinstein_{g_b}) = \{0\}.
  \end{equation*}
  
  For any
  $\LSolHk{k}$-quasinormal mode
  $\omega = e^{-\ImagUnit\sigma\tStar}\omega_\sigma$ of
  $\ConstraintPropagationOp_{g_b}$,
  \begin{equation*}
    \ConstraintPropagationOp_{g_b}\omega = 2\nabla_{g_b}\cdot\TraceReversal_{g_b}\nabla_{g_b}\otimes  \omega = 0. 
  \end{equation*}
  Recall that $\Constraint_{g_b} =
  -\nabla_{g_b}\cdot\TraceReversal_{g_b}$. As a result, for ${h} =
  \nabla_{g_b}\otimes \omega$, we have that
  \begin{equation*}
    \LinEinstein_{g_b}{h} = 0,
  \end{equation*}
  and ${h}$ is a geometric mode of $\LinEinstein_{g_b}$. Thus, we can
  consider the mapping
  \begin{equation*}
    \begin{split}
      \QNMk{k}(\ConstraintPropagationOp_{g_b}, \UpperHalfSpace_\epsilon )&\mapsto \QNMGeok{k}(\LinEinstein_{g_b}, \UpperHalfSpace_\epsilon ),\\
      \omega &\mapsto \nabla_{g_b}\otimes \omega. 
    \end{split}
  \end{equation*}
  This map is clearly injective, as if $\omega_1, \omega_2 \in
  C^{\infty}(\StaticRegionWithExtension,
  T^*\StaticRegionWithExtension)$ satisfy
  \begin{equation*}
    \nabla_{g_b}\otimes (\omega_1-\omega_2) = 0,
  \end{equation*}
  then $\omega_1-\omega_2$ is Killing, but all the Killing vectors on
  \KdS{} are stationary, and we assumed that
  $\omega_1,\omega_2\not\in
  \QNMGeok{k}(\ConstraintPropagationOp_{g_b}, 0)$.  Now consider some
  ${h} = \nabla_{g_b}\otimes \omega\in \QNMGeok{k}(\LinEinstein_{g_b},
  \UpperHalfSpace_\epsilon )$. Then by definition, we have that
  $\nabla_{g_b}\otimes \omega$ satisfies the linearized gauge constraint,
  and in fact,
  $\omega\in \QNMk{k}(\ConstraintPropagationOp_{g_b},
  \UpperHalfSpace_\epsilon )$. As a result, we can define the
  injective map
  \begin{equation*}
    \begin{split}
      \QNMGeok{k}(\LinEinstein_{g_b}, \UpperHalfSpace_\epsilon )&\mapsto \QNMk{k}(\ConstraintPropagationOp_{g_b}, \UpperHalfSpace_\epsilon ),\\
      \nabla_{g_b}\otimes \omega &\mapsto \omega.
    \end{split}
  \end{equation*}
  As a result, we have a bijection between
  $\QNMk{k}(\ConstraintPropagationOp_{g_b}, \UpperHalfSpace_\epsilon )$
  and $\QNMGeok{k}(\LinEinstein_{g_b}, \UpperHalfSpace_\epsilon )$. This
  crucially gives a method for counting the non-zero $\LSolHk{k}$-geometric quasinormal
  modes of $\LinEinstein_{g_b}$, and we have that
  \begin{equation}
    \label{linear:eq:QNM-perturb:non-zero-geometric}
     \dim \QNMGeok{k}(\LinEinstein_{g_{b_0}}, \UpperHalfSpace_\epsilon ) = \dim \QNMGeok{k}(\LinEinstein_{g_b}, \UpperHalfSpace_\epsilon ).
   \end{equation}
  
  We now treat the zero-frequency $\LSolHk{k}$-geometric quasinormal
  mode solutions.
  Observe
  that for $b = (M, a)$ the mapping
  \begin{equation*}
    \begin{split}
      \QNMk{k}(\ConstraintPropagationOp_{g_b}, \mathbb{B}_\epsilon) &\to \QNMGeok{k}(\LinEinstein_{g_b}, \mathbb{B}_\epsilon),  \\
      \omega &\mapsto \nabla_{g_b}\otimes \omega
    \end{split}
  \end{equation*}
  is a linear map with kernel spanned by the Killing vectorfields of
  $g_b$. Recall that the Killing vectorfields of $g_b$ have basis
  $\KillT, \KillPhi$ if $b=(M, a), a\neq 0$, and basis
  $\KillT, \Omega_{ij}$ if $b=(M, 0)$, where $\Omega_{ij}$ are the
  rotation vectorfields.

  Applying the linearized second Bianchi
  identity, the map
  \begin{equation*}
     \begin{split}
       \curlyBrace*{{h} \in \QNMGeok{k}(\LinEinstein_{g_b}, \mathbb{B}_\epsilon):
         {h} = \nabla_{g_{b}}\otimes\omega, \omega\in C^\infty(\StaticRegionWithExtension, T^*\StaticRegionWithExtension)}
       &\mapsto \QNMk{k}(\ConstraintPropagationOp_{g_b}, \mathbb{B}_\epsilon ) \backslash \Ker\,\nabla_{g_b}\otimes ,\\
      \nabla_{g_b}\otimes \omega &\mapsto \omega,
    \end{split}
  \end{equation*}
  is injective. 
  Thus the family of maps
  \begin{equation*}
    \begin{split}
      \QNMk{k}(\ConstraintPropagationOp_{g_b}, \mathbb{B}_\epsilon) \backslash \Ker\,\nabla_{g_b}\otimes 
      &\to \curlyBrace*{{h} \in \QNMGeok{k}(\LinEinstein_{g_b}, \mathbb{B}_\epsilon):
        {h} = \nabla_{g_{b}}\otimes\omega, \omega\in C^\infty(\StaticRegionWithExtension, T^*\StaticRegionWithExtension)},  \\
      \omega &\mapsto \nabla_{g_b}\otimes \omega
    \end{split}
  \end{equation*}
  are bijective for all $b\in \BHParamNbhd$. 
  Furthermore, defining 
  \begin{equation*}
    d_b = \dim \frac{g_b'(T_bB)}{(\Range\, \nabla_{g_b}\otimes )\bigcap g_b'(T_bB)},
  \end{equation*}
  we have that
  \begin{equation*}
    d_b =
    \begin{cases}
      4, & a = 0,\\
      2, & a \neq 0.
    \end{cases}
  \end{equation*}
  As a result,
  \begin{equation*}
    \dim \QNMGeok{k}(\LinEinstein_{g_b}, 0) = \dim \QNMk{k}(\ConstraintPropagationOp_{g_b}, 0) - \dim \Ker\, \nabla_{g_b}\otimes  + d_b = \dim \QNMk{k}(\ConstraintPropagationOp_{g_b}, 0). 
  \end{equation*}
  We now apply Proposition \ref{linear:prop:QNM-perturb:gen} to see that for
  $b\in \BHParamNbhd$, $\BHParamNbhd$ sufficiently small, 
  \begin{equation*}
    \dim \QNMk{k}(\ConstraintPropagationOp_{g_{b_0}}, 0) = \dim \QNMk{k}(\ConstraintPropagationOp_{g_b}, 0).
  \end{equation*}
  As a result, we in fact have that
  \begin{equation*}
    \dim \QNMGeok{k}(\LinEinstein_{g_b}, 0) = \dim \QNMGeok{k}(\LinEinstein_{g_{b_0}}, 0). 
  \end{equation*}
  Combined with \eqref{linear:eq:QNM-perturb:non-zero-geometric}, we have
  that
  \begin{equation*}
    \dim \QNMGeok{k}(\LinEinstein_{g_b}, \UpperHalfSpace) = \dim \QNMGeok{k}(\LinEinstein_{g_{b_0}}, \UpperHalfSpace). 
  \end{equation*}

  We now move on to consider the constraint modes. To begin, observe
  that both $\Constraint_{g_b}$ and
  $\QNMk{k}(\LinEinstein_{g_b}, \UpperHalfSpace )$ are continuous
  in $b$. Then, we can use the lower-semicontinuity of rank to see
  that for a sufficiently small neighborhood of black hole parameters
  $\BHParamNbhd\ni b_0$,
  \begin{equation*}
    \dim \Constraint_{g_b}\left(\QNMk{k}(\LinEinstein_{g_b},\UpperHalfSpace)\right)
    \ge \dim \Constraint_{g_{b_0}}\left(\QNMk{k}(\LinEinstein_{g_{b_0}},\UpperHalfSpace)\right).
  \end{equation*}
  We can now conclude 
  that all $\LSolHk{k}$-quasinormal mode solutions of
  $\LinEinstein_{g_b}$ are either $\LSolHk{k}$-geometric or
  $\LSolHk{k}$-constraint quasinormal mode solutions.
  The continuity statement holds directly by applying Proposition
  \ref{linear:prop:QNM-perturb:gen} to $\ConstraintPropagationOp_{g_b}$ and
  $\LinEinstein_{g_b}$. 
\end{proof}

Having shown that all $\LSolHk{k}$-quasinormal modes of
$\LinEinstein_{g_b}$ are either $\LSolHk{k}$-geometric quasinormal modes
or $\LSolHk{k}$-constraint quasinormal modes, we can now directly
repeat the proof of Proposition \ref{linear:prop:Linear-Stability:SdS:robust}
with $g_b$ in place of $g_{b_0}$. 

\begin{prop}
  \label{linear:prop:linear-stab:KdS-robust}
  Fix $b\in \BHParamNbhd$. Then for $b'\in T_{b}B$, let
  $\omega_{b}^\Upsilon (b')$ denote the solution of the Cauchy problem
  \begin{equation*}
    \begin{cases}
      \left(\Constraint_{g_{b}}\circ\nabla_{g_{b}}\otimes\right)\omega_{b}^\Upsilon(b')
      = -\Constraint_{g_{b}}(g_{b}'(b')) & \text{in }\mathcal{M},\\
      \InitData(\omega_{b}^\Upsilon(b'))=(0,0)&\text{on }\Sigma_0.
    \end{cases}
  \end{equation*}
  Define
  \begin{equation*}
    (g_b')^\Upsilon (b') := g_b'(b') + \nabla_{g_b}\otimes \omega_{b}^\Upsilon(b'),
  \end{equation*}
  which solves the linearized gauged Einstein equation
  \begin{equation*}
    \LinEinstein_{g_b}((g_b')^\Upsilon (b')) = 0. 
  \end{equation*}
  Now fix some $k>k_0$. Then there exists  some small $\SpectralGap>0$
  such that there exist finite dimensional linear subspaces
  \begin{equation*}
    \Theta\subset H^{k,-\GronwallExp}(\StaticRegionWithExtension, T^*\StaticRegionWithExtension),
    \qquad \Upsilon \subset \LSolHk{k}(\Sigma_0)
  \end{equation*}
  such that the following holds: for any $(f, {h}_0, {h}_1)\in
  D^{k+1,\SpectralGap}(\StaticRegionWithExtension)$ there exist unique $b'\in T_{b}\BHParamNbhd$,
  $\vartheta\in\Theta$, and $\upsilon\in \Upsilon$,
  such that the solution of the Cauchy problem
  \begin{equation*}
    \begin{split}
      \LinEinstein_{g_b}\tilde{{h}} &= f,\\
      \InitData(\tilde{{h}})&=({h}_0, {h}_1) + \upsilon - \InitData\left((g_{b}')^\Upsilon(b') + \nabla_{g_b}\otimes \vartheta\right).  
    \end{split}
  \end{equation*}
  satisfies
  $\tilde{{h}}\in H^{k,\SpectralGap}(\StaticRegionWithExtension)$,
  the map $(f, {h}_0, {h}_1)\mapsto (b', \vartheta, \upsilon)$ is
  linear and continuous, and $\tilde{{h}}$ satisfies the pointwise
  bound
  \begin{equation*}
    \norm{\tilde{h}}_{\HkWithT{k}(\Sigma_{\tStar})} \lesssim e^{-\SpectralGap\tStar}\norm{(f, {h}_0, {h}_1)}_{D^{k+1,\SpectralGap}(\StaticRegionWithExtension)}.
  \end{equation*}
  Finally, if $f=0$, and moreover, $({h}_0, {h}_1)$ satisfy the
  linearized gauge constraint, then in fact $\upsilon=0$.
\end{prop}

\begin{proof} 
  From Proposition \ref{linear:prop:KdS-QNM-perturb}, we know that all
  $\LSolHk{k}$-quasinormal modes of $\LinEinstein_{g_b}$ are either
  geometric or constraint modes, we see that we can
  characterize the $\LSolHk{k}$-geometric quasinormal modes of
  $\LinEinstein_{g_b}$ and $\LSolHk{k}$-constraint quasinormal modes of
  $\LinEinstein_{g_b}$ by
  \begin{equation*}
    \begin{split}
      \QNMGeok{k}(\LinEinstein_{g_b}, \UpperHalfSpace) &= \curlyBrace*{h\in \QNMk{k}(\LinEinstein_{g_b}, \UpperHalfSpace): \Constraint_b({h})=0},\\
      \QNMConstraintk{k}(\LinEinstein_{g_b}, \UpperHalfSpace) &= \curlyBrace*{h \in \QNMk{k}(\LinEinstein_{g_b}, \UpperHalfSpace): \Constraint_b({h})\neq 0}. 
    \end{split}
  \end{equation*}
  This allows us to effectively repeat the proof of
  Proposition \ref{linear:prop:Linear-Stability:SdS:robust}, using the
  $\LSolHk{k}$-quasinormal modes of $\LinEinstein_{g_b}$ in place of those
  of $\LinEinstein_{g_{b_0}}$.
  We construct the span of all gauge choices that correspond to
  the non-zero $\LSolHk{k}$-quasinormal geometric mode solutions,
  \begin{equation*}
    \Theta_{\neq 0} := \Span\curlyBrace*{\vartheta_{j\ell}:
      \nabla_{g_b}\otimes \vartheta_{j\ell} \in \QNMGeok{k}(\LinEinstein_{g_b}, \sigma_j), \sigma_j\in \QNFGeok{k}(\LinEinstein_{g_b}, \UpperHalfSpace_\epsilon )};
  \end{equation*}
  the span of the zero-frequency gauge choices that correspond to
  the non-linearized \KdS{} zero-frequency $\LSolHk{k}$-geometric
  quasinormal mode solutions,
  \begin{equation*}
    \Theta_0:= \Span\curlyBrace*{ \vartheta_{\ell}:
     \nabla_{g_b}\otimes \vartheta_\ell\in \QNMGeok{k}(\LinEinstein_{g_b}, 0)\backslash \curlyBrace*{(g_b')^\Upsilon (b')^{(0)}}};
  \end{equation*}
  and the space of possible initial-data modifications corresponding
  to the $\LSolHk{k}$-constraint quasinormal mode solutions,
  \begin{equation*}
    \Upsilon_b := \Span\curlyBrace*{\gamma_0(\psi_{j\ell}): \psi_{j\ell}\in \QNMConstraintk{k}(\LinEinstein_{g_b}, \UpperHalfSpace)}.
  \end{equation*}
  Just as was the case for Proposition
  \ref{linear:prop:Linear-Stability:SdS:robust}, we now have that defining
  $\Theta = \Theta_0\oplus \Theta_{\neq0}$, we have the space
  of initial value problem modifications,
  \begin{equation*}
    \ZCal := (0, \gamma_0((g_{b}')^\Upsilon(b'))
    +(0, \gamma_0(\nabla_{g_b}\otimes \Theta)) + (0, \Upsilon_b) \subset D^{\infty,\SpectralGap}(\StaticRegionWithExtension, S^2T^*\StaticRegionWithExtension),
  \end{equation*}
  for which the map $\lambda_{\ZCal}$ as defined in Corollary
  \ref{linear:coro:lambda-IVP} is bijective.  The conclusions of the
  proposition then follow from Corollary \ref{linear:coro:lambda-IVP}. It is
  also clear from the construction that if $({h}_0, {h}_1)$ satisfy
  the linearized gauge constraint, then $\upsilon=0$. We also remark
  that applying the perturbation theory in Proposition
  \ref{linear:prop:QNM-perturb:gen} to $\ConstraintPropagationOp_{g_b}$, it
  is clear that $\Theta$ can be constructed to depend continuously
  on the black hole parameters $b$.
\end{proof}

Observe that as constructed in Proposition
\ref{linear:prop:linear-stab:KdS-robust}, the space $\Theta$ from which we
can construct $\LSolHk{k}$-geometric quasinormal mode solutions is
dependent on the \KdS{} metric $g_b$ around which we linearize. While
this is perfectly fine for the linearized stability statement, it will
be desirable in the nonlinear theory to show that in fact, we can
choose $\Theta$ independent of our choice of $g_b$.
\begin{corollary}
  \label{linear:corollary:linear-stab:Theta-b-independent}
  Let $(g'_b)^\Upsilon(b')$ be as defined in Proposition
  \ref{linear:prop:linear-stab:KdS-robust}, and fix some $k>k_0$, and
  $\SpectralGap>0$. Let $\Upsilon_b$ denote the $\Upsilon$ constructed
  for the choice of \KdS{} black hole parameters $b$, as in
  Proposition \ref{linear:prop:linear-stab:KdS-robust}.

  Then there exists a fixed finite dimensional linear subspace
  $\Theta\subset H^{k,-\GronwallExp}(\StaticRegionWithExtension,
  T^*\StaticRegionWithExtension)$ and a sufficiently small
  neighborhood $\BHParamNbhd$ of $b_0$ such that for any $b\in \BHParamNbhd$,
  for any $(f, {h}_0, {h}_1)\in
  D^{k+1,\SpectralGap}(\StaticRegionWithExtension)$ there exist unique $b'\in T_{b}B$,
  $\vartheta\in\Theta$, and $\upsilon\in \Upsilon_b$,
  such that the solution of the Cauchy problem
  \begin{equation*}
    \begin{split}
      \LinEinstein_{g_b}\tilde{{h}} &= f,\\
      \InitData(\tilde{{h}})&=({h}_0, {h}_1) + \upsilon - \InitData\left(g_{b}'(b') + \nabla_{g_b}\otimes \vartheta\right).  
    \end{split}
  \end{equation*}
  satisfies the pointwise
  bound
  \begin{equation*}
    \norm{\tilde{{h}}}_{\HkWithT{k}(\Sigma_{\tStar})} \lesssim e^{-\SpectralGap\tStar}\norm{(f, {h}_0, {h}_1)}_{D^{k+1,\SpectralGap}(\StaticRegionWithExtension)},
  \end{equation*}
  and moreover, the map $(f, {h}_0, {h}_1)\mapsto (b', \vartheta, \upsilon)$ is
  linear and continuous. In addition, if $({h}_0, {h}_1)$ satisfy the linearized gauge
  constraint, then in fact $\upsilon=0$. 
\end{corollary}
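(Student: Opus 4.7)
The plan is to absorb the $b$-dependent pieces of Proposition~\ref{linear:prop:linear-stab:KdS-robust} into a single fixed, finite-dimensional space by exploiting the continuous dependence of all ingredients on $b$ and compactness of a small closed neighborhood of $b_0$.

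First, I would rewrite the result of Proposition~\ref{linear:prop:linear-stab:KdS-robust} in a form where the only $b$-dependent gauge piece lives in a space of one-forms. Since $(g_b')^\Upsilon(b')=g_b'(b')+\nabla_{g_b}\otimes\omega_b^\Upsilon(b')$, the decomposition in Proposition~\ref{linear:prop:linear-stab:KdS-robust} can be rewritten as
\[
h = g_b'(b') + \nabla_{g_b}\otimes\bigl(\vartheta_b + \omega_b^\Upsilon(b')\bigr) + \tilde h,\qquad \vartheta_b\in\Theta_b.
\]
Thus the required $b$-independent space must contain both $\Theta_b$ and $\omega_b^\Upsilon(T_bB)$ for every $b$ in the neighborhood. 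The map $b'\mapsto\omega_b^\Upsilon(b')$ is linear, and because $\Box^\Upsilon_{g_b}$ is a strongly hyperbolic operator depending smoothly on $b$, the family $\omega_b^\Upsilon(T_bB)$ depends continuously on $b$ in $H^{k,-\GronwallExp}$; the space $\Theta_b$ constructed in the proof of Proposition~\ref{linear:prop:linear-stab:KdS-robust} depends continuously on $b$ as well, by item~(iv) of Proposition~\ref{linear:prop:QNM-perturb:gen} applied to $\ConstraintPropagationOp_{g_b}$ (which is what governs the non-decaying $\LSolHk{k}$-geometric modes of $\LinEinstein_{g_b}$, by the identification used in the proof of Proposition~\ref{linear:prop:KdS-QNM-perturb}).

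Next, I would shrink $\BHParamNbhd$ so that its closure is compact and all dimensions above are constant on it, and then define
\[
\Theta := \bigcup_{b\in\overline{\BHParamNbhd}}\Bigl(\Theta_b + \omega_b^\Upsilon(T_bB)\Bigr),
\]
replacing it by its linear span. By the continuous dependence and compactness, this is contained in a finite-dimensional subspace of $H^{k,-\GronwallExp}(\StaticRegionWithExtension,T^*\StaticRegionWithExtension)$, so taking $\Theta$ to be this span yields a single fixed finite-dimensional space that works for every $b\in\BHParamNbhd$. Existence of $(b',\vartheta,\upsilon)$ then follows immediately from Proposition~\ref{linear:prop:linear-stab:KdS-robust} by translating $\vartheta_b + \omega_b^\Upsilon(b')$ into the enlarged $\Theta$, and the decay estimate and continuity of the map $(f,h_0,h_1)\mapsto(b',\vartheta,\upsilon)$ follow from the corresponding bounds in Proposition~\ref{linear:prop:linear-stab:KdS-robust} together with the continuous dependence of the projections onto each summand.

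The main obstacle is the uniqueness claim: enlarging $\Theta$ to be $b$-independent potentially introduces ambiguity, since the kernel of $\vartheta\mapsto\nabla_{g_b}\otimes\vartheta$ (the Killing one-forms of $g_b$) may intersect $\Theta$ nontrivially, and moreover different elements of $\Theta$ can produce gauge solutions already accounted for by $g_b'(T_bB)$ or by $\Upsilon_b$. To deal with this I would fix, once and for all, a linear complement of the subspace
\[
\bigl\{\vartheta\in\Theta:\ \nabla_{g_{b_0}}\otimes\vartheta\in g_{b_0}'(T_{b_0}B)+\QNMConstraintk{k}(\LinEinstein_{g_{b_0}},\UpperHalfSpace)+\text{Killing}\bigr\}
\]
inside $\Theta$, and by continuity of the relevant maps in $b$ (again via Proposition~\ref{linear:prop:QNM-perturb:gen}) the analogous decomposition persists for all $b\in\BHParamNbhd$ after possibly shrinking $\BHParamNbhd$ further. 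Replacing $\Theta$ by this complement restores bijectivity of the map $(b',\vartheta,\upsilon)\mapsto\lambda_{\ZCal_b}(b',\vartheta,\upsilon)$ uniformly in $b$, and the final statement about $\upsilon=0$ when $(h_0,h_1)$ satisfy the linearized gauge constraint follows exactly as in Proposition~\ref{linear:prop:linear-stab:KdS-robust}.
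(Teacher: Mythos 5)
Your proposal has a genuine gap in the very first step of the construction, and the paper actually takes a different — and simpler — route that avoids it.

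You define $\Theta$ as the linear span of $\bigcup_{b\in\overline{\BHParamNbhd}}\bigl(\Theta_b + \omega_b^\Upsilon(T_bB)\bigr)$ and claim this is finite-dimensional ``by continuous dependence and compactness.'' This does not follow. A continuously (even analytically) varying family of finite-dimensional subspaces of an infinite-dimensional space, parametrized by a compact set, can easily have infinite-dimensional span. For instance, in $L^2(0,1)$ the $1$-dimensional subspaces $\Theta_b=\Span(\chi_{[0,b]})$, $b\in[0,1]$, depend continuously on $b$, but their span contains all step functions and hence is dense; similarly $\Theta_b=\Span(e^{bx})$ is even analytic in $b$ and its span over any interval is infinite-dimensional. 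So the enlargement strategy cannot get off the ground without an additional structural reason — which you do not supply — forcing the family $\Theta_b$ to stay within a fixed finite-dimensional subspace.

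The paper avoids this entirely. Rather than enlarging $\Theta$, one simply \emph{fixes} $\Theta:=\Theta_{b_0}$ once and for all and considers the $b$-dependent spaces
\[
\tilde{\ZCal}_b := T_bB\times\Theta_{b_0}\times\Upsilon_b.
\]
One then needs to show that $\lambda_{\tilde{\ZCal}_b}:=\lambda_{IVP}\big|_{\tilde{\ZCal}_b}\to\QNMk{k*}(\LinEinstein_{g_b},\UpperHalfSpace)$ remains an isomorphism for $b$ near $b_0$. Choosing a linear isomorphism $\varTheta:\Real^{N_\Theta}\to\Theta_{b_0}$ induces a parametrization $\tilde{z}_b:\Real^{4+N_\Theta+N_\Upsilon}\to\tilde{\ZCal}_b$ depending continuously on $b$, and after parametrizing the target spaces $\QNMk{k*}(\LinEinstein_{g_b},\UpperHalfSpace)$ continuously in $b$ via Proposition~\ref{linear:prop:QNM-perturb:gen}, the composite $\lambda_{\tilde{\ZCal}_b}\circ\tilde{z}_b^{-1}$ becomes a continuous family of finite-dimensional square matrices. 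It is invertible at $b=b_0$ by Proposition~\ref{linear:prop:Linear-Stability:SdS:robust}, hence invertible for all $b$ in a small neighborhood because invertibility is an open condition (semicontinuity of rank). This gives uniqueness of $(b',\vartheta,\upsilon)$ directly — there is no need for the complement construction you propose at the end, which was in turn only necessitated by the problematic enlargement. The decay bound, continuity of $(f,h_0,h_1)\mapsto(b',\vartheta,\upsilon)$, and the vanishing of $\upsilon$ for constrained data then follow exactly as in Proposition~\ref{linear:prop:linear-stab:KdS-robust}, so the rest of your argument would go through once the fixed-space step is corrected.
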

\begin{proof}
  Consider $\tilde{\ZCal}_b\subset
  D^{k,\SpectralGap}(\StaticRegionWithExtension)$ defined by
  \begin{equation*}
    \tilde{\ZCal}_b = T_bB \times \Theta_{b_0}\times \Upsilon_b.
  \end{equation*}
  We will prove the corollary by showing that 
  \begin{align*}
    \lambda_{\tilde{\ZCal}_b}:=\lambda_{IVP}\vert_{\tilde{\ZCal}_b}: \tilde{\ZCal}_b\to \QNFk{k*}(\LinEinstein_{g_{b_0}}, \UpperHalfSpace)
  \end{align*}
  is an isomorphism.
  
  We begin by parameterizing the space $\tilde{\ZCal}_b$ by picking an isomorphism
  \begin{equation*}
    \varTheta:\Real^{N_\Theta} \to \Theta_{b_0},\qquad N_{\Theta} = \dim \Theta_{b_0},
  \end{equation*}
  which induces an isomorphism 
  \begin{align*}
    \tilde{z}_b: \Real^{4 + N_{\Theta} + N_{\Upsilon}} &\to \tilde{\ZCal}_b\\
    (b', w_\Theta, \upsilon) &\mapsto \left(0, \gamma_0\left((g_b')^\Upsilon(b') + \nabla_{g_b}\otimes \varTheta(w_\Theta)\right) + \upsilon\right),
  \end{align*}
  and is clearly continuous in $b$. 
  Then, we see that the mapping
  \begin{equation*}
    \lambda_{\tilde{\ZCal}_{b}}\circ \tilde{z}_b^{-1} 
  \end{equation*}
  depends continuously on $b$. Moreover, from Proposition
  \ref{linear:prop:Linear-Stability:SdS:robust}, we know that
  $\lambda_{\tilde{\ZCal}_{b_0}}\circ \tilde{z}_{b_0}^{-1}$ is
  bijective, and as a result, for $b\in\BHParamNbhd$ sufficiently
  small, $\lambda_{\tilde{\ZCal}_{b}}\circ \tilde{z}_b^{-1}$ is
  bijective as well by semicontinuity of rank\footnote{ This can be
    clearly seen by using Proposition \ref{linear:prop:QNM-perturb:gen}, to
    parametrize the family of spaces
    $\QNMk{k*}(\LinEinstein_{g_b}, \UpperHalfSpace)$ by means of a
    continuous map
    \begin{equation*}
      \varLambda: B\times\Real^D \to \QNMk{k*}(\LinEinstein_{g_b}, \UpperHalfSpace)
    \end{equation*}
    linear in the second argument such that
    \begin{equation*}
      \varLambda(b_0, \cdot) = \lambda_{\ZCal_{b_0}}\circ \tilde{z}_{b_0}.
    \end{equation*}
    With the parametrization $\varLambda$,
    \begin{equation*}
      \varLambda(b, \cdot)^{-1}\circ\lambda_{\tilde{\ZCal}_{b}}\circ \tilde{z}_b^{-1} 
    \end{equation*}
    can be represented as a complex $4+N_{\Theta}+N_{\Upsilon}$ sized
    square matrix.
  }.
  The rest of the conclusions follow as before. 
\end{proof}

\section{Proof of the main theorem}
\label{linear:sec:proof-of-main-thm}

We can now prove Theorem \ref{linear:thm:Main}.
\begin{proof}[Proof of Theorem \ref{linear:thm:Main}]
  Let us denote
  $({h}_0, {h}_1)=D_{(\InducedMetric_b,
    k_b)}i_{b,\Identity}(\InducedMetric', k')$ the initial data for
  the gauged linearized Einstein equations in harmonic gauge. By the
  construction of $D_{(\InducedMetric_b, k_b)}i_{b,\Identity}$ in
  Corollary~\ref{linear:corollary:initial-data:ib-linearized}, we have that
  $(h_0, h_1)$ satisfies the linearized harmonic gauge constraint on
  $\Sigma_0$. Thus, using Proposition
  \ref{linear:prop:linear-stab:KdS-robust}, there exists some
  $b'\in T_b\BHParamNbhd, \vartheta \in \Theta$ satisfying the control
  in \eqref{linear:eq:Main:param-continuity} such that the solution ${h}$ of
  the Cauchy problem
  \begin{equation*}
    \begin{split}
      \LinEinstein_{g_b}\tilde{h} &= 0,\\
      \gamma_0(\tilde{h}) &= ({h}_0, {h}_1) - \gamma_0\left(g_{b}'(b') + \nabla_{g_b}\otimes \vartheta\right),
    \end{split}
  \end{equation*}
  satisfies the decay estimate
  \begin{equation*}
    \begin{split}
      \sup_{\tStar }e^{\SpectralGap \tStar}\norm{\tilde{h}}_{\HkWithT{k}(\Sigma_{\tStar})} &\lesssim \norm{({h}_0, {h}_1)}_{\LSolHk{k+1}(\Sigma_0)}.
    \end{split}
  \end{equation*}  
  Then,
  \begin{equation*}
    h = g_{b}'(b') + \nabla_{g_b}\otimes \vartheta + \tilde{h} 
  \end{equation*}
  solves
  \begin{equation*}
    \begin{split}
      \LinEinstein_{g_b}h &= 0,\\
      \gamma_0(h) &= (h_0, h_1),
    \end{split}
  \end{equation*}
  and has the desired form.   
\end{proof}

\appendix
\section{Background functional analysis}

\begin{theorem}[Analytic Fredholm Theorem] \label{linear:thm:AnalyticFredholm}
  Let $\Omega\subset \Complex$ be a domain in the complex plane. Let 
  $H$ be a Hilbert space, and $\mathcal{L}(H)$ the space of bounded
  linear operators from $H$ to itself. Also let $\iota$ denote the
  identity operator. Then, for $A:\Omega\to \mathcal{L}(H)$ a mapping such
  that
  \begin{enumerate}
  \item $A$ is analytic on $\Omega$ in the following sense:
    \begin{equation*}
      \lim_{\sigma\to\sigma_0}\frac{A(\sigma)-A(\sigma_0)}{\sigma-\sigma_0}
    \end{equation*}
    exists for all $\sigma_0\in \Omega$; and
  \item the operator $A(\sigma)$ is a compact operator for every
    $\sigma\in\Omega$. 
  \end{enumerate}
  Then we can conclude that either
  \begin{enumerate}
  \item $(\iota -A(\sigma))^{-1}$ does not exist for any
    $\sigma\in\Omega$; or
  \item There exists a discrete subset $\Sigma\subset\Omega$  (in
    particular, $\Sigma$ has no limit points in $\Omega$), such that
    $(\iota -A(\sigma))^{-1}$ exists for every $\sigma\in
    \Omega\backslash\Sigma$, and the function
    $\lambda\to(\iota-A(\sigma))^{-1}$ is analytic on
    $\Omega\backslash\Sigma$ in the sense above and the equation
    \begin{equation*}
      A(\sigma)\psi =\psi
    \end{equation*}
    has a finite-dimensional family of solutions for any $\sigma\in
    \Sigma$. 
  \end{enumerate}
\end{theorem}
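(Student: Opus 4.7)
The plan is to reduce the question to a finite-dimensional determinantal problem by combining compactness of $A(\sigma_0)$ with analyticity in $\sigma$, and then to globalize by a connectedness argument. First, I would fix an arbitrary $\sigma_0 \in \Omega$ and use that $A(\sigma_0)$ is compact to choose a finite-rank operator $F \in \mathcal{L}(H)$ with $\|A(\sigma_0) - F\| < 1/2$. By continuity of $\sigma \mapsto A(\sigma)$ in the operator norm (which follows from analyticity), there is an open disk $D \subset \Omega$ around $\sigma_0$ on which $\|A(\sigma) - F\| < 1$. On $D$, the operator $B(\sigma) := \iota - (A(\sigma) - F)$ is then invertible by the Neumann series, and $B(\sigma)^{-1}$ is analytic in $\sigma$.

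Next I would exploit the factorization
\begin{equation*}
  \iota - A(\sigma) = B(\sigma) - F = B(\sigma)\bigl(\iota - B(\sigma)^{-1}F\bigr).
\end{equation*}
Since $B(\sigma)$ is invertible on $D$, invertibility of $\iota - A(\sigma)$ is equivalent to invertibility of $\iota - B(\sigma)^{-1}F$. The key point is that $B(\sigma)^{-1}F$ has the same finite rank as $F$: it acts as the identity on the complement of some fixed finite-dimensional subspace containing its range. Choosing a basis adapted to $\mathrm{range}(F)$, the operator $\iota - B(\sigma)^{-1}F$ is represented by a block matrix which reduces to a finite $N \times N$ matrix $M(\sigma)$ with entries that are analytic functions of $\sigma$ on $D$. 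Invertibility of $\iota - A(\sigma)$ on $D$ is then equivalent to non-vanishing of $d(\sigma) := \det M(\sigma)$, which is a scalar analytic function on $D$.

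The local dichotomy for analytic scalar functions now applies: either $d \equiv 0$ on $D$, in which case $(\iota - A(\sigma))^{-1}$ exists for no $\sigma \in D$; or the zero set of $d$ in $D$ is discrete, in which case $(\iota - A(\sigma))^{-1}$ exists on $D \setminus \{d = 0\}$ and is analytic there (by Cramer's rule for $M(\sigma)^{-1}$ and analyticity of $B(\sigma)^{-1}$). At a zero $\sigma_\star$ of $d$, solutions of $A(\sigma_\star)\psi = \psi$ correspond bijectively to elements of $\ker(I - M(\sigma_\star)) \subset \mathbb{C}^N$, which is finite-dimensional.

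To globalize, I would define $\Omega_{\text{bad}} \subset \Omega$ as the set of $\sigma$ admitting a neighborhood on which $(\iota - A(\cdot))^{-1}$ exists nowhere, and $\Omega_{\text{good}}$ as the set of $\sigma$ admitting a neighborhood on which the zero set is discrete; the local analysis shows that these are both open and together cover $\Omega$. Since $\Omega$ is connected, exactly one of them equals $\Omega$, yielding the dichotomy in the statement. The main obstacle, and the only real work, is bookkeeping the reduction to $M(\sigma)$ so that its entries are verified to be analytic and so that the equivalences between kernels of $\iota - A(\sigma)$ and of $I - M(\sigma)$ are preserved uniformly on $D$; once this is set up, the scalar analytic dichotomy and the connectedness argument close the proof cleanly.
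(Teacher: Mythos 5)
The paper states this theorem in its background appendix without proof, citing it as a classical result, so there is no in-paper argument to compare against; your proposal is the standard textbook proof (the one in Reed--Simon, via finite-rank approximation, Neumann series, and reduction to an analytic determinant) and it is correct in outline. One point worth tightening in the write-up: your claim that $B(\sigma)^{-1}F$ ``acts as the identity on the complement of some fixed finite-dimensional subspace containing its range'' is not quite right, since the range of $B(\sigma)^{-1}F$ moves with $\sigma$. The clean reduction is to write $F\psi = \sum_{n=1}^N \alpha_n(\psi)\psi_n$ with fixed bounded functionals $\alpha_n$ and a fixed basis $\psi_n$ of $\mathrm{range}(F)$; then $\psi \in \ker(\iota - B(\sigma)^{-1}F)$ forces $\psi = \sum_n \beta_n B(\sigma)^{-1}\psi_n$ with $\beta_n = \alpha_n(\psi)$ solving $(I - M(\sigma))\beta = 0$, where $M(\sigma)_{mn} = \alpha_m(B(\sigma)^{-1}\psi_n)$ is analytic, and conversely. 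With that substitution the equivalence of kernels, the analyticity of $d(\sigma)=\det(I-M(\sigma))$, the scalar dichotomy, and your connectedness globalization (noting that the ``bad'' and ``good'' sets are disjoint because a discrete set has empty interior) close the proof as you describe.
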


\begin{lemma}[Lemma 1.3 \cite{engel_one-parameter_2000}]
  \label{linear:lemma:semigroup:resolvent-sol-op-relation}
  Let $(D^k(\InfGen), \InfGen)$ be the infinitesimal generator for the
  $C^0$-solution semigroup $\SolOp(\tStar)$. Then for 
  $\sigma\in\Complex, \tStar>0$, we have that for all $F\in
  \LSolHk{k}(\Sigma)$, 
  \begin{equation*}
    \ImagUnit\left(e^{\ImagUnit\sigma\tStar}\SolOp(\tStar)F - F\right)  = (\InfGen-\sigma)\int_0^{\tStar} e^{\ImagUnit\sigma\sStar}\SolOp(\sStar)F\,d\sStar;
  \end{equation*}
  and for all ${h}\in D^k(\InfGen)$,
  \begin{equation*}
    \ImagUnit\left(e^{\ImagUnit\sigma\tStar}\SolOp(\tStar){h} - {h}\right) = \int_0^{\tStar} e^{\ImagUnit\sigma\sStar}\SolOp(\sStar)(\InfGen - \sigma){h}\,d\sStar.
  \end{equation*}
\end{lemma}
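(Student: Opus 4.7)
The plan is to reduce the two identities to the standard integral relations for a $C^0$-semigroup and its infinitesimal generator, applied not to $\SolOp$ directly but to a rescaled version. Introduce the one-parameter family
\[
  T_\sigma(\tStar) := e^{\ImagUnit\sigma\tStar}\SolOp(\tStar),\qquad \tStar\ge 0.
\]
Since $\SolOp(\tStar)$ is a $C^0$-semigroup on $\LSolHk{k}(\Sigma)$ (Proposition \ref{linear:prop:QNM:C0-semigroup}) and the scalar prefactor is strongly continuous in $\tStar$, the family $T_\sigma(\tStar)$ is itself a $C^0$-semigroup on $\LSolHk{k}(\Sigma)$: the composition law $T_\sigma(\tStar+\sStar)=T_\sigma(\tStar)T_\sigma(\sStar)$ follows from the semigroup property of $\SolOp$ combined with $e^{\ImagUnit\sigma(\tStar+\sStar)}=e^{\ImagUnit\sigma\tStar}e^{\ImagUnit\sigma\sStar}$, and strong continuity at $\tStar=0^+$ is inherited from $\SolOp$.

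Next I would compute the infinitesimal generator of $T_\sigma$. Splitting
\[
  \frac{T_\sigma(\tStar)\psi-\psi}{\tStar}
  = e^{\ImagUnit\sigma\tStar}\,\frac{\SolOp(\tStar)\psi-\psi}{\tStar} + \frac{e^{\ImagUnit\sigma\tStar}-1}{\tStar}\psi
\]
and taking $\tStar\to 0^+$ in $\LSolHk{k}(\Sigma)$, one sees that the limit exists (after multiplication by $\ImagUnit$) if and only if $\psi\in D^k(\InfGen)$, and that in this case it equals $(\InfGen-\sigma)\psi$. Thus $T_\sigma$ has infinitesimal generator $(D^k(\InfGen),\InfGen-\sigma)$ in the sense of Definition \ref{linear:def:inf-gen}.

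Once this is established, both identities become the standard integral relations for a $C^0$-semigroup with its generator, applied to $T_\sigma$ and $\InfGen-\sigma$. For the second identity, given $h\in D^k(\InfGen)$, the map $\sStar\mapsto T_\sigma(\sStar)h$ is strongly continuously differentiable with derivative $\tfrac{1}{\ImagUnit}T_\sigma(\sStar)(\InfGen-\sigma)h = \tfrac{1}{\ImagUnit}(\InfGen-\sigma)T_\sigma(\sStar)h$; integrating from $0$ to $\tStar$ and multiplying by $\ImagUnit$ gives
\[
  \ImagUnit\bigl(T_\sigma(\tStar)h-h\bigr) = \int_0^\tStar T_\sigma(\sStar)(\InfGen-\sigma)h\,d\sStar,
\]
which is the second claim. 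For the first identity, for arbitrary $F\in\LSolHk{k}(\Sigma)$ one forms $I(\tStar):=\int_0^\tStar T_\sigma(\sStar)F\,d\sStar$ (Bochner integral) and computes, for small $\varepsilon>0$,
\[
  \frac{T_\sigma(\varepsilon)-\Identity}{\varepsilon}\,I(\tStar)
  = \frac{1}{\varepsilon}\int_\tStar^{\tStar+\varepsilon}T_\sigma(\sStar)F\,d\sStar - \frac{1}{\varepsilon}\int_0^\varepsilon T_\sigma(\sStar)F\,d\sStar,
\]
using the semigroup law and a change of variable. By strong continuity of $T_\sigma$, the right-hand side converges as $\varepsilon\to 0^+$ to $T_\sigma(\tStar)F-F$, so $I(\tStar)\in D^k(\InfGen)$ and $(\InfGen-\sigma)I(\tStar)=\ImagUnit\bigl(T_\sigma(\tStar)F-F\bigr)$, which is the first claim.

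The only mildly subtle point is the identification of $D^k(\InfGen)$ as the domain of the generator of $T_\sigma$: this uses the closedness of $\InfGen$ (Proposition \ref{linear:prop:basic-C0-semigroup-properties}) together with the splitting above to ensure that the existence of the strong limit for $T_\sigma$ forces $\psi\in D^k(\InfGen)$, and conversely. All other steps are routine consequences of the $C^0$-semigroup framework and Bochner calculus, so no further input from the geometry of $\KdS$ or from the structure of $\LinEinstein$ is needed.
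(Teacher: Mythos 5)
Your proof is correct and is precisely the standard argument behind the cited result (Engel--Nagel, Lemma II.1.3): the paper gives no proof of its own and simply invokes that lemma, and your rescaling $T_\sigma(\tStar)=e^{\ImagUnit\sigma\tStar}\SolOp(\tStar)$ together with the identification of its generator as $(D^k(\InfGen),\InfGen-\sigma)$ (consistent with the paper's convention $\InfGen=\ImagUnit\,\partial_{\tStar}|_{\tStar=0^+}$, which accounts for the factors of $\ImagUnit$) reduces both identities to exactly the textbook integral relations you then verify.
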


The Mather Division Theorem introduced below is also referred to as
the Malgrange Preparation Theorem. 
\begin{theorem}[Mather Division Theorem \cite{mather_stability_1968}]
  \label{linear:thm:Mather-division}
  If $f(t,x)$ is a smooth complex function of $t\in \Real$, and $x\in
  \Real^n$ near the origin, and $k$ is the smallest integer such that
  \begin{equation*}
    \frac{d^k f}{d t^k} \neq 0, 
  \end{equation*}
  and moreover, $g$ is a smooth function near the origin, then we can
  write
  \begin{equation*}
    g = qf + r,
  \end{equation*}
  where $q$ and $r$ are smooth, and
  \begin{equation*}
    r= \sum_{0\le j<k }t^jr_j(x)
  \end{equation*}
  for some family of smooth functions $r_j(x)$. 
\end{theorem}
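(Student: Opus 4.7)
The plan is to follow the classical Malgrange--Mather strategy, reducing the smooth division problem to the Weierstrass division theorem in the holomorphic category via almost-analytic extensions.

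First, I would perform a Weierstrass preparation reduction. Since $\partial_t^j f(0,0) = 0$ for $j<k$ and $\partial_t^k f(0,0) \neq 0$, Taylor's theorem with remainder in the $t$-variable, combined with the implicit function theorem applied to the roots of $\partial_t^{k-1} f$, yields a factorization $f(t,x) = u(t,x)\bigl(t^k + \sum_{j<k} t^j b_j(x)\bigr)$ with $u$ smooth, $u(0,0) \neq 0$, and $b_j$ smooth with $b_j(0)=0$. Since $u$ is invertible near the origin, it suffices to prove the theorem with $f$ replaced by the Weierstrass polynomial $W(t,x) := t^k + \sum_{j<k} t^j b_j(x)$.

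Second, I would establish the holomorphic Weierstrass division theorem in a complex variable $z$: for $G(z,x)$ holomorphic in $z$ and smooth in $x$, the Cauchy-type integrals
\begin{equation*}
  Q(z,x) = \frac{1}{2\pi i}\oint_{|\zeta|=\rho} \frac{G(\zeta,x)}{W(\zeta,x)(\zeta-z)}\,d\zeta, \qquad R(z,x) = \frac{1}{2\pi i}\oint_{|\zeta|=\rho} \frac{W(\zeta,x)-W(z,x)}{W(\zeta,x)(\zeta-z)}\,G(\zeta,x)\,d\zeta
\end{equation*}
produce a decomposition $G = QW + R$ in which $R(z,x)$ is visibly a polynomial in $z$ of degree $<k$ (the numerator $W(\zeta,x)-W(z,x)$ is divisible by $\zeta-z$ and has degree $k-1$ in $z$). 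The contour radius $\rho$ may be chosen uniformly in $x$ near the origin so that all $k$ roots of $W(\cdot,x)$ lie strictly inside $|\zeta|=\rho$, using that $b_j(0)=0$.

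Third, I would bridge from the holomorphic to the smooth setting via almost-analytic extensions. Given smooth $g(t,x)$, construct $\widetilde{G}(z,x)$ smooth in a complex neighborhood of the real $t$-axis with $\partial_{\bar z}\widetilde{G} = O(|\operatorname{Im} z|^N)$ for every $N$ (for instance, via the Borel-type formula $\widetilde{G}(z,x) = \sum_n \chi_n(z)\,\partial_t^n g(\operatorname{Re} z,x)\,(i \operatorname{Im} z)^n / n!$ with a suitable cutoff sequence $\chi_n$), and similarly produce $\widetilde{W}$. Apply the Cauchy formulas above to $(\widetilde{G},\widetilde{W})$, obtaining smooth functions $Q(z,x), R(z,x)$ in a complex neighborhood of the real axis. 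Restricting to $z = t \in \mathbb{R}$ yields $g = qW + r + E$ where the error $E$ comes from the non-holomorphicity of $\widetilde{G},\widetilde{W}$; by the almost-analyticity, $E$ is smooth and $t$-flat at the origin, so iterating the construction (or a direct Borel summation argument) produces genuine smooth $q$ and $r$ with $r(t,x) = \sum_{j<k} t^j r_j(x)$, as required. The main obstacle will be this final step: verifying that the Cauchy integrals yield functions that extend smoothly across the real axis and that the $\partial_{\bar z}$-error, despite being non-holomorphic, can be absorbed into a smooth correction of $q$ without disturbing the polynomial structure of the remainder $r$.
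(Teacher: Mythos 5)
The paper does not prove this statement: it is quoted verbatim as a classical result with a citation to Mather, so there is no in-paper proof to compare against. Judged on its own merits, your proposal follows the standard Nirenberg-style complex-analytic route (almost-analytic extension plus Cauchy-integral division), which is a legitimate strategy, but as written it has two genuine gaps.

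The first gap is in your opening reduction. For $k\ge 2$, the factorization $f = u\cdot W$ with $W(t,x)=t^k+\sum_{j<k}t^j b_j(x)$ is the Malgrange preparation theorem, and it does \emph{not} follow from Taylor's theorem plus the implicit function theorem. The IFT applied to $\partial_t^{k-1}f$ produces a single smooth root $\tau(x)$, which only lets you translate $t$; it does not produce the $k$ coefficients $b_j(x)$. The complex roots of $f(\cdot,x)$ near the origin depend on $x$ only continuously (they collide and separate), so you cannot define the $b_j$ as symmetric functions of smooth root branches, as one does in the holomorphic category via Rouch\'e. In the standard developments the preparation theorem is \emph{deduced from} the division theorem (or from division by the generic polynomial $P_k(t,\lambda)=t^k+\sum_{j<k}\lambda_j t^j$ with $\lambda$ treated as free parameters), so invoking it as step one makes the argument circular. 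The usual repair is to reverse the order: prove division by $P_k(t,\lambda)$ first, then obtain both the general division theorem and the preparation theorem by specializing $\lambda=b(x)$.

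The second gap is the one you flag yourself but do not close, and it is the technical heart of the proof. After restricting the Cauchy integrals to real $z$, the error is a $\bar\partial$-type area integral of $\partial_{\bar\zeta}\widetilde{G}(\zeta,x)\big/\bigl(W(\zeta,x)(\zeta-t)\bigr)$. The numerator is flat on $\{\operatorname{Im}\zeta=0\}$, but the zeros of $W(\cdot,x)$ are complex and can approach the real axis as $x$ varies, so integrability and smoothness of the quotient up to the real axis are not automatic; this is exactly the content of the Nirenberg extension lemma (an estimate of the form $|\partial_{\bar\zeta}\widetilde{G}|\lesssim_N |{\operatorname{Im}\zeta}|^N$ against a lower bound for $|W|$ in terms of the distance to its root set). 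Saying the error "can be absorbed into a smooth correction of $q$" asserts the conclusion of that lemma rather than proving it. Until both points are addressed the proposal is an outline of the known proof rather than a proof.
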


\section{Appendix to Section \ref{linear:sec:Setup}}

\subsection{Proof of Lemma
  \ref{linear:lemma:SdS:Kerr-star-regular-coordinates}}
\label{linear:appendix:lemma:SdS:Kerr-star-regular-coordinates}

Define the cutoff functions
  $\chi_{\Interval_{b_0},-}, \chi_{\Interval_{b_0},+}\in
  C_0^\infty(r_{\EventHorizonFuture}-\varepsilon_{\StaticRegionWithExtension},
  r_{\CosmologicalHorizonFuture}+\varepsilon_{\StaticRegionWithExtension})$
  such that 
  \begin{equation*}
    \chi_{\Interval_{b_0}, -}(r) =
    \begin{cases}
      1 & \text{near }\EventHorizonFuture\\
      0 & r>\inf\Interval_{b_0}
    \end{cases},\qquad
    \chi_{\Interval_{b_0}, +}(r) =
    \begin{cases}
      1 & \text{near }\CosmologicalHorizonFuture\\
      0 & r<\sup\Interval_{b_0}
    \end{cases}.
  \end{equation*}  
  Now we choose
  \begin{equation}
    \label{linear:eq:coordinate-transform:SdS:regular-global}
    F_{b_0}'(r) = \left(\chi_{\Interval_{b_0},+}-\chi_{\Interval_{b_0}, -}\right)\mu_{b_0}^{-1}\sqrt{1-c_{\tStar}^2\mu_{b_0}(r)},
  \end{equation}
  where
  $c_{\tStar} = \left(1 - \sqrt[3]{9\Lambda
      M^2}\right)^{-\frac{1}{2}}$. $F_{b_0}'$ clearly vanishes on
  $\Interval_{b_0}$. To check that the constant-$\tStar$ hypersurfaces
  are uniformly spacelike, we can compute that
  \begin{equation*}
    G_{b_0}(d\tStar, d\tStar) = \frac{1}{\mu_{b_0}}\left(-1 + \left(\chi_{\Interval_{b_0},+}-\chi_{\Interval_{b_0}, -}\right)^2\right)
    -\left(\chi_{\Interval_{b_0},+}-\chi_{\Interval_{b_0}, -}\right)^2c_{\tStar}^2,
  \end{equation*}
  which is uniformly negative.   
\subsection{Proof of Lemma \ref{linear:lemma:KdS:Kerr-star-regular-coordinates}}
\label{linear:appendix:lemma:KdS:Kerr-star-regular-coordinates}
We will construct an explicit coordinate system that satisfies the
conditions in the lemma.  To define $F'_b(r), \Phi_b'(r)$, fix
$\Interval_{b}=r_1, r_2$ such that
$r_{b_0,
  \EventHorizonFuture}+\epsilon_{\StaticRegionWithExtension}<r_1<r_2<r_{b_0,\CosmologicalHorizonFuture}
- \epsilon_{\StaticRegionWithExtension}$, and define smooth cutoff
functions
$\chi_{\Interval_b,\EventHorizonFuture},
\chi_{\Interval_b,\CosmologicalHorizonFuture}$ such that
\begin{equation*}
  \chi_{\Interval_b,\EventHorizonFuture}(r) :=
  \begin{cases}
    1&\text{near } \EventHorizonFuture \\
    0&r > r_1 
  \end{cases},\qquad 
  \chi_{\Interval_b,\EventHorizonFuture}(r) :=
  \begin{cases}
    1&\text{near } \CosmologicalHorizonFuture \\
    0&r < r_2 
  \end{cases}.
\end{equation*}
Then we define $F_b'(r), \Phi_b'(r)$ so that
\begin{align*}
  F_b'(r) &= F_{b_0}'(r)
  + (\chi_{\Interval_b,\EventHorizonFuture} - \chi_{\Interval_b,\CosmologicalHorizonFuture})\left(
    \mu_{b_0}^{-1} - \frac{(1+\lambda_b)(r^2+a^2)}{\Delta_b}
            \right),\\
  \Phi_b'(r) &= (\chi_{\Interval_b,\EventHorizonFuture} - \chi_{\Interval_b,\CosmologicalHorizonFuture})\left(
               \frac{(1+\lambda_b)a}{\Delta_b}
               \right).
\end{align*}
This definition ensures that $F_b$, $\Phi_b$ are well-defined up to an
additive constant. Furthermore, it is immediately clear that $F_b(r) =
\bPhi_b(r) = 0$ on $\Interval_b$, and that $F_{b}', \Phi_{b}'$ coincide
with $F_{b_0}', \Phi_{b_0}'$ when $b=b_0$. Finally, to check that the
$\tStar$-constant hypersurfaces are space-like, it suffices to observe
that 
\begin{equation*}
  G_b(d\tStar,d\tStar) = G_{b_0}(d\tStar, d\tStar) + O(a),
\end{equation*}
where the perturbation in particular follows from the fact that $r$ is
bounded. Since we have that $G_{b_0}(d\tStar, d\tStar)<0$ uniformly,
$\tStar$-constant hypersurfaces remain space-like in $g_b$ for
sufficiently small $a$.

Near the poles, we can use the smooth coordinates on $\Sphere^2$ given
by $x = \sin\theta\cos\phiStar$ and $y=\sin\theta\sin\phiStar$ to
extend the metric smoothly to the poles.

\subsection{Proof of Proposition \ref{linear:prop:redshift:N-construction}}
\label{linear:appendix:prop:redshift:N-construction}

We define the redshift vectorfield $\RedShiftY$ such that on
$\Horizon=\EventHorizonFuture, \CosmologicalHorizonFuture$,
\begin{enumerate}
\item $\RedShiftY$ is future-directed null, with $g(\RedShiftY,
  \HorizonGen_{\Horizon}) = -2$,
\item $\nabla_\RedShiftY \RedShiftY = -C_\RedShiftY(\RedShiftY+\HorizonGen_{\Horizon})$,
\item $\LieDerivative_\KillT \RedShiftY = \LieDerivative_\KillPhi
  \RedShiftY = 0$.
\end{enumerate}
Define a local frame $\omega_1$, $\omega_2$ on the spheres. Then at
$\EventHorizonFuture$ for some vectorfield $a^A$ and two-tensor
$\tensor[]{h}{_A^B}$,
\begin{equation}
  \begin{split}
    \nabla_{\HorizonGen_{\EventHorizonFuture}}\HorizonGen_{\EventHorizonFuture} &= \SurfaceGravity_{\EventHorizonFuture}\HorizonGen_{\EventHorizonFuture},\\
    \nabla_{\RedShiftY}\RedShiftY &= -C_\RedShiftY (\RedShiftY+\HorizonGen_{\EventHorizonFuture}),\\
    \nabla_{\HorizonGen_{\EventHorizonFuture}}\RedShiftY  &= -\SurfaceGravity_{\EventHorizonFuture} \RedShiftY + a^A\omega_A,\\
    \nabla_{\omega_A}\RedShiftY &= \tensor[]{h}{_A^B}\omega_B - \frac{1}{2}a^A\RedShiftY.
  \end{split}
\end{equation}
We can then calculate that at $\EventHorizonFuture$, 
\begin{gather*}
  \DeformationTensor[_{\RedShiftY\RedShiftY}]{\RedShiftY} = 2C_\RedShiftY, \quad
  \DeformationTensor[_{\HorizonGen_{\EventHorizonFuture}\HorizonGen_{\EventHorizonFuture}}]{\RedShiftY}
  = 2\SurfaceGravity_{\EventHorizonFuture}, \quad
  \DeformationTensor[_{\HorizonGen_{\EventHorizonFuture}\RedShiftY}]{\RedShiftY} = C_\RedShiftY, \\
  \DeformationTensor[_{\RedShiftY \omega_A}]{\RedShiftY} = 0, \quad
  \DeformationTensor[_{\HorizonGen_{\EventHorizonFuture} \omega_A}]{\RedShiftY} = a^A,\quad
  \DeformationTensor[_{\omega_A\omega_B}]{\RedShiftY} = \tensor[]{h}{_A^B}.
\end{gather*}
Then, at $\EventHorizonFuture$, we have that
\begin{align*}
  \KCurrent{\RedShiftY, 0, 0}[{h}]
  \ge{} &\frac{1}{2}\SurfaceGravity
          _{\EventHorizonFuture} \EMTensor(\RedShiftY,\RedShiftY)
          + \frac{1}{4}C_\RedShiftY \EMTensor(\HorizonGen_{\EventHorizonFuture}, \RedShiftY + \HorizonGen_{\EventHorizonFuture})\\
        &- c\EMTensor(\HorizonGen_{\EventHorizonFuture}, \RedShiftY + \HorizonGen_{\EventHorizonFuture})
          - c\sqrt{\EMTensor(\HorizonGen_{\EventHorizonFuture}, \RedShiftY + \HorizonGen_{\EventHorizonFuture})\EMTensor(\RedShiftY, \RedShiftY)}.
\end{align*}
Thus, for any $\epsilon>0$, for $C_\RedShiftY$ sufficiently large,  we
have that on $\EventHorizonFuture$,
\begin{equation}
  \KCurrent{\RedShiftY, 0, 0}[{h}] \ge \frac{1}{2}(\SurfaceGravity_{\EventHorizonFuture} - \epsilon)\EMTensor(\RedShiftY,\RedShiftY) + c(\epsilon)\EMTensor(\HorizonGen_{\EventHorizonFuture}, \RedShiftY + \HorizonGen_{\EventHorizonFuture})
\end{equation}
for some $c(\epsilon)>0$.  In particular, for any fixed $C>0$, we can
choose $C_\RedShiftY$ sufficiently large so that along
$\EventHorizonFuture$, 
\begin{equation}
  \label{linear:eq:redshiftY:deformTensor:Event:final-estimate}
  \KCurrent{\RedShiftY, 0, 0}[{h}]
  \ge \frac{1}{4}\SurfaceGravity_{\CosmologicalHorizonFuture}\EMTensor(\RedShiftY,\RedShiftY)
  + C\EMTensor(\HorizonGen_{\CosmologicalHorizonFuture}, \RedShiftY + \HorizonGen_{\CosmologicalHorizonFuture}). 
\end{equation}
A similar argument shows that $C_{\RedShiftY}$ can be chosen
sufficiently large so that in fact
\eqref{linear:eq:redshiftY:deformTensor:Event:final-estimate} also holds at
$\CosmologicalHorizonFuture$. 
We then define
\begin{equation*}
  \RedShiftN =
  \begin{cases}
    \HorizonGen_{\EventHorizonFuture}+\RedShiftY& r\le r_0\\
    \KillT & r\in (r_1, R_1)\\
    \HorizonGen_{\CosmologicalHorizonFuture}+\RedShiftY& r\ge R_0,
  \end{cases}
\end{equation*}
where on the regions $(r_0, r_1)$, $(R_1, R_0)$ we smoothly extend
$\RedShiftN$ as a $\KillT$, $\KillPhi$-independent timelike
vectorfield. Then we have
\eqref{linear:eq:redshift:DeformTen-redshift-weak-control},
\eqref{linear:eq:redshift:N=T}, and \eqref{linear:eq:redshift:divergence} by
construction. 

We now move onto proving
\eqref{linear:eq:redshift:DeformTen-redshift-control}.
Recalling the definition of $\TFixer$ in \eqref{linear:eq:TFixer:def}, and
\eqref{linear:eq:redshiftY:deformTensor:Event:final-estimate}, we have that on
$\EventHorizonFuture$,
\begin{equation}
  \label{linear:eq:redshiftY:deformTensor:Event:aux2}
  \KCurrent{\RedShiftY, 0, 0}[h]
  \ge \frac{1}{2}\left(\SurfaceGravity_{\EventHorizonFuture}-\varepsilon_{\RedShiftN}\right)\EMTensor(\RedShiftY,\RedShiftY)
  + C(\varepsilon_{\RedShiftN})\EMTensor(\TFixer, \RedShiftN),
\end{equation}
where we observe that $C(\varepsilon_{\RedShiftN})$ can be made
arbitrarily large by increasing $C_\RedShiftY$ as desired in the
construction of $\RedShiftY$.

Now observe that at the horizons, the unit normal $n_{\Sigma_{\tStar}}$
can be decomposed as
\begin{equation*}
  n_{\Sigma_{\tStar}} = \frac{1}{2\sqrt{\GInvdtdt}}\RedShiftY + \tilde{n},
\end{equation*}
where $\tilde{n}$ is tangent to $\EventHorizonFuture$ and
$\CosmologicalHorizonFuture$.

As a result, we have that on a neighborhood of the horizons, in
particular, for $r<r_0$, $r>R_0$,
\begin{equation*}
  \KCurrent{\RedShiftY, 0, 0}[h]
  \ge \frac{1}{\sqrt{\GInvdtdt}}\left(\SurfaceGravity_{\EventHorizonFuture} - \varepsilon_{\RedShiftN}\right)
  \JCurrent{\RedShiftN, 0, 0}[h]\cdot n_{\Sigma}
  + C(\varepsilon_{\RedShiftN})\EMTensor(\TFixer, \RedShiftN).
\end{equation*}
Now recall that $\TFixer$ is timelike on the interior of the
stationary region, only becoming null exactly at
$\EventHorizonFuture$, and that $\RedShiftN$ is globally
timelike. Then, $\EMTensor(\TFixer, \RedShiftN)$ controls all
tangential derivatives of $h$ along the horizons, and all derivatives
of $h$ away from the horizons. In particular, since $X$ is tangential
to both $\EventHorizonFuture$ and $\CosmologicalHorizonFuture$,
\begin{equation*}
  \abs*{X h }^2 \lesssim \EMTensor(\TFixer, \RedShiftN).
\end{equation*}
Choosing $C_{\RedShiftY}$ sufficiently large so that
$C(\varepsilon_{\RedShiftN})\EMTensor(\TFixer, \RedShiftN)$ is large enough to control $Xh$ then
allows us to conclude.

The fact that the divergence of $\RedShiftN$ is negative follows
directly from the computation of the components of
$\DeformationTensor{\RedShiftY}$ above.





\subsection{Proof of Lemma
  \ref{linear:lemma:enhanced-redshift:Ka-construction}}

\label{linear:appendix:lemma:enhanced-redshift:Ka-construction}

We first pick stationary, smooth $\RedShiftK_1$ such that $g(\RedShiftK_1,\HorizonGen_{\Horizon})=-1$ near $\Horizon$, for
$\Horizon=\EventHorizonFuture, \CosmologicalHorizonFuture$ and
vanishes on a neighborhood away from the horizons. 
Now consider some
$\Sigma=\Sigma_{\tStar}$. Any coordinate chart $(U, \Psi)$ on $\Sigma$
can be pushed forward to a tubular coordinate patch on
$\StaticRegionWithExtension$ by the map
\begin{align*}
  \Real_+\times U&\to \StaticRegionWithExtension\\
  (\tStar, x) &\mapsto \varphi_{\tStar}\circ \Psi(x). 
\end{align*}
In such a coordinate chart, the metric coefficients are independent
of $\tStar$. Now, pick an arbitrary point
$\STPoint=(\tStar,x)\in \Sigma$. We now split into two cases, a case
where we are not on the horizons, and a case where we are on one of
$\EventHorizonFuture$ and $\CosmologicalHorizonFuture$. 
\begin{enumerate}
\item
  $\STPoint\in \Sigma
  \backslash(\EventHorizonFuture\bigcup\CosmologicalHorizonFuture)$:
  In this case, we pick a coordinate chart $(U,\Psi)$ such that
  $\STPoint\in \Psi(U)\Subset \Sigma
  \backslash(\EventHorizonFuture\bigcup\CosmologicalHorizonFuture)$. Now
  define $\chi$ a cut-off function such that $\chi=1$ in a neighborhood of
  $p$ and $\supp \chi \subset U$. Then define the vectorfields
  \begin{equation*}
    \RedShiftK^{(\STPoint)}_\mu:= \chi \p_\mu 
  \end{equation*}
  on the tubular patch $\Real_+\times U$, where
  $\p_0=\HorizonGen_{\EventHorizonFuture}+\HorizonGen_{\CosmologicalHorizonFuture},
  \p_i=\p_{x^i}$. Each of the $\RedShiftK^{\STPoint}_\mu$ are
  stationary vectorfields since the coordinate chart is stationary,
  and they are also trivially tangent to both $\EventHorizonFuture$
  and to $\CosmologicalHorizonFuture$. Finally, since all the
  vectorfields themselves vanish near the two horizons, we also have
  that
  \begin{equation*}
    \DeformationTensor{X^{(p)}_\gamma} = 0
  \end{equation*}
  near both $\EventHorizonFuture$ and $\CosmologicalHorizonFuture$.
\item
  $\STPoint\in \EventHorizonFuture\bigcup
  \CosmologicalHorizonFuture$. The argument for both cases are
  similar, so we only give details in the case of
  $\STPoint\in \CosmologicalHorizonFuture$: We can pick a chart
  $(U, \Psi)$ on $\Sigma$ such that $\STPoint\in\Psi(U)$, and
  moreover, $\CosmologicalHorizonFuture$ is given locally by
  $x_1=0$, and $\p_{x^1} = \RedShiftY$ is normal to
  $\CosmologicalHorizonFuture$. Like in the previous case, define a
  cut-off function $\chi$ such that $\chi=1$ near $\STPoint$ and
  $\supp\chi \in U$. Then, in the tubular neighborhood
  $\Real_+\times U$, define
  \begin{equation*}
    \RedShiftK^{(\STPoint)}_0 = \chi \HorizonGen_{\CosmologicalHorizonFuture},
    \qquad \RedShiftK^{(\STPoint)}_1 = \chi x_1\p_{x^1},
    \qquad \RedShiftK^{(\STPoint)}_i = \chi \p_{x^i}\quad 2\le i 
  \end{equation*}
  By construction, all of these vectors are smooth and tangent to
  $\CosmologicalHorizonFuture$. Moreover, they vanish on and thus
  are tangent to $\EventHorizonFuture$.
\end{enumerate}

At this point, we have constructed a tubular neighborhood
$V_{(\STPoint)}$ around each point $\STPoint\in\Sigma$ and a set of
smooth vectorfields which are stationary and tangent to both
$\CosmologicalHorizonFuture$ and $\EventHorizonFuture$. Moreover,
for $\RedShiftK$ a smooth vectorfield tangent to both
$\CosmologicalHorizonFuture$ and $\EventHorizonFuture$, supported in
$V_{(\STPoint)}$, there exist stationary
$(x^{(\STPoint)})^\mu\in C^\infty(\StaticRegionWithExtension)$ such that
\begin{equation*}
  \RedShiftK= (x^{(\STPoint)})^\mu \RedShiftK^{(\STPoint)}_\mu.
\end{equation*}
Now, crucially, we realize that $\Sigma$ is compact in \KdS. This
allows us to choose a finite set of points
$\{\STPoint_i\in \Sigma\}_{i=1}^N$ such that
\begin{equation*}
  \StaticRegionWithExtension\subset \bigcup_{i=1}^N V_{(\STPoint_i)}.
\end{equation*}
Now re-index $\{\RedShiftK_i\}_{i=2}^N$ to be the union of the
$\{\RedShiftK^{(p_i)}_\mu\}_{i=1}^N$. This is now a finite set of
vectorfields such that for any
$X\in C^\infty(\StaticRegionWithExtension)$ there exist
functions $x^i\in C^\infty(\StaticRegionWithExtension)$ such that
\begin{equation*}
  X = \sum_i x^i\RedShiftK_i.
\end{equation*}
It remains to confirm the properties of the deformation
tensors. Directly from our construction of $\RedShiftY$, we have
that
\begin{equation*}
  f_{1}^{11} = \SurfaceGravity_{\mathcal{H}} \qquad \text{near }\mathcal{H},
\end{equation*}
where $\mathcal{H} =
\EventHorizonFuture,\CosmologicalHorizonFuture$.
Finally,
\begin{equation*}
  f_i^{11} = 0,\quad i\neq 1
\end{equation*}
follows directly from that $\RedShiftK_i,i\neq 1$ is tangent to
$\EventHorizonFuture$ and $\CosmologicalHorizonFuture$,  and that
\begin{equation*}
  \DeformationTensor{X} =\frac{1}{2}\LieDerivative_Xg.
\end{equation*}

  

\subsection{Proof of Lemma \ref{linear:lemma:T-Fixer:construction}}
\label{linear:appendix:lemma:T-Fixer:construction}
We can directly calculate that
\begin{equation*}
  g_b(\KillT + a\tilde{\chi}\KillPhi, \KillT + a\tilde{\chi}\KillPhi)
  = g_b(\KillT, \KillT)
  + 2a\tilde{\chi}g_b(\KillT, \KillPhi)
  + a^2\tilde{\chi}^2 g_b(\KillPhi, \KillPhi).
\end{equation*}
Using the quadratic formula, we can calculate that $ g_b(\KillT +
a\tilde{\chi}\KillPhi, \KillT + a\tilde{\chi}\KillPhi)=0$ when
\begin{equation*}
  \tilde{\chi} = \frac{-2a g_b(\KillT, \KillPhi)\pm \sqrt{4a^2\left(g(\KillT, \KillPhi) - g(\KillPhi, \KillPhi)g(\KillT, \KillT)\right)}}{2 g_b(\KillT, \KillT)}.
\end{equation*}
Observe that
\begin{equation*}
  g(\KillT, \KillPhi) - g(\KillPhi, \KillPhi)g(\KillT, \KillT)\ge 0,
\end{equation*}
with equality exactly on the horizons $\EventHorizonFuture$ and
$\CosmologicalHorizonFuture$. We define  $\tilde{\chi}$ by
\begin{equation*}
  a\tilde{\chi}(r) :=
  \begin{cases}
    \HorizonGenPhi_{\EventHorizonFuture}& r\in [r_{\EventHorizonFuture}, \sup_{\Ergoregion_{\EventHorizonFuture}}r],\\
    \HorizonGenPhi_{\CosmologicalHorizonFuture}& [\inf_{\Ergoregion_{\CosmologicalHorizonFuture}}r, r_{\CosmologicalHorizonFuture}],\\
    0& r\in [r_-,r_+],
  \end{cases}
\end{equation*}
where
\begin{equation}
  \label{linear:eq:r-minus-r-plus-def}
  \sup_{\Ergoregion_{\EventHorizonFuture}}r < r_- < r_0,\qquad
  R_0 < r_+ < \inf_{\Ergoregion_{\CosmologicalHorizonFuture}}r.
\end{equation}
It is then clear that $\TFixer$ defined with $\tilde{\chi}$ has the
desired properties.

\section{Appendix to Section \ref{linear:sec:EVE}}

\subsection{Proof of Proposition \ref{linear:prop:initial-data:ib-construction}}
\label{linear:appendix:prop:initial-data:ib-construction}

Let us define $\psi = \phi^{-1}$. Then, denote
\begin{equation*}
  \tilde{\InducedMetric}_0 = \psi^*\InducedMetric_0, \qquad \tilde{k}_0 = \psi^*k_0.
\end{equation*}

Fix $N_b$ and $X_b$ the lapse function and the shift vectorfield
respectively of $g_b$, so that
\begin{equation*}
  g_b = -N_b^2\,d\tStar^2 + (\InducedMetric_b)_{ij}(d x^i + X_b^i\,d\tStar)\otimes (dx^j + X_b^j\,d\tStar).
\end{equation*}
Then define $g $ so that
\begin{equation*}
  g = -N_b^2\,d\tStar^2 + (\tilde{\InducedMetric}_0)_{ij}(d x^i + X_b^i\,d\tStar)\otimes (dx^j + X_b^j\,d\tStar).
\end{equation*}
It remains to define $\p_{\tStar}g_{\mu\nu}$. We first define 
\begin{equation*}
  \p_{\tStar}g_{ij} = \LieDerivative_{X_b}\tilde{\InducedMetric}_0  + 2N_{b} (\tilde{k}_0)_{ij}.
\end{equation*}
It remains to determine $\p_{\tStar}g(\KillT, \cdot)$. We do
so exactly by using the gauge condition. 
Recall that in local
coordinates, the gauge condition can be written as
\begin{equation*}
  0 = g^{\alpha\beta}\p_\alpha g_{\mu\beta}
  - \frac{1}{2}g^{\alpha\beta}\p_\mu g_{\alpha\beta}
  - g_{\mu\gamma}g^{\alpha\beta}\ChristoffelTypeTwo[g_b]{\gamma}{\alpha\beta} .
\end{equation*}
Contracting this with $\KillT$, we have then that
\begin{equation}
  \frac{1}{2}g^{\tStar\tStar}\p_{\tStar}g_{\tStar\tStar}
  = - g^{i\beta}\p_i g_{\tStar\beta}
  + \frac{1}{2}g^{i j}\p_{\tStar}g_{ij}
  + g_{\tStar\gamma}g^{\alpha\beta}\ChristoffelTypeTwo[g_b]{\gamma}{\alpha\beta},
\end{equation}
which uniquely determines $\p_{\tStar}g_{\tStar\tStar}$.
On the other hand, contracting with $\p_i$, we have that
\begin{equation*}
  g^{\tStar\tStar}\p_{\tStar}g_{i\tStar} =
  \frac{1}{2}g^{\alpha\beta}\p_ig_{\alpha\beta}
  - g^{\alpha j}\p_{j}g_{i\alpha}
  +  g_{i\gamma}g^{\alpha\beta}\ChristoffelTypeTwo[g_b]{\gamma}{\alpha\beta},
\end{equation*}
which uniquely determines $\p_{\tStar}g_{\tStar i}$.
We can then define
\begin{equation*}
  i_{b,\phi}(\InducedMetric, k) = \left((g - g_b)\vert_{\Sigma_0}, \LieDerivative_{\KillT}g\vert_{\Sigma_0}\right).
\end{equation*}
Denoting $(\InducedMetric, k)$ the induced metric and second
fundamental form of $g$ on $\phi(\Sigma_0)$, it is clear that
$(\phi^*\InducedMetric, \phi^*k)=(g_0, k_0)$, and moreover, we have
constructed $(g\vert_{\Sigma_0}, \LieDerivative_\KillT
g\vert_{\Sigma_0})$ satisfying the gauge constraint. The
smoothness and mapping properties follow by construction.  

\subsection{Proof of Lemma \ref{linear:lemma:SubPOp:horizons}}
\label{linear:appendix:lemma:SubPOp:horizons}

Throughout this section, we work with the Eddington-Finkelstein
coordinates on \SdS{} defined in Section \ref{linear:sec:KdS}.  We will then
denote covectors in this section by
\begin{equation*}
  -\sigma dt_0+ \xi dr + \FreqTheta d\theta + \FreqPhi d\varphi_0.
\end{equation*}
Then we can write the metric $g_{b_0}$ and the inverse metric
$G_{b_0}$ as
\begin{equation}
  \label{linear:eq:coordinate-transform:SdS-outgoing-EF}
  \begin{split}
    g_{b_0} &= -\mu_{b_0} dt_0^2 \pm 2dt_0dr + r^2d\omega^2,\\
    G_{b_0} &= \mp 2\p_{t_0}\p_r + \mu \p_r^2 + r^{-2}\UnitSphereInvMetric
    = \pm 2\sigma\xi + \mu_{b_0} \xi^2 + r^{-2}|\eta|^2,\quad \eta\in T^*\UnitSphere^2,
  \end{split}
\end{equation}
where $\pm(r-\rCrit)>0$. Recall that in this system of coordinates, we
have that $\p_{t_0}$ is Killing, and in particular is tangential to
the two Killing horizons. Let us now define
\begin{equation*}
  e_0:=\sqrt{\mu_{b_0}}^{-1}\p_t,
  \quad e^0:= \sqrt{\mu_{b_0}}dt,
  \quad e_1:= \sqrt{\mu_{b_0}}\p_r,
  \quad e^1:= \sqrt{\mu_{b_0}}^{-1} dr. 
\end{equation*}
To achieve a smooth splitting near  $r_{b_0,\EventHorizonFuture},
r_{b_0,\CosmologicalHorizonFuture}$, we split any smooth one-form $w$
into
\begin{equation*}
  w = w^0_N \,dt_0 + w^0_{TN}\, dr + w^0_{TT},
\end{equation*}
where $w^0_N, w^0_{TN}$ are smooth, and $w^0_{TT}$ is a smooth
one-form on the 2-sphere $\Sphere^2$. This transforms between the
splitting
\begin{equation*}
  w = w_N\,e^0 + w_{TN}\,e^1+w_{TT}
\end{equation*}
via
\begin{equation*}
  \begin{pmatrix}[1.5]
    w_N\\ w_{TN}\\ w_{TT}
  \end{pmatrix}
  = \mathcal{J}_{\pm}^{(1)}
  \begin{pmatrix}[1.5]
    w^0_N\\ w^0_{TN} \\ w^0_{TT}
  \end{pmatrix},
  \quad
  \mathcal{J}_{\pm}^{(1)} =
  \begin{pmatrix}[1.5]
    \sqrt{\mu_{b_0}}^{-1} & 0 & 0\\
    \mp \sqrt{\mu_{b_0}}^{-1} & \sqrt{\mu_{b_0}} & 0 \\
    0 & 0 & 1
  \end{pmatrix}.
\end{equation*}
This induces a splitting on two tensors $u$ as
\begin{equation}
  \label{linear:eq:subprincipal-calculation:horizon-splitting}
  u = u^0_{NN}\,dt_0^2 + 2u^0_{NTN}\,dt_0dr + 2\,dt_0 u^0_{NTT} + u^0_{TNN}\,dr^2 +  2\,dr u^0_{TNT} + u^0_{TTT}, 
\end{equation}
where $u^0_{NN}, u^0_{NTN}, u^0_{TNN}$ are functions on
$T^*\StaticRegionWithExtension$, $u^0_{NTT}$ and $u^0_{TNT}$ are one
forms on the sphere, and $u^0_{TTT}$ is a two-tensor on the
sphere. This then transforms between the splitting in
\eqref{linear:eq:subprincipal-calculation:horizon-splitting} via
\begin{equation*}
  \begin{pmatrix}[1.5]
    u_{NN} \\ u_{NTN} \\ u_{NTT} \\ u_{TNN} \\ u_{TNT} \\ u_{TTT}
  \end{pmatrix}
  =
  \mathcal{J}_{\pm}^{(2)}
  \begin{pmatrix}[1.5]
    u^0_{NN} \\ u^0_{NTN} \\ u^0_{NTT} \\ u^0_{TNN} \\ u^0_{TNT} \\ u^0_{TTT}
  \end{pmatrix},
  \quad
  \mathcal{J}_{\pm}^{(2)} =
  \begin{pmatrix}[1.5]
    \mu_{b_0}^{-1} & 0 & 0 & 0 & 0 & 0\\
    \mp \mu_{b_0}^{-1} & 1 & 0 & 0 & 0 & 0\\
    0 & 0 & \sqrt{\mu_{b_0}}^{-1} & 0 & 0 & 0\\
    \mu_{b_0}^{-1} & \mp 2 & 0 & \mu_{b_0} & 0 & 0\\
    0 & 0 & \mp \sqrt{\mu_{b_0}}^{-1} & 0 & \sqrt{\mu_{b_0}} & 0 \\
    0 & 0 & 0 & 0 & 0 & 1
  \end{pmatrix}.
\end{equation*}

The threshold regularity will depend on the component of $\SubPOp_{b_0}$
transversal to the horizons. We first consider $\SubPOp_{b_0}$ at the
cosmological horizon before considering the event
horizon. In the coordinates of
(\ref{linear:eq:coordinate-transform:SdS-outgoing-EF}), we can calculate that
the non-zero Christoffel symbols are 
\begin{align*}
  \ChristoffelTypeTwo{t_0}{t_0t_0} &= -\frac{1}{2}\p_r\mu_{b_0},\\
  \ChristoffelTypeTwo{t_0}{AB}&=r\UnitSphereMetric_{AB},\\
  \ChristoffelTypeTwo{r}{t_0t_0} &= \frac{\mu_{b_0}\p_r\mu_{b_0}  }{2r},\\
  \ChristoffelTypeTwo{r}{rt_0} &= \frac{1}{2}\p_r\mu_{b_0},\\
  \ChristoffelTypeTwo{r}{AB} &= -r\mu_{b_0} \UnitSphereMetric_{AB},\\
  \ChristoffelTypeTwo{A}{rB} &= \frac{1}{r}\tensor[]{\delta}{^A_B},\\
  \ChristoffelTypeTwo{A}{BC} &= \tensor[^\UnitSphere]{\Gamma}{^A_{BC}}.
\end{align*}
Using the normalized coordinates on the $2$-sphere
\begin{equation*}
  (\thetaUnit, \phiUnit) := \left(\theta, \frac{\varphi}{\sin\theta}\right),
\end{equation*}
the subprincipal operator for the vector-wave operator is given
by: 
\begin{equation*}
  \begin{split}
    \SubPOp_{b_0}^{(1)}{h}_{t_0} &= \frac{\p_r\mu_{b_0}}{2}\p_r {h}_{t_0}
    - \frac{\p_r\mu_{b_0}}{2}\p_{t_0}{h}_{r},\\
    \SubPOp_{b_0}^{(1)}{h}_{r} &= -\frac{r\p_r\mu_{b_0}}{2}\p_r {h}_r +
    r^{-3}\UnitSphereInvMetric^{AB}(\p_B {h}_A + \p_A {h}_B),\\
    \SubPOp_{b_0}^{(1)}{h}_C &= r^{-1}\mu_{b_0}\p_C {h}_r -
    r^{-1}\mu_{b_0}\p_C {h}_r + \frac{1}{r}\p_C {h}_{t_0} -
    \frac{1}{r}\p_{t_0}{h}_C.
  \end{split}
\end{equation*}

We can similarly use the Christoffel symbols to calculate the
subprincipal operator for the linearized Einstein operator, which is
exactly the subprincipal operator for the 2-wave:
\begin{align*}
  \SubPOp_{b_0}{h}_{t_0t_0}
  ={}&
       -2\p_r\mu_{b_0}\p_r {h}_{t_0t_0}
       +2\p_r\mu_{b_0} \p_{t_0}{h}_{t_0r},\\
  \SubPOp_{b_0}{h}_{t_0r}
  ={}&
       r^{-3}\UnitSphereMetric^{AB}\p_Au_{Bt_0}
       +2\p_r\mu \p_{t_0}{h}_{rr},\\
  \SubPOp_{b_0}{h}_{t_0C}
  ={}&
       \left(-2r^{-1}\mu_{b_0}-
       \p_r\mu_{b_0}\right)\p_ru_{Ct_0}
       +2r^{-1}\p_{t_0}{h}_{t_0C}
       - 2r^{-1}\mu_{b_0}\p_C {h}_{t_0r}\\
     &- 2r^{-1}\mu_{b_0}\p_C {h}_{t_0t_0}
       -\p_r\mu_{b_0}\p_{t_0}{h}_{rC},\\
  \SubPOp_{b_0}{h}_{rr}
  ={}&
       2\p_r\mu_{b_0}\p_r{h}_{rr}
       -2r^{-3}\UnitSphereInvMetric^{AB}\p_A{h}_{rB},\\
  \SubPOp_{b_0}{h}_{rC}
  ={}&
       -2r^{-2}\mu_{b_0}\p_C{h}_{rr}
       - 2r^{-3}\UnitSphereInvMetric^{AB}\p_{B}{h}_{CA}
       - 2r^{-1}\p_C {h}_{t_0r}
       +2r^{-2}(r-3M)\p_r {h}_{t_0 r}
       + 2r^{-1}\p_{t_0}{h}_{C r},\\
  \SubPOp_{b_0}{h}_{CD}
  ={}& 2r^{-1}\mu_{b_0}\p_C {h}_{r D}
       + 2r^{-1}\p_C {h}_{t_0 D}
       + 4r^{-1}\p_r {h}_{rr}
       - 4r^{-1}\mu_{b_0}\p_r{h}_{CD}
     + 4r^{-1}\p_{t_0} {h}_{CD}.
\end{align*}
Let us now consider the case where $\Horizon=\CosmologicalHorizonFuture$.
We observe that in order to calculate
$\SHorizonControl{\LinEinstein_{g_{b_0}}}[\CosmologicalHorizonFuture]$, 
we only need to consider the
components of
$\left.\SubPOp_{b_0}\right\vert_{\CosmologicalHorizonFuture}$ in
$\p_r$ since
\begin{equation*}
  \left.g_{b_0}(\HorizonGen_{\CosmologicalHorizonFuture},
    \p_{t_0})\right\vert_{\CosmologicalHorizonFuture}=
  \left.g_{b_0}(\HorizonGen_{\CosmologicalHorizonFuture},
    \slashed{\p})\right\vert_{\CosmologicalHorizonFuture}=0.
\end{equation*}
Then recalling that
\begin{equation*}
  2\SurfaceGravity_{b_0, \CosmologicalHorizonFuture} = - \p_{r}\mu_{b_0} (r_{b_0, \CosmologicalHorizonFuture}),
\end{equation*}
we calculate that at $\CosmologicalHorizonFuture$, in the splitting of
\eqref{linear:eq:subprincipal-calculation:horizon-splitting}, we can write
\begin{equation*}
  \left.\SubPOp_{b_0}\right\vert_{\CosmologicalHorizonFuture} =
  2\SurfaceGravity_{b_0,\CosmologicalHorizonFuture}
  \begin{pmatrix}
    -2&0&0&0&0&0\\
    0&0&0&0&0&0\\
    0&0&-1&0&0&0\\
    0&0&0&2&0&0\\
    0&0&0&0&1&0\\
    0&0&0&0&0&0
  \end{pmatrix}\p_r
  + \widetilde{\SubPOp}_{b_0, \CosmologicalHorizonFuture},
\end{equation*}
where $\widetilde{\SubPOp}_{b_0, \CosmologicalHorizonFuture}$ is a
matrix-valued vectorfield tangent to $\CosmologicalHorizonFuture$.
As a result, we have that 
\begin{equation*}
  \sup_{\CosmologicalHorizonFuture}\left.g \left(\HorizonGen_{b_0, \CosmologicalHorizonFuture},
      \Re\left(\overline{\xi}\cdot \SubPOp_{b_0}\xi\right)\right)\right\vert_{\CosmologicalHorizonFuture}
  =  4\SurfaceGravity_{b_0,\CosmologicalHorizonFuture} ,
  \qquad
  \inf_{\Horizon} \left.g \left(\HorizonGen_{b_0, \CosmologicalHorizonFuture},
      \Re\left(\overline{\xi}\cdot \SubPOp_{b_0}\xi \right)\right)\right\vert_{\CosmologicalHorizonFuture}
  =  -4\SurfaceGravity_{b_0,\CosmologicalHorizonFuture}.
\end{equation*}
Similar calculations at $\EventHorizonFuture$ yield that
\begin{equation*}
  \left.\SubPOp_{b_0}\right\vert_{\EventHorizonFuture}
  =-2\SurfaceGravity_{b_0,\EventHorizonFuture}
  \begin{pmatrix}
    -2&0&0&0&0&0\\
    0&0&0&0&0&0\\
    0&0&-1&0&0&0\\
    0&0&0&2&0&0\\
    0&0&0&0&1&0\\
    0&0&0&0&0&0
  \end{pmatrix}\p_r
  + \widetilde{\SubPOp}_{b_0, \EventHorizonFuture},
\end{equation*}
where $\widetilde{\SubPOp}_{b_0, \EventHorizonFuture}$ is a
matrix-valued vectorfield tangent to $\EventHorizonFuture$,
and thus 
\begin{equation*}
  \sup_{\EventHorizonFuture}\left.g \left(\HorizonGen_{b_0, \EventHorizonFuture},
      \Re\left(\overline{\xi}\cdot \SubPOp_b\xi\right)\right)\right\vert_{\EventHorizonFuture}
  =  4\SurfaceGravity_{b_0,\EventHorizonFuture} ,
  \qquad
  \inf_{\EventHorizonFuture} \left.g \left(\HorizonGen_{b_0, \EventHorizonFuture},
      \Re\left(\overline{\xi}\cdot \SubPOp_b\xi \right)\right)\right\vert_{\EventHorizonFuture}
  =  -4\SurfaceGravity_{b_0,\EventHorizonFuture}.
\end{equation*}

\section{Appendix to Section \ref{linear:sec:QNM}}

\subsection{Proof of Proposition \ref{linear:prop:QNM:C0-semigroup}}
\label{linear:appendix:prop:QNM:C0-semigroup}

From Corollary \ref{linear:corollary:naive-energy-estimate:higher-order},
we already know that $\SolOp(\tStar)$ maps $\LSolHk{k}(\Sigma)$ to
itself. It suffices then to check that:
\begin{enumerate}
\item $\SolOp(0) = \Identity_{\LSolHk{k}(\Sigma)}$: This follows
  from the definition of $\SolOp(0)$ and the well-posedness theorem
  for the relevant Cauchy problem.
\item $\SolOp(t_0+t_1) = \SolOp(t_0)\circ\SolOp(t_1)$: This follows
  from the stationarity of the \KdS{} black hole spacetime. The
  relevant Cauchy problem is invariant under time-translations, and
  thus, solving the Cauchy problem from $[t_1, t_0+t_1]$ is
  equivalent to solving the problem on $[0, t_0]$. Thus, we can
  solve the problem from $[0, t_0+t_1]$ by first solving the problem
  from $[0, t_1]$ and then from $[t_1, t_0+t_1]$.
\item $\SolOp(\tStar)$ is continuous in the strong operator
  topology: Using Corollary
  \ref{linear:corollary:naive-energy-estimate:higher-order}, we see that
  for each $f\in \LSolHk{k}(\Sigma)$, 
  the map $\tStar\mapsto \SolOp(\tStar)f$ is a $C^0$ curve in
  $\LSolHk{k}(\Sigma)$.  
\end{enumerate}

\subsection{Proof of Proposition \ref{linear:prop:QNM:freq-characterization}}
\label{linear:appendix:prop:QNM:freq-characterization}


Let
\begin{equation}
  \label{linear:eq:QNM:freq-characterization:h-def}
  \mathbf{h}(\tStar, x) = \sum_{\ell=1}^n
  e^{-\ImagUnit\sigma\tStar}\tStar^\ell\mathbf{u}_{\ell}(x)  
\end{equation}
be a $\LSolHk{k}$-quasinormal mode solution of $\LinearOp$. Then
define
\begin{equation*}
  F(\tStar, \cdot) := \frac{1}{\ImagUnit}
  \delta_0(\tStar)\mathbf{h}(0, \cdot),
\end{equation*}
so that
\begin{equation*}
  \widehat{F}(\sigma, \cdot) = \frac{1}{\ImagUnit} \mathbf{h}(0, \cdot).
\end{equation*}
Applying Lemma \ref{linear:lemma:semigroup-resolvent-Cauchy-sol-relation},
and taking $\tStar\to+\infty$, we have that for
$\Im\sigma > \GronwallExp$,
\begin{equation*}
  (\InfGen - \sigma)^{-1}
  \widehat{F}(\sigma)
  = \int_{\Real^+}e^{\ImagUnit\sigma\sStar}\SolOp(\sStar)\mathbf{h}(0)\,d\sStar = \widehat{\mathbf{h}}(\sigma).
\end{equation*}
Taking the inverse Laplace transform on both sides and applying the
Cauchy integral formula over an appropriately deformed contour, we have that
\begin{equation*}
  \mathbf{h}(\tStar)
  = \Residue_{\sigma=\sigma_0}(e^{-\ImagUnit\sigma\tStar}(\InfGen-\sigma)^{-1}\widehat{F}(\sigma)).
\end{equation*}
Thus, to prove Proposition \ref{linear:prop:QNM:freq-characterization}, it
suffices to show that
$\widehat{F}(\sigma, \cdot)\in P(\sigma, C^\infty(\sigma))$. This is
trivially true if $n=0$. We then prove the general case by
induction. Observe that for $h$ as defined in
\eqref{linear:eq:QNM:freq-characterization:h-def}, we have that
$\mathbf{h}' = (D_{\tStar}-\InfGen)\mathbf{h}$ is a
$\LSolHk{k}$-quasinormal mode of order $n-1$. Then inductively, $\mathbf{h}'(0,
\cdot)\in P(\sigma, C^\infty(\Sigma))$. However, observe that
\begin{equation*}
  D_{\tStar}\mathbf{h} = \sum_{\ell=1}^n \sigma u_0(x) + \ImagUnit u_1(x), 
\end{equation*}
also belongs to $P(\sigma, C^\infty(\Sigma))$. Then, it is immediate
that $\mathbf{h}(0, \cdot)\in P(\sigma, C^\infty(\Sigma))$ as
desired. 

The frequency characterization of the $\LSolHk{k}$-quasinormal modes
in \eqref{linear:eq:QNM:freq-characterization}
then comes from the realization the $\LSolHk{k}$-quasinormal
spectrum is discrete and thus all poles of $(\InfGen-\sigma)^{-1}$
are of finite order.

\subsection{Proof of Proposition \ref{linear:prop:lin-theory:lambda-map-def}}
\label{linear:appendix:prop:lin-theory:lambda-map-def}

Let us first consider the case where $\Xi$ consists of only one
$\LSolHk{k}$-quasinormal frequency, $\sigma_0$, which is an
$\LSolHk{k}$-quasinormal mode of order $\ell$. Now for some fixed
$F\in H^{-1,\beta}(\Real^+; \LSolHk{k-1}(\Sigma))$, we have that
$(\InfGen - \sigma)^{-1}\widehat{F}(\sigma)$ is
holomorphic near  $\sigma_0$ if and only if
$\bangle*{\widehat{F}(\sigma),
  (\InfGen^*-\overline{\sigma})^{-1}G}_{L^2(\Sigma)}$ is holomorphic
near every $\sigma_0$ for every $G\in \LSolHk{-k}(\Sigma)$.
Using the definition of the Laplace
transform, this is equivalent to the condition that
\begin{equation*}
  \bangle*{F(\tStar,x), e^{-\ImagUnit\overline{\sigma}\tStar}(\InfGen^*-\overline{\sigma})^{-1}G(x)}_{L^2(\StaticRegionWithExtension)}
\end{equation*}
is holomorphic for every $G\in\LSolHk{-k}(\Sigma)$, near each
$\sigma_j\in \Xi$.

Since $(\InfGen^*-\overline{\sigma})^{-1}$ is meromorphic, 
for a fixed $G\in\LSolHk{-k}(\Sigma)$, we have that for
$\sigma$ in a small neighborhood of $\sigma_j$,
\begin{equation*}
  e^{-\ImagUnit\overline{\sigma}\tStar}(\InfGen^*-\overline{\sigma})^{-1}G
  = \sum_{j=1}^{\ell}(\overline{\sigma} - \overline{\sigma}_0)^{-j}{G}_{j} + \widetilde{G}(\overline{\sigma})
\end{equation*}
with ${G}_{j}\in e^{-\Im\sigma_0\tStar}\LSolHk{1-k}(\Sigma)$,
and $\widetilde{G}$ holomorphic near $\overline{\sigma}_0$ with values
in $\LSolHk{1-k}(\Sigma)$.

Thus, 
$\bangle*{\widehat{F}(\sigma),
  e^{-\ImagUnit\overline{\sigma}\tStar}(\InfGen^*-\overline{\sigma})G}_{L^2(\StaticRegionWithExtension)}$
is holomorphic at $\sigma_0$ if and only if
\begin{equation}
  \label{linear:eq:QNM:orthogonality-condition:aux1}
  \bangle*{\widehat{F}(\sigma),\residue_{\sigma=\sigma_0}\left((\overline{\sigma}-\overline{\sigma}_0)^{j-1}e^{-\ImagUnit \overline{\sigma}\tStar}(\InfGen^* - \overline{\sigma})^{-1}G\right)}_{L^2(\StaticRegionWithExtension)} = 0
\end{equation}
for each $1\le j\le \ell$. Observe that
\eqref{linear:eq:QNM:orthogonality-condition:aux1} holds for all $j>\ell$
since the poles of
$e^{-\ImagUnit\overline{\sigma}\tStar}(\InfGen^*-\overline{\sigma})^{-1}G$
are all of finite order at most $\ell$. Recalling that $G$ was
arbitrary, we see that the condition in 
\eqref{linear:eq:QNM:orthogonality-condition:aux1} is equivalent to the
condition that
\begin{equation*}
  \bangle*{\widehat{F}(\sigma), \residue_{\sigma=\overline{\sigma}_0}e^{-\ImagUnit \sigma\tStar}(\InfGen^*- \sigma)^{-1}p(\sigma)  }_{L^2(\StaticRegionWithExtension)}=0
\end{equation*}
for all polynomials $p(\sigma)$ in $\sigma$ with coefficients in
$\LSolHk{-k}(\Sigma)$. The case where $\Xi$ consists of more than one
$\LSolHk{k}$-quasinormal frequency then follows directly. 


\subsection{Proof of Proposition \ref{linear:prop:QNM-perturb:gen}}
\label{linear:appendix:prop:QNM-perturb:gen}

\begin{remark}
  Recall that the definition of the space $\LSolHk{k}(\Sigma)$, we
  relied on the construction of a family of vectorfields
  $\RedShiftK_j$ which depend on the background metric
  $g_{b(w)}$. Thus, the family of operators
  $\InfGen_w - \sigma: D^k(\InfGen_w-\sigma)\to\LSolHk{k}(\Sigma)$
  has domain and range varying in $w$. To avoid any difficulties
  that could rise in the ensuing perturbation theory, we observe
  that by construction the $\LSolHk{k}_b(\Sigma)$ norms for $b\in
  \BHParamNbhd$ are all equivalent norms. Thus, the family
  $\InfGen_w - \sigma$ is a family with differing domains, but
  identical range. Likewise, the same is true for the family
  $\widehat{\LinearOp}_{b}(\sigma):D^k(\widehat{\LinearOp}_b(\sigma))\to
  \InducedHk{k-1}(\Sigma)$. 
\end{remark}
The crucial ingredient for the first two items is that
the assumptions imply that 
\begin{equation}
  \label{linear:eq:QNM-perturb:commuted-redshift}
  \norm{\mathbf{u}}_{\InducedHk{k}(\Sigma)} \lesssim \norm*{(\InfGen_w-\sigma)\mathbf{u}}_{\InducedHk{k-1}(\Sigma)} + \norm{\mathbf{u}}_{\InducedHk{k_0}(\Sigma)}
\end{equation}
where $k>k_0$, and $k_0$ is the threshold regularity level defined in
\eqref{linear:eq:QNM-perturb:gen:threshold-reg-def}. This is shown explicitly
for the case where the family $\LinearOp_{b(w)}$ is the family of
linearized gauged Einstein operators $\LinEinstein_{g_{b(w)}}$ in
Section\footnote{We emphasize that the estimate in only assumption
  needed in Section \ref{linear:sec:energy-estimates} to prove estimates of
  the form in \eqref{linear:eq:QNM-perturb:commuted-redshift} were that the
  linearized gauged Einstein operators are strongly hyperbolic.} \ref{linear:sec:energy-estimates}.


\paragraph{Proof of part \ref{linear:prop:QNM-perturb:gen:item1}}
To prove the first statement we will show that if
$\InfGen_{w_0}-\sigma_0$ 
is invertible, then there exists a sufficiently small neighborhood
$\widetilde{W}\times \widetilde{\Omega}\ni (w_0, \sigma_0)$ such that for all
$(w, \sigma)\in \widetilde{W}\times \widetilde{\Omega}$,
$\InfGen_{w}-\sigma:
D^{k}(\InfGen)\to\LSolHk{k}(\Sigma)$
 is invertible. An
application of the analytic Fredholm theorem as in Theorem
\ref{linear:thm:meromorphic:fredholm-alt:Laplace-transformed-op} then shows
the meromorphy of $(\InfGen_w-\sigma)^{-1}$ in $\sigma$,
and we can conclude using the relation between the Laplace-transformed
operator and the infinitesimal generator in Lemma
\ref{linear:lemma:laplace:inverse-A}.

Assume for the sake of contradiction that there exists a sequence of
$w_j\to w_0$, $\sigma_j\to \sigma$ such that
$\InfGen_{w_j}-\sigma_j$ is not invertible for all
$j$. Then for each $j$, either the kernel or the cokernel of
$\InfGen_{w_j}-\sigma_j$ is nontrivial. Passing to a
subsequence, we can assume without loss of generality that
$\Ker(\InfGen_{w_j}-\sigma_j)$ is nontrivial for all $j$.
Now consider the sequence
${h}_j \in \LSolHk{k}(\Sigma)$ such that ${h}_j$
are normalized so that $\norm{{h}_j}_{\LSolHk{k}(\Sigma)}=1$, and
moreover,
\begin{equation*}
  (\InfGen_{w_j}-\sigma_j){h}_j = 0
\end{equation*}
for all $j$. Using \eqref{linear:eq:QNM-perturb:commuted-redshift}, we see
that then for all $j$,
\begin{equation}
  \label{linear:eq:QNM-perturb:gen:item1:aux1}
  1\lesssim \norm{{h}_j}_{\LSolHk{k_0}(\Sigma)}.
\end{equation}
Since we constructed ${h}_j$ as a bounded sequence in
$\LSolHk{k}(\Sigma)$, there exists a subsequence such that
${h}_j\rightharpoonup{h}_0$ weakly for some
${h}_0\in \LSolHk{k}(\Sigma)$. Now, using Rellich-Kondrachov, we
have in fact that ${h}_j\to {h}_0$ in $\LSolHk{k_0}(\Sigma)$. But
then \eqref{linear:eq:QNM-perturb:gen:item1:aux1} implies that there exists
some non-zero ${h}_0$ such that
\begin{equation*}
  (\InfGen_{w_0}-\sigma_0){h}_0 = 0,
\end{equation*}
and thus that $\Ker(\InfGen_{w_0}-\sigma_0) \neq \emptyset$, which
is a contradiction.

\paragraph{Proof of part \ref{linear:prop:QNM-perturb:gen:item2}}
We now move onto proving the second item. We first show that the map
\begin{equation*}
  I\ni(w,\sigma)\mapsto (\InfGen_w-\sigma)^{-1}\in \mathcal{L}_{weak}(\LSolHk{k}(\Sigma), \LSolHk{k}(\Sigma))
\end{equation*}
is continuous for all $k>k_0$. To do so, we see from Lemma
\ref{linear:lemma:laplace:inverse-A} that it is sufficient to show that the
map
\begin{equation*}
  I\ni(w,\sigma)\mapsto \widehat{\LinearOp}_w(\sigma)\in \mathcal{L}_{weak}(\InducedHk{k-1}(\Sigma), \InducedHk{k}(\Sigma))
\end{equation*}
is continuous for all $k>k_0$. 

Consider a sequence $f_j\in\LSolHk{k}(\Sigma)$, such that
\begin{equation*}
  \norm{f_j}_{\LSolHk{k}(\Sigma)} \le 1, 
\end{equation*}
and let $f_j\to f$ in $\LSolHk{k}$, $w_j\to w_0$,
$\sigma_j\to\sigma_0$. Assume for the sake of contradiction that
$\mathbf{h}_j$ defined by
\begin{equation*}
  \mathbf{h}_j = (\InfGen_{w_j} - \sigma_j)^{-1}f_j
\end{equation*}
is not a bounded sequence in $\InducedHk{k}(\Sigma)$. Defining
\begin{equation*}
  \psi_j = \frac{\mathbf{h}_j}{\norm{\mathbf{h}_j}_{\InducedHk{k}(\Sigma)}},
\end{equation*}
we see from equation (\ref{linear:eq:QNM-perturb:commuted-redshift}), that
\begin{equation*}
  1\lesssim \norm{\mathbf{h}_j}_{\LSolHk{k}(\Sigma)}^{-1} + \norm{\psi_j}_{\LSolHk{k_0}(\Sigma)}.
\end{equation*}
Using that $\mathbf{h}_j$ is not bounded, we find that for $j$
sufficiently large,
\begin{equation*}
  1\lesssim \norm{\psi_j}_{\LSolHk{k-1}(\Sigma)}. 
\end{equation*}
Since by construction, $\psi_j$ is a bounded sequence in
$\LSolHk{k}(\Sigma)$, there exists some
$\psi_0\neq 0\in \LSolHk{k}(\Sigma)$ and some subsequence
$\psi_{j\ell}$ of $\psi_j$ that converges weakly to $\psi_0$. We then
have that
\begin{equation*}
  \frac{f_{j\ell}}{\norm{\mathbf{h}_{j\ell}}_{\LSolHk{k}(\Sigma)}}
  = (\InfGen_{w_{j\ell}}-\sigma_{j\ell})\psi_{j\ell} \rightharpoonup (\InfGen_{w_0}-\sigma_0)\psi_0
\end{equation*}
weakly in $\LSolHk{k-1}(\Sigma)$. But then since $\mathbf{h}_{j\ell}$ is
not bounded, we have that
$\frac{f_{j\ell}}{\norm{\mathbf{h}_{j\ell}}_{\LSolHk{k}(\Sigma)}}$
converges to $0$ in $\LSolHk{k_0}(\Sigma)$, and thus
$(\InfGen_{w_0}-\sigma_0)\psi_0=0$. This is a
contradiction, and thus, $\curlyBrace{\mathbf{h}_j}$ must be bounded in
$\LSolHk{k}(\Sigma)$.
Thus, any subsequence $\mathbf{h}_{j\ell}$ converges weakly to some
$\mathbf{h}_0\in \LSolHk{k}(\Sigma)$. As a result,
\begin{equation*}
  f_{j\ell} =(\InfGen_{w_{j\ell}}-\sigma_{j\ell}) \mathbf{h}_{j\ell}
  \rightharpoonup (\InfGen_{w_0}-\sigma_{0})\mathbf{h}_0
\end{equation*}
weakly in $\LSolHk{k-1}(\Sigma)$ and
$(\InfGen_{w_{0}}-\sigma_0)\mathbf{h}_0 = f$. 
Since $\InfGen_{w_0}-\sigma_0$ is invertible, it
is in particular injective. Thus every subsequence $\mathbf{h}_{j\ell}$
must converge weakly to $\mathbf{h}_0\in \LSolHk{k}(\Sigma)$, where
$\mathbf{h}_0$ is independent of the chosen subsequence. As a result,
$\mathbf{h}_j$ itself converges weakly to $\mathbf{h}_0$ in
$\LSolHk{k}(\Sigma)$.

\paragraph{Proof of part \ref{linear:prop:QNM-perturb:gen:item3}}

For proving the remaining items, it is clear that it suffices to consider the case
where
$\Omega = \{\sigma\in\Omega: \abs*{\sigma - \sigma_0}<\delta \}$, and
the only $\LSolHk{k}$-quasinormal frequency of
$\LinearOp_{w_0}(\sigma_0)$ in $\overline{\Omega}$ is exactly
$\sigma_0$.

To prove the third item, it is sufficient to
consider the poles of $(\InfGen_w-\sigma)^{-1}$, and to
prove that the total rank, which can also be expressed as
\begin{equation*}
  d:=\sum_{\sigma\in \QNMk{k}(\LinearOp_w)\bigcap \Omega}\rank_{\zeta=\sigma}(\InfGen_w-\sigma)^{-1},
\end{equation*}
is constant for $w$ near $w_0$.

Recall from Theorem
\ref{linear:thm:meromorphic:fredholm-alt:Laplace-transformed-op} that
$(\InfGen_{w_0}-\sigma): D^k(\InfGen_{w_0})\to
\LSolHk{k}(\Sigma)$ is a zero-index operator. We can thus
construct the bases $\{u_j\}_{j=1}^{n}$, $\{f_j\}_{j=1}^{n}$ of the kernel
and the cokernel respectively, and define the operator
$R: D^k(\InfGen_w)\to \LSolHk{k-1}(\Sigma)$
\begin{equation*}
  Ru:=\sum_{j=1}^n\bangle*{u, u_j}_{\LTwo(\Sigma)}f_j.
\end{equation*}
Then, let us denote by $\mathcal{Y}_2 = \coKer
(\InfGen_{w_0}-\sigma_0)$, and let $\mathcal{Y}_1$ be
defined such that
\begin{equation*}
  \LSolHk{k}(\Sigma)= \mathcal{Y}_1\oplus \mathcal{Y}_2. 
\end{equation*}
Then,
\begin{equation*}
  P_w(\sigma) := \InfGen_w-\sigma + R: D^k(\InfGen_w)\to \LSolHk{k}(\Sigma)
\end{equation*}
is invertible for $w=w_0$, $\sigma=\sigma_0$.  Moreover,
$P_w(\sigma)$ is a zero-order perturbation of
$\InfGen_w-\sigma$, and thus equation
(\ref{linear:eq:QNM-perturb:commuted-redshift}) continues to hold with $P_w(\sigma)$ in place of
$\InfGen_w-\sigma$.  As a result, parts
\ref{linear:prop:QNM-perturb:gen:item1} and
\ref{linear:prop:QNM-perturb:gen:item2} of the current theorem hold for
$P_w(\sigma)^{-1}$ in place of $(\InfGen_w-\sigma)^{-1}$ as well.
As a result, by considering a sufficiently small parameter
space $W$, and a sufficiently small neighborhood $\Omega$ of
$\sigma_0$, we can assume that
$P_w(\sigma):D^k(\InfGen_w)\to
\LSolHk{k}(\Sigma)$ is invertible on $W\times \Omega$, with a
continuous inverse. 
Define
\begin{equation*}
  Q_w(\sigma) = 1 - RP_w(\sigma)^{-1}: \LSolHk{k}(\Sigma)\to\LSolHk{k}(\Sigma),
\end{equation*}
so that
\begin{equation*}
  \InfGen_w-\sigma = Q_w(\sigma)P_w(\sigma).
\end{equation*}
Then $\InfGen_w-\sigma$ is invertible if and only if
$Q_w(\sigma)$ is invertible. 
Using the decomposition $\LSolHk{k}(\Sigma) =
\mathcal{Y}_1 \oplus \mathcal{Y}_2$, we see that 
\begin{equation*}
  Q_w(\sigma) = 
  \begin{pmatrix}
    1 & 0\\
    Q_{w,1}(\sigma) & Q_{w,2}(\sigma)
  \end{pmatrix}
\end{equation*}
where
\begin{equation*}
  Q_{w,1}(\sigma ) = -RP_w(\sigma)^{-1}\vert_{\mathcal{Y}_1}\in \mathcal{L}(\mathcal{Y}_1, \mathcal{Y}_2)\qquad
  Q_{w,2}(\sigma) = 1 - RP_w(\sigma)^{-1}\vert_{\mathcal{Y}_2}\in \mathcal{L}(\mathcal{Y}_2, \mathcal{Y}_2).
\end{equation*}
Thus we see that the invertibility of $Q_w(\sigma)$ is equivalent
to the invertibility of $Q_{w,2}(\sigma)$, which is a family of
linear operators depending continuously on $w$ and holomorphically
on $\sigma$ acting on a fixed finite-dimensional subspace
$\mathcal{Y}_2$. 

Recall that by assumption, we have that
for any $w\in W$,
$\InfGen_w -\sigma$ is
an analytic meromorphic family, we have that
\begin{equation*}
  \text{rank}_{\zeta = \sigma}\,(\InfGen_w-\zeta)^{-1} 
  = \frac{1}{2\pi \ImagUnit} \Trace \oint_{\p\Omega} Q_{w,2}(\zeta)^{-1}\p_\zeta Q_{w,2}(\zeta)\,d\zeta,
\end{equation*}
which is integer-valued and continuous in $w$.

\paragraph{Proof of parts \ref{linear:prop:QNM-perturb:gen:item4} and \ref{linear:prop:QNM-perturb:gen:item5} }
The fourth and fifth items are proved by similar arguments, so we
only provide a detailed proof for the former.

Consider polynomials $p_1(\zeta), \cdots, p_{d}(\zeta)$ with
values in $C^\infty(\Sigma)$ such that
\begin{equation*}
  {h}_j(w_0):= \oint_{\p\Omega}e^{-\ImagUnit\zeta\tStar}(\InfGen_{w_0}- \sigma_0)^{-1}p_j(\zeta)\,d\zeta \in C^{\infty}(\Sigma)
\end{equation*}
span $\QNMk{k}(\LinearOp_{w_0}, \sigma_0)$. Using
\ref{linear:prop:QNM-perturb:gen:item3} of the proposition, we see that for
sufficiently small $w\in W$, $(\InfGen_w-\zeta)^{-1}$
exists for $\zeta\in \p\Omega$. Thus, the contour integral
\begin{equation*}
  {h}_j(w):= \oint_{\p\Omega}e^{-\ImagUnit\zeta\tStar}(\InfGen_{w}-\zeta)^{-1}p_j(\zeta)\,d\zeta \in C^{\infty}(\Sigma)
\end{equation*}
is well defined. 
Using part \ref{linear:prop:QNM-perturb:gen:item2} of the proposition,
${h}_j(w)$ depends continuously on $w$ in the topology of
$C^{\infty}(\Sigma)$. Thus, $\{{h}_j(w)\}_{j=1}^d$ is a
$d$-dimensional set of $C^\infty(\Sigma)$ functions for $w\in W$
sufficiently small.
Since $\QNMk{k}(\LinearOp_w, \Omega)$ is also $d$-dimensional
and ${h}_j(w)\in \QNMk{k}(\LinearOp_w, \Omega)$, we have that
actually
\begin{equation*}
  \QNMk{k}(\LinearOp_w, \Omega) = \Span\curlyBrace*{{h}_j(w): 1\le j\le d}.
\end{equation*}
The statement then follows from the map $W\times \Complex^d\ni (w,
(c_1,\cdots, c_d)) \to \sum c_j{h}_j(w)$. 

\section{Appendix to Section \ref{linear:sec:freq-analysis}}
\label{linear:appendix:freq-analysis}

\subsection{Proof of Lemma \ref{linear:lemma:trapping:SdS}}
\label{linear:appendix:lemma:trapping:SdS}

In what ensues, we will use regular coordinates on \SdS, where we
set $c_{b_0}=0$. This is equivalent to a coordinate transformation,
and because we work with either coordinate independent objects, like
principal symbols, or $c$-independent objects, this does not affect
our symbolic calculations.

Let $g_{b_0}$ be a \SdS{} metric. Then in the
Eddington-Finkelstein $(t_0, r, \theta, \varphi)$
coordinates,
\begin{equation*}
  \begin{split}
    g_{b_0} &= -\mu_{b_0}d t_0^2 \pm 2dt_0dr +r^2\UnitSphereMetric,\\
    G_{b_0} &= \mp 2\p_{t_0}\p_r + \mu_{b_0}\p_r^2 + r^{-2}\UnitSphereInvMetric
  \end{split}
\end{equation*}
where the $\pm$ corresponds to incoming and outgoing Eddington
Finkelstein respectively. Observe that in the incoming Eddington
Finkelstein coordinates, $g(\p_{t_0}, \p_r) = -2$ at the event
horizon, while in the outgoing Eddington-Finkelstein coordinates
$g(\p_{t_0}, \p_r)=2$ at the cosmological horizon.   

We now locate the trapped null geodesics in this coordinate system. In
anticipation of the more complicated nature of trapping in \KdS, we
analyze the trapped set of $g_{b_0}$ by considering the Hamiltonian
vectorfield $H_{\PrinSymb_{b_0}}$, where $\PrinSymb_{b} = G_{b}$ is
the principal symbol of $\ScalarWaveOp[g_{b}]$.

To begin, observe that in the
$(t_0, r, \theta, \varphi; \sigma, \xi, \FreqTheta, \FreqPhi)$
coordinates,
\begin{equation}
  \label{linear:eq:KdS-Trapping:SdS-Hp-def}
  H_{r^2\PrinSymb_{b_0}} = 2(\Delta_{b_0}\xi \pm r^2\sigma)\p_r
  - \left(\p_r\Delta_{b_0}\xi^2 \pm 4r\sigma\xi\right)\p_\xi - H_{|\FreqAngular|^2},
\end{equation}
where the sign notation is that $\pm (r-\rCrit) > 0 $. 


We then observe that $H_{r^2\PrinSymb_{b_0}}r = 0$ exactly when $\pm
\sigma = r^{-2}\Delta_{b_0}\xi$. Let us first consider the region
where $\Delta_{b_0} \le 0$. Then, for $\sigma \neq 0$, we have that if
$\Delta_{b_0}\le 0$, and in addition $H_{r^2\PrinSymb_{b_0}}r=0$, then
we necessarily have that
\begin{equation*}
  \Delta_{b_0}\xi\neq 0,
\end{equation*}
and furthermore, that
\begin{equation*}
  \PrinSymb_{b_0} < 0.
\end{equation*}
This rules out trapped null-bicharacteristics on the region where
$\Delta_{b_0}\le 0$, except exactly those null-bicharacteristics spanning
the horizons. Thus, we are left with considering only the interior of
the static region, where $\Delta_{b_0}>0$. In this region, if
$H_{r^2\PrinSymb_{b_0}}r=0$, we can calculate that
\begin{equation*}
  H^2_{r^2\PrinSymb_{b_0}}r
  = -2\Delta_{b_0}H_{r^2\PrinSymb_{b_0}}\xi
  = -2\Delta_{b_0}\xi^2\left(\p_r\Delta_{b_0} -4r^{-1}\Delta_{b_0}\right)
  = -2\Delta_{b_0}\xi^2r^4\p_r(r^{-4}\Delta_{b_0}).
\end{equation*}
Since $\sigma=0, H_{r^2\PrinSymb_{b_0}}r = 0$, and $\Delta_{b_0}=0$ would
imply $\PrinSymb_{b_0}> 0$, it suffices to consider only the case
where $\sigma\neq0$.

If $\sigma\neq 0$, and $H_{r^2\PrinSymb_{b_0}}r=0$, then $\xi\neq 0$.
We see then that if
$\sigma\neq0, \Delta_{b_0}>0$, and  $H_{r^2\PrinSymb_{b_0}}r=0$, then
$H^2_{r^2\PrinSymb_{b_0}}r = 0$ only if
\begin{equation*}
  \p_r(r^{-4}\Delta_{b_0}) = -2r^{-3}\left(1-\frac{3M}{r}\right) =0.
\end{equation*}
Since we are working under the assumption that $1-9\Lambda M^2 > 0$,
if $H_{r^2\PrinSymb_{b_0}}r = 0$, $\Delta_{b_0}>0$, and
$\PrinSymb_{b_0}=0$, then $H_{r^2\PrinSymb_{b_0}}^2r = 0$ only if
$r=3M$. Furthermore, using the explicit calculation of
$H_{r^2\PrinSymb_{b_0}}^2r$, we see that 
\begin{equation*}
  \Delta_{b_0}>0,\, \PrinSymb_{b_0}=0,\, \pm(r-3M) >0,\, H_{\PrinSymb_{b_0}}r =0\implies \pm H_{r^2\PrinSymb_{b_0}}^2r > 0.
\end{equation*}
\subsection{Proof of Lemma \ref{linear:lemma:trapping:KdS}}
\label{linear:appendix:lemma:trapping:KdS}

The approach will follow the same general outline the approach in
appendix \ref{linear:appendix:lemma:trapping:SdS} with \SdS. We again use
Eddington-Finkelstein coordinates since all our symbolic calculations
are at the principal level, and thus coordinate independent, and using
Eddington-Finkelstein significantly simplifies the calculations.  We
begin with the rescaled principal symbol:
\begin{equation*}
  \rho_b^2p_b =
  \Delta_b \xi^2
  \mp 2a(1+\lambda_b)\xi\FreqPhi
  \mp 2(1+\lambda_b)(r^2+a^2)\xi\sigma
  + \frac{(1+\lambda_b)^2}{\varkappa_b\sin^2\theta}(a(\sin^2\theta)\sigma+\FreqPhi)^2
  +\varkappa_b\FreqTheta^2. 
\end{equation*}
It is convenient to rewrite this as:
\begin{equation*}
  \rho_b^2p_b =
  \Delta_b\left( \xi \mp \frac{1+\lambda_b}{\Delta_b}\left(\left(r^2+a^2\right)\sigma
      + a\FreqPhi\right)\right)^2
  - \frac{(1+\lambda_b)^2}{\Delta_b}\left(\left(r^2+a^2\right)\sigma + a\FreqPhi\right)^2
  + \rho_b^2 \tilde{\PrinSymb}_b,
\end{equation*}
where
\begin{equation*}
  \rho_{b}^2\tilde{\PrinSymb}_b = \varkappa_b\FreqTheta^2 + \frac{(1+\lambda_b)^2}{\varkappa\sin^2\theta}(a(\sin^2\theta)\sigma + \FreqPhi^2). 
\end{equation*}
In this way, we see that
\begin{equation*}
  \Delta_b\left( \xi \mp \frac{1+\lambda_b}{\Delta_b}\left(\left(r^2+a^2\right)\sigma
      + a\FreqPhi\right)\right)^2
  - \frac{(1+\lambda_b)^2}{\Delta_b}\left(\left(r^2+a^2\right)\sigma + a\FreqPhi\right)^2
\end{equation*}
has coefficients dependent only on $r$, while the coefficients of
$\rho_b^2\tilde{\PrinSymb}_b$ depend only on $\theta$. 

An immediate observation is that $(r^2+a^2)\sigma+a\FreqPhi$ cannot
vanish along the characteristic set for $\Delta_b>0$, as if it did, then at the same
point, we must also necessarily have that
$\sigma=\xi=\FreqTheta=\FreqPhi=0$, which is not possible.

We now calculate the Hamiltonian vectorfield:
\begin{equation*}
  \begin{split}
    H_{\rho_b^2p_b}
    ={}&
    \left(\mp 2(1+\lambda_b)(r^2+a^2)\xi +
      \frac{2a^2(1+\lambda_b)^2\sin^2\theta}{\varkappa_b}\sigma -
      \frac{2a(1+\lambda_b)^2}{\varkappa_b}\FreqPhi\right)\p_{\tStar}\\
    & - (1+\lambda_b)^2\left(
      \p_\theta\varkappa_b\FreqTheta^2 +
      \p_\theta\frac{a\sin^2\theta}{\varkappa_b} \sigma^2
      +2a\p_\theta\varkappa^{-1}_b\sigma\FreqPhi +
      (1+\lambda_b)^2\p_\theta(\varkappa_b\sin^2\theta)^{-1}\FreqPhi^2 
    \right)\p_{\FreqTheta}\\
    & - 2\varkappa_b\FreqTheta\p_\theta
    -\left(\p_r\Delta_b\xi^2 \mp 4(1+\lambda_b)r\xi\sigma\right)\p_\xi
    + 2\left(\Delta_b\xi\mp a(1+\lambda_b)\FreqPhi \mp (1+\lambda_b)(r^2+a^2)\sigma\right)\p_r\\
    & + 2\left(\mp a(1+\lambda_b)\xi +
      \frac{(1+\lambda_b)^2}{\varkappa_b\sin^2\theta}\FreqPhi +
      \frac{a(1+\lambda_b)^2}{\varkappa_b}\sigma\right)\p_\phiStar .
  \end{split}
\end{equation*}
Equivalently, 
\begin{equation*}
  \begin{split}
    H_{\RescaledPrinSymb_b} ={}&
    -\left(\p_r\Delta_b\xi^2 \mp 4(1+\lambda_b)r\xi\sigma\right)\p_\xi
    \mp 2(1+\lambda_b)(r^2+a^2)\xi\p_{\tStar}\\
    & + 2 (\Delta_b\xi \mp a(1+\lambda_b)\FreqPhi \mp (1+\lambda_b)(r^2+a^2)\sigma)\p_r
    + H_{\rho_b^2\tilde{\PrinSymb}_b}.  
  \end{split}
\end{equation*}
Unlike in the \SdS{} case, we no longer have integrability of the
Hamiltonian flow directly from the conservation of $\PrinSymb$,
$\sigma$, and $\FreqAngular$. However,  we have that
\begin{equation*}
  H_{\PrinSymb_b}\PrinSymb_{b} = 0,\quad
  H_{\PrinSymb_{b}}\sigma = 0,\quad
  H_{\PrinSymb_{b}}\FreqPhi =0,\quad
  H_{\PrinSymb_b}\tilde{\PrinSymb}_b =0. 
\end{equation*}
These are now the conserved quantities of motion along the Hamiltonians
and show the integrability of the Hamiltonian flow. The fact that
$\tilde{\PrinSymb}_b=0$ comes from the so-called hidden symmetries of
the \KdS{} family, which were first observed in the Kerr family by
Carter \cite{carter_global_1968}. 

We can thus calculate
\begin{equation*}
  H_{\rho_b^2p_b}r = 2(\Delta_b\xi \mp (1+\lambda_b)((r^2+a^2)\sigma + a\FreqPhi ).
\end{equation*}
\begin{remark}
  We can show that on $\Delta_b< 0$, $H_{\rho_b^2p_b}r$ cannot
  vanish on the characteristic set. This follows from the observation
  that if both $\Delta_b< 0$, and $H_{\rho_b^2p_b}r=0$, then, 
  \begin{equation*}
    \rho_b^2p_b = -\Delta_b\xi^2 + \varkappa_b\FreqTheta^2 +
    \frac{(1+\lambda_b)^2}{\varkappa_b\sin^2\theta}(a(\sin^2\theta)\sigma
    + \FreqPhi)^2 \ge 0, 
  \end{equation*}
  with equality only if  $\sigma=\xi=\FreqTheta=\FreqPhi=0$.
  If instead, $\Delta_b=0$, then the only trapped
  null-bicharacteristic is
  \begin{equation*}
    \curlyBrace*{\Delta_b=0, \sigma=\FreqTheta= \FreqPhi=0},
  \end{equation*}
  which are exactly the null geodesics spanning the event horizon and
  the cosmological horizon. 
\end{remark}
Applying $H_{\RescaledPrinSymb_b}r$ again, we find that
\begin{equation*}
  H^2_{\RescaledPrinSymb_b}r
  = 2\Delta_bH_{\RescaledPrinSymb_b}\xi + 2 \xi \p_r\Delta_b H_{\RescaledPrinSymb_b} r
  \mp 4(1+\lambda_b)\sigma r H_{\RescaledPrinSymb_b}r. 
\end{equation*}
We can also calculate that
\begin{equation*}
  \begin{split}
    - H_{\RescaledPrinSymb_b}\xi
    ={}& \p_r\Delta_b\left( \xi \mp \frac{1+\lambda_b}{\Delta_b}\left(\left(r^2+a^2\right)\sigma
        + a\FreqPhi\right)\right)^2
    -\p_r\left(
      \frac{(1+\lambda_b)^2}{\Delta_b}\left(\left(r^2+a^2\right)\sigma
        + a\FreqPhi\right)^2
    \right)
    \\
    &\mp 2\Delta_b\left( \xi \mp \frac{1+\lambda_b}{\Delta_b}\left(\left(r^2+a^2\right)\sigma
        + a\FreqPhi\right)\right)
    \p_r\left( \frac{1+\lambda_b}{\Delta_b}\left(\left(r^2+a^2\right)\sigma
        + a\FreqPhi\right)\right)\\
    ={}& \p_r\Delta_b\left(\frac{H_{\RescaledPrinSymb_b}r}{2\Delta_b}\right)^2
    \mp \p_r\left( \frac{1+\lambda_b}{\Delta_b}\left(\left(r^2+a^2\right)\sigma
        + a\FreqPhi\right)\right) H_{\RescaledPrinSymb_b}r\\
    &-\p_r\left( \frac{(1+\lambda_b)^2}{\Delta_b}\left(\left(r^2+a^2\right)\sigma
        + a\FreqPhi\right)^2\right).
  \end{split}
\end{equation*}
We thus have that 
\begin{equation}
  \label{linear:eq:Trapping-KdS:H2pr-at-Trapping}
  H^2_{\RescaledPrinSymb_b}r
  =2\Delta_b(1+\lambda_b)^2\p_r\left(
    \Delta_b^{-1}\left(\left(r^2+a^2\right)\sigma +a\FreqPhi\right)^2
  \right).
\end{equation}
In order to locate trapped null bicharacteristics, we will need to
analyze the sign of $H_{\RescaledPrinSymb_b}^2r$. To this end, we
consider $\mathcal{F}= \Delta_b^{-1}(a\FreqPhi +
(r^2+a^2)\sigma)^2$. We will show that if $\Delta_b>0$, the critical
point $\rTrapping_b$ of $\mathcal{F}$ is unique in
$(r_{\EventHorizonFuture}, r_{\CosmologicalHorizonFuture})$ for fixed
$\FreqPhi$; and depends smoothly on $\FreqPhi$. Moreover,
$\frac{\p\mathcal{F}}{\p r}>0$ for $r>\rTrapping_b$, and
$\frac{\p\mathcal{F}}{\p r}<0$ for $r<\rTrapping_b$. This will show
then that the only trapped null bicharacteristics are
\begin{equation*}
  \TrappedSet_b = \curlyBrace*{\PrinSymb_b=0, H_{\PrinSymb_b}r = 0, r = \rTrapping_b}.
\end{equation*}
The first step will be to calculate
\begin{equation}
  \label{linear:eq:KdS-Trapping:aux-functions-def}
  \p_r\mathcal{F} = -\left( a\FreqPhi + \left(r^2+a^2\right)\sigma \right)\Delta_b^{-2}f,\quad
  f = \left( a\FreqPhi + \left(r^2+a^2\right)\sigma \right)\p_r\Delta_b - 4r\Delta_b\sigma.
\end{equation}
From this point forward, all calculations will be under the assumption
that we are working on the characteristic set on the exterior region
without boundary (so in particular, on the region where
$\PrinSymb_b=0$ and $\Delta_b>0$) since we are interested in locating
trapped null bicharacteristics and their properties, and along null
bicharacteristics, $\PrinSymb_b$ is necessarily zero. Recall that we
showed earlier that $a\FreqPhi + (r^2+a^2)\sigma$ is non-vanishing on
the characteristic set in the region where $\Delta_b>0$. Consequently,
we have that
\begin{equation*}
  \p_r\mathcal{F} = 0 \quad\text{if and only if}\quad f=0,
\end{equation*}
and 
\begin{equation*}
  f=0\implies \p_r\Delta_b\neq0.
\end{equation*}
If now in addition, $\p_r\mathcal{F}=0$, then
\begin{equation*}
  (r^2+a^2)\sigma-a\FreqPhi = \frac{4r\Delta_b\sigma}{\p_r\Delta_b}.
\end{equation*}
Thus we can calculate that
\begin{equation*}
  \p_{rr}\mathcal{F}
  = -\left(a\FreqPhi+\left(r^2+a^2\right)\sigma\right)\Delta_b^{-2}\p_r f
  = -\frac{4r\sigma}{\Delta_b(\p_r\Delta_b)^2}\p_r\Delta_b\p_rf. 
\end{equation*}
To specify the trapped set, We will show that for $g_b$ a
slowly-rotating \KdS{} metric, if $\Delta_b>0$, and
$\p_r\mathcal{F}= 0$, then $-\sigma\p_r\Delta_b\p_rf>0$.  To this end,
first calculate that
\begin{equation*}
  \p_rf = \left(
    a\FreqPhi + \left(r^2+a^2\right)\sigma
  \right)\p_{rr}\Delta_b -4 \Delta_b\sigma - 2r\sigma \p_r\Delta_b,
\end{equation*}
where
\begin{equation}
  \label{linear:eq:KdS-trapping:unstable-trapping-condition}
  \begin{split}
    \p_r\Delta_b\p_r f ={}& 4r\Delta_b\sigma\p_{rr}\Delta_b -2\sigma
    r(\p_r \Delta_b)^2 -
    4\sigma\Delta_b\p_r\Delta_b\\
    ={}&2\sigma\left(-\frac{1}{r}\left(r\p_r\Delta_b -
        4\Delta_b\right)^2-\frac{2\Delta_b}{r}\left(6Mr
        -8a^2\right)\right).
  \end{split}
\end{equation}
Then, observe that $\p_{rr}\Delta_b<0$ on the entire range where
$\Delta_b>0$. As a result from the first line of
\eqref{linear:eq:KdS-trapping:unstable-trapping-condition}
\begin{equation*}
  \p_r\Delta_b\ge 0 \implies -\sigma\p_r\Delta_b\p_rf > 0.
\end{equation*}
On the other hand, we also see from equation
\eqref{linear:eq:KdS-trapping:unstable-trapping-condition} that
\begin{equation*}
  6Mr - 8a^2 > 0 \implies -\sigma\p_r\Delta_b\p_rf > 0.
\end{equation*}
We can then evaluate that
\begin{equation*}
  \p_r\Delta_{(M,M)}\left({\frac{4}{3}M}\right) = \frac{158}{729}M + \frac{328}{729}M(1-9\Lambda M^2)>0. 
\end{equation*}
Thus, we see that for $a<M$,
$\p_r\Delta_{(M, a)}\left(\frac{4}{3}M\right) > 0$. As a result, for
$a<M$,
\begin{equation*}
  \p_r\Delta_{(M, a)} < 0 \implies 6Mr - 8a^2 > 0,
\end{equation*}
and we have that $-\sigma\p_r\Delta_b\p_rf > 0$ for
$a<M$.
Moreover, if $\p_r\mathcal{F}=0$, then
\begin{equation*}
  -\sigma \p_r\Delta_b \p_rf > 0.
\end{equation*}
Thus, if $\p_r\mathcal{F}=0$, then
\begin{equation*}
  \p_{rr}\mathcal{F} =
  -\left(a\FreqPhi+\left(r^2+a^2\right)\sigma\right)\Delta_b^{-2}\p_rf
  =
  -\frac{4r}{\Delta_b(\p_r\Delta_b)^2}\sigma\p_r\Delta_b\p_rf>0. 
\end{equation*}
This implies that the critical points of $\mathcal{F}$ are non-degenerate
minimum. In particular, as
\begin{equation*}
  \lim_{\Delta_b\to0}\mathcal{F} = \infty, 
\end{equation*}
$\rTrapping_b$, the critical point of $\mathcal{F}$,
\begin{enumerate}
\item exists and is unique in
  $(r_{\EventHorizonFuture}, r_{\CosmologicalHorizonFuture})$ for fixed
  $\FreqPhi$; 
\item depends smoothly on $\FreqPhi$;
\item lies in an $O(a)$ neighborhood of $r=3M$;
\end{enumerate}
and $\p_r\mathcal{F}$ has the same sign as $r-\rTrapping_b$, implying that
\begin{equation*}
  \Delta_b>0,
  \quad \PrinSymb_b=0
  \quad \pm (r-\rTrapping_b)>0,
  \quad H_{\PrinSymb_b} r=0,
  \implies \pm H^2_{\PrinSymb_b} r > 0.
\end{equation*}


\begin{remark}
  We remark that our above analysis does \emph{not} cover the full
  sub-extremal range of black hole parameters for \KdS{} except in the
  limiting $\Lambda=0$ case of the Kerr family.
  This is due to the fact that except in the $\Lambda=0$ case, $a=M$
  is not the extremal \KdS{} black hole, but we do not pursue this
  line of reasoning further here.
\end{remark}

\subsection{Proof of Lemma
  \ref{linear:lemma:subprincipal-op-def:sub-p-basic-props}}
\label{linear:appendix:lemma:subprincipal-op-def:sub-p-basic-props}

Let $e(x) = \curlyBrace*{e_i(x)}_{i=1}^N$ be a local frame for
$\mathcal{E}$. We can compute that
\begin{align*}
  &S_{sub}(P)\left(
  \sum_{jk}q_{jk}(x,\zeta)u_k(x,\zeta)e_j(x)
  \right)\\
  ={}& \sum_{j\ell}\left(\sum_k \sigma_{sub}(P)_{jk}q_{k\ell} - \ImagUnit H_p(q_{j\ell})\right)u_\ell e_j
     - \ImagUnit q_{j\ell}H_p(u_\ell)e_j
       - \ImagUnit q_{j\ell}u_{\ell}e_j H_p,\\
  &q S_{sub}(P)\left(
  \sum_{\ell}u_{\ell}(x,\zeta)e_\ell(x)
  \right)\\
  ={}& \sum_{j\ell}\left(
       \sum_k q_{jk}\sigma_{sub}(P)_{k\ell}
       \right) u_{\ell}e_j
       - \ImagUnit q_{j\ell}H_p(u_{\ell})e_j  -\ImagUnit q_{j\ell}u_\ell e_j H_p.
\end{align*}
As a result, we have that
\begin{align*}
  &\left(S_{sub}(P)q - q S_{sub}(P)\right)\sum_{\ell}u_{\ell}(x,\zeta)e_\ell(x)\\
  ={}& \sum_{j\ell}\left(\sum_k \left(\sigma_{sub}(P)_{jk}q_{k\ell} - q_{jk}\sigma_{sub}(P)_{k\ell} \right) - \ImagUnit H_p(q_{j\ell})\right)u_\ell e_j.
\end{align*}

\subsection{Proof of Lemma
  \ref{linear:lemma:subprincipal-op-def:Laplacian-sub-p-gen-form}}
\label{linear:appendix:lemma:subprincipal-op-def:Laplacian-sub-p-gen-form}

Both sides of \eqref{linear:eq:subprincipal-op-def:Laplacian-sub-p-gen-form}
are invariantly defined. As a result, it suffices to prove the
equality in some local coordinate system. We first consider the
left-hand side.  Fix an arbitrary point $x_0\in \mathcal{M}$ and
introduce normal coordinates $\curlyBrace*{y_i}_{i=1}^{N}$ such that the
associated Christoffel symbols vanish at $x_0$. Then
\begin{equation}
  \label{linear:eq:subprincipal-op-def:Laplacian-sub-p-gen-form:LHS}
  S_{sub}(\Laplace^{(k)})(x_0, \zeta) = -\ImagUnit H_{\abs*{\zeta}_g^2} = -2\ImagUnit g^{jk}\zeta_k\p_{y^j}. 
\end{equation}

We now evaluate the right-hand side of
\eqref{linear:eq:subprincipal-op-def:Laplacian-sub-p-gen-form}. For any
multi-index $I$, we denote
\begin{equation*}
  dx^I :=  dx^{i_1}\otimes \cdots\otimes dx^{i_k}.
\end{equation*}
Observe then that sections of $\pi^*T_k\mathcal{M}$ are of the form
$U_{I}(x, \zeta)dx^I$, and pullbacks of sections (under $\pi$) of
$T_k\mathcal{M}$ are of the form $u_I(x)dx^I$. By definition, the
pullback connection $\nabla^{\pi^*T_k\mathcal{M}}$ on pulled back
sections and extended to sections of the pullback bundle using the
Leibniz rule is given by
\begin{equation*}
  \nabla^{\pi^*T_k\mathcal{M}}_{\p_{x^j}}(u_i(x)dx^I) = \nabla^{T_k\mathcal{M}}_{\p_{x^j}}(u_i(x)dx^I),\qquad
  \nabla^{\pi^*T_k\mathcal{M}}_{\p_{\zeta^j}}(u_i(x)dx^I) = 0. 
\end{equation*}
As a result, we have that
\begin{align*}
  \nabla^{\pi^*T_k\mathcal{M}}_{\p_{x^j}}(u_I(x,\zeta)dx^I)
  &= \nabla^{T_k\mathcal{M}}_{\p_{x^j}}(u_i(\cdot, \zeta)dx^I)(x),\\
  \nabla^{\pi^*T_k\mathcal{M}}_{\p_{\zeta_k}}(u_I(x,\zeta)dx^I)
  &= \p_{\zeta_k}u_i(x, \zeta)dx^I.
\end{align*}
Thus in normal coordinates at $x_0\in \mathcal{M}$, we simply have
that
\begin{equation*}
  \nabla^{\pi^*T_k\mathcal{M}}_{\p_{x^j}} = \p_{x^j},\qquad
  \nabla^{\pi^*T_k\mathcal{M}}_{\p_{\zeta^j}} = \p_{\zeta^j}.
\end{equation*}
As a result,
\begin{equation}
  \label{linear:eq:subprincipal-op-def:Laplacian-sub-p-gen-form:RHS}
  \nabla^{\pi^*T_k\mathcal{M}}_{H_{\abs*{\zeta}_g^2}} = 2g^{jk}\zeta_k\p_{x^j}
\end{equation}
at $x_0$. Then we conclude from the equality between
\eqref{linear:eq:subprincipal-op-def:Laplacian-sub-p-gen-form:LHS} and
\eqref{linear:eq:subprincipal-op-def:Laplacian-sub-p-gen-form:RHS}.

\subsection{Proof of Lemma
  \ref{linear:lemma:subprincipal-op-def:Laplace-on-sphere-computation}}
\label{linear:appendix:lemma:subprincipal-op-def:Laplace-on-sphere-computation}

Fix a point $x\in \Sphere^2$ and consider geodesic normal coordinates
$y^1$ and $y^2$ such that the Christoffel symbols vanish at
$x$. Denote the dual variables on the fibers of $T^*\Sphere^2$ by
$\eta_1$ and $\eta_2$. Then we can calculate that in terms of the
geodesic normal coordinates and their dual variables, 
\begin{equation*}
  \evalAt*{H_{\abs*{\eta}^2}}_x = 2\evalAt*{\UnitSphereMetric^{ij}\eta_i\p_{y^j}}_x.
\end{equation*}
To see that
$\nabla^{\pi^*_{\Sphere^2}T^*\Sphere^2}_{H_{\abs*{\eta}^2}}$ preserves
sections of $E$, observe that  
\begin{equation*}
  \evalAt*{\nabla^{\pi^*_{\Sphere^2}T^*\Sphere^2}_{H_{\abs*{\eta}^2}}(\eta_k dy^k)}_x
  = 2\evalAt*{\UnitSphereMetric^{ij}\eta_i\eta_k\p_{y^j}dy^k}_x
  =0. 
\end{equation*}
On the other hand, to see that
$\nabla^{\pi^*_{\Sphere^2}T^*\Sphere^2}_{H_{\abs*{\eta}^2}}$ also
preserves sections of $F$, observe that for sections $\phi$, $\psi$ of
$\pi^*T^*\Sphere^2\to T^*\Sphere^2$, we have that
\begin{equation*}
  H_{\abs*{\eta}^2}(\UnitSphereInvMetric(\phi, \psi))
  = \UnitSphereInvMetric\left(\nabla^{\pi^*_{\Sphere^2}T^*\Sphere^2}_{H_{\abs*{\eta}^2}} \phi, \psi\right)
  + \UnitSphereInvMetric\left( \phi, \nabla^{\pi^*_{\Sphere^2}T^*\Sphere^2}_{H_{\abs*{\eta}^2}} \psi\right). 
\end{equation*}
Letting $\phi = \eta$, it is clear that if $\psi\cdot \eta = 0$, then
$\nabla^{\pi^*_{\Sphere^2}T^*\Sphere^2}_{H_{\abs*{\eta}^2}}\psi\cdot
\eta=0$, which proves that
$\nabla^{\pi^*_{\Sphere^2}T^*\Sphere^2}_{H_{\abs*{\eta}^2}}$ also
preserves the sections of $F$.

\subsection{Proof of Lemma \ref{linear:lemma:subprincipal-symbol-control}}
\label{linear:appendix:lemma:subprincipal-symbol-control}

We include the following computations for the sake of
completeness. The computations are based on the
similar proof carried out in Section 9 of \cite{hintz_global_2018}.

Let $(t,r,\omega)$ denote the Boyer-Lindquist
coordinates $(t,r)$ with $\omega$ coordinate on the sphere to be
specified.  We can then compute
that the only non-vanishing Christoffel symbols are:
\begin{align*}
  \tensor{\Gamma}{^t_{it}} &= \frac{1}{2}\mu_{b_0}^{-1}\partial_i\mu_{b_0} ,\\
  \tensor{\Gamma}{^r_{tt}} &= \frac{1}{2}\mu_{b_0}\partial_r\mu_{b_0},\\ 
  \tensor{\Gamma}{^r_{rr}} &= -\frac{1}{2}\mu_{b_0}^{-1}\partial_r\mu_{b_0},\\
  \tensor{\Gamma}{^r_{AB}} &= -r\mu_{b_0} \UnitSphereMetric_{AB},\\
  \tensor{\Gamma}{^A_{Br}} &= r^{-1}\tensor{\delta}{^A_B},\\
  \tensor{\Gamma}{^A_{BC}}&=\tensor[^{\UnitSphere^2}]{\Gamma}{^A_{BC}}.
\end{align*}
Recall that Greek letters are used to indicate spacetime indices,
lower-case Latin letters are used to indicate spatial indices, and
upper-case Latin letters are used to indicate spherical indices.  We
first calculate the subprincipal operator of the vector wave
operator. 

Let us denote by $\SubPOp_{b_0}^{(1)}$ the
subprincipal operator associated with the wave operator acting on one
forms $\VectorWaveOp[g_{b_0}]$. Then, by explicit computations in
local coordinates, we see that
\begin{equation*}
  \begin{split}
    \SubPOp_{b_0}^{(1)} {h}_t &= \mu_{b_0}^{-2}\p_r\mu_{b_0} \p_r {h}_t + \p_r\mu_{b_0} \p_t {h}_r,\\
    \SubPOp_{b_0}^{(1)} {h}_r &= -\mu_{b_0}^{-2}\p_r\mu_{b_0} \p_t {h}_t + \p_r \mu_{b_0}
    \p_r {h}_r
    - 2r^{-3}\UnitSphereMetric^{AB}\left(\p_B {h}_A + \p_A {h}_B\right),\\
    \SubPOp_{b_0}^{(1)} {h}_C &= 2r^{-1}\mu_{b_0}\tensor[]{\UnitSphereMetric}{^{B}_C}
                                \p_B {h}_r + \SubPOp[\LaplaceAngular^{(1)}]h_C,
  \end{split}
\end{equation*}
where $\LaplaceAngular^{(1)} := \slashed{\nabla}\cdot
\slashed{\nabla}$ denotes the angular Laplacian acting on one-forms,
and $\SubPOp[\LaplaceAngular^{(1)}]$ its subprincipal operator. 
To simplify our calculations at the trapped set, we define:
\begin{equation*}
  e_0 := \sqrt{\mu_{b_0}}^{-1}\p_t,\quad
  e^0:= \sqrt{\mu_{b_0}}dt,\quad
  e_1 = \sqrt{\mu_{b_0}}\p_r,\quad
  e^1:=\sqrt{\mu_{b_0}}^{-1}dr.
\end{equation*}
Then, let us decompose
\begin{equation}
  \label{linear:eq:SdS-decomp}
  \begin{split}
    T^*\mathcal{M} &= \bangle{e^0}\oplus \bangle{e^1}\oplus T^*\UnitSphere^2,\\
    S^2T^*\mathcal{M} &= \bangle{e^0e^0}
    \oplus\left(\bangle{2e^0e^1}
      \oplus \bangle{2e^0\omega} \right)
    \oplus \left(\bangle{e^1e^1}
      \oplus\bangle{2e^1\omega}\
      \oplus S^2T^*\UnitSphere^2 \right),
    \quad \omega\in T^*\UnitSphere^2.
  \end{split}
\end{equation}
Now recall that 
\begin{equation*}
  \TrappedSet_{b_0} = \curlyBrace*{r=3M, \xi =0, \sigma^2 = \mu_{b_0}r^{-2}|\FreqAngular|^2}.
\end{equation*}
We can then write the invariant subprincipal symbol of the vector-valued-wave
operator in \SdS{} using the decomposition in (\ref{linear:eq:SdS-decomp}),
\begin{align*}
  \left.S_{sub}(\VectorWaveOp[g_{b_0}])\right\vert_{\TrappedSet_{b_0}}
  ={}&
       -2\mu_{b_0}^{-1}\sigma D_{t}
       + \ImagUnit r^{-2}
       \begin{pmatrix}
         H_{\abs*{\eta}^2}& 0 & 0 \\
         0 & H_{\abs*{\eta}^2} & 0 \\
         0 & 0 & \nabla^{\pi^*_{\Sphere^2}T^*\Sphere^2}_{H_{\abs*{\eta}^2}}
       \end{pmatrix}\\
     & + \ImagUnit
       \begin{pmatrix}
         0 & -2r^{-1}\sigma & 0\\
         -2r^{-1}\sigma&0& -2\sqrt{\mu_{b_0}}r^{-3}\InteriorProd_\FreqAngular\\
         0& 2r^{-1}\sqrt{\mu_{b_0}}\FreqAngular & 0
       \end{pmatrix},
\end{align*}
where we use $\InteriorProd_\FreqAngular$ to denote the interior
product with respect to $\FreqAngular$.
In a similar vein, we can compute using local coordinates that:
\begin{equation*}
  \begin{split}
    \SubPOp_{b_0} {h}_{tt} ={}&
                                -2\p_r\mu_{b_0} \p_r {h}_{tt}
                                + 2 \p_r\mu_{b_0} \p_t {h}_{tr},
    \\
    \SubPOp_{b_0} {h}_{tr} ={}&
                                - 2 r^{-3}\UnitSphereInvMetric^{AB}\p_A {h}_{Bt} +
                                \p_r\mu_{b_0} \p_t {h}_{rr}
                                + \frac{\p_r\mu_{b_0}}{\mu_{b_0}^{2}}\p_t {h}_{tt},
    \\
    \SubPOp_{b_0} {h}_{tC} ={}&
                                \left(- 2r^{-1}\mu_{b_0} + \p_r\mu_{b_0}\right)\p_r {h}_{tC}
                                + \p_r\mu_{b_0} \p_t {h}_{Cr}
                                + 2r^{-1}\mu_{b_0} \p_C {h}_{tr}
                                + r^{-2}\SubPOp[\LaplaceAngular^{(1)}]h_{tC}
                                ,
    \\
    \SubPOp_{b_0} {h}_{rr} ={}&
                                2\p_r\mu_{b_0} \p_r {h}_{rr}
                                - 4r^{-3}\UnitSphereInvMetric^{AB}\p_A {h}_{rB}
                                + \frac{2\p_r\mu_{b_0}}{\mu_{b_0}^2}\p_t {h}_{tr},
    \\
    \SubPOp_{b_0} {h}_{rC} ={}&
                                \left(-2r^{-1}\mu_{b_0} + \p_r\mu_{b_0} \right)\p_r {h}_{rC}
                                + 2r^{-1}\mu_{b_0} \p_C {h}_{rr}
                                - 2r^{-3}\UnitSphereInvMetric^{AB}\p_A {h}_{BC}\\
                              &+ \frac{\p_r\mu_{b_0}}{\mu_{b_0}^2}\p_t {h}_{tC}
                                + r^{-2}\SubPOp[\LaplaceAngular^{(1)}]h_{rC},
    \\
    \SubPOp_{b_0} {h}_{CD} ={}&
                                -4r^{-1}\mu_{b_0} \p_r {h}_{CD}
                                + 2r^{-1}\mu_{b_0}\p_C {h}_{Dr}
                                + r^{-2}\SubPOp[\LaplaceAngular^{(2)}]h_{CD},
  \end{split}
\end{equation*}
where $\SubPOp[\LaplaceAngular^{(2)}]h_{CD}$ denotes the
subprincipal symbol of the angular Laplacian acting on two-tensors on
the sphere.

Now, using the splitting in \eqref{linear:eq:SdS-decomp} we can write that

\begin{equation*}
  \evalAt*{S_{sub}(\TensorWaveOp[g_{b_0}])}_{\TrappedSet_{b_0}}
  = -2\mu_{b_0}D_t + \ImagUnit r^{-2}
  \begin{pmatrix}
    H_{\abs*{\eta}^2} & 0 & 0 & 0 & 0 & 0 \\
    0 & H_{\abs*{\eta}^2} & 0 & 0 & 0 & 0 \\
    0 & 0 & \nabla^{\pi^*_{\Sphere^2}T^*\Sphere^2}_{H_{\abs*{\eta}^2}} & 0 & 0 & 0 \\
    0  & 0 & 0 & H_{\abs*{\eta}^2} & 0 & 0 \\
    0 & 0 & 0 & 0  & \nabla^{\pi^*_{\Sphere^2}T^*\Sphere^2}_{H_{\abs*{\eta}^2}} & 0 \\
    0 & 0 & 0 & 0 & 0   & \nabla^{\pi^*_{\Sphere^2}T_2\Sphere^2}_{H_{\abs*{\eta}^2}}
  \end{pmatrix}
  + \SubPSym_{b_0, \TrappedSet_{b_0}},
\end{equation*}
where 
\begin{equation*}
  \SubPSym_{b_0,\TrappedSet_{b_0}} =
  \begin{pmatrix}[1.5]
    0 & -4r^{-1}\sigma & 0 & 0 & 0 & 0\\
    -2r^{-1}\sigma & 0 & -2\sqrt{\mu_{b_0}}r^{-3}\InteriorProd_\FreqAngular & -2r^{-1}\sigma & 0 & 0\\
    0 & 2\sqrt{\mu_{b_0}}r^{-1} \FreqAngular & 0 & 0 & -2r^{-1}\sigma & 0 \\
    0 & -4 r^{-1}\sigma & 0 & 0 & -4\sqrt{\mu_{b_0}}r^{-3}\InteriorProd_{\FreqAngular}&0\\
    0 & 0 & -2r^{-1}\sigma & 2\sqrt{\mu_{b_0}}r^{-1}\FreqAngular & 0 & -2r^{-3}\InteriorProd_{\FreqAngular}\\
    0 & 0 & 0 & 0 & 2\sqrt{\mu_{b_0}}r^{-1}\FreqAngular & 0 
  \end{pmatrix}.
\end{equation*}

For the sake of simplifying calculations, we split
\begin{equation}
  \label{linear:eq:two-sphere-decomp}
  T^*\UnitSphere^2 = \bangle{r \abs*{\FreqAngular}^{-1}\FreqAngular}\oplus \FreqAngular^\perp,
\end{equation}
which induces the splitting
\begin{equation*}
  S^2T^*\UnitSphere^2 = \bangle{(r \abs*{\FreqAngular}^{-1}\FreqAngular)^2}\oplus 2r \abs*{\FreqAngular}^{-1}\FreqAngular\cdot \FreqAngular^\perp\oplus S^2\FreqAngular^\perp.
\end{equation*}
We thus can further split the splitting (\ref{linear:eq:SdS-decomp}) to:
\begin{equation}
  \label{linear:eq:SdS-decomp-refined}
  \begin{split}
    S^2T^*M ={}
    & \bangle{e^0e^0}\oplus\bangle{2e^0e^1}\oplus (\bangle{2e^0r\abs*{\FreqAngular}^{-1}\FreqAngular}\oplus 2e^0\cdot \FreqAngular^\perp)\\
    &\oplus \bangle{e^1e^1}\oplus (\bangle{2e^1r\abs*{\FreqAngular}^{-1}\eta}\oplus 2e^1\cdot\FreqAngular^\perp)\\
    &\oplus (\bangle{(r\abs*{\FreqAngular}^{-1}\FreqAngular)^2}\oplus 2r\abs*{\FreqAngular}^{-1}\FreqAngular\cdot \FreqAngular^{\perp}\oplus S^2\FreqAngular^\perp).
  \end{split}
\end{equation}
Using that on $\TrappedSet_{b_0}$, $r^2\mu_{b_0}^{-1}
\sigma^2 = |\FreqAngular|^2$, we can compute that in the decomposition
(\ref{linear:eq:two-sphere-decomp}), $\FreqAngular:\Real\to
T^*\UnitSphere^2$, $\InteriorProd_\FreqAngular:
T^*\UnitSphere^2\to\Real$ are given by:
\begin{equation*}
  \FreqAngular =
  \begin{pmatrix}[1.5]
    r^{-1}\abs*{\eta}\\
    0
  \end{pmatrix},\quad
  \InteriorProd_{\FreqAngular} =
  \begin{pmatrix}
    r\abs*{\eta} & 0 
  \end{pmatrix}.
\end{equation*}
Thus, $\FreqAngular: T^*\UnitSphere^2\to S^2 T^*\UnitSphere^2$,
$\InteriorProd_\FreqAngular: S^2T^*\UnitSphere^2\to T^*\UnitSphere^2$
are given by
\begin{equation*}
  \FreqAngular =
  \begin{pmatrix}[1.5]
    r^{-1}\abs*{\eta} & 0 \\
    0 & \frac{1}{2}r^{-1}\abs*{\eta}\\
    0 & 0
  \end{pmatrix}
  ,\quad \InteriorProd_\FreqAngular =
  \begin{pmatrix}[1.5]
    r\abs*{\eta} & 0 & 0\\
    0 & r\abs*{\eta} & 0
  \end{pmatrix}.
\end{equation*}
From Lemma~\ref{linear:lemma:subprincipal-op-def:Laplace-on-sphere-computation}, we
have that the only non-diagonal component of
$S_{sub}(\TensorWaveOp[g_{b_0}])$ is precisely $\SubPSym_{g_{b_0},
  \TrappedSet_{b_0}}$. Thus, it suffices to find some
$\PseudoSubPFixer$ such that
\begin{equation*}
  \PseudoSubPFixer \SubPSym_{g_{b_0},\TrappedSet_{b_0}} \PseudoSubPFixer^{-} < \varepsilon_{\TrappedSet}.
\end{equation*}

To this end, observe that in the splitting of
(\ref{linear:eq:SdS-decomp-refined}),
\begin{equation*}
  \SubPSym_{b_0, \TrappedSet_{b_0}} = \ImagUnit r^{-2}\sqrt{\mu_{b_0}}\abs*{\eta}
  \begin{pmatrix}
    0&-4&0&0&0&0&0&0&0&0\\
    -2&0&-2&0&-2&0&0&0&0&0\\
    0&2&0&0&0&-2&0&0&0&0\\
    0&0&0&0&0&0&-2&0&0&0\\
    0&-4&0&0&0&-4&0&0&0&0\\
    0&0&-2&0&2&0&0&-2&0&0\\
    0&0&0&-2&0&0&0&0&-2&0\\
    0&0&0&0&0&4&0&0&0&0\\
    0&0&0&0&0&0&2&0&0&0\\
    0&0&0&0&0&0&0&0&0&0
  \end{pmatrix}.
\end{equation*}
The crucial observation is then that $\PseudoSubPFixer_{b_0}\in
\SymClass^0$, where 
\begin{equation}
  \label{linear:eq:Q-def}
  \PseudoSubPFixer_{b_0} :=
  \begin{pmatrix}[1.5]
    0&0&0&0&0&\frac{96}{\varepsilon_{\TrappedSet_{b_0}}}&0&0&0&0\\
    0&0&0&0&0&0&-\frac{24}{\varepsilon_{\TrappedSet_{b_0}}}&0&0&0\\
    -\frac{1}{3}&0&0&0&0&-\frac{96}{\varepsilon_{\TrappedSet_{b_0}}}&0&\frac{4}{\varepsilon_{\TrappedSet_{b_0}}^2}&0&0\\
    0&0&\frac{4}{\varepsilon_{\TrappedSet_{b_0}^2}}&0&0&0&0&0&0&0\\
    \frac{1}{3}&0&0&0&0&0&0&\frac{8}{\varepsilon_{\TrappedSet_{b_0}}^2}&0&0\\
    0&0&0&0&0&0&\frac{24}{\varepsilon_{\TrappedSet_{b_0}}^3}&0&-\frac{2}{\varepsilon_{\TrappedSet_{b_0}}}&0\\
    0&0&0&-\frac{2}{\varepsilon_{\TrappedSet_{b_0}}}&0&0&0&0&0&0\\
    \frac{2}{3}&0&0&0&0&\frac{96}{\varepsilon_{\TrappedSet_{b_0}}^4}&0&-\frac{8}{\varepsilon_{\TrappedSet_{b_0}}^2}&0&1\\
    0&0&-\frac{4}{\varepsilon_{\TrappedSet_{b_0}}^2}&0&1&0&0&0&0&0\\
    0&1&0&0&0&0&0&0&0&0
  \end{pmatrix},
\end{equation}
is well defined in a neighborhood of where the splitting in
(\ref{linear:eq:SdS-decomp-refined}) is well-defined, in particular, in a
neighborhood of $\TrappedSet_{b_0}$, and furthermore, $\PseudoSubPFixer$
satisfies the property that
\begin{equation}
  \label{linear:eq:sub-op:conjugated-form}
  \left.\PseudoSubPFixer^{-1}\SubPSym_{b_0}\PseudoSubPFixer\right\vert_{\TrappedSet_{b_0}}
  = \ImagUnit  \varepsilon_{\TrappedSet_{b_0}}r^{-2}\sqrt{\mu_{b_0}}\abs*{\FreqAngular}
  \begin{pmatrix}
    0&0&0&0&0&0&0&0&0&0\\
    0&0&0&0&0&0&0&0&0&0\\
    0&0&0&1&0&0&0&0&0&0\\
    0&0&0&0&1&0&0&0&0&0\\
    0&0&0&0&0&0&0&0&0&0\\
    0&0&0&0&0&0&1&0&0&0\\
    0&0&0&0&0&0&0&1&0&0\\
    0&0&0&0&0&0&0&0&1&0\\
    0&0&0&0&0&0&0&0&0&1\\
    0&0&0&0&0&0&0&0&0&0
  \end{pmatrix}.
\end{equation}
Observe that $\PseudoSubPFixer^-$ is a constant coefficient operator on
the region where the splitting in \eqref{linear:eq:SdS-decomp-refined} is
valid (in particular, in a neighborhood of the trapped set), so
clearly commutes with $\ScalarWaveOp[g_{b_0}]$.

\subsection{Proof of Proposition \ref{linear:prop:div-thm:PDO-modification}}
\label{linear:appendix:prop:div-thm:PDO-modification}

We remark that while we will prove the proposition for the specific
operator
$\LinEinsteinConj_{g_b} = \ScalarWaveOp[g_b] + \SubPConjOp_{b} +
\PotentialConjOp_b$, the conclusions are true for the more general
strongly-hyperbolic operator $\LinearOp = \ScalarWaveOp[g_b] + S +
V$.

Integrating by parts, we have that for $h$ compactly supported on
$\DomainOfIntegration$ as defined in
\eqref{linear:eq:ILED-trapping:trapping-reg-def},
\begin{align}
  2\Re\bangle*{\LinEinsteinConj_{g_b}{h},\widetilde{\MorawetzVF}{h}}_{L^2(\DomainOfIntegration)}
  ={}& \bangle*{\ScalarWaveOp[g_b]{{h}}, \widetilde{\MorawetzVF}{h}}_{L^2(\DomainOfIntegration)}
       + \bangle*{\widetilde{\MorawetzVF}{h},\ScalarWaveOp[g_b]{{h}}}_{L^2(\DomainOfIntegration)}
       + \bangle*{\SubPConjOp_{b}[{h}], \widetilde{\MorawetzVF}{h}}_{L^2(\DomainOfIntegration)} \notag\\
     & + \bangle*{ \widetilde{\MorawetzVF}{h}, \SubPConjOp_{b}[{h}]}_{L^2(\DomainOfIntegration)} 
         + 2\Re\bangle*{ \PotentialConjOp_b{h}, \widetilde{\MorawetzVF}{h}}_{L^2(\DomainOfIntegration)} \notag\\
  ={}& \Re \bangle*{\squareBrace*{\ScalarWaveOp[g_b],\widetilde{\MorawetzVF}}{h}, {h}}_{L^2(\DomainOfIntegration)}
       - \Re\bangle*{\SubPConjOp_{b,a}[\widetilde{\MorawetzVF}{h}] + \widetilde{\MorawetzVF} \SubPConjOp_{b,a}[{h}], {h}}_{L^2(\DomainOfIntegration)}\notag \\
     & + \Re\bangle*{\squareBrace*{\SubPConjOp_{b,s},\widetilde{\MorawetzVF}}{h}, {h}}_{L^2(\DomainOfIntegration)}
       + 2\Re\bangle*{\widetilde{\MorawetzVF}{h}, \PotentialConjOp_b{h}}_{L^2(\DomainOfIntegration)}.\label{linear:eq:int-by-parts-arg:X-eqn}\\  
  2\Re\bangle*{\LinEinsteinConj_{g_b}{h}, \tilde{\LagrangeCorr}{h}}_{L^2(\DomainOfIntegration)}
  ={}& \bangle*{\left(\ScalarWaveOp[g_b]\tilde{\LagrangeCorr} + \tilde{\LagrangeCorr}\ScalarWaveOp[g_b]\right){h}, {h}}_{L^2(\DomainOfIntegration)}
       + 2\Re\bangle*{\SubPConjOp_b[{h}], \tilde{\LagrangeCorr}{h}}_{L^2(\DomainOfIntegration)}
       + 2\Re\bangle*{\PotentialConjOp_b{h}, \tilde{\LagrangeCorr}{h}}_{L^2(\DomainOfIntegration)}. \label{linear:eq:int-by-parts-arg:q-eqn}
\end{align}
We remark that \eqref{linear:eq:int-by-parts-arg:X-eqn} follows from the
observation that for $h$ compactly supported on
$\DomainOfIntegration$
\begin{align*}
  2\Re \bangle*{\SubPConjOp_b[{h}], \widetilde{\MorawetzVF}{h}}_{L^2(\DomainOfIntegration)}
  ={}& \bangle*{\SubPConjOp_b[{h}], \widetilde{\MorawetzVF}{h}}_{L^2(\DomainOfIntegration)}
  + \bangle*{\widetilde{\MorawetzVF}{h}, \SubPConjOp_b[{h}]}_{L^2(\DomainOfIntegration)}  \\
  ={}& -\bangle*{\widetilde{\MorawetzVF}\SubPConjOp_b[{h}], {h}}_{L^2(\DomainOfIntegration)}
  + \bangle*{\SubPConjOp_{b,s}\widetilde{\MorawetzVF}{h}, [{h}]}_{L^2(\DomainOfIntegration)}
  - \bangle*{\SubPConjOp_{b,a}\widetilde{\MorawetzVF}{h}, [{h}]}_{L^2(\DomainOfIntegration)}.
\end{align*}
As a result,
\begin{equation*}
  \Re\bangle*{\LinEinsteinConj_{g_b}{h}, (\widetilde{\MorawetzVF}+\tilde{\LagrangeCorr}){h}}_{L^2(\DomainOfIntegration)} = -\Re\KCurrentIbP{\widetilde{\MorawetzVF}, \tilde{\LagrangeCorr}}[{h}],
\end{equation*}
where
\begin{equation}
  \label{linear:eq:int-by=parts-arg:KCurrentIbP-def}
  \KCurrentIbP{\widetilde{\MorawetzVF}, \tilde{\LagrangeCorr}}[{h}]
  = \KCurrentIbP{\widetilde{\MorawetzVF}, \tilde{\LagrangeCorr}}_{(2)}[{h}]
  +\KCurrentIbP{\widetilde{\MorawetzVF}, \tilde{\LagrangeCorr}}_{(1)}[{h}]
  +\KCurrentIbP{\widetilde{\MorawetzVF}, \tilde{\LagrangeCorr}}_{(0)}[{h}],
\end{equation}
and
\begin{equation*}
  \begin{split}
    2\KCurrentIbP{\widetilde{\MorawetzVF}, \tilde{\LagrangeCorr}}_{(2)}[{h}]
    &= \bangle*{\squareBrace*{\widetilde{\MorawetzVF}, \ScalarWaveOp[g_b]}{h}, {h}}_{L^2(\DomainOfIntegration)}
    + \bangle*{\SubPConjOp_{b,a}[\widetilde{\MorawetzVF}{h}] + \widetilde{\MorawetzVF} \SubPConjOp_{b,a}[{h}], {h}}_{L^2(\DomainOfIntegration)}
    - \bangle*{\left(\ScalarWaveOp[g_b]\tilde{\LagrangeCorr} + \tilde{\LagrangeCorr}\ScalarWaveOp[g_b]\right){h}, {h}}_{L^2(\DomainOfIntegration)},\\
    2\KCurrentIbP{\widetilde{\MorawetzVF}, \tilde{\LagrangeCorr}}_{(1)}[{h}]
    &= \bangle*{\squareBrace*{\widetilde{\MorawetzVF},\SubPConjOp_{b,s}}{h}, {h}}_{L^2(\DomainOfIntegration)}
    - 2\bangle*{\widetilde{\MorawetzVF}{h}, \PotentialConjOp_b{h}}_{L^2(\DomainOfIntegration)}
    - 2\bangle*{\SubPConjOp_{b}[{h}], \tilde{\LagrangeCorr}{h}}_{L^2(\DomainOfIntegration)},\\
    2\KCurrentIbP{\widetilde{\MorawetzVF}, \tilde{\LagrangeCorr}}_{(0)}[{h}]
    &= - 2\bangle*{\PotentialConjOp_b{h}, \tilde{\LagrangeCorr}{h}}_{L^2(\DomainOfIntegration)}. 
  \end{split}
\end{equation*}
Unfortunately, although the geometry of the exterior domain of outer
communication of \KdS{} allows us to consider only solutions ${h}$
with compact support in $r$, we are rarely interested in ${h}$ with
compact support in $\tStar$.  Thus, we must be careful integrating by
parts in $\p_{\tStar}$, since we will necessarily pick up boundary
terms. It is because of this integration by parts in $\p_{\tStar}$ that we
require the specific form of $\widetilde{\MorawetzVF}$,
$\tilde{\LagrangeCorr}$ in
\eqref{linear:eq:div-thm:PDO:multipler-def}\footnote{Observe that an
  integration-by-parts argument integrating by parts in $\p_{\tStar}$
  is not possible for the general class of pseudo-differential
  operators.}.

Integrating by parts in $\p_{\tStar}$ we then repeat the above
calculations to see that 
\begin{equation}
  \label{linear:eq:ILED-near:Int-by-parts-1}
  - \Re\bangle*{\LinEinstein_{g_b} {h}, \left(\widetilde{\MorawetzVF} +\tilde{\LagrangeCorr} \right){h}}_{L^2(\DomainOfIntegration)}
  = \Re\KCurrentIbP{\widetilde{\MorawetzVF}, \tilde{\LagrangeCorr}}[{h}]
  + \left.\Re\JCurrentIbP{\widetilde{\MorawetzVF}, \tilde{\LagrangeCorr}}(\tStar)[{h}]\right\vert_{\tStar=0}^{\tStar = \TStar},
\end{equation}
where, 
\begin{align}
  2 \KCurrentIbP{\widetilde{\MorawetzVF}, \tilde{\LagrangeCorr}}[{h}]
  ={}& \bangle*{\left[\widetilde{\MorawetzVF},\ScalarWaveOp[g_b] \right]{h},{h}}_{L^2(\DomainOfIntegration)}
       +\bangle*{\left[\widetilde{\MorawetzVF}, \SubPConjOp_{b,s} \right]h,h}_{L^2(\DomainOfIntegration)}
       +\bangle*{\left(\widetilde{\MorawetzVF} \SubPConjOp_{b,a} + \SubPConjOp_{b,a} \widetilde{\MorawetzVF}\right){h},{h}}_{L^2(\DomainOfIntegration)}\notag\\
     &- \bangle*{\left(\tilde{\LagrangeCorr}\ScalarWaveOp[g_b]+ \ScalarWaveOp[g_b]\tilde{\LagrangeCorr}\right){h}, {h}}_{L^2(\DomainOfIntegration)}
       - 2\bangle*{\SubPConjOp_b{h}, \tilde{\LagrangeCorr} {h}}_{L^2(\DomainOfIntegration)}
       -  2\bangle*{\PotentialConjOp_b{h},\left(\widetilde{\MorawetzVF}+\tilde{\LagrangeCorr}\right){h}}_{L^2(\DomainOfIntegration)},\notag\\   
  \JCurrentIbP{\widetilde{\MorawetzVF}, \tilde{\LagrangeCorr}}(\tStar)[{h}]
  ={}& \bangle*{n_{\TrappingNbhd}{h}, \widetilde{\MorawetzVF}{h}}_{\LTwo(\TrappingNbhd_{\tStar})}
       + \bangle*{\SubPConjOp{h}, \widetilde{\MorawetzVF}_0 g_b(\KillT, n_{\TrappingNbhd}){h}}_{\LTwo(\TrappingNbhd_{\tStar})}
       + \bangle*{g_b(\KillT, n_{\TrappingNbhd})\SubPConjOp_0{h}, \widetilde{\MorawetzVF}h}_{L^2(\TrappingNbhd)}
  \notag\\
     & + \bangle*{n_{\TrappingNbhd}{h}, \tilde{\LagrangeCorr}{h}}_{\LTwo(\TrappingNbhd_{\tStar})}      . \label{linear:eq:ILED-near:pseudo-mod:J-K-def}
\end{align}
Combining this with the physical divergence theorem in Proposition
\ref{linear:prop:energy-estimates:spacetime-divergence-prop} then yields the
conclusion.

\printbibliography[heading=bibintoc]

\end{document}
